\newcommand{\vc}[1]{
    \raisebox{-.5\height}{#1}
}
\let\Oldsection\section
\renewcommand{\section}{\FloatBarrier\Oldsection}
\let\Oldsubsection\subsection
\renewcommand{\subsection}{\FloatBarrier\Oldsubsection}
\tikzset{
	0c/.style={circle, draw, fill, inner sep=1.5pt},
	1c/.style={->, thick, shorten <=2pt, shorten >=2pt},
	1cboth/.style={<->, thick, shorten <=2pt, shorten >=2pt},
	1clong/.style={->, thick},
	1cthin/.style={->, shorten <=4pt, shorten >=4pt},
	1cdot/.style={->, dashed, thick, shorten <=2pt, shorten >=2pt},
	1cinc/.style={right hook->, thick, shorten <=2pt, shorten >=2pt},
	1cincl/.style={left hook->, thick, shorten <=2pt, shorten >=2pt},
	follow/.style={->, >=stealth, very thick, shorten <=3pt, shorten >=3pt, color=magenta},
	2c/.style={double, thick, shorten <=6pt, shorten >=8pt, decoration={markings,mark=at position -6pt with {\arrow[scale=1.75]{>}}}, preaction={decorate}},
	2cdot/.style={double, dashed, thick, shorten <=10pt, shorten >=10pt, decoration={markings,mark=at position -8pt with {\arrow[scale=1.75]{>}}}, preaction={decorate}},
	3c1/.style={thick, double, double distance=3pt, shorten <=9pt, shorten >=11pt},
    	3c2/.style={thick, shorten <=9pt, shorten >=10pt},
	3c3/.style={shorten <=9pt, shorten >=10pt, decoration={markings,mark=at position -8pt with {\arrow[scale=3]{>}}},preaction={decorate}},
	4c1/.style={thick, double, double distance=4pt, shorten <=1pt, shorten >=2.75pt},
	4c2/.style={thick, double, double distance=1pt, shorten <=1pt, shorten >=1.25pt, decoration={markings,mark=at position -.05pt with {\arrow[scale=3,ultra thin]{>}}},preaction={decorate}},
	edge/.style={line width=.8pt, color=black},
	edgedot/.style={densely dotted, line width=.8pt, color=black},
	edgethdot/.style={densely dotted, line width=.4pt, color=gray},
	edgeth/.style={line width=.4pt, color=gray!60},
	edgethin/.style={line width=.8pt, color=gray!60},
	edgedotdark/.style={densely dotted, line width=.8pt, color=gray!80},
	dot/.style={circle, draw=black, line width=.8pt, fill=white, inner sep=1.7pt},
	dotth/.style={circle, draw=gray!60, fill=gray!60, inner sep=1.5pt},
	dotwh/.style={circle, draw=gray!60, line width=.4pt, fill=white, inner sep=1.7pt},
	dotwhite/.style={circle, draw=black, line width=.8pt, fill=white, inner sep=1.8pt},
	dotdark/.style={circle, draw, fill=black, inner sep=1.5pt},
	dotgrey/.style={circle, draw=black, line width=.8pt, fill=gray!60, inner sep=1.8pt},
	trian/.style={regular polygon,regular polygon sides=3,shape border rotate=0,fill=white, line width=.8pt, draw=black, inner sep=1.8pt},
	trianh/.style={regular polygon,regular polygon sides=3,shape border rotate=0,fill=white, draw=gray!60, line width=.4pt, inner sep=1.8pt},
	trib/.style={regular polygon,regular polygon sides=3,shape border rotate=0,fill=black, draw, inner sep=1.5pt},
	tribh/.style={regular polygon,regular polygon sides=3,shape border rotate=0,fill=gray!60, draw=gray!60, inner sep=1.5pt},
	tribco/.style={regular polygon,regular polygon sides=3,shape border rotate=180,fill=black, draw, inner sep=1.5pt},
	cover/.style={circle, draw=gray!10, fill=gray!10, inner sep=3.5pt},
	coverc/.style={circle, draw=gray!80, line width=.4pt, inner sep=3pt},
	coverch/.style={circle, draw=gray!30, line width=.4pt, inner sep=3pt},
	coverb/.style={circle, draw=gray!80, line width=.4pt, fill=gray!10, inner sep=3.5pt},
	every node/.style={font={\scriptsize}},
	every matrix/.append style={nodes={font=\normalsize}}
}
\def\roundingAmount{0.5em}
\def\paddingAmount{1em}
\def\sepAmount{0.2em}
\def\minHeight{1em}
\def\minWidth{1em}
\def\tinyBoxSize{7pt}
\def\tinyCircleSize{7pt}
\definecolor{zx_red}{RGB}{232, 165, 165}
\definecolor{zx_green}{RGB}{216, 248, 216}
\definecolor{nice_green}{RGB}{116, 148, 116}
\def\hyellow{yellow!30}
\tikzstyle{spider}=[rectangle,rounded corners=\roundingAmount,fill=gray!1,draw=Black,
\tikzstyle{rect}=[rectangle,fill=gray!1,draw=Black,inner sep=\sepAmount,minimum width=\minWidth,minimum height=\minHeight,
\tikzstyle{box}=[rect,fill=gray!10, line width=0.4 pt]
\tikzstyle{trap}=[trapezium,trapezium angle=80,fill=gray!1,draw=Black,inner sep=\sepAmount,minimum width=\minWidth,minimum height=\minHeight,
\tikzstyle{roundedtrap}=[rectangle,fill=none,draw=none,
\tikzstyle{roundedtrapT}=[roundedtrap,
\tikzstyle{triangle}=[regular polygon,regular polygon sides=3,draw=Black,inner sep=\sepAmount,minimum width=\minWidth,minimum height=\minHeight,
\tikzstyle{trianglet}=[triangle,shape border rotate=180]
\tikzstyle{ZH}=[rect]
\tikzstyle{wide ZH}=[rect,minimum width=2.5em]
\tikzstyle{smallH}=[rect, minimum width=\tinyBoxSize, minimum height =\tinyBoxSize]
\tikzstyle{smallZH}=[smallH]
\tikzstyle{h2}=[smallH]
\tikzstyle{yh}=[rect, minimum width=\tinyBoxSize, minimum height =\tinyBoxSize,fill=\hyellow]
\tikzstyle{smallCircle}=[circle,draw=Black, minimum width=\tinyCircleSize, line width=0.4 pt, inner sep=0em]
\tikzstyle{smallZ}=[smallCircle,fill=gray!1]
\tikzstyle{Z}=[smallZ]
\tikzstyle{wide point}=[fill=white,draw,shape=isosceles triangle,shape border rotate=-90,isosceles triangle stretches=true,inner sep=0pt,minimum width=2cm,minimum height=6.12mm,yshift=-0.0mm]
\tikzstyle{bbindex}=[font={\color{blue}\footnotesize}]
\tikzstyle{medium gray box}=[rect, minimum width=4em,fill=gray!30]
\tikzstyle{semilarge box}=[rect, minimum width=4em]
\tikzstyle{wide box}=[rect, minimum width=6em]
\tikzstyle{small gray box}=[rect, fill=gray!30]
\tikzstyle{H}=[rect,fill=\hyellow]
\tikzstyle{h}=[smallH,fill=\hyellow]
\tikzstyle{gn}=[spider,fill=zx_green]
\tikzstyle{rn}=[spider,fill=zx_red]
\tikzstyle{white}=[spider,fill=gray!1]
\tikzstyle{smallWhite}=[smallZ]
\tikzstyle{smallwhite}=[smallZ]
\tikzstyle{grey}=[spider,fill=gray!30]
\tikzstyle{smallGrey}=[smallCircle,fill=gray!30]
\tikzstyle{smallgrey}=[smallGrey]
\tikzstyle{black}=[spider,fill=gray!70]
\tikzstyle{small black}=[smallCircle,fill=gray!70]
\tikzstyle{smallBlack}=[small black]
\tikzstyle{smallblack}=[small black]
\tikzstyle{b}=[black]
\tikzstyle{w}=[white]
\tikzstyle{qn}=[trap]
\tikzstyle{qnt}=[trap,shape border rotate=180]
\tikzstyle{string}=[circle,fill=gray,draw=gray,inner sep=1pt]
\tikzstyle{net}=[rectangle,draw=White,minimum width=1.5em,minimum height=1.5em,fill=white]
\tikzstyle{none}=[inner sep=0pt]
\tikzstyle{crossing}=[circle,minimum width=\tinyBoxSize,draw=black,{path picture={ }}]
\tikzstyle{bbox}=[rectangle,draw=blue!60!white,minimum width=1em,minimum height=1em]
\tikzstyle{polynomial}=[roundedtrap]
\tikzstyle{polynomialT}=[roundedtrapT]
\tikzstyle{poly}=[polynomial]
\tikzstyle{polyT}=[polynomialT]
\tikzstyle{arrow}=[->,draw=black]
\tikzstyle{hadamard edge}=[-, dashed, dash pattern=on 2pt off 1pt, thick, draw={rgb,255: red,68; green,136; blue,255}]
\tikzstyle{light-arrow}=[->,draw=gray]
\tikzstyle{blue}=[draw=blue!30!white]
\def\bbZ{\mathbb{Z}}
\def\less{\setminus}
\def\bbD{\mathbb{D}}
\def\bbQ{\mathbb{Q}}
\def\bbE{\mathbb{E}}
\def\ZW{\text{ZW}}
\def\id{\text{id}}
\def\Gal{\text{Gal}}
\def\bbC{\mathbb{C}}
\def\bbN{\mathbb{N}}
\def\bbL{\mathbb{L}}
\def\bbR{\mathbb{R}}
\def\bbP{\mathbb{P}}
\def\Qubit{\Cat{Qubit}}
\newcommand\PRGC[1]{\mbox{\ensuremath{\Cat{Ring}\text{-}\Cat{GC} \left( \text{#1} \right) }}}
\newcommand\SigmaGC[1]{\mbox{\ensuremath{\text{$\Sigma$-}\Cat{GC} \left( #1 \right) }}}
\def\Hilbert{\mathbb{H}}
\def\F2{{\mathbb{F}_2}}
\newcommand\idot[1]{\ensuremath{\interpret{\cdot}_{#1}}}
\renewcommand{\cal}[1]{\mathcal{#1}}
\def\Rules{\ensuremath{\cal{R}}}
\def\ZW{\ensuremath{\text{ZW}}}
\def\ZH{\ensuremath{\text{ZH}}}
\def\ZHR{\ensuremath{\text{ZH}_R}}
\def\ZWR{\ensuremath{\text{ZW}_R}}
\def\ZHC{\text{ZH}_\bbC}
\def\ZQ{\ensuremath{\text{ZQ}}}
\def\ZX{\ensuremath{\text{ZX}}}
\def\entails{\vdash}
\def\syntactic{\vdash}
\def\semantic{\vDash}
\def\from{\leftarrow}
\newcommand{\set}[1]{\left\{#1\right\}}
\newcommand{\bb}[1]{\mathbb{#1}}
\newcommand{\at}{|}
\newcommand{\family}[1]{\set{\quad #1 \quad}}
\def\alphadelta{\alpha_1, \dots, \alpha_n, \delta_1, \dots, \delta_m}
\newcommand{\inv}{^{-1}}
\newcommand{\abs}[1]{|#1|}
\newcommand{\comp}{\circ}
\newcommand{\issound}{\text{\ is sound}}
\newcommand{\rto}{\redto}
\newcommand{\redto}{\redtosub{{}}}
\newcommand{\redtosub}[1]{\redarsubsup{{}}{#1}{->}}
\newcommand{\redarsubsup}[3]{\,\begin{tikzpicture}[baseline=-0.25em,scale=0.6, every node/.style={transform shape}]
\node [style=none] at (0.75, -0.2) {$_#2$};
\draw [#3,>=reduce] (0,0) to  node[above] {#1} (0.75,0);
\end{tikzpicture}\,}
\newcommand{\interpret}[1]{\left\llbracket \; #1 \; \right\rrbracket}
\def\tensor{\ensuremath\otimes}
\def\iso{\cong}
\newcommand\cat[1]{\ensuremath{\mathbf{#1}}}
\newcommand\Cat[1]{\cat{#1}}
\newcommand\rmod[1]{\ensuremath{#1\,\cat{Mod}}}
\newcommand\Model{\ensuremath{\cal{M}}}
\newcommand\braket[2]{\langle\, #1 \,|\, #2 \,\rangle}
\newcommand\bra[1]{\langle\, #1\,|}
\newcommand\ket[1]{|\, #1 \, \rangle}
\newcommand\restr[2]{{
  \left.\kern-\nulldelimiterspace 
  #1 
  \vphantom{\big|} 
  \right|_{#2}
  }}
\renewcommand*{\arraystretch}{1.2}
\newcommand\va[2][1]{
\scalebox{#1}{\raisebox{-.5\height}{
#2
}}
}
\newcommand\by[1]{\;
    \raisebox{-.5\height}{$
        \stackrel{=}{
            \mbox{\scriptsize$#1$}
        }$
    }\;
}
\newcommand\interpretedas{\;
    \raisebox{0em}{$
        \stackrel{
            \scalebox{0.5}{$\idot{}$}
        }{\longmapsto}$
    }\;
}
\def\st{\text{s.t.\ }}
\def\Hom{\text{Hom}}
\def\Product{\prod}
\def\infinity{\infty}
\def\w{\omega}
\def\between{\leftrightarrow}
\newcommand\ring{\text{\textsc{ring}}}
\def\rprop{\text{\textsc{ring-prop}}}
\def\Zero{\text{\textsc{zero}}}
\def\rpropint{\rprop^{\interpret{}}}
\def\monprop{\text{\textsc{monoid-prop}}}
\def\groupprop{\text{\textsc{v-group-prop}}}
\def\injects{\hookrightarrow}
\def\into{\injects}
\def\Mat{\text{Mat}}
\newcommand\maxdegree[3]{\text{deg}^{#1}_{#2}\left [ \;#3\; \right ]}
\def\ie{\text{i.e.}\quad}
\def\ZW{\text{ZW}}
\def\ZX{\text{ZX}}
\def\ZH{\text{ZH}}
\newcommand{\QRingZH}[1][]{
	\ensuremath{\ring_{ZH}}
}
\newcommand\half[1][1]{\frac{#1}{2}}
\newcommand\piby[1]{\pi/#1}
\newcommand\bit[1]{\ensuremath{#1\text{-bit}}}
\newcommand\bits[1]{\ensuremath{#1\text{-bits}}}
\def\SO3{\ensuremath{\text{SO3}(\bbR)}}
\def\SU2{\ensuremath{\text{SU2}(\bbC)}}
\def\UQuat{\ensuremath{\hat Q}}
\def\chapterbullet{$\triangleright$}
\def\brag{$\triangleright$}
\def\contradiction{\raisebox{-0.25\height}{
	\begin{tikzpicture}[scale=0.2]
	\begin{pgfonlayer}{nodelayer}
		\node [style=none] (0) at (-0.5, 1) {};
		\node [style=none] (1) at (-1, 0.5) {};
		\node [style=none] (2) at (-1, -0.5) {};
		\node [style=none] (3) at (-0.5, -1) {};
		\node [style=none] (4) at (0.5, -1) {};
		\node [style=none] (5) at (1, -0.5) {};
		\node [style=none] (6) at (1, 0.5) {};
		\node [style=none] (7) at (0.5, 1) {};
	\end{pgfonlayer}
	\begin{pgfonlayer}{edgelayer}
		\draw (0.center) to (5.center);
		\draw (1.center) to (4.center);
		\draw (2.center) to (7.center);
		\draw (6.center) to (3.center);
	\end{pgfonlayer}
\end{tikzpicture}}}
\def\rowgap{2em}
\newtheorem{theorem}{Theorem}[chapter]
\newtheorem{corollary}[theorem]{Corollary}
\newtheorem{proposition}[theorem]{Proposition}
\newtheorem{lemma}[theorem]{Lemma}
\theoremstyle{definition}
\newtheorem{definition}[theorem]{Definition}
\theoremstyle{remark}
\newtheorem{remark}[theorem]{Remark}
\newtheorem{example}[theorem]{Example}
\begin{document}

\pagestyle{empty}
\begin{titlepage}
\begin{center}
\mbox{}\\[6pt]
\Huge \textbf{Graphical Calculi and their Conjecture Synthesis} \\[100pt] 

\includegraphics[width=0.2\columnwidth]{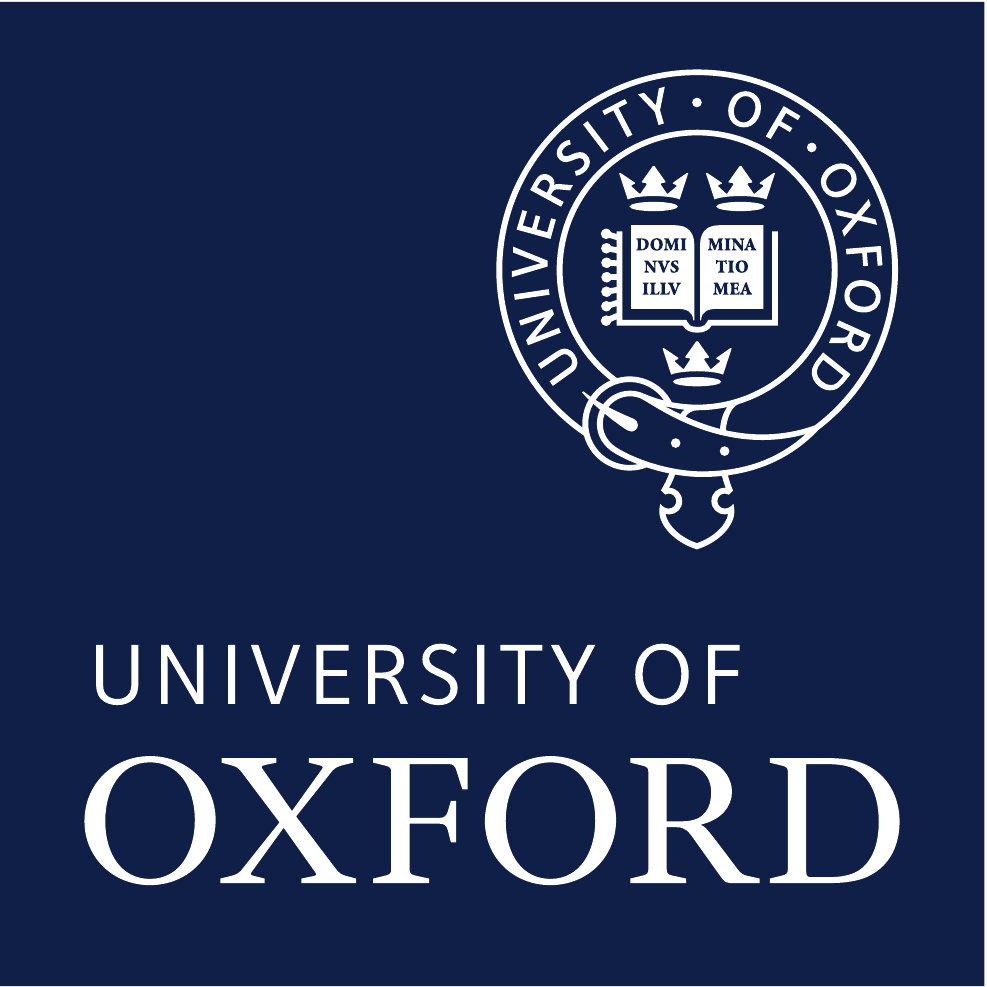} \\[80pt]

\LARGE Hector Miller-Bakewell \\[6pt]
\Large Wolfson College \\[3pt] 
University of Oxford \\[60pt]

A thesis submitted for the degree of \\[3pt]
\emph{Doctor of Philosophy} \\[12pt]
Hilary 2020
\end{center}
\end{titlepage}

\thispagestyle{empty} 

\cleardoublepage
\selectlanguage{british}
\begin{center}\textbf{Abstract}\end{center}

\vspace{10pt}

\begin{quotation}
\noindent \small 
Categorical Quantum Mechanics, and graphical calculi in particular,
has proven to be an intuitive and powerful
way to reason about quantum computing.
This work continues the exploration of graphical calculi,
inside and outside of the quantum computing setting,
by investigating the algebraic structures with which we label diagrams.
The initial aim for this was Conjecture Synthesis;
the algorithmic process of creating theorems.
To this process we introduce a generalisation step,
which itself requires the ability to infer
and then verify parameterised families of theorems.
This thesis introduces such inference
and verification frameworks,
in doing so forging novel links
between graphical calculi and
fields such as Algebraic Geometry and Galois Theory.
These frameworks inspired further research into the
design of graphical calculi,
and we introduce two important new calculi here.
First is the calculus RING,
which is initial among ring-based qubit graphical calculi,
and in turn inspired the introduction and classification of phase homomorphism pairs
also presented here.
The second is the calculus ZQ,
an edge-decorated calculus
which naturally expresses arbitrary qubit rotations,
eliminating the need for non-linear rules such as (EU) of ZX.
It is expected that these results will be of use 
to those creating optimisation schemes and intermediate representations for quantum computing,
to those creating new graphical calculi,
and for those performing conjecture synthesis.
\end{quotation}
\cleardoublepage

\frontmatter \pagestyle{plain} 
\thispagestyle{empty}
\renewcommand\thepage{}
\tableofcontents
\newpage \thispagestyle{empty}
\renewcommand\thepage{\arabic{page}}

\mainmatter \pagestyle{fancy}
\chapter{Introduction} \label{chapIntroduction}
\thispagestyle{plain}

Whether or not researchers have achieved quantum supremacy
is disputed \cite{QuantumSupremacy} \cite{IBMSupremacy},
but it is undeniable that the power and complexity of quantum computers
is increasing. In the words of Ref.~\cite{NISQ}:
`Now is a privileged time in the history of science and technology,
as we are witnessing the opening of the NISQ era.'
\footnote{NISQ devices are noisy, and have `a number of qubits ranging from 50 to a few hundred'.}
The ability to run wider and deeper quantum circuits
brings with it the desire to design more complicated quantum algorithms.
Diagrams allow researchers to express these algorithms more clearly,
and different types of diagram have different strengths.
For an example see Figure~\ref{figCCZFactory},
with the authors of that paper noting that `the ZX calculus graph is more
amenable to verification than the 3D diagram'.
Graphical calculi go a step beyond simply being diagrams in that they come with rules for manipulating the graphical data directly.

\begin{figure}[p]
	\center
	\includegraphics[width=0.4\textwidth,height=5cm,keepaspectratio]{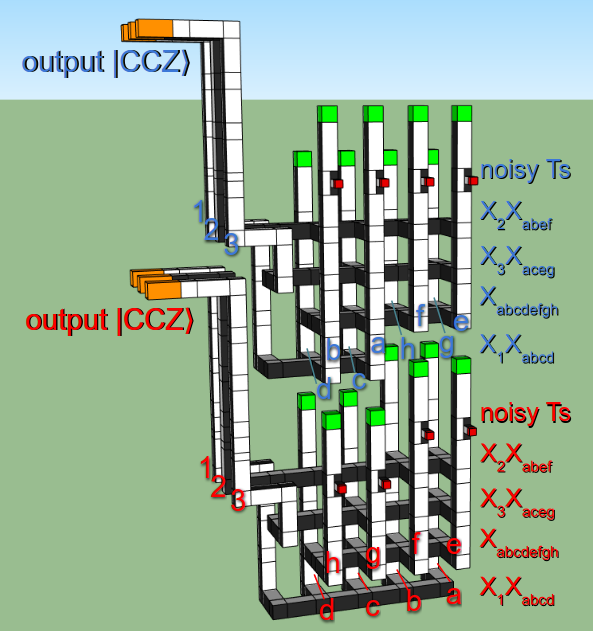}~\qquad~
	\includegraphics[width=0.4\textwidth,height=5cm,keepaspectratio]{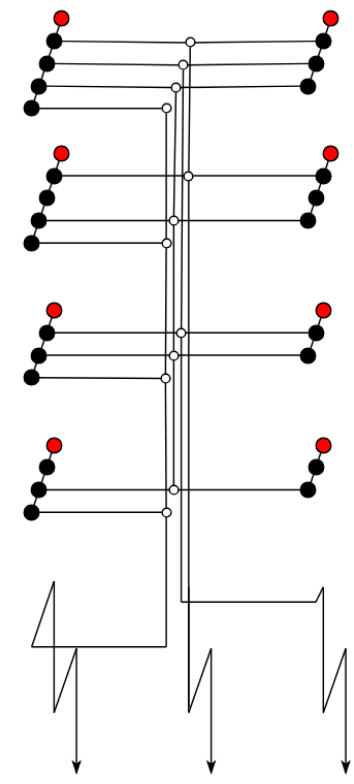}
	\caption{A representation of a CCZ factory in 3D (left) and (something similar to) the ZX calculus (right),
		as presented in Ref.~\cite[Figure~9]{MagicStateFactory}.\label{figCCZFactory}}
\end{figure}

Graphical calculi are not new.
The ones we discus in this thesis flow, conceptually, from Roger Penrose's
diagrammatic representations of tensor networks,
but as Ref.~\cite{PQP} points out
`many similar diagrammatic languages were invented prior to this or reinvented later'.
Here we shall be using these diagrams to consider Categorical Quantum Mechanics
in the sense introduced in Ref.~\cite{CQM},
a paradigm that lead to the introduction
of what would become the ZX calculus \cite{Coecke08}.
In the words of Ref.~\cite{CQM}:
`The arguments for the benefits of a high-level, conceptual approach to designing and reasoning about
quantum computational systems are just as compelling as for classical computation.'
The ZX calculus has grown over time,	
in appearance, generators, scope and rules, with Figure~\ref{figZXLooks} showing some of the changes in appearance.
(The debate as to whether there is `a' or `the' ZX calculus is still ongoing\footnote{Private communication during a ZX-calculus workshop},
but we shall cover the different varieties in \S\ref{secZoo}.)
The story behind the development of the ZX calculus is probably best told by the titles of the various papers
in Figure~\ref{figZXStory}.
In that list of titles we can see issues of `incompleteness' highlighted,
and then resolved, for different fragments of the calculus.
With each challenge the calculus has grown, changed,
and created off-shoots.

ZX, although probably the best known of the graphical calculi discussed here,
is not the only calculus built for qubit quantum computing.
The completeness result `A Complete Axiomatisation of the ZX-Calculus for Clifford+T
Quantum Mechanics' \cite{SimonCompleteness}
is constructed from the completeness result of another calculus
called ZW \cite{AmarThesis}.
ZX is built from the Z and X rotations of the Bloch Sphere;
ZW is built from the W and GHZ states (it was originally called the GHZ/W calculus \cite[p~ii]{AmarThesis}).
ZH, introduced in 2018 \cite{ZH}, is built from Z spiders and H-boxes.
There is no need to stop here,
and indeed this thesis introduces two new graphical calculi of its own.
All these calculi are built to be powerful
and intuitive ways of reasoning about a chosen domain.
The utility, compared to standard matrix notation, is down to three factors:
dimension, scale, and connectivity \cite[{\S}2.1.2]{Backens16Thesis}.
It is doubtless for these reasons that we see the use of graphical calculi
spread not just into wider academia \cite{Horsman2017Surgery, MagicStateFactory}
but also into industry,
with papers such as Refs.~\cite{CQCZX} and \cite{CQCShallow}
appearing with support from the company Cambridge Quantum Computing~\cite{CQCSite}.

\begin{example}[Interchange Law {\cite[p43]{Maclane2013}}]  \label{exaInterchangeLaw}
	The Interchange Law (given below) exemplifies the way that a graphical notation can convey meaning.
	Although we have yet to give precise meanings for the diagrams below
	we hope the reader can appreciate the way that the algebraic equation on the right
	is a tautology inherent in the notation on the left.
	\begin{align}
		\vc{\InputIfFileExists{./figures/wire/interchange.tikz}{}{Missing file!}} \equiv \vc{\InputIfFileExists{./figures/wire/interchange.tikz}{}{Missing file!}} \qquad \sim \qquad (a \tensor b) \comp (c \tensor d) =  (a \comp c) \tensor (b \comp d)
	\end{align}
	An analogy would be the way that the expression $f \comp g \comp h$
	expresses the associativity of the composition by virtue of the lack of brackets.
\end{example}

\begin{figure}[p]
	\center
	\includegraphics[width=0.3\linewidth]{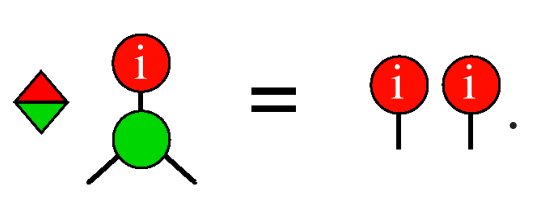}~
	\includegraphics[width=0.3\linewidth]{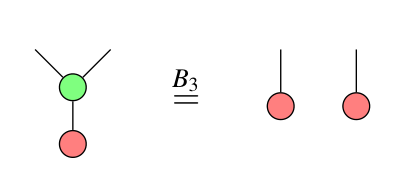}~
	\raisebox{1.3\height}{\vc{\InputIfFileExists{./figures/ZX/copy.tikz}{}{Missing file!}}}
	\caption{The stylistic change in the ZX calculus between 2008 \cite{Coecke08} (left),
		2011 \cite{Coecke11Entanglement} (middle), and this thesis (right).
		The colours used in this thesis have been chosen to
		be easier to distinguish between for anyone with red / green colour vision deficiency,
		or when printed in greyscale, with the green node appearing lighter than the red.
		\cite{ZXAccessibility} \label{figZXLooks}}
\end{figure}

\begin{figure}[p]
	\begin{tabular}{l l p{10cm}}
		Reference                        & Published & Title                                                                                         \\ \hline
		\cite{ChristianMSC}              & 2013      & The ZX calculus is incomplete for non-stabilizer quantum mechanics                            \\
		\cite{BackensStabilizerComplete} & 2014      & The ZX-calculus is complete for stabilizer quantum mechanics                                  \\
		\cite{Duncan13}                  & 2014      & Pivoting makes the ZX-calculus complete for real stabilizers                                  \\
		\cite{SdW14}                     & 2014      & The ZX-calculus is incomplete for quantum mechanics                                           \\
		\cite{BackensSingleQubit}        & 2014      & The {ZX-calculus} is complete for the single-qubit {Clifford+T} group                         \\
		\cite{Backens15}                 & 2015      & Making the stabilizer ZX-calculus complete for scalars                                        \\
		\cite{BackensSimplified}         & 2017      & A Simplified Stabilizer ZX-calculus                                                           \\
		\cite{cyclo}                     & 2017      & {ZX-Calculus: Cyclotomic Supplementarity and Incompleteness for Clifford+T Quantum Mechanics} \\
		\cite{SimonCompleteness}         & 2018      & {A Complete Axiomatisation of the ZX-Calculus for Clifford+T Quantum Mechanics}               \\
		\cite{UniversalComplete}         & 2017      & A universal completion of the ZX-calculus
	\end{tabular}
	\caption{
		A table detailing the path towards the completeness results
		for several fragments of the ZX calculus.
		The year shown is the year of publication,
		but it should be noted that the work of \cite{UniversalComplete}
		builds on the work of \cite{SimonCompleteness},
		despite being published beforehand. \label{figZXStory}
	}
\end{figure}

There is a second strand to this thesis,
and that is Conjecture Synthesis.
Isabelle Conjecture Synthesis (IsaCoSy)
was created with the
`[...] aim to automatically produce a useful set
of theorems' \cite{ISACOSY}.
What made IsaCoSy, expanded upon in Ref.~\cite{ISACOSY2},
different from previous theory formation systems is that of method.
Previous examples had used a \emph{deductive} approach (from the known facts make more facts)
whereas IsaCoSy uses a \emph{generative} approach (generate all hypotheses,
weed out false ones).
While this method is, on the face of it,
intractable,
the authors of Ref.~\cite{ISACOSY} realised that they only needed to generate hypotheses that didn't
reduce to any other hypothesis.
As a simple example the hypothesis `$x = 1+0$'
is automatically redundant if you are also going to generate the hypothesis `$x = 1$'.
We shall get into the graphical calculus version of conjecture synthesis in more detail
in \S\ref{chapCoSy}, but it was this idea that spurred the research presented in this thesis:
`Can we make conjecture synthesis work well for graphical calculi?'

The implicit first part of the question (`Can we make conjecture synthesis work \emph{at all}
for graphical calculi?') had been answered by Kissinger \cite{AleksCoSy}
in a working algorithm written as a module for Quantomatic \cite{Quantomatic}.
Quantomatic then went through a rebirth from ML into Scala
and a Scala-based implementation of a similar algorithm was implemented by
Kissinger and the author of this thesis.
While this produced many results
(over the course of the research for this thesis there were over 100,000 files produced,
all with conjecture synthesis in mind)
it did not produce, to the author's knowledge, any particularly \emph{useful} results.
The reason for this is that even after running the algorithm for two weeks
on the group's computing cluster
the algorithm was still only producing theorems that a researcher could have shown by hand,
of at most four vertices per diagram.

The theorems produced by QuantoCoSy (as this program had now been named)
often showed high levels of similarity between them.
These were not redundancies like the aforementioned `$x = 1+0$' reducing to `$x=1$'.
Instead these similarities represented `higher'
structure inherent in the phase group
or in repeated sections of the diagrams.
The direction of the research then became
`how can we infer and verify conjectures in this higher structure?'

This thesis comes in two parts.
In the first we generalise on, investigate, and create
new graphical calculi,
exhibiting new ways of comparing and relating these calculi,
and examining diagrammatic maps that preserve equational soundness.
In the second we discover
a deep link between phase algebras and algebraic geometry,
allowing us to infer and then verify parameterised equations.
Perhaps the clearest example of these strands coming together is in \S\ref{secPhaseVariablesOverQubits},
where we use what we have learned
about soundness-preserving maps,
the geometric structure of parameter spaces,
and some Galois Theory,
to show that a ZX, ZW, ZH, or $\ring$ diagram with phase variables
can be verified by a \emph{single} equation without phase variables.
This single, simple equation implies
and is implied by the large, parameterised family of equations.
Our choice of phase algebra directly dictates the geometry of the parameter space,
which in turn dictates the generalisations of our theorems.

As quantum computing algorithms get steadily more complicated,
and as the diagrams get larger
we shall need to be able to automatically produce the useful set
of theorems that conjecture synthesis promises us.
We shall need a toolbox of useful theorems
for proof assistants (like Quantomatic or PyZX \cite{PyZX}),
or for intermediate representations and optimisers of quantum algorithms \cite{TRIQ, CQCShallow, Backens2020circuit}.
Our researchers will also need new tools and new calculi.
This thesis provides advances in all of these areas.
The more efficiently we can reason about quantum computing
the more efficient we can make our circuits,
and the faster we shall achieve the world where quantum computing is directly helpful.

\section{The contributions of this thesis}
\label{secContributions}

The author hopes that the reader finds the entirety of this thesis
useful, but this section exists to draw attention to the most likely candidates of interest,
grouped loosely by theme.
The main results chapters come after two background chapters:
\S\ref{chapGraphicalCalculi}
for graphical calculi,
and \S\ref{chapCoSy} for conjecture synthesis.
The author would like to highlight the applications
of Algebraic Geometry and Galois Theory as particularly novel
to the study of graphical calculi or quantum circuits,
as well as the construction of the two new calculi $\ring$ and $\ZQ$.

\subsection{The calculus RING}

$\ring$ is a universal, sound, complete graphical calculus built just from ring operations and compact closure.
As such it acts like a unifier: Any graphical calculus with a similar phase ring structure contains a copy of $\ring$.
Examples of such calculi are ZW and ZH.
$\ring$ and ZX form the bulk of the examples used in this thesis.

\begin{itemize}
	\item[\brag] The calculus $\ring_R$ is introduced and shown to be complete (\S\ref{chapRingR})
	\item[\brag] We justify referring to the calculus $\ring_\bbC$ as `the' generic phase ring quantum graphical calculus (Remark~\ref{remRingKGeneric})
\end{itemize}

\subsection{Parameterised rules as geometric surfaces}

We demonstrate how to present parameterised rules as geometric surfaces,
preserving existing ideas of rule linearity from the literature and
introducing an algebraic geometric approach to graphical calculi and conjecture inference.

\begin{itemize}
	\item[\brag] We provide a geometric framework for parameterised rules (\S\ref{secSkeletons})
	\item[\brag] We demonstrate how to convert linear rules
	      to this geometric framework and back again (\S\ref{secInferringPhaseVariables})
	\item[\brag] We use this link to infer the existence of linear rules
	      for conjecture synthesis (\S\ref{secFindingSubmodules})
\end{itemize}

\subsection{Phase homomorphism pairs}

Phase homomorphisms are maps that act on the phases of diagrams.
We define here phase homomorphism \emph{pairs};
phase homomorphisms paired with an additional functor
that then commutes with the interpretation.
An example of a phase homomorphism pair negates phases in the ZX calculus,
and in doing so creates the complex conjugate of the corresponding matrix.
In this thesis we codify, explore, and classify this notion,
showing that phase homomorphism pairs preserve soundness,
and even preserve proofs, in some of the calculi we consider.

\begin{itemize}
	\item[\brag] We show that the space of equational theorems is closed under
	      the action of phase homomorphisms for $\ring$, ZW, and ZH (Theorem~\ref{thmLiftRingHomSoundness})
	\item[\brag] We classify the phase homomorphism pairs for $\ring$, $\ZW$,
	      $\ZH$, and the finite fragments of $\ZX$ containing $\piby{4}$ (\S\ref{secPhaseGroupHomomorphismsEtc})
	\item[\brag] We link phase homomorphism pairs to geometric symmetries of parameter spaces (\S\ref{secParameterSymmetries})
\end{itemize}

\subsection{!-box verification}

We introduce a method for verifying equations containing !-boxes
by verifying a finite number of equations without !-boxes.
This verification depends on a property we call \emph{separability}.
This is possible thanks to a new way of presenting
successive !-box iterations
in a manner that depends on $\comp$ rather than $\tensor$ products.

\begin{itemize}
	\item[\brag] We define the notion of separability in \S\ref{secnesting}
	\item[\brag] Series !-box form is defined in \S\ref{secSeriesBBoxForm}
	\item[\brag] Finite verification of !-boxes is shown in Theorem~\ref{thmbbox2}
\end{itemize}

\subsection{Phase variable verification}

We show how to verify families of equations parameterised by phase variables
using a method built on polynomial interpolation.
This method requires checking a finite number of sample values for the variables.

\begin{itemize}
	\item[\brag] Phase variables and !-boxes can be verified
	in a finite manner without reference to either (Theorem~\ref{thmfinite})
\end{itemize}

\subsection{Applications of Galois Theory}

For certain languages we can encode automorphisms of field extensions as phase homomorphism pairs.
In doing so we can use the machinery of Galois Theory
to manipulate certain phases in a quantum circuit while preserving others.
This is combined with the symmetries provided by phase homomorphism pairs
to further reduce the number of equations needed for verification.

\begin{itemize}
	\item[\brag] Over qubits we can verify all the phase variables in a diagram using a
	      single equation containing no phase variables (\S\ref{secPhaseVariablesOverQubits})
\end{itemize}

\subsection{The generalisation step}

Conjecture synthesis saw a recent advancement with the generative model
discussed in \S\ref{chapCoSy}.
In this thesis we introduce an new step to this process;
generalisation of theorems at the point of synthesis.
This step requires a conjecture inference framework
such as the one introduced in this thesis in order to be possible.

\begin{itemize}
	\item[\brag] We alter the generative conjecture synthesis method
	with the addition of a generalisation step (Remark~\ref{remGeneralisationStep})
\end{itemize}

\subsection{The graphical calculus ZQ}

The foundation of the calculus ZX is the Z and X spiders
that represent rotations of the Bloch Sphere.
These rotations generate all possible rotations of the sphere via Euler angles.
Rather than take a generating subset of these rotations as vertex-labels,
ZQ allows arbitrary rotations (expressed as quaternions) as edge-labels.
This leads to a calculus that the author feels is clearer and more intuitive than ZX.

\begin{itemize}
	\item[\brag] The calculus ZQ is introduced and shown to be complete (\S\ref{chapZQ})
	\item[\brag] The spiders of ZX are shown to not be viable for representing a non-commutative group (Corollary~\ref{corUQuatNotMonoid})
	\item[\brag] ZQ is the first qubit graphical calculus to use a non-commutative phase group
	\item[\brag] The rules of ZQ are linear, whereas the rules of ZX are not (Remark~\ref{remLinearRulesOfZQ})
\end{itemize}

\subsection{Generalising ZH}

The calculus ZH treats phase-free Z spiders and generalised Hadamard nodes as fundamental building blocks \cite{ZH}.
It was announced in 2018 and was shown to be complete via the construction of a normal form.
For the rest of this thesis we shall refer to this version as $\ZHC$,
because the phases of these generalised Hadamard nodes are elements of $\bbC$.
It can be shown, however, that by simply changing this parameterisation to any commutative ring with a half, $R$,
that a complete and universal calculus is then exhibited for $\bits{R}$.

\begin{itemize}
	\item[\brag] The calculus $\ZH_R$ is generalised from $\ZH_\bbC$ and shown to be complete (\S\ref{secCompleteTranslationZHQR})
\end{itemize}

\section{Overview}
\label{secOverview}

The eventual conclusion of this thesis
is that phase algebras are a powerful and important
aspect of a graphical calculus,
allowing us to apply techniques and knowledge from previously
unrelated disciplines 
to the field of quantum graphical calculi.
We will use this knowledge,
among other new results,
to also further the field of conjecture synthesis
with the introduction of a new `generalisation' step.
Along the way we will introduce new calculi,
novel links to algebraic geometry,
and novel inference and verification methods.

Following the two background chapters
we will introduce
the complete, universal calculus $\ring$ in \S\ref{chapRingR},
which will serve as an important example in later chapters.
\S\ref{chapPhaseRingCalculi} will then
discuss the place of $\ring$ in the wider context
of phase-ring graphical calculi,
and the ways in which phase algebra homomorphisms
interact with diagram equations.

The thesis will then explore another new calculus in \S\ref{chapZQ},
explaining the necessity behind its novel presentation,
and the way in which it solves the issues with the (EU) rule of ZX.
\S\ref{chapConjectureInference} establishes
the novel link between parameter spaces and geometric surfaces,
building on the phase homomorphism pairs from \S\ref{chapPhaseRingCalculi}.
We also cover why this framework works for ZQ and not ZX.
This link forms the core of our conjecture inference
and is directly linked to the verification methods
of \S\ref{chapConjectureVerification}.
\S\ref{chapConjectureVerification} 
introduces new verification results
and then brings these together with the phase homomorphism pairs
of \S\ref{chapPhaseRingCalculi}
and some Galois Theory
in what the author feels is quite a beautiful way.
This allows the verification of an equation involving phase variables
by verifying a single equation not involving phase variables.
In \S\ref{chapConclusion} we conclude the thesis.

\subsection{An Algebraic Preview} \label{secAlgebraicOverview}

Given that quantum graphical calculi exist
at the intersection of mathematics, physics, and computer science
it is likely that some readers may have never encountered the topics of
Algebraic Geometry, Galois Theory, or Laurent Polynomials.
Since we mention these ideas before giving a proper treatment of them
an overview may be appreciated.
In their broadest terms:
\begin{itemize}
	\item Algebraic Geometry is based on the idea that there is no difference between
	a geometric surface in some space,
	and the collection of polynomial equations that define that surface.
	For example the points of the circle described by $x^2 + y^2 - 1 = 0$
	are still solutions to the equation $x^3 + xy^2 - x = 0$.
	This observation allows us to switch between considering
	collections of polynomials and the surface itself.
	\item Galois Theory concerns pairs of fields, one inside the other,
	and the homomorphisms of the larger field that act as the identity on the smaller.
	A clear example is complex conjugation acting on the fields $\bbR$ and $\bbC$:
	The real numbers are unaffected by complex conjugation,
	but the elements of $\bbC \less \bbR$ are moved.
	We will use Galois Theory to construct maps that fix certain phases in a diagram,
	but move others.
	\item Laurent Polynomials are a generalisation of traditional polynomials
	by allowing positive \emph{and negative} powers of the indeterminants.
	Importantly they can still be added, multiplied, and factorised in the same manner
	as traditional polynomials. Once we introduce them we will be working to
	convert our equations involving Laurent polynomials into equations using traditional polynomials instead.
\end{itemize}
These topics are covered in far more detail when their uses arise in this thesis.

\subsection{Ethical considerations}
\label{secEthics}

There is justified concern about the impact
of quantum computing on humanity.
The impact on classical cryptography is probably the most important \cite{Shor},
but the author hopes that, as with classical computing,
the net effect will be undoubtedly positive.
An example of a near-term positive is advances in chemical synthesis
and protein folding for medical research \cite{Perdomo-Ortiz2012}.

\subsection{Chapter dependencies}

Each of the results chapters
contain a short introduction and gives an indication
of which earlier chapters they are dependent upon.
We also include a dependency diagram
in Figure~\ref{figDependencies}.
Although \S\ref{chapZQ} is not strictly a dependency for \S\ref{chapConjectureInference}
there are examples and remarks
beyond the core results that will require knowledge of ZQ.

\begin{figure}[ht]
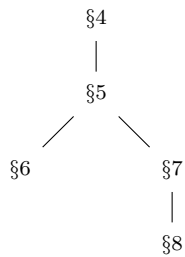

	\center
	\vc{\InputIfFileExists{./figures/wire/dependencies.tikz}{}{Missing file!}}
	\caption{\label{figDependencies}The dependency structure of the chapters in this thesis.}
\end{figure}

Before embarking on any new results
we will spend the next two chapters covering the motivation
and context for this thesis.
The final note for this introduction is that unless stated otherwise all results were the work of the author. \thispagestyle{empty}
\chapter{Graphical Calculi} \label{chapGraphicalCalculi}
\thispagestyle{plain}

\noindent\emph{In this chapter:}
\begin{itemize}
	\item[\chapterbullet] We cover graphical calculi, graph rewriting and reduction systems
	\item[\chapterbullet] We cover the calculi $\ZX_G$, $\ZW_R$, and $\ZH_\bbC$, among others
	\item[\chapterbullet] We cover phase variables and !-boxes
\end{itemize}

Graphical calculi, in the context of quantum computing, stem from Penrose's diagrammatic notation for tensors
\cite{Penrose, PQP}.
The benefit of graphical notation for tensors is that these diagrams are inherently two-dimensional,
in contrast to the single dimension used for writing out algebraic terms.
This allows us to use one dimension (horizontal composition) to represent $\tensor$ and the other
(vertical composition) to represent $\comp$.
Since Penrose's work the idea of tensors has been generalised to that of a monoidal category \cite{JoyalStreet},
and with this generalisation has come a host of graphical calculi \cite{SelingerSurvey}.
Graphical calculi are not just representations to help intuition,
but are powerful tools for reasoning about monoidal categories (see the coherence theorems of Ref.~\cite{SelingerSurvey}).

We present the desiderata of a graphical calculus in Remark~\ref{remGraphicalCalculus},
with a concrete definition and many examples to follow.
Importantly, graphical calculi
need to be both graphical (i.e. based on diagrams)
and calculi (i.e. can be used for calculation).
The second part is the interesting one;
we use the diagrams to represent some other object of study,
and then manipulate the diagrams to represent manipulations
of the object of study.
The important properties of calculi,
which will receive formal definitions in a moment, are
\emph{soundness} (the calculus can only show things that are true)
and \emph{completeness} (anything that is true and can be stated can also be shown).
The `and can be stated' part introduces a third property:
A graphical calculus is \emph{universal} with regards to some collection of objects
if every one of those objects can be represented in the graphical calculus.

Almost certainly the best known example of a graphical calculus
in quantum computing is simply that of quantum circuit diagrams (see Ref.~\cite{NielsonChuang}).
Probably the best known example of the calculi considered here is the ZX-calculus
(Definition~\ref{defZX})
which uses diagrams built from red and green vertices to represent matrices.
It differs from quantum circuit diagrams not just in terms of generators,
but also because ZX represents morphisms in a compact closed, dagger, symmetric monoidal category,
compared to the `plain' monoidal category of circuit diagrams.
In a very real sense the power of ZX comes from this extra structure.

\begin{remark}[Desiderata for Graphical Calculi]  \label{remGraphicalCalculus}
	We desire the following aspects for a graphical calculus:
	\begin{itemize}
		\item A description of valid diagrams to be used in the calculus
		\item An interpretation from diagrams into some object of study
		\item Rules for manipulating these diagrams
	\end{itemize}
\end{remark}

Every graphical calculus we consider in this thesis
will be presented in the form of a PROP\footnote{A \textbf{PRO}ducts and \textbf{P}ermutations category},
an idea that originated with Maclane \cite[\S5]{Maclane1965},
and neatly described as a symmetric monoidal category
where the objects are the natural numbers,
and the tensor product is addition.
Diagrams in our graphical calculi are morphisms in such a PROP.
For our diagrams this means we have a single wire-type,
and boxes (representing morphisms)
that take some number of wires as inputs and some number of wires as outputs.
Because we are dealing with PROPs we can cross wires over,
but we cannot always bend wires around:
Every wire flows from an output of a box (or boundary) to an input of a box (or boundary).

\begin{definition}[Graphical Calculus]  \label{defGraphicalCalculus}
	For this thesis, a graphical calculus is a PROP
	with additional morphisms called \emph{generators}.
	\emph{Diagrams} are morphisms in this PROP.
	The calculus comes equipped with a set of \emph{rules}
	which extend to an equivalence relation $\sim$ on diagrams.
	This PROP has an \emph{interpretation} into some symmetric monoidal category $\cal{C}$,
	which is a strict symmetric monoidal functor $\idot{}$
	that sends a diagram $D$ to a morphism in $\cal{C}$.
\end{definition}

\begin{definition}[Morphisms of graphical calculi]  \label{defMorphismGraphicalCalculi}
A morphism of graphical calculi
is a strict symmetric monoidal functor from
one PROP of diagrams to the other.
There are no requirements on the interpretations of the two calculi,
and it will be made clear whether the functor respects the equivalence relation $\sim$.
\end{definition}

The rules of a graphical calculus
come in two types.
There are \emph{rewrite rules}
which are expressed as `replace this subdiagram with this other diagram'
(see \S\ref{secRewriting})
and there are \emph{meta rules},
for example `only topology matters' in ZX.
Meta rules are often used to reflect extra categorical structure
that we impose on the PROP;
for example the \emph{wire-bending} rules implicit in the cups and caps
of ZX
are just reflections of compact closed structure on the PROP.
The `all these rewrite rules hold with the colours red and green swapped'
meta-rule from the early versions of ZX is not a rewrite rule
itself but a rule that constructs rewrite rules. 
When we consider rules as equivalence relations
we really mean the smallest equivalence relation
containing all the rewrite rules \emph{and} the meta rules.
Meta rules are usually made explicit once
(e.g. the (T) rule of Ref.~\cite{SdW14})
and then used implicitly from then on.
Two diagrams are often referred to as being \emph{isomorphic}
if they are equivalent using only the meta rules.
By contrast the rewrite rules are (ideally) explicitly mentioned
every time they are used.
In this thesis we indicate the rewrite rules used by
the notation $A \by{R} B$, indicating that the rule $R$ was used
in transforming $A$ into $B$.

\begin{definition}[Soundness {\cite{SelingerSurvey}}]  \label{defSoundness} 
	A graphical calculus is sound if the interpretation $\idot{}$ respects the equivalence relation $\sim$.
	That is:
	\begin{align}
		A \sim B \implies \interpret{A} = \interpret{B}
	\end{align}
\end{definition}

\begin{definition}[Syntactic and semantic entailment]  \label{defEntailmentRules}
	We will have cause in this thesis to consider multiple versions of a graphical
	calculus with the same generators (and therefore the same diagrams) but
	with different rules.
	For a graphical calculus $\bb{G}$ with a given set of rules $\Rules$ we may write:
	\begin{align}
		\Rules \syntactic A = B  \qquad & \text{or, if unambiguous, just} \qquad \bb{G} \syntactic A = B
	\end{align}
	for when $A$ is equivalent to $B$ using the rules $\Rules$ (syntactic entailment).
	Likewise we may write:
	\begin{align}
		\bb{G} \semantic A = B
	\end{align}
	if $\interpret{A} = \interpret{B}$ using the interpretation of $\bb{G}$ (semantic entailment).
	Note that semantic entailment is independent of any choice of rules,
	and syntactic entailment is independent of any choice of interpretation.
\end{definition}

Soundness can therefore be seen as the idea that the rules (syntax)
preserve the interpretation (semantics): If $\Rules \syntactic A = B$ then $\interpret{A} = \interpret{B}$.
The converse of this is \emph{completeness}:
If the semantics of two diagrams are equal,
then they should also be equal from the point of view of the syntax.

\begin{definition}[Completeness {\cite{SelingerSurvey}}]  \label{defCompleteness} 
	A graphical calculus is complete
	if the equivalence relation $\sim$ reflects any equality under the interpretation. That is:
	\begin{align}
		\interpret{A} = \interpret{B} \implies A \sim B
	\end{align}
\end{definition}

Combined theorems showing both soundness
and completeness of a graphical calculus may be referred to as
\emph{coherence theorems}
\cite{SelingerSurvey},
but we will just refer to soundness and completeness as separate properties
in this work.
The final property mentioned in the introduction
to this chapter is that of \emph{universality},
the ability for a graphical calculus to represent everything it needs to.
The idea is analogous to that of a universal gate set in computer science \cite{NielsonChuang}.

\begin{definition}[Universality]  \label{defUniversal}
	A graphical calculus is called universal
	for a collection of morphisms if each of those morphisms can be represented
	by a diagram. I.e. for a collection of morphisms $F$:
	\begin{align}
		\forall f \in F\ \exists D_f \text{ such that } \interpret{D_f} = f
	\end{align}
	Note that just because we require a diagram to exist does not mean we have an efficient
	method of constructing it.
\end{definition}

The origin of these languages was for them to be graphical calculi
for categorical quantum computing \cite{Coecke08}.
The ZX calculus was created as a representation
of manipulations of qubits
(the qubits themselves represented as points in $\bbC^2$).
Since then there has been some generalisation
both beyond qubits
(e.g. a qu\textbf{d}it version of ZX \cite{Ranchin14}, a qudit being a point in $\bbC^d$)
and into systems over rings other than $\bbC$ (e.g. $\ZWR$ in \S\ref{secZW}).
This work remains focussed on the `-bit' case,
but is not restricted to qubits:

\begin{definition}[R-bits and qubits {\cite[Definition 5.1]{AmarThesis}}]  \label{defRBits} 
	Let R be a commutative ring.
	The PROP of $\bit{R}$ is the full monoidal subcategory
	of $\rmod{R}$ whose objects are tensor products of a finite number of
	copies of $R \oplus R$.
	Morphisms are complex matrices with $2^n$ columns and $2^m$ rows.
	The category of qubits is simply $\bit{\bbC}$ and written $\Qubit$.
\end{definition}

When talking about $\bit{R}$
we will sometimes use bra-ket notation
(also called Dirac notation) to save space,
as it allows us to clearly express the interpretations
of those generators with arbitrary arity.

\begin{definition}[Bra-kets {\cite[p62]{NielsonChuang}}]  \label{defBraKet} 
	A ket $\ket{\cdot}$ is used to indicate a (column) vector,
	with a bra $\bra{\cdot}$ used to indicate
	the dual linear functional.
	Tensor products of the vectors and functionals is shown by concatenation:
	$\bra{xy} = \bra{x} \tensor \bra{y}$
	and $\ket{xy} = \ket{x} \tensor \ket{y}$,
	and the composition (inner product) of the vector and functional is shown as
	$\braket{x}{y}$.
	In our work over $\bit{R}$ we will sometimes
	refer to the following two bases:
	\begin{align}
		\begin{split}
			\ket{0} & := \begin{pmatrix}1 \\ 0\end{pmatrix} \\
			\ket{1} & := \begin{pmatrix}0 \\ 1\end{pmatrix}
		\end{split}
		  &
		\text{and}
		  &
		\begin{split}
			\ket{+} & := \begin{pmatrix}\frac{1}{\sqrt{2}} \\ \frac{1}{\sqrt{2}}\end{pmatrix} \\
			\ket{-} & := \begin{pmatrix}\frac{1}{\sqrt{2}} \\ \frac{-1}{\sqrt{2}}\end{pmatrix}
		\end{split}
	\end{align}
\end{definition}

\bit{R} sets the scene for the calculi presented in this thesis.
When considering quantum computing
it is tempting to just look at qubits,
but it turns out that this more general setting of $\bit{R}$
still has relevance to quantum computing.
For example the fragment of Clifford+T
quantum computing represents a subcategory of $\bit{\bbC}$.
Also, when talking about the morphisms in \bit{R}, we will
use the term \emph{matrix}
rather than \emph{morphism} or \emph{tensor}
to indicate that we have an ordering on the inputs and outputs of each linear map,
which allows us to explicitly write out the matrix (using the choice of basis implied by $R\oplus R$).
Every diagram that is interpreted into $\bit{R}$ therefore represents a matrix
with elements from the ring $R$.

\begin{remark} \label{remSZXAsDiagramsForDiagrams}
	There is a graphical calculus called scalable ZX (SZX from here on) \cite{SZX}
	that is notable here because
	one can view SZX as a graphical calculus that represents (potentially enormous) ZX diagrams.
	That is to say that one can view SZX as a graphical calculus with
	diagrams and diagrammatic rules as laid out in Ref.~\cite{SZX},
	with an interpretation that turns a SZX diagram into a ZX diagram.
	We highlight this because all the other graphical calculi considered in this thesis
	have interpretations that turn diagrams into matrices,
	as indeed SZX does in its original presentation.
\end{remark}

\section{Rewriting and reduction} \label{secRewriting}

First we shall establish notation.
We will use horizontal,
side-by-side placement to indicate the tensor product $\tensor$,
and vertical composition (with appropriate edge connections) to indicate the usual $\comp$ composition.
We shall be working \textbf{bottom-to-top} in this thesis.
I.e. inputs will be at the bottom of the diagram, and outputs will be at the top.
In fact there are good reasons\footnote{Admittedly the author seems to be alone in believing that
	the argument for using the right-to-left direction for SZX or ZQ qualifies as a good reason}
for any choice of direction, but in this thesis we are using bottom to top.

All the diagrams we will be considering in this thesis
will (implicitly) be in the form of \emph{pattern graphs} \cite{Kissinger2012PatternGR}.
Pattern graphs evolved from \emph{string diagrams}
\cite{Penrose},
via \emph{open graphs} \cite{DixonKissingerOpenGraphs}
(later renamed to \emph{string graphs}).
The benefit of string graphs over string diagrams
is that string graphs give a combinatoric representation
that is amenable to Double Pushout Rewriting
(Definition~\ref{defDPO}).
The benefit of pattern graphs over string graphs
is that pattern graphs allow for !-boxes (Definition~\ref{defBBox}).
While pattern graphs and double pushout rewriting
are the backbone of graphical calculi
for quantum computing,
pattern graphs will not be used explicitly
beyond this section of the thesis.
We will instead favour the representation
of !-boxes as the blue boxes found in Ref.~\cite{ZH}.
We will also use ellipses to indicate repeated structure where clear.
Ref.~\cite{AleksThesis} contains a wealth of information
and results about string diagrams, string graphs and rewriting,
and the following definitions are all in the style of Ref.~\cite[\S4.2]{AleksThesis}.

\begin{definition}[Rewrite Rule]  \label{defRewriteRule}
	A graphical rewrite rule (with fixed boundary) is a pair of string graphs $L$ and $R$,
	with a third graph $I$,
	called the \emph{invariant subgraph} and monomorphisms $I \into L$ and $I \into R$.
	In this thesis we will insist that the image of $I$ is the boundary
	(i.e. inputs and outputs) of $L$ and $R$.
	In particular this means that $L$ and $R$ have isomorphic boundaries.
\end{definition}

\begin{definition}[Matching]  \label{defMatching}
	Given a rewrite rule $L \from I \to R$
	we say that $L$ \emph{matches} onto a graph $G$ if there
	is a monomorphism $m : L \into G$
	such that the set of vertices $m(I)$ disconnects $m(L) \setminus m(I)$
	from $G \setminus m(L)$.
	This is called the \emph{no dangling wires} condition.
\end{definition}

The purpose of rewriting is to cut out an instance of the graph $L$
(found by matching $L$ onto $G$)
and then replace it with an instance of the graph $R$.
We can make this idea rigorous in the following way:

\begin{definition}[Double Pushout Rewriting {\cite{DPO}}]  \label{defDPO} 
	A Double Pushout Rewriting (DPO rewriting) of the rule
	$L \from I \to R$,
	with $L$ matched onto $G$ by $m$
	produces $H$ in the below diagram.
	Both squares are pushouts.

	\begin{align}
		\begin{tikzcd}[ampersand replacement=\&]
			L \arrow[d, "m"]  \& \arrow[l] I \arrow[r] \arrow[d] \& R \arrow[d] \\
			G \arrow[ur, phantom, "\urcorner", very near start] \& \arrow[l] G' \arrow[r] \& \arrow[ul, phantom, "\ulcorner", very near start]H
		\end{tikzcd}
	\end{align}
	The existence and uniqueness of such a rewrite is covered in Ref.~\cite{Kissinger2012PatternGR},
	based on the results of Ref.~\cite{DixonKissingerOpenGraphs}.
\end{definition}

Now that we can rewrite graphs
we shall touch on how to perform \emph{reductions}.
This idea originates in that of \emph{term rewriting},
see Ref.~\cite{Terms}.
Term rewriting combines logic and algebra
to give a system of applying modifications
to an algebraic term (e.g. $X^2 + 2 + 2X^2 -1$)
in order to find an equivalent term with
certain properties (e.g. $3X^2 + 1$,
which has grouped similar monomials together).
Term rewriting is an essential part of conjecture synthesis,
but this work does not make use of any notable changes to the
term reduction system already present in earlier work \cite{AleksCoSy, AleksThesis}.

\begin{definition}[Graphical reduction and reducibility {\cite{AleksThesis}}]  \label{defReduction} 
	Given an ordering $\omega$ on the diagrams of a graphical calculus
	we say a rule application $A \mapsto B$ is a reduction if $\omega(A) > \omega(B)$.
	A diagram $A$ is called reducible (or a \emph{redex}) by a set of rules $\Rules$ if
	a single application of one of the rules in $\Rules$ yields a reduction of $A$.
\end{definition}

\section{A zoo of calculi} \label{secZoo}

We will now define the generators and interpretations
of the families of graphical calculi we will use in this thesis.
We will not provide all the rules for each calculus here,
just the ones we will have reason to reference on occasion.
This is because several different rulesets exist for each calculus,
but also because any time we will need a particular ruleset we will
state the rules explicitly.
Note that the calculi ZQ and $\ring_R$
(introduced in this thesis) are not covered in this background chapter,
and are instead given their own chapters (\S\ref{chapRingR} and \S\ref{chapZQ}).

\subsection{Wires} \label{secWires}

\begin{definition}[Wires, cups, and caps {\cite[\S4, \S5]{PQP}}]  \label{defWires} 
	(with thanks to Ref.~\cite{Selinger07} for the standardisation of much of the language and notation).
	All of the calculi considered here are based on the same
	underlying structure of wires.
	These wires represent the structure
	of a compact closed PROP over $\bit{R}$ without any additional morphisms.
	There will be a section in this thesis
	that does not assume compact closure,
	but we have sign-posted this clearly
	(see Definition~\ref{defRingPropS}).
	In ZX, ZW and ZH (and, when we get to them, $\ring$ and ZQ)
	we have the following graphical elements:
	\begin{align}
		\text{wire}  &   & \node{none}{}    \quad & \interpretedas \begin{pmatrix}1 & 0 \\ 0 & 1\end{pmatrix}   \\[\rowgap]
		\text{swap}  &   & \vc{\InputIfFileExists{./figures/wire/swap.tikz}{}{Missing file!}}    & \interpretedas \begin{pmatrix}1 & 0 & 0 & 0 \\ 0 & 0 & 1 & 0\\ 0 & 1 & 0 & 0 \\ 0 & 0 & 0 & 1\end{pmatrix}   \\[\rowgap]
		\text{cap}   &   & \dcap                  & \interpretedas \begin{pmatrix}
			1 & 0 & 0 & 1
		\end{pmatrix}   \\[\rowgap]
		\text{cup}   &   & \dcup                  & \interpretedas \begin{pmatrix}
			1 & 0 & 0 & 1
		\end{pmatrix}^T \\[\rowgap]
		\text{empty} &   & \vc{\InputIfFileExists{./figures/wire/empty.tikz}{}{Missing file!}}\  & \interpretedas \begin{pmatrix} 1	\end{pmatrix}
	\end{align}

	Note that the swap, in bra-ket notation, has the action of $\ket{xy} \mapsto \ket{yx}$,
	and that our cups and caps obey the \emph{snake equations}:
	\begin{align}
		\vc{\InputIfFileExists{./figures/wire/snakel.tikz}{}{Missing file!}} = \vc{\begin{tikzpicture}
	\begin{pgfonlayer}{nodelayer}
		\node [style=none] (0) at (-0.5, -1) {};
		\node [style=none] (5) at (-0.5, 1) {};
	\end{pgfonlayer}
	\begin{pgfonlayer}{edgelayer}
		\draw (5.center) to (0.center);
	\end{pgfonlayer}
\end{tikzpicture}
} = \vc{\InputIfFileExists{./figures/wire/snaker.tikz}{}{Missing file!}}
	\end{align}
\end{definition}

For the rest of this section we won't mention the wires when defining the calculi.

\subsection{ZX} \label{secZX}

Officially ZX is called `the ZX-calculus'
but since there is no risk of ambiguity,
and it is common to hear people do so,
we shall just use the short form
(likewise `ZW' and `ZH').
We shall, however,
indicate \emph{fragments}
of each calculus
using either words or subscripts.
For example `Clifford+T ZX', or `$\ZX_{\piby{4}}$'.
ZX originated in Ref.~\cite{Coecke08},
and since then has flourished into the most recognisable of the calculi
presented here.
We covered part of its story in \S\ref{chapIntroduction},
and given the range of definitions and fragments there is still debate as to whether there is `a'
or `the' ZX calculus, or even what requirements a graphical
calculus would need to fulfil to count as a ZX calculus\footnote{Private communication
with Niel de Beaudrap and Harny Wang}.

\begin{definition}[ZX]  \label{defZX}
	The ZX-calculus comes in several fragments.
	Each fragment is determined by the phases (labels) allowed on the nodes.
	In all cases the phases form a group
	that is a subgroup of $[0, 2\pi)$ under addition.
	Certain special subgroups have special names (given below)
	and only certain subgroups have been the subject of study.

	\begin{center}
		\begin{tabular}{c l l c}
			Group        & Name       & Shorthand        & Completeness reference   \\ \hline
			$[0, 2\pi)$  & Universal  & $\ZX_U$          & \cite{UniversalComplete} \\
			$<\piby{2}>$ & Stabilizer & $\ZX_{\piby{2}}$ & \cite{Backens16Thesis}   \\
			$<\piby{4}>$ & Clifford+T & $\ZX_{\piby{4}}$ & \cite{SimonCompleteness} \\
			$G$          & -          & $\ZX_{G}$        & \cite{JPVNormalForm}     \\
		\end{tabular}
	\end{center}

	Note that Ref.~\cite{Backens16Thesis} combines the scalar and scalar-free completeness proofs
	by the same author.
	The interpretation of a ZX diagram is understood as a complex matrix,
	even though only the Universal fragment of ZX is universal onto all complex matrices.
	When a generic group $G$ is considered it is always assumed to contain $\piby{2}$,
	and usually assumed to contain $\piby{4}$. We now present the generators, interpretation and sample rules for ZX:

	\begin{itemize}
		\item Green (Z) and red (X) spiders with any number of inputs and outputs (including 0):
		      \begin{align}
			      \spider{gn}{\alpha} & \interpretedas \ket{0\dots 0}\bra{0\dots 0 } + e^{i \alpha}  \ket{1\dots 1}\bra{1\dots 1} \\
			      \spider{rn}{\alpha} & \interpretedas \ket{+\dots +}\bra{+\dots +} + e^{i \alpha}  \ket{-\dots -}\bra{-\dots -}
		      \end{align}
		      Phases of $0$ are instead left blank.
		\item Hadamard gates \cite[p19]{NielsonChuang} have one input and one output:
		      \begin{align}
			      \node{h}{} & \interpretedas \frac{1}{\sqrt{2}} \begin{pmatrix}
				      1 & 1 \\ 1 & -1
			      \end{pmatrix}
		      \end{align}
		      There is a trend, welcomed by the author for its clarity,
		      to draw edges decorated with Hadamard gates as dashed, blue lines
		      (in situations where there is neither ambiguity nor loss of rigour).
		      \begin{align}
			      \vc{\begin{tikzpicture}
	\begin{pgfonlayer}{nodelayer}
		\node [style=none] (6) at (0, 0.75) {};
		\node [style=none] (7) at (0, -0.25) {};
	\end{pgfonlayer}
	\begin{pgfonlayer}{edgelayer}
		\draw [style=hadamard edge] (6.center) to (7.center);
	\end{pgfonlayer}
\end{tikzpicture}
} :=\node{h}{}
		      \end{align}
		      Hadamard gates are derivable from red and green spiders \cite{Duncan09}
		      but, as we shall see below,
		      red spiders can be constructed from green spiders and Hadamard gates.
		      It has become a matter of authorial preference whether a given paper will consider ZX
		      to be constructed from red and green spiders, or from green spiders and Hadamard gates
		      (in a manner similar to that of graph states), or even all three.
		\item Meta-rules:
		      \begin{itemize}
			      \item Only topology matters
			            (the underlying structure is an open graph)
			      \item All rules hold with red and green swapped
		      \end{itemize}
		\item Example rewrite rules. We present them here using !-boxes,
		as is standard,
		and a formal definition is given later as Definition~\ref{defBBox}.
		These light blue boxes indicate `0 or more copies of the subdiagram',
		requiring the same number of copies for the matching box on the other side of the equation.
		For example if the !-box surrounding the input wire on the left hand side of the rule Spider 1
		is expanded as having 3 copies, then the box around the inputs on the right hand side of Spider 1 is also
		understood to indicate 3 copies.
		      \begin{align}
			      \vc{\InputIfFileExists{./figures/ZX/S1bb_l.tikz}{}{Missing file!}} & \by{\text{Spider } 1} \vc{\InputIfFileExists{./figures/ZX/S1bb_r.tikz}{}{Missing file!}} & \node{gn}{}         & \by{\text{Spider } 2} \node{none}{}  \\
			      \vc{\InputIfFileExists{./figures/ZX/BAbb_l.tikz}{}{Missing file!}} & \by{\text{Bialgebra}} \vc{\InputIfFileExists{./figures/ZX/BAbb_r.tikz}{}{Missing file!}} & \vc{\InputIfFileExists{./figures/ZX/Hopf_l.tikz}{}{Missing file!}} & \by{\text{Hopf}} \vc{\InputIfFileExists{./figures/ZX/Hopf_r.tikz}{}{Missing file!}} \\
			      \vc{\InputIfFileExists{./figures/ZX/ColourSwapbb_l.tikz}{}{Missing file!}} & \by{\text{Colour swap}} \vc{\InputIfFileExists{./figures/ZX/ColourSwapbb_r.tikz}{}{Missing file!}}
		      \end{align}
	\end{itemize}
\end{definition}

\subsection{ZW} \label{secZW}

The ZW calculus \cite{ZW, AmarThesis} was created for manipulating the GHZ and W entanglement states,
representing the two classes of connected entanglement available for three qubits.
The W state (generalised to any arity) is represented as a black spider.
The GHZ state, too, is generalised in terms of arity,
but also given a phase, becoming a Z spider
similar to the Z spider of ZX.
We will introduce a convention of drawing nodes
that never contain a phase
(e.g. the black, rounded spider of ZW)
smaller than nodes that can contain a phase
(e.g. the white, rounded spider of ZW).

In contrast with ZX, which aims to represent only qubit quantum computing,
ZW has more freedom.
By allowing the phases of the Z spider to come from some choice of ring $R$,
the diagrams in $\ZWR$ (as that fragment is then called)
are universal over $\bit{R}$.
Not only that, but the rules of $\ZWR$ remain sound and complete,
independent of the choice of $R$.
To recover qubit quantum computing one simply sets $R$ to
$\bbC$, but it should be noted that completeness of the Clifford+T \cite{SimonCompleteness}
and Universal \cite{UniversalComplete} fragments of ZX were shown by equivalences with two
different fragments of ZW.
ZW's own completeness comes from the existence of a normal form \cite{ZW}.

\begin{definition}[ZW {\cite{AmarThesis}}]  \label{defZW} 
	The ZW calculus is generated by Z (white)
	and W (black) spiders.
	It is parameterised by a choice of commutative ring $R$,
	with an interpretation into $\bit{R}$.
	\begin{itemize}

		\item The Z (white) spider, parameterised by $r \in R$,
		      and the W (black) spider:

		      \begin{align}
			      \spider{white}{r} & \interpretedas \ket{0\dots 0}\bra{0\dots 0 } + r\ket{1\dots 1}\bra{1\dots 1}          \\
			      \wspider{smallblack}{}{n}   & \interpretedas \sum\limits_{k=1}^n \ket{\underbrace{0\ldots 0}_{k-1}1\underbrace{0\ldots 0}_{n-k}}
		      \end{align}
		      When the phase on the white node is $1$ it is left blank.
		\item The Crossing $x$ is a non-commutative, derived generator of fixed arity used in the rules:

		      \begin{align}
			      \vc{\begin{tikzpicture}
	\begin{pgfonlayer}{nodelayer}
		\node [style=none] (0) at (-0.5] (1) at (0.5] (2) at (-0.5] (3) at (0.5] (4) at (0, 0) {};
	\end{pgfonlayer}
	\begin{pgfonlayer}{edgelayer}
		\draw (0.center) to (1.center);
		\draw (2.center) to (3.center);
	\end{pgfonlayer}
\end{tikzpicture}
} \interpretedas
			      \begin{pmatrix}
				      1 & 0 & 0 & 0  \\
				      0 & 0 & 1 & 0  \\
				      0 & 1 & 0 & 0  \\
				      0 & 0 & 0 & -1
			      \end{pmatrix}
		      \end{align}
		\item Meta-rules:
		      \begin{itemize}
			      \item Only topology matters
			            (the underlying structure is an open graph)
		      \end{itemize}
		\item Example rewrite rules:
		      \begin{align}
			      \vc{\InputIfFileExists{./figures/ZW/cutw_l.tikz}{}{Missing file!}}  & \by{\text{cut W}} \vc{\InputIfFileExists{./figures/ZW/cutw_r.tikz}{}{Missing file!}}  \label{eqnZWcutW} & \vc{\InputIfFileExists{./figures/ZW/cutz_l.tikz}{}{Missing file!}}      & \by{\text{cut Z}} \vc{\InputIfFileExists{./figures/ZW/cutz_r.tikz}{}{Missing file!}}     \\
			      \vc{\InputIfFileExists{./figures/ZW/rei3x_l.tikz}{}{Missing file!}} & \by{\text{Reidemeister 3}} \vc{\InputIfFileExists{./figures/ZW/rei3x_r.tikz}{}{Missing file!}}          & \vc{\InputIfFileExists{./figures/ZW/rngplusrs_l.tikz}{}{Missing file!}} & \by{\text{Plus}} \vc{\begin{tikzpicture}
	\begin{pgfonlayer}{nodelayer}
		\node [style=none] (1) at (0, 1.5) {};
		\node [style=none] (26) at (0, -1.5) {};
		\node [style=white] (31) at (0, 0) {r+s};
	\end{pgfonlayer}
	\begin{pgfonlayer}{edgelayer}
		\draw (1.center) to (31.center);
		\draw (31.center) to (26.center);
	\end{pgfonlayer}
\end{tikzpicture}
} \\
		      \end{align}
	\end{itemize}
\end{definition}

\subsection{ZH}

The `Z' of ZW, ZX, and ZH (and indeed ZQ) are all minor modifications on the same theme.
The `H' of ZH refers to the H-box;
a generalisation of the Hadamard gate in both arity and phase.
One important notational aspect is that the Hadamard node of ZX
(a yellow rectangle with no phase)
is represented in ZH as a white rectangle with a phase of $-1$,
which can feel counter-intuitive when moving between the calculi.
Similar to ZW, the completeness of ZH was proven via
the existence of a normal form for its diagrams.

\begin{definition}[ZH {\cite{ZH}}]  \label{defZH} 
	The ZH calculus is generated by phase free Z spiders
	and generalised (in both arity and phase) Hadamard nodes called `H-boxes' \cite{ZH}.
	The original paper only allows the H-boxes to have phases from the complex numbers,
	but one of the results in this thesis (\S\ref{secZHR}) is that this can be extended
	to any commutative ring containing the element $\half$,
	preserving soundness and completeness.
	We will give the version of ZH from Ref.~\cite{ZH} below,
	our only modification being that we refer to it as $\ZH_\bbC$:

	\begin{itemize}
		\item The Z (white) spider and H-box (rectangle with $m$ inputs and $n$ outputs) generate $\ZH_\bbC$:
		      \begin{align}
			      \spider{smallZ}{} & \interpretedas \ket{0\dots 0}\bra{0\dots 0 } + e^{i \alpha}  \ket{1\dots 1}\bra{1\dots 1}                    \\
			      \spider{ZH}{c}    & \interpretedas \sum_{\text{bitstrings}} c^{i_1\dots i_m j_1 \dots j_n} \ket{j_1\dots j_n}\bra{i_1 \dots i_m}
		      \end{align}

		      The sum in the interpretation of the H-box is over all bitstrings $i_1 \dots i_m j_1 \dots j_n$.
		      The reader may find it easier to view this interpretation
		      as a matrix where every entry is $1$, with the exception of the bottom right entry which is $c$.
		      In the case where this matrix only has one row and one column (i.e. a scalar)
		      the sole entry is $1 + c$.
		      When the phase is $-1$ it is left off the H-box.
		\item The derived generators of ZH include the grey spider and the NOT gate:
		      \begin{align}
			      \spider{smallgrey}{} & := \vc{\InputIfFileExists{./figures/ZH/grey_spider.tikz}{}{Missing file!}} \\
			      \node{smallgrey}{\neg}    & := \vc{\InputIfFileExists{./figures/ZH/neg_node.tikz}{}{Missing file!}}
		      \end{align}
		\item Meta-rules:
		      \begin{itemize}
			      \item Only topology matters
			            (the underlying structure is an open graph)
		      \end{itemize}
		\item Example rewrite rules:
		      \begin{align}
			      \vc{\InputIfFileExists{./figures/ZH/HS1_l.tikz}{}{Missing file!}} & \by{\text{H spider}} \vc{\InputIfFileExists{./figures/ZH/HS1_r.tikz}{}{Missing file!}} & \state{smallwhite}{}  & \by{\text{Unit}} \state{ZH}{1}              \\
			      \vc{\InputIfFileExists{./figures/ZH/a_l.tikz}{}{Missing file!}}   & \by{\text{Addition}} \vc{\begin{tikzpicture}
	\begin{pgfonlayer}{nodelayer}
		\node [style=ZH] (5) at (1.75, 0) {$\frac{a+b}{2}$};
		\node [style=none] (6) at (1.75, 1) {};
		\node [style=ZH] (8) at (1, 0.5) {$2$};
	\end{pgfonlayer}
	\begin{pgfonlayer}{edgelayer}
		\draw (6.center) to (5.center);
	\end{pgfonlayer}
\end{tikzpicture}
}   & \vc{\InputIfFileExists{./figures/ZH/m_l.tikz}{}{Missing file!}} & \by{\text{Multiplication}} \vc{\begin{tikzpicture}
	\begin{pgfonlayer}{nodelayer}
		\node [style=none] (6) at (1.5, 0.75) {};
		\node [style=ZH] (7) at (1.5, 0) {$a \times b$};
	\end{pgfonlayer}
	\begin{pgfonlayer}{edgelayer}
		\draw (7.center) to (6.center);
	\end{pgfonlayer}
\end{tikzpicture}
}
		      \end{align}
	\end{itemize}
\end{definition}

\section{Decorated and simple diagrams} \label{secSimpleDiagrams}

The presentation of the calculi above (\S\ref{secZoo}) already involved
using variables to indicate arbitrary phases.
The meaning of this is usually considered to be so clear that it is not worth remarking upon:
The calculus $\ZH_\bbC$ was introduced in Ref.~\cite{ZH}
using variables on the phases.
We will, however, have cause to be rather pedantic about this nuance in this thesis.
Our reason for caring about this nuance is that when we later try to generate
and generalise theorems it is important to be able to introduce and verify
these useful components (see \S\ref{chapCoSy}).
First we will clearly define phase variables and !-boxes, and then declare a diagram to be \emph{simple}
if it does not use them.

\begin{definition}[Phase Algebra]  \label{defPhaseAlgebra}
	The phase algebra for a vertex in a graphical calculus
	is the collection of terms allowed to be written as a phase for that vertex.
\end{definition}

\begin{example}[ZX Phase Algbera]\label{exaZXPhase}
	ZX uses a phase \emph{group} algebra,
	which reflects the fragment of the calculus (Definition~\ref{defZX}).
	For example, Universal ZX has phase group $[0, 2\pi)$ under addition,
	and the red and green spiders can be decorated with terms from that algebra.
	The following are therefore valid Universal ZX diagrams:
	\begin{align}
		\spider{gn}{\pi/2} \qquad \spider{rn}{\pi/4 + \pi/6+12 \pi/3}
	\end{align}
	The following equation is not a Universal ZX diagram, because the group algebra does not contain the symbol `$\times$':
	\begin{align}
		\spider{gn}{\pi/2 \times \pi/6}
	\end{align}
\end{example}

\begin{definition}[Phase variables]  \label{defPhaseVariable}
	A phase variable is a formal variable introduced to the phase algebra.
	For ZX (i.e. phase group variables) we tend to use $\alpha, \beta, \dots$,
	and for ZH and ZW (i.e. phase ring variables) use $a,b,c, \dots$ or $r,s,\dots$.
	We can \emph{evaluate} a phase variable to some element of the phase algebra
	by replacing all instances of that phase variable in the diagram with that element.
\end{definition}

\begin{example}[ZX Phase Variable] \label{exaZXPhaseVariable}
	The following are considered to be Clifford+T ZX diagrams
	but with the phase variable $\alpha$:
	\begin{align}
		\spider{rn}{\alpha} \qquad \spider{gn}{\pi/2-3\alpha}
	\end{align}
	Note that this is indistinguishable
	from a ZX diagram where the phase group
	is generated by $\pi/4$ and $\alpha$.
	We will use this observation when talking about parameter spaces in \S\ref{chapConjectureInference}.
\end{example}

Now that we have covered phase variables let us look at !-boxes.
!-boxes indicate repeated elements in a diagram,
and are notational sugar\footnote{The region indicated by a !-box is less visually cluttered than drawing out the !-vertices and directed edges of a pattern graph.} for the concept of
pattern graphs introduced in Ref.~\cite{Kissinger2012PatternGR}.

\begin{definition}[!-box {\cite{Kissinger2012PatternGR}}]  \label{defBBox} 
	A !-box (pronounced `bang box') indicates a collection of nodes (potentially including wire vertices)
	in a diagram. By an instantiation of a !-box we mean choosing a number $n \in \bbN$
	and making $n$ copies of those nodes, preserving their connectivity.
	A !-box is drawn as a blue box surrounding the nodes inside the !-box.
\end{definition}

\begin{example}[!-boxes in ZH {\cite[\S2.3]{ZH}}]  \label{exaZHBangBox}
	The diagram below represents an entire family of ZH diagrams:
	\begin{align}
		\vc{\InputIfFileExists{./figures/ZH/bang_box_example.tikz}{}{Missing file!}} \leftrightarrow
		\set{
		\ \vc{\begin{tikzpicture}
	\begin{pgfonlayer}{nodelayer}
		\node [style=smallZ] (0) at (-0.25, 0) {};
		\node [style=smallGrey] (7) at (0.25, 0) {};
	\end{pgfonlayer}
\end{tikzpicture}
}\ ,
		\ \vc{\InputIfFileExists{./figures/ZH/bang_box_example1.tikz}{}{Missing file!}}\ ,
		\ \vc{\InputIfFileExists{./figures/ZH/bang_box_example2.tikz}{}{Missing file!}}\ ,
		\ \vc{\InputIfFileExists{./figures/ZH/bang_box_example3.tikz}{}{Missing file!}}\ ,
		\ \dots\
		}
	\end{align}
\end{example}

\begin{definition}[Simple and decorated]  \label{defSimpleDiagram}
	A diagram that does not contain any phase variables
	or !-boxes is called simple.
\end{definition}

Any diagram that isn't simple
represents a family of simple diagrams,
and it is important to be able to recover any particular simple diagram
from this family.
When talking about these families we will use the term
`parameterised family',
and be explicit, where possible, about what these parameters are.
This isn't just to say `parameterised by !-boxes and phase variables',
but to give each of these a name;
phase variables are simply referred to by their own name
(e.g. the phase variable $\alpha$ is still referred to as $\alpha$)
and for !-boxes we just ascribe to each !-box the name $\delta_1, \delta_2, \dots$\ .
With explicit names for the parameters we can recover a simple diagram
by \emph{instantiating} each !-box some number of times,
and \emph{evaluating} each phase variable to some element of the phase algebra.
We will use the notation $\alpha | \alpha = \pi$ to indicate
that $\alpha$ is a parameter, and that it should be evaluated to the value $\pi$.
When we are talking about equations it is understood that the evaluation and instantiation
is to be applied to both the left and right hand side of the equation.

\begin{example}[An explicit evaluation and instantiation] \label{exaExplicitInstantiation}
	The (slightly simplified) spider law in Universal ZX is parameterised over:
	\begin{itemize}
		\item !-boxes: $\delta_1 \in \bb{N} $ inputs and $\delta_2 \in \bb{N} $ outputs
		\item phase variables: $\alpha_1, \alpha_2 \in [0, 2\pi)$
	\end{itemize}
	We write this parameterised family of equations as:
	\begin{align}
		\family{
		\vc{\InputIfFileExists{./figures/ZX/param_s1_l.tikz}{}{Missing file!}} = \; \vc{\InputIfFileExists{./figures/ZX/param_s1_r.tikz}{}{Missing file!}}
		}_{\alpha_1, \alpha_2, \delta_1, \delta_2}
	\end{align}
	We now instantiate some of the parameters of the spider law,
	resulting in what is still an infinite, parameterised family:
	\begin{align}
		  & \family{
			\vc{\InputIfFileExists{./figures/ZX/param_s1_l.tikz}{}{Missing file!}} = \; \vc{\InputIfFileExists{./figures/ZX/param_s1_r.tikz}{}{Missing file!}}
		}_{\alpha_1, \alpha_2, \delta_1, \delta_2 | \alpha_1 = \pi, \delta_1 = 2} \\
		= & \family{
			\vc{\InputIfFileExists{./figures/ZX/param_s1_l_pi.tikz}{}{Missing file!}} = \; \vc{\InputIfFileExists{./figures/ZX/param_s1_r_pi.tikz}{}{Missing file!}}
		}_{\alpha_2, \delta_1, \delta_2 | \delta_1 = 2} \\
		= & \family{
			\vc{\InputIfFileExists{./figures/ZX/param_s1_l_pi_2.tikz}{}{Missing file!}} = \; \vc{\InputIfFileExists{./figures/ZX/param_s1_r_pi_2.tikz}{}{Missing file!}}
		}_{\alpha_2, \delta_2}
	\end{align}
\end{example}

When the topic of verification comes up later (\S\ref{chapConjectureVerification})
the goal is to be able to say that some family of equations holds true,
just by checking whether some subset of that family holds true.

\begin{definition}[Soundness for an equation]  \label{defSimpleSound}
	An equation of diagrams $\bbD_1 = \bbD_2$ is sound if $\interpret{\bbD_1} = \interpret{\bbD_2}$.
	A family of equations is sound if every equation in the family is sound.
\end{definition}

\begin{definition}[Verifying family]  \label{defFamilySound}
	A family of equations $S$ is said to verify the family of equations $S'$
	if $S$ being sound implies that $S'$ is sound:
	\begin{align}
		e \text{ sound } \forall e \in S \quad\implies\quad e' \text{ sound } \forall e' \in S'
	\end{align}
\end{definition}

One of the questions considered later in this thesis is
`when can a parameterised family of equations be verified by
a small set of simple equations?'
Our reasons for wanting to answer this
will hopefully be made clear in the next chapter,
where we give the background for conjecture synthesis.  \thispagestyle{empty} 
\chapter{Conjecture Synthesis}
\label{chapCoSy}
\thispagestyle{plain}
\noindent\emph{In this chapter:}
\begin{itemize}
	\item[\chapterbullet] We discuss conjecture synthesis for terms and string graphs
	\item[\chapterbullet] We introduce a novel step in the conjecture synthesis process
	\item[\chapterbullet] We discuss the difficulties associated with our graphical calculi
	\item[\chapterbullet] We give an example from a conjecture synthesis run
\end{itemize}

It was mentioned in the introduction
that IsaCoSy
(the Isabelle Conjecture Synthesis)
was created with the
`[...] aim to automatically produce a useful set
of theorems'.
The full quotation, however, is:

\begin{displayquote}[IsaCoSy: Synthesis of inductive theorems \cite{ISACOSY}]
	Given a set of initial definitions of
	recursive datatypes and functions, we aim to automatically produce a useful set
	of theorems, that will be useful as lemmas in further proofs, by either a human or
	an automated theorem prover.
\end{displayquote}

Where this thesis deviates from that work is in the data structures considered
(for us it will be diagrams rather than recursive datatypes)
and in the introduction of a \emph{generalisation} step	
(to be discussed later).
Where that work deviated from its predecessors was in terms of method.
IsaCoSy follows a \emph{generative} approach,
where hypotheses are generated and then assessed,
compared to a \emph{deductive} approach
where new theorems are deduced from old
(see Figure~\ref{figDeductive} for a diagram indicating the deductive approach).

\begin{figure}[ht]
	\center
	\includegraphics[width=0.5\textwidth]{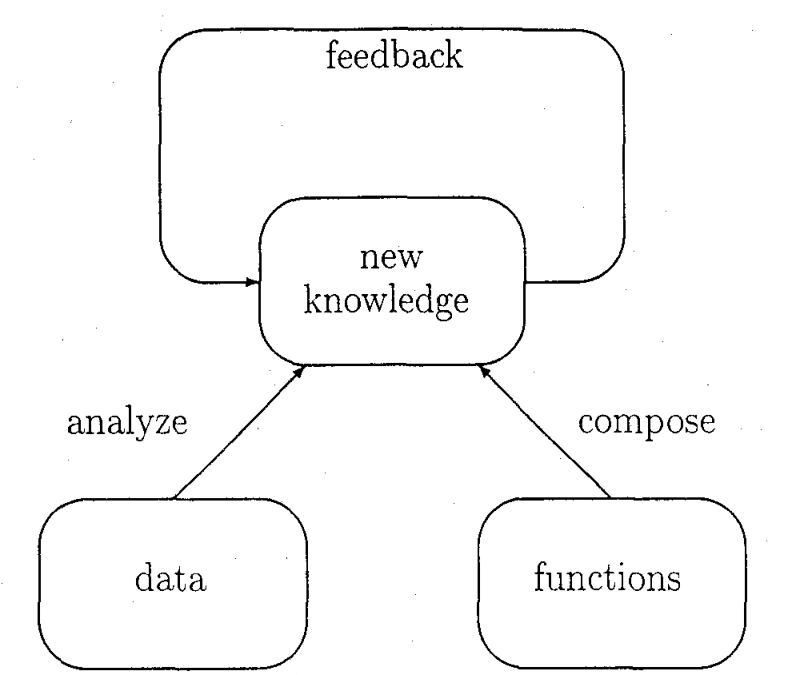}
	\caption{A diagram captioned `The construction of new knowledge'
		from Ref.~\cite{Ammon1993},
		indicating the deductive approach to automated theorem generation \label{figDeductive}}
\end{figure}

The generative approach relies on the concept of reduction,
a concept that was briefly covered in \S\ref{secRewriting}.
For IsaCoSy this meant term reduction,
but for this work it will mean string graph reduction.
The key in the generative approach is to not generate
(or to generate and immediately discard) any \emph{reducible} objects,
be they terms or string graphs.

\begin{example}[Adding 0 is pointless {\cite[Example 1]{ISACOSY}}]  \label{exaIsaCoSyAdd0}
	The following is a term reduction describing part of the nature of $+$ acting on the natural numbers:
	\begin{align}
		0 + x \rto x
	\end{align}
	We therefore know that if we generate an expression containing $0 + x$
	it would have been both simpler and equivalent to generate the same expression just containing $x$ instead.
\end{example}

While Example~\ref{exaIsaCoSyAdd0} feels entirely obvious this simple idea
works for any abstract reduction system.
What's more, conjecture synthesis can generate new reductions:
Every orientable theorem
results in a new reduction that will then further limit the space that needs to be searched.
In contrast to the deductive model of Figure~\ref{figDeductive}
we can therefore use the generative model of Figure~\ref{figGenerative}.
This model has been implemented in the QuantoCoSy part of Quantomatic \cite{Quantomatic}.
The test for how well the model works is (counter-intuitively)
to see how many \emph{fewer} theorems it creates for a given system.
Any theorems it doesn't provide to the user as output
were theorems that were reducible.
Earlier work with an earlier version of QuantoCoSy \cite{AleksCoSy}
claimed that it produced exponentially fewer rules
when considering GHZ/W (later to become ZW) diagrams up to a certain size:

\begin{displayquote}[Synthesising Graphical Theories \cite{AleksCoSy}]
	The results there were promising, as we
	demonstrated an exponential drop-off in the number of extraneous rules generated when using
		[a reducible object]-eliminating routine as compared to a naive synthesis routine
\end{displayquote}

\begin{figure}[ht]
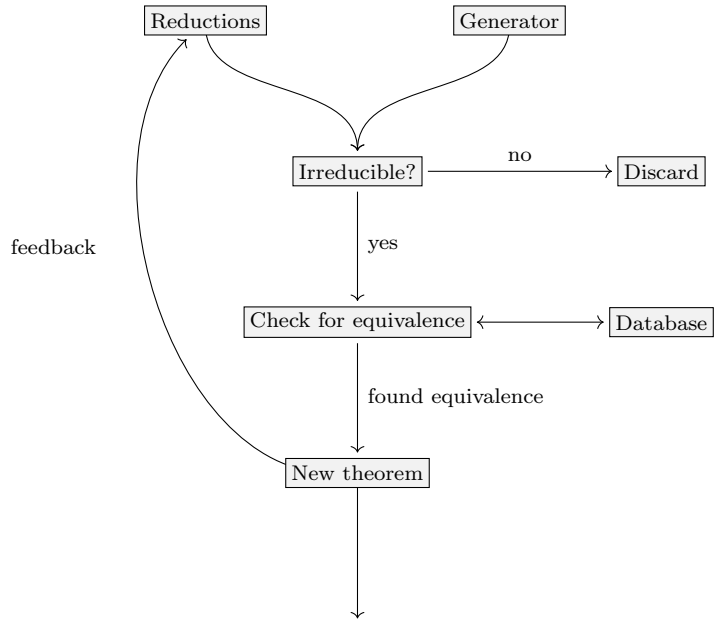

	\center
	\vc{\InputIfFileExists{./figures/wire/generative.tikz}{}{Missing file!}}
	\caption{The generative model, as used in QuantoCoSy.
		The database stores the matrix representation of every irreducible diagram generated.
		This is then used for comparison with each newly generated diagram to determine whether
		a new equivalence has been found. \label{figGenerative}}
\end{figure}

The starting point for this research was clear:
Use QuantoCoSy to generate theorems in ZX or ZW.
The big question remaining was `how should we measure success?'
For this the author took the goal
of designing a system that could
`feasibly discover a complete ruleset for ZX or ZW, given elementary domain knowledge'.
This goal was chosen before the complete rulesets for Clifford+T or Universal ZX were discovered,
but this goal, together with the question of implementation,
led to the research in this thesis.

One final note is that in our conjecture synthesis runs
we usually considered two diagrams
to be equivalent if they agreed up to non-zero scalar.
This is because scalars are comparatively easy to keep track of
using a single complex number rather than a diagram.

\section{Generating diagrams} \label{sexZXCoSy}

As research for this thesis started, the ZX-calculus (see \S\ref{secZX}) was
already gaining attention.
The book Ref.~\cite{PQP}
was finished, focussing almost exclusively on ZX,
and the quantum computing lectures at the University of Oxford
(attended by the author) were lectured using ZX.
ZX had already been shown to be complete for stabilizer quantum mechanics,
although not for the other fragments,
and the expectation and hope was that it would be shown to be complete
for Clifford+T and universal quantum computation soon (see Table~\ref{figZXStory}).
ZX, for our purposes, is generated by red and green spiders,
and these spiders are very relevant for CoSy.
We shall take the following as our initial `elementary domain knowledge':

\begin{itemize}
	\item Spiders have arbitrary arity
	\item Adjacent spiders of the same colour fuse
	\item If two edges join a pair of spiders, then those spiders either fuse by the spider law,
	      or the edges disconnect by the Hopf law
\end{itemize}

Combining these three points results in the realisation that ZX diagrams
can be reduced by the spider and Hopf laws to a labelled, bipartite graph.
Since any reducible diagram will be ignored by the model in Figure~\ref{figGenerative}
we can instead simply generate labelled bipartite graphs
(the label just contains the information on the colour and angle of the spider).
For this QuantoCoSy uses the algorithm given by Ref.~\cite{ColbournRead}.
ZW also contains two colours of spider,
but crucially we cannot use the spider laws of ZW (Equation~\eqref{eqnZWcutW}) to reduce all diagrams to bipartite graphs.
Likewise when the calculus ZH (Definition~\ref{defZH}) was introduced
it too had spider laws, similar to ZW's, that do not provide us with a reduction to bipartite graphs.
For this reason the first implementations focused on ZX.

\begin{remark} \label{remGeneratingPatterns}
	Thanks to the combinatoric model of pattern graphs
	it is possible to generate diagrams containing !-boxes.
	One of the contributions of this thesis
	is that we can now verify equations involving !-boxes
	without needing to directly reason with !-boxes, see \S\ref{secBBoxes}.
	This verification depends on a property called `separability',
	covered in \S\ref{secSeparability},
	which is a condition one could build into the pattern graph generator.
\end{remark}

\section{Generalising theorems} \label{secGeneralisingTheorems}

Figure~\ref{figZXCoSyFirst} contains
one of the first ZX rules generated by QuantoCoSy
running on Duvel.\footnote{The computing cluster made available for this research, with 24 cores at 2.9MHz.}
Note that at this point the feedback loop from Figure~\ref{figGenerative}
had not been implemented, and scalars were still being considered.
This theorem was, however, towards the upper end in terms of the size of diagram considered
for those runs.\footnote{
The results of these QuantoCoSy runs can be found as additional files
provided along with this thesis, although we do not rely upon them
for any of the results in this thesis.
Results are in folders labelled by a date and target hardware,
and there is a file `summary.txt' in each folder recording the notes for that run.}

\begin{figure}[ht]
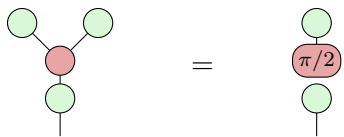

	\center
	\vc{\InputIfFileExists{./figures/ZX/CoSyExample_l.tikz}{}{Missing file!}}\qquad = \qquad\vc{\InputIfFileExists{./figures/ZX/CoSyExample_r.tikz}{}{Missing file!}}
	\caption{An early example of a theorem generated by Duvel.
		In this case the theorem is `-1001866680\_-1614245455.qrule'
		from run `18-04-04-10-55-19-duvel',
		chosen because it is first in the list when displayed by Quantomatic.\label{figZXCoSyFirst}}
\end{figure}

\begin{remark}[The generalisation step] \label{remGeneralisationStep}
The first contribution of this thesis is that of adding a \emph{generalisation step}
to the method of generative conjecture synthesis.
It is performed during the feedback step of Figure~\ref{figGenerative},
and during this step we take our newly synthesised result
and try to generalise it as much as possible
before creating new reductions from it.
This inner generalisation algorithm could itself be deductive
(we cover examples in \S\ref{secInferenceDeductive})
or produce informed guesses which are then verified (see \S\ref{secFindingSubmodules}).
In both cases the processes are novel to this thesis,
and the non-deductive process is only possible due to a novel
link between phase algebras and algebraic geometry.
\end{remark}

\begin{example}[Spotting a generalisation] \label{exaWiderpattern}the theorems in Figure~\ref{figCopy2AndCopy3} form part of a wider pattern,
illustrated by the use of !-boxes in Figure~\ref{figCopyBang}.
Likewise the two `$\pi$-commutations' exhibited in Figure~\ref{figPiCommExamples}
are both instances of the more general rule shown in Figure~\ref{figPiCommGeneral}.
\end{example}

\begin{figure}[p]
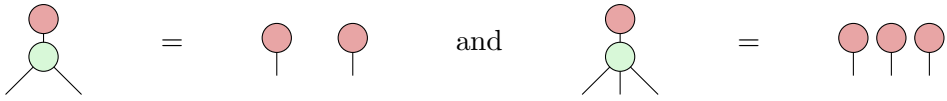

	\center
	\vc{\InputIfFileExists{./figures/ZX/Copy2_l.tikz}{}{Missing file!}}\qquad = \qquad\vc{\InputIfFileExists{./figures/ZX/Copy2_r.tikz}{}{Missing file!}} \qquad and \qquad
	\vc{\InputIfFileExists{./figures/ZX/Copy3_l.tikz}{}{Missing file!}}\qquad = \qquad\vc{\InputIfFileExists{./figures/ZX/Copy3_r.tikz}{}{Missing file!}}
	\caption{Two examples of (scalar-free) copying\label{figCopy2AndCopy3}}
\end{figure}

\begin{figure}[p]
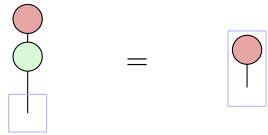

	\center
	\vc{\InputIfFileExists{./figures/ZX/CopyBang_l.tikz}{}{Missing file!}}\qquad = \qquad\vc{\InputIfFileExists{./figures/ZX/CopyBang_r.tikz}{}{Missing file!}}
	\caption{The more general version of (scalar-free) copying, using !-boxes\label{figCopyBang}}
\end{figure}

\begin{figure}[p]
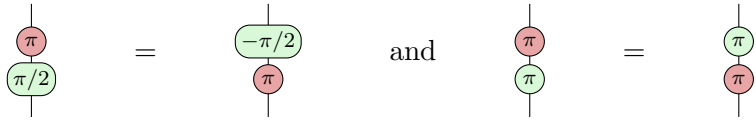

	\center
	\vc{\InputIfFileExists{./figures/ZX/PiCommExample1_l.tikz}{}{Missing file!}}\qquad = \qquad\vc{\InputIfFileExists{./figures/ZX/PiCommExample1_r.tikz}{}{Missing file!}} \qquad and \qquad
	\vc{\InputIfFileExists{./figures/ZX/PiCommExample2_l.tikz}{}{Missing file!}}\qquad = \qquad\vc{\InputIfFileExists{./figures/ZX/PiCommExample2_r.tikz}{}{Missing file!}}
	\caption{Two examples of (scalar-free) $\pi$-commutation\label{figPiCommExamples}}
\end{figure}

\begin{figure}[p]
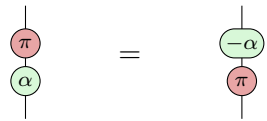

	\center
	\vc{\InputIfFileExists{./figures/ZX/PiCommGeneral_l.tikz}{}{Missing file!}}\qquad = \qquad\vc{\InputIfFileExists{./figures/ZX/PiCommGeneral_r.tikz}{}{Missing file!}}
	\caption{The more general version of (scalar-free) $\pi$-commutation\label{figPiCommGeneral}}
\end{figure}

This question of how to generalise theorems is not new.
The philosophical `Problem of Induction' of creating ideas from evidence dates back to Hume in 1739 
\cite{Hume}.
Example~\ref{exaAmmonQuantifiers} gives a more recent example from the field of conjecture synthesis,
where the task of term generalisation is handled by attempting to replace existential quantifiers
with universal quantifiers,
and if that proves to be too general then successively greater restrictions are placed on the universal quantifier.
For graphical calculi we have two ways to generalise.
We can generalise these graph patterns with !-boxes (as in Figure~\ref{figCopyBang})
and generalise the phase patterns with `phase variables' (as in Figure~\ref{figPiCommGeneral}).

\begin{example}[Term generalisation using quantifiers {\cite[p416]{Ammon1993}}]  \label{exaAmmonQuantifiers}
	The progression of first attempting to generalise the variable $n$
	to universal quantification, and then to restricted universal quantification
	in a conjecture about natural numbers and primes.
	\begin{align}
		\exists n,p,q(n \in N \land p \in P \land q \in P \land n                              & = p+q)    \\
		\forall n(n \in N \to \exists p,q(p\in P \land q \in P \land n                         & = p+q))   \\
		\forall (n(n\in N \land \text{is-even}(n) \to \exists p,q(p\in P \land q \in P \land n & = p + q))
	\end{align}
\end{example}

The question of how to introduce phase variables (and, to a lesser degree, !-boxes)
is covered in \S\ref{chapConjectureInference},
and relies on a novel link between phase algebras and algebraic geometry.
We will discuss how to then verify these conjectures in \S\ref{chapConjectureVerification}.
One of the interesting results of combining these two ideas arises in \S\ref{secPhaseVariablesOverQubits},
where we note that some phases can, almost automatically, be replaced with more general phase variables.
In terms of Example~\ref{exaAmmonQuantifiers}
this would amount to saying that certain existential quantifiers could automatically be
replaced with universal quantifiers.
The operational mantra for this is that there is a bound on the algebraic
complexity for the phases in an equation;
if a phase appears less often than the degree of its minimal polynomial
then it can be replaced by a phase variable (Remark~\ref{remNonQImpliesPhaseVariable}).

The framework of \S\ref{chapConjectureInference} is, however, limited by the choice of phase algebra.
For ZX the framework
can infer all the rules in the complete ruleset of Stabilizer ZX from Ref.~\cite{BackensStabilizerComplete}
and Clifford+T ZX in Ref.~\cite{SimonCompleteness},
but it cannot reason about the rules needed for completeness of Universal ZX:
The original completeness result contained the (AD) rule
\cite{UniversalComplete} which requires the user to compute the magnitude and argument of $\lambda_1 e^{i \beta} + \lambda_2 e^{i\alpha}$.
With considerable work that ruleset was reduced to that of Ref.~\cite{VilmartZX},
which requires the user to compute the following:

\begin{displayquote}[Figure~2, A Near-Minimal Axiomatisation of ZX-Calculus
		for Pure Qubit Quantum Mechanics \cite{VilmartZX}]
	In rule (EU'), $\beta_1$, $\beta_2$, $\beta_3$ and $\gamma$
	can be determined as follows: $x^+ := \frac{\alpha_1 + \alpha_2}{2}$,
	$x^- := x^-\alpha_2$, $z := - \sin (x^+) + i \cos(x^-)$
	and $z' := \cos(x^+) - i \sin (x^-)$,
	then $\beta_1 = \arg z + \arg z'$, $\beta_2 = 2 \arg(i + \frac{\abs{z}}{\abs{z'}})$,
	$\beta_3 = \arg z - \arg z'$, $\gamma = x^+ - \arg(z) + \frac{\pi - \beta_2}{2}$
	where by convention $\arg(0) := 0$ and $z' = 0 \implies \beta_2 = 0$.
\end{displayquote}

Since we are always striving to find more refined, or even just different,
complete rulesets for these calculi it is possible that
there will exist some complete ruleset for ZX where every rule
\emph{does} fit into the framework presented in \S\ref{chapConjectureInference}.
The authors of Ref.~\cite{BeyondCliffordT}, however, are pessimistic
about this in their comment: `Notice
however that \cite{SdW14} and \cite{cyclo} are two different kinds of evidence that such a finite
complete axiomatisation [using linear rules] may not exist'.
(See Remark~\ref{remConverseToLemLinearRules}
for how the linear rules of Ref.~\cite{BeyondCliffordT} coincide with the framework for ZX described in \S\ref{chapConjectureInference}.)

Constructing a framework that allows the direct inference of the (EU') rule's external computations,
also called \emph{side conditions},
seems infeasible to the author.
This in no small part prompted the wider research of this thesis.
Fortunately another avenue opened up,
for despite the apparent complexity of the (EU') rule of Ref.~\cite{VilmartZX}
it actually hints at something far simpler:
It is analogous to the Euler angle decomposition of rotations in 3D space
(this is the origin of the name (EU)).
In response to this the author constructed a new, complete,
graphical calculus called ZQ (Definition~\ref{defZQ})
which took all rotations to be primitive, and not just the Z and X rotations of ZX.
In contrast to Universal ZX it \emph{is} possible to express a complete ruleset of
(the equally universal and complete calculus) ZQ using the linear framework of \S\ref{chapConjectureInference}.

\section{Generalising ring calculi}

By this point in the narrative the paper introducing ZH (Ref.~\cite{ZH})
had been published. The author was then presented with the question
of which of ZH and ZW, both being ring-based, spider calculi, would be better to investigate.
Both ZH and ZW are built from two types of spider,
with the rules of ZH involving a third, derived, type of spider.
Neither calculus relies on side conditions for its complete ruleset.
ZW, unlike ZH, had the advantage of having been used directly for the completeness results
for Clifford+T and Universal ZX, but the completeness results of both ZH and ZW were shown via normal forms.

It struck the author that the similarities between ZH and ZW
were more interesting than their differences,
and so at this point the research diverted from synthesising conjectures
in these ring-based languages to trying to find a unifying system for discussing them.
From this came the $\ring_R$ graphical calculus (\S\ref{secRingR});
a calculus designed to be as generic a phase ring calculus as possible.
By construction it is easy to translate $\ring_R$ diagrams into ZW or ZH diagrams,
but by considering the category of phase ring graphical calculi
we can actually say that $\ring_R$ is essentially initial for graphical calculi
with phase ring $R$ (Corollary~\ref{corGivenIsoRingRInitial}).
$\ring_\bbC$ has even been described as `solving the field'
for phase ring qubit graphical calculi\footnote{Aleks Kissinger, private communication}.

The investigation of $\ring_R$ and its categorical context (\S\ref{chapPhaseRingCalculi})
gave rise to the idea of phase homomorphisms
and phase homomorphism pairs;
structures that acted directly on the phase of the diagrams
while also commuting with the interpretations.
These phase homomorphism pairs
also arise as symmetries in the framework introduced in \S\ref{chapConjectureInference},
and are instrumental in the finite verification result
of \S\ref{secPhaseVariablesOverQubits}.
At the end of \S\ref{chapPhaseRingCalculi}
we look at the corresponding structures for phase group calculi,
and then generalise further to $\Sigma$-calculi.

This ends our introductory chapters,
and so we shall begin by introducing the calculus $\ring$.  \thispagestyle{empty} 
\chapter{The Graphical Calculus \textsc{RING}} \label{chapRingR}
\thispagestyle{plain}
\noindent\emph{In this chapter:}
\begin{itemize}
	\item[\chapterbullet] We introduce a complete, universal phase ring calculus, $\ring_R$
	\item[\chapterbullet] We extend $\ZHC$ to the more general $\ZHR$
	\item[\chapterbullet] We exhibit translations between \ZWR, \ZHR, and $\ring_R$
	\item[\chapterbullet] The calculus $\ring_R$ introduced in this chapter is both a motivation and an example used throughout the rest of this thesis
\end{itemize}

The results of this chapter were born from the idea of a generic phase ring calculus for conjecture synthesis.
The ZW calculus `contains rational arithmetic' in the words of Ref.~\cite{ZWArithmetic},
meaning that one can use the W and GHZ states of ZW to perform addition and multiplication.
This is not to say that ZW is the only way of encoding a ring-like structure in a graphical calculus,
or even that there are not other ways of presenting ring-like structures inside ZW.
Just because ZW contains rational arithmetic does not mean in embodies rational arithmetic.

ZW is not the only calculus with a phase ring.
We were introduced to ZH$_\bbC$ in Definition~\ref{defZH},
which is universal and complete for \Qubit\ \cite{ZH},
but there is also the newly introduced
Algebraic ZX\footnote{The author would have suggested a different name, such as `Ring ZX'.}
\cite{AZX},
which coerces the phase group of ZX into a phase ring.

Our aim for this chapter is to introduce a phase ring calculus
that lays bare the ring structure of the phase algebra.
This will allow us to explore the uses and limitations of such a phase algebra,
especially over qubits, in later chapters.

We introduce our own calculus $\ring_R$ in \S\ref{secRingR},
which is designed with the ring operations first and foremost,
and to work for any commutative ring $R$.
We exhibit a complete ruleset for $\ring_R$ in \S\ref{secRingRRules}, found via a translation with $\ZWR$.
We then extend ZH to work for arbitrary
commutative rings with $\half{}$ in \S\ref{secZHR},
and also a translation between $\ring_R$ and
this more general version of ZH.
Although the translation with $\ZHR$ is covered in far briefer terms than the translation with \ZWR\
we use it to produce a different, complete, ruleset for $\ring_R$ in \S\ref{secRingRZHRules}
(provided the ring $R$ contains $\half{}$).

As mentioned earlier the calculus $\ring$ is used as both motivation
and example throughout the rest of this thesis,
but the only knowledge required for those future chapters
is familiarity with the definition and main results of \S\ref{secRingR},
and awareness of the rules in Figure~\ref{figRingRRules1}.

\section{The definition of RING} \label{secRingR}

The graphical calculus $\ring_R$ is created to be
`as generic a phase ring graphical calculus over states as possible'.
With this in mind our generators are precisely the ring operations of some ring $R$,
a set of states representing the elements of $R$,
and the ambient wires, swaps, cups and caps of a compact closed graphical calculus.
There are complete rulesets available via translation from both \ZWR\ (\S\ref{secRingRRules})
and also the extended ZH calculus, \ZHR,
which we introduce in \S\ref{secCompleteTranslationZHQR}.

\begin{definition}[$\ring_R$]  \label{defRingR}
	The graphical calculus $\ring_R$ for a commutative ring $R$
	is a (compact closed)
	PROP with generators and interpretation given by Figure~\ref{figGeneratorsRingR},
	with derived generators given in Figure~\ref{figDerivedGeneratorsRingR},
	and rules given in Figures~\ref{figRingRRules1} and \ref{figRingRRules2}.
\end{definition}

Let's unpack this definition a little.
The generators in Figure~\ref{figGeneratorsRingR}
form the core of the calculus.
Every diagram in $\ring$ is built from these three components.
Since we allow wire bending (i.e. this is a compact closed calculus)
these generators can be plugged together
with no regard for concepts of `input' or `output'.
The multiplication gate is the familiar (phase-free) Z spider from ZX, ZW, and ZH,
and so retains all of its spider-like qualities
(commutativity, associativity, etc.).
Where $\ring$ differs from some of the other calculi
is that the addition gate has a distinguished wire,
drawn at the top in Figure~\ref{figGeneratorsRingR}.
This means that addition in $\ring$ forms a `one-sided' spider:
wires connected to the rounded side of the gate act like wires
connected to a spider,
but the distinguished wire (attached to the flat, bold side)
does not.

\begin{figure}[p]
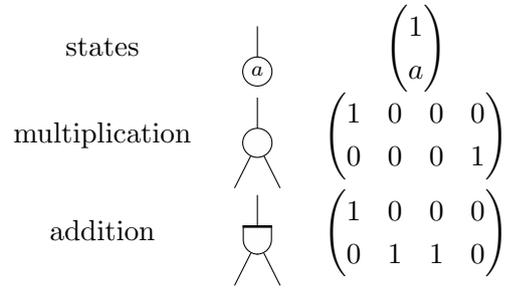

	\centering
	\begin{tabular}{ccc}
		states         & $\state{white}{a}$ & $\begin{pmatrix}
				1 \\ a
			\end{pmatrix}$ \\
		multiplication & $\binary[white]{}$ & $\begin{pmatrix}
				1 & 0 & 0 & 0 \\
				0 & 0 & 0 & 1
			\end{pmatrix}$ \\
		addition       & $\binary[poly]{}$  & $\begin{pmatrix}
				1 & 0 & 0 & 0 \\
				0 & 1 & 1 & 0
			\end{pmatrix}$
	\end{tabular}
	\caption{
		The generators of $\ring_R$
		\label{figGeneratorsRingR}}
\end{figure}

Readers familiar with monoids should note that
the multiplication gate forms a monoid with the state 1 as a unit,
and the addition gate forms a monoid with the state 0 as a unit.
Knowing this allows us to create the derived generators
of Figure~\ref{figDerivedGeneratorsRingR}.
Derived generators are built from our core generators,
and represent motifs that appear often enough in our diagrams
to warrant creating these shorthand notations.
The Hadamard Gate in ZX is a derived generator:
It is derivable from Z and X spiders,
and so is technically redundant,
but is so common and useful a gate that
it is afforded the same status as a generator.
In $\ring$ the multiplication gate
can be generalised as the familiar, phased Z spider
(as in ZW or ZX),
and the addition gate as a phased, one-side spider.
In either case the label on the derived generator is an element of
the ring $S$.
To summarise:

\begin{remark}[Topology matters a bit] \label{remTopologyMattersABit}
	The white multiplication node is indeed a spider, in the sense of Ref.~\cite{Coecke08},
	but addition has a distinguished output
	indicated by the bold, flat edge.
\end{remark}

The two other derived generators in 
Figure~\ref{figDerivedGeneratorsRingR}
are the NOT gate and Hadamard gates respectively.
These derived generators use the `flipped'
version of the addition gate,
drawn upside down.
This is simply a space-saving convention:
The distinguished wire comes from below,
and so the bold, flat edge of the gate is drawn at the bottom.
To calculate the interpretation we would use bent wires
to allow us to draw the addition gate the same way up as it is in Figure~\ref{figGeneratorsRingR}.

\begin{figure}[p]
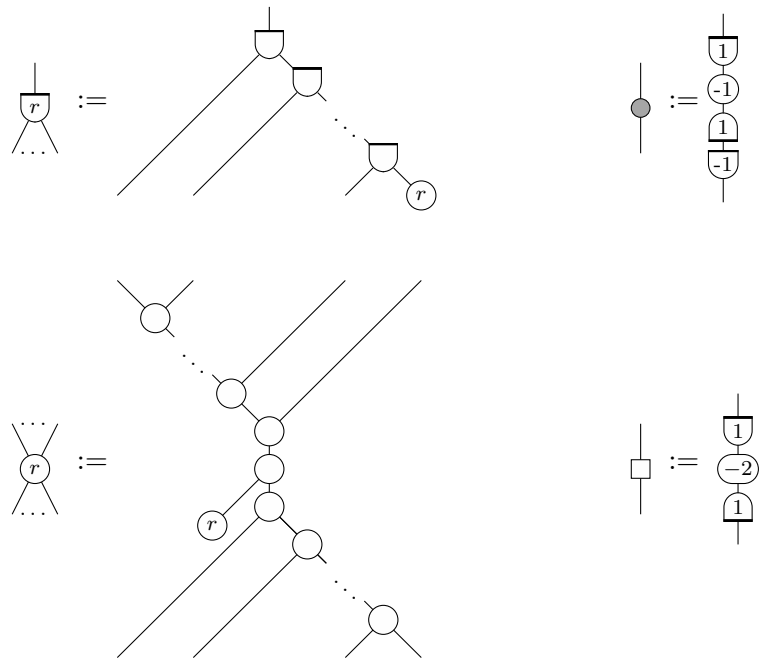

	\centering
	\begin{align*}
		\nary[poly]{r}    & := \vc{\InputIfFileExists{./figures/RingR/plus_spider.tikz}{}{Missing file!}} & \unary[smallblack]{} & := \vc{\InputIfFileExists{./figures/QR/X.tikz}{}{Missing file!}}
		\\[2em]
		\spider{white}{r} & := \vc{\InputIfFileExists{./figures/RingR/times_r.tikz}{}{Missing file!}}        & \unary[h2]{}            & := \QRHadamard \label{eqnRingXGate}
	\end{align*}
	\caption{Derived generators of $\ring_R$:
		The one-sided spider for addition, the NOT gate,
		the multiplication spider and the Hadamard gate \label{figDerivedGeneratorsRingR}.
	}
\end{figure}

Now that we have defined our language we shall
explore its important properties:
Completeness, Soundness and Universality.

\begin{theorem}[$\ring$ is complete] \label{thmRingRCompleteZW}
	The language $\ring_R$ is complete for any commutative ring $R$.
\end{theorem}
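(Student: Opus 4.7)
The plan is to piggy-back on the completeness of $\ZWR$ (which we may invoke as a prior result) by constructing a pair of strict symmetric monoidal functors
\[
\Phi : \ring_R \to \ZWR, \qquad \Psi : \ZWR \to \ring_R,
\]
each of which respects the interpretation into $\bit{R}$, and such that on the level of diagrams the composites $\Psi \comp \Phi$ and $\Phi \comp \Psi$ are provably equal to the identity using the candidate ruleset of $\ring_R$ and the complete ruleset of $\ZWR$ respectively. Given this setup the argument is standard: if $\interpret{A} = \interpret{B}$ in $\ring_R$ then $\interpret{\Phi(A)} = \interpret{\Phi(B)}$, whence $\ZWR \syntactic \Phi(A) = \Phi(B)$ by completeness of $\ZWR$; applying $\Psi$ gives $\ring_R \syntactic \Psi \Phi(A) = \Psi \Phi(B)$, and collapsing the round-trip on each side yields $\ring_R \syntactic A = B$.

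First I would define $\Phi$ on generators: send the $\ring_R$ multiplication node to the phase-free white (Z) spider of $\ZWR$, send the state $\state{white}{a}$ to the corresponding $\ZWR$ white state labelled $a$, and realise the addition node as a fixed $\ZWR$ diagram built from the W node and white nodes (this is essentially the rational-arithmetic encoding of \cite{ZWArithmetic}, restricted to the $+$ gate of $R$). Extend freely as a strict symmetric monoidal functor, preserving cups, caps and swaps. A direct matrix computation verifies $\interpret{\Phi(g)} = \interpret{g}$ on each generator, and functoriality then extends this soundness of the translation to all diagrams.

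Next I would define $\Psi$ in the opposite direction: the white spider of $\ZWR$ with label $r$ is sent to the $\ring_R$ derived spider $\spider{white}{r}$, and the W node of $\ZWR$ is translated into a $\ring_R$ diagram built from the addition node (together with the appropriate states, since W has a fixed arity the translation is just a single explicit diagram). Again one checks semantic preservation on generators. The main obstacle, and the bulk of the work, is then to verify that each rule of the complete $\ZWR$ ruleset (cut W, cut Z, Reidemeister 3, Plus, etc.) is syntactically derivable in $\ring_R$ after applying $\Psi$, and conversely that the proposed rules of Figures \ref{figRingRRules1} and \ref{figRingRRules2} are derivable in $\ZWR$ after applying $\Phi$; these verifications are routine but numerous, and are the analogue of the coherence bookkeeping familiar from the ZX/ZW translation of \cite{SimonCompleteness}.

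Finally, I would check the two round-trips. For $\Psi \comp \Phi$ on a $\ring_R$ generator one unfolds the definitions and uses the Z-spider fusion law of $\ring_R$ together with the addition-monoid rules to collapse the encoding back to the original node; for $\Phi \comp \Psi$ one uses the analogous $\ZWR$ reductions. Once both round-trips are shown equal to the identity on generators, strict monoidal functoriality extends this to all diagrams, and the completeness transfer argument above closes the proof. The deferred sections §\ref{secRingRRules} and §\ref{secCompleteTranslationZHQR} advertise exactly such translations (with $\ZHR$ as an alternative source of completeness when $\half \in R$), so the proof will ultimately be split across those sections.
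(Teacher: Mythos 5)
Your proposal takes essentially the same route as the paper: completeness is transferred from $\ZWR$ by an interpretation-preserving translation in each direction, by deriving each translated ZW rule inside $\ring_R$ (the paper's \S\ref{secTranslatedRulesZWRingR}), and by collapsing the single round trip $F_{\ring} F_{\ZW}(g) = g$ on generators (the paper's Lemma~\ref{lemRingRZWTranslatedGenerators}), extended to all diagrams by functoriality. The only deviation is that you additionally demand the other round trip ($\Phi \comp \Psi$ provably the identity in $\ZWR$) and the derivability in $\ZWR$ of the translated $\ring_R$ rules; neither is needed for this direction of the argument, so the paper omits both and instead checks soundness of the $\ring$ ruleset directly against the interpretation.
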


\begin{proof} \label{prfThmRingRCompleteZW}
	We include in Figures~\ref{figZWRingRTranslation}
	and \ref{figRingRZWTranslation}
	a translation between the generators of $\ring_R$ and $\ZW_R$
	that preserves the interpretation.
	In \S\ref{secTranslatedRulesZWRingR}
	we derive, in $\ring_R$, each equality $F_{\ring}L=F_{\ring}R$ 
	for each rule $L=R$ in \cite{AmarThesis, ZW}.
	In Lemma~\ref{lemRingRZWTranslatedGenerators} we derive the
	equation $g = F_{\ring} F_{\ZW} (g)$ for each generator $g$ of $\ring_R$.
	Therefore for each sound diagrammatic equation $D_1 = D_2$ we will be able to derive:
	\begin{align}
		\ZW_R \entails                                       & F_{\ZW} D_1 = F_{\ZW} D_2                                                        \\
		\ring_R \entails & F_{\ring} F_{\ZW} D_1 = F_{\ring} F_{\ZW} D_2                                    \\
		\ring_R \entails     & D_1 = F_{\ring} F_{\ZW} D_1 \qquad \text{and} \qquad D_2 = F_{\ring} F_{\ZW} D_2 \\
		\therefore \ring_R \entails                          & D_1 = F_{\ring} F_{\ZW} D_1 = F_{\ring} F_{\ZW} D_2 = D_2
	\end{align}
\end{proof}

\begin{theorem}[$\ring$ is sound] \label{thmRingRRulesSound}
	The rules presented for $\ring$ are sound
\end{theorem}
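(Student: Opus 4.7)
The plan is to verify soundness rule-by-rule, exploiting the fact that the interpretation $\idot{}$ is a strict symmetric monoidal functor (Definition~\ref{defGraphicalCalculus}). Thus for any diagram $D$ built from the generators via $\comp$ and $\tensor$, $\interpret{D}$ is simply the corresponding composition and tensor of the matrices given in Figure~\ref{figGeneratorsRingR}. For each rule $L = R$ it therefore suffices to compute $\interpret{L}$ and $\interpret{R}$ as $2^m \times 2^n$ matrices over $R$ and check that they agree.

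First I would dispatch the finite-arity rules by direct matrix calculation. The state, multiplication, and addition matrices from Figure~\ref{figGeneratorsRingR} are small and explicit, so the ring-theoretic content of each rule---commutativity and associativity of $+$ and $\times$, distributivity, and the zero/unit laws---becomes a routine matrix identity. The multiplication gate's matrix is the standard `copy-on-diagonal' tensor that appears as the phase-free Z spider in ZX, ZW, and ZH, so its spider-like behaviour is inherited verbatim from the soundness arguments in those calculi; the only genuinely new check involves the distinguished output of the addition gate, for which one verifies directly that the single nonzero off-diagonal row implements `output slot equals the sum of the input slots'.

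For the variable-arity rules, that is, those containing !-boxes, I would proceed by induction on the number of instantiated legs. The base cases ($0$ and $1$ copies) reduce to the finite-arity rules already handled, while the inductive step uses associativity of the relevant binary gate together with the monoidal-functor property of $\idot{}$ to peel off one leg at a time. The compact-closure equations (snake, yanking, naturality of cups and caps past swaps) come for free from the ambient PROP structure fixed in Definition~\ref{defWires}, so nothing beyond Definition~\ref{defWires} needs to be re-checked there.

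The hard part, I expect, is not any single computation but careful bookkeeping for the asymmetric addition gate: because its distinguished wire breaks the full spider symmetry (Remark~\ref{remTopologyMattersABit}), one must track which leg plays the role of `output' when fusing two additions, when sliding a cup or cap past an addition node, or when deriving the phased one-sided spider of Figure~\ref{figDerivedGeneratorsRingR}. Once the orientation conventions are pinned down---most cleanly by always reducing a flipped addition gate to the canonical orientation using cups and caps before computing---every remaining equation collapses to the corresponding axiom of the commutative ring $R$, and soundness follows.
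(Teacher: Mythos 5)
Your proposal is correct and matches the paper's approach: the paper's proof simply states that soundness "can be checked directly," i.e.\ rule-by-rule verification of the matrix interpretations, which is exactly what you lay out in more detail (the paper additionally remarks that the interpretation-preserving translation with $\ZW_R$ can simplify some of these checks). Your extra care with the variable-arity spiders and the asymmetric addition gate is a reasonable elaboration of the same direct-computation argument, not a different route.
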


\begin{proof} \label{prfThmRingRRulesSound}
	This is can be checked directly,
	made simpler when combined with the knowledge that the translation
	preserves the interpretation.
\end{proof}

\begin{theorem}[$\ring$ is universal] \label{thmRingRUniversal}
	$\ring$ is universal over $\bit{R}$.
\end{theorem}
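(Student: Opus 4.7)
The plan is to deduce universality of $\ring_R$ immediately from the semantics-preserving translation $F_{\ring}:\ZWR \to \ring_R$ exhibited in Figure~\ref{figRingRZWTranslation} and already deployed in the proof of Theorem~\ref{thmRingRCompleteZW}. That translation satisfies $\interpret{F_{\ring}(D)} = \interpret{D}$ for every $\ZWR$ diagram $D$. Since $\ZWR$ is known to be universal over $\bit{R}$ \cite{AmarThesis}, for every matrix $M$ with entries in $R$ and appropriate dimensions there is a $\ZWR$ diagram $D_M$ with $\interpret{D_M} = M$; applying $F_{\ring}$ then produces a $\ring_R$ diagram $F_{\ring}(D_M)$ whose interpretation is still $M$. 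Universality of $\ring_R$ over $\bit{R}$ follows at once.

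The only substantive verification required is that $F_{\ring}$ preserves the interpretation on each generator of $\ZWR$ -- namely the W spider, the phase-decorated Z spider, the crossing, and the compact closed scaffolding (wires, swaps, cups, caps). This amounts to a direct matrix check and is in any case already needed as a prerequisite for the completeness argument of Theorem~\ref{thmRingRCompleteZW}. There is thus no new conceptual obstacle to overcome once that check is in place.

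For completeness of exposition one might alternatively prefer a self-contained, diagram-theoretic construction inside $\ring_R$. The strategy there would be: first, invoke compact closure to reduce the problem to realising an arbitrary state in $(R\oplus R)^{\otimes k}$ with no inputs; second, decompose such a state as a finite sum $\sum_{b\in\{0,1\}^k} c_b\,\ket{b}$ over bit-strings; third, build each scalar-multiple basis state $c_b\,\ket{b}$ out of the state generator labelled $c_b$, the multiplication spider, and the NOT gate of Figure~\ref{figDerivedGeneratorsRingR}; and finally combine the $2^k$ summands via the addition spider, inserting each $c_b\,\ket{b}$ on a non-distinguished input leg and reading the full state off the distinguished output wire. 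The main obstacle along this direct route is the bookkeeping needed to isolate the pure $c_b\,\ket{b}$ summand from $\state{white}{c_b}$ (which has a spurious $\ket{0}$-component) and to manage the addition spider's interaction with the multiplication spider; the translation argument sidesteps all of this by delegating to the already-established universality of $\ZWR$, and is the route I would take in practice.
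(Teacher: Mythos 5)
Your proposal is correct and matches the paper's own argument: the paper also deduces universality of $\ring_R$ directly from the interpretation-preserving translation out of $\ZW_R$ together with the known universality of $\ZW_R$ over $\bit{R}$. The alternative direct construction you sketch is not needed, and your primary route is exactly the one taken in the text.
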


\begin{proof} \label{prfThmRingRUniversal}
	The same interpretation-preserving translation from ZW used in Theorem~\ref{thmRingRRulesSound}
	shows that any matrix expressible in ZW$_R$ is expressible in $\ring_R$,
	and since ZW$_R$ is universal for $\bit{R}$ so is $\ring_R$.
\end{proof}

\begin{remark}[Comparison to Quantum Arithemtic] \label{remRingRAndAddition}
Quantum Arithmetic refers to performing
elementary arithmetic operations to numbers
stored as registers of qubits.
For example one can store the number six,
110 in binary, using three qubits represented in the computational basis as:
\begin{align}
\ket{1} \tensor \ket{1} \tensor \ket{0}
\end{align}
A demanding part of Shor's algorithm \cite{VBE, Shor}
requires being able to perform the following elementary arithmetic operation:
\begin{align}
U_{a,N} : \ket{x} \tensor \ket {0} &\mapsto \ket {x} \tensor \ket { a^x \mod N}
\end{align}
While $\ring$ is constructed from the two ring operations
and compact closure,
its native way of handling numbers is fundamentally different
to that of quantum arithmetic.
For example the state bearing the label 6 
represents a single qubit, expressed
as follows in the computational basis:
\begin{align}
\interpret{\state{white}{6}} &= \ket{0} + 6 \ket{1}
\end{align}
The state representing the bit-string 110 is more complicated:
\begin{align}
\interpret{\vc{\InputIfFileExists{./figures/RingR/ket1.tikz}{}{Missing file!}}\quad \vc{\InputIfFileExists{./figures/RingR/ket1.tikz}{}{Missing file!}}\quad \state{white}{0}} &= \ket{1} \tensor \ket{1} \tensor \ket{0}
\end{align}
Since $\ring$ is universal it is still capable
of representing (and reasoning about) quantum arithmetic operations,
but the use of complex numbers as labels in $\ring$
should not be confused with the presentation
of numbers as bitstrings in bra-ket notation.
\end{remark}

\begin{remark}[The power of Compact Closure] \label{remRingRAndCC}
Compact Closure was a fundamental property highlighted
in Ref.~\cite{CQM}, the seminal work on Categorical Quantum Mechanics.
That work also highlighted the importance of
algebraic considerations in the design of process systems:
\begin{displayquote}[Categorical Quantum Mechanics, \cite{CQM}]
In particular, we have in mind the hard-learned lessons from Computer Science of
the importance of compositionality, types, abstraction, and the use of tools from
algebra and logic in the design and analysis of complex informatic processes.
\end{displayquote}
This new calculus demonstrates that it is possible
to construct a sound, universal and complete diagrammatic calculus
for quantum computing using just a single algebraic structure and compact closure.
Indeed $\ring$ could be seen as further evidence
of the power of compact closed structures,
since without cups and caps the calculus $\ring$
is less expressive than open term-trees of rings.
This idea is explored further in \S\ref{secPhaseRingGeneric}.
\end{remark}

\section{The rules of RING}
\label{secRingRRules}

We break the rules of $\ring$ into two parts.
The first part are the rules relating to
addition and multiplication;
in particular multiplication forms a spider
(the familiar Z spider).
The addition gate does not form a spider,
although since it is associative and commutative on inputs
we use the rounded side of the gate to indicate the inputs,
and use the straight, bold edge to indicate the single output.
These are found in Figure~\ref{figRingRRules1}.
We label these rules as simply $+$ or $\times$,
since these are the effects the gates have on states.

\begin{figure}[H]
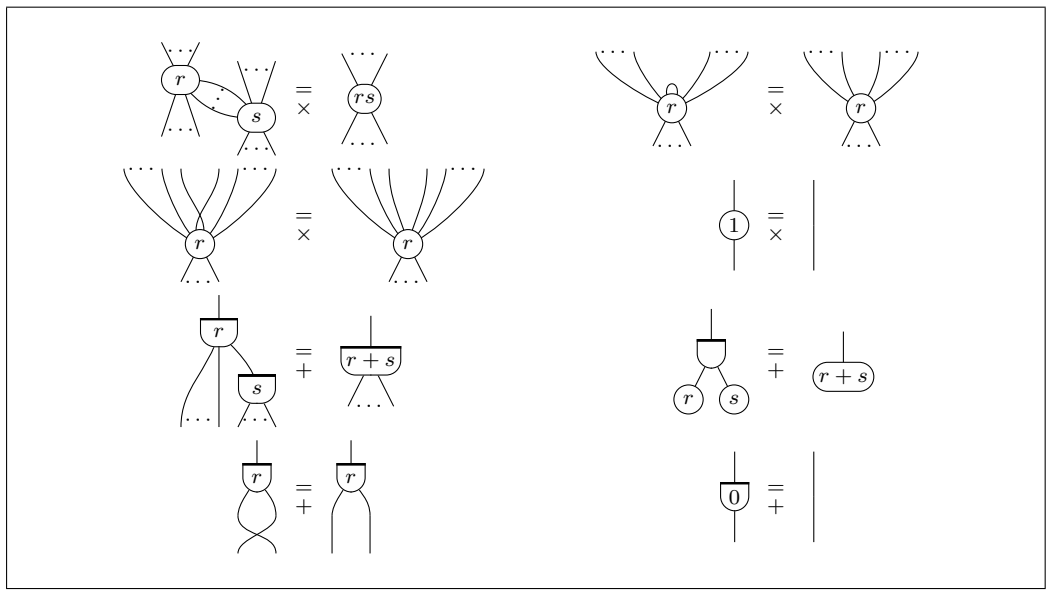

	\centering\begin{tabular}{|c|}
		\hline
		\begin{minipage}{0.9\textwidth}{
				\begin{align*}
					\vc{\InputIfFileExists{./figures/QR/ZS.l.tikz}{}{Missing file!}}       & \by{\times} \spider{white}{rs}      & \vc{\InputIfFileExists{./figures/RingR/tr_z.l.tikz}{}{Missing file!}} & \by{\times} \vc{\InputIfFileExists{./figures/RingR/tr_z.r.tikz}{}{Missing file!}} \\
					\vc{\InputIfFileExists{./figures/RingR/sym_z.l.tikz}{}{Missing file!}} & \by{\times} \vc{\InputIfFileExists{./figures/RingR/sym_z.r.tikz}{}{Missing file!}} & \unary[white]{1}       & \by{\times} \unary[none]{}         \\
					\vc{\InputIfFileExists{./figures/RingR/PS.l.tikz}{}{Missing file!}}    & \by{+} \nary[poly]{r+s}             & \bap{poly}{}{r}{s}     & \by{+} \state{white}{r+s}          \\
					\vc{\InputIfFileExists{./figures/RingR/sym_p.l.tikz}{}{Missing file!}} & \by{+} \vc{\InputIfFileExists{./figures/RingR/sym_p.r.tikz}{}{Missing file!}}      & \unary[poly]{0}        & \by{+} \unary[none]{}
				\end{align*}}
		\end{minipage} \\
		\hline
	\end{tabular}
	\caption{
		The rules of $\ring$\label{figRingRRules1} concerning addition and multiplication.
	}
\end{figure}

In Figure~\ref{figRingRRules2}
we show the remaining rules of $\ring$.
These can be considered as rules containing the NOT gate
(A) and (N),
the distributive law (D),
scalar identity (I),
the Hopf law (Hopf),
a law for forming a loop with addition and multiplication (L),
two rules for the interactions between addition and Controlled Z (CZA1) and (CZA2),
and three bialgebra laws (B1), (B2) and (CP).

\begin{figure}[H]
	\centering\begin{tabular}{|c|}
		\hline
		\begin{minipage}{0.9\textwidth}{
				\begin{align*}
					\vc{\InputIfFileExists{./figures/RingR/A.l.tikz}{}{Missing file!}}     & \by{A} \vc{\InputIfFileExists{./figures/RingR/A.r.tikz}{}{Missing file!}}      & \vc{\InputIfFileExists{./figures/RingR/NOT.l.tikz}{}{Missing file!}}     & \by{N} \vc{\InputIfFileExists{./figures/RingR/NOT.r.tikz}{}{Missing file!}}      \\
					\vc{\InputIfFileExists{./figures/RingR/D.l.tikz}{}{Missing file!}}     & \by{D} \vc{\InputIfFileExists{./figures/RingR/D.r.tikz}{}{Missing file!}}      & \scalar{white}{0}         & \by{I} \vc{\InputIfFileExists{./figures/wire/empty.tikz}{}{Missing file!}}       \\
					\vc{\InputIfFileExists{./figures/RingR/Hopf.l.tikz}{}{Missing file!}}  & \by{Hopf}\vc{\InputIfFileExists{./figures/RingR/Hopf.r.tikz}{}{Missing file!}} & \vc{\InputIfFileExists{./figures/RingR/L.l.tikz}{}{Missing file!}}       & \by{L} \vc{\InputIfFileExists{./figures/RingR/L.r.tikz}{}{Missing file!}}        \\
					\vc{\InputIfFileExists{./figures/RingR/HP1.l.tikz}{}{Missing file!}}   & \by{CZA1} \vc{\InputIfFileExists{./figures/RingR/HP1.r.tikz}{}{Missing file!}} & \vc{\InputIfFileExists{./figures/RingR/HP2.l.tikz}{}{Missing file!}}     & \by{CZA2} \vc{\InputIfFileExists{./figures/RingR/HP2.r.tikz}{}{Missing file!}}   \\
					\vc{\InputIfFileExists{./figures/RingR/ZW_5a_l.tikz}{}{Missing file!}} & \by{B1} \vc{\InputIfFileExists{./figures/RingR/ZW_5a_r.tikz}{}{Missing file!}} & \vc{\InputIfFileExists{./figures/RingR/ZW_bazw_l.tikz}{}{Missing file!}} & \by{B2} \vc{\InputIfFileExists{./figures/RingR/ZW_bazw_r.tikz}{}{Missing file!}} \\
					\vc{\InputIfFileExists{./figures/RingR/CP.l.tikz}{}{Missing file!}} & \by{CP} \vc{\InputIfFileExists{./figures/RingR/CP.r.tikz}{}{Missing file!}}
				\end{align*}}
		\end{minipage} \\
		\hline
	\end{tabular}
	\caption{
		The remaining rules of $\ring$.\label{figRingRRules2}
	}
\end{figure}

The first thing we must do with our rules is justify the notational symmetry
used in defining the NOT derived generator.

\begin{lemma} \label{lemRingRNotSymmetrical}
	The derived generator NOT (Figure~\ref{figDerivedGeneratorsRingR}) is justified in its graphical symmetry.
\end{lemma}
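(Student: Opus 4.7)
The lemma states that, although the NOT gate is constructed in Figure~\ref{figDerivedGeneratorsRingR} from the one-sided addition gate (whose flat, bold edge distinguishes its output from its two rounded inputs), the resulting morphism is top-to-bottom symmetric and hence can legitimately be drawn as the symbol $\unary[smallblack]{}$ without committing to any orientation. The plan is to derive, purely from the rules in Figures~\ref{figRingRRules1} and \ref{figRingRRules2}, a diagrammatic equation between the NOT gate in its standard orientation and its vertical flip.

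First I would unfold $\unary[smallblack]{}$ using Figure~\ref{figDerivedGeneratorsRingR}, writing it explicitly as an addition gate with prescribed plug-in states attached to certain of its ports; this makes the orientation of the addition gate explicit. Next I would construct the vertical flip of this picture by bending both external wires around using cups and caps, invoking the snake equations from the compact closed structure of $\ring$. The resulting diagram contains an addition gate whose distinguished wire now points in the opposite direction, together with the same plug-in states, now connected via the bent wires.

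The crucial step, which I expect to be the main obstacle, is rewriting this flipped configuration back into the canonical form of NOT given in Figure~\ref{figDerivedGeneratorsRingR}. The symmetry-of-inputs rule for addition in Figure~\ref{figRingRRules1} interchanges the two rounded inputs but does not touch the distinguished side, so on its own it cannot do the job. The rules (A) and (N) of Figure~\ref{figRingRRules2} are precisely the rules governing how the NOT gate interacts with the rest of the diagram, and they are the only rules in the ruleset capable of effectively swapping the role of the distinguished wire with a rounded wire. The argument therefore hinges on identifying where to insert them in the flipped picture.

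Along the way I anticipate scalar factors being produced when plug-in states interact with their bent-around counterparts; these can be discharged using the scalar identity (I) and, if necessary, the Hopf rule from Figure~\ref{figRingRRules2}. Once the flipped picture has been massaged back into the canonical definition of NOT, the equality is established, justifying the symmetric graphical notation for $\unary[smallblack]{}$.
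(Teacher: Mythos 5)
You stop short of a proof at exactly the point where the work lies: the step you yourself call ``the main obstacle'' --- rewriting the flipped configuration back into the canonical form of NOT --- is never carried out, only flagged (``the argument therefore hinges on identifying where to insert them''). So no actual rewrite sequence is exhibited, and the lemma is not established. The paper's proof is a short, concrete chain: apply rule (N) once to replace the NOT gate by its expression as an addition gate with a bent input wire, use the ($+$) rules of Figure~\ref{figRingRRules1} to rearrange the inputs of the addition gate, and then apply (N) a second time, in the reverse direction, to reassemble the NOT gate in the flipped orientation. The idea you miss is that rule (N) itself already encodes the relation between the NOT gate and the one-sided addition gate with a wire bent around by a cup, so it can be used both to open and to close the derivation; once this is seen, the symmetry of the black node follows in three rule applications.

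Your anticipated bookkeeping is also misdirected: in the actual derivation no scalar factors arise, and neither the Hopf rule, the scalar identity (I), nor rule (A) is needed --- only (N) and ($+$). Transposing the definition with cups, caps and the snake equations is not wrong in spirit (compact closure is indeed what makes the question of graphical symmetry meaningful), but as written your submission is a plan whose central step remains undone, rather than a proof.
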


\begin{proof} \label{prfLemRingRNotSymmetrical}
	\begin{align}
		\vc{\begin{tikzpicture}
	\begin{pgfonlayer}{nodelayer}
		\node [style=smallblack] (1) at (0, 0) {};
		\node [style=none] (3) at (-0.25, 0.5) {};
		\node [style=none] (4) at (0.25, 0.5) {};
	\end{pgfonlayer}
	\begin{pgfonlayer}{edgelayer}
		\draw (3.center) to (1.center);
		\draw (1.center) to (4.center);
	\end{pgfonlayer}
\end{tikzpicture}
}  \by{N} \vc{\InputIfFileExists{./figures/RingR/NOT.r.tikz}{}{Missing file!}} \by{+} \vc{\InputIfFileExists{./figures/RingR/notsym_1.tikz}{}{Missing file!}} \by{N} \vc{\InputIfFileExists{./figures/RingR/notsym_2.tikz}{}{Missing file!}}
	\end{align}
\end{proof}

\section{Completeness via ZW}

\label{secTranslatedRulesZWRingR}

We construct a complete ruleset for $\ring_R$ by exhibiting a translation
to and from ZW$_R$.
This translation acts as the identity on bare wires, swaps, cups and caps.
This means that the diagram components in \eqref{eqnWires}
are considered to be the same in both calculi:

\begin{align} \label{eqnWires}
	  & \set{\dcup, \dcap, \vc{\InputIfFileExists{./figures/wire/empty.tikz}{}{Missing file!}}, \vc{\InputIfFileExists{./figures/wire/swap.tikz}{}{Missing file!}}, \;\unary[none]{}\;}
\end{align} \label{secZWRingRTranslation}

The translation is given in Figures~\ref{figZWRingRTranslation} and \ref{figRingRZWTranslation}.
We will now apply the translation to each of the rules of ZW
then derive these translated rules using the rules given in \S\ref{secRingRRules}.
In order to ease this process we shall first
show that three commonly used components of ZW
are translated into simple $\ring$ counterparts,
namely the NOT gate, plus gate and negation.
The visual similarities between the two languages
can make it hard to tell at a glance which language is being considered.
In this section the notation $\mapsto$ indicates
the translation of a ZW diagram or equation (on the left)
to a $\ring$ diagram or equation (on the right).

\begin{figure}
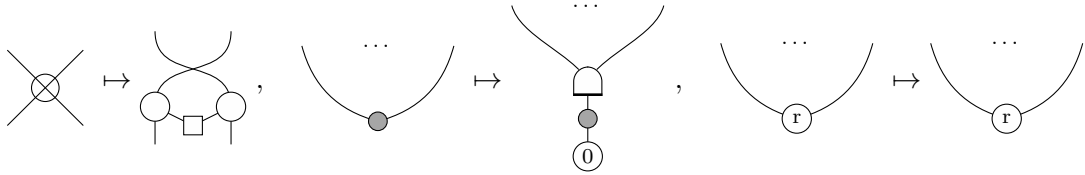

	\begin{align*}
		\vc{\InputIfFileExists{./figures/ZW/crossing.tikz}{}{Missing file!}} & \mapsto \vc{\InputIfFileExists{./figures/QR/ZW_crossing.tikz}{}{Missing file!}},\  &
		\vc{\InputIfFileExists{./figures/ZW/wn.tikz}{}{Missing file!}}       & \mapsto \vc{\InputIfFileExists{./figures/QR/ZW_wn.tikz}{}{Missing file!}},\        &
		\vc{\InputIfFileExists{./figures/ZW/zn.tikz}{}{Missing file!}}       & \mapsto \vc{\InputIfFileExists{./figures/QR/ZW_zn.tikz}{}{Missing file!}}
	\end{align*}
	\caption{\label{figZWRingRTranslation}
		Translation from ZW$_R$ to $\ring_R$.
	}
\end{figure}

\begin{figure}
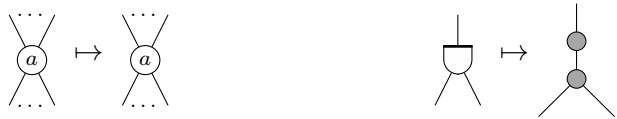

	\begin{align*}
		\spider{white}{a} & \mapsto \spider{white}{a} &
		\binary[poly]{}   & \mapsto\vc{\InputIfFileExists{./figures/ZW/plus.tikz}{}{Missing file!}}
	\end{align*}
	\caption{\label{figRingRZWTranslation}
		Translation from $\ring_R$ to \ZWR
	}
\end{figure}

\begin{lemma} \label{lemRingREntails2wIs2w}
	The ruleset $\ring$ entails that the translation of the binary $w$ spider is the NOT gate
	with an input wire bent upwards:
	\begin{align}
		\vc{} \mapsto \vc{\InputIfFileExists{./figures/RingR/NOT.r.tikz}{}{Missing file!}} = \vc{}
	\end{align}
\end{lemma}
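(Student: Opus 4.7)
The plan is to compute the translation of the binary $w$ spider, using the $n$-ary translation rule of Figure~\ref{figZWRingRTranslation} specialised to $n=2$, and then reduce the resulting $\ring_R$ diagram to the NOT gate with a bent wire using only the rules of Figures~\ref{figRingRRules1} and \ref{figRingRRules2}.

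First I would substitute $n=2$ into the translation $\tikzfig{ZW/wn} \mapsto \tikzfig{QR/ZW_wn}$ to obtain the explicit $\ring_R$ diagram. This diagram should be built from the generators of Figure~\ref{figGeneratorsRingR} together with (at most) scalar states, and since the NOT derived generator of Figure~\ref{figDerivedGeneratorsRingR} is itself defined using a flipped addition gate, one expects the translated diagram to already display the essential shape of NOT up to wire reorientation.

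Second, I would normalise the resulting diagram. Any repeated additions or multiplications collapse via the $(+)$ and $(\times)$ spider-style rules of Figure~\ref{figRingRRules1}; trivial $0$- and $1$-labelled states are absorbed by the corresponding unit rules; and any mismatches in wire orientation are handled using the snake equations for the ambient cups and caps from Definition~\ref{defWires}. The aim is to rewrite the translated diagram into exactly the form $\tikzfig{RingR/NOT.l}$, i.e.\ the NOT gate with one wire bent upward by a cup.

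The main obstacle will be the mismatch in symmetry: the binary $w$ node of ZW is symmetric in its two wires, whereas the addition gate used to build NOT in $\ring_R$ has a distinguished output (Remark~\ref{remTopologyMattersABit}). The key tool for bridging this asymmetry is Lemma~\ref{lemRingRNotSymmetrical}, which shows that the NOT derived generator is in fact symmetric under the bending of its wires. Once that symmetry is invoked, the remaining reduction amounts to sliding the cup/cap through the addition gate and applying the $(+)$ and $(\times)$ unit rules, so that the translated diagram matches the right-hand side on the nose. Semantic soundness (both sides evaluate to $\ket{01}+\ket{10}$ once the input is closed off) serves as a useful sanity check but plays no role in the syntactic derivation.
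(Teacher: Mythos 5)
There is a genuine gap here, and it sits exactly where your plan is vaguest. In the paper the lemma is a one-step derivation: the arity-$2$ instance of the translation in Figure~\ref{figZWRingRTranslation} is, on the nose, the right-hand side of the rule $(N)$ of Figure~\ref{figRingRRules2}, so the claimed equality is literally a single application of that axiom. Your proposal never invokes $(N)$ at all. Instead you propose to close the gap by ``sliding the cup/cap through the addition gate'', absorbing units, using the snake equations, and appealing to Lemma~\ref{lemRingRNotSymmetrical}. But the snake equations only govern bare wires; the addition gate is not a spider and has a distinguished output (Remark~\ref{remTopologyMattersABit}), so there is no meta-rule or Figure~\ref{figRingRRules1} rule that lets a cup or cap pass through it. The way the addition/NOT composite interacts with bent wires is precisely what the axioms $(N)$ (and $(A)$) encode, which is presumably why they are axioms rather than consequences of the $+$/$\times$ fusion and unit rules.

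Lemma~\ref{lemRingRNotSymmetrical} does not rescue this: it asserts the graphical symmetry of the NOT composite (and is itself proved \emph{from} $(N)$), not the equation between the translated binary $w$ spider and the bent-wire NOT that this lemma requires. So the ``key tool'' you name does not deliver the needed rewrite, and the step where the actual work happens is left unjustified. The fix is simple: observe that specialising the $w$-spider translation to two legs gives exactly the right-hand side of $(N)$, and cite $(N)$ once; no normalisation, snake moves, or symmetry argument is needed.
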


\begin{proof} \label{prfLemRingREntails2wIs2w}
	This is just the rule $N$
\end{proof}

From here on we will use Lemma~\ref{lemRingREntails2wIs2w}
to translate the binary $w$ spider.
This NOT gate is involutive:

\begin{lemma} \label{lemRingRDoubleNOT}
	The ruleset $\ring$ entails the translated rule $inv$ from ZW:
	\begin{align}
		\vc{\begin{tikzpicture}
	\begin{pgfonlayer}{nodelayer}
		\node [style=none] (1) at (0, -2.5) {};
		\node [style=none] (4) at (0, -4) {};
		\node [style=smallblack] (5) at (0, -3) {};
		\node [style=smallblack] (6) at (0, -3.5) {};
	\end{pgfonlayer}
	\begin{pgfonlayer}{edgelayer}
		\draw (1.center) to (4.center);
	\end{pgfonlayer}
\end{tikzpicture}
} =\ \vc{\begin{tikzpicture}
	\begin{pgfonlayer}{nodelayer}
		\node [style=none] (1) at (0, -3) {};
		\node [style=none] (4) at (0, -4) {};
	\end{pgfonlayer}
	\begin{pgfonlayer}{edgelayer}
		\draw (1.center) to (4.center);
	\end{pgfonlayer}
\end{tikzpicture}
}
		\mapsto
		\vc{} =\ \vc{}
	\end{align}
\end{lemma}

\begin{proof} \label{prfLemRingRDoubleNOT}
	\begin{align}
		\vc{\InputIfFileExists{./figures/RingR/nn_1.tikz}{}{Missing file!}} \by{+} \vc{\InputIfFileExists{./figures/RingR/nn_2.tikz}{}{Missing file!}} \by{D} \vc{\InputIfFileExists{./figures/RingR/nn_3.tikz}{}{Missing file!}} \by{+,\times} \vc{\InputIfFileExists{./figures/RingR/nn_4.tikz}{}{Missing file!}} \by{+} \vc{}
	\end{align}
\end{proof}

Having shown that the NOT gate is involutive
we now use this lemma to show that the commonly appearing combination
of a binary $w$ spider following a larger $w$ spider is in fact the plus gate:

\begin{lemma} \label{lemRingRTranslatedPlus}
	The translation of the commonly appearing ZW diagram below is equal to the plus gate in $\ring$:
	\begin{align}
		\vc{\InputIfFileExists{./figures/ZW/plusdots.tikz}{}{Missing file!}} \mapsto \vc{\InputIfFileExists{./figures/RingR/ZW_plusdots.tikz}{}{Missing file!}} = \nary[poly]{}
	\end{align}
\end{lemma}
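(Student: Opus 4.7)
The plan is to unpack the translation componentwise and then cancel pairs of NOT gates using the $\ring$ identities already established. The ZW diagram on the left is an $n$-ary W-spider whose distinguished output carries a binary W-spider; applying $F_\ring$ treats these two pieces independently.

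First, I would translate the $n$-ary W-spider using Figure~\ref{figZWRingRTranslation}. The target diagram $\tikzfig{QR/ZW_wn}$ is, by construction of the translation, the addition gate $\nary[poly]{}$ with a NOT gate placed on its distinguished output wire. (This choice is forced on us by matching interpretations, since the addition gate alone does not have the same matrix as the W-spider.) Next, I would translate the binary W-spider sitting on that output. By Lemma~\ref{lemRingREntails2wIs2w}, its translation is a single NOT gate on that same wire.

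Composing the two pieces, the resulting $\ring$ diagram is an addition gate whose output carries two NOT gates stacked back-to-back. By Lemma~\ref{lemRingRDoubleNOT}, two successive NOT gates reduce to a bare wire. Cancelling them leaves exactly the addition gate $\nary[poly]{}$, which is the plus gate in $\ring$, completing the desired equality.

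The main obstacle is not a calculational one but a bookkeeping one: one has to be careful that the NOT attached by the translation of the outer binary W sits on precisely the same wire as the NOT built into the translation of the inner $n$-ary W, so that the involutivity of Lemma~\ref{lemRingRDoubleNOT} applies directly rather than requiring ancillary moves. Once the two NOTs are recognised as adjacent on a single wire, the rest is immediate from the rules in Figure~\ref{figRingRRules1} and \ref{figRingRRules2}.
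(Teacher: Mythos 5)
Your proposal matches the paper's own proof: the paper rewrites the translated diagram with the rules $(+)$ and $(N)$ precisely to exhibit two adjacent NOT gates on the output of the addition spider, and then cancels them via Lemma~\ref{lemRingRDoubleNOT}, exactly the double-NOT cancellation you describe. The ``bookkeeping'' you flag is handled by those two rule applications, so your argument is correct and essentially identical in route.
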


\begin{proof} \label{prfLemRingRTranslatedPlus}
	\begin{align}
		\vc{\InputIfFileExists{./figures/RingR/ZW_plusdots.tikz}{}{Missing file!}} \by{+} \vc{\InputIfFileExists{./figures/RingR/plus_1.tikz}{}{Missing file!}}  \by{N} \vc{\InputIfFileExists{./figures/RingR/plus_2.tikz}{}{Missing file!}} \by{\ref{lemRingRDoubleNOT}} \nary[poly]{}
	\end{align}
\end{proof}

From here on we will use Lemma~\ref{lemRingRTranslatedPlus}
to translate that composition of the $1\to 1$ and $1 \to 2$ $w$ spiders.
To show that the traced crossing is equal to negation
we will first need the following lemmas:

\begin{lemma} \label{lemRingREntailsPNOT}
	The following equations are derivable in $\ring$:
	\begin{align}
		\vc{\InputIfFileExists{./figures/RingR/pnot.l.tikz}{}{Missing file!}} = \vc{\InputIfFileExists{./figures/RingR/pnot.r.tikz}{}{Missing file!}} \qquad
		\vc{\begin{tikzpicture}
	\begin{pgfonlayer}{nodelayer}
		\node [style=none] (13) at (-1.5, 0) {};
		\node [style=none] (18) at (-1.5, -1) {};
		\node [style=polynomial] (19) at (-1.5, -0.5) {$a$};
	\end{pgfonlayer}
	\begin{pgfonlayer}{edgelayer}
		\draw (19.center) to (18.center);
		\draw (13.center) to (19.center);
	\end{pgfonlayer}
\end{tikzpicture}
} = \vc{\InputIfFileExists{./figures/RingR/pbig.r.tikz}{}{Missing file!}}
	\end{align}
\end{lemma}
\begin{proof} \label{prfLemRingREntailsPNOT}
	\begin{align}
		  & \vc{\InputIfFileExists{./figures/RingR/pnot.l.tikz}{}{Missing file!}} \by{N}
		\vc{\InputIfFileExists{./figures/RingR/pnot_1.tikz}{}{Missing file!}} \by{+}
		\vc{\InputIfFileExists{./figures/RingR/pnot_2.tikz}{}{Missing file!}} \by{N}
		\vc{\InputIfFileExists{./figures/RingR/pnot.r.tikz}{}{Missing file!}} \\
		  & \vc{} \by{\ref{lemRingRDoubleNOT}}
		\vc{\InputIfFileExists{./figures/RingR/pbig_1.tikz}{}{Missing file!}}\by{N}
		\vc{\InputIfFileExists{./figures/RingR/pbig_2.tikz}{}{Missing file!}}\by{+}
		\vc{\InputIfFileExists{./figures/RingR/pbig.r.tikz}{}{Missing file!}}
	\end{align}
\end{proof}

\begin{lemma} \label{lemRingREntailsB3}
	The following equation is derivable in $\ring$:
	\begin{align}
		\vc{\InputIfFileExists{./figures/RingR/B3.l.tikz}{}{Missing file!}} = \node{poly}{\Pi\ a_1 \dots a_n}
	\end{align}
\end{lemma}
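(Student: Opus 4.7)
The plan is to prove this by induction on $n$, using only the rules of $\ring$ collected in Figures~\ref{figRingRRules1} and~\ref{figRingRRules2} together with the earlier results of this section, in particular Lemma~\ref{lemRingREntailsPNOT}.

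First I would settle the base case ($n=1$, or $n=0$ if the product is read as the scalar $1$). This should follow directly from the definition of the derived generator $\nary[poly]{r}$ in Figure~\ref{figDerivedGeneratorsRingR}, together with the state-fusion instance of the $\times$ rule (the equation $\tikzfig{QR/ZS.l} \by{\times} \spider{white}{rs}$), so that a single state $a_1$ feeding into the configuration on the left simply reproduces $\node{poly}{a_1}$.

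For the inductive step the cleanest route is to avoid peeling off one state at a time and instead fuse all of $a_1, \dots, a_n$ through the white multiplication spider in one go, using the $\times$ rule repeatedly to replace the cluster of states on the spider by a single state with phase $a_1 \cdots a_n$. This is legitimate because the white multiplication spider is the familiar phase-free Z spider and its state-absorption rule is symmetric in all of its non-distinguished legs. After this fusion the diagram will have reduced to exactly the configuration to which Lemma~\ref{lemRingREntailsPNOT} applies, namely a single product-state sitting above a NOT-plus-NOT or plus-multiplication sandwich; invoking the second equation of that lemma converts it into the desired polynomial node $\node{poly}{\Pi a_1 \cdots a_n}$.

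The main obstacle I anticipate is ensuring that the combinator sitting above the states in $\tikzfig{RingR/B3.l}$ really is a configuration Lemma~\ref{lemRingREntailsPNOT} can close up, rather than some variant that requires first commuting NOT gates past plus-nodes or applying the distributive rule (D) to rearrange multiplications and additions. If a mismatch appears, the fallback is to insert a pair of NOT gates (justified by Lemma~\ref{lemRingRDoubleNOT}) to reshape the diagram into the exact form of Lemma~\ref{lemRingREntailsPNOT}'s second equation, after which the induction closes. If the direct one-shot fusion turns out to be too aggressive, the inductive-peel-off version still works, using the $+$ and $\times$ rules together with (D) to combine the inductive $\node{poly}{\Pi_{i<n} a_i}$ with the remaining state $a_n$.
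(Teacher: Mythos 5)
You have not engaged with the actual shape of the left-hand diagram, and your main route rests on an assumption about it that cannot be right. You propose to fuse the states $a_1,\dots,a_n$ through the white multiplication spider "in one go" via the $\times$ rule and then close the diagram with a single application of Lemma~\ref{lemRingREntailsPNOT}. If the states really sat directly on a multiplication spider feeding an addition gate, the lemma would be a one-line consequence of spider fusion and the definition of the derived addition spider, and no further machinery would be needed. In fact the states are interleaved with NOT gates and the addition structure, and the essential content of the lemma is commuting the white multiplication spider past that plus/NOT layer. The paper's derivation does exactly this: it first applies Lemma~\ref{lemRingREntailsPNOT} together with rule (A) to rearrange the NOTs, then uses the bialgebra rule (B2) of Figure~\ref{figRingRRules2} to push the multiplication through the addition structure, applies Lemma~\ref{lemRingREntailsPNOT} a second time, and only then collapses everything with $(+)$, (A), Lemma~\ref{lemRingRDoubleNOT}, $(\times)$ and $(+)$ to reach the unary addition node labelled $\Pi\,a_1\dots a_n$. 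Your proposal never invokes any bialgebra-type interaction, which is precisely the missing idea.

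The fallbacks you offer do not close this gap. Inserting a cancelling pair of NOT gates (Lemma~\ref{lemRingRDoubleNOT}) only reshuffles NOT occurrences; it cannot move a multiplication spider through an addition gate. The inductive peel-off likewise hides its only nontrivial step: absorbing one further multiplicative factor $a_n$ into an addition node whose phase is $\prod_{i<n}a_i$ is a multiplication-versus-addition interaction, and a vague appeal to ``the $+$ and $\times$ rules together with (D)'' does not produce it — on the level of diagrams this is again the (B2)/(A)/Lemma~\ref{lemRingREntailsPNOT} manipulation that the paper carries out. To repair the argument you would need to write out the actual left-hand diagram of the lemma and show explicitly how the multiplication spider is commuted through the plus/NOT structure, which amounts to reproducing the paper's bialgebra step rather than avoiding it.
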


\begin{proof} \label{prfLemRingREntailsB3}
	\begin{align}
		       & \vc{\InputIfFileExists{./figures/RingR/B3.l.tikz}{}{Missing file!}} \by{\ref{lemRingREntailsPNOT}, A} \vc{\InputIfFileExists{./figures/RingR/B3_1.tikz}{}{Missing file!}} \by{B2}\vc{\InputIfFileExists{./figures/RingR/B3_2.tikz}{}{Missing file!}} \by{\ref{lemRingREntailsPNOT}}\vc{\InputIfFileExists{./figures/RingR/B3_3.tikz}{}{Missing file!}} \\
		\by{+, A} & \vc{\InputIfFileExists{./figures/RingR/B3_4.tikz}{}{Missing file!}} \by{}\vc{\InputIfFileExists{./figures/RingR/B3_5.tikz}{}{Missing file!}} \by{inv, \times, +}\node{poly}{\Pi\ a_1 \dots a_n}
	\end{align}
\end{proof}

\begin{lemma} \label{lemRingREntailsCrossingNegation}
	The ruleset $\ring$ entails the translation of the traced crossing is multiplication by $-1$:
	\begin{align}
		\vc{\InputIfFileExists{./figures/ZW/rei1x_l.tikz}{}{Missing file!}} \mapsto \vc{\InputIfFileExists{./figures/RingR/ZW_rei1x_l.tikz}{}{Missing file!}} = \node{white}{-1}
	\end{align}
\end{lemma}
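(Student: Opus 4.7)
The plan is to expand the translated ZW crossing $\tikzfig{QR/ZW_crossing}$ given by Figure~\ref{figZWRingRTranslation} inside the loop, and then reduce the resulting $\ring$ diagram using the rules of \S\ref{secRingRRules} until it collapses to $\node{white}{-1}$. As a sanity check on the target: the ZW crossing has interpretation that acts as the identity on $\ket{00}$, swaps $\ket{01}\leftrightarrow\ket{10}$, and sends $\ket{11}\mapsto -\ket{11}$, so tracing one pair of legs yields the $1\to 1$ matrix $\mathrm{diag}(1,-1)$, which is exactly $\interpret{\node{white}{-1}}$. This confirms the equation semantically; the task is to derive it syntactically in $\ring$.

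First I would use Lemmas~\ref{lemRingREntails2wIs2w} and \ref{lemRingRTranslatedPlus} to rewrite any $w$-spider subcomponents of the expanded crossing in terms of the NOT gate and the $\ring$ addition node, so that the whole picture is built only from multiplication spiders, addition gates, NOT gates, and wires. With the loop closed by a cup and a cap via compact closure, I would apply the distributive law (D) to push multiplication inside the addition node, splitting the loop into a sum of two purely multiplicative branches: one contributing a state labelled $1$ and the other, carrying a pair of NOT gates, contributing $-1$ after an application of rule (N). Lemma~\ref{lemRingRDoubleNOT} would then collapse any paired NOT gates that arose during the re-routing, and Lemma~\ref{lemRingREntailsB3} would identify the remaining multiplicative loop as a white node of the appropriate label on the surviving wire, with rule (Hopf) and rule (L) closing off any residual loops and rule (I) discarding any empty scalars. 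Finally the $+$ rule combines the two branches into the single white node labelled $-1$.

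The main obstacle I expect is the asymmetry of the addition gate highlighted in Remark~\ref{remTopologyMattersABit}: unlike the multiplication spider, the addition gate's bold output is not a spider leg, so wires cannot be freely rerouted across it. When the trace of the crossing passes around an addition gate, rules (CZA1) and (CZA2) must be invoked in a precise sequence — possibly interleaved with (D) — to commute the loop past the distinguished wire. Getting this bookkeeping right, and in particular ensuring that the $+1$ and $-1$ branches recombine via the $+$ rule into a single white node labelled $-1$ on the surviving wire rather than leaving a detached scalar or an extra factor on the wire, is the delicate step at the heart of the derivation.
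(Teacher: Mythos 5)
Your semantic sanity check is right (the partial trace of the crossing is $\mathrm{diag}(1,-1)$, which is $\interpret{\node{white}{-1}}$), and you correctly sense that Lemma~\ref{lemRingREntailsB3} is where the real work lives. But the mechanism you propose for producing the sign does not work. Rule (N) is nothing more than the definitional unfolding of the NOT derived generator, and a pair of NOT gates cancels to the identity by Lemma~\ref{lemRingRDoubleNOT} -- no $-1$ ever appears from (N). Moreover, $\ring$ diagrams are not formal sums: the distributivity rule (D) rewrites a multiplication applied to an addition gate into a single new diagram; it does not ``split the loop into two branches'' (one contributing $+1$, one $-1$) that you can evaluate separately and recombine with $(+)$. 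So the central step of your sketch -- distribute, read off $\pm 1$ on each branch, add -- is not a derivation that the calculus supports, and the delicate CZA1/CZA2 bookkeeping you anticipate is a symptom of this detour rather than a genuine requirement.

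The paper's proof is much more direct: starting from the translated traced crossing, two applications of the multiplication rule $(\times)$ (spider fusion, plus a topological rearrangement) bring the diagram into exactly the shape of the left-hand side of Lemma~\ref{lemRingREntailsB3}; that lemma collapses the NOT-conjugated multiplicative pattern to the one-sided addition spider $\node{poly}{\Pi\, a_1 \dots a_n}$ (note: an addition-type node labelled by a product, not a white multiplicative node as you describe), and a final application of $(+,\times)$ turns the result into $\node{white}{-1}$. No use of (D), (Hopf), (L), (I), or the CZA rules is needed, and the minus sign is generated entirely inside Lemma~\ref{lemRingREntailsB3} (whose own proof uses (A), (B2) and Lemma~\ref{lemRingREntailsPNOT}), not by any NOT-pair argument. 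To repair your proposal you would need to drop the branch-splitting narrative and instead show how the fused multiplication spiders and NOT gates of the traced crossing match the hypothesis of Lemma~\ref{lemRingREntailsB3}.
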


\begin{proof} \label{prfLemRingREntailsCrossingNegation}
	\begin{align}
		\vc{\InputIfFileExists{./figures/RingR/ZW_rei1x_l.tikz}{}{Missing file!}} \by{\times} \vc{\InputIfFileExists{./figures/RingR/neg_1.tikz}{}{Missing file!}} = \vc{\InputIfFileExists{./figures/RingR/neg_2.tikz}{}{Missing file!}}\by{\times}\vc{\InputIfFileExists{./figures/RingR/neg_3.tikz}{}{Missing file!}} \by{\ref{lemRingREntailsB3}} \vc{\InputIfFileExists{./figures/RingR/neg_4.tikz}{}{Missing file!}} \by{+,\times} \node{white}{-1}
	\end{align}
\end{proof}

From here on we will use Lemma~\ref{lemRingREntailsCrossingNegation}
to translate the traced crossing from ZW to $\ring$.
Having derived these implicit simplifications in the translation
we will now translate and derive the rules of ZW in the order presented in \cite{AmarThesis},
with the exception of the $inv$ rule,
which was shown in Lemma~\ref{lemRingRDoubleNOT}.

\begin{lemma} \label{lemRingEntailsZWWires}
	The ruleset $\ring$ entails the wire bending rules of ZW:
	\begin{align}
		\vc{\InputIfFileExists{./figures/wire/snakel.tikz}{}{Missing file!}} & = \vc{} = \vc{\InputIfFileExists{./figures/wire/snaker.tikz}{}{Missing file!}} &
		\vc{\InputIfFileExists{./figures/wire/cupsym.tikz}{}{Missing file!}} & = \dcup                                         &
		\vc{\InputIfFileExists{./figures/wire/capsym.tikz}{}{Missing file!}} & = \dcap
	\end{align}
\end{lemma}

\begin{proof} \label{prfLemRingEntailsZWWires}
	This is just the requirement that $\ring_R$ is compact closed.
\end{proof}

\begin{lemma} \label{lemRingEntailsZWrei2x}
	The ruleset $\ring$ entails the translated rule $rei_2^x$ from ZW:
	\begin{align} \label{eqnZW-QR-rei2x}
		\vc{\InputIfFileExists{./figures/ZW/rei2x_l.tikz}{}{Missing file!}} = \vc{\begin{tikzpicture}
	\begin{pgfonlayer}{nodelayer}
		\node [style=none] (1) at (-0.5, 1) {};
		\node [style=none] (4) at (0.5, 1) {};
		\node [style=none] (9) at (-0.5, -1) {};
		\node [style=none] (10) at (0.5, -1) {};
	\end{pgfonlayer}
	\begin{pgfonlayer}{edgelayer}
		\draw (1.center) to (9.center);
		\draw (4.center) to (10.center);
	\end{pgfonlayer}
\end{tikzpicture}
}
		\mapsto
		\vc{\InputIfFileExists{./figures/RingR/ZW_rei2xl.tikz}{}{Missing file!}} = \vc{\begin{tikzpicture}
	\begin{pgfonlayer}{nodelayer}
		\node [style=none] (17) at (-1, 2) {};
		\node [style=none] (18) at (0, 2) {};
		\node [style=none] (19) at (-1, 0.5) {};
		\node [style=none] (20) at (0, 0.5) {};
	\end{pgfonlayer}
	\begin{pgfonlayer}{edgelayer}
		\draw (17.center) to (19.center);
		\draw (18.center) to (20.center);
	\end{pgfonlayer}
\end{tikzpicture}
}
	\end{align}
\end{lemma}

\begin{proof} \label{prfLemRingEntailsZWrei2x}
	Apply rules $\times$ and $Hopf$.
\end{proof}

\begin{lemma} \label{lemRingEntailsZWrei3x}
	The ruleset $\ring$ entails the translated rule $rei_3^x$ from ZW:
	\begin{align} \label{eqnZW-QR-rei3x}
		\vc{\InputIfFileExists{./figures/ZW/rei3x_l.tikz}{}{Missing file!}} = \vc{\InputIfFileExists{./figures/ZW/rei3x_r.tikz}{}{Missing file!}}
		\mapsto
		\vc{\InputIfFileExists{./figures/RingR/ZW_rei3x_l.tikz}{}{Missing file!}} = \vc{\InputIfFileExists{./figures/RingR/ZW_rei3x_r.tikz}{}{Missing file!}}
	\end{align}
\end{lemma}

\begin{proof} \label{prfLemRingEntailsZWrei3x}
	After applying rule $\times$ both sides are isometric to:
	\begin{align}
		\vc{\InputIfFileExists{./figures/RingR/ZW_rei3x.tikz}{}{Missing file!}}
	\end{align}
\end{proof}

\begin{lemma} \label{lemRingEntailsZWrei1x}
	The ruleset $\ring$ entails the translated rule $rei_1^x$ from ZW:
	\begin{align} \label{eqnZW-QR-rei1x}
		\vc{\InputIfFileExists{./figures/ZW/rei1x_l.tikz}{}{Missing file!}} = \vc{\InputIfFileExists{./figures/ZW/rei1x_r.tikz}{}{Missing file!}}
		\mapsto
		\node{white}{-1} = \node{white}{-1}
	\end{align}
\end{lemma}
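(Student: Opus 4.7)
The plan is to observe that the translated equation \eqref{eqnZW-QR-rei1x} is of the form $\node{white}{-1} = \node{white}{-1}$, which is an equality of literally identical $\ring$ diagrams, and so holds by reflexivity. Thus the substantive content of the proof has already been carried out: I would simply cite Lemma~\ref{lemRingREntailsCrossingNegation} for the translation of the left-hand side and appeal to the definition of the translation in Figure~\ref{figRingRZWTranslation} for the right-hand side.

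More concretely, the first step is to unpack the translation of the ZW diagram $\tikzfig{ZW/rei1x_l}$ (the traced crossing). This was the whole purpose of Lemma~\ref{lemRingREntailsCrossingNegation}, which derives the equality $\tikzfig{RingR/ZW_rei1x_l} = \node{white}{-1}$ from the rules of $\ring$. The second step is to translate the right-hand side of the ZW rule, namely the ZW $-1$ state. Because the translation of Figure~\ref{figRingRZWTranslation} sends the white spider (with phase $r$) to itself, the ZW state $\state{white}{-1}$ is sent directly to the $\ring$ state $\node{white}{-1}$, with no further rewriting needed.

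Combining the two, both sides of the translated rule reduce to the same diagram $\node{white}{-1}$, and hence $\ring \entails \node{white}{-1} = \node{white}{-1}$ trivially. I do not expect any real obstacle in this lemma; the only mild subtlety is to confirm that the right-hand side $\tikzfig{ZW/rei1x_r}$ of the original ZW rule is indeed (up to the conventions of Ref.~\cite{AmarThesis}) the white $-1$ state, and not some more elaborate form that would require an additional derivation. Assuming this matches the presentation being followed, the proof is a one-line consequence of Lemma~\ref{lemRingREntailsCrossingNegation}.
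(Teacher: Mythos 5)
Your proposal is correct and matches the paper's own argument: the paper dispatches this lemma as ``tautological, thanks to Lemma~\ref{lemRingREntailsCrossingNegation}'', since after that lemma the traced crossing is translated directly to the $-1$ state, making both sides of the translated rule literally identical. Your extra remark about checking that the ZW right-hand side is indeed the $-1$ white state is a reasonable sanity check but introduces no new content beyond the paper's proof.
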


\begin{proof} \label{prfLemRingEntailsZWrei1x}
	Tautological, thanks to Lemma~\ref{lemRingREntailsCrossingNegation}.
\end{proof}

\begin{lemma} \label{lemRingEntailsZWnatxeta}
	The ruleset $\ring$ entails the translated rule $nat_x^\eta$ from ZW:
	\begin{align} \label{eqnZW-QR-nat}
		\vc{\InputIfFileExists{./figures/ZW/natetax_l.tikz}{}{Missing file!}} = \vc{\InputIfFileExists{./figures/ZW/natetax_r.tikz}{}{Missing file!}}
		\mapsto
		\vc{\InputIfFileExists{./figures/RingR/ZW_natetax_l.tikz}{}{Missing file!}} = \vc{\InputIfFileExists{./figures/ZW/natetax_r.tikz}{}{Missing file!}}
	\end{align}
\end{lemma}

\begin{proof} \label{prfLemRingEntailsZWnatxeta}
	This is identical to $rei^2_x$ from Lemma~\ref{lemRingEntailsZWrei2x}.
\end{proof}

\begin{lemma} \label{lemRingEntailsZWepsilonx}
	The ruleset $\ring$ entails the translated rule $nat^{\epsilon}_x$ from ZW:
	\begin{align} \label{eqnZW-QR-natepsilonx}
		\vc{\InputIfFileExists{./figures/ZW/natepsilonx_l.tikz}{}{Missing file!}} = \vc{\InputIfFileExists{./figures/ZW/natepsilonx_r.tikz}{}{Missing file!}}
		\mapsto
		\vc{\InputIfFileExists{./figures/ZW/natepsilonx_l.tikz}{}{Missing file!}} = \vc{\InputIfFileExists{./figures/RingR/ZW_natepsilonx_r.tikz}{}{Missing file!}}
	\end{align}
\end{lemma}

\begin{proof} \label{prfLemRingEntailsZWepsilonx}
	This is identical to $rei^2_x$ from Lemma~\ref{lemRingEntailsZWrei2x}.
\end{proof}

\begin{lemma} \label{lemRingEntailsZWnatwx}
	The ruleset $\ring$ entails the translated rule $nat^w_x$ from ZW:
	\begin{align} \label{eqnZW-QR-natwx}
		\vc{\InputIfFileExists{./figures/ZW/natwx_l.tikz}{}{Missing file!}} = \vc{\InputIfFileExists{./figures/ZW/natwx_r.tikz}{}{Missing file!}}
		\mapsto
		\vc{\InputIfFileExists{./figures/RingR/ZW_natwx_l.tikz}{}{Missing file!}} = \vc{\InputIfFileExists{./figures/RingR/ZW_natwx_r.tikz}{}{Missing file!}}
	\end{align}
\end{lemma}

\begin{proof} \label{prfLemRingEntailsZWnatwx}
	This is just rule CZA1.
\end{proof}

\begin{lemma} \label{lemRingEntailsZWcutw}
	The ruleset $\ring$ entails the translated rule $cut_w$ from ZW:
	\begin{align} \label{eqnZW-QR-cutw}
		\vc{\InputIfFileExists{./figures/ZW/cutw_l.tikz}{}{Missing file!}} = \vc{\InputIfFileExists{./figures/ZW/cutw_r.tikz}{}{Missing file!}}
		\mapsto
		\vc{\InputIfFileExists{./figures/RingR/ZW_cutw_l.tikz}{}{Missing file!}} = \vc{\InputIfFileExists{./figures/RingR/ZW_cutw_r.tikz}{}{Missing file!}}
	\end{align}
\end{lemma}

\begin{proof} \label{prfLemRingEntailsZWcutw}
	\begin{align}
		LHS \by{+}\vc{\InputIfFileExists{./figures/RingR/cut_1.tikz}{}{Missing file!}} \by{N} \vc{\InputIfFileExists{./figures/RingR/cut_2.tikz}{}{Missing file!}} \by{\ref{lemRingRDoubleNOT}} \vc{\InputIfFileExists{./figures/RingR/cut_3.tikz}{}{Missing file!}} \by{+} RHS
	\end{align}
\end{proof}

The next rule to translate (Lemma~\ref{lemRingEntailsZWtrw}) requires us
to be able to apply a cup as an input to the plus gate.
We will first provide a quick lemma to show that this
results in the zero state:

\begin{lemma} \label{lemRingRTracePlus0}
	The following is derivable in $\ring$:
	\begin{align}
		\vc{\InputIfFileExists{./figures/RingR/PZ1.l.tikz}{}{Missing file!}} = \vc{\begin{tikzpicture}
	\begin{pgfonlayer}{nodelayer}
		\node [style=none] (10) at (0.5, -1.5) {};
		\node [style=white] (11) at (0.5, -1) {$0$};
	\end{pgfonlayer}
	\begin{pgfonlayer}{edgelayer}
		\draw (11.center) to (10.center);
	\end{pgfonlayer}
\end{tikzpicture}
}
	\end{align}
\end{lemma}

\begin{proof} \label{prfLemRingRTracePlus0}
	\begin{align}
		\vc{\InputIfFileExists{./figures/RingR/PZ1.l.tikz}{}{Missing file!}} \by{\times}
		\vc{\InputIfFileExists{./figures/RingR/trplus_1.tikz}{}{Missing file!}} \by{\times}
		\vc{\InputIfFileExists{./figures/RingR/trplus_2.tikz}{}{Missing file!}}\by{L}
		\vc{\begin{tikzpicture}
	\begin{pgfonlayer}{nodelayer}
		\node [style=none] (10) at (0.5, -1.5) {};
		\node [style=white] (15) at (0.5, -1) {$0$};
		\node [style=white] (16) at (0.5, -0.5) {$0$};
		\node [style=white] (17) at (0.5, 0) {};
	\end{pgfonlayer}
	\begin{pgfonlayer}{edgelayer}
		\draw (15.center) to (10.center);
		\draw (17.center) to (16.center);
	\end{pgfonlayer}
\end{tikzpicture}
}\by{\times, I}
		\vc{}
	\end{align}
\end{proof}

\begin{lemma} \label{lemRingEntailsZWtrw}
	The ruleset $\ring$ entails the translated rule $tr_w$ from ZW:
	\begin{align} \label{eqnZW-QR-trw}
		\vc{\InputIfFileExists{./figures/ZW/trw_l.tikz}{}{Missing file!}} = \vc{\InputIfFileExists{./figures/ZW/trw_r.tikz}{}{Missing file!}}
		\mapsto
		\vc{\InputIfFileExists{./figures/RingR/ZW_trw_l.tikz}{}{Missing file!}} = \vc{\InputIfFileExists{./figures/RingR/ZW_trw_r.tikz}{}{Missing file!}}
	\end{align}
\end{lemma}

\begin{proof} \label{prfLemRingEntailsZWtrw}
	\begin{align}
		\vc{\InputIfFileExists{./figures/RingR/ZW_trw_l.tikz}{}{Missing file!}} \by{+} \vc{\InputIfFileExists{./figures/RingR/trw_1.tikz}{}{Missing file!}} \by{\ref{lemRingRTracePlus0}} \vc{\InputIfFileExists{./figures/RingR/trw_2.tikz}{}{Missing file!}} \by{+} \vc{\InputIfFileExists{./figures/RingR/ZW_trw_r.tikz}{}{Missing file!}}
	\end{align}
\end{proof}

\begin{lemma} \label{lemRingEntailsZWsymw}
	The ruleset $\ring$ entails the translated rule $sym_w$ from ZW:
	\begin{align} \label{eqnZW-QR-symw}
		\vc{\InputIfFileExists{./figures/ZW/symw_l.tikz}{}{Missing file!}} = \vc{\InputIfFileExists{./figures/ZW/symw_r.tikz}{}{Missing file!}}
		\mapsto
		\vc{\InputIfFileExists{./figures/QR/ZW_symw_l.tikz}{}{Missing file!}} = \vc{\InputIfFileExists{./figures/QR/ZW_symw_r.tikz}{}{Missing file!}}
	\end{align}
\end{lemma}

\begin{proof} \label{prfLemRingEntailsZWsymw}
	This is symmetry of addition in $+$.
\end{proof}

The rule $ba_w$ from \cite{AmarThesis} is, frankly, too visually complicated once translated.
Instead we rely on the fact that $ba_w$ can be derived from $2a$, $5a$, $5b$ and $5c$ of \cite{ZW} as shown in \cite[proposition 4]{ZW}.
Rule $2a$ of Ref.~\cite{ZW} happens to be the same as the rule $inv$ of Ref.~\cite{AmarThesis},
and the translation of $inv$ was already derived in Lemma~\ref{lemRingRDoubleNOT}.
The remaining three rules we modify slightly and take as axioms B1, CP and I in $\ring$.

\begin{lemma} \label{lemRingREntailsZWBialg1}
	The translations of three rules $5a$, $5b$ and $5c$ of Ref.~\cite{ZW} into $\ring$
	are derivable:
	\begin{align}
		\label{eqnZW-QR-5a}
		\vc{\InputIfFileExists{./figures/ZW/5a_l.tikz}{}{Missing file!}}    = \vc{\InputIfFileExists{./figures/ZW/5a_r.tikz}{}{Missing file!}}
		  & \mapsto
		\vc{\InputIfFileExists{./figures/RingR/ZW_5a_l.tikz}{}{Missing file!}} = \vc{\InputIfFileExists{./figures/RingR/ZW_5a_r.tikz}{}{Missing file!}} \\
		\label{eqnZW-QR-5b}
		\vc{\InputIfFileExists{./figures/ZW/5b_l.tikz}{}{Missing file!}} = \vc{\InputIfFileExists{./figures/ZW/5b_r.tikz}{}{Missing file!}}
		  & \mapsto
		\vc{\InputIfFileExists{./figures/RingR/ZW_5b_l.tikz}{}{Missing file!}} = \vc{\InputIfFileExists{./figures/RingR/ZW_5b_r.tikz}{}{Missing file!}} \\
		\label{eqnZW-QR-5c}
		\vc{\InputIfFileExists{./figures/ZW/5c_l.tikz}{}{Missing file!}} = \vc{\InputIfFileExists{./figures/wire/empty.tikz}{}{Missing file!}}
		  & \mapsto
		\vc{\InputIfFileExists{./figures/RingR/ZW_5c_l.tikz}{}{Missing file!}} = \vc{\InputIfFileExists{./figures/wire/empty.tikz}{}{Missing file!}}
	\end{align}
\end{lemma}

\begin{proof} \label{prfLemRingREntailsZWBialg1}
	\begin{align}
		\vc{\InputIfFileExists{./figures/RingR/ZW_5a_l.tikz}{}{Missing file!}}                             & \by{B1} \vc{\InputIfFileExists{./figures/RingR/ZW_5a_r.tikz}{}{Missing file!}}                              \\
		\vc{\InputIfFileExists{./figures/RingR/ZW_5b_l.tikz}{}{Missing file!}} \by{+} \vc{\InputIfFileExists{./figures/RingR/CP.l.tikz}{}{Missing file!}} & \by{CP}  \vc{\begin{tikzpicture}
	\begin{pgfonlayer}{nodelayer}
		\node [style=white] (20) at (1.75, 0.25) {$0$};
		\node [style=none] (21) at (1.75, 1) {};
		\node [style=none] (22) at (2.25, 1) {};
		\node [style=white] (23) at (2.25, 0.25) {$0$};
	\end{pgfonlayer}
	\begin{pgfonlayer}{edgelayer}
		\draw (21.center) to (20.center);
		\draw (22.center) to (23.center);
	\end{pgfonlayer}
\end{tikzpicture}
} \by{+,\ref{lemRingRTracePlus0}} \vc{\InputIfFileExists{./figures/RingR/ZW_5b_r.tikz}{}{Missing file!}} \\
		\vc{\InputIfFileExists{./figures/RingR/ZW_5c_l.tikz}{}{Missing file!}}                             & \by{+,\times} \scalar{white}{0} \by{I} \vc{\InputIfFileExists{./figures/wire/empty.tikz}{}{Missing file!}}
	\end{align}
\end{proof}

Before our next derivation we need a lemma that shows
the interaction between the NOT and H derived generators.

\begin{lemma} \label{lemRingRNOTH}
	The composition of the NOT and H derived generators in $\ring$ gives the equality:
	\begin{align}
		\vc{\InputIfFileExists{./figures/RingR/noth_1.tikz}{}{Missing file!}} = \vc{\begin{tikzpicture}
	\begin{pgfonlayer}{nodelayer}
		\node [style=none] (0) at (-0.5, -0.5) {};
		\node [style=none] (6) at (-0.5, 1) {};
		\node [style=h2] (7) at (-0.5, 0.5) {};
		\node [style=white] (14) at (-0.5, 0) {$-1$};
	\end{pgfonlayer}
	\begin{pgfonlayer}{edgelayer}
		\draw (6.center) to (0.center);
	\end{pgfonlayer}
\end{tikzpicture}
}
	\end{align}
\end{lemma}

\begin{proof} \label{prfLemRingRNOTH}
	\begin{align}
		\vc{\InputIfFileExists{./figures/RingR/noth_1.tikz}{}{Missing file!}} \by{} \vc{\InputIfFileExists{./figures/RingR/noth_2.tikz}{}{Missing file!}} \by{+} \vc{\InputIfFileExists{./figures/RingR/noth_3.tikz}{}{Missing file!}} \by{\times} \vc{\InputIfFileExists{./figures/RingR/noth_4.tikz}{}{Missing file!}} \by{D} \vc{\InputIfFileExists{./figures/RingR/noth_5.tikz}{}{Missing file!}} \by{+,\times} \vc{\InputIfFileExists{./figures/RingR/noth_6.tikz}{}{Missing file!}} \by{H}  \vc{}
	\end{align}
\end{proof}

\begin{lemma} \label{lemRingEntailsZWantxn}
	The ruleset $\ring$ entails the translated rule $ant_x^n$ from ZW:
	\begin{align} \label{eqnZW-QR-antxn}
		\vc{\InputIfFileExists{./figures/ZW/antxn_l.tikz}{}{Missing file!}} = \vc{\InputIfFileExists{./figures/ZW/antxn_r.tikz}{}{Missing file!}}
		\mapsto
		\vc{\InputIfFileExists{./figures/RingR/ZW_antxn_l.tikz}{}{Missing file!}} = \vc{\InputIfFileExists{./figures/RingR/ZW_antxn_r.tikz}{}{Missing file!}}
	\end{align}
\end{lemma}

\begin{proof} \label{prfLemRingEntailsZWantxn}
	\begin{align}
		\vc{\InputIfFileExists{./figures/RingR/ZW_antxn_l.tikz}{}{Missing file!}} \by{A} \vc{\InputIfFileExists{./figures/RingR/antxn_1.tikz}{}{Missing file!}} \by{\ref{lemRingRNOTH}} \vc{\InputIfFileExists{./figures/RingR/antxn_2.tikz}{}{Missing file!}} \by{\times} \vc{\InputIfFileExists{./figures/RingR/ZW_antxn_r.tikz}{}{Missing file!}}
	\end{align}
\end{proof}

\begin{lemma} \label{lemRingEntailsZWZRules}
	The following rules are translated verbatim from ZW,
	and are all derivable in $\ring$:
	\begin{align} \label{eqnZW-QR-cutz}
		\vc{\InputIfFileExists{./figures/ZW/cutz_l.tikz}{}{Missing file!}} &= \vc{\InputIfFileExists{./figures/ZW/cutz_r.tikz}{}{Missing file!}}
		&
		\vc{\InputIfFileExists{./figures/QR/ZW_trz_l.tikz}{}{Missing file!}} &= \vc{\InputIfFileExists{./figures/QR/ZW_trz_r.tikz}{}{Missing file!}} \\
		\vc{\InputIfFileExists{./figures/QR/ZW_symz_l.tikz}{}{Missing file!}} & = \vc{\InputIfFileExists{./figures/QR/ZW_symz_r.tikz}{}{Missing file!}} &
		\unary[white]{} &= \unary[none]{}
	\end{align}
\end{lemma}

\begin{proof}\label{prfLemRingEntailsZWZRules}
	These are all derivable from $\times$.
\end{proof}

\begin{lemma} \label{lemRingEntailsZWbazw}
	The ruleset $\ring$ entails the translated rule $ba_{zw}$ from ZW,
	noting there must be at least one output (boundary at the top of the diagram):
	\begin{align} \label{eqnZW-QR-bazw}
		\vc{\InputIfFileExists{./figures/ZW/bazw_l.tikz}{}{Missing file!}}       = \vc{\InputIfFileExists{./figures/ZW/bazw_r.tikz}{}{Missing file!}}
		\mapsto
		\vc{\InputIfFileExists{./figures/RingR/ZW_bazw_l.tikz}{}{Missing file!}} = \vc{\InputIfFileExists{./figures/RingR/ZW_bazw_r.tikz}{}{Missing file!}}
	\end{align}
\end{lemma}

\begin{proof} \label{prfLemRingEntailsZWbazw}
	This is rule $B2$
\end{proof}
\begin{lemma} \label{lemRingEntailsZWloop}
	The ruleset $\ring$ entails the translated rule $loop$ from ZW:
	\begin{align} \label{eqnZW-QR-loop}
		\vc{\InputIfFileExists{./figures/ZW/loop_l.tikz}{}{Missing file!}} = \vc{\InputIfFileExists{./figures/ZW/loop_r.tikz}{}{Missing file!}}
		\mapsto
		\vc{\InputIfFileExists{./figures/RingR/L.l2.tikz}{}{Missing file!}} = \vc{\begin{tikzpicture}
	\begin{pgfonlayer}{nodelayer}
		\node [style=none] (4) at (0, 0.5) {};
		\node [style=white] (5) at (0, 0) {$0$};
		\node [style=none] (10) at (0, -1.5) {};
		\node [style=white] (11) at (0, -1) {$0$};
	\end{pgfonlayer}
	\begin{pgfonlayer}{edgelayer}
		\draw (4.center) to (5.center);
		\draw (11.center) to (10.center);
	\end{pgfonlayer}
\end{tikzpicture}
}
	\end{align}
\end{lemma}

\begin{proof} \label{prfLemRingEntailsZWloop}
	This is rule $L$ and $\times$.
\end{proof}
\begin{lemma} \label{lemRingEntailsZWph}
	The ruleset $\ring$ entails the translated rule $ph$ from ZW:
	\begin{align} \label{eqnZW-QR-ph}
		\vc{\InputIfFileExists{./figures/ZW/ph_l.tikz}{}{Missing file!}} = \vc{\InputIfFileExists{./figures/ZW/ph_r.tikz}{}{Missing file!}}
		\mapsto
		\vc{\InputIfFileExists{./figures/RingR/ZW_ph_l.tikz}{}{Missing file!}} = \vc{\InputIfFileExists{./figures/RingR/ZW_ph_r.tikz}{}{Missing file!}}
	\end{align}
\end{lemma}

\begin{proof} \label{prfLemRingEntailsZWph}
	This is spider fusion and un-fusion from rule $\times$.
\end{proof}
\begin{lemma} \label{lemRingEntailsZWnatncr}
	The ruleset $\ring$ entails the translated rule $nat^n_c$ from ZW:
	\begin{align} \label{eqnZW-QR-natnc}
		\vc{\InputIfFileExists{./figures/RingR/A.l.tikz}{}{Missing file!}} = \vc{\InputIfFileExists{./figures/RingR/A.r.tikz}{}{Missing file!}}
		\mapsto
		\vc{\InputIfFileExists{./figures/RingR/A.l.tikz}{}{Missing file!}} = \vc{\InputIfFileExists{./figures/RingR/A.r.tikz}{}{Missing file!}}
	\end{align}
\end{lemma}

\begin{proof} \label{prfLemRingEntailsZWnatncr}
	This is rule $A$.
\end{proof}

\begin{lemma} \label{lemRingEntailsZWunx}
	The ruleset $\ring$ entails the translated rule $unx$ from ZW:
	\begin{align} \label{eqnZW-QR-unx}
		\vc{\InputIfFileExists{./figures/ZW/unx_l.tikz}{}{Missing file!}} = \vc{\InputIfFileExists{./figures/ZW/unx_r.tikz}{}{Missing file!}}
		\mapsto
		\vc{\InputIfFileExists{./figures/RingR/ZW_unx_l.tikz}{}{Missing file!}} = \vc{\InputIfFileExists{./figures/RingR/ZW_unx_r.tikz}{}{Missing file!}}
	\end{align}
\end{lemma}

\begin{proof} \label{prfLemRingEntailsZWunx}
	\begin{align}
		\vc{\InputIfFileExists{./figures/RingR/ZW_unx_l.tikz}{}{Missing file!}} \by{\times}\vc{\InputIfFileExists{./figures/RingR/unx_1.tikz}{}{Missing file!}} \by{CZA2}  \vc{\InputIfFileExists{./figures/RingR/ZW_unx_r.tikz}{}{Missing file!}}
	\end{align}
\end{proof}

\begin{lemma} \label{lemRingEntailsZWrngminus1}
	The ruleset $\ring$ entails the translated rule $rng_{-1}$ from ZW:
	\begin{align} \label{eqnZW-QR-rngminus1}
		\vc{\InputIfFileExists{./figures/ZW/rngminus1_l.tikz}{}{Missing file!}} = \unary[white]{-1}
		\mapsto
		\unary[white]{-1} = \unary[white]{-1}
	\end{align}
\end{lemma}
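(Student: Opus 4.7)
The plan is to observe that this lemma, like Lemma~\ref{lemRingEntailsZWrei1x}, reduces to a tautology once the translation is unpacked. The $ZW$ rule $rng_{-1}$ asserts that the traced crossing applied to a wire equals the $-1$-labelled white spider. On the right-hand side of the translated equation, the white node $\unary[white]{-1}$ is already a native generator of $\ring$ and passes through the translation unchanged.

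For the left-hand side, the only non-trivial component is the traced crossing. I would invoke Lemma~\ref{lemRingREntailsCrossingNegation}, which already showed that the translation of the traced crossing equals $\node{white}{-1}$ in $\ring$. Applying this to the single wire appearing in the ZW diagram gives $\unary[white]{-1}$, matching the right-hand side exactly.

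Hence, after translation, both sides are literally the same $\ring$ diagram $\unary[white]{-1}$, and the equality follows with no further rewriting. There is no real obstacle here; the work has already been done in establishing Lemma~\ref{lemRingREntailsCrossingNegation}, and the present lemma is a direct corollary of that translation fact.
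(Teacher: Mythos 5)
Your proposal is correct and matches the paper's argument: the paper likewise dismisses the translated rule as tautological, citing Lemma~\ref{lemRingREntailsCrossingNegation}, which already established that the translated traced crossing equals the $-1$-labelled white node so both sides coincide after translation.
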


\begin{proof} \label{prfLemRingEntailsZWrngminus1}
	This is tautological,
	because of Lemma~\ref{lemRingREntailsCrossingNegation}.
\end{proof}

\begin{lemma} \label{lemRingEntailsZWrngplusrs}
	The ruleset $\ring$ entails the translated rule $rng_+^{r,s}$ from ZW:
	\begin{align} \label{eqnZW-QR-rngplusrs}
		\vc{\InputIfFileExists{./figures/ZW/rngplusrs_l.tikz}{}{Missing file!}} = \vc{}
		\mapsto
		\vc{\InputIfFileExists{./figures/RingR/ZW_rngplusrs_l.tikz}{}{Missing file!}} = \vc{\begin{tikzpicture}
	\begin{pgfonlayer}{nodelayer}
		\node [style=none] (1) at (0, 1.5) {};
		\node [style=none] (26) at (0, -1.5) {};
		\node [style=white] (31) at (0, 0) {r+s};
	\end{pgfonlayer}
	\begin{pgfonlayer}{edgelayer}
		\draw (1.center) to (31.center);
		\draw (31.center) to (26.center);
	\end{pgfonlayer}
\end{tikzpicture}
}
	\end{align}
\end{lemma}

\begin{proof} \label{prfLemRingEntailsZWrngplusrs}
	\begin{align}
		\vc{\InputIfFileExists{./figures/RingR/ZW_rngplusrs_l.tikz}{}{Missing file!}} \by{\times} \vc{\InputIfFileExists{./figures/RingR/rngplusrs_1.tikz}{}{Missing file!}} \by{B2} \vc{\InputIfFileExists{./figures/RingR/rngplusrs_2.tikz}{}{Missing file!}} \by{+,\times} \vc{}
	\end{align}
\end{proof}

 \label{secZWRingRRetranslate}

\begin{lemma} \label{lemRingRZWTranslatedGenerators}
	The equality $\interpret{\interpret{g}_{ZW}}_{\ring_R} = g$ is derivable in $\ring$ for each generator $g$.
\end{lemma}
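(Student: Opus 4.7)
The plan is to proceed by cases on the generators of $\ring_R$ listed in Figure~\ref{figGeneratorsRingR}, together with the compact-closed structure (wire, swap, cup, cap). The structural maps are identified in the set \eqref{eqnWires}, so both translations act as the identity on them, and the round-trip equality is tautological.

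For the state $\state{white}{a}$ and for the binary multiplication $\binary[white]{}$, Figure~\ref{figRingRZWTranslation} sends each white $\ring$-spider to the corresponding white $\ZW$-spider (the $z_n$ spider), and Figure~\ref{figZWRingRTranslation} sends the $z_n$ spider back to the same white $\ring$-spider. The round-trip is therefore literally the identity diagram, so no $\ring$-rule is needed beyond reflexivity of $=$.

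The only non-trivial case is the addition gate $\binary[poly]{}$. By Figure~\ref{figRingRZWTranslation} it is sent to the ZW diagram $\tikzfig{ZW/plus}$, which is the standard expression of addition in $\ZW$ (a ternary $w$-spider followed by a binary $w$-spider in the appropriate position, in the shape covered by $\tikzfig{ZW/plusdots}$ for $n=2$). Translating back using Figure~\ref{figZWRingRTranslation} gives the $\ring$-diagram obtained by replacing each $w$-spider by its translated form. By Lemma~\ref{lemRingREntails2wIs2w}, the image of the binary $w$-spider is derivably equal (in $\ring$) to the NOT gate, and by Lemma~\ref{lemRingRTranslatedPlus} the image of the composition used for addition is derivably equal (in $\ring$) to $\nary[poly]{}$ in its $n=2$ instance, which is exactly $\binary[poly]{}$.

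The main obstacle would have been matching the intermediate diagrams that appear when the ZW construction of addition is unfolded, but precisely this bookkeeping has already been dispatched in Lemmas \ref{lemRingREntails2wIs2w} and \ref{lemRingRTranslatedPlus}; hence, the proof of the lemma reduces to three very short case checks, one of which requires no derivation at all and one of which is an immediate appeal to Lemma~\ref{lemRingRTranslatedPlus}.
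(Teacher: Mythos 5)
Your proposal is correct and follows essentially the same route as the paper: the structural wires, the state, and the multiplication spider round-trip to themselves on the nose, and the only substantive case, the addition gate, is dispatched by appealing to Lemma~\ref{lemRingRTranslatedPlus} exactly as in the paper's proof. No gaps.
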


\begin{proof} \label{prfLemRingRZWTranslatedGenerators}
	\begin{align} \label{eqnQR-ZW-QR-mult}
		\interpret{\interpret{\binary[white]{}}_{ZW}}_{\ring_R} = \interpret{\binary[white]{}}_{\ring_R} = \binary[white]{}                                \\
		\interpret{\interpret{\binary[poly]{}}_{ZW}}_{\ring_R} = \interpret{\vc{\InputIfFileExists{./figures/ZW/plus.tikz}{}{Missing file!}}}_{\ring_R} \by{\ref{lemRingRTranslatedPlus}} \binary[poly]{} \\
		\interpret{\interpret{\state{white}{a}}_{ZW}}_{\ring_R} = \interpret{\state{white}{a}}_{\ring_R} = \state{white}{a}
	\end{align}
\end{proof}

This concludes our collection of proofs showing that $\ring_R$ is complete.

\section{Extending ZH, and relating it to RING}
\label{secCompleteTranslationZHQR}

The calculus $\ZHC$ was shown to be complete via the existence of its own normal form,
rather than via equivalence with ZW \cite{ZH}.
In this section will give a generalisation of ZH, called $\ZHR$,
and we will then exhibit translations between $\ZHR$ and $\ring_R$.
It must be noted that while $\ZWR$ exists and is well defined for
any commutative ring $R$,
the calculus $\ZHR$ is only defined for commutative rings $R$ with a half.

Although we have already provided a complete ruleset for $\ring_R$ from \ZWR\ in \S\ref{secRingRRules},
ZW and ZH have very different rulesets of their own \cite{ZH, AmarThesis}.
The translations of these rulesets into $\ring$
remain distinct, and therefore one may be more useful than the other in different situations.
The translation from $\ring_R$ to ZH$_R$ was rather easy to find;
one simply had to find a ring substructure in ZH (an idea continued in \S\ref{chapPhaseRingCalculi}),
which required looking at the rules of ZH to see where addition and multiplication were used.
Translating from ZH to $\ring_R$ was similarly easy;
since we have access to elementary matrices it was simple enough to find
a translation for low-arity H-boxes,
and then generalise from there.

Since \ZHR\ has the additional requirement on $R$
that it be a commutative ring with~$\frac{1}{2}$ (compared to
\ZWR, where $R$ simply needs to be a commutative ring) we
simply state here a smaller ruleset (\S\ref{secRingRZHRules}) for $\ring$
and translation to and from \ZHR\ (Figure~\ref{figRingRZHTranslation}),
relegating the proof of completeness to Appendix~\ref{appZHRing}.

\subsection{The graphical calculus $\ZHR$} \label{secZHR}

The graphical calculus $\ZHR$ is almost identical to that of $\ZHC$,
with the exception that the labels of the H-boxes are now from
some arbitrary commutative ring with a half, $R$.
Explicitly this means that $R$ has a characteristic other than 2,
and that there is an element $\half{}$ which is the multiplicative inverse of 2.

\begin{definition}[Generalised ZH]  \label{defZHR}
	\ZHR\ has generators and interpretation as given in Figure~\ref{figZHRGenerators},
	and a ruleset given in Figure~\ref{figZHRRules}.
\end{definition}

\begin{restatable}[\ZHR\ is complete]{theorem}{thmZHRComplete} \label{thmZHRComplete}
	The rules presented in Figure~\ref{figZHRRules},
	are sound and complete for the universal calculus \ZHR,
	where $R$ is a commutative ring with a half.
\end{restatable}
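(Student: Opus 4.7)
The plan is to mirror the sandwich argument used in \S\ref{secTranslatedRulesZWRingR} for $\ring_R$ via $\ZW_R$, except this time routing completeness of $\ZH_R$ through $\ring_R$ (which is already known complete by Theorem~\ref{thmRingRCompleteZW}). Soundness comes first and is essentially routine: I would check each rule in Figure~\ref{figZHRRules} directly against the matrix interpretations in Figure~\ref{figZHRGenerators}. This parallels the soundness proof for $\ZH_\bbC$ in \cite{ZH}, and I would track carefully where the assumption $\half \in R$ is actually used (namely in whichever rules appear as scalar normalisations of the Hadamard box), confirming that every rule of $\ZH_\bbC$ in fact makes sense over any commutative ring with a half.

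For completeness, I would exhibit two interpretation-preserving translations $F_\ring : \ZH_R \to \ring_R$ and $F_{ZH} : \ring_R \to \ZH_R$, acting as the identity on wires, swaps, cups and caps. The translation $F_\ring$ sends the phase-free Z spider to itself and expresses an H-box with label $c$ via a small network of multiplication and addition gates together with state $c$; this is straightforward using the elementary-matrix description of the H-box. In the other direction $F_{ZH}$ sends the multiplication spider of $\ring_R$ to the Z spider of $\ZH_R$ and expresses the addition gate as a small H-box diagram; the latter is the place where $\half$ is genuinely used, since the H-box encoding of addition only closes up once the correct normalising scalar is available.

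The core of the work is step three, which has two halves. Half (a): for every rule $L = R$ of $\ring_R$ in Figures~\ref{figRingRRules1} and \ref{figRingRRules2}, derive $F_{ZH}(L) = F_{ZH}(R)$ inside $\ZH_R$. The $\times$-rules follow from Z-spider fusion, (I) from the unit/zero laws, and (Hopf) from the standard Hopf law in ZH; the harder cases will be the distributive law (D), the bialgebras (B1) and (B2), the two CZA-laws, and the loop law (L), each of which I expect to require a multi-step derivation threading the H-box addition and multiplication axioms of Figure~\ref{figZHRRules} together with the derived grey spider. Half (b): derive $g = F_\ring F_{ZH}(g)$ in $\ZH_R$ for each generator $g$ of $\ZH_R$. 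For the Z spider this is immediate; for a general H-box it requires showing that the $\ZH_R$-diagram produced by unfolding the translation rebuilds the same H-box, again via H-spider fusion and the unit/addition/multiplication laws.

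Once these ingredients are assembled, the sandwich closes in the standard way: given $\interpret{D_1} = \interpret{D_2}$ in $\ZH_R$, the images $F_\ring(D_1), F_\ring(D_2)$ are equal under interpretation in $\ring_R$, so by Theorem~\ref{thmRingRCompleteZW} they are provably equal in $\ring_R$; transporting that derivation back along $F_{ZH}$ and replacing each $F_{ZH}F_\ring(g)$ with $g$ via half (b) yields $\ZH_R \vdash D_1 = D_2$. The main obstacle will be the bookkeeping of half (a): some of the $\ring_R$-rules have no obviously analogous single rule in $\ZH_R$, and I expect the bialgebra and CZA rules in particular to demand careful multi-step proofs where $\half$ is actively used to normalise scalars introduced by H-box manipulations.
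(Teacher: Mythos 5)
Your proposal is sound in outline, but it is a genuinely different route from the one the paper takes. The paper proves Theorem~\ref{thmZHRComplete} by adapting, almost verbatim, the normal-form completeness argument of Ref.~\cite{ZH}: Appendix~\ref{chapZHR} shows that every \ZHR\ diagram can be rewritten into a unique normal form (a Schur product of indexed H-boxes), observing that the only property of $\bbC$ used in the original proof is the existence of $\half$; uniqueness of normal forms then gives completeness (and, as a by-product, universality, which the theorem statement also asserts and which your plan only gets indirectly via the interpretation-preserving translation out of $\ring_R$). Your plan instead runs a translation sandwich $\ZHR \to \ring_R \to \ZHR$, leaning on Theorem~\ref{thmRingRCompleteZW}; this is non-circular, since that theorem is proved via \ZWR\ and does not depend on \ZHR, and the translations you describe are essentially those of Figure~\ref{figRingRZHTranslation}. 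The trade-off is workload and risk: the paper's route reuses an existing, fully worked proof with almost no new derivations, whereas yours requires deriving inside \ZHR, from the rules of Figure~\ref{figZHRRules} alone, the translations of every rule in Figures~\ref{figRingRRules1} and \ref{figRingRRules2} plus the re-translation $g = F_{\ZH}F_{\ring}(g)$ of each generator — a substantial batch of multi-step diagrammatic lemmas whose feasibility you cannot presuppose (that the ruleset suffices is exactly what is being proved), though they should all go through since the corresponding $\ZH_\bbC$ manipulations use only ring-with-a-half reasoning. Interestingly, the paper does carry out a sandwich of this kind, but in the opposite direction (Appendix~\ref{appZHRing} derives a second complete ruleset for $\ring_R$ from \ZHR), and that argument presupposes Theorem~\ref{thmZHRComplete}; your proposal would effectively interchange which of the two calculi inherits completeness from the other.
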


\begin{proof} \label{prfThmZHRComplete}
	The proof is given in Appendix~\ref{chapZHR},
	because it follows an almost identical proof to that of \cite{ZH}.
\end{proof}

\begin{restatable}[$\ring$ is complete, via \ZHR]{theorem}{thmRingRCompleteZH} \label{thmRingRCompleteZH}
	The rules of \S\ref{secRingRZHRules} are complete for the graphical calculus $\ring_R$,
	where $R$ is a commutative ring with a half.
\end{restatable}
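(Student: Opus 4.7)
The plan is to mirror the structure of the proof of Theorem~\ref{thmRingRCompleteZW}, replacing $\ZW_R$ throughout with $\ZH_R$. The two key ingredients are Theorem~\ref{thmZHRComplete} (completeness of $\ZH_R$) and the pair of interpretation-preserving translations $F_{\ring}\colon \ZH_R \to \ring_R$ and $F_{\ZH}\colon \ring_R \to \ZH_R$ depicted in Figure~\ref{figRingRZHTranslation}. With these, any sound equation $D_1 = D_2$ of $\ring_R$-diagrams gives a sound equation $F_{\ZH}(D_1) = F_{\ZH}(D_2)$ of $\ZH_R$-diagrams (by preservation of $\interpret{\cdot}$), hence $\ZH_R \entails F_{\ZH}(D_1) = F_{\ZH}(D_2)$ by Theorem~\ref{thmZHRComplete}, and then $\ring_R \entails F_{\ring} F_{\ZH}(D_1) = F_{\ring} F_{\ZH}(D_2)$ provided $F_{\ring}$ sends each rule of $\ZH_R$ to something derivable in $\ring_R$. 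Combined with a retraction identity $g = F_{\ring} F_{\ZH}(g)$ on generators, the standard chain $D_1 = F_{\ring} F_{\ZH}(D_1) = F_{\ring} F_{\ZH}(D_2) = D_2$ closes the argument.

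First I would verify by direct matrix computation that both translations preserve $\interpret{\cdot}$, which is routine on the small number of generators involved. Second, for each rule $L=R$ of $\ZH_R$ (Figure~\ref{figZHRRules}), I would derive $\ring_R \entails F_{\ring}(L) = F_{\ring}(R)$ using the rules of Section~\ref{secRingRZHRules}. The H-spider fusion, unit, and commutativity axioms of $\ZH_R$ should translate directly to the corresponding $\times$ and $+$ rules; the ZH addition and multiplication rules should reduce, after unfolding the H-box translation, to the bialgebra laws (B1), (B2), the distributive law (D), and the copy rule (CP); the ZH colour-change and NOT-related axioms translate via the derived NOT and Hadamard generators of Figure~\ref{figDerivedGeneratorsRingR}, and their treatment likely requires intermediate lemmas in the style of Lemma~\ref{lemRingRNOTH} and Lemma~\ref{lemRingRTracePlus0}. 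The element $\half$ appears wherever a generalised Hadamard is unfolded, so the hypothesis that $R$ contain $\half$ is inherited from $\ZH_R$ itself and does not impose an extra requirement on $\ring_R$.

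Third, for each generator $g$ of $\ring_R$, namely the states $\state{white}{a}$, the multiplication gate $\binary[white]{}$, and the addition gate $\binary[poly]{}$, I would derive $\ring_R \entails g = F_{\ring} F_{\ZH}(g)$; this is the analogue of Lemma~\ref{lemRingRZWTranslatedGenerators}. The multiplication gate and states are immediate since they translate to the ZH white spider and its variants and back. The nontrivial case is the addition gate: since addition is not a primitive of $\ZH_R$, the composite $F_{\ZH}(\binary[poly]{})$ is an H-box expression whose image under $F_{\ring}$ must then be rewritten back to $\binary[poly]{}$ using rules (D), (L), (CZA1), (CZA2), and the $+$ spider rule. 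I expect this to be the main obstacle of the proof: finding a clean rewrite sequence that collapses the H-box-encoded adder back to the flat $+$ spider, since this is precisely the point at which the $\half$ scalar gets consumed and where the derivation is longest. The remainder of the proof is then assembling the chain above, exactly as at the end of the proof of Theorem~\ref{thmRingRCompleteZW}.
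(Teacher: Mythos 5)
Your proposal matches the paper's own proof (Appendix~\ref{appZHRing}) essentially step for step: one derives, in the ruleset of \S\ref{secRingRZHRules}, the $F_{\ring}$-translation of every rule of \ZHR\ (Figure~\ref{figZHRRules}) together with the retranslation identities $g = F_{\ring}F_{\ZH}(g)$ for the generators of $\ring_R$, and then closes the argument via Theorem~\ref{thmZHRComplete} exactly as in the proof of Theorem~\ref{thmRingRCompleteZW}. The only correction worth noting is that the step you single out as the main obstacle, collapsing the H-box-encoded adder back to $\binary[poly]{}$, is not done via (L), (CZA1), (CZA2) (those belong to the ZW-derived ruleset of Figure~\ref{figRingRRules2}, not to \S\ref{secRingRZHRules}); instead the ruleset of \S\ref{secRingRZHRules} includes the dedicated axiom $(+')$ relating the plus gate to its retranslation, and the paper's derivation massages $F_{\ring}F_{\ZH}(\binary[poly]{})$ into the form required by $(+')$ using the spider, Hadamard, and scalar rules of that ruleset.
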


\begin{proof} \label{prfThmRingRCompleteZH}
	The proof is given in Appendix~\ref{appZHRing}.
\end{proof}

\begin{figure}[H]
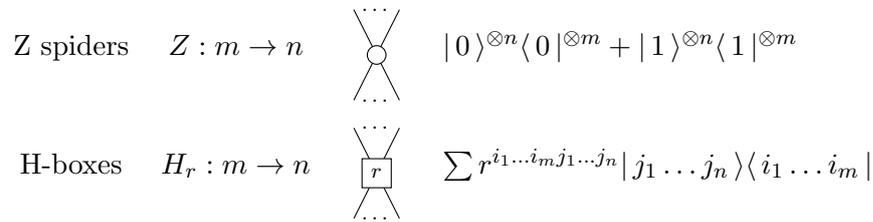

	\centering
	\begin{tabular}{cccl}
		Z spiders & $Z : m \to n$   & $\spider{smallZ}{}$ & $\ket{0}^{\otimes n}\bra{0}^{\otimes m} + \ket{1}^{\otimes n}\bra{1}^{\otimes m}$\\[\rowgap]
		H-boxes   & $H_r : m \to n$ & $\spider{ZH}{r}$ & $\sum r^{i_1\ldots i_m j_1\ldots j_n} \ket{j_1\ldots j_n}\bra{i_1\ldots i_m}$
	\end{tabular}
	\caption{
		The generators and interpretation of $\ZHR$
		\label{figZHRGenerators}
		\label{figZHRInterpretation}}
\end{figure}

\begin{figure}[p]
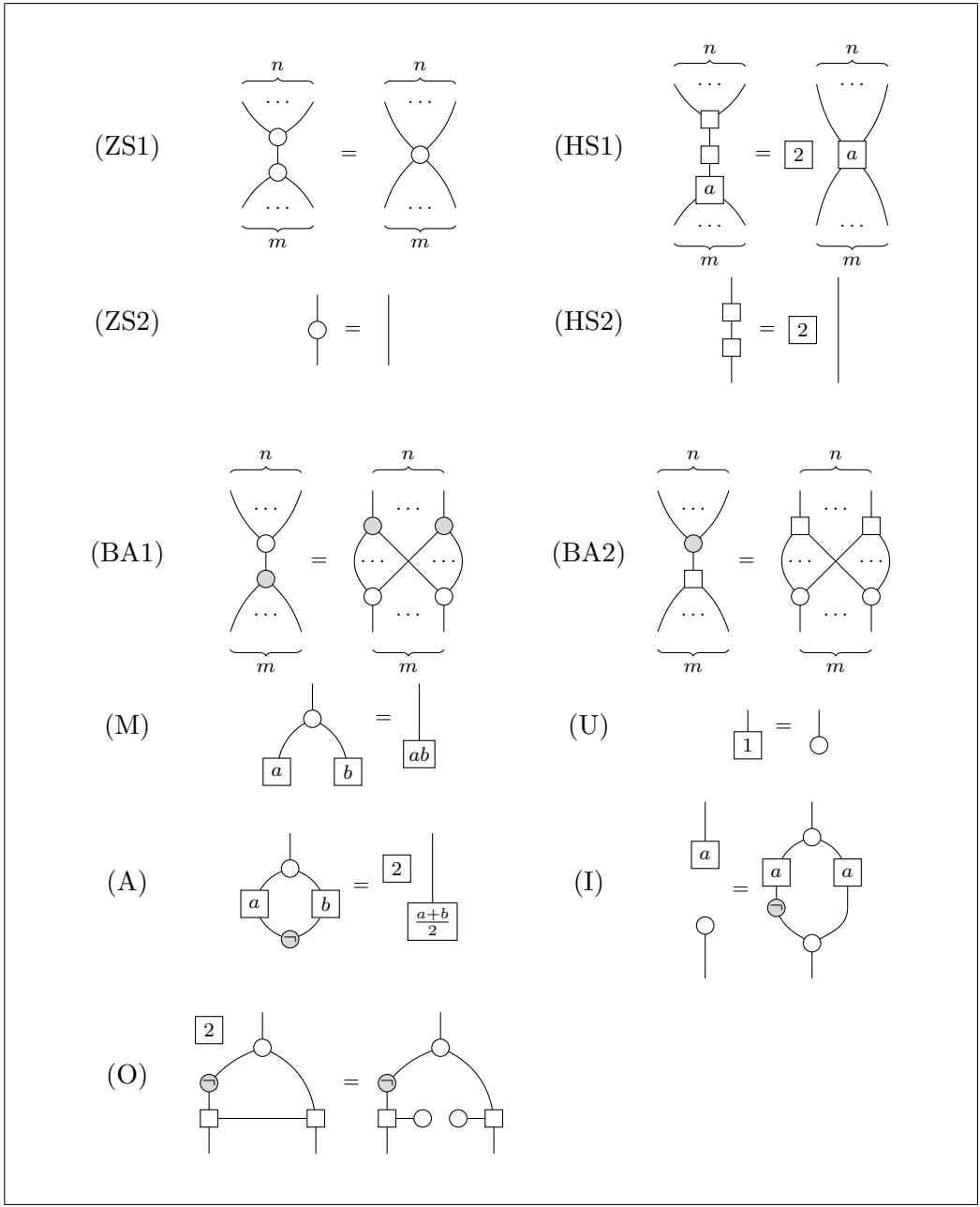

	\centering\begin{tabular}{|c|}
		\hline
		\begin{minipage}{0.9\textwidth}{
				\begin{tabular}{cccc} \\
			\qquad(ZS1) & \vc{\InputIfFileExists{./figures/ZH/Z_spider_rule.tikz}{}{Missing file!}} &   (HS1) & \vc{\InputIfFileExists{./figures/ZH/H_spider_rule_phased.tikz}{}{Missing file!}} \\[4em]
			\qquad(ZS2) & \vc{\InputIfFileExists{./figures/ZH/Z_special.tikz}{}{Missing file!}} &  (HS2) & \vc{\InputIfFileExists{./figures/ZH/H_identity_phased.tikz}{}{Missing file!}} \\[4em]
			\qquad(BA1) & \vc{\InputIfFileExists{./figures/ZH/ZX_bialgebra.tikz}{}{Missing file!}} &  (BA2) & \vc{\InputIfFileExists{./figures/ZH/ZH_bialgebra.tikz}{}{Missing file!}} \qquad\\[4em]
			\qquad(M) & \vc{\InputIfFileExists{./figures/ZH/multiply_rule_phased.tikz}{}{Missing file!}} &  (U) & \vc{\InputIfFileExists{./figures/ZH/unit_rule.tikz}{}{Missing file!}} \\[2em]
			\qquad(A) & \vc{\InputIfFileExists{./figures/ZH/average_rule.tikz}{}{Missing file!}} &  (I) & \vc{\InputIfFileExists{./figures/ZH/intro_rule.tikz}{}{Missing file!}} \qquad\\[4em]
			\qquad(O) & \vc{\InputIfFileExists{./figures/ZH/ortho_rule_phased.tikz}{}{Missing file!}} &  & \\[4em]
		\end{tabular}}
		\end{minipage} \\
		\hline
	\end{tabular}
	\caption{\label{figZHRRules}
		The rules for \ZHR,
		note that these are almost identical to the rules for $\ZH_\bbC$
		as presented in \cite{ZH}.
		Throughout, $m,n$ are non-negative integers and in the original paper $a,b$ were arbitrary complex numbers.
		In \ZHR\ we allow $a,b$ to be arbitrary elements of the ring $R$, where $R$ is commutative with a half.
		The right-hand sides of both \textit{bialgebra} rules (BA1) and (BA2) are complete bipartite graphs on $(m+n)$ vertices, with an additional input or output for each vertex.
		The horizontal edges in equation (O) are well-defined because only the topology matters and we do not need to distinguish between inputs and outputs of generators.
		The rules (M), (A), (U), (I), and (O) are pronounced \textit{multiply}, \textit{average}, \textit{unit}, \textit{intro}, and \textit{ortho}, respectively.}
\end{figure}

\FloatBarrier
\subsection{Translation between RING and ZH}

The translation between $\ring_R$ and $\ZHR$ is given as
Figure~\ref{figRingRZHTranslation}.
This translation preserves the interpretation into $\bit{R}$.

\begin{figure}[H]
	\def\rowgap{2em}
	\begin{align*}
		\ZHR              & \to \ring_R \nonumber     &
		\spider{smallZ}{} & \mapsto \spider{smallZ}{} &
		\spider{ZH}{a}    & \mapsto
		\vc{\InputIfFileExists{./figures/QR/QR_of_HBox.tikz}{}{Missing file!}} \\
		\ring_R           & \to \ZHR \nonumber &
		\spider{white}{a} & \mapsto
		\vc{\InputIfFileExists{./figures/ZH/ZH_of_QRingSpider.tikz}{}{Missing file!}} &
		\binary[poly]{}   & \mapsto
		\vc{\InputIfFileExists{./figures/ZH/ZH_plus.tikz}{}{Missing file!}}
	\end{align*}
	\caption{\label{figRingRZHTranslation}
		Translations between $\ring_R$ and \ZHR
	}
\end{figure}

\subsection{A small, complete ruleset of RING from ZH} \label{secRingRZHRules}

This ruleset is the result of translating the rules from ZH into $\ring$.
We have taken some effort to remove redundant rules from the list below,
although we do not make any claims as to necessity of individual rules.
This list is included to provide a contrast against the rules of \S\ref{secRingRRules}.
We have also left the redundant ring rules in since they are a foundational aspect of $\ring_R$.
We present these somewhat reduced rules here, divided into sections:

\begin{itemize}
	\item Ring Rules. These rules have clear interpretations as the ring operations in $\ring_R$.
	      \begin{align*}
		      \vc{\InputIfFileExists{./figures/QR/QR_mult.tikz}{}{Missing file!}}
		        & \by{\times}
		      \state{white}{a \times b}
		      &
		      \vc{\InputIfFileExists{./figures/QR/comm_plus.tikz}{}{Missing file!}}
		        & \by{+_c}
		      \binary[poly]{} 
		      &
		      \vc{\InputIfFileExists{./figures/QR/ass_plusl.tikz}{}{Missing file!}}
		        & \by{+_a}
		      \vc{\InputIfFileExists{./figures/QR/ass_plusr.tikz}{}{Missing file!}} 
		      \\[\rowgap]
		      \vc{\InputIfFileExists{./figures/QR/comm_times.tikz}{}{Missing file!}}
		        & \by{\times_c}
		      \binary[white]{} 
		      &
		      \vc{\InputIfFileExists{./figures/QR/QR_plus.tikz}{}{Missing file!}}
		        & \by{+}
		      \state{white}{a + b}
		      &
		      \vc{\InputIfFileExists{./figures/QR/ass_timesl.tikz}{}{Missing file!}}
		        & \by{\times_a}
		      \vc{\InputIfFileExists{./figures/QR/ass_timesr.tikz}{}{Missing file!}}
		      \\[\rowgap]
		      \vc{\InputIfFileExists{./figures/QR/QR_UnitPlus.tikz}{}{Missing file!}}
		        & \by{+_0}
		      \unary[none]{}
		      &
		      \vc{\InputIfFileExists{./figures/QR/QR_UnitTimes.tikz}{}{Missing file!}}
		        & \by{\times_1}
		      \unary[none]{}
		      &
		      \vc{\InputIfFileExists{./figures/QR/QR_Dl.tikz}{}{Missing file!}}
		        & \by{D}
		      \vc{\InputIfFileExists{./figures/QR/QR_Dr.tikz}{}{Missing file!}}
	      \end{align*}

	\item The multiplication spider, scalar multiplication, and Hadamard.
	      As always the diagonal dots indicate one or more wires, and the horizontal dots indicate zero or more wires.
	      \begin{align}
		      \vc{\InputIfFileExists{./figures/QR/Z_spiderl.tikz}{}{Missing file!}} &\by{S}
		      \spider{white}{a \times b}, & 
		      \scalar{white}{} \; \scalar{white}{-\frac{1}{2}}
		        & \by{sc} \vc{\InputIfFileExists{./figures/wire/empty.tikz}{}{Missing file!}}, &
		      \vc{\InputIfFileExists{./figures/QR/QR_Hl.tikz}{}{Missing file!}}
		        & \by{H}
		      \vc{\begin{tikzpicture}
	\begin{pgfonlayer}{nodelayer}
		\node [style=none] (5) at (1.25, 1.75) {};
		\node [style=none] (6) at (1.25, -1.75) {};
		\node [style=white] (14) at (0.75, 0) {};
	\end{pgfonlayer}
	\begin{pgfonlayer}{edgelayer}
		\draw (5.center) to (6.center);
	\end{pgfonlayer}
\end{tikzpicture}
}
	      \end{align}
	\item Bialgebra rules.
	      \begin{align}
		      \vc{\InputIfFileExists{./figures/RingR/QR_BA1l.tikz}{}{Missing file!}}
		        & \by{BA1}
		      \vc{\InputIfFileExists{./figures/RingR/QR_BA1r.tikz}{}{Missing file!}},&
		      \vc{\InputIfFileExists{./figures/RingR/QR_BA2l.tikz}{}{Missing file!}}
		        & \by{BA2}
		      \vc{\InputIfFileExists{./figures/RingR/QR_BA2r.tikz}{}{Missing file!}}
	      \end{align}
	\item The $intro$ and $ortho$ rules.
	      \begin{align}
		      \vc{\begin{tikzpicture}
	\begin{pgfonlayer}{nodelayer}
		\node [style=white] (9) at (-1.5, 0.5) {$a$};
		\node [style=white] (10) at (-1.5, -0.75) {};
		\node [style=none] (11) at (-1.5, 1.5) {};
		\node [style=none] (12) at (-1.5, -1.5) {};
	\end{pgfonlayer}
	\begin{pgfonlayer}{edgelayer}
		\draw (11.center) to (9.center);
		\draw (10.center) to (12.center);
	\end{pgfonlayer}
\end{tikzpicture}
}
		        & \by{I}
		      \vc{\InputIfFileExists{./figures/RingR/QR_Ir.tikz}{}{Missing file!}} &
		      \vc{\InputIfFileExists{./figures/RingR/QR_Ol.tikz}{}{Missing file!}}
		        & \by{O}
		      \vc{\InputIfFileExists{./figures/RingR/QR_Or.tikz}{}{Missing file!}}
			  \end{align}
	\item Finally the rule that links the plus gate with its image after being translated into \ZHR\ and back again.
	\begin{align}
		      \binary[poly]{}
		        & \by{+'}
		      \vc{\InputIfFileExists{./figures/RingR/QR_plus_prime_r.tikz}{}{Missing file!}}
	      \end{align}
\end{itemize}

\section{Summary}

In our first results chapter we have introduced the new graphical calculus $\ring$,
showed its completeness using ZW,
and extended ZH.
$\ring$ itself is generated only by ring operations
(acting on states) and compact closed structure.
This sparsity of generators will allow us to use $\ring$
as a `generic' phase ring calculus in the next chapter,
and so investigate the limitations and uses of phase ring graphical calculi.
The focus of the next chapter is
on the category of phase ring graphical calculi,
more precisely the category of graphical calculi that model ring structure
in the manner of ZW, ZH and $\ring$,
and the morphisms that preserve soundness.
  \thispagestyle{empty} 
\chapter{Phase Ring Graphical Calculi and Phase Homomorphisms} \label{chapPhaseRingCalculi}
\thispagestyle{plain}
\noindent\emph{In this chapter:}
\begin{itemize}
	\item[\chapterbullet] We describe the category of phase ring graphical calculi
	\item[\chapterbullet] We justify $\ring_R$ being `initial, up to ring isomorphism'
	\item[\chapterbullet] We justify $\ring_K$ being `the generic phase field calculus'
	\item[\chapterbullet] We investigate when algebra homomorphisms lift to soundness- and proof- preserving maps
	\item[\chapterbullet] This chapter requires knowledge of the calculus $\ring$ from \S\ref{chapRingR},
	and the idea of phase algebra homomorphisms will be key to parts of \S\ref{chapConjectureInference} and \S\ref{chapConjectureVerification}.
\end{itemize}

The calculus $\ring_R$, introduced in \S\ref{chapRingR},
resulted from investigating the topics we shall cover in this chapter.
We would like to know what extra structure we achieve by describing 
a calculus as having a phase ring (or phase algebra in general)
over just a set of labels.
We shall phrase this as a category of phase ring graphical calculi,
and then investigate the morphisms of this category
that arise from morphisms of rings.
We will only be looking at phase rings that act as monoids on states of arity $0 \to 1$,
i.e. situations where elements of arity $0 \to 1$ act like elements in a ring,
with some gates of arity $2 \to 1$ acting like monoids on those states.
This is how ZW and ZH act as phase rings, with ZX acting analogously but on a phase group instead.
The phase group calculus ZQ (\S\ref{chapZQ}) does not act in this manner, for reasons covered in that chapter.

In \S\ref{secPhaseRingGeneric} we look at the category
of Phase Ring Graphical Calculi as Monoids
before exhibiting the calculi $\ring_R$, $\ZW_R$ and $\ZHR$ as objects in this category.
This is compared to the earlier work performed in Ref.~\cite{DixonKissingerOpenGraphs},
which also looked at models of algebraic structures in graphical calculi.
We then investigate how ring homomorphisms $\phi: S \to S'$
relate to morphisms out of $\ring_S$ in this category,
laying the groundwork for \S\ref{secPhaseFieldGeneric} and \S\ref{secPhaseRingHomomorphisms}.

In \S\ref{secPhaseFieldGeneric} we look at the calculus
$\ring_K$ where $K$ is a field,
showing that $\ring_K$ has an essential uniqueness in its interpretation,
as well as the property of being `initial, up to field automorphism'
for $K$-phase-field calculi into $\bit{K}$.
In \S\ref{secPhaseRingHomomorphisms}
we look at how ring morphisms can lift to
morphisms of certain graphical calculi,
and how ring homomorphisms can lift to
soundness- and proof- preserving maps of theorems.
Finally in \S\ref{secPhaseGroupHomomorphismsEtc} we give a similar treatment
to phase \emph{group} homomorphisms,
classifying the phase group endomorphisms available
in certain finite fragments of ZX,
and then show how to generalise these ideas to $\Sigma$-algebras.

We derive these results in the generality afforded to us by
the ring structure of both the phase ring
and the category $\bit{R}$.
For quantum graphical calculi, however,
we only need to work over $\bit{\bbC}$,
and we shall show later on that
this severely restricts the phase ring structure of the graphical calculus as well.
The effect of this is that
the calculus $\ring_\bbC$
will embed naturally into any universal qubit graphical calculus with a phase ring.
$\ring_\bbC$ therefore acts as a common (universal and complete) core
for all phase ring qubit graphical calculi.

\section{The category of phase ring graphical calculi} \label{secPhaseRingGeneric}

We begin this chapter with an intention of finding a `generic' phase ring calculus.
In order to tackle this problem we
build on the idea of models of PROPs
from Ref.~\cite{Maclane1965}
and introduce a category
of Phase Ring Graphical Calculi (Definition~\ref{defPRGCR}).
From there we show that $\ring_R$
has certain desirable properties,
such as the fact that
given any graphical calculus $G$ with phase ring structure $R$
there is exactly one map from $\ring_R$ to $G$ that preserves
that structure (Corollary~\ref{corGivenIsoRingRInitial}).

The author did not feel they could justify calling $\ring_R$
`the' generic phase ring calculus;
there are too many ways to model phase ring structure over $\bits{R}$.
They do, however, feel justified in calling $\ring_K$
`the' generic phase field calculus over $\bit{K}$ for
reasons that we show in \S\ref{secPhaseFieldGeneric}.
The idea of a phase ring homomorphism that
we begin in this section (Definition~\ref{defPhaseRingHomomorphism})
is continued in \S\ref{secPhaseRingHomomorphisms},
and then generalised to other algebras in \S\ref{secPhaseGroupHomomorphismsEtc}.

\subsection{Objects in \PRGC{$R$}}

We begin with as generic a notion of phase ring structure as we can.
Consider a commutative ring which is a set $S$ equipped with multiplication
$\times : 2 \to 1$, addition $+: 2 \to 1$, and with the distinguished elements $0$ and $1$.

\begin{definition}[$\rprop_S$]  \label{defRingPropS}
	We construct our generic ring PROP $\rprop_S$ for the ring $(S, +, \times)$ with the following generators:
	\begin{align}
		\forall s \in S \text{ the element} \quad & \; \state{white}{s} & 0 \to 1 \\
		\text{addition gate} \quad                & \binary{+}          & 2 \to 1 \\
		\text{multiplication gate} \quad          & \binary{\times}     & 2 \to 1
	\end{align}
	We then quotient out these two rewrite rules:
	\begin{align} \label{eqnRingPropRewrite}
		\bap{ZH}{\times}{a}{b} & \by{\text{multiplication}} \state{white}{a \times b} &
		\bap{ZH}{+}{a}{b} & \by{\text{addition}} \state{white}{a + b}
	\end{align}

	There is a category of all such PROPs,
	defined by being the image of the functor from the category of rings
	that sends $S$ to $\rprop_S$,
	and sends a ring homomorphism $\phi$ to a strict symmetric monoidal functor $\bar \phi$:
	\begin{align}
		\bar \phi: \rprop_S  & \to \rprop_{S'}                \\
		\state{white}{s} & \mapsto \state{white}{\phi(s)} &
		\binary{+}       & \mapsto \binary{+}             &
		\binary{\times}  & \mapsto \binary{\times}
	\end{align}
\end{definition}

\begin{remark} \label{remRingPropGates}
	These two rewrite rules in \eqref{eqnRingPropRewrite} imply that our addition and multiplication gates
	are commutative and associative \emph{when applied directly to states},
	purely because those properties are already present in the ring $S$.
	For example:
	\begin{align}
	\bap{ZH}{\times}{a}{b} \by{\text{multiplication}} \state{white}{a \times b} 
	= \state{white}{b \times a} \by{\text{multiplication}} \bap{ZH}{\times}{b}{a}
	\end{align}
	This is a weaker property than saying that these gates are commutative etc. in general.
\end{remark}

\begin{definition}[Models of $\rprop$]  \label{defModelOfRPROP}
	A model for $\rprop_S$
	in some graphical calculus $G$
	is a morphism of PROPs $\rprop_S \to G$.
\end{definition}

A model of $\rprop_S$ into a graphical calculus
can be seen as choosing diagrams $\set{g_s}$, $g_+$ and $g_\times$
to represent states, an addition gate, and a multiplication gate
with the syntactic behaviour suggested by their names.

\begin{example}[Degenerate model] \label{exaDegenerateModel}
	We construct the graphical calculus $\Zero$
	as having generators $O_m^n$, one for each arity $m \to n$,
	interpreted as the zero matrix of size $2^m \to 2^n$,
	and the following rewrite rules:
	\begin{align}
		O_m^n \comp O_l^m   & = O_l^n         \\
		O_a^b \tensor O_c^d & = O_{a+c}^{b+d}
	\end{align}
	For any PROP $\bbP$ there is a strict symmetric monoidal
	functor from $\bbP$ to $\Zero$;
	in particular there is a unique model of $\rprop_S$ in $\Zero$
	for any $S$, but it is not faithful.
\end{example}

\begin{definition}[\PRGC{$R$}]  \label{defPRGCR}
	The category of Phase Ring Graphical Calculi (as monoids) into $\bit{R}$,
	shortened to \PRGC{$R$},
	has as objects graphical calculi over $\bit{R}$ that contain a model of $\rprop_{S}$ for some $S$,
	and morphisms that preserve this $\rprop_S$ structure. Specifically:
	\begin{itemize}
		\item Objects are pairs $(G, \Model_S)$,
		      where $G$ is a (compact closed) graphical calculus given as a collection of generators,
		      an interpretation $\idot{G}$, and rewrite rules,
		      and $\Model_S$ is a model of $\rprop_S$ in $G$.
		\item A morphism
		      \begin{align}
			      f : (G, \Model_S) \to (G', \Model_{S'})
		      \end{align}
		      is a strict symmetric monoidal functor
		      $f_\bbD: G \to G'$ and also a ring homomorphism $f_\bbR : S \to S'$ such that
		      the right hand diagram commutes:
		      \begin{align} \label{eqnCommutativePRGC}
			      \begin{tikzcd}[ampersand replacement=\&]
				      S \arrow[d,"f_\bbR"] \& \rprop_S \arrow[d,"\bar f_\bbR"] \arrow[r,"\Model_S"] \& G \arrow[d,"f_\bbD"] \\
				      S' \& \rprop_{S'} \arrow[r,"\Model_{S'}"] \& G'
			      \end{tikzcd}
		      \end{align}
	\end{itemize}
\end{definition}

\begin{proposition} \label{propPRGCWellFormed}
	The above definition of \PRGC{$R$} is indeed a category.
\end{proposition}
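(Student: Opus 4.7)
The plan is to verify the three data required of a category---identities, a well-defined composition, and associativity---with most of the content being to check that the composition of two compatible pairs of morphisms is itself compatible. Since a morphism in \PRGC{$R$} is a pair $(f_\bbD, f_\bbR)$ consisting of a strict symmetric monoidal functor $f_\bbD$ and a ring homomorphism $f_\bbR$ subject to the square in \eqref{eqnCommutativePRGC}, both components live in known categories (PROPs and commutative rings), so the categorical axioms reduce to pasting diagrams.

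First I would define the identity on $(G, \Model_S)$ to be the pair $(\id_G, \id_S)$. The required square collapses because $\bar{\id_S} = \id_{\rprop_S}$, which follows from the functoriality of the assignment $S \mapsto \rprop_S$ asserted in Definition~\ref{defRingPropS}. Next, given morphisms
\begin{align}
(f_\bbD, f_\bbR) : (G, \Model_S) \to (G', \Model_{S'}), \qquad (g_\bbD, g_\bbR) : (G', \Model_{S'}) \to (G'', \Model_{S''}),
\end{align}
I would define their composition componentwise as $(g_\bbD \circ f_\bbD,\ g_\bbR \circ f_\bbR)$. The first coordinate is a strict symmetric monoidal functor and the second a ring homomorphism by standard results.

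The only content is checking that the composite satisfies the compatibility square. I would paste the two squares for $f$ and $g$ horizontally:
\begin{align}
\begin{tikzcd}[ampersand replacement=\&]
\rprop_S \arrow[d,"\bar f_\bbR"] \arrow[r,"\Model_S"] \& G \arrow[d,"f_\bbD"] \\
\rprop_{S'} \arrow[d,"\bar g_\bbR"] \arrow[r,"\Model_{S'}"] \& G' \arrow[d,"g_\bbD"] \\
\rprop_{S''} \arrow[r,"\Model_{S''}"] \& G''
\end{tikzcd}
\end{align}
Both inner squares commute by assumption, hence so does the outer rectangle. Using $\overline{g_\bbR \circ f_\bbR} = \bar g_\bbR \circ \bar f_\bbR$ (again by functoriality of $\rprop_-$), this outer rectangle is exactly the compatibility square for the composite pair. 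Associativity of composition and the unit laws then follow immediately since composition is defined componentwise and each component category (PROPs and commutative rings) is associative and unital.

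The main potential obstacle is ensuring that $S \mapsto \rprop_S$ really is functorial in the sense required---i.e. that $\bar{\phi \circ \psi} = \bar \phi \circ \bar \psi$ and $\bar{\id_S} = \id_{\rprop_S}$---but this is immediate from the definition of $\bar \phi$ on generators in Definition~\ref{defRingPropS}, since both sides act identically on the generating states $\state{white}{s}$ and trivially on $\binary{+}$ and $\binary{\times}$. No subtleties from the quotient of Equation~\eqref{eqnRingPropRewrite} arise because $\bar \phi$ descends to the quotient automatically: $\phi$ being a ring homomorphism ensures that the rewrite rules of $\rprop_S$ are sent to instances of the rewrite rules of $\rprop_{S'}$.
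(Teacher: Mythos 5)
Your proof is correct and follows essentially the same route as the paper: componentwise identities and composition, commutativity of the pasted compatibility squares, and associativity inherited from the component categories. The only additions are your explicit checks of the functoriality of $S \mapsto \rprop_S$ and the descent of $\bar\phi$ to the quotient, which the paper leaves implicit (and note your pasted squares are stacked vertically, despite the word ``horizontally'').
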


\begin{proof} \label{prfPropPRGCWellFormed}
	We just need to show the following properties:
	\begin{itemize}
		\item Identity morphisms $f$ are given by the identity morphism $f_\bbD$
		      of compact closed PROPs and $f_\bbR$ of commutative rings respectively,
		      and this satisfies the condition of preserving the distinguished diagrams $\set{g_s}, g_+, g_\times$
		\item Composition of morphisms exists for $f_\bbD$ and $f_\bbR$,
		      and the commutative diagram of \eqref{eqnCommutativePRGC} composes vertically
		\item Associativity of composition of morphisms follows from the associativity of composition of the functors
		      $f_\bbD$ and the ring homomorphisms $f_\bbR$
	\end{itemize}
\end{proof}

\begin{remark} \label{remSemanticModel}
	In this thesis we are using a \emph{syntactic} version
	of the models of $\rprop_S$.
	In Definition~\ref{defRingPropS},
	where we stated that addition and multiplication
	should be rules (i.e. equivalences)
	in $\rprop_S$ we could instead
	have given a requirement on the interpretation
	of any model, such as:
	\begin{align}
		\interpret{\Model_S\left(\bap{ZH}{\times}{a}{b}\right)} = \interpret{\Model_S\left(\state{white}{a \times b}\right)} \\
		\interpret{\Model_S\left(\bap{ZH}{+}{a}{b}\right)} = \interpret{\Model_S\left(\state{white}{a + b}\right)}
	\end{align}

	The author feels that this semantic version
	better reflects the way these calculi are researched:
	First by constructing generators with certain interpretations
	and then finding rules.
	The syntactic version we use, however,
	is far neater.
	All the results in this chapter
	can be adapted to semantic versions with little effort.
\end{remark}

\begin{definition}  \label{defRingREverywhere}
	We will regard the calculus $\ring_S$ as being an object of \PRGC{$R$}
	whenever $S$ is a subring of $R$, exhibited as
	$\ring_S$ with
	interpretation $\idot{} : \ring_S \to \bit{S}$ naturally extending to $\idot{} : \ring_S \to \bit{R}$,
	and with model $\Model_S$ the identification of the ring structure inherent in $\ring_S$:
	\begin{align}
		\state{white}{s} & \mapsto \state{white}{s} & \binary{+} & \mapsto \binary[poly]{} & \binary{\times} & \mapsto \binary[white]{}
	\end{align}
	The addition and multiplication rules of $\ring_S$
	are the image of the addition and multiplication rules of $\rprop_S$ under $\Model_S$.
\end{definition}

\begin{remark} \label{remNotNecSubring}
	Definition~\ref{defRingREverywhere}
	shows that whenever $S \subset R$
	we have a model of $\rprop_S$ in $\PRGC{\bit{R}}$.
	It could well be the case that
	there are faithful models of some $\rprop_S$
	in $\PRGC{\bit{R}}$ where $R \subsetneq S$,
	but we will show in \S\ref{secPhaseFieldGeneric}
	that this cannot be the case when $R$ is a field.
\end{remark}

\begin{lemma} \label{lemPRGCZWZH}
	The calculi \ZW$_S$ and \ZH$_S$
	(with the usual phase ring structure)
	can be exhibited as objects of $\PRGC{\bit{R}}$
	whenever $S$ is a subring of $R$
\end{lemma}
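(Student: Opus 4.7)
The plan is to exhibit each calculus as an object $(G,\Model_S)$ of $\PRGC{\bit{R}}$, which amounts to two tasks: first, verifying that $\ZW_S$ and $\ZH_S$ qualify as graphical calculi over $\bit{R}$ in the sense of Definition~\ref{defPRGCR}; second, constructing a model of $\rprop_S$ in each, which requires a choice of distinguished diagrams for states, addition, and multiplication, together with a syntactic verification that the image of the two rewrite rules in Definition~\ref{defRingPropS} holds in the target calculus.

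For the first task, since $S$ is a subring of $R$, the inclusion $S \hookrightarrow R$ lifts to a strict symmetric monoidal inclusion $\bit{S} \hookrightarrow \bit{R}$, so the usual interpretations $\idot{\ZW_S} : \ZW_S \to \bit{S}$ and $\idot{\ZH_S} : \ZH_S \to \bit{S}$ compose to give interpretations into $\bit{R}$. Both calculi are compact closed PROPs with their own rulesets, so the ambient structure required by Definition~\ref{defGraphicalCalculus} is already in place.

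For $\ZW_S$, I would define $\Model_S^{\ZW}$ by sending $\state{white}{s}$ to the white Z-spider of phase $s$, sending $\binary{\times}$ to the $2 \to 1$ white Z-spider, and sending $\binary{+}$ to the ZW diagram that implements addition (the same construction used in the translation $\ring_R \to \ZW_R$ of Figure~\ref{figRingRZWTranslation}, namely the $2 \to 1$ W-spider post-composed with the NOT, or equivalently the plus diagram of Lemma~\ref{lemRingRTranslatedPlus} read inside $\ZW_S$). The multiplication rewrite then reduces to the familiar Z-spider fusion law, and the addition rewrite follows from the rule $rng_+^{r,s}$ (Equation~\eqref{eqnZW-QR-rngplusrs}) combined with spider fusion. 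For $\ZH_S$, I would use the translation in Figure~\ref{figRingRZHTranslation}: send $\state{white}{s}$ to the prescribed composite of a phase-free Z-spider with an H-box of label $s$, send $\binary{\times}$ to the configuration validated by the (M) rule, and send $\binary{+}$ to the configuration validated by the (A) rule. The rewrite-rule images are then essentially the (M) rule and (a short derivation involving) the (A) rule together with (HS1), (HS2), and (ZS1) of Figure~\ref{figZHRRules}.

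The main obstacle is verifying the syntactic rewrite for addition in $\ZH_S$: unlike multiplication, which is immediately the (M) rule, addition in $\ZH_S$ is only indirectly visible through the (A) rule applied in the right context, so one has to assemble a short derivation. Fortunately, since $\ZH_S$ is complete (Theorem~\ref{thmZHRComplete}) and the equation $g_+ \comp (g_a \tensor g_b) = g_{a+b}$ is automatically semantically valid (the interpretation of both sides is $\ket{0} + (a+b)\ket{1}$), completeness guarantees that such a derivation exists; an explicit one can be extracted along the lines of the proofs in Appendix~\ref{appZHRing}. With both models so constructed, the pairs $(\ZW_S, \Model_S^{\ZW})$ and $(\ZH_S, \Model_S^{\ZH})$ satisfy the object requirements of $\PRGC{\bit{R}}$, completing the lemma.
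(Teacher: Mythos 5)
Your proposal is correct and takes essentially the same approach as the paper: extend the interpretations along $\bit{S} \hookrightarrow \bit{R}$, take the model sending states to phased states, $\binary{\times}$ to the white spider, and $\binary{+}$ to the ZW plus diagram resp.\ the ZH plus diagram, and then verify the two rewrite rules of $\rprop_S$ syntactically in each calculus. The only divergence is cosmetic: where you invoke completeness of $\ZH_S$ (Theorem~\ref{thmZHRComplete}) to guarantee derivability of the addition rewrite, the paper simply records the short explicit derivations (via $cut_z$, $ba_{zw}$, $rng_+$ in ZW and (HS1), (A), (M) in ZH), which is marginally more self-contained but not a different argument.
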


\begin{proof} \label{prfLemPRGCZWZH}
	For $\ZW_S$ we extend its interpretation to be into $\bit{R}$,
	and use the model $\Model_S$ given by:
	\begin{align}
		\Model_S : \rprop_S & \to \ZW_S                                        \\
		\state{white}{s}    & \mapsto \state{white}{s} \quad \forall s \in S \\
		\binary{+}          & \mapsto \vc{\InputIfFileExists{./figures/ZW/plus.tikz}{}{Missing file!}}                      \\
		\binary{\times}     & \mapsto \binary[white]{}
	\end{align}

	Likewise for $\ZH_S$ we extend the interpretation and use the model given by:
	\begin{align}
		\Model_S : \rprop_S & \to \ZH_S                                     \\
		\state{white}{s}    & \mapsto \state{ZH}{s} \quad \forall s \in S \\
		\binary{+}          & \mapsto \vc{\InputIfFileExists{./figures/ZH/ZH_plus.tikz}{}{Missing file!}}                \\
		\binary{\times}     & \mapsto \binary[white]{}
	\end{align}
	Since these are the `obvious'
	models of $\rprop_S$ suggested by the addition and multiplication
	rules of $\ZW$ and $\ZH$
	we shall simply refer to these objects of $\PRGC{\bit{R}}$
	as $\ZW_S$ and $\ZH_S$.
	The requirements of Definition~\ref{defPRGCR}
	are shown by checking:
	\begin{align}
		ZW \entails \vc{\InputIfFileExists{./figures/ZW/plusab.tikz}{}{Missing file!}}    & \by{cut_z,ba_{zw},rng_+} \state{white}{a+b} &
		ZW \entails \bap{white}{}{a}{b} & \by{cut_z} \state{white}{a\times b} \\
		ZH \entails \vc{\InputIfFileExists{./figures/ZH/ZH_plusab.tikz}{}{Missing file!}} & \by{HS1,A,M!} \state{ZH}{a+b}               &
		ZH \entails \vc{\InputIfFileExists{./figures/ZH/timesab.tikz}{}{Missing file!}} &\by{M} \state{ZH}{a\times b}
	\end{align}
\end{proof}

\begin{definition}[Generic object in $\PRGC{\bit{R}}$]  \label{defGenericObjectInPRGC}
	We will simply write $G_S$ to indicate a generic object
	$(G, \Model_S)$ in $\PRGC{\bit{R}}$.
	We will also distinguish the image
	of the generators of $\rprop_S$ under $\Model_S$ as:
	\begin{align}
		g_s & := \Model_S\left(\state{white}{s}\right) &
		g_+ & := \Model_S\left(\binary{+}\right)       &
		g_\times & := \Model_S\left(\binary{\times}\right)
	\end{align}
\end{definition}

The idea behind the definition we give for $\PRGC{\bit{R}}$
is similar to that of MacLane's definition of algebras for a PROP (Definition~\ref{defPROPAlg}).
In the usual definition a PROP was held fixed and all possible
models of that PROP were considered.
In this work we want to be able to consider
multiple phase rings (i.e. a related family of PROPs)
while also restricting the models
into only graphical calculi over $\bit{R}$,
and only those models that are the identity on objects.

\begin{definition}[Algebras for a PROP {\cite[\S5]{Maclane1965}}]  \label{defPROPAlg} 
	For a PROP $\bb{P}$ representing an algebraic theory $\cal{A}$
	an \emph{algebra of type $\cal{A}$} is a product-preserving functor from $\bb{P}$ to
	the category of sets.
\end{definition}

\subsection{Some morphisms in \PRGC{$R$}}

Now that we know the objects in this category,
let us give some example morphisms.
Of the phase ring graphical calculi
we have discussed $\ring_S$
has the least `extra' structure
beyond that which is necessary for phase rings.
It is perhaps unsurprising that the morphisms
out of $\ring_S$ are determined by
the ring morphisms out of $S$.

\begin{proposition}[Morphisms out of $\ring_S$] \label{propHomInPRGCHomInRing}
	\begin{align}
		\Hom_{\PRGC{\bit{R}}}[\ring_S, G_{S'}] & \iso \Hom_{\text{Ring}}[S, S']
	\end{align}
\end{proposition}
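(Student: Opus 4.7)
The plan is to construct a bijection $\Phi : \Hom_{\PRGC{\bit{R}}}[\ring_S, G_{S'}] \to \Hom_{\text{Ring}}[S, S']$ together with an explicit inverse $\Psi$. In the forward direction, simply set $\Phi(f_\bbD, f_\bbR) := f_\bbR$, which is manifestly a ring homomorphism. In the backward direction, given a ring homomorphism $\phi : S \to S'$, I would define $\Psi(\phi) := (f_\bbD^\phi, \phi)$ where $f_\bbD^\phi : \ring_S \to G$ is the strict symmetric monoidal functor fixed on non-wire generators by
\begin{align}
\state{white}{s} \mapsto g'_{\phi(s)}, \qquad \binary[white]{} \mapsto g'_\times, \qquad \binary[poly]{} \mapsto g'_+,
\end{align}
and acting as the identity on wires, swaps, cups, and caps via the compact closed PROP structure. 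By construction, evaluating $f_\bbD^\phi \circ \Model_S$ on each generator of $\rprop_S$ agrees with $\Model_{S'} \circ \bar\phi$ on that same generator, so the square in \eqref{eqnCommutativePRGC} commutes, confirming that $\Psi(\phi)$ really is a morphism in $\PRGC{\bit{R}}$.

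Next I would verify that $\Phi$ and $\Psi$ are mutually inverse. One direction is immediate: $\Phi(\Psi(\phi)) = \phi$ by the definition of $\Psi$. For the other direction, the commutativity of \eqref{eqnCommutativePRGC} applied to the three generators of $\rprop_S$ pins down $f_\bbD(g_s) = g'_{f_\bbR(s)}$, $f_\bbD(g_+) = g'_+$, and $f_\bbD(g_\times) = g'_\times$. Since $\ring_S$ is generated, as a compact closed strict symmetric monoidal category, precisely by these three families of generators together with the wires/swaps/cups/caps, and a strict symmetric monoidal functor is determined on objects (which are natural numbers under addition in a PROP) and on generators, we recover $f_\bbD = f_\bbD^{f_\bbR}$, i.e. $\Psi(\Phi(f_\bbD, f_\bbR)) = (f_\bbD, f_\bbR)$.

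The hard part, as I see it, is the well-definedness of $f_\bbD^\phi$ in step two, because $\ring_S$ carries many rewrite rules beyond the two rules of $\rprop_S$ (namely those of Figures~\ref{figRingRRules1} and \ref{figRingRRules2}). A priori, $f_\bbD^\phi$ only obviously preserves the addition and multiplication rewrite rules of $\rprop_S$, so sending equivalent diagrams in $\ring_S$ to equivalent diagrams in $G$ need not hold in general. However, Definition~\ref{defMorphismGraphicalCalculi} takes morphisms of graphical calculi to be strict symmetric monoidal functors of the underlying PROPs without insisting that the equivalence relation $\sim$ be preserved, and under this convention well-definedness is automatic. In applications where we want $f_\bbD^\phi$ to additionally respect $\sim$ — for example when $G_{S'}$ is $\ZW_{S'}$, $\ZH_{S'}$, or $\ring_{S'}$, where translations of the $\ring_S$ rules are derivable — one would additionally cite the derivations of \S\ref{secTranslatedRulesZWRingR}, applied after substituting $\phi(s)$ for each scalar $s$, to conclude that each rule in $\ring_S$ maps to an equality in $G$.
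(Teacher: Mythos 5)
Your proposal is correct and follows essentially the same route as the paper: every ring homomorphism $\phi$ lifts to the morphism $(\phi_\bbD,\phi)$ acting on generators exactly as you describe, and conversely the commuting square in Definition~\ref{defPRGCR} forces $f_\bbD$ of any morphism $(f_\bbD,f_\bbR)$ to take the prescribed values on the generators of $\ring_S$, so the two assignments are mutually inverse. Your additional remark on well-definedness with respect to the rewrite rules is consistent with the paper's conventions (Definition~\ref{defMorphismGraphicalCalculi} does not require the functor to respect $\sim$), a point the paper's proof leaves implicit.
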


\begin{proof} \label{prfPropHomInPRGCHomInRing}
	We first note that every map $\phi$ in $\Hom_{\text{Ring}}[S, S']$
	lifts to the map
	\begin{align}
		\ring_S \xrightarrow{(\phi_\bbD, \phi)} G_{S'}
	\end{align}
	where
	\begin{align}
		\phi_\bbD : \ring_S & \to G               \\
		\state{white}{a} & \mapsto g_{\phi(a)} &
		\binary{+}       & \mapsto g_+         &
		\binary{\times}     & \mapsto g_\times
	\end{align}
	which satisfies the conditions of Definition~\ref{defPRGCR}.
	Next we note that
	for a generic morphism $f = (f_\bbD, f_\bbR)$
	of $\Hom_{\PRGC{\bit{R}}}[\ring_S, G_{S'}]$,
	the diagram map $f_\bbD$ is uniquely determined by $f_\bbR$,
	because $\ring_S$ is generated by
	\begin{align}
		\state{white}{a},\ \binary{+},\ \binary{\times},
	\end{align}
	and the conditions of Definition~\ref{defPRGCR} determine $f_\bbD$ as:
	\begin{align}
		f_\bbD : \ring_S   & \to G                 \\
		\state{white}{s}\  & \mapsto g_{f_\bbR(s)} &
		\binary{+}         & \mapsto g_+           &
		\binary{\times}    & \mapsto g_\times
	\end{align}
	Therefore every map out of $\ring_S$ in $\PRGC{\bit{R}}$ is of the form $f = (\phi_\bbD, \phi)$.
	We therefore have the equivalence:
	\begin{align}
		\Hom_{\PRGC{\bit{R}}}[\ring_S, G_{S'}] & \iso \Hom_{\text{Ring}}[S, S']     \\
		(\phi_\bbD, \phi)                      & \between \phi \label{eqnLiftedPhi}
	\end{align}
\end{proof}

Suppose that instead of just looking at morphisms
out of $\ring_R$ we also wanted to look at morphisms
into some generic object $G_S$ that preserved ring structure.

\begin{corollary} \label{corGivenIsoRingRInitial}
	Given a generic object $G_{S'}$ and an isomorphism $\phi : S \iso S'$
	there is exactly one morphism $(f_\bbD, f_\bbR)$
	from $\ring_S$ to $G_{S'}$
	such that $f_\bbR = \phi$
\end{corollary}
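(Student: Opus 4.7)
The plan is to apply Proposition~\ref{propHomInPRGCHomInRing} directly. That proposition establishes the bijection $\Hom_{\PRGC{\bit{R}}}[\ring_S, G_{S'}] \iso \Hom_{\text{Ring}}[S, S']$, realised by projection onto the second coordinate $(f_\bbD, f_\bbR) \mapsto f_\bbR$. Since $\phi : S \iso S'$ is in particular a ring homomorphism, it has exactly one preimage under this bijection, namely the pair $(\phi_\bbD, \phi)$, where $\phi_\bbD$ is the strict symmetric monoidal functor determined on generators by $\state{white}{s} \mapsto g_{\phi(s)}$, $\binary[poly]{} \mapsto g_+$, and $\binary[white]{} \mapsto g_\times$. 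I would write this out in one or two lines, citing the proposition for both existence (the lifted pair is a morphism in $\PRGC{\bit{R}}$) and uniqueness (any morphism out of $\ring_S$ is determined by its action on the three generators, and the commuting square of Definition~\ref{defPRGCR} forces $\phi_\bbD$ to be precisely this functor).

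I would then add a short remark that the isomorphism hypothesis on $\phi$ plays no role in the existence and uniqueness argument; the analogous statement holds verbatim for any ring homomorphism. The point of phrasing the corollary in terms of isomorphisms is presumably to formalise the slogan that $\ring_S$ is \emph{initial up to ring isomorphism} among objects of $\PRGC{\bit{R}}$: strict initiality fails because the phase ring $S'$ appearing in $G_{S'}$ may merely be isomorphic to, rather than equal to, $S$, and fixing such an isomorphism picks out the canonical comparison morphism from $\ring_S$.

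There is no substantive obstacle; the corollary is an immediate specialisation of the bijection proved in Proposition~\ref{propHomInPRGCHomInRing}. The only point worth double-checking in the write-up is that the lifted functor $\phi_\bbD$ satisfies the commuting square~\eqref{eqnCommutativePRGC} of Definition~\ref{defPRGCR}, but this verification has already been carried out inside the proof of the proposition and only needs to be cited.
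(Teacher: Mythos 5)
Your proposal is correct and matches the paper's (implicit) argument: the corollary is stated as an immediate consequence of Proposition~\ref{propHomInPRGCHomInRing}, whose bijection $(\phi_\bbD,\phi)\leftrightarrow\phi$ gives exactly one morphism with $f_\bbR=\phi$. Your added observation that the isomorphism hypothesis is not needed for existence and uniqueness, but only for the ``initial up to ring isomorphism'' slogan, is consistent with the paper's Remark~\ref{remRingInitial}.
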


\begin{remark} \label{remRingInitial}
	Corollary~\ref{corGivenIsoRingRInitial}
	is the justification for saying that $\ring_S$
	is `initial for graphical calculi with phase ring $S$'.
	The degenerate example
	$\Zero$ can be seen as similarly terminal;
	for any ring $S$ there is a unique model $\Model_S$
	of $\rprop_S$ in $\Zero$,
	and for any object $G_S$ of $\PRGC{\bit{R}}$
	there is a unique morphism from $G_S$ to $(\Zero, \Model_S)$
	formed by sending every diagram in $G_S$ to the unique diagram of
	the same arity in $\Zero$.
\end{remark}

Another example of a morphism in $\PRGC{\bit{R}}$
is that of the phase ring homomorphism
we are about to define.
Phase ring homomorphisms are
in essence the lifting of a ring homomorphism
$\phi : S \to S'$ not to $\bar \phi : \rprop_S \to \rprop_{S'}$
but to a functor between two members of
a parameterised family of calculi
(such as $\ring_S \to \ring_{S'}$).
Such a lifting is not guaranteed to exist,
and depends on both the calculus and the model.
We will define these phase ring homomorphisms for
$\ring_S$, $\ZH_S$ and $\ZW_S$ in \S\ref{secPhaseRingHomomorphisms},
but the definition relies on
the way phases are presented in the calculus.

\begin{definition}[Phase ring homomorphism for $\ring$]  \label{defPhaseRingHomomorphism}
	Given a ring homomorphism $\phi: S \to S'$,
	we can lift $\phi$ to the symmetric monoidal functor $\hat \phi : \ring_S \to \ring_{S'}$
	where we let $\phi$ act on each phase:
	\begin{align}
		\state{white}{a} & \mapsto \state{white}{\phi(a)} & \binary[poly]{} & \mapsto \binary[poly]{} & \binary[white]{} & \mapsto \binary[white]{}
	\end{align}
	This $\hat \phi$ is called a `phase ring homomorphism',
	with the pair $(\hat \phi, \phi)$ being a morphism in $\PRGC{\bit{R}}$.
	There is a similar lifting for semantics,
	where $\phi$ is lifted to the symmetric monoidal functor
	$\tilde \phi : \bits{S} \to \bits{S'}$ which is the identity on objects,
	and acts on matrices as:
	\begin{align}
		\begin{pmatrix}
			a_{11} & a_{12} & \dots \\
			a_{21} & a_{22} & \dots \\
			\vdots
		\end{pmatrix} \mapsto
		\begin{pmatrix}
			\phi(a_{11}) & \phi(a_{12}) & \dots \\
			\phi(a_{21}) & \phi(a_{22}) & \dots \\
			\vdots
		\end{pmatrix}
	\end{align}
	We touch on this structure of $(\phi, \hat \phi, \tilde \phi)$ again in \S\ref{secPhaseRingHomomorphisms}.
\end{definition}

Note that in \PRGC{$R$} we have no restrictions
on the ring substructures we are modelling with $\rprop_S$.
There is also no interplay between the interpretation \idot{} and the ring structure.
In the next subsection we limit ourselves to \PRGC{$K$},
for $K$ a field,
and show the very strict requirements that result from \emph{faithful} interpretations of $\rprop_S$.
After we have examined fields we shall return to the idea of a phase ring homomorphism.

\section{The generic phase field calculus} \label{secPhaseFieldGeneric}

In this section $K$ is a field, and $R$ is a commutative ring.
Faithfulness of the interpretation for $\rprop_R$
(Definition~\ref{defFaithfulInterpretation})
not only allows us to avoid degenerate cases,
but also imposes profound restrictions on the interpretation which we will cover in this section.
We first note that the restriction forces two important properties for the states
$\smallstate{white}{0}$ and $\smallstate{white}{1}$;
in particular that they are both non-zero, and non-colinear,
and so form a basis for $K^2$ (Lemma~\ref{lemUniquenessBasis})
which in turn forces the interpretation of the other generators of $\rprop_R$
(Lemma~\ref{lemRingRepFunctions}).
From there we show several properties
of $\ring_K$, which we summarise in Remark~\ref{remRingKGeneric}.

\begin{definition}[Faithful interpretation, and faithful model]  \label{defFaithfulInterpretation}
	An interpretation $\idot{G}$ of a graphical calculus
	is called faithful if no two generators are sent to the same
	morphism.
	A model of $\rprop_S$ in $G$ is faithful if no two generators of $\rprop_S$
	are sent to the same morphism by the composition of $\Model_S$ and $\idot{G}$.
\end{definition}

\begin{definition}[Treating $\rprop_R$ as a graphical calculus]  \label{defRingPropInterpreted}
	We create the (explicitly not compact-closed) graphical calculus $\rpropint_R$
	as the PROP $\rprop_R$
	of Definition~\ref{defRingPropS}
	with a faithful interpretation $\idot{}$ into $\bit{K}$.
\end{definition}

\begin{lemma}\label{lemRingRepNonzero}
	In $\rpropint_R$
	\begin{align}
		\forall r \in R \quad \interpret{\state{white}{r}} \neq 0
	\end{align}
\end{lemma}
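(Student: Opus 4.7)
I will argue by contradiction. Suppose that $\interpret{\state{white}{r_0}} = 0$ for some $r_0 \in R$. The plan is to use the addition rewrite rule of $\rprop_R$, which is preserved by $\idot{}$, to propagate this vanishing to every state $\state{white}{t}$, thereby violating faithfulness.

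Concretely, I will work with the rewrite rule $\bap{ZH}{+}{r_0}{s} = \state{white}{r_0 + s}$, which holds in $\rprop_R$ (and hence in $\rpropint_R$) for every $s \in R$. Applying the interpretation functor gives
\begin{align}
\interpret{\state{white}{r_0 + s}} = \interpret{\binary{+}} \circ \bigl(\interpret{\state{white}{r_0}} \otimes \interpret{\state{white}{s}}\bigr).
\end{align}
Here the codomain of $\idot{}$ is $\bits{K}$, whose morphisms are $K$-linear maps between $K$-vector spaces, so tensoring the zero vector in $K^2$ with any vector yields the zero vector in $K^4$, and post-composition with any linear map preserves zero. Thus $\interpret{\state{white}{r_0 + s}} = 0$ for every $s \in R$. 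Since $s \mapsto r_0 + s$ is a bijection on $R$, this forces $\interpret{\state{white}{t}} = 0$ for every $t \in R$.

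In particular, assuming $R$ is nontrivial (so that $0 \neq 1$), the two distinct generators $\state{white}{0}$ and $\state{white}{1}$ of $\rprop_R$ are both sent to the zero morphism of $\bit{K}$, contradicting the faithfulness of $\idot{}$ assumed in Definition~\ref{defRingPropInterpreted}. Hence no such $r_0$ exists, establishing the claim.

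The argument is essentially immediate once we exploit the interplay between a syntactic rewrite rule and a semantic annihilation property; the only subtle step is noticing that the \emph{addition} rule (rather than multiplication) is the right lever, because $r_0 + s$ sweeps out all of $R$ as $s$ varies, whereas $r_0 \cdot s$ only sweeps out the principal ideal $(r_0)$ and would give no information when $r_0 = 0$. I expect no real obstacle beyond choosing this rule and invoking bilinearity of $\tensor$ in $\bits{K}$.
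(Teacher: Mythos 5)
Your proof is correct and follows essentially the same strategy as the paper: propagate the assumed zero through a rewrite rule using $0 \tensor M = 0$ and functoriality of $\idot{}$, and conclude that $\interpret{\state{white}{0}} = \interpret{\state{white}{1}} = 0$, contradicting faithfulness. The only difference is cosmetic: the paper first uses the multiplication rule (via $r \times 0 = 0$) to kill $\state{white}{0}$ and then the addition rule to reach $\state{white}{1}$, whereas you reach both in one step via the addition rule with $s$ ranging over $R$; you are also right to flag explicitly the nontriviality assumption $0 \neq 1$, which the paper uses implicitly.
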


\begin{proof}
	We rely on the fact that for any matrix $M$, $0 \tensor M = 0$.
	\begin{align}
		\text{assume\ }  & \interpret{\state{white}{r}} = 0         \\
		\implies & \interpret{\bap{ZH}{\times}{r}{0}} = 0 & \text{by\ } 0 \tensor M = 0 \\
		\implies &\interpret{\bap{ZH}{\times}{r}{0}} \by{\times}   \interpret{\state{white}{0}}   = 0     \\
		\implies & \interpret{\bap{ZH}{+}{0}{1}} = 0      & \text{by\ } 0 \tensor M = 0 \\
		\implies & \interpret{\bap{ZH}{+}{0}{1}} \by{+} \interpret{\state{white}{1}    } = 0 \\
		\implies & \interpret{\state{white}{0}} = \interpret{\state{white}{1}}
	\end{align}
	Which contradicts faithfulness of the generators.
\end{proof}

\begin{lemma}  \label{lemRingBasis}
	In $\rpropint_R$
	\begin{align}
		\interpret{\smallstate{white}{0}} \neq \interpret{\smallstate{white}{1}} \; \lambda, \quad \lambda \in K
	\end{align}
\end{lemma}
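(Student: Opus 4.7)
The plan is to argue by contradiction using the multiplication rule of $\rpropint_R$ together with the linearity of the interpretation functor. Let me abbreviate $v_r := \interpret{\state{white}{r}} \in K^2$ for $r \in R$, and let $M \colon K^2 \otimes K^2 \to K^2$ denote the interpretation of the multiplication gate $\binary{\times}$. The multiplication rewrite rule of Definition~\ref{defRingPropS} forces $M(v_a \otimes v_b) = v_{ab}$ for all $a,b \in R$, and by the previous lemma each $v_r$ is nonzero.

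Suppose for contradiction that $v_0 = \lambda v_1$ for some $\lambda \in K$. Since $v_0 \neq 0$, we have $\lambda \neq 0$, so $v_1 = \lambda^{-1} v_0$. Apply the multiplication rule with the first argument $1$ and the second argument an arbitrary $r \in R$: since $M$ is a $K$-linear map and the tensor product over $K$ is $K$-bilinear,
\begin{align}
v_r \;=\; v_{1 \cdot r} \;=\; M(v_1 \otimes v_r) \;=\; M\bigl(\lambda^{-1} v_0 \otimes v_r\bigr) \;=\; \lambda^{-1} M(v_0 \otimes v_r) \;=\; \lambda^{-1} v_{0 \cdot r} \;=\; \lambda^{-1} v_0.
\end{align}
Thus $v_r = \lambda^{-1} v_0$ for \emph{every} $r \in R$. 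Specialising to $r = 0$ yields $v_0 = \lambda^{-1} v_0$, and because $v_0 \neq 0$ we conclude $\lambda = 1$. But then $v_0 = v_1$ while $0 \neq 1$ in $R$, contradicting the faithfulness of the interpretation in $\rpropint_R$.

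The main (minor) subtlety will be justifying the step $M(\lambda^{-1} v_0 \otimes v_r) = \lambda^{-1} M(v_0 \otimes v_r)$: this is just bilinearity of the monoidal product in $\rmod{K}$ combined with linearity of $M$, so nothing deep is required. The reason the argument works is that the multiplication rule effectively witnesses $v_1$ as a ``left identity vector'' for $M$, and once $v_1$ is forced to be a scalar multiple of $v_0$ the same identity property must then hold for every $v_r$, collapsing the whole family of states onto a single line through $v_0$ and destroying faithfulness.
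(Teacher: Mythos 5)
Your proof is correct and takes essentially the same route as the paper's: both argue by contradiction, use the quotiented multiplication rule together with $\interpret{\smallstate{white}{0}} \neq 0$ (from the preceding lemma) and the field structure of $K$ to force $\lambda = 1$, and then contradict faithfulness via $\interpret{\smallstate{white}{0}} = \interpret{\smallstate{white}{1}}$. The only cosmetic difference is that you multiply by the state $1$ on the left and obtain $v_r = \lambda^{-1}v_0$ for all $r$ before specialising to $r=0$, whereas the paper computes $\interpret{\,0\times 0\,}$ directly and substitutes the hypothesis into one slot to get $v_0 = \lambda v_0$; the underlying calculation is the same.
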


\begin{proof} \label{prfLemRingBasis} By contradiction
	\begin{align}
		\interpret{\state{white}{0}}  \by{\times} \interpret{\bap{ZH}{\times}{0}{0}}
		\by{hyp.} \interpret{\bap{ZH}{\times}{0}{1}}  \lambda
		\by{\times} \interpret{\state{white}{0}} \lambda
	\end{align}
	Note that by Lemma~\ref{lemRingRepNonzero} we know that $\interpret{\smallstate{white}{0}} \neq 0$,
	and so because $K$ is a field we know that $\lambda=1$.
	Therefore $\interpret{\smallstate{white}{0}} = \interpret{\smallstate{white}{1}}$,
	which contradicts faithfulness.
\end{proof}

\begin{lemma} \label{lemUniquenessBasis}
	The interpretations $\interpret{\smallstate{white}{0}}$ and $\interpret{\smallstate{white}{1}}$ are

	\begin{align}
		\interpret{\state{white}{0}} & := \begin{pmatrix}
			1 \\ 0
		\end{pmatrix} &
		\interpret{\state{white}{1}} & := \begin{pmatrix}
			1 \\ 1
		\end{pmatrix}
	\end{align}

	up to change of basis.
\end{lemma}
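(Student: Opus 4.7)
The plan is to combine the two preceding lemmas to conclude that $v_0 := \interpret{\smallstate{white}{0}}$ and $v_1 := \interpret{\smallstate{white}{1}}$ are linearly independent vectors in $K^2$, and then exhibit an explicit change of basis that simultaneously sends them to $(1,0)^T$ and $(1,1)^T$. The phrase ``up to change of basis'' here means that we are allowed to post-compose the interpretation with a $K$-linear automorphism of $K^2$, which is compatible with the functorial structure of the interpretation into $\bit{K}$.

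First I would note that by Lemma~\ref{lemRingRepNonzero} the vector $v_0$ is nonzero, and by Lemma~\ref{lemRingBasis} $v_1$ is not a scalar multiple of $v_0$. Since $K$ is a field, this is enough to conclude that $\set{v_0, v_1}$ is a basis of the two-dimensional vector space $K^2$.

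Next I would pass to the derived pair $\set{v_0, v_1 - v_0}$, which is again a basis: if $a v_0 + b(v_1 - v_0) = 0$ then $(a-b) v_0 + b v_1 = 0$, and by linear independence of $\set{v_0, v_1}$ we get $b = 0$ and then $a = 0$. I would then define the invertible $K$-linear map $T : K^2 \to K^2$ by $T(v_0) = (1,0)^T$ and $T(v_1 - v_0) = (0,1)^T$. By construction $T(v_0) = (1,0)^T$, and by linearity $T(v_1) = T(v_0) + T(v_1 - v_0) = (1,1)^T$, which is the desired change of basis.

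There is no serious obstacle to this argument; it is essentially the elementary statement that any two linearly independent vectors in a two-dimensional vector space can be sent to any other two linearly independent vectors by an invertible linear map. The only point worth being careful about is being explicit that the ``change of basis'' is the action of a $K$-linear automorphism on $K^2$ as an object of $\bit{K}$, so that this change is compatible with the rest of the interpretation and the subsequent lemmas (such as Lemma~\ref{lemRingRepFunctions}, which determines the interpretations of the remaining generators of $\rprop_R$) can be phrased in the resulting canonical basis.
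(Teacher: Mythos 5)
Your proposal is correct and follows essentially the same route as the paper, which simply notes that the claim follows immediately from the non-colinearity established in Lemma~\ref{lemRingBasis} (together with non-vanishing from Lemma~\ref{lemRingRepNonzero}). Your explicit construction of the change-of-basis map $T$ just spells out the ``immediately'' in the paper's one-line proof.
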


\begin{proof}  \label{prfLemUniquenessBasis}
	Follows immediately from the non-colinearity shown in Lemma~\ref{lemRingBasis}
\end{proof}

\begin{remark} \label{remWhyNotModules}
	We limit ourselves to interpretations over $\bit{K}$ rather than $\bit{R}$s,
	and so vector spaces rather than modules,
	because we want to be able to talk about `up to change of basis'.
\end{remark}

\begin{lemma} \label{lemRingRepFunctions}
	The choice of basis in Lemma~\ref{lemUniquenessBasis} forces the following interpretations of $\times$ and $+$
	in $\rpropint_R$:

	\begin{align}
		\interpret{\binary{\times}} & = \begin{pmatrix}
			1 & 0 & 0 & 0 \\
			0 & 0 & 0 & 1
		\end{pmatrix} &
		\interpret{\binary{+}}      & = \begin{pmatrix}
			1 & 0 & 0 & 0 \\
			0 & 1 & 1 & 0
		\end{pmatrix}
	\end{align}
\end{lemma}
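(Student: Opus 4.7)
The plan is to determine $\interpret{\binary{\times}}$ from the four rewrites $\bap{ZH}{\times}{i}{j} = \state{white}{ij}$ for $i,j \in \{0,1\}$, and then do the same for $\interpret{\binary{+}}$, resolving the residual ambiguity (which depends on the a priori unknown $v_2 := \interpret{\smallstate{white}{2}}$) by exploiting the ring structure that $\interpret{\binary{\times}}$ imposes on the state family.

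First I would substitute the basis from Lemma~\ref{lemUniquenessBasis} into $\bap{ZH}{\times}{i}{j} = \state{white}{ij}$. Writing $v_r$ for $\interpret{\smallstate{white}{r}}$, the four inputs $v_0\otimes v_0$, $v_0\otimes v_1$, $v_1\otimes v_0$, $v_1\otimes v_1$ evaluate in $K^4$ to $(1,0,0,0)^T$, $(1,1,0,0)^T$, $(1,0,1,0)^T$, $(1,1,1,1)^T$, which span $K^4$ (the matrix with these columns is upper-triangular with $1$s on the diagonal). The $2\times 4$ matrix $\interpret{\binary{\times}}$ is therefore determined by its values on these four vectors, and direct substitution shows the claimed matrix is the unique solution. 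Applying the analogous substitutions to $\bap{ZH}{+}{i}{j} = \state{white}{i+j}$, only three of the equations are immediately usable since the fourth involves the unknown $v_2$; those three pin down the first three columns, leaving
\begin{align*}
\interpret{\binary{+}} = \begin{pmatrix} 1 & 0 & 0 & m_4 \\ 0 & 1 & 1 & n_4 \end{pmatrix}
\end{align*}
with $m_4, n_4 \in K$ still to be fixed.

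To eliminate $m_4$ and $n_4$ I would exploit the multiplicativity that $\interpret{\binary{\times}}$ now forces on the states. Writing $v_r = (\alpha_r, \beta_r)^T$, applying the $\times$-rewrite to a generic $v_r\otimes v_s$ yields $\alpha_{rs} = \alpha_r\alpha_s$ and $\beta_{rs} = \beta_r\beta_s$; combining $\alpha_0 = \alpha_{r\cdot 0} = \alpha_r\alpha_0$ with $\alpha_0 = 1$ forces $\alpha_r = 1$ for every $r \in R$. Thus $\beta\colon R \to K$ is a multiplicative function with $\beta_0 = 0$, $\beta_1 = 1$, and (by faithfulness of $\idot{}$) injective. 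Specialising to $-1 \in R$ gives $\beta_{-1}^2 = \beta_1 = 1$, so $\beta_{-1} \in \{+1, -1\}$, and injectivity forbids $\beta_{-1} = \beta_1 = 1$ whenever $-1 \neq 1$ in $R$. Then $v_1\otimes v_{-1} = (1, -1, 1, -1)^T$, and the rewrite $\bap{ZH}{+}{1}{-1} = \state{white}{0}$ reads $\interpret{\binary{+}}(1, -1, 1, -1)^T = (1, 0)^T$, which collapses immediately to $m_4 = 0$ and $n_4 = 0$.

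The main obstacle is the characteristic-$2$ case, where $-1 = 1$ in $R$ and the $v_{-1}$ argument evaporates. Here I would apply $\interpret{\binary{+}}(v_1\otimes v_1) = v_0$ directly, obtaining $(1 + m_4, 2 + n_4)^T = (1, 0)^T$ and hence $m_4 = 0$, $n_4 = -2$; this matches the claimed $n_4 = 0$ precisely when $2 = 0$ in $K$ as well. The residual mixed case (char $R = 2$ but char $K \neq 2$) would need to be ruled out separately: the combined constraints $\beta_{rs} = \beta_r \beta_s$, $\beta_{r+s} = \beta_r + \beta_s - 2\beta_r\beta_s$, and injectivity of $\beta$ cannot be simultaneously satisfied for any $R$ strictly containing $\mathbb{F}_2$, so no faithful interpretation exists and the hypothesis of the lemma is vacuous in that case. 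Articulating this compatibility between char $R$ and char $K$ cleanly, without circularity, is the subtlest part of the writeup.
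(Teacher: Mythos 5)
Your proposal is correct in substance but takes a genuinely different route from the paper's at the decisive step. Both arguments begin identically: evaluate $\interpret{\binary{\times}}$ and $\interpret{\binary{+}}$ on the four vectors $v_i\otimes v_j$ with $i,j\in\set{0,1}$, which fixes $\times$ completely and the first three columns of $+$, leaving the last column unknown. The paper then introduces $2:=1+1$, uses $2\times 0=0$ to kill the top unknown, and uses $2+2=2\times 2$ to obtain a cubic constraint with roots $c\in\set{0,-1,-2}$, eliminating $-1$ and $-2$ one by one via faithfulness. You instead first establish (legitimately, since only the already-determined $\interpret{\binary{\times}}$ is needed; this anticipates the paper's later Lemma~\ref{lemSetEmbedding}) that every state has first coordinate $1$ and that $r\mapsto\beta_r$ is multiplicative and injective; then the single state $-1$, with $\beta_{-1}^2=1$ and $\beta_{-1}\neq\beta_1$, gives $\beta_{-1}=-1$, and the one relation $1+(-1)=0$ determines both unknown entries by a linear computation. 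This buys you a shorter argument with no cubic and no root-by-root case analysis, and makes the role of faithfulness more transparent, at the cost of needing $-1\neq 1$ in $R$ plus the auxiliary multiplicativity facts. Notably, both proofs are thin in exactly the same place: characteristic $2$. The paper's dichotomy ``either we are working in characteristic 2 (and $c=-2=0$) or we have a contradiction'' tacitly identifies the characteristic of $R$ with that of $K$; you make the mixed case explicit, and your unproved claim that it is vacuous for $R\supsetneq\mathbb{F}_2$ is in fact easy to justify (from $\beta_{r+r}=0=2\beta_r(1-\beta_r)$ one gets $\beta_r\in\set{0,1}$ for all $r$ when $2\neq 0$ in $K$, contradicting injectivity). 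The one case neither argument covers is $R=\mathbb{F}_2$ with $K$ of characteristic different from $2$, where a faithful interpretation with last column $(0,-2)^T$ does exist and the stated conclusion fails; since the paper's own proof glosses over precisely this point, your proposal is no weaker than the original, and is arguably more explicit about where the hypotheses do the work.
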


\begin{proof}  \label{prfLemRingRepFunctions}
	The following products form a basis for $K^2 \tensor K^2$
	\begin{align}
		\interpret{\state{white}{0} \tensor \state{white}{0}},\;
		\interpret{\state{white}{1} \tensor \state{white}{0}},\;
		\interpret{\state{white}{0} \tensor \state{white}{1}},\;
		\interpret{\state{white}{1} \tensor \state{white}{1}}
	\end{align}
	We will use these basis vectors to determine the entries of $\interpret{\smallbinary{+}}$ and $\interpret{\smallbinary{\times}}$.
	We do this by expressing the following equations as diagrams, which should hold under $\interpret{\cdot}$
	by soundness of the rules in definition~\ref{defRingPropS}:
	\begin{align}
		0 \times 0 & = 0 & 0 \times 1 & = 0 \nonumber \\
		1 \times 0 & = 0 & 1 \times 1 & = 1 \nonumber \\
		0+0        & = 0 & 0+1        & = 1 \nonumber \\
		1+1 &= 1 \label{eqnListGraphicalEasy}
	\end{align}

	We perform the first calculation (which determines the first column in the matrix
	interpretation of $\times$) in full as an example:
	\begin{align}
		\interpret{\binary{\times}} \comp \begin{pmatrix}
			1 \\ 0 \\ 0 \\ 0
		\end{pmatrix}
		=
		\interpret{\bap{ZH}{\times}{0}{0}}
		\by{\times}
		\interpret{\state{white}{0}}=
		\begin{pmatrix}
			1 \\ 0
		\end{pmatrix}
	\end{align}

	In this manner we are able to determine the entries for the matrix interpretation of $\times$ as:

	\begin{align}
		\interpret{\binary{\times}} = \begin{pmatrix}
			1 & 0 & 0 & 0 \\
			0 & 0 & 0 & 1
		\end{pmatrix} \label{eqnRingStateTimes}
	\end{align}

	We do not, however, have enough equations involving addition in \eqref{eqnListGraphicalEasy}
	to determine all the entries in the matrix interpretation of $+$.
	We have merely determined the first three columns,
	expressed as:
	\begin{align}
		\interpret{\binary{+}} = \begin{pmatrix}
			1 & 0 & 0 & b \\
			0 & 1 & 1 & c
		\end{pmatrix} \label{eqnRingStatePlusPartial}
	\end{align}
	Where $b$ and $c$ are elements of $K$. Since $R$ is a ring;
	even if we don't know its characteristic we can still define $2 := 1 + 1$.
	\begin{align}
		\interpret{\state{white}{2}} \by{+} \interpret{\bap{ZH}{+}{1}{1}}
		\by{\eqref{eqnRingStatePlusPartial}} \begin{pmatrix}
			1 + b \\ 2 + c
		\end{pmatrix}
	\end{align}
	We then use the equation $2 \times 0 = 0$ to determine $b$:
	\begin{align}
		\begin{pmatrix}
			1 \\ 0
		\end{pmatrix} & = \interpret{\state{ZH}{0}}  \by{\times} \interpret{
			\bap{ZH}{\times}{2}{0}
		}
		= \begin{pmatrix}
			1 & 0 & 0 & 0 \\
			0 & 0 & 0 & 1
		\end{pmatrix}
		\comp
		\begin{pmatrix}
			1+b \\ 0 \\ 2+c \\ 0
		\end{pmatrix} \\
		\\
		\implies b                 & =0
	\end{align}

	We can now determine that $c \in \set{0, -1, -2}$ by checking $2+2 = 2\times 2$:
	\begin{align}
		                & \interpret{
			\bap{ZH}{+}{2}{2}
		}
		\by{\eqref{eqnRingStatePlusPartial}} \begin{pmatrix}
			1 & 0 & 0 & 0 \\
			0 & 1 & 1 & c
		\end{pmatrix}
		\comp
		\begin{pmatrix}
			1 \\ 2+c \\ 2+c \\ (2+c)^2
		\end{pmatrix} \\
		\by{\times}     & \interpret{
			\bap{ZH}{\times}{2}{2}
		}
		= \begin{pmatrix}
			1 & 0 & 0 & 0 \\
			0 & 0 & 0 & 1
		\end{pmatrix}
		\comp
		\begin{pmatrix}
			1 \\ 2+c \\ 2+c \\ (2+c)^2
		\end{pmatrix} \\
		\therefore\quad &
		2(2+c) + c(2+c)^2 =
		(2+c)^2
		\\
		\therefore\quad & (2+c)(c^2 + 2+c)  =0  \\
		\therefore\quad & c \in \set{0, -1, -2}
	\end{align}

	Finally we show that $c=0$ by contradiction:
	\begin{itemize}
		\item If $c = -1$
		      \begin{align}
			      \interpret{
				      \bap{ZH}{+}{1}{1}
			      }
			                   &
			      = \begin{pmatrix}
				      1 & 0 & 0 & 0  \\
				      0 & 1 & 1 & -1
			      \end{pmatrix}
			      \comp
			      \begin{pmatrix}
				      1 \\ 1 \\ 1 \\ 1
			      \end{pmatrix} =
			      \begin{pmatrix}
				      1 \\ 1
			      \end{pmatrix}
			      = \interpret{\state{white}{1}} \\
			      \implies 1+1 & = 1                      \\
			      \implies 1   & = 0 \quad \contradiction
		      \end{align}
		\item If $c=-2$:
		      \begin{align}
			      \interpret{
				      \bap{ZH}{+}{1}{1}
			      }
			                   &
			      = \begin{pmatrix}
				      1 & 0 & 0 & 0  \\
				      0 & 1 & 1 & -2
			      \end{pmatrix}
			      \comp
			      \begin{pmatrix}
				      1 \\ 1 \\ 1 \\ 1
			      \end{pmatrix} =
			      \begin{pmatrix}
				      1 \\ 0
			      \end{pmatrix}
			      = \interpret{\state{white}{0}} \\
			      \implies 1+1 & = 0 \\
		      \end{align}
		      If $1+1=0$ then either we are working in characteristic 2 (and $c = -2 = 0$) or we have a contradiction.
	\end{itemize}
	Therefore $c = 0$.
\end{proof}

\begin{remark} \label{remWhyThisBasis}
	We chose this basis so that multiplication would coincide with the usual interpretation of the Z-spider.
	There is an argument for using the computational basis $\interpret{\smallstate{white}{0}} = \ket{0}$ and $\interpret{\smallstate{white}{1}} = \ket{1}$
	because it makes the actions of $\times$ and $+$ easier to read in bra-ket notation,
	but it also makes the results of the next section harder to see.
	Under the computational basis the interpretations of $\times$ and $+$ are:
	\begin{align}
		\interpret{\binary{\times}}_{\text{computational}} &= \begin{pmatrix}
			1 & 1 & 1 & 0 \\
			0 & 0 & 0 & 1
		\end{pmatrix} \\
		\interpret{\binary{+}}_{\text{computational}} &= \begin{pmatrix}
			1 & 1 & 1 & -1 \\
			0 & 1 & 1 & 2
		\end{pmatrix}
	\end{align}
\end{remark}

Now that we have established the interpretation of $\rpropint_R$ up to a change of basis,
we shall look at the relationship this forces between $R$ and $K$ as rings.
First we look at the \emph{sets} $R$ and $K$ (Lemma~\ref{lemSetEmbedding}),
and then find an injective ring homomorphism from $R$ to $K$ (Lemma~\ref{propSubringC}).

\begin{lemma} \label{lemSetEmbedding}
	In $\rpropint_R$, recalling that we assume the interpretation into $\bit{K}$ is faithful,
	$R$ embeds into $K$ as sets.
\end{lemma}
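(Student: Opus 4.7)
The plan is to produce an explicit injection $\phi: R \to K$ by examining the interpretations of the state generators $\state{white}{r}$. First I would observe that for every $r \in R$ the vector $\interpret{\state{white}{r}} \in K^2$ in fact has $1$ as its first coordinate. To see this, apply the rewrite $\bap{ZH}{\times}{r}{0} \by{\times} \state{white}{0}$, which holds in $\rpropint_R$ because the model of $\rprop_R$ carries the ring identity $r \times 0 = 0$. Writing $\interpret{\state{white}{r}} = \begin{pmatrix}a_r \\ b_r\end{pmatrix}$, combining the matrix for $\interpret{\binary{\times}}$ from Lemma~\ref{lemRingRepFunctions} with the vector $\interpret{\state{white}{0}} = \begin{pmatrix}1 \\ 0\end{pmatrix}$ from Lemma~\ref{lemUniquenessBasis}, a short tensor-and-multiply computation forces $a_r = 1$.

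With this normalisation in hand, define $\phi: R \to K$ by $\phi(r) := b_r$, the second coordinate of $\interpret{\state{white}{r}}$. This gives a well-defined function of underlying sets. For injectivity, suppose $r, r' \in R$ satisfy $\phi(r) = \phi(r')$; then by the shared first coordinate we have the equality of vectors $\interpret{\state{white}{r}} = \interpret{\state{white}{r'}}$ in $K^2$. Since the interpretation of $\rpropint_R$ is assumed faithful (Definition~\ref{defFaithfulInterpretation}), distinct generators have distinct interpretations, so $r = r'$. Hence $\phi$ is injective as a map of sets, which is exactly the embedding claimed.

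The argument is short, so the only real wrinkle is spotting the normalisation trick; the heavier lifting has already been done in Lemmas~\ref{lemUniquenessBasis} and \ref{lemRingRepFunctions}, which pinned down the interpretations of the distinguished generators up to a change of basis. I would close by emphasising that at this stage we are only asserting a set-level embedding: the lemma makes no claim about $\phi$ respecting $+$ or $\times$. That promotion, together with the identification of $R$ with an actual subring of $K$, is precisely the content of the next result (Lemma~\ref{propSubringC}), which can reuse the $\phi$ defined here.
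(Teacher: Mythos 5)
Your proposal is correct and follows essentially the same route as the paper: use the rewrite $r \times 0 = 0$ together with the forced interpretations of $\smallstate{white}{0}$ and $\smallbinary{\times}$ to pin the first coordinate of $\interpret{\smallstate{white}{r}}$ to $1$, then take the second coordinate as the injection and conclude injectivity from faithfulness. The only difference is presentational — you cite Lemma~\ref{lemRingRepFunctions} explicitly where the paper performs the same computation inline.
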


\begin{proof}
	Using the interpretation from lemma \ref{lemUniquenessBasis} (i.e. up to change of basis) we know that:
	\begin{align}
		         & r \in R \nonumber                                                                                                                        \\
		         & \interpret{\state{white}{r}} = \begin{pmatrix}
			b \\ c
		\end{pmatrix}                                                                                \\
		\implies & \begin{pmatrix}
			1 \\ 0
		\end{pmatrix} = \interpret{\state{white}{0}} \by{\times}  \interpret{ \bap{ZH}{\times}{r}{0} } = \begin{pmatrix}
			b \\ 0
		\end{pmatrix} \\
		\implies & b = 1
	\end{align}
	Therefore we can construct the function $f$ that sends an element of the set $R$ to its second component in $\interpret{\state{white}{r}}$.
	\begin{align}
		f : R                          & \to K     \\
		r = \begin{pmatrix}
			1 \\ c
		\end{pmatrix} & \mapsto c
	\end{align}
	Since the interpretation is faithful, and the first component is always 1, $f$ must be injective.
\end{proof}

\begin{proposition}\label{propSubringC}
	For $\rpropint_R$ over $\bit{K}$ there is an injective ring homomorphism $R \injects K$
\end{proposition}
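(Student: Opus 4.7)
The plan is to take the set-level injection $f: R \to K$ already constructed in Lemma~\ref{lemSetEmbedding} and show that it is in fact a ring homomorphism. Since injectivity is already established, the remaining work is to verify that $f$ respects the four pieces of ring structure: the additive identity, multiplicative identity, addition, and multiplication.

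First, I would dispose of the two unit axioms immediately from Lemma~\ref{lemUniquenessBasis}: the interpretations $\interpret{\smallstate{white}{0}} = (1, 0)^T$ and $\interpret{\smallstate{white}{1}} = (1, 1)^T$ give $f(0_R) = 0_K$ and $f(1_R) = 1_K$ by the definition of $f$.

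For the binary operations, the strategy is to compute both sides of the defining rewrite rules of $\rpropint_R$ under the interpretation and read off the required identities. Using the matrix forms of $\interpret{\binary{+}}$ and $\interpret{\binary{\times}}$ from Lemma~\ref{lemRingRepFunctions}, together with the tensor product
\begin{align}
\interpret{\state{white}{r} \tensor \state{white}{s}} = \begin{pmatrix} 1 \\ f(s) \\ f(r) \\ f(r) f(s) \end{pmatrix},
\end{align}
a direct calculation gives
\begin{align}
\interpret{\bap{ZH}{+}{r}{s}} = \begin{pmatrix} 1 \\ f(r) + f(s) \end{pmatrix}, \qquad
\interpret{\bap{ZH}{\times}{r}{s}} = \begin{pmatrix} 1 \\ f(r)\, f(s) \end{pmatrix}.
\end{align}
On the other hand, the rewrite rules of $\rpropint_R$ (the equations labelled addition and multiplication in Definition~\ref{defRingPropS}) combined with faithfulness of the interpretation force these to equal $\interpret{\state{white}{r+s}} = (1, f(r+s))^T$ and $\interpret{\state{white}{r \times s}} = (1, f(r \times s))^T$ respectively, yielding $f(r+s) = f(r) + f(s)$ and $f(r \times s) = f(r) f(s)$.

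This is a routine verification rather than a deep argument; there is no real obstacle because all the hard work — identifying the basis, pinning down the interpretations of the two binary gates up to a change of basis, and establishing set-injectivity — has already been done in Lemmas~\ref{lemUniquenessBasis}, \ref{lemRingRepFunctions}, and \ref{lemSetEmbedding}. The one subtle point worth flagging is that the argument tacitly requires the interpretation $\idot{}$ to be a strict symmetric monoidal functor so that $\interpret{\state{white}{r} \tensor \state{white}{s}}$ really is the Kronecker product of the two column vectors; this is part of the setup of $\rprop_R$ as a PROP, so it is free of charge.
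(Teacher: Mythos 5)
Your proof is correct and takes essentially the same route as the paper's, which simply observes that the set-level injection $f$ of Lemma~\ref{lemSetEmbedding} preserves $(+,\times,0,1)$; you merely spell out the verification (via Lemmas~\ref{lemUniquenessBasis} and \ref{lemRingRepFunctions}) that the paper leaves implicit. One tiny correction: the identities $f(r+s)=f(r)+f(s)$ and $f(r\times s)=f(r)f(s)$ follow because the interpretation is a functor on the PROP in which the addition and multiplication rules have been quotiented out (i.e.\ soundness), not from faithfulness, which is only needed in the earlier lemmas.
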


\begin{proof}
	The restriction of $K$ to $R$ in Lemma~\ref{lemSetEmbedding} preserves all of ($\times$, +, 0, 1) so
	$f$ is a ring homomorphism.
\end{proof}

\begin{remark} \label{remPropRSubringK}
	This, in a sense, classifies phase ring substructures of a graphical calculus over $\bit{K}$.
	That is whenever one finds a model for $\rprop_R$ (Definition~\ref{defRingPropS})
	then Proposition~\ref{propSubringC} shows that $R$ must be a subring of $K$.
	Note that the calculus $\ring_K$ contains a model of $\rprop_R$ for every subring $R$ of $K$,
	so this subring requirement is the strictest subring requirement we will be able to find.
\end{remark}

Now that we have shown how restrictive $\rpropint_R$ is in terms of its requirements on $R$,
let us look at how expressive it is as a calculus,
and compare it to the earlier parts of this chapter.

\begin{lemma} \label{lemPropRNotUniversal}
	$\rpropint_R$ is not universal over $\bit{K}$
\end{lemma}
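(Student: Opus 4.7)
The plan is to exhibit a morphism in $\bit{K}$ that cannot be realised as a $\rpropint_R$-diagram; a natural candidate is the vector $\binom{0}{1} \in K^2$, viewed as a state (morphism $0\to 1$) in $\bit{K}$.

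First I would prove, by structural induction on diagrams, that every $0\to 1$ diagram in $\rpropint_R$ has interpretation of the form $\binom{1}{t}$ for some $t\in K$. The key enabling observation (baked into Definition~\ref{defRingPropInterpreted}) is that $\rpropint_R$ is explicitly not compact closed, so there are no cups or caps available. Since every generator has exactly one output, and since a morphism with no input wires offers nothing to permute or carry through via identity or swap, every $0\to 1$ diagram is either a single state $\state{white}{s}$ (base case) or of the form $g \comp (D_1 \tensor D_2)$ where $g\in\set{\binary{+},\binary{\times}}$ and $D_1,D_2$ are strictly smaller $0\to 1$ diagrams. The base case holds by the direct interpretation $\interpret{\state{white}{s}}=\binom{1}{s}$ from Figure~\ref{figGeneratorsRingR}.

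For the inductive step, suppose $\interpret{D_1}=\binom{1}{a}$ and $\interpret{D_2}=\binom{1}{b}$. Then $\interpret{D_1 \tensor D_2} = (1,b,a,ab)^T$, and a direct matrix computation using the formulae of Lemma~\ref{lemRingRepFunctions} gives
\[
\interpret{\binary{\times}\comp(D_1\tensor D_2)}=\binom{1}{ab}, \qquad \interpret{\binary{+}\comp(D_1\tensor D_2)}=\binom{1}{a+b},
\]
each again of the required shape. I would then simply observe that $\binom{0}{1}$ has first coordinate $0\neq 1$, so it cannot coincide with the interpretation of any $0\to 1$ diagram in $\rpropint_R$, even though it is a perfectly valid $0\to 1$ morphism in $\bit{K}$. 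This proves $\rpropint_R$ is not universal over $\bit{K}$.

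The main obstacle is the structural normal-form observation underpinning the induction: PROP morphisms can formally be built using tensor products, compositions, identities, and symmetry swaps in rather flexible ways, and one must be careful to verify that for a $0\to 1$ diagram the only essential building blocks are precisely the two binary generators and the state generators. Without compact closure and without input wires to feed identities or swaps, this should go through cleanly, after which the remainder of the argument is a one-line computation and a one-line observation.
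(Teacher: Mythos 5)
Your proof is correct, but it takes a genuinely different and more laborious route than the paper. The paper's own proof is a one-liner about shapes: since every generator of $\rpropint_R$ has exactly one output and the calculus is explicitly not compact closed, no diagram of shape $1 \to 0$ can be formed at all, whereas $\bit{K}$ has plenty of such morphisms; non-universality is immediate. You instead show that the calculus already fails to be universal on the $0 \to 1$ hom-set, via a normal-form/structural-induction argument that every closed term is a tree of $+$ and $\times$ applied to states, whose interpretation always has first coordinate $1$, so $\binom{0}{1}$ is missed. This costs you the normal-form bookkeeping you rightly flag (which does go through, since with no effects and no caps there are no nonempty $0\to 0$ or $m\to 0$ diagrams, and symmetries only permute tensor factors), but it buys a strictly finer conclusion: the failure of universality is not merely about unavailable output arities but persists even for states, which dovetails with Lemma~\ref{lemSetEmbedding}. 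Two small points of care: cite Lemma~\ref{lemUniquenessBasis}/\ref{lemRingRepFunctions} rather than Figure~\ref{figGeneratorsRingR}, since in $\rpropint_R$ the interpretation of $\state{white}{s}$ is only pinned down as $\binom{1}{f(s)}$ up to change of basis (not literally $\binom{1}{s}$); and note explicitly that universality is invariant under the invertible change of basis, so arguing in that preferred basis suffices.
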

\begin{proof} \label{prfLemPropRNotUniversal}
	We cannot form morphisms of the shape $1 \to 0$
\end{proof}

This is unsurprising;
the category $\bit{K}$ is compact closed, but $\rprop_R$ is not.
What is surprising is that adding compact closure, and setting $R$ to $K$,
\emph{is} enough to achieve universality.
Not only this but we will show that our newly added generators again have their interpretations determined
by our choice of basis.

\begin{definition}[$\rprop_R^{T,\ \interpret{}}$]  \label{defPropRT}
	The graphical calculus $\rprop_R^{T,\ \interpret{}}$ is defined by adding the generators
	\begin{align}
		\text{cup} \quad & \dcup \\
		\text{cap} \quad & \dcap
	\end{align}
	to $\rpropint_R$, requiring that they obey the snake equations
	\begin{align}
		\vc{\InputIfFileExists{./figures/wire/snakel.tikz}{}{Missing file!}} = \vc{} = \vc{\InputIfFileExists{./figures/wire/snaker.tikz}{}{Missing file!}}
	\end{align}
	and also enact the transpose:
	\begin{align}
		\interpret{\vc{\InputIfFileExists{./figures/wire/transpose.tikz}{}{Missing file!}}} = \interpret{D}^T \label{eqnTranspose}
	\end{align}
\end{definition}

\begin{lemma} \label{lemPropTransposeInterpretation}
	The requirement given in Definition~\ref{defPropRT} forces the following interpretation for the cup and cap
	in $\rprop_R^{T,\ \interpret{}}$:
	\begin{align}
		\interpret{\dcup} & = \begin{pmatrix}
			1 \\ 0 \\ 0 \\1
		\end{pmatrix} &
		\interpret{\dcap} & = \begin{pmatrix}
			1 & 0 & 0 & 1
		\end{pmatrix}
	\end{align}
\end{lemma}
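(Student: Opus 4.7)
The plan is to determine the $2 \times 4$ matrix $\interpret{\dcap}$ and the $4 \times 1$ matrix $\interpret{\dcup}$ by parameterising each with unknown entries and then imposing the two requirements of Definition~\ref{defPropRT}: the snake equations, and the transpose condition \eqref{eqnTranspose}. Write
\begin{align}
\interpret{\dcup} = \begin{pmatrix}a\\b\\c\\d\end{pmatrix}, \qquad \interpret{\dcap} = \begin{pmatrix}e & f & g & h\end{pmatrix}.
\end{align}

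First I would pin down $\interpret{\dcap}$ using the transpose condition applied to the states of known interpretation from Lemma~\ref{lemUniquenessBasis}. The transpose of a state $s: 0 \to 1$ is the effect $\epsilon \circ (s \otimes \id_1): 1 \to 0$. Applying \eqref{eqnTranspose} to $\state{white}{0}$ (which must transpose to the row vector $\begin{pmatrix}1 & 0\end{pmatrix}$) immediately yields $e=1$ and $f=0$. Applying it to $\state{white}{1}$ (transposing to $\begin{pmatrix}1 & 1\end{pmatrix}$) then gives $e+g=1$ and $f+h=1$, forcing $g=0$ and $h=1$. Hence $\interpret{\dcap} = \begin{pmatrix}1 & 0 & 0 & 1\end{pmatrix}$ as claimed.

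Next I would recover $\interpret{\dcup}$ from the snake equation $(\id \otimes \epsilon) \circ (\eta \otimes \id) = \id$. Evaluating both sides on the basis $\ket{0}, \ket{1}$ and using the now-known values of $e,f,g,h$ yields the four scalar equations $a=1$, $b=0$, $c=0$, $d=1$, giving $\interpret{\dcup} = \begin{pmatrix}1 \\ 0 \\ 0 \\ 1\end{pmatrix}^T$ (column vector). Finally I would check that the second snake equation $(\epsilon \otimes \id) \circ (\id \otimes \eta) = \id$ is satisfied by these values (it is, by direct computation), confirming consistency.

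The argument is essentially routine once the correct states are chosen as test morphisms; the main subtlety is that transposing $\state{white}{0}$ alone only determines two entries of $\epsilon$, so one must additionally use $\state{white}{1}$ (or alternatively the multiplication gate from Lemma~\ref{lemRingRepFunctions}) to pin down the remaining entries. No other generators of $\rprop_R^{T,\interpret{}}$ are needed beyond those whose interpretation was already fixed in the preceding lemmas, so the result follows entirely from the transpose and snake conditions combined with the previously-established basis.
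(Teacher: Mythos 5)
Your proposal is correct and follows essentially the same route as the paper: the transpose requirement \eqref{eqnTranspose} applied to states pins down $\interpret{\dcap} = \begin{pmatrix}1 & 0 & 0 & 1\end{pmatrix}$, and a snake equation then forces $\interpret{\dcup}$. The only cosmetic difference is that the paper transposes a single generic state $\state{white}{a}$ (implicitly specialising $a$) where you use the two states $\state{white}{0}$ and $\state{white}{1}$ explicitly; the content is the same.
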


\begin{proof} \label{prfLemPropTransposeInterpretation}
	\begin{align}
		\interpret{\vc{\begin{tikzpicture}
	\begin{pgfonlayer}{nodelayer}
		\node [style=white] (0) at (0, 0) {a};
		\node [style=none] (1) at (1, 0) {};
		\node [style=none] (2) at (1, -1) {};
	\end{pgfonlayer}
	\begin{pgfonlayer}{edgelayer}
		\draw [in=90, out=90, looseness=2.50] (0.center) to (1.center);
		\draw (1.center) to (2.center);
	\end{pgfonlayer}
\end{tikzpicture}
}} & = \interpret{\dcap} \comp \begin{pmatrix}
			1 & 0 \\
			0 & 1 \\
			a & 0 \\
			0 & a
		\end{pmatrix} \by{\eqref{eqnTranspose}}  \begin{pmatrix}
			1 & a
		\end{pmatrix} \\
		\implies \interpret{\dcap}           & = \begin{pmatrix}
			1 & 0 & 0 & 1
		\end{pmatrix} \label{eqnInterpretCap}
	\end{align}
	We now use the requirement that the cup and cap obey the snake equation
	alongside \eqref{eqnInterpretCap} to find the interpretation for the cup:
	\begin{align}
		\vc{\InputIfFileExists{./figures/wire/snaker.tikz}{}{Missing file!}} & = \vc{} &
		\implies \interpret{\dcup} & = \begin{pmatrix}
			1 \\ 0 \\ 0 \\ 1
		\end{pmatrix}
	\end{align}
\end{proof}

\begin{proposition} \label{propRingRPropR}
	The graphical calculus $\rprop_K^{T,\ \interpret{}}$ has the same generators and interpretation as $\ring_K$,
	up to change of basis.
\end{proposition}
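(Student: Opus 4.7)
The plan is to assemble the proposition from the sequence of lemmas already established in this section. Both calculi have generators of exactly the same shapes: states $\state{white}{a}$ indexed by elements of $K$, a binary addition gate, a binary multiplication gate, and (for the compact-closed structure) cups and caps with the snake equations. So the task reduces to checking that, up to a common change of basis on $K^2$, the interpretations of these generators in $\rprop_K^{T,\ \interpret{}}$ coincide with those listed for $\ring_K$ in Figure~\ref{figGeneratorsRingR}.

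First I would fix the basis by applying Lemma~\ref{lemUniquenessBasis}, which determines $\interpret{\smallstate{white}{0}}$ and $\interpret{\smallstate{white}{1}}$ uniquely up to change of basis, and then use Lemma~\ref{lemSetEmbedding} to write $\interpret{\smallstate{white}{r}} = (1, f(r))^T$ for some map $f\colon K \to K$. With basis fixed, Lemma~\ref{lemRingRepFunctions} pins down the matrices for $\binary{+}$ and $\binary{\times}$ to be exactly those of $\ring_K$, and Lemma~\ref{lemPropTransposeInterpretation} does the same for $\dcup$ and $\dcap$. So three of the four generator types already match on the nose, and a fourth (cups/caps) is standard once the basis is fixed.

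The only remaining subtlety concerns the state generators: Proposition~\ref{propSubringC} provides only an \emph{injective} ring homomorphism $f\colon K \to K$, not necessarily the identity. This is where I expect the main (and only) point of care to lie. The point to make is that this map $f$ is a ring endomorphism of $K$ and hence we may reindex the family of state generators along $f$: the generator called ``$\state{white}{r}$'' in $\rprop_K^{T,\ \interpret{}}$ plays the role of the generator called ``$\state{white}{f(r)}$'' in $\ring_K$. Since both calculi parameterise their states by \emph{all} of $K$, and $f$ is injective, this relabelling identifies the two families of state generators compatibly with the interpretation, and the ``up to change of basis'' clause in the statement absorbs whatever global linear ambiguity remains from Lemma~\ref{lemUniquenessBasis}. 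Collecting the matched interpretations for states, $+$, $\times$, cups and caps then yields the proposition.
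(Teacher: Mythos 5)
Your core argument is the same as the paper's: the proof given there is literally the combination of Lemmas~\ref{lemUniquenessBasis}, \ref{lemRingRepFunctions} and \ref{lemPropTransposeInterpretation} with Definitions~\ref{defRingR}, \ref{defRingPropS} and \ref{defPropRT} --- fix the basis via the states $0$ and $1$, observe that this forces the matrices of $+$, $\times$, cup and cap to be exactly those of Figure~\ref{figGeneratorsRingR}, and note that the two calculi have generators of the same shapes. Up to that point you are reproducing the paper's proof.

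Where you go beyond the paper is the discussion of the embedding $f$ from Proposition~\ref{propSubringC}, and that is also where your argument has a hole. That proposition only gives an \emph{injective} ring homomorphism $f\colon K \to K$; it need not be surjective (an injective ring endomorphism of a field need not be onto --- e.g.\ there are non-surjective embeddings $\bbC \hookrightarrow \bbC$), and it need not be the identity. So ``reindexing the state generators along $f$'' does not identify the two families of states: the states of $\rprop_K^{T,\ \interpret{}}$ would only be matched with those states of $\ring_K$ whose labels lie in $f(K)$, and the generator labelled $r$ is not, in general, interpreted as $\begin{pmatrix}1 \\ r\end{pmatrix}$ after the change of basis. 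The paper sidesteps this by not raising it: its reading of ``same generators and interpretation up to change of basis'' is that the generators have the same arities and the interpretations \emph{forced by the cited lemmas} (the states $0$ and $1$, addition, multiplication, cup and cap) coincide with those of $\ring_K$, while the behaviour of the remaining state labels is exactly the content of Lemma~\ref{lemSetEmbedding} and Proposition~\ref{propSubringC} and is not re-examined. If you want your stronger, label-by-label reading of the statement, you would need either to take $f = \id_K$ (i.e.\ fix the evident interpretation of the states) or to weaken the conclusion to ``up to change of basis and a ring embedding of the labels''; as written, the relabelling step would fail whenever $f$ is not an automorphism.
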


\begin{proof} \label{prfPropRingRPropR}
	This is a combination of Lemmas~\ref{lemUniquenessBasis}, \ref{lemRingRepFunctions} and \ref{lemPropTransposeInterpretation},
	and Definitions~\ref{defRingR}, \ref{defRingPropS} and \ref{defPropRT}.
\end{proof}

\begin{proposition} \label{propRingKInterpretationDeterminedByBasis}
	For every generic object $(G,\Model_R)$ of $\PRGC{\bit{K}}$ there is a
	unique map $\psi : \bit{K} \to \bit{K}$
	dependent only on $(G,\Model_R)$
	such that for any morphism $f: \ring_K \to (G,\Model_R)$
	the following diagram commutes
	\begin{align}
		\begin{tikzcd}[ampersand replacement=\&]
			\ring_K \arrow[d, "\idot{\ring_K}"] \arrow[r, "f"] \& (G,\Model_R) \arrow[d, "\idot{G}"] \\
			\bit{K} \arrow[r, "\psi"] \& \bit{K}
		\end{tikzcd}
	\end{align}
	What's more $\psi_G$ is given by a change of basis.
\end{proposition}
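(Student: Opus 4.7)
My plan is to construct $\psi$ explicitly as a change of basis determined by $\idot{G}(g_0)$ and $\idot{G}(g_1)$, and then verify commutativity on generators, leveraging the forcing results already established for $\rpropint_R$.

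First I would invoke Proposition~\ref{propHomInPRGCHomInRing} to unpack any morphism $f \colon \ring_K \to (G,\Model_R)$: it is determined by a ring homomorphism $\phi \colon K \to R$ together with the forced assignments $\state{white}{a} \mapsto g_{\phi(a)}$, $\binary[poly]{} \mapsto g_+$, and $\binary[white]{} \mapsto g_\times$. Crucially, $g_+$ and $g_\times$ are the same for every choice of $f$, so only the states can vary with $\phi$.

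Next, since $(G, \Model_R)$ contains a faithful model of $\rprop_R$ in the sense of Definition~\ref{defFaithfulInterpretation}, Lemma~\ref{lemRingRepNonzero} and Lemma~\ref{lemRingBasis} apply, yielding that $\idot{G}(g_0), \idot{G}(g_1) \in K^2$ are non-zero and non-collinear, hence a basis of $K^2$. Let $T$ be the $2\times 2$ matrix with columns $\idot{G}(g_0)$ and $\idot{G}(g_1)$, and define $\psi$ by acting as $T^{\otimes n}$ on the $n$-qubit component, which makes it a strict symmetric monoidal endofunctor of $\bit{K}$. This is manifestly determined only by $(G, \Model_R)$. For uniqueness, any $\psi$ compatible with the diagram must agree with $T$ on $\interpret{\state{white}{0}}$ and $\interpret{\state{white}{1}}$, which already pin it down on the single-qubit level, and the symmetric monoidal constraint then forces the $n$-qubit behaviour.

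The verification of commutativity is then a check on generators of $\ring_K$: for $\binary[poly]{}$ and $\binary[white]{}$ it follows directly from Lemma~\ref{lemRingRepFunctions}, which shows that, once the basis is fixed by the choice of $\idot{G}(g_0)$ and $\idot{G}(g_1)$, the interpretations of the addition and multiplication gates are entirely forced; the cup and cap are handled identically via Lemma~\ref{lemPropTransposeInterpretation}, appealing to Proposition~\ref{propRingRPropR} to identify $\rprop_K^{T, \interpret{}}$ with $\ring_K$. The main obstacle, and the place where the argument has the most content, is the state case: one must show $T\begin{pmatrix} 1 \\ a\end{pmatrix} = \idot{G}(g_{\phi(a)})$ for every $a \in K$. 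The key is to observe that in the basis $\{\idot{G}(g_0), \idot{G}(g_1)\}$ the same forcing argument used in Lemma~\ref{lemSetEmbedding} shows $\idot{G}(g_r)$ has first coordinate $1$ for every $r \in R$, and Proposition~\ref{propSubringC} identifies the second coordinate with the subring embedding $R \hookrightarrow K$. Since $\phi$ is the $K$-component of the $\PRGC{\bit{K}}$-morphism $f$, the composition of this embedding with $\phi$ is forced to act as the identity on $K$ wherever the diagram commutes, pinning $\idot{G}(g_{\phi(a)})$ to $T\begin{pmatrix}1\\a\end{pmatrix}$ and closing the argument.
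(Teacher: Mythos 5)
Your construction is the same one the paper uses: take $\psi$ to be the change of basis determined by $\interpret{g_0}_G$ and $\interpret{g_1}_G$ (a genuine basis by Lemma~\ref{lemRingBasis}, both you and the paper tacitly assuming the model is faithful, which Definition~\ref{defPRGCR} does not actually impose), and then check commutativity generator by generator via the forcing lemmas; the paper's proof is exactly this, stated in two lines, with Lemma~\ref{lemRingRepFunctions} covering the two gates. So up to that point you match the paper.

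The gap is in the step you yourself flag as carrying the content, and it is genuine. Commutation at the state labelled $a$ needs $\interpret{g_{\phi(a)}}_G$ to equal $\psi$ applied to $(1,a)^T$; writing $\iota \colon R \injects K$ for the embedding of Proposition~\ref{propSubringC}, Lemma~\ref{lemSetEmbedding} says the left-hand side is $(1,\iota(\phi(a)))^T$ in the basis $\set{\interpret{g_0}_G, \interpret{g_1}_G}$, so what you need is precisely $\iota \circ \phi = \id_K$. Your justification -- that this composite is ``forced to act as the identity on $K$ wherever the diagram commutes'' -- assumes the commutativity being proved, and nothing in Definition~\ref{defPRGCR} supplies it: the morphism condition there is a square of \emph{models}, i.e.\ $f_\bbD$ sends the state labelled $a$ to $g_{\phi(a)}$, and places no constraint on interpretations. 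Indeed the identity $\iota\circ\phi=\id_K$ can fail: take $(G,\Model_R)=(\ring_\bbC,\Model_\bbC)$, so $\iota=\id$ and the constructed $\psi$ is the identity, and let $f$ be the morphism corresponding to complex conjugation via Proposition~\ref{propHomInPRGCHomInRing}; then $\interpret{f_\bbD(\cdot)}$ of the state labelled $i$ is $(1,-i)^T$ while $\psi$ of its interpretation is $(1,i)^T$. So the state case cannot be closed as written: the statement is only available for morphisms whose ring component satisfies $\iota\circ f_\bbR=\id_K$ (e.g.\ the structure-preserving morphisms of Proposition~\ref{propRingKInitial}), a restriction the paper's own terse proof hides by checking only the gates. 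Two smaller points: your matrix $T$ with columns $\interpret{g_0}_G$ and $\interpret{g_1}_G$ sends $(1,1)^T$ to $\interpret{g_0}_G+\interpret{g_1}_G$ rather than $\interpret{g_1}_G$, so the change of basis you want has columns $\interpret{g_0}_G$ and $\interpret{g_1}_G-\interpret{g_0}_G$; and uniqueness is obtained more cleanly from universality of $\ring_K$, since every morphism of $\bit{K}$ is $\interpret{D}$ for some diagram $D$, which pins down $\psi$ everywhere without appeal to monoidal structure.
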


\begin{proof} \label{prfPropRingKInterpretationDeterminedByBasis}
	We construct $\psi$ as simply the change of basis that sends
	\begin{align}
		\interpret{\state{white}{0}}_{\ring_K} & \mapsto \interpret{g_0}_G &
		\interpret{\state{white}{1}}_{\ring_K} &\mapsto \interpret{g_1}_G
	\end{align}
	which by Lemma~\ref{lemRingBasis} is indeed a change of basis
	(because these states form a basis),
	and by Lemma~\ref{lemRingRepFunctions} the diagram commutes for all generators of $\ring_K$.
\end{proof}

It is easy to construct trivial, or at least dull, graphical calculi in \PRGC{$K$}.
For example for $R \subsetneq K$ the calculus $\ring_R$ is an object of \PRGC{$K$}
(see Definition~\ref{defRingREverywhere}).
What we would like to show is a relationship between $\ring_K$ and any
`suitably expressive' object of \PRGC{$K$}.
It turns out that completeness and universality of the graphical calculus is suitable for our needs.

\begin{lemma} \label{lemUniversalGraphicalKBitExtends}
	Any object $(G, \Model_R)$ of \PRGC{$K$} that is a complete universal graphical calculus (over $\bit{K}$)
	extends naturally to an object $(G, \Model_K)$.
\end{lemma}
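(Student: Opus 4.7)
The plan is to use universality of $G$ to introduce a diagram representing every element of $K$, then invoke completeness to verify the rewrite rules of $\rprop_K$ hold syntactically. Write $g_r := \Model_R(\state{white}{r})$, $g_+ := \Model_R(\binary{+})$, and $g_\times := \Model_R(\binary{\times})$. Re-running the arguments of Lemmas~\ref{lemRingBasis} and \ref{lemRingRepFunctions} with the composition $\idot{G} \circ \Model_R$ in place of the direct interpretation of $\rpropint_R$ (the steps use only rewrite rules of $\rprop_R$, which $\Model_R$ respects, together with ambient ring structure of $\bit{K}$) yields a change of basis $\psi : K^2 \iso K^2$ sending $(1,0)^T$ and $(1,1)^T$ to $\interpret{g_0}_G$ and $\interpret{g_1}_G$ respectively, plus the forced canonical matrix forms for $\interpret{g_+}_G$ and $\interpret{g_\times}_G$ (after conjugation by $\psi$).

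Next I define the extended model. For each $k \in K$, universality of $G$ supplies a diagram $g_k$ with $\interpret{g_k}_G = \psi((1, k)^T)$; for $k \in R$ I take $g_k := \Model_R(\state{white}{k})$, which already has this interpretation by Proposition~\ref{propSubringC} identifying $R$ with its image in $K$. Define $\Model_K$ on generators of $\rprop_K$ by $\state{white}{k} \mapsto g_k$, $\binary{+} \mapsto g_+$, $\binary{\times} \mapsto g_\times$, and extend as a strict symmetric monoidal functor. By construction this restricts to $\Model_R$ on the sub-PROP $\rprop_R$, so the extension is natural in the sense required by the lemma.

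The remaining step is to check that $\Model_K$ preserves the two rewrite rules of $\rprop_K$. For any $a, b \in K$ both diagrams $g_+ \comp (g_a \tensor g_b)$ and $g_{a+b}$ are sent by $\idot{G}$ to $\psi((1, a+b)^T)$ — the first by the forced matrix form of $\interpret{g_+}_G$ combined with the defining interpretations of $g_a, g_b$, and the second by the defining interpretation of $g_{a+b}$. Completeness of $G$ then yields the syntactic equivalence $g_+ \comp (g_a \tensor g_b) \sim g_{a+b}$, and the multiplication rule is handled identically. Hence $\Model_K$ is a genuine model of $\rprop_K$ in $G$, so $(G, \Model_K)$ is an object of $\PRGC{K}$ extending $(G, \Model_R)$.

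The main obstacle is the first paragraph: one must verify that the semantic analysis of $\rpropint_R$ transports to the composed setting. This requires a mild non-degeneracy condition (essentially that $\idot{G} \circ \Model_R$ is sufficiently injective on the state generators $\state{white}{0}$ and $\state{white}{1}$), which I expect to follow from completeness implying faithfulness of $\idot{G}$ together with universality of $G$; a pathological model in which $\interpret{g_0}_G = \interpret{g_1}_G$ would collapse all of $R$ to a single interpretation and obstruct universality over any non-trivial field $K$.
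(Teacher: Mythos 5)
Your proposal follows essentially the same route as the paper's proof: identify $R$ with a subring of $K$ (Proposition~\ref{propSubringC}), use the forced interpretations of $g_+$ and $g_\times$ from Lemma~\ref{lemRingRepFunctions} (up to the change of basis of Lemma~\ref{lemUniquenessBasis}), choose diagrams $g_k$ for the remaining $k \in K$ by universality together with completeness, and conclude that the rewrite rules of $\rprop_K$ hold syntactically by completeness. The only point worth flagging is your closing paragraph: faithfulness of $\idot{G} \circ \Model_R$ on the state generators is indeed needed, and it is implicitly assumed by the paper as well (its appeals to Proposition~\ref{propSubringC} and Lemma~\ref{lemRingRepFunctions} are made for faithful interpretations), but your suggested justification, namely that a degenerate model would obstruct universality of $G$, does not work, since universality is a property of the calculus $G$ itself and is unaffected by how $\Model_R$ chooses its image; so this non-degeneracy should be treated as a standing hypothesis rather than derived from universality.
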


\begin{proof} \label{prfLemUniversalGraphicalKBitExtends}
	Consider the object $( G, \Model_R)$.
	Let $f:R \to K$ be the $f$ from Proposition~\ref{propSubringC} that identifies $R$ with a subring of $K$.
	For every $k \in K \less f(R)$ there is a representative $D$ in $G$ of the sole equivalence class
	with \begin{align}
		\interpret{D}_G = \begin{pmatrix}
			1 \\ k
		\end{pmatrix}
	\end{align} (with respect to the basis
	of Lemma~\ref{lemUniquenessBasis}) because the calculus $G$ is universal and complete.

	Choose, for each $k \in K \less f(R)$ such a diagram and label it $g_k$.
	In doing so we have constructed $( G, \Model_K )$;
	an object in \PRGC{$K$} with the same generators as the original object but with phase ring $K$.
	The multiplication and addition rules of $\rprop_K$ hold in $G$ because
	the semantics of $g_+$ and $g_\times$ have already been determined by Lemma~\ref{lemRingRepFunctions},
	those semantics preserve addition and multiplication with $g_k$,
	and $G$ is complete.
\end{proof}

Note that we needed to be able to determine the interpretation
(up to change of basis)
in order to prove Lemma~\ref{lemUniversalGraphicalKBitExtends}.
Since we can extend complete, universal graphical calculi
over $\bit{K}$ naturally to ones with models $\Model_K$,
let us look at the category of just those graphical
calculi with model $\Model_K$,
and where the morphisms preserve this `fullness' of the model.

\begin{definition}[$K$-Phase-Field Calculi]  \label{defKPhaseCalculi}
	We define the category of $K$-Phase-Field Graphical Calculi
	as the subcategory of \PRGC{$K$}
	where the objects $G_S$ are restricted by requiring $S\iso K$,
	and the ring homomorphisms $f_\bbR$ are restricted to being automorphisms of $K$.
\end{definition}

\begin{proposition} \label{propRingKInitial}
	The object $\ring_K$
	is `initial, up to ring isomorphism',
	for the category of $K$-Phase Field Graphical Calculi.
	I.e. every morphism $f$ from $\ring_K$ to $G_K$
	is given by a phase ring isomorphism $\hat \phi : \ring_K \to \ring_K$
	(see Definition~\ref{defPhaseRingHomomorphism})
	followed by an unique morphism $(f_{\bbD,2}, f_{\bbR,2}) :\ring_K \to G$ which is the identity on the ring structure:
	\begin{align}
		f_{\bbD, 2} : \ring_K & \to G   \\
		\state{white}{k} & \mapsto g_k &
		\binary{+}       & \mapsto g_+ &
		\binary{\times}       & \mapsto g_\times \\
		f_{\bbR, 2}  = id_K : K & \to K
	\end{align}
	What's more, every morphism of this form exists.
\end{proposition}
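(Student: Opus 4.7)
The plan is to reduce the proposition to Proposition~\ref{propHomInPRGCHomInRing} together with Corollary~\ref{corGivenIsoRingRInitial}, which together already characterise morphisms out of $\ring_K$ in $\PRGC{\bit{K}}$. First I would take an arbitrary morphism $f = (f_\bbD, f_\bbR) : \ring_K \to G_K$ in the category of $K$-Phase-Field Graphical Calculi. By Definition~\ref{defKPhaseCalculi} the ring part $f_\bbR$ must be a ring automorphism $\phi$ of $K$, and by Proposition~\ref{propHomInPRGCHomInRing} the diagram part $f_\bbD$ is completely determined by $\phi$: on generators it must act as $\state{white}{k} \mapsto g_{\phi(k)}$, $\binary{+} \mapsto g_+$, $\binary{\times} \mapsto g_\times$.

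Next I would produce the claimed factorisation. Define $\hat{\phi} : \ring_K \to \ring_K$ as the phase ring homomorphism associated to $\phi$ in Definition~\ref{defPhaseRingHomomorphism}, so that $(\hat{\phi}, \phi)$ is a morphism in $\PRGC{\bit{K}}$ (and indeed in the $K$-Phase-Field subcategory, since $\phi$ is an automorphism). Then apply Corollary~\ref{corGivenIsoRingRInitial} with the identity isomorphism $id_K : K \to K$ to obtain a unique morphism $(f_{\bbD,2}, id_K) : \ring_K \to G_K$ that acts as the identity on the ring structure, which by construction sends $\state{white}{k} \mapsto g_k$, $\binary{+} \mapsto g_+$, $\binary{\times} \mapsto g_\times$.

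The key step is then a check on generators: the composite $(f_{\bbD,2}, id_K) \circ (\hat{\phi}, \phi)$ sends $\state{white}{k}$ to $f_{\bbD,2}(\state{white}{\phi(k)}) = g_{\phi(k)}$, which matches the action of $f_\bbD$ computed above; the gates $\binary{+}$ and $\binary{\times}$ are preserved automatically. Since the generators of $\ring_K$ generate the whole calculus and both morphisms agree on them, the two morphisms coincide, giving $f = (f_{\bbD,2}, id_K) \circ (\hat{\phi}, \phi)$. Uniqueness of this decomposition reduces to uniqueness of $\phi = f_\bbR$ (which is forced) and the uniqueness in Corollary~\ref{corGivenIsoRingRInitial}. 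For the last sentence of the proposition, existence of every morphism of this form follows by reversing the construction: any $\phi \in \Aut(K)$ lifts via Definition~\ref{defPhaseRingHomomorphism} to $\hat{\phi}$, and any $(f_{\bbD,2}, id_K)$ exists by Corollary~\ref{corGivenIsoRingRInitial}, so their composite is a well-defined morphism in the $K$-Phase-Field category.

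The argument is mostly bookkeeping once one notices the factorisation; the main subtlety is ensuring that the intermediate object in the factorisation is genuinely $\ring_K$ (so that Corollary~\ref{corGivenIsoRingRInitial} applies to the second factor), and that the factorisation witnesses the claim across both the diagrammatic and ring components simultaneously, which is why it is crucial that both $\hat{\phi}$ and $f_{\bbD,2}$ are defined as functors over PROPs paired with the correct ring map in $\PRGC{\bit{K}}$.
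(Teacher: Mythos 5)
Your proposal is correct and follows essentially the same route as the paper: the paper's proof is the same generator-level argument (checking that the composite $f_{\bbD,2}\circ\hat\phi$ acts correctly on $\state{white}{s}$, $\binary{+}$, $\binary{\times}$, and that $f_{\bbD,2}$ is forced by the model $\Model_K$), which you simply obtain by citing Proposition~\ref{propHomInPRGCHomInRing} and Corollary~\ref{corGivenIsoRingRInitial} instead of redoing the computation inline. If anything, you are slightly more explicit about why an arbitrary morphism genuinely factors through $\hat\phi$, a step the paper leaves implicit.
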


\begin{proof} \label{prfPropRingKInitial}
	We first show that every morphism of the form
	$(f_{\bbD,2}, f_{\bbR,2}) :\ring_K \to G$,
	as constructed in the statement, obeys the requirements of our category.
	Given a field automorphism $\phi : K \to K$:
	\begin{align}
		f_{\bbD,2} \phi_\bbD \left(\state{white}{s}\right) & = f_{\bbD,2} \left( \state{white}{\phi(s)} \right) = g_{id_K \phi(s)} = g_{\phi(s)} \\
		f_{\bbD,2} \phi_\bbD \left(\binary{+}\right)       & = f_{\bbD,2} \left(\binary{+}\right) = g_+                                          \\
		f_{\bbD,2} \phi_\bbD \left(\binary{\times}\right)  & = f_{\bbD,2} \left(\binary{\times}\right) = g_\times
	\end{align}
	The morphism $(f_{\bbD,2}, f_{\bbR,2})$ is unique because $f_{\bbD,2}$ is entirely determined by the action on the generators of $\ring_K$,
	which is entirely determined by the model $\Model_K$ in $G_K$.
\end{proof}

\begin{remark} \label{remRingKGeneric}
	We have therefore shown that:
	\begin{itemize}
		\item $\ring_K$ contains only those generators needed to be a phase ring graphical calculus over monoids (Definition~\ref{defRingR}
		      and Proposition~\ref{propRingRPropR})
		\item $\ring_K$ is universal and complete (Theorem~\ref{thmRingRCompleteZW})
		\item $\ring_K$ has its interpretation fully determined, up to choice of basis (Lemma~\ref{lemRingRepFunctions})
		\item $\ring_K$ is `initial, up to field automorphism' for $K$-Phase-Field Graphical Calculi (Proposition~\ref{propRingKInitial})
		\item Every complete universal graphical calculus in \PRGC{$K$} extends to a $K$-Phase-Field Graphical Calculus (Lemma~\ref{lemUniversalGraphicalKBitExtends})
	\end{itemize}
	Which, in the view of the author,
	is sufficient reason for calling $\ring_K$
	\emph{the} generic phase field calculus for the field $K$ into \bits{K}.
\end{remark}

It is tempting to quotient out this `up to isomorphism' aspect of Proposition~\ref{propRingKInitial},
but doing so would remove phase ring homomorphisms.
These special morphisms will turn out to have important
properties, in particular that they send sound equations to sound equations,
i.e. `theorems are closed under phase ring homomorphism'.

\section{Phase ring homomorphisms} \label{secPhaseRingHomomorphisms}

In Definition~\ref{defPhaseRingHomomorphism}
we showed that any ring homomorphism $\phi: S \to S'$
lifts to the maps
\begin{align}
	\hat \phi : \ring_S   & \to \ring_{S'} \\
	\tilde \phi : \bit{S} & \to \bit{S'}
\end{align}
In this section we continue this idea,
showing first how to perform the same operation
for $\ZWR$ and $\ZHR$,
and then how this lifting interacts
with the semantics (Theorem~\ref{thmLiftRingHomSoundness})
and syntax (Theorem~\ref{thmRingAutomorphismEntailment}) of all three calculi.
We will assume, when speaking of $\ZH_R$, that
the ring $R$ has $\half$.
Exploration of algebras other than rings is delayed until \S\ref{secPhaseGroupHomomorphismsEtc}.

\begin{proposition} \label{propLiftingRingHoms}
	For $\hat \phi$ and $\tilde \phi$ from Definition~\ref{defPhaseRingHomomorphism} the following diagram commutes
	\begin{align} \label{eqnLiftedRingHomDiagram}
		\begin{tikzcd}[ampersand replacement=\&]
			\ring_S \arrow[d, "\idot{}"] \arrow[r, "\hat \phi"] \& \ring_{S'} \arrow[d, "\idot{}"] \\
			\bit{S} \arrow[r, "\tilde \phi"] \& \bit{S'}
		\end{tikzcd}
	\end{align}
\end{proposition}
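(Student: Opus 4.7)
The plan is to proceed by induction on the structure of diagrams in $\ring_S$, reducing the commutativity of the square to a check on each generator. Since $\hat\phi$ is defined as a strict symmetric monoidal functor, $\idot{}$ is a strict symmetric monoidal functor by construction of $\ring_S$, and $\tilde\phi$ is a strict symmetric monoidal functor because entrywise application of the ring homomorphism $\phi$ commutes with matrix composition and with Kronecker products, it suffices to verify the equation $\interpret{\hat\phi(g)} = \tilde\phi\interpret{g}$ for each generator $g$.

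First I would dispense with the compact closed structure: the wire, swap, cup, cap and empty diagram all have interpretations whose matrix entries lie in $\set{0,1} \subset S$, and any ring homomorphism fixes $0$ and $1$, so $\tilde\phi$ acts as the identity on these matrices, matching the fact that $\hat\phi$ fixes these generators. Next I would check the three core generators of Figure~\ref{figGeneratorsRingR}. For the addition and multiplication gates, the matrices in Figure~\ref{figGeneratorsRingR} again have entries only in $\set{0,1}$, and $\hat\phi$ leaves these gates untouched, so commutativity is immediate. For the state generator,
\begin{align*}
\interpret{\hat\phi\left(\state{white}{a}\right)}
= \interpret{\state{white}{\phi(a)}}
= \begin{pmatrix} 1 \\ \phi(a) \end{pmatrix}
= \tilde\phi \begin{pmatrix} 1 \\ a \end{pmatrix}
= \tilde\phi\interpret{\state{white}{a}},
\end{align*}
using that $\phi(1) = 1$.

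To close the induction I would verify that the equation $\interpret{\hat\phi(\cdot)} = \tilde\phi\interpret{\cdot}$ is closed under $\comp$ and $\tensor$. For composition, if $D_1 : m \to k$ and $D_2 : k \to n$ satisfy the equation, then $\interpret{\hat\phi(D_2 \comp D_1)} = \interpret{\hat\phi(D_2)} \comp \interpret{\hat\phi(D_1)} = \tilde\phi\interpret{D_2} \comp \tilde\phi\interpret{D_1}$, and this equals $\tilde\phi(\interpret{D_2}\comp\interpret{D_1})$ because the $(i,j)$-entry of a matrix product is a polynomial in the entries of its factors with coefficients in $\bbZ$, so $\phi$ distributes over it. The same argument using the Kronecker product formula handles $\tensor$.

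The only genuine obstacle is formalising that $\tilde\phi$ is strict symmetric monoidal, i.e.\ that entrywise application of $\phi$ really does commute with matrix composition and tensor; but this is a standard consequence of $\phi$ being a ring homomorphism, so no serious difficulty arises. Everything else is a direct calculation on generators with entries in $\set{0,1}$.
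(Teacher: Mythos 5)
Your proof is correct and follows essentially the same route as the paper, which simply verifies the commuting square on the three generators of $\ring_S$ (state, addition, multiplication) exactly as you do. The additional material you supply — checking the wire/cup/cap structure and spelling out why generator-wise verification suffices via monoidality of $\hat\phi$, $\idot{}$ and $\tilde\phi$ — is the standard justification the paper leaves implicit, so no substantive difference arises.
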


\begin{proof} \label{prfPropLiftingRingHoms}
	We verify that the diagram \eqref{eqnLiftedRingHomDiagram} commutes for all the generators of $\ring_S$:
	\begin{align}
		  & \begin{tikzcd}[ampersand replacement=\&]
			\state{white}{s} \arrow[d, "\idot{S}"] \arrow[r, "\hat \phi"] \& \state{white}{\phi(s)} \arrow[d, "\idot{S'}"] \\
			\begin{pmatrix}
				1 \\ s
			\end{pmatrix} \arrow[r, "\tilde \phi"] \& \begin{pmatrix}
				1 \\ \phi(s)
			\end{pmatrix}
		\end{tikzcd}
	\end{align}
	\begin{align}
		\begin{tikzcd}[ampersand replacement=\&]
			\binary[poly]{} \arrow[d, "\idot{S}"] \arrow[r, "\hat \phi"] \& \binary[poly]{} \arrow[d, "\idot{S'}"] \\
			\begin{pmatrix}
				1 & 0 & 0 & 0 \\
				0 & 1 & 1 & 0
			\end{pmatrix} \arrow[r, "\tilde \phi"] \& \begin{pmatrix}
				1 & 0 & 0 & 0 \\
				0 & 1 & 1 & 0
			\end{pmatrix}
		\end{tikzcd}
	\end{align}
	\begin{align}
		\begin{tikzcd}[ampersand replacement=\&]
			\binary[white]{} \arrow[d, "\idot{S}"] \arrow[r, "\hat \phi"] \& \binary[white]{} \arrow[d, "\idot{S'}"] \\
			\begin{pmatrix}
				1 & 0 & 0 & 0 \\
				0 & 0 & 0 & 1
			\end{pmatrix} \arrow[r, "\tilde \phi"] \& \begin{pmatrix}
				1 & 0 & 0 & 0 \\
				0 & 0 & 0 & 1
			\end{pmatrix}
		\end{tikzcd}
	\end{align}
\end{proof}

\begin{proposition} \label{propLiftedRingHomZWZH}
	The analogous lifting of $\phi$ to $\hat \phi$ for $\ZW_R$ and $\ZH_R$ are as follows:
	\begin{align}
		\phi : S              & \to S'                          \\
		\hat \phi : \ZW_S     & \to \ZW_{S'}                    \\
		\spider{smallblack}{} & \mapsto \spider{smallblack}{}   \\
		\spider{white}{s}     & \mapsto \spider{white}{\phi(s)}
	\end{align}
	\begin{align}
		\hat \phi : \ZH_S     & \to \ZH_{S'}                  \\
		\spider{ZH}{s}        & \mapsto \spider{ZH}{\phi(s)}  \\
		\spider{smallwhite}{} & \mapsto \spider{smallwhite}{}
	\end{align}
	And the following diagrams commute
	\begin{align}
		\begin{tikzcd}[ampersand replacement=\&]
			\ZW_S \arrow[d, "\idot{}"] \arrow[r, "\hat \phi"] \& \ZW_{S'} \arrow[d, "\idot{}"] \\
			\bit{S} \arrow[r, "\tilde \phi"] \& \bit{S'}
		\end{tikzcd}
		  & \qquad
		\begin{tikzcd}[ampersand replacement=\&]
			\ZH_S \arrow[d, "\idot{}"] \arrow[r, "\hat \phi"] \& \ZH_{S'} \arrow[d, "\idot{}"] \\
			\bit{S} \arrow[r, "\tilde \phi"] \& \bit{S'}
		\end{tikzcd}
	\end{align}
	Note that we do not make it notationally explicit which base language is being considered.
	The construction of $\hat \phi$ and $\tilde \phi$ are the same for $\ring_R$,
	$\ZH$, and $\ZW$, with the context making clear which language we are considering.
\end{proposition}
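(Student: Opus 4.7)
The plan is to mirror the verification used in Proposition~\ref{propLiftingRingHoms}: check that the semantic square commutes on each generator, then invoke functoriality to conclude for all diagrams. The key observation is that $\tilde\phi$ is a strict symmetric monoidal functor $\bit{S}\to\bit{S'}$ acting entrywise on matrices via $\phi$, and that $\hat\phi$ is defined to act only on the labels of generators. Hence it suffices to check, for each generator $g$ of $\ZW_S$ (resp.\ $\ZH_S$), that $\tilde\phi(\interpret{g}) = \interpret{\hat\phi(g)}$; since all the interpretations are matrices whose entries are either fixed integers ($0$, $1$, $-1$) or the label itself, and ring homomorphisms preserve $0$, $1$, $-1$ and multiplication/addition, this reduces to a finite, routine check. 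Functoriality of $\tilde\phi$ then propagates the identity through $\tensor$ and $\comp$, and well-definedness on the wires/cups/caps follows from the fact that these generators carry only integer entries untouched by $\phi$.

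For $\ZW_S$ the generators to check are the white Z-spider, the black W-spider, and the crossing. The white spider has interpretation $\ket{0\dots0}\bra{0\dots0} + s\ket{1\dots1}\bra{1\dots1}$, so applying $\tilde\phi$ entrywise gives $\ket{0\dots0}\bra{0\dots0} + \phi(s)\ket{1\dots1}\bra{1\dots1} = \interpret{\spider{white}{\phi(s)}}$, matching $\interpret{\hat\phi(\spider{white}{s})}$. The black W-spider and the crossing have interpretations built entirely from $0, 1, -1$, and $\hat\phi$ is defined to fix them; since $\phi$ fixes $0, 1, -1$, both sides agree. For $\ZH_S$ the generators are the phase-free white Z-spider (whose interpretation uses only $0, 1$, hence is fixed by $\tilde\phi$) and the H-box, whose matrix interpretation has all entries equal to $1$ except the bottom-right entry equal to $s$; $\tilde\phi$ sends this to the matrix with bottom-right entry $\phi(s)$, which is precisely $\interpret{\hat\phi(\spider{ZH}{s})}$.

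The mildly subtle point --- and the only place where one might slip up --- is justifying that $\hat\phi$ is a well-defined strict symmetric monoidal functor rather than just a map on generators. Because $\hat\phi$ acts as the identity on the wires, swaps, cups and caps that furnish the compact closed PROP structure, and only relabels the generators within their own arities, it extends uniquely to a strict symmetric monoidal functor on the free PROP of diagrams. The commutativity of the semantic squares on all diagrams then follows by induction on diagram structure, using that $\tilde\phi$ preserves $\tensor$ and $\comp$ (it is entrywise, and matrix addition/multiplication commute with entrywise application of a ring homomorphism). This last observation is the main obstacle worth flagging, but since $\tilde\phi$ is built from the very same ring homomorphism $\phi$ it is a direct consequence of $\phi$ being a ring homomorphism, and no additional work beyond what was done for $\ring$ is required.
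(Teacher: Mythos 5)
Your proposal is correct and follows essentially the same route as the paper: define $\tilde\phi$ by entrywise application of $\phi$ (exactly as for $\ring$), verify $\interpret{\hat\phi(g)} = \tilde\phi(\interpret{g})$ generator by generator for $\ZW_S$ and $\ZH_S$, and let strict symmetric monoidal functoriality of both sides extend the identity to all diagrams. The paper's own proof is just a terser statement of this same check (adding only the remark that $(\hat\phi,\phi)$ is thereby a morphism in $\PRGC{\bit{R}}$), so no further comment is needed.
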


\begin{proof} \label{prfPropLiftedRingHomZWZH}
	In both cases $\tilde \phi$ is defined analogously to Definition~\ref{defPhaseRingHomomorphism},
	and one can verify that for the generators of $\ZH$ and $\ZW$ that in both cases we have
	$\interpret{\hat \phi D} = \tilde \phi \interpret{D}$.
	It is also clear from the definitions
	that $(\hat \phi, \phi)$ satisfies the conditions for being
	a morphism in $\PRGC{\bit{R}}$.
\end{proof}

\begin{remark} \label{remUniversalGeneratorsDetermineTilde}
Since $\ring$, $\ZH$ and $\ZW$ are all universal graphical calculi we note that $\tilde \phi$
is entirely determined by its action on the generators.
\end{remark}

Since $\phi$ determines $\hat \phi$
we propose the name `phase homomorphism pair' for this structure of $\phi, \hat \phi, \tilde \phi$
and in \S\ref{secPhaseGroupHomomorphismsEtc} we
will give a broader definition of such a pair.
\S\ref{secPhaseGroupHomomorphismsEtc} also gives explicit examples in the phase \emph{group} case.
The primary property of these pairs is preservation of \emph{soundness}:

\begin{proposition} \label{propRingHomPreservesSoundness}
	If the diagrammatic equation $A = B$ in $\ring_S$ is sound,
	then the diagrammatic equation $\hat \phi(A) = \hat \phi(B)$ in $\ring_{S'}$ is sound
\end{proposition}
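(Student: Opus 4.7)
The plan is to reduce this to the commutative square already established in Proposition~\ref{propLiftingRingHoms}, so the main work is really just verifying that the square, which is proved only on generators, extends to all diagrams, and then chasing the equality around it.

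First I would observe that $\hat\phi$ is by construction a strict symmetric monoidal functor: Definition~\ref{defPhaseRingHomomorphism} specifies $\hat\phi$ on the three generators $\state{white}{a}$, $\binary[poly]{}$, $\binary[white]{}$, sends wires, swaps, cups and caps to themselves, and is then extended to all diagrams via $\comp$ and $\tensor$. Similarly $\tilde\phi$ is a strict symmetric monoidal functor on $\bit{S}$, since entrywise application of a ring homomorphism to matrices preserves identities, composition, tensor products, swaps and the cup/cap matrices. The interpretations $\idot{S}$ and $\idot{S'}$ are strict symmetric monoidal by definition of a graphical calculus (Definition~\ref{defGraphicalCalculus}).

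Next I would note that Proposition~\ref{propLiftingRingHoms} verifies commutativity of the square
\begin{align}
\begin{tikzcd}[ampersand replacement=\&]
\ring_S \arrow[d, "\idot{S}"] \arrow[r, "\hat\phi"] \& \ring_{S'} \arrow[d, "\idot{S'}"] \\
\bit{S} \arrow[r, "\tilde\phi"] \& \bit{S'}
\end{tikzcd}
\end{align}
on each of the three generators of $\ring_S$ (and trivially on wires, swaps, cups and caps, since $\hat\phi$ and $\tilde\phi$ act as the identity on those). Because the two composite functors $\tilde\phi \comp \idot{S}$ and $\idot{S'} \comp \hat\phi$ from $\ring_S$ to $\bit{S'}$ are both strict symmetric monoidal, and they agree on a generating set, they agree on every diagram $D$ in $\ring_S$:
\begin{align}
\interpret{\hat\phi(D)}_{S'} \;=\; \tilde\phi\bigl(\interpret{D}_{S}\bigr).
\end{align}

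Finally I would chase the soundness hypothesis around this square. Suppose $A = B$ is sound in $\ring_S$, so $\interpret{A}_S = \interpret{B}_S$. Applying the function $\tilde\phi$ to both sides gives $\tilde\phi(\interpret{A}_S) = \tilde\phi(\interpret{B}_S)$, and substituting via the commutative square yields $\interpret{\hat\phi(A)}_{S'} = \interpret{\hat\phi(B)}_{S'}$, which is precisely soundness of $\hat\phi(A) = \hat\phi(B)$ in $\ring_{S'}$. The only genuinely delicate point is the functoriality argument in the middle paragraph; this is the step where one has to resist the temptation to re-verify the square by hand on non-generator diagrams, and instead appeal to uniqueness of strict symmetric monoidal extensions from a generating set.
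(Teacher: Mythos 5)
Your proposal is correct and follows essentially the same route as the paper: apply $\tilde\phi$ to both sides of $\interpret{A}=\interpret{B}$ and use the commutative square of Proposition~\ref{propLiftingRingHoms} to rewrite $\tilde\phi\interpret{\cdot}$ as $\interpret{\hat\phi(\cdot)}$. The paper leaves implicit the step you spell out (that the square, checked on generators, extends to all diagrams since both composites are strict symmetric monoidal functors agreeing on a generating set), but this is the same argument, just made explicit.
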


\begin{proof} \label{prfPropRingHomPreservesSoundness}
	\begin{align}
		\interpret{A}             = \interpret{B}           \implies
		\tilde \phi \interpret{A} = \tilde \phi \interpret{B}   \overset{\ref{propLiftingRingHoms}}{\implies}
		\interpret{\hat \phi A}   = \interpret{\hat \phi B}
	\end{align}
\end{proof}

\begin{example}[Soundness of field phase homomorphisms] \label{exaLiftedRingHomSemantics}
	Consider the field automorphisms of $\bbQ[i, \sqrt{2}]$
	\begin{align}
		\sigma: a + ib + \sqrt{2}c + i\sqrt{2}d & \mapsto a - ib + \sqrt{2}c - i\sqrt{2}d \\
		\tau: a + ib + \sqrt{2}c + i\sqrt{2}d   & \mapsto a + ib - \sqrt{2}c - i\sqrt{2}d
	\end{align}

	The following $\ring_{\bbQ[i, \sqrt{2}]}$ equation is sound:

	\begin{align}
		\exaGalois{i}{-\sqrt{2}i}{\sqrt{2}} = \state{white}{-\sqrt{2}(1+i)}
	\end{align}

	Therefore the following $\ring_{\bbQ[i, \sqrt{2}]}$ equations,
	found by applying the automorphisms $\hat \sigma$, $\hat \tau$ and $\hat \sigma \hat \tau$, are sound:
	\begin{align}
		\hat \sigma           &   & \exaGalois{-i}{\sqrt{2}i}{\sqrt{2}}   & =  \state{white}{-\sqrt{2}(1-i)} \\
		\hat \tau             &   & \exaGalois{i}{\sqrt{2}i}{-\sqrt{2}}   & = \state{white}{\sqrt{2}(1+i)}   \\
		\hat \sigma \hat \tau &   & \exaGalois{-i}{-\sqrt{2}i}{-\sqrt{2}} & = \state{white}{\sqrt{2}(1-i)}   \\
	\end{align}
\end{example}

\begin{theorem}[Phase homomorphisms preserve soundness]\label{thmLiftRingHomSoundness}
	if $\phi: R \to S$ is a ring homomorphism then
	$\ring_R \semantic A = B$ implies $\ring_S \semantic \hat \phi A = \hat \phi B$.
	The same is true for $\ZW_R$ and $\ZH_R$.
\end{theorem}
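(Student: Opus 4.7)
The plan is to observe that this theorem is essentially an immediate consequence of the commuting square established in Propositions~\ref{propLiftingRingHoms} and \ref{propLiftedRingHomZWZH}, and already written out for $\ring$ in Proposition~\ref{propRingHomPreservesSoundness}. I would simply unify that argument so it covers all three calculi at once, since the structure of the proof is identical in each case.

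First I would recall the lifted functor $\tilde\phi : \bit{R} \to \bit{S}$ from Definition~\ref{defPhaseRingHomomorphism}, which applies $\phi$ entrywise to matrices. Because $\phi$ is a ring homomorphism, $\tilde\phi$ is a well-defined functor; in particular it respects equality of matrices. Then for any calculus $\bb{G} \in \set{\ring, \ZW, \ZH}$ with phase ring variant $\bb{G}_R$, Propositions~\ref{propLiftingRingHoms} and \ref{propLiftedRingHomZWZH} give us the commuting square
\begin{align}
    \begin{tikzcd}[ampersand replacement=\&]
        \bb{G}_R \arrow[d, "\idot{}"'] \arrow[r, "\hat \phi"] \& \bb{G}_S \arrow[d, "\idot{}"] \\
        \bit{R} \arrow[r, "\tilde \phi"'] \& \bit{S}
    \end{tikzcd}
\end{align}
so that $\interpret{\hat\phi D} = \tilde\phi \interpret{D}$ for every generator $D$, and hence by the strict symmetric monoidal structure for every diagram $D$.

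The main argument is then the three-line chain
\begin{align}
    \interpret{A} = \interpret{B}
    \;\implies\;
    \tilde\phi \interpret{A} = \tilde\phi \interpret{B}
    \;\implies\;
    \interpret{\hat\phi A} = \interpret{\hat\phi B},
\end{align}
where the first implication uses functoriality of $\tilde\phi$ and the second is the commuting square applied to $A$ and to $B$. This is precisely the argument of Proposition~\ref{propRingHomPreservesSoundness}; I would just note that the same argument works verbatim for $\ZW_R$ and $\ZH_R$ since the only ingredients used are (i) the existence of the lifted pair $(\hat\phi, \tilde\phi)$, and (ii) the commuting square, both of which have been established for all three calculi.

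There is no real obstacle here; the only subtlety worth remarking on is that the proof does not depend on $\phi$ being injective, surjective, or an isomorphism, only on it being a ring homomorphism, so the theorem covers phase ring endomorphisms, inclusions into extension rings, and quotients alike. For $\ZH_R$ I would flag implicitly that we are working under the standing assumption that $R$ (and hence $S$, via $\phi$) contains $\half$, so that the lifted functor $\hat\phi$ sends $\ZH_R$ diagrams to valid $\ZH_S$ diagrams.
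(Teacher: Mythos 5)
Your proposal is correct and follows essentially the same route as the paper: the paper's proof also reduces the $\ring$ case to Proposition~\ref{propRingHomPreservesSoundness} and handles $\ZW_R$ and $\ZH_R$ by repeating that argument with the commuting squares of Proposition~\ref{propLiftedRingHomZWZH}. Your added remarks (no injectivity/surjectivity needed, and the standing $\half$ assumption for $\ZH$) are consistent with the paper's conventions.
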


\begin{proof} \label{prfThmLiftRingHomSoundness}
	The statement for $\ring$ is just Proposition~\ref{propRingHomPreservesSoundness}.
	For $\ZW$ and $\ZH$ use the same argument as Proposition~\ref{propRingHomPreservesSoundness}
	in conjunction with Proposition~\ref{prfPropLiftedRingHomZWZH}.
\end{proof}

We can even go beyond the level of semantics,
and show that syntactic entailment also commutes with $\hat \phi$
(for these three calculi).
We will need to show two things:
That rules are preserved under the action of $\hat \phi$,
and that rule application is preserved under the action of $\hat \phi$.
Preservation of the rules is done by checking all the rules for every calculus mentioned by hand.
This requires no additional calculation,
and so we include just one example below.
There are rules in ZX that do \emph{not}
satisfy the analogous requirements for phase group homomorphisms,
which we will mention explicitly in the proof of Proposition~\ref{propZXPhaseGroupProofPreservingCliffordT}.

\begin{example}[A rule preserved by $\hat \phi$] \label{exaRulePreservingMap}
	Given a ring homomorphism $\phi: R \to S$, the spider rule of $\ring_R$ is mapped by $\hat \phi$
	to a restriction of the same rule in $\ring_S$:
	\begin{align}
		\vc{\InputIfFileExists{./figures/QR/phihat_times_l.tikz}{}{Missing file!}} = \vc{\InputIfFileExists{./figures/QR/phihat_times_r.tikz}{}{Missing file!}}
	\end{align}
	The restriction is that the above equation only matches onto
	phases that are in the image of $\phi$ in $S$.
\end{example}

\begin{proposition} \label{propRingHomPreservesSyntax}
	Using the rules from the completeness theorem for $\ring_\cdot$ (Theorem~\ref{thmRingRCompleteZW})
	and given a ring homomorphism $\phi: R \to S$,
	if $\ring_R \entails A = B$ then $\ring_{S} \entails \hat \phi A = \hat \phi B$,
	using the same proof steps.
\end{proposition}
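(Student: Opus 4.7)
The plan is to proceed by induction on the length of a derivation $\ring_R \entails A = B$, noting that it suffices to verify two things: (i) $\hat \phi$ sends every rule instance in $\ring_R$ to a rule instance in $\ring_S$, and (ii) $\hat \phi$ commutes with the DPO rewriting operation itself. Point (ii) is essentially automatic since $\hat \phi$ is by construction a strict symmetric monoidal functor that acts as the identity on wires, swaps, cups, caps, and the structural generators of the PROP, and only relabels the phase data. Given any matching $m: L \into D$ used in a rewrite $D \redto D'$, post-composing with $\hat \phi$ produces a matching $\hat \phi \circ m$ in $\hat \phi D$ whose pushout yields exactly $\hat \phi D'$; this is because $\hat \phi$ preserves the underlying graph structure and monomorphisms.

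For point (i), I would proceed by inspecting each of the rules in Figures~\ref{figRingRRules1} and \ref{figRingRRules2} in turn. The bulk of the rules ($\texttt{D}$, $\texttt{A}$, $\texttt{N}$, $\texttt{Hopf}$, $\texttt{L}$, $\texttt{CZA1}$, $\texttt{CZA2}$, $\texttt{B1}$, $\texttt{B2}$, $\texttt{CP}$, the symmetry/associativity/unit rules, and the tracelessness rules) contain no phase data whatsoever on the white states; they involve only the generators of the addition gate, the multiplication gate, and wire bending. Since $\hat \phi$ acts as the identity on these generators, each such rule instance is mapped identically to itself, and is therefore trivially a valid rule instance in $\ring_S$.

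The rules that require genuine attention are precisely those involving labelled states: the multiplication rule $\bap{white}{}{a}{b} \by{\times} \state{white}{ab}$, the addition rule $\bap{poly}{}{a}{b} \by{+} \state{white}{a+b}$, the unit rules $\unary[white]{1} \by{\times} \unary[none]{}$ and $\unary[poly]{0} \by{+} \unary[none]{}$, and the scalar identity $\scalar{white}{0} \by{I} \tikzfig{wire/empty}$. These are handled exactly by the defining properties of a ring homomorphism: $\phi(ab) = \phi(a)\phi(b)$ takes care of the multiplication rule image, $\phi(a+b) = \phi(a)+\phi(b)$ the addition rule, $\phi(1) = 1$ the multiplicative unit, and $\phi(0) = 0$ both the additive unit and the scalar identity. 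In each case, the image under $\hat \phi$ of the left-hand side is a valid redex for the \emph{same} rule in $\ring_S$, and the resulting right-hand side matches $\hat \phi$ applied to the original right-hand side.

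The main (and essentially only) obstacle is ensuring that no rule in the ruleset has implicit side-conditions on phases that would fail to be preserved by $\hat \phi$ — for instance, a rule that requires a phase to be invertible, or one that computes a phase via a non-polynomial operation (such as the $(\texttt{EU}')$ rule of Universal ZX discussed in \S\ref{secGeneralisingTheorems}). A quick audit of Figures~\ref{figRingRRules1} and \ref{figRingRRules2} confirms that no such rule occurs: every phase appearing on the right-hand side of a rule is obtained from phases on the left-hand side only through the operations $+$, $\times$, and the constants $0,1$, all of which are preserved by $\phi$. This is why the statement works for $\ring$ with \emph{any} ring homomorphism $\phi$, and foreshadows the tighter constraints we will need when lifting to phase \emph{group} calculi in \S\ref{secPhaseGroupHomomorphismsEtc}, where analogous proof-preservation will hold only for certain rulesets and certain homomorphisms.
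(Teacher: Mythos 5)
Your proposal is correct and follows essentially the same route as the paper: first check that $\hat \phi$ sends each rule instance of the $\ring$ ruleset to an instance of the same rule in $\ring_S$ (which works because every phase in a rule is an integer or a free variable, i.e.\ built only from $0,1,+,\times$, all preserved by $\phi$), and then observe that since $\hat \phi$ changes only labels and not the underlying graph, matchings and double-pushout squares are carried to matchings and double-pushout squares, so a derivation maps step-by-step to a derivation. The paper packages the second point as a commuting ``double double pushout'' diagram rather than an explicit induction on derivation length, but the content is the same.
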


\begin{proof} \label{prfPropRingHomPreservesSyntax}
	We first note that for any rule $A = B$ in $\ring_R$,
	the equation $\hat \phi A = \hat \phi B$
	expresses a restriction of the rule $A=B$ in $\ring_{S}$
	to the image of $R$ under $\phi$.
	This is because every phase used in a rule in $\ring_R$ is either an integer or a free variable,
	and so under the image of $\phi$ is either fixed (for an integer)
	or expressed as the image of a free variable (for a free variable).
	For one example see Example~\ref{exaRulePreservingMap},
	or for another: Applying $\hat \phi$ to the rule $(L)$ (Figure~\ref{figRingRRules2}) in $\ring_\bbZ$
	yields the identical rule $(L)$ in $\ring_\bb{Q}$,
	because $tr_z$ only involves the phases 0 and 1 (recall that we leave the white spider blank when it has phase 1),
	which are fixed by $\hat \phi$.

	The following diagram\footnote{A \emph{double double pushout diagram}.},
	where both the front and back wide rectangles are double-pushout rewrites, commutes,
	and $\hat \phi A = \hat \phi B$ is a valid rule application in $\ring_{R}$:

	\begin{align*}
		\begin{tikzcd}[ampersand replacement=\&]
			A \arrow[dd] \arrow[from=rr] \arrow[dr, "\hat \phi"] \& \& I \arrow[dd] \arrow[rr]  \arrow[dr, "\hat \phi"]\& \& B \arrow[dd]  \arrow[dr, "\hat \phi"]\& \\
			\& \hat \phi A \arrow[from=rr, crossing over] \& \&  \hat \phi I   \arrow[rr, crossing over] \& \& \hat \phi B \arrow[dd] \\
			G   \arrow[from=rr]  \arrow[dr, "\hat \phi"] \& \& G' \arrow[rr]  \arrow[dr, "\hat \phi"]\& \&   H  \arrow[dr, "\hat \phi"]\& \\
			\& \hat \phi G \arrow[from=rr]  \arrow[from=uu, crossing over]   \& \&  \hat \phi G' \arrow[rr]  \arrow[from=uu, crossing over]\& \&  \hat \phi H \\
		\end{tikzcd}
	\end{align*}

	In particular note that
	\begin{align}
		\begin{tikzcd}[ampersand replacement=\&]
			\hat \phi A \arrow[d] \arrow[from=r] \& \hat \phi I   \arrow[d]  \\
			\hat \phi G \arrow[from=r]   \&  \hat \phi G'
		\end{tikzcd}
		\quad \text{and} \quad
		\begin{tikzcd}[ampersand replacement=\&]
			\hat \phi I \arrow[d] \arrow[r] \& \hat \phi B   \arrow[d]  \\
			\hat \phi G' \arrow[r]   \&  \hat \phi H
		\end{tikzcd}
	\end{align}
	are both still pushouts, since $\hat \phi$ changes none of the underlying graph and only the labels.
	Therefore a sequence of DPO rewrites from $C$ to $D$ in $\ring_R$
	becomes a sequence of DPO rewrites from $\hat \phi C$ to $\hat \phi D$ in $\ring_{S}$
	under the action of $\hat \phi$.
\end{proof}

\begin{example}[Proof preservation of field phase homomorphisms] \label{exaLiftedRingHomSyntax}
	Using the same equation as in Example~\ref{exaLiftedRingHomSemantics},
	we see that the proof
	\begin{align}
		\ring_\bbC \entails & A = B                                                                                                               \\
		                    & \exaGalois{i}{\sqrt{2}i}{-\sqrt{2}} \by{+} \bapwide{white}{}{i}{\sqrt{2}(i-1)} \by{\times} \state{white}{-\sqrt{2}(1+i)}
	\end{align}
	is translated, using the field automorphism $\sigma: \bbC \to \bbC$,
	$\sigma: i \mapsto -i$, to the proof
	\begin{align}
		\ring_\bbC \entails & \hat \sigma A = \hat \sigma B                                                                                           \\
		                    & \exaGalois{-i}{-\sqrt{2}i}{-\sqrt{2}} \by{+} \bapwide{white}{}{-i}{\sqrt{2}(-i-1)} \by{\times} \state{white}{-\sqrt{2}(1-i)}
	\end{align}
\end{example}

\begin{theorem}[Phase homomorphisms preserve proofs] \label{thmRingAutomorphismEntailment}
	If $\ring_R \entails A = B$, then $\ring_S \entails \hat \phi(A) = \hat \phi(B)$
	for every ring homomorphism $\phi : R \to S$.
	The proof of $\ring_S \entails \hat \phi(A) = \hat \phi(B)$ is constructed
	as the one-to-one translation via $\hat \phi$ of the proof of $\ring_R \entails A = B$.
	The same is true for $\ZW_R$ and $\ZH_R$.
\end{theorem}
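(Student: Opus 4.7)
The plan is to reduce the theorem to the argument already carried out for $\ring_R$ in Proposition~\ref{propRingHomPreservesSyntax}, extending it to cover $\ZW_R$ and $\ZH_R$. The structure of that earlier proof relied on two orthogonal observations: first, that every rule $L=R$ of the calculus maps under $\hat\phi$ to a valid equation (possibly a restricted instance) in the target calculus; and second, that the double-pushout machinery commutes with $\hat\phi$ because $\hat\phi$ acts only on node labels, not on the underlying string-graph structure. The $\ring_R$ case of the theorem is exactly this proposition, so the new content lies in verifying the first observation for $\ZW_R$ and $\ZH_R$.

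I would begin by setting up notation carefully: for each of the three calculi we already have $\hat\phi$ defined in Definition~\ref{defPhaseRingHomomorphism} and Proposition~\ref{propLiftedRingHomZWZH}, and in each case $\hat\phi$ acts as the identity on every generator that carries no ring label (W-spiders, phase-free Z-spiders, H-boxes labelled by integers fixed by $\phi$, etc.) and transports ring-labelled generators by applying $\phi$ to their phase. The DPO half of the argument in Proposition~\ref{propRingHomPreservesSyntax} is entirely formal and makes no appeal to the particular calculus: the double-double-pushout diagram still commutes, the pushout squares remain pushouts after relabelling, and a sequence of rewrites from $C$ to $D$ still becomes a sequence of rewrites from $\hat\phi C$ to $\hat\phi D$. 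So I would simply invoke that argument verbatim, replacing $\ring$ with the relevant calculus.

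The work, then, reduces to a rule-by-rule check for the chosen complete rulesets of $\ZW_R$ (from Ref.~\cite{AmarThesis}) and $\ZH_R$ (the ruleset of Figure~\ref{figZHRRules}). For each rule one verifies that any literal phase appearing in the rule is an image under $\phi$ of itself (i.e.\ comes from $\bbZ$, or from elements such as $0,1,-1,\half$ preserved by any ring homomorphism assuming $\half$ exists on both sides in the $\ZH$ case) and that every parameterised phase is a free variable $a,b,\dots$, which $\hat\phi$ simply relabels to $\phi(a),\phi(b),\dots$. Every rule in these rulesets satisfies this pattern because rules are quantified over all values of their phase variables, and no rule performs any external arithmetic on phases outside of the ring operations $+$ and $\times$, which commute with $\phi$ by virtue of $\phi$ being a ring homomorphism. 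Hence $\hat\phi$ sends each rule instance in the source calculus to a rule instance in the target calculus.

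The main obstacle I anticipate is bookkeeping rather than mathematical depth: one must confirm rule-by-rule that no rule of $\ZW_R$ or $\ZH_R$ hides a side condition that fails to be preserved by an arbitrary ring homomorphism. The $\ZH_R$ ruleset needs particular care because the constant $\half$ appears in rule (A) and $-1$ appears implicitly in the H-box convention, and these must be preserved by $\phi$; the hypothesis that both $R$ and $S$ contain $\half$ (invoked throughout \S\ref{secCompleteTranslationZHQR}) together with $\phi$ being a ring homomorphism guarantees $\phi(\half)=\half$ and $\phi(-1)=-1$. Once this check is completed, concatenating the DPO commutation argument with the rule-preservation check yields a step-for-step translation of any derivation $\ring_R\entails A=B$ (resp.\ $\ZW_R$, $\ZH_R$) into a derivation of $\hat\phi A=\hat\phi B$ in the target calculus, which is precisely the statement of the theorem.
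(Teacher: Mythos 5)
Your proposal is correct and follows essentially the same route as the paper: the $\ring_R$ case is delegated to Proposition~\ref{propRingHomPreservesSyntax}, the DPO-relabelling argument is reused unchanged, and the remaining work is the rule-by-rule check that $\hat\phi$ sends each rule instance of $\ZW_R$ or $\ZH_R$ to a rule instance of the target calculus, with the key observation that fixed constants such as $2$ and $\half$ are preserved by any ring homomorphism (the paper illustrates exactly this with the (A) rule of $\ZH_R$, noting $\phi(2)=2$ and hence $\phi\left(\frac{a+b}{2}\right)=\frac{\phi(a)+\phi(b)}{2}$).
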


\begin{proof} \label{prfThmRingAutomorphismEntailment}
	The claim about $\ring_R$ follows from Proposition~\ref{propRingHomPreservesSyntax}.
	One then goes through each rule of $\ZH_R$ and $\ZW_R$
	and shows that the action of $\hat \phi$ on any instance of that rule is
	to create an instance of a rule already present in $\ZW_S$ or $\ZH_S$ accordingly.
	We give the $(A)$ rule in $\ZH_R$ as an example:
	\begin{align}
		\phi : R                                 & \to S \qquad \half{} \in R            \\
		1                                        & \mapsto 1                             \\
		2                                        & \mapsto 2                             \\
		\therefore \phi : \half{}                & \mapsto \half{}                       \\
		\therefore \hat \phi : \node{grey}{\neg} & \mapsto \node{grey}{\neg}             \\
		\vc{\InputIfFileExists{./figures/ZH/average_rule.tikz}{}{Missing file!}}                & \mapsto \vc{\InputIfFileExists{./figures/ZH/average_rule_phi.tikz}{}{Missing file!}}
	\end{align}
	Any assignment of elements of $R$ to $a$ or $b$ in the rule $(A)$
	gives a valid instance of the rule $\hat \phi(A)$,
	because $\phi(a) \in S$,
	$\phi(b) \in S$,
	$\phi(2) = 2$ and $\phi\left(\frac{a + b}{2}\right)
		= \frac{\phi(a) + \phi(b)}{2}$.
\end{proof}

Phase ring homomorphisms shall reappear
in the chapter on Conjecture Inference (\S\ref{chapConjectureInference}),
where we note that these lifted homomorphisms correspond to certain symmetries
in a geometric space.
They then come up again in the Conjecture Verification chapter
(\S\ref{chapConjectureVerification}) where we
use Theorem~\ref{thmLiftRingHomSoundness} to `generate' new sound equations from old,
and in doing so construct sufficient evidence to then prove hypotheses.
Before moving on to conjecture synthesis we shall talk about algebras other than rings.

\section{Phase algebra homomorphisms and beyond} \label{secPhaseGroupHomomorphismsEtc}

While ZX is \emph{not} a phase ring graphical calculus
we can still consider phase group homomorphisms,
just as we did for phase ring homomorphisms earlier.
In this section we will classify the phase group homomorphism pairs
for the finite subgroups of $[0, 2\pi)$ that contain $\pi / 4$.
Note that completeness of these finite subgroups
has been shown in Ref.~\cite[Figure~4]{JPVNormalForm}.
We shall be considering the presentation of ZX
where the generators are green spiders with phases from our chosen fragment,
and Hadamard gates.
\begin{align}
	\set{\spider{gn}{\alpha}, \node{h}{}}_{\alpha \in G}
\end{align}

\begin{definition}[ZX phase homomorphism]  \label{defZXPhaseHomomorphism}
	For a given group homomorphism $\phi:A \to B$ we can define the $\ZX_A$ phase homomorphism $\hat \phi$ as
	\begin{align}
		\hat \phi : \spider{gn}{\alpha} & \mapsto \spider{gn}{\phi(\alpha)} \\
		\node{h}{}                      & \mapsto \node{h}{}
	\end{align}
\end{definition}

\begin{remark} \label{remPhaseGroupHomomorphismsNotNew}
	Something similar to phase group homomorphisms
	has already been used in the ZX calculus.
	Some of the early incompleteness proofs used `non-standard interpretations',
	such as in Refs~\cite[Lemma~8]{Duncan09}, \cite[Lemma~1.5]{Duncan13}
	or \cite[\S2]{SdW14}.
	These papers do not define something akin to our $\hat \phi$,
	but instead jump straight to defining and using $\interpret{\hat \phi(\cdot)}$.
	The idea of $\hat \phi$ is already apparent in the definition of $\idot{k}$ of Ref.~\cite{SdW14}.
\end{remark}

At the end of the chapter we will have the language for defining
the general case of a phase homomorphism
and phase homomorphism pairs
(Definition~\ref{defPhaseHomomorphismPair})
but for now we define a ZX phase homomorphism pair as:

\begin{definition}[ZX phase homomorphism pair]  \label{defPhaseGroupHomomorphismPair}
	A ZX phase homomorphism pair $(\phi, \tilde \phi)$ is a group homomorphism $\phi: A \to B$
	and a functor $\tilde \phi : \bit{R} \to \bit{S}$ such that the following diagram commutes:
	\begin{align}
		\begin{tikzcd}[ampersand replacement=\&] \label{eqnPhaseGroupHomomorphismPair}
			A \arrow[d,"\phi"] 	\&  ZX_A \arrow[d,"\hat \phi"] \arrow[r,"\idot{}"] 	\& \bit{R} \arrow[d,"\tilde \phi"] \\
			B 					\&  ZX_B \arrow[r,"\idot{}"] 						\& \bit{S}
		\end{tikzcd}
	\end{align}
\end{definition}

\begin{proposition} \label{propEndomorphismsOfFiniteSubgroupOfZeroTwoPi}
	The finite subgroups of $[0, 2\pi)$ under addition are cyclic groups
	of the form $< 2\pi i k/n >$,
	and have endomorphisms of the form
	\begin{align}
		\phi_j: 2 \pi i k/n \mapsto j \times 2 \pi i k/n
	\end{align}
	These endomorphisms lift to a phase homomorphism pair (Definition~\ref{defPhaseGroupHomomorphismPair})
	if and only if the map $\tilde \phi$
	\begin{align}
		\tilde \phi  : \bit{\bbZ[2^{-\frac{1}{2}},e^{2\pi i k / n}]} & \to \bit{\bbZ[2^{-\frac{1}{2}},e^{2\pi i k / n}]}                 \\
		e^{2 \pi i k/n}                                        & \mapsto e^{2 \pi i j k/n} \label{eqnZXPhaseGroupLiftToField}
	\end{align}
	is a well-defined ring homomorphism (when restricted to scalars) that fixes the subring $\bbZ[2^{-\frac{1}{2}}]$.
\end{proposition}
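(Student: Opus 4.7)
The plan is to proceed in three stages: classify the finite groups, classify their endomorphisms, and then reduce the existence of the functor $\tilde\phi$ to the stated algebraic conditions on the scalars.

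First I would observe that any finite subgroup $G$ of $[0,2\pi)$ (viewed as $\bbR/2\pi\bbZ$) sits inside $U(1)$, and every finite subgroup of $U(1)$ is cyclic. So $G = \langle 2\pi/n \rangle \cong \bbZ/n\bbZ$ for some $n$. The endomorphisms of $\bbZ/n\bbZ$ are exactly the multiplications-by-$j$ maps for $j \in \bbZ/n\bbZ$, giving the family $\phi_j$ of the statement. This part is essentially bookkeeping.

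Next, for the forward direction, assume $(\phi_j, \tilde\phi)$ is a phase homomorphism pair. Plugging the generator $\spider{gn}{\alpha}$ with $\alpha = 2\pi k/n$ into the square of \eqref{eqnPhaseGroupHomomorphismPair} and comparing the $\ket{1\dots 1}\bra{1\dots 1}$-component forces
\begin{align}
\tilde\phi(e^{2\pi i k/n}) = e^{2\pi i j k/n},
\end{align}
so the restriction of $\tilde\phi$ to scalars is exactly the map of \eqref{eqnZXPhaseGroupLiftToField}. Plugging in the Hadamard generator (which is fixed by $\hat\phi$) forces $\tilde\phi$ to fix every entry of $\frac{1}{\sqrt 2}\begin{pmatrix}1 & 1\\ 1 & -1\end{pmatrix}$, hence to fix $\bbZ[2^{-1/2}]$. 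Since $\tilde\phi$ is a (symmetric monoidal) functor on $\bit{R}$, its action on scalars is in particular a ring homomorphism, so well-definedness comes for free in this direction.

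For the converse, suppose the scalar map of \eqref{eqnZXPhaseGroupLiftToField} is a well-defined ring homomorphism of $R := \bbZ[2^{-1/2}, e^{2\pi i k/n}]$ fixing $\bbZ[2^{-1/2}]$. Define $\tilde\phi : \bit R \to \bit R$ by entry-wise application, exactly as in Definition~\ref{defPhaseRingHomomorphism}. Because matrix composition and Kronecker product are polynomial operations on entries, this entry-wise extension is a strict symmetric monoidal functor that acts as the identity on cups, caps, swaps, and wires. It remains to verify the commuting square on each of the two generators: the Hadamard case is immediate because its matrix has entries in $\bbZ[2^{-1/2}]$, and the green-spider case reduces to the single identity $\tilde\phi(e^{2\pi i k/n}) = e^{2\pi i j k/n}$, which is our hypothesis. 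The main subtlety lives in the well-definedness clause: not every $\phi_j$ extends to a ring endomorphism of $R$, since sending $e^{2\pi i /n}$ to $e^{2\pi i j/n}$ is compatible with the $n$-th cyclotomic relation only when $\gcd(j,n)$ behaves correctly. Encoding this obstruction on the ring side is precisely what the ``if and only if'' of the proposition is designed to do, so the proof terminates once both directions of the reduction to scalar data are in place.
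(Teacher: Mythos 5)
Your proof is correct and takes essentially the same route as the paper's: both reduce the commuting square of the pair to conditions on scalars, with the green spider forcing the scalar action $e^{2\pi i k/n}\mapsto e^{2\pi i jk/n}$ (any map fixing $0$ and $1$ works for the spider itself) and the Hadamard generator forcing $\tilde\phi$ to fix $\frac{1}{\sqrt{2}}$ and $-1$, i.e.\ the subring $\bbZ[2^{-\frac{1}{2}}]$, while the classification of the finite subgroups and their endomorphisms is routine. Your version simply spells out more explicitly than the paper the converse construction (entry-wise extension giving a strict symmetric monoidal functor) and the forward extraction of the scalar conditions.
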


\begin{proof} \label{prfPropEndomorphismsOfFiniteSubgroupOfZeroTwoPi}
	The statements on the presentation of the finite subgroups and their
	endomorphisms follow from clearing denominators in the generators.
	The statement about lifting to phase homomorphisms is shown by noting that:
	\begin{itemize}
		\item $\tilde \phi$ must be a ring homomorphism when restricted to scalars
		\item For the green spider the diagram in \eqref{eqnPhaseGroupHomomorphismPair} commutes
		      for any $\phi$ that fixes $1$ and $0$,
		      but for the Hadamard node we additionally require $\tilde \phi$ to fix $-1$ and $\frac{1}{\sqrt{2}}$
	\end{itemize}
\end{proof}

\begin{example}[Clifford+T ZX phase homomorphism pairs] \label{exaPhaseGroupHomomorphisms}
	Consider the Clifford+T phase group $<e^{2\pi i/8}>$
	and its relationship to the field automorphisms $\sigma$ and $\tau$ from Example~\ref{exaLiftedRingHomSemantics}
	\begin{align}
		\sigma : i      & \mapsto -i        \\
		\tau : \sqrt{2} & \mapsto -\sqrt{2}
	\end{align}
	Note that $\tau$ does not fix $\sqrt{2}$.
	In Figure~\ref{figCliffordTPhaseGroupLift}
	we show all the endomorphisms of the Clifford+T phase group,
	the associated ring homomorphism $\tilde \phi$ from \eqref{eqnZXPhaseGroupLiftToField},
	and whether $(\phi, \tilde\phi)$ forms a ZX phase homomorphism pair.
	If no ring homomorphism is recorded in the column then the $\tilde \phi$ constructed
	is not a ring homomorphism at all.
	If $\tilde \phi$ is a valid ring homomorphism and also fixes $\sqrt{2}$,
	then $(\phi, \tilde \phi)$
	forms a phase homomorphism pair,
	and we record the effect of the pair in the final column.
	Note that this table shows that the only phase homomorphism pairs of
	the Clifford+T ZX phase group are the identity and complex conjugation (i.e. $\phi_1$ and $\phi_7$).
	\begin{figure}[h] \center
		\begin{tabular}{c l l}
			Group endomorphism & $\bbZ[i,\frac{1}{\sqrt{2}}]$ automorphism & Pair                \\
			\hline
			$\phi_0$           & --                                        & --                  \\
			$\phi_1$           & Identity                                  & Identity            \\
			$\phi_2$           & --                                        & --                  \\
			$\phi_3$           & $\tau$                                    & --                  \\
			$\phi_4$           & --                                        & --                  \\
			$\phi_5$           & $\sigma \tau$                             & --                  \\
			$\phi_6$           & --                                        & --                  \\
			$\phi_7$           & $\sigma$                                  & Complex conjugation \\
		\end{tabular}
		\caption{The group endomorphisms of the Clifford+T phases,
			details on whether they lift to phase group homomorphisms,\label{figCliffordTPhaseGroupLift}
			and the associated ring automorphism of $\bbZ[i, \frac{1}{\sqrt{2}}]$}
	\end{figure}
\end{example}

\begin{proposition} \label{propZXPhaseGroupHomomorphismsMod8}
	For a finite fragment of ZX that contains $\pi/4$, a group endomorphism $\phi_j$
	lifts to a phase homomorphism pair if and only if $j \equiv 1$ or $7$ modulo $8$.
\end{proposition}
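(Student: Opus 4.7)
The plan is to invoke Proposition~\ref{propEndomorphismsOfFiniteSubgroupOfZeroTwoPi}, which reduces the biconditional to a statement about the scalar map $\tilde\phi_j : \zeta_n \mapsto \zeta_n^j$: namely, that $\phi_j$ lifts to a phase homomorphism pair precisely when $\tilde\phi_j$ is a well-defined ring endomorphism of $\bbZ[2^{-1/2},\zeta_n]$ that fixes the subring $\bbZ[2^{-1/2}]$. Writing the phase group as $G = \langle 2\pi/n\rangle$, the hypothesis $\pi/4 \in G$ forces $8 \mid n$, so that $\zeta_8 = \zeta_n^{n/8}$ already lives inside $\bbZ[\zeta_n]$.

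The key ring-theoretic observation I would exploit is the identity $\sqrt{2} = \zeta_8 + \zeta_8^{-1}$, which embeds $\sqrt{2}$ inside the cyclotomic ring and therefore reformulates the condition ``$\tilde\phi_j$ fixes $\bbZ[2^{-1/2}]$'' as a trigonometric identity on $j$. Concretely, $\tilde\phi_j(\sqrt{2}) = \zeta_8^j + \zeta_8^{-j} = 2\cos(j\pi/4)$, so the requirement $\tilde\phi_j(\sqrt{2}) = \sqrt{2}$ collapses to $\cos(j\pi/4) = \cos(\pi/4)$. Since the value $2\cos(j\pi/4)$ depends only on $j$ modulo $8$ and takes the value $\sqrt{2}$ only at residues $j \equiv 1$ and $j \equiv 7$, this immediately supplies the necessary direction.

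For sufficiency, I would invoke the Galois theory of cyclotomic extensions: the automorphism group of $\bbQ(\zeta_n)/\bbQ$ acts by $\zeta_n \mapsto \zeta_n^j$ for $j \in (\bbZ/n\bbZ)^\times$, and each such automorphism restricts to a ring automorphism of $\bbZ[2^{-1/2},\zeta_n]$. The subgroup fixing $\sqrt{2}$ is cut out, by the same trigonometric computation, by the condition $j \equiv \pm 1 \pmod 8$. Combined with the lifting established in Proposition~\ref{propEndomorphismsOfFiniteSubgroupOfZeroTwoPi}, this produces the desired $(\phi_j, \tilde\phi_j)$ as a genuine phase homomorphism pair.

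The main obstacle, I expect, is the bookkeeping between two different modular conditions: the residue of $j$ modulo $n$ (which parameterises the underlying group endomorphism) and the residue of $j$ modulo $8$ (which is all that controls whether $\sqrt{2}$ is fixed, and hence all that appears in the final statement). Making this precise requires showing that the trigonometric condition is insensitive to refinements beyond the $8$-periodicity of $\zeta_8^j$, and that within the Galois group the coprimality to $n$ is automatic whenever a lift of the relevant kind exists.
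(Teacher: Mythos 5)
Your argument takes a genuinely different route from the paper's, and where it works it is cleaner. The paper obtains necessity by restricting $\phi_j$ to the subgroup $\langle \pi/4\rangle$ and reading the answer off the Clifford+T table (Figure~\ref{figCliffordTPhaseGroupLift}); you instead use $\sqrt{2} = \zeta_8 + \zeta_8^{-1}$ with $\zeta_8 = \zeta_n^{n/8}$, so that any pair forces $\tilde\phi_j(\sqrt{2}) = \zeta_8^{j}+\zeta_8^{-j} = 2\cos(j\pi/4)$, which equals $\sqrt{2}$ only for $j \equiv \pm 1 \pmod 8$. That is self-contained and makes the $8$-periodicity transparent. For the converse, the paper checks that $\sqrt{2}$ (and, for $j\equiv 1$, also $i$) is fixed, while you identify $\tilde\phi_j$ with the cyclotomic Galois automorphism $\zeta_n \mapsto \zeta_n^j$ and take the subgroup fixing $\sqrt{2}$; both are fine as far as they go.

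However, the obstacle you flagged at the end is real and cannot be dissolved: your sufficiency argument only produces a ring endomorphism when $j \in (\bbZ/n\bbZ)^\times$, whereas the statement quantifies over \emph{all} endomorphisms $\phi_j$ with $j\equiv 1,7 \pmod 8$, and coprimality is not automatic. Take $n=24$, $j=9\equiv 1 \pmod 8$: then $\zeta_{24}^9$ is a primitive $8$th root of unity, so it is not a root of $\Phi_{24}$ (indeed $\Phi_{24}(\zeta_{24}^9) = 1 - \zeta_{24}^{12} + 1 = 3$), hence $\zeta_{24}\mapsto\zeta_{24}^9$ extends to no ring endomorphism of $\bbZ[2^{-1/2},\zeta_{24}]$, and by Proposition~\ref{propEndomorphismsOfFiniteSubgroupOfZeroTwoPi} no phase homomorphism pair exists for $\phi_9$. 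So what your argument actually proves is the corrected statement ``$\phi_j$ lifts if and only if $\gcd(j,n)=1$ and $j\equiv\pm1\pmod 8$''; the proposition as written fails at the non-coprime residues, and the paper's own converse has the same silent assumption of well-definedness of $\tilde\phi_j$ when it asserts the conditions of Proposition~\ref{propEndomorphismsOfFiniteSubgroupOfZeroTwoPi} are met. Your instinct that the mod-$n$ versus mod-$8$ bookkeeping is the crux is exactly right; the fix is to add the coprimality hypothesis (equivalently, restrict to automorphisms of the phase group), not to hope it follows from $j\equiv\pm1\pmod 8$.
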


\begin{proof} \label{prfPropZXPhaseGroupHomomorphismsMod8}
	Restrict $\phi_j$ to the subgroup $<\pi/4>$,
	and note that by Figure~\ref{figCliffordTPhaseGroupLift} $j$ must be $1$ or $7$ modulo $8$.
	Conversely if $j \equiv 1$ then $\phi_j$ fixes $\sqrt{2}$ and $i$,
	and if $j \equiv 7$ then $\phi_j$ moves $i$ but fixes $\sqrt{2}$ (as seen by the action on
	$\frac{1 + i}{\sqrt{2}}$),
	in both cases satisfying the conditions of \ref{propEndomorphismsOfFiniteSubgroupOfZeroTwoPi}.
\end{proof}

\begin{remark} \label{remClassifyingRingHomPairs}
We have already classified the phase ring homomorphisms
for $\ZW$, $\ZH$ and $\ring$ by showing
that every ring homomorphism lifts to a phase ring homomorphism pair.
\end{remark}

Looking now at the phase group version of Theorem~\ref{thmRingAutomorphismEntailment}
(phase ring homomorphisms preserve proofs)
we need a set of rules for these fragments of ZX, for which we take the rules
of Figure~1 of Ref.~\cite{JPVNormalForm},
along with the rule (cancel) given in their Definition~10. We call this set of rules `RZX'.

\begin{proposition} \label{propZXPhaseGroupProofPreservingCliffordT}
	For a finite fragment of ZX that contains $\pi/4$,
	a group endomorphism $\phi_j$
	lifts to a $RZX$-proof preserving map
	if and only if
	$j \equiv 1$ mod $8$
	\begin{align}
		RZX \syntactic A = B \quad \implies \quad RZX \syntactic \hat \phi A = \hat \phi B
	\end{align}
\end{proposition}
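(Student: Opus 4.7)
The plan is to reduce to the two candidate residues $j \equiv 1$ and $j \equiv 7 \pmod{8}$ using Proposition~\ref{propZXPhaseGroupHomomorphismsMod8}, and then to treat these cases in opposite directions: the first affirmatively via a rule-by-rule check, the second by exhibiting a counterexample rule.

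For the restriction, I would note that a proof-preserving map $\hat{\phi}_j$ in particular preserves soundness of provable equations; by the completeness of the finite fragments of ZX containing $\pi/4$ from \cite{JPVNormalForm} this is equivalent to preserving semantic equality on the image of the interpretation, and hence forces $(\phi_j,\tilde{\phi})$ to form a phase homomorphism pair in the sense of Definition~\ref{defPhaseGroupHomomorphismPair} for some $\tilde{\phi}$. Proposition~\ref{propZXPhaseGroupHomomorphismsMod8} then restricts $j$ to residues $1$ and $7$ modulo $8$.

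For the forward implication ($j \equiv 1 \pmod{8}$), I would argue that every specific numerical phase appearing in the rules of $RZX$ is a multiple of $\pi/4$, since $\pi/4$ is contained in the fragment and all rules of \cite{JPVNormalForm} together with the (cancel) rule of their Definition~10 use only $\pi/4$-multiples as fixed phases. Because $8 \mid (j-1)$ we have $\phi_j(k\pi/4) = k\pi/4$ for every $k$, so $\hat{\phi}_j$ fixes every rule of $RZX$ up to the renaming of free phase variables. The DPO rewriting argument of Proposition~\ref{propRingHomPreservesSyntax} then adapts verbatim: each rule application in a proof of $A = B$ translates to a valid application of the same rule in the translated proof of $\hat{\phi}_j A = \hat{\phi}_j B$.

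For the reverse implication ($j \equiv 7 \pmod{8}$), I would produce a rule of $RZX$ whose image under $\hat{\phi}_7$ is not itself a rule of $RZX$; a natural candidate is a Hadamard-decomposition or Euler-angle rule featuring $\pi/4$ (or $\pi/2$) asymmetrically. Under $\phi_7$ such a phase becomes $-\pi/4 \pmod{2\pi}$, which no rule in the set matches, even after accounting for the meta-symmetries of $RZX$ (spider associativity, input/output symmetry, and colour swap). A proof of any equation whose derivation essentially requires that rule then translates under $\hat{\phi}_7$ to a sequence of ostensible rewrites whose critical step is not a valid rule application, and so proof preservation fails. The main obstacle is this last step: ruling out the possibility that some hidden combination of meta-symmetries of $RZX$ implements a ``global phase negation'' and thereby realises $\hat{\phi}_7$ on the ruleset. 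This requires a finite but careful case analysis over the rules of $RZX$, made feasible by the explicit and compact presentation of the ruleset in \cite{JPVNormalForm}.
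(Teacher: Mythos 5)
Your skeleton largely matches the paper's: for $j \equiv 1 \bmod 8$ you note that every fixed phase occurring in the rules of RZX is a multiple of $\pi/4$, hence fixed by $\phi_j$, and you transport the DPO argument of Proposition~\ref{propRingHomPreservesSyntax}/Theorem~\ref{thmRingAutomorphismEntailment}; and like the paper you use Proposition~\ref{propZXPhaseGroupHomomorphismsMod8} to cut the candidates down to $j \equiv 1, 7$. For $j \equiv 7$ the paper simply exhibits a concrete witness: the rule (BW) of Figure~\ref{figBW}, whose image under $\hat\phi_7$ is not an instance of any rule of RZX.

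The genuine gap is in your $j \equiv 7$ case. You read ``proof preserving'' as the displayed derivability implication and try to refute it by finding an equation whose derivation ``essentially requires'' the broken rule, flagging as your main obstacle the need to rule out meta-symmetries. Under that reading the case cannot be closed at all: $\phi_7$ does lift to a phase homomorphism pair (complex conjugation, Figure~\ref{figCliffordTPhaseGroupLift}), so $\hat\phi_7$ preserves soundness, and since RZX is a complete ruleset for the fragment, the image $\hat\phi_7 A = \hat\phi_7 B$ of any derivable equation is again derivable --- by exactly the completeness argument you invoke in your own first paragraph. So no counterexample equation exists, and no finite case analysis over the ruleset will produce one; your first and third paragraphs are in tension. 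The notion intended (as in Theorem~\ref{thmRingAutomorphismEntailment}) is the stricter one: the proof must translate one-to-one, i.e.\ $\hat\phi$ must send every instance of a rule of RZX to an instance of a rule. Under that reading your ``main obstacle'' disappears, since exhibiting a single non-preserved rule suffices, and the paper's choice is (BW). A smaller point: your reduction to $j \in \{1,7\}$ silently uses ``soundness preservation forces a phase homomorphism pair'', which requires constructing $\tilde\phi$ and checking it is a ring homomorphism on scalars fixing $\sqrt{2}$ as demanded by Proposition~\ref{propEndomorphismsOfFiniteSubgroupOfZeroTwoPi}; the paper is equally terse here, but the step is not automatic from Definition~\ref{defPhaseGroupHomomorphismPair}.
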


\begin{proof} \label{prfPropZXPhaseGroupProofPreservingCliffordT}
	Following the same structure as the proof of Theorem~\ref{thmRingAutomorphismEntailment}
	we note that the rules of RZX contain phases that are multiples of $\pi/4$,
	and in particular any $\phi_j$ with $j \equiv 7$ mod 8 will send the rule (BW)
	(shown in Figure~\ref{figBW}) to an equation that is not a rule.
	In contrast if $j \equiv 1$ mod 8 then the rules are preserved by $\hat \phi$.
	By Proposition~\ref{propZXPhaseGroupHomomorphismsMod8} the only suitable $\phi_j$
	are those where $j \equiv 1$ mod 8.
\end{proof}

\begin{figure}[ht]
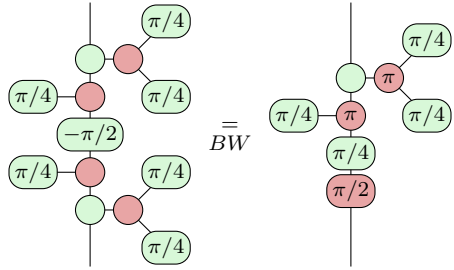

	\center
	\begin{align*}
		\vc{\InputIfFileExists{./figures/ZX/BW_l.tikz}{}{Missing file!}} \by{BW} \vc{\InputIfFileExists{./figures/ZX/BW_r.tikz}{}{Missing file!}}
	\end{align*}
	\caption{The rule (BW) from Ref.~\cite{JPVNormalForm}\label{figBW}}
\end{figure}

\begin{remark} \label{remOnlyFiniteSubgroupsZX}
	While the classification of ZX$_G$ phase homomorphism pairs
	in Proposition~\ref{propZXPhaseGroupHomomorphismsMod8}
	only applied to finite subgroups of $[0, 2\pi)$ containing $\piby{4}$,
	it is worth remembering that any given
	ZX diagram only contains finitely many phases,
	and so this result applies in all situations
	where the phases are rational multiples of $\pi$.
\end{remark}

\begin{remark} \label{remZXNotDeficient}
The result of Proposition~\ref{propZXPhaseGroupProofPreservingCliffordT}
does not imply that RZX is in any way deficient.
RZX is, after all, still sound, universal, and complete.
Rather this result highlights that the notation
of ZX does not match perfectly onto the available algebraic structure.
There is a sense in which phase ring homomorphisms `just work' for ZW, $\ring$, and ZH,
but the same does not apply to phase group homomorphisms for ZX.
\end{remark}

\begin{remark} \label{remUniversalAlgebras}
	It is possible to generalise the above work from
	rings and groups to $\Sigma$-algebras.
	$\Sigma$-algebras are at the heart of Universal Algebra,
	allowing us to represent algebraic concepts without
	restricting ourselves to a particular algebraic system.
	Part of our reason for not doing so from the start
	lies in the next chapter:
	ZQ is based on group structure
	but cannot be represented as a model of
	the $\Sigma$-algebra for groups (Corollary~\ref{corUQuatNotMonoid}).
\end{remark}

We present the general method of turning
a signature and equational theory into a
parameterised family of PROPs below,
as well as the general definition of a phase homomorphism pair
for $\Sigma$-algebras.
As just mentioned this framework
is not sufficient for all our needs,
which is why it is covered in such brief terms.
It is, however, a bridging between the concepts
of universal algebras and graphical calculi.

\begin{definition}[Signatures, terms, algebras, homomorphisms {\cite[Section~3]{Terms}}]  \label{defSignature} 
	\begin{itemize}
		\item A signature $\Sigma$ is a set of function symbols with fixed arities.
		      We write $f \in \Sigma^{(n)}$ if $f$ is a function symbol with arity $n$.
		\item A $\Sigma$-term is inductively defined using a set of variables:
		      Every variable is a term, and
		      for every function symbol $f\in\Sigma^{(n)}$ and
		      terms $t_1, \dots, t_n$ the application $f(t_1, \dots, t_n)$ is a term.
		\item A $\Sigma$-algebra $\cal{A}$ is a set $A$,
		      called the carrier set,
		      and an assignment of each function variable $f \in \Sigma^{(n)}$
		      to a function $f^{\cal{A}}: A^n \to A$.
		\item A $\Sigma$-homomorphism $\phi : \cal{A} \to \cal{B}$ is a function
		      on the carrier sets $A \to B$ such that
		      \begin{align}
			      \phi(f^{\cal{A}}(x_1, \dots, x_n)) = f^{\cal{B}}(\phi(x_1), \dots \phi(x_n))
		      \end{align}
	\end{itemize}
\end{definition}

With these definitions we can form a PROP analogue of the same concept.
This is very similar to the idea of a \emph{term tree} \cite[Figure~3.1]{Terms}
except that the PROP structure is also present,
allowing arbitrary horizontal products and wire permutations.

\begin{definition}[PROP form of a $\Sigma$-algebra]  \label{defPROPUniversalAlgebra}
	Given a signature $\Sigma$
	we create the PROP $\Sigma\bbP$ generated by the morphisms
	\begin{align}
		\nary{f} \quad : n \to 1 \quad \forall f \in \Sigma^{(n)}
	\end{align}
	For any $\Sigma$-algebra $\cal{A}$ with carrier set $A$
	and function assignments $f^{\cal{A}}$ we
	generate the PROP $\Sigma\bbP_\cal{A}$ by also including
	the diagrams
	\begin{align}
		\state{white}{a} \quad \forall a \in A
	\end{align}
	Given a set of variables $X$ (symbols distinct from $A$ and $\Sigma$)
	we likewise represent them as
	\begin{align}
		\state{white}{x} \quad \forall x \in X
	\end{align}
	For each function symbol $f$ we have the rewrite rule $app_f$
	which applies the function $f^{\cal{A}}$ to its inputs:
	\begin{align}
		\vc{\InputIfFileExists{./figures/wire/apply_f.tikz}{}{Missing file!}} \by{app_f} \state{white}{f^{\cal{A}}(x_1, \dots, x_n)}
	\end{align}
	Any $\Sigma$-homomorphism $\phi : \cal{A} \to \cal{A'}$ lifts to a morphism
	$\bar \phi : \Sigma\bbP_A \to \Sigma\bbP_B$
	of $\Sigma$-PROPs which is the identity on function symbols
	and sends
	\begin{align}
		\state{white}{a} \mapsto \state{white}{\phi(a)}
	\end{align}
	Note that any equational theory
	one wishes to apply to a signature can be applied \emph{inside} the phases;
	i.e. if $E \syntactic a = b$ then we would consider the following to be
	an isomorphism of diagrams:
	\begin{align}
		E \syntactic a & = b & \therefore \state{white}{a} & \iso \state{white}{b}
	\end{align}
\end{definition}

Definition~\ref{defRingPropS}, where we defined $\rprop$, is an example of such a $\Sigma$-PROP.
As we did for $\rprop$ we can define a \emph{model} of $\Sigma\bbP_{\cal{A}}$
as a strict symmetric monoidal functor $\Model_{\cal{A}}$ out of $\Sigma\bbP_{\cal{A}}$,
and form the following category:

\begin{definition}[$\Sigma$ Graphical Calculi]  \label{defSigmaGraphicalCalculi}
	The category of $\Sigma$ Graphical Calculi
	over the category $\cal{C}$, written $\SigmaGC{\cal{C}}$,
	is the category given by:
	\begin{itemize}
		\item Objects are pairs $(G, \Model_{\cal{A}})$
		      where $G$ is a (compact closed) graphical calculus
		      given as a collection of generators,
		      an interpretation $\idot{G} : G \to \cal{C}$, and rewrite rules,
		      and $\Model_{\cal{A}}$ is a model of $\Sigma\bbP_\cal{A}$
		\item Morphisms
		      \begin{align}
			      f: (G, \Model_{\cal{A}}) \to (G', \Model_{\cal{A'}})
		      \end{align}
		      are given by a strict symmetric monoidal functor $f_\bbD : G \to G'$
		      as well as a $\Sigma$-homomorphism $\phi : \cal{A} \to \cal{A'}$ such that
		      the right hand diagram commutes:
		      \begin{align}
			      \begin{tikzcd}[ampersand replacement=\&]
				      \cal{A} \arrow[d,"\phi"] \& \Sigma\bbP_{\cal{A}} \arrow[d,"\bar \phi"] \arrow[r,"\Model_\cal{A}"] \& G \arrow[d,"f_\bbD"] \\
				      \cal{A'} \& \Sigma\bbP_{\cal{A'}} \arrow[r,"\Model_{\cal{A}'}"] \& G'
			      \end{tikzcd}
		      \end{align}
	\end{itemize}
\end{definition}

The category $\PRGC{\bit{R}}$ given in Definition~\ref{defPRGCR}
is an example of a category of $\Sigma$ Graphical Calculi,
where $\Sigma$ is the usual signature for rings.
Our final definitions of this chapter
are the general definitions for phase homomorphisms
and phase homomorphism pairs.
This extends the notion first seen in Definition~\ref{defPhaseRingHomomorphism}
of finding an algebra homomorphism $\phi$
that lifts to a map $\hat \phi$ `acting on phases' for diagrams
and also commutes with the interpretation via some third map $\tilde \phi$.

\begin{definition}[Phase homomorphism]  \label{defPhaseAlgebraHomomorphism}
	The phase homomorphism for a $\Sigma$-homo\-morphism $\phi : \cal{A} \to \cal{A'}$
	and objects $G_\cal{A}$ and $G'_\cal{A'}$
	is the $\SigmaGC{\cal{C}}$ morphism   $G_\cal{A} \to G'_\cal{A'}$ (if it exists) given by $\phi$ and $f_\bbD$ such that:
	\begin{itemize}
		\item $f_\bbD$ acts as the identity on those vertices not labelled by symbols from $\Sigma$ or $A$
		\item Otherwise the action of $f_\bbD$ on the label is:
		      \begin{align}
			      a & \mapsto \phi(a) & a & \in A            \\
			      f & \mapsto f       & f & \in \Sigma^{(n)}
		      \end{align}
	\end{itemize}
	We refer to such a $f_\bbD$ as $\hat \phi$.
\end{definition}

\begin{definition}[Phase Homomorphism Pair]  \label{defPhaseHomomorphismPair}
	A phase $\Sigma$-homomorphism pair
	between $G_{\cal{A}}$ and $G'_{\cal{A'}}$ in $\SigmaGC{\cal{C}}$
	is a pair of maps $(\phi, \tilde \phi)$
	such that:
	\begin{itemize}
		\item $\phi: \cal{A} \to \cal{A'}$ is a $\Sigma$-homomorphism
		\item The phase homomorphism $\hat \phi : G \to G'$ exists
		\item $\tilde \phi : \cal{C} \to \cal{C}$ is a strict symmetric monoidal functor
		\item The following diagram commutes:
		      \begin{align}
			      \begin{tikzcd}[ampersand replacement=\&] \label{eqnPhaseHomomorphismPair}
				      \cal{A} \arrow[d,"\phi"] 	\&  G \arrow[d,"\hat \phi"] \arrow[r,"\idot{G}"] 	\& \cal{C} \arrow[d,"\tilde \phi"] \\
				      \cal{A'} 					\& G' \arrow[r,"\idot{G'}"] 						\& \cal{C}
			      \end{tikzcd}
		      \end{align}
	\end{itemize}
\end{definition}

The commutative square of Definition~\ref{defPhaseHomomorphismPair} gives us the soundness-preserving property
used in Theorem~\ref{thmLiftRingHomSoundness}.

\section{Summary}

The most important result
from this chapter (from the point of view of the later chapters)
will be that of phase ring homomorphisms
preserving soundness (Theorem~\ref{thmLiftRingHomSoundness}),
which for the ring calculi we have considered can be paraphrased as
`theorems are closed under phase homomorphism'.
These pairs only arise
because we view phase algebras as algebras and not sets.
We shall rephrase these pairs
as symmetries of a geometric space in \S\ref{chapConjectureInference}.

We also showed that $\ring_\bbC$
acts as a form of unifier for phase ring qubit graphical calculi.
While different choices of generators will yield graphical calculi
suited for different situations it is also important to know
that any phase ring qubit graphical calculus will have a
relationship to $\ring_\bbC$ as described in Remark~\ref{remRingKGeneric}.

In the last subsection we also provided a classification of phase group homomorphism pairs
for the finite fragments of ZX containing $\piby{4}$.
Future work would include extending this classification of
phase homomorphism pairs to other fragments and calculi,
but also the investigation of any other category of $\Sigma$ Graphical Calculi.  \thispagestyle{empty} 
\chapter{The Graphical Calculus ZQ} \label{chapZQ}
\thispagestyle{plain}

\noindent\emph{In this chapter:}
\begin{itemize}
	\item[\chapterbullet] We introduce the graphical calculus ZQ
	\item[\chapterbullet] We justify the structural difference between ZQ and ZX
	\item[\chapterbullet] We prove completeness of ZQ
	\item[\chapterbullet] This chapter touches on the ideas of \S\ref{chapPhaseRingCalculi}, but stands alone, and is not required for the understanding of later chapters.
\end{itemize}

The ZX calculus is built from the Z and X classical structures of quantum computing \cite{Coecke08}.
Even in that earliest paper the Z `phase shift' is illustrated as a rotation of the Bloch Sphere \cite[\S 4]{Coecke08}.
By the time of \cite{Backens16Thesis}, eight years later,
language has changed to that of Z `rotations' or `angles' \cite[Lemma~3.1.7]{Backens16Thesis},
and explicit use is made of the Euler Angle Decomposition result;
that any rotation in \SO3 can be broken down into rotations about the Z then X then Z axes.
The idea behind ZQ is to represent not just the Z and X rotations of the Bloch Sphere,
but represent arbitrary rotations via unit-length quaternions.
In order to do so we shall have to fundamentally alter the structure of how these diagrams are constructed:
Composition of rotations along different axes is not commutative, something that the spider-mediated monoid of ZX cannot model
(\S\ref{secZQRecapAndStructure}).
While there have been graphical calculi with phase groups outside of $([0,2\pi),+)$ before,
such as Backens' calculus representing Spekkens' toy bit theory \cite{Spekkens},
to the author's knowledge the language ZQ is the first with a non-commutative phase group.

To emphasise this idea further:
Much study has been done on fragments of ZX \cite{JPVNormalForm},
but these fragments all have a phase group that sits inside the unit circle
of the complex plane.
Likewise ZW$_\bbC$ and ZH$_\bbC$
have phase rings of the entire complex plane.
Indeed we showed in Proposition~\ref{propSubringC}
that a phase ring for a qubit graphical calculus must inject into $\bbC$.
ZQ is therefore interesting as a qubit graphical calculus
with a phase algebra that does not neatly fit into the complex numbers,
quite apart from its practical uses.

The Bloch Sphere, which we cover in more detail in \S\ref{secZQRecapAndStructure}, is not a perfect analogy.
Although it provides us with useful intuition and a way to consider a single qubit in real Euclidean space,
its group of rotations, \SO3, is a subgroup of the group of unitary evolutions, \SU2,
which the standard model of quantum computing actually uses \cite{NielsonChuang}.
The group \SU2 itself is isomorphic to the group of unit-length quaternions,
and it is this group of quaternions that we shall use as rotations,
giving us the `Q' in ZQ.
This use of quaternions to represent rotations is nothing new in the world of engineering and computer graphics \cite{Shoemake},
but has recently surfaced as a useful component of intermediate representations for quantum circuits.
Intermediate representations sit between the user's specification of an algorithm
and the actual implementation on a specific piece of hardware.
The system TriQ \cite{TRIQ}
provides such an intermediate representation, targeting existing quantum computers run by IBM, Rigetti, and the University of Maryland.
The authors claim a speed-up in execution of their benchmarks on the seven quantum computers considered,
in part because of TriQ's use of quaternions in the optimisation process \cite[\S4]{TRIQ}:
Any sequence of single qubit rotations can be combined into just one quaternion,
then decomposed into the most efficient sequence of gates for the target hardware architecture.
Our aims in making ZQ are the following:
\begin{itemize}
  \item Construct a graphical calculus that expresses all qubit rotations
  \item Construct a qubit graphical calculus with a non-commutative phase group
  \item Provide a complete ruleset that can condense sequences of single qubit rotations
  \item Provide a complete ruleset for a qubit phase group calculus that only uses linear side conditions
\end{itemize}

This last point deserves extra attention:
The (EU') rule of ZX, given with the rest of the Universal ZX rules in Figure~\ref{figVilmartZXRules},
has side conditions that are fiddly to compute,
making them unsuitable for researchers to use or remember,
and extremely hard to find through conjecture synthesis
(we explore this idea further in \S\ref{secInferringPhaseVariables}).
Neither of these objections apply to ZQ,
with the Euler Angle composition captured neatly by the multiplication of quaternions.
Before we give the definition of ZQ we first give a brief overview of
the Bloch Sphere, the groups \SU2 and \SO3, and unit quaternions.

\section{Quaternions, Rotations and the Bloch Sphere}
\label{secZQRecapAndStructure}

In this section we will introduce the Bloch Sphere and quaternions,
then demonstrate the incompatibility of the group of unit-length quaternions with the spider notation of ZX.

\begin{definition}[Bloch Sphere {\cite[Figure~1.3]{NielsonChuang}}]  \label{defBlochSphere} 
	A qubit state, up to global phase, can be represented as
	\begin{align}
		\ket{\psi} = \cos \half[\theta] \ket{0} + e^{i \phi} \sin \half[\theta] \ket{1}
	\end{align}
	The parameters $\theta$ and $\phi$ describe a point on the unit sphere in $\bbR^3$,
	known as the Bloch Sphere.
	The group of rotations of this sphere is \SO3.
\end{definition}

\begin{definition}[Quaternions {\cite[p12]{Hazewinkel2004algebras}}]  \label{defQuaternions} 
	The quaternions are a non-commutative, four-dimensional, real algebra:

	\begin{align}
		  & \bbR + i\bbR + j \bbR + k \bbR \\
		  & i^2 = j^2 = k^2 = ijk = -1
	\end{align}

	For ZQ we are only interested in unit-length quaternions, forming the group \UQuat\ under multiplication.
\end{definition}

\begin{remark} \label{remQuaternionsSU2}
	The group \UQuat\ is isomorphic with \SU2, via the isomorphism:

	\begin{align}
		\phi: \UQuat             & \to \SU2     \label{eqnUQuatSU2}  \\
		q_w + iq_x + jq_y + kq_z & \mapsto \begin{pmatrix}
			q_w - iq_z & -q_y + iq_x \\
			-q_y-iq_x  & q_w + iq_z
		\end{pmatrix}
	\end{align}
	The fact that the groups are isomorphic is well known,
	but we include a proof that this particular map is an isomorphism in Proposition~\ref{propZQPhi}.
\end{remark}

There is another way to represent unit-length quaternions, and that is by an angle and a unit vector.
It is important to note that this is not the same thing as `an angle rotation along a unit vector';
the angle-vector pair $(\alpha, \hat v)$ and the angle-vector pair $(-\alpha, -\hat v)$ are different as pairs,
but would constitute the same rotation. This, in fact, describes the relationship between \UQuat\ and \SO3:

\begin{definition}[Quaternions as rotations]  \label{defUQUatSO3}
	There is a canonical homomorphism from \UQuat\ to \SO3, given by
	\begin{align}
		(\alpha, v)  & := \cos \half[\alpha] + \sin\half[\alpha](iv_x + jv_y + kv_z) \\ \nonumber \\
		\psi: \UQuat & \to \SO3                                                      \\
		(\alpha,v)   & \mapsto \text{ rotation by angle $\alpha$ along vector $v$}   \\ \nonumber \\
		\ker \psi    & = \set{1, -1}
	\end{align}
\end{definition}

\begin{remark} \label{remQuaternionPauli}
	This presentation of quaternions as angle-vector pairs is directly linked to the Pauli matrices and the Hadamard map
	via the homomorphism $\phi$ of \eqref{eqnUQuatSU2},
	up to a scalar factor of $-i$:

	\begin{align}
		\phi((\pi, x))                         & =
		-i
		\begin{pmatrix}
			0 & 1 \\
			1 & 0
		\end{pmatrix}
		\\
		\phi((\pi, y))                         & =
		-i
		\begin{pmatrix}
			0  & i \\
			-i & 0
		\end{pmatrix}
		\\
		\phi((\pi, z))                         & =
		-i
		\begin{pmatrix}
			1 & 0  \\
			0 & -1
		\end{pmatrix}
		\\
		\phi((\pi, \frac{1}{\sqrt{2}}(x + z))) & =
		\frac{-i}{\sqrt{2}}
		\begin{pmatrix}
			1 & 1  \\
			1 & -1
		\end{pmatrix}
	\end{align}
	Generally
	\begin{align}
		\phi((\pi, v)) & = -i
		\begin{pmatrix}
			v_z         & v_x + i v_y \\
			v_x - i v_y & - v_z
		\end{pmatrix}
	\end{align}

\end{remark}

\begin{definition}[The H quaternion]  \label{defQuaternionH}
	The quaternion $(\pi, \frac{1}{\sqrt{2}}(x + z))$ will often just be referred to as $H$.
	It corresponds to the rotation that maps the Z axis to the X axis and vice versa.
\end{definition}

\begin{example}[Quaternions in TriQ] \label{exaQuaternionsInTriQ}
The compiler TriQ uses quaternions as rotations as part of its optimisation process\footnote{1Q: One qubit, 2Q: Two qubits, IR: Intermediate representation}:
\begin{displayquote}[Full-Stack, Real-System Quantum Computer Studies:
Architectural Comparisons and Design Insights \cite{TRIQ}]
Since 1Q operations are rotations, each
1Q gate in the IR can be expressed using a unit rotation quaternion
which is a canonical representation using a 4D complex number.
TriQ composes rotation operations by multiplying the corresponding
quaternions and creates a single arbitrary rotation. This rotation is
expressed in terms of the input gate set. Furthermore, on all three
vendors, Z-axis rotations are special operations that are implemented
in classical hardware and are therefore error-free. TriQ expresses
the multiplied quaternion as a series of two Z-axis rotations and one
rotation along either X or Y axis, thereby maximizing the
number of error-free operations.
\end{displayquote}
We shall explicitly construct this decomposition of a quaternion
into a Z-X-Z rotation in Proposition~\ref{propZQQDecomp}
when we explore how to translate from ZQ to ZX.
\end{example}

Unit-length quaternions form a very different group to the rotations of just the Z axis.
We will now show that the usual presentation of ZX (spiders and states)
is fundamentally incompatible with the group \UQuat.
In order to do so we introduce the notion of a generic monoid,
similar to how in Definition~\ref{defRingPropS}
we constructed a generic PROP for rings, $\rprop_R$.

\begin{definition}[Monoid {\cite[p2]{Maclane2013}}]  \label{defMonoid} 
	A monoid is a set $M$ equipped with a binary operation:
	\begin{align}
		\mu : M \times M & \to M
	\end{align}
	referred to as the multiplication,
	which is required to be associative.
	There is also a unit, $e$,
	which is the left and right unit for $\mu$.
	We can also view $e$ as the image of a function $\eta : 1 \to M$.
	Graphically we represent these as:
	\begin{align}
		\mu := & \binary[white]{} & \eta := & \state{white}{}
	\end{align} such that the following hold:
	\begin{align}
		\vc{\InputIfFileExists{./figures/monoid/assoc.l.tikz}{}{Missing file!}} & =\vc{\InputIfFileExists{./figures/monoid/assoc.r.tikz}{}{Missing file!}} &
		\vc{\InputIfFileExists{./figures/monoid/unitl.tikz}{}{Missing file!}} = \node{none}{} &=\vc{\InputIfFileExists{./figures/monoid/unitr.tikz}{}{Missing file!}}
	\end{align}
\end{definition}

\begin{definition}[A generic monoid PROP]  \label{defGenericMonoid}
	We construct the generic monoid PROP, $\monprop_M$,
	for the monoid $(M, \mu, e)$ as the PROP generated by
	\begin{align}
		\binary[white]{}\quad \text{and} \quad \set{\state{white}{a}}_{a \in M}
	\end{align}
	With the following rewrite rules:
	\begin{align}
		\vc{\InputIfFileExists{./figures/monoid/unital.tikz}{}{Missing file!}} \by{unit_l} \state{white}{a} & \by{unit_r} \vc{\InputIfFileExists{./figures/monoid/unitar.tikz}{}{Missing file!}}    \\
		\vc{\InputIfFileExists{./figures/monoid/assocl_abc.tikz}{}{Missing file!}}                          & \by{assoc} \vc{\InputIfFileExists{./figures/monoid/assocr_abc.tikz}{}{Missing file!}}
	\end{align}
	A model for $\monprop_M$
	in a graphical calculus $G$ is
	a strict symmetric monoidal functor $\monprop_M \to G$ (just as in Definition~\ref{defModelOfRPROP}).
\end{definition}

\begin{proposition} \label{propQubitModelsCommutative}
	Every faithful model of $\monprop_M$ into a graphical calculus over \Qubit\ is commutative.
\end{proposition}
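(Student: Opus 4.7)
The plan is to exploit the fact that any unital associative algebra over $\bbC$ of dimension at most $2$ is commutative, and to extract such a structure from the interpretation of any faithful model.

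First, I would unpack what a model supplies semantically. The interpretation, being a strict symmetric monoidal functor into a graphical calculus over $\Qubit$, sends the multiplication $\binary[white]{}$ to a linear map $\bar\mu : \bbC^2 \tensor \bbC^2 \to \bbC^2$, and each state $\state{white}{a}$ to a vector $v_a \in \bbC^2$, with the unit $e \in M$ giving a distinguished $v_e$. Faithfulness amounts to injectivity of $a \mapsto v_a$. The unit and associativity rules of Definition~\ref{defGenericMonoid} transport under the functor to the unit and associativity laws for $\bar\mu$ on all of $\bbC^2$, while the multiplication rule linking diagrammatic products of states to the corresponding product states delivers $\bar\mu(v_a \tensor v_b) = v_{ab}$. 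In other words, $(\bbC^2, \bar\mu, v_e)$ is a unital associative $\bbC$-algebra and $a \mapsto v_a$ is a monoid injection into its multiplicative monoid.

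Second, I would invoke the classical fact that every unital associative $\bbC$-algebra $A$ of dimension at most $2$ is commutative. The one-dimensional case is trivial. In the two-dimensional case, pick any $x \in A$ not in $\bbC \cdot 1$, so $\{1, x\}$ is a basis; then $x^2 = \alpha + \beta x$ for some $\alpha, \beta \in \bbC$, and the general product
\begin{align*}
(a_1 + b_1 x)(a_2 + b_2 x) = a_1 a_2 + \alpha b_1 b_2 + (a_1 b_2 + a_2 b_1 + \beta b_1 b_2) x
\end{align*}
is visibly symmetric under swapping $(a_1, b_1)$ and $(a_2, b_2)$. Hence $\bar\mu$ is commutative as a bilinear operation, so $v_{ab} = \bar\mu(v_a \tensor v_b) = \bar\mu(v_b \tensor v_a) = v_{ba}$, and injectivity of $a \mapsto v_a$ then yields $ab = ba$, proving $M$ commutative.

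The main subtlety is confirming that the associativity and unit axioms really transport to genuine bilinear laws on all of $\bbC^2$, rather than only on the subspace spanned by the images of states. This is automatic because the rules of $\monprop_M$ hold as equations of PROP morphisms, which the strict symmetric monoidal interpretation preserves verbatim. Once this is in hand, the dimension bound of $\bbC^2$ does the rest, and the argument applies equally in the degenerate case where the states span only a one-dimensional subspace, since every one-dimensional unital algebra is commutative on the nose.
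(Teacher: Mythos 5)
There is a genuine gap, and it sits exactly at what you call the ``main subtlety''. The rewrite rules of $\monprop_M$ in Definition~\ref{defGenericMonoid} are \emph{not} open-wire morphism equations of the form $\mu \comp (\state{white}{e} \tensor \mathrm{id}) = \mathrm{id}$: they are equations between closed diagrams, i.e.\ the unit and associativity laws instantiated on states $a$ (respectively $a,b,c$), in the same spirit as Remark~\ref{remRingPropGates} for $\rprop_S$. A strict symmetric monoidal functor preserves these verbatim, but ``verbatim'' only yields $\bar\mu(v_e \tensor v_a) = v_a$ for $a \in M$, i.e.\ the laws on the span of the interpreted states, not ``genuine bilinear laws on all of $\bbC^2$''. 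So your claim that $(\bbC^2,\bar\mu,v_e)$ is automatically a unital algebra is unjustified; it is true precisely when the states span $\bbC^2$, and establishing it otherwise is the whole content of the paper's case analysis on $\dim W$, where $W$ is the span of the $\set{v_a}$: the paper first shows $v_e \neq 0$ (using faithfulness, by deriving that otherwise every state interprets to $0$), then handles $\dim W = 0,1,2$ separately. Your single sentence on the one-dimensional case does not substitute for this: it silently assumes $v_e \neq 0$, that $\bar\mu$ restricts to a multiplication \emph{on} the span (closure needs the small computation $\bar\mu(v_e \tensor v_e) = v_e$ plus bilinearity), and it omits the case where all states interpret to zero. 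The positive news is that your central observation --- a bilinear multiplication on a space of dimension at most $2$ with a two-sided unit is forced to be commutative, associativity not even being needed --- is exactly the engine of the paper's $\dim W = 2$ case (there the unit laws pin down $\interpret{\smallbinary[white]{}}$ up to its last column, which makes it swap-invariant), so the argument is repairable once you run it relative to $W$ with the case split.

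The second problem is your appeal to ``the multiplication rule linking diagrammatic products of states to the corresponding product states''. No such rule exists in $\monprop_M$: Definition~\ref{defGenericMonoid} lists only the (state-applied) unit and associativity rules, so $\bar\mu(v_a \tensor v_b) = v_{ab}$ is not available, and your final step $v_{ab} = v_{ba} \Rightarrow ab = ba$ has no basis. Relatedly, you have shifted the statement: the proposition asserts that the \emph{model} is commutative, i.e.\ that the interpreted operation satisfies $\interpret{\mu \comp (\state{white}{a} \tensor \state{white}{b})} = \interpret{\mu \comp (\state{white}{b} \tensor \state{white}{a})}$ (this is what the paper's proof concludes, and what Corollary~\ref{corUQuatNotMonoid} needs), not that the abstract monoid $M$ is commutative. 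With the span-based repair above you get the correct conclusion directly from commutativity of $\bar\mu$ on $W$, and the unsupported multiplication rule becomes unnecessary.
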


\begin{proof} \label{prfPropQubitModelsCommutative}
	Taking a generic model of $\monprop_M$
	we look at the interpretation of the image of the generators of $\monprop_M$.
	For brevity we will just write $\interpret{D}$
	for $D$ a diagram in $\monprop_M$ to mean the interpretation
	of the image of $D$ in the model.
	We proceed by looking at the span of the interpretations of the elements of $M$.

	\begin{align}
		W & := span \set{\interpret{\state{white}{e}},\interpret{\state{white}{a}},\interpret{\state{white}{a'}}, \dots}
	\end{align}
	\begin{itemize}
		\item If $\dim W = 0$ then the monoid has only one element, $e$,
		      and so is commutative.
		\item If $\dim W > 0$ then
		      there either $M = \set{e}$ (and so commutative),
		      or there is some other element $a \in M$.
		      This implies that $\interpret{\state{white}{e}} \neq 0$:
		      \begin{align}
			      \text{assume} \quad \interpret{\state{white}{e}} & = 0                                                                       \\
			      \therefore \interpret{\vc{\InputIfFileExists{./figures/monoid/md_0.tikz}{}{Missing file!}}}     & = 0 \quad \text{(any element $a$)}                                        \\
			      \interpret{\vc{\InputIfFileExists{./figures/monoid/md_0.tikz}{}{Missing file!}}}                & = \interpret{\state{white}{a}}  \quad  \text{($e$ is the unit for $m$)}   \\
			      \therefore \interpret{\state{white}{a}}          & =\interpret{\state{white}{e}}  \contradiction \quad \text{(faithfulness)}
		      \end{align}
		\item If $\dim W = 1$ then without loss of generality:
		      \begin{align}\interpret{\state{white}{e}}                 & = \begin{pmatrix}
				      1 \\ 0
			      \end{pmatrix}                                                       \\
			      \therefore \interpret{\state{white}{a}}      & = \begin{pmatrix}
				      \lambda_a \\ 0
			      \end{pmatrix} \; \forall a\; \quad \text{ some } \lambda_a \in \bbC \\
			      \interpret{\vc{\InputIfFileExists{./figures/monoid/md_0.tikz}{}{Missing file!}}}            & = \interpret{\state{white}{a}}  \quad  \text{$e$ is the unit for $m$}              \\
			      \therefore \interpret{\binary[white]{}}      & = \begin{pmatrix}
				      1 & \cdot & \cdot & \cdot \\
				      0 & \cdot & \cdot & \cdot
			      \end{pmatrix} \text{ where $\cdot$ represents unknowns}             \\
			      \therefore \interpret{\vc{\InputIfFileExists{./figures/monoid/md_1.tikz}{}{Missing file!}}} & = \begin{pmatrix}
				      \lambda_a \lambda_b \\ 0
			      \end{pmatrix}= \interpret{\vc{\InputIfFileExists{./figures/monoid/md_2.tikz}{}{Missing file!}}}
		      \end{align}
		\item if $\dim W = 2$ then the states span all of $\bbC^2$:
		      \begin{align}
			      \interpret{\state{white}{e}}                 & = \begin{pmatrix}
				      1 \\ 0
			      \end{pmatrix}        \quad \text{w.l.o.g}                                                                \\
			      \vc{\InputIfFileExists{./figures/monoid/md_0.tikz}{}{Missing file!}}                        & \by{unit_l} \state{white}{a} \quad \forall a                                                                            \\
			      \therefore \interpret{\vc{\InputIfFileExists{./figures/monoid/m_2.tikz}{}{Missing file!}}}  & = \interpret{\node{none}{}} = \interpret{\vc{\InputIfFileExists{./figures/monoid/m_1.tikz}{}{Missing file!}}} \quad \text{since } span \set{\state{white}{a}} = \bbC^2 \\
			      \therefore \interpret{\binary[white]{}}      & = \begin{pmatrix}
				      1 & 0 & 0 & \cdot \\
				      0 & 1 & 1 & \cdot
			      \end{pmatrix} \text{ where $\cdot$ represents unknowns}                                                  \\
			      \therefore \interpret{\vc{\InputIfFileExists{./figures/monoid/m_3.tikz}{}{Missing file!}}}  & = \interpret{\binary[white]{}}                                                                                          \\
			      \therefore \interpret{\vc{\InputIfFileExists{./figures/monoid/md_1.tikz}{}{Missing file!}}} & = \interpret{\vc{\InputIfFileExists{./figures/monoid/md_2.tikz}{}{Missing file!}}}
		      \end{align}
	\end{itemize}
\end{proof}

\begin{corollary} \label{corUQuatNotMonoid}
	There is no qubit graphical calculus
	that faithfully models $\monprop_M$ (for any $M$)
	such that this monoid action is isomorphic to the group multiplication of \UQuat,
	because \UQuat\ is not commutative.
\end{corollary}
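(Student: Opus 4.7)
The plan is to proceed by direct contradiction, leveraging the previously established Proposition~\ref{propQubitModelsCommutative} essentially as a black box. Suppose, for contradiction, that there exists a qubit graphical calculus $G$ together with a faithful model $\Model_M : \monprop_M \to G$ such that the induced monoid structure on $M$ is isomorphic to $(\UQuat, \cdot)$ as groups (in particular, as monoids).

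First I would unpack what it means for the monoid action to be ``isomorphic to the group multiplication of \UQuat'': concretely, this gives a monoid isomorphism $\theta : M \to \UQuat$ intertwining $\mu$ with quaternion multiplication. Then I would apply Proposition~\ref{propQubitModelsCommutative} directly to $\Model_M$: since the calculus sits over \Qubit\ and the model is faithful, the proposition forces $M$ to be commutative as a monoid. Transporting commutativity across the isomorphism $\theta$ would then give that \UQuat\ is commutative.

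The final step is to exhibit an explicit failure of commutativity in \UQuat\ to derive the contradiction. The cleanest witness is the basic quaternion identity $ij = k \neq -k = ji$, both of which lie in $\UQuat$ since $|i| = |j| = 1$. This single counterexample suffices.

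There is no genuine obstacle here; the entire content has been front-loaded into Proposition~\ref{propQubitModelsCommutative}, whose case analysis on $\dim W$ handled all the interesting work. The corollary is essentially a one-line consequence once one observes the non-commutativity of \UQuat. The only small care needed is in stating the isomorphism precisely as one of monoids (not merely as sets) so that commutativity genuinely transports; this is automatic from the hypothesis but worth making explicit to keep the argument airtight.
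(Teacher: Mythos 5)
Your proposal is correct and is exactly the argument the paper intends: the corollary is stated as an immediate consequence of Proposition~\ref{propQubitModelsCommutative} (faithful models over \Qubit\ force commutativity) together with the non-commutativity of \UQuat, which your witness $ij = k \neq -k = ji$ makes explicit. No gaps; your only addition is spelling out the transport of commutativity along the monoid isomorphism, which the paper leaves implicit.
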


\begin{remark} \label{remUQuatNotAsSpiders}
	Since $\UQuat$ cannot be modelled as a monoid over states,
	it cannot obey the fundamental spider rules of \cite{Coecke08}.
	We therefore note that the presentation of ZQ (when we come to it)
	will be different by necessity rather than whim.
\end{remark}

We will deviate from the structure of ZX by having \emph{labelled, directed edges} carry the phase group of ZQ,
but keep phase-free Z spiders to mediate the graph structure.
Although ZQ is presented here in the form of a PROP we encourage the reader to think
of ZQ diagrams as directed, multi, open graphs, where each directed edge is labelled by a unit-length quaternion.
A different approach to this monoid-based presentation of groups
(as suggested by our general notion of
graphical calculi based on $\Sigma$-algebras in \S\ref{secPhaseGroupHomomorphismsEtc})
is to instead consider a group as a category with one object and where all morphisms
are invertible.
This point of view lends us to the following definition:

\begin{definition}[Group PROP with vertical composition]  \label{defGroupPROP}
	We construct the (vertically composed) group PROP, $\groupprop_G$,
	for a group $(G, \cdot)$
	as the PROP with generators
	\begin{align}
		\node{white}{g} \quad \forall g \in G
	\end{align}
	It additionally satisfies the rewrite rules
	\begin{align}
		\vc{\begin{tikzpicture}
	\begin{pgfonlayer}{nodelayer}
		\node [style=white] (0) at (0, 0) {f};
		\node [style=none] (4) at (0, 0.5) {};
		\node [style=white] (5) at (0, -0.5) {g};
		\node [style=none] (6) at (0, -1) {};
	\end{pgfonlayer}
	\begin{pgfonlayer}{edgelayer}
		\draw (4.center) to (0);
		\draw (0) to (5);
		\draw (5) to (6.center);
	\end{pgfonlayer}
\end{tikzpicture}
} \by{M} \node{white}{f \cdot g}
	\end{align}
\end{definition}

Both ZQ and ZX will be models of this PROP;
in each case the image of the morphisms in $\groupprop_G$
are the single qubit rotations of the phase group.
For ZQ this group is \SU2, for ZX this group is $[0,2\pi)$ under addition.

\section{The definition of the ZQ-calculus}

We present the graphical calculus ZQ as a compact closed PROP
generated by the morphisms in Figure~\ref{figZQGenProp}
and then present the interpretation of these generators in Figure~\ref{figZQInterpretation}.
We build the transpose of the $Q_q$ node in the usual way, as shown in Figure~\ref{figQTranspose}.
A unit-length quaternion $q$ can be expressed as an element of $\bbR^4$, i.e. $q_w + i q_x + j q_y + k q_z$,
or as an angle-vector pair, e.g. $(\alpha, v)$ (see \S\ref{secZQRecapAndStructure}).
Finally we give the rules of ZQ in Figure~\ref{figZQRules}.

\begin{definition}[ZQ]  \label{defZQ}
	The graphical calculus ZQ is formed by:
	\begin{itemize}
		\item The generators of Figure~\ref{figZQGenProp}
		\item The interpretation of Figure~\ref{figZQInterpretation}
		\item The rules of Figure~\ref{figZQRules}
	\end{itemize}
\end{definition}

\begin{theorem}[ZQ is sound] \label{thmZQSound}
	The rules of ZQ (Figure \ref{figZQRules}) are sound with respect to the standard interpretation
	(Figure~\ref{figZQInterpretation})
\end{theorem}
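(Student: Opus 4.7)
The plan is to verify soundness by a case analysis over the families of rules in Figure~\ref{figZQRules}. Since ZQ is presented as a compact closed PROP generated by phase-free Z spiders and quaternion-labelled edges $Q_q$, with the interpretation of Figure~\ref{figZQInterpretation}, the rules will naturally split into (i) compact closed / spider rules with no quaternion data, (ii) rules about composition and units of $Q_q$ along a single wire, and (iii) mixed rules linking the quaternion $H$ with the Z-spider structure. I would take these families in this order because each group reduces to progressively less trivial matrix identities.

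First I would dispatch group (i). The snake equations and swap identities follow from the compact closed structure already fixed in Definition~\ref{defWires}, and the phase-free Z spider fusion rule reduces to the standard fact that $\ket{0\dots 0}\bra{0\dots 0} + \ket{1\dots 1}\bra{1\dots 1}$ is idempotent on connected components, exactly as in ZX with all phases set to $0$. No new computation is required beyond citing the standard spider calculation.

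Second, for group (ii) I would rely crucially on the isomorphism $\phi:\UQuat \to \SU2$ of Remark~\ref{remQuaternionsSU2} (whose well-definedness will be established in the forthcoming Proposition~\ref{propZQPhi}). Because $\phi$ is a group homomorphism, the rule that composes two successive edge labels $Q_q \comp Q_{q'}$ into $Q_{q \cdot q'}$ translates into $\phi(q)\phi(q') = \phi(q \cdot q')$, which holds by definition. The unit rule $Q_1 = \id$ reduces to $\phi(1) = I_2$, and any inversion rule $Q_q \comp Q_{q^{-1}} = \id$ reduces to $\phi(q)\phi(q)^{-1} = I_2$, again immediate from $\phi$ being a homomorphism. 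Rules that identify a specific $Q_{(\pi,v)}$ with a Pauli-like operation become one-line checks via the explicit formula in Remark~\ref{remQuaternionPauli}.

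The main obstacle, and the step I expect to require the most care, is group (iii): rules that mix the spider with the distinguished quaternion $H$, in particular any ZX-style colour-change rule of the form ``a Z spider with $H$ on every leg equals an X-like spider''. Verifying such a rule amounts to computing
\begin{align}
\phi(H)^{\otimes n} \,\interpret{\text{Z spider}}\, \bigl(\phi(H)^{\otimes m}\bigr)^{-1}
\end{align}
and checking it matches the intended right-hand side, which in turn depends on the precise definition of $\phi(H)$ (the Hadamard up to a global $-i$, as noted in Remark~\ref{remQuaternionPauli}). Because ZQ's rules are linear (Remark~\ref{remLinearRulesOfZQ}) each rule is a single equation with fixed arities rather than a scheme parameterised over phases, so after setting up the Pauli/Hadamard dictionary via $\phi$, each such rule reduces to a finite matrix identity on the computational basis. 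The verification is therefore mechanical once the dictionary is in place; the only real work is being careful about the global scalar factors of $-i$ that appear when translating between unit quaternions and elements of $\SU2$, and confirming these scalars cancel consistently on both sides of each rule.
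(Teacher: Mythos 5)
Your proposal takes essentially the same route as the paper's proof: soundness is checked rule by rule, with rule (Q) following from $\phi:\UQuat\to\SU2$ being a group homomorphism (established in Proposition~\ref{propZQPhi}) and every remaining rule — in particular the $H$-decorated rules (B) and (CP), where the $-i/\sqrt{2}$ scalars must be tracked — reduced to a direct finite matrix computation against the interpretation of Figure~\ref{figZQInterpretation}. The only discrepancy is cosmetic: ZQ's actual ruleset contains no colour-change or inversion rule, and rules (Y), (N), (A), (M), (P) involve transposes, scalars, or general quaternions rather than $H$, but these are exactly the kind of mechanical checks your plan already prescribes.
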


\begin{proof}
	\label{prfThmZQSound}
	This proof is covered in \S \ref{secZQSound}
\end{proof}

\begin{theorem}[ZQ is complete] \label{thmZQComplete}
	ZQ with the rules of Figure \ref{figZQRules} is complete with respect to the standard interpretation
	(Figure~\ref{figZQInterpretation})
\end{theorem}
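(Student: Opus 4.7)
The plan is to prove completeness of ZQ by exhibiting translations to and from Universal ZX (whose completeness is assumed from \cite{UniversalComplete} or \cite{VilmartZX}), in the same style used for Theorem~\ref{thmRingRCompleteZW}. Concretely, I would define a strict symmetric monoidal functor $F_{\ZX} : \ZQ \to \ZX_U$ that sends the phase-free Z spider to the phase-free Z spider of ZX, and the $Q_q$ edge-label to a ZX fragment built from the Euler angle decomposition of $q$ promised by Proposition~\ref{propZQQDecomp} (a Z-X-Z composition of spiders and, if $q$ is the Hadamard quaternion, the usual yellow box). In the other direction I would define $F_{\ZQ} : \ZX_U \to \ZQ$ sending Z spiders of ZX to Z spiders of ZQ, and each single-qubit Z rotation of angle $\alpha$ to a $Q_q$ edge with $q = (\alpha, z)$; Hadamard edges are sent to $Q_H$ using Definition~\ref{defQuaternionH}. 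Soundness of both translations can be checked by computing the matrix interpretation generator by generator, using the quaternion-to-$\SU2$ isomorphism $\phi$ of Remark~\ref{remQuaternionsSU2}.

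Given these translations, the completeness argument mirrors the one for $\ring_R$. For any sound equation $D_1 = D_2$ in ZQ we would argue
\begin{align}
    \ZQ \semantic D_1 = D_2
    \;\Longrightarrow\; \ZX_U \semantic F_{\ZX} D_1 = F_{\ZX} D_2
    \;\Longrightarrow\; \ZX_U \syntactic F_{\ZX} D_1 = F_{\ZX} D_2,
\end{align}
then translate the ZX derivation step-by-step into a ZQ derivation of $F_{\ZQ} F_{\ZX} D_1 = F_{\ZQ} F_{\ZX} D_2$, and finally close the loop with a ZQ-derivable equality $D_i = F_{\ZQ} F_{\ZX} D_i$. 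This requires two families of lemmas: first, that for every rule $L = R$ of a chosen complete ruleset of $\ZX_U$ (for instance the Vilmart ruleset quoted in Figure~\ref{figVilmartZXRules}), the image $F_{\ZQ}L = F_{\ZQ}R$ is derivable from the rules of Figure~\ref{figZQRules}; second, that for each generator $g$ of ZQ we have $\ZQ \syntactic g = F_{\ZQ} F_{\ZX} g$.

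The first family is straightforward for the spider, bialgebra and colour-change rules, since these only involve phases that are either zero or mapped to Z-axis quaternions, and ZQ's (M) rule and its spider rule do the work. For the $Q_q$ generator the round-trip identity $g = F_{\ZQ} F_{\ZX} g$ reduces to saying that multiplying back together the three Euler-angle factors recovers $q$, which follows from the (M) rule combined with the chosen Z-X-Z decomposition. I would therefore expect the genuinely hard obstacle to be translating the (EU') rule of Universal ZX: its side conditions are notoriously intricate (as quoted from \cite{VilmartZX} in the excerpt), but, crucially, on the ZQ side they are nothing more than the equality of two quaternion products, so the proof of $F_{\ZQ}(EU')$ in ZQ should collapse to repeated use of (M) and the spider rule once the Euler decomposition is made explicit.

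If the (EU') translation proves too painful to do directly, the fall-back is to translate instead to the complete ring-based calculus $\ring_\bbC$ (or to $\ZW_\bbC$), both of which avoid (EU')-style side conditions, using the fact from Remark~\ref{remQuaternionPauli} that $\phi((\pi, v))$ is, up to the scalar $-i$, an elementary Pauli-like matrix that is easy to build in $\ring_\bbC$. Either route supplies the completeness statement; I would prefer the ZX route because it keeps us within the phase-group setting for which ZQ was designed, and because it transparently exhibits ZQ as a clean re-presentation of ZX in which the (EU) rule becomes a consequence of quaternion multiplication rather than an opaque side condition.
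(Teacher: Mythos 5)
Your proposal follows essentially the same route as the paper: the paper proves completeness exactly by exhibiting the pair of translations $F_X : \ZQ \to \ZX$ (via the Euler Z-X-Z decomposition of Proposition~\ref{propZQQDecomp}) and $F_Q : \ZX \to \ZQ$, deriving every translated ZX rule in ZQ — with (EU') indeed collapsing to quaternion multiplication via the (Q) rule and a couple of trigonometric lemmas — and then proving $\ZQ \entails F_Q(F_X(g)) = g$ for each generator. The only points you gloss over are the global-phase and scalar bookkeeping (the paper's translations carry explicit $\lambda$ factors such as $\lambda_{e^{i\alpha/2}}$ and $\lambda_i$, and ZQ's scalar generator $\lambda_c$ is translated using the decomposition $c = (\sqrt{2})^n e^{i\alpha}\cos\beta$ of Lemma~\ref{lemRealNumberCanonicalForm}), which your generator-by-generator soundness check would force you to add.
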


\begin{proof}
	\label{prfThmZQComplete}
	This proof is covered in \S \ref{secZQComplete},
	and is performed by an equivalence with the ZX calculus.
\end{proof}

Diagrams of ZQ are elements of the compact closed PROP generated by the
morphisms in Figure~\ref{figZQGenProp}.
We add to these generators a shorthand for the transpose of $Q_q$, written as a trapezium pointing downwards
(as in Figure~\ref{figQTranspose}),
and defined using the cup and cap.
The interpretations of these generators are given both
in Figure~\ref{figZQInterpretation},
or alternatively using angle-vector pair notation (Definition~\ref{defUQUatSO3}) we have the interpretation given in Figure~\ref{figZQInterpretAngleVectorPair}.

\begin{figure}
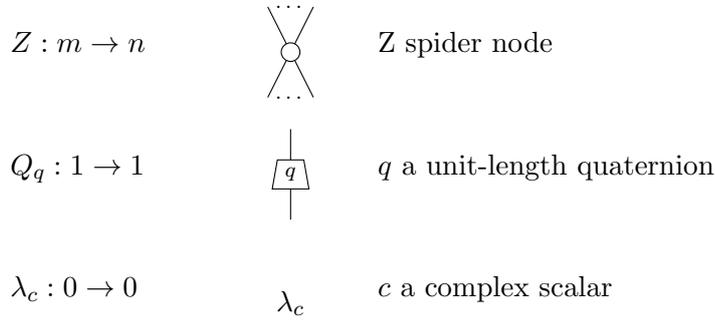

	\begin{center}
		\renewcommand{\arraystretch}{3}
		\tabcolsep=10pt
		\begin{tabular}[h!]{m{1in} c m{2in}}
			$Z : m \to n$         & $\spider{smallZ}{}$ & Z spider node                \\
			$Q_{q} : 1 \to 1$     & $\node{qn}{q}$      & $q$ a unit-length quaternion \\
			$\lambda_c : 0 \to 0$ & \raisebox{-0.5em}{$\lambda_c$}         & $c$ a complex scalar
		\end{tabular}
	\end{center}
	\caption{\label{figZQGenProp}The generators of ZQ as a PROP}
\end{figure}

\begin{figure}
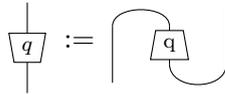

	\begin{align*}
		\node{qnt}{q} & :=  \; \vc{\InputIfFileExists{./figures/ZQ/q_y.tikz}{}{Missing file!}}
	\end{align*}
	\caption{The transpose of $q$ in ZQ\label{figQTranspose}}
\end{figure}

\begin{figure}
	\begin{center}
		\begin{align*}
			\interpret{\spider{smallZ}{}} & =
			\begin{pmatrix}
				1      & 0 & \dots  &   & 0      \\
				0      & 0 & \dots  &   & 0      \\
				\vdots &   & \ddots &   & \vdots \\
				0      &   & \dots  & 0 & 0      \\
				0      &   & \dots  & 0 & 1
			\end{pmatrix} \\[\rowgap]
			\interpret{\node{qn}{q}}      & =
			\begin{pmatrix}
				q_w - iq_z & -q_y + iq_x \\
				-q_y-iq_x  & q_w + iq_z
			\end{pmatrix} \\[\rowgap]
			\interpret{\lambda_c}         & =
			c \\[\rowgap]
			\interpret{\dcup}             & = \begin{pmatrix}
				1 \\ 0 \\ 0 \\ 1
			\end{pmatrix} \\[\rowgap]
			\interpret{\dcap}             & = \begin{pmatrix}
				1 & 0 & 0 & 1
			\end{pmatrix}
		\end{align*}
	\end{center}
	\caption{Interpretations of the generators of ZQ \label{figZQInterpretation}}
\end{figure}

\begin{figure}
	\begin{center}
\begin{align*}
	\interpret{\node{qn}{(\alpha, v)}} & =
	\begin{pmatrix}
		\cos \frac{\alpha}{2} - i \sin \frac{\alpha}{2} v_z & -i \sin\frac{\alpha}{2} (v_x + i v_y)                \\
		- i \sin\frac{\alpha}{2} (v_x - i v_y)              & \cos{\frac{\alpha}{2}} + i \sin\frac{\alpha}{2}{v_z}
	\end{pmatrix}  \label{eqnZQInterpretAlpha}
\end{align*}
\end{center}
	\caption{Interpretation of the Q generator using angle-vector pair notation \label{figZQInterpretAngleVectorPair}}
\end{figure}

\FloatBarrier
In order to decrease diagrammatic clutter we shall use the following notation:
\begin{align}
	\vc{} & := \node{trap}{(\pi, \frac{1}{\sqrt{2}}(x + z))} = \node{trap}{H}
\end{align}
This is the familiar `Hadamard edge' from e.g. \cite{Duncan2020}.
Note that the Hadamard edge is symmetrical,
but the $H$ quaternion edge decoration is not.
As soon as we have the tools to show that
this is well defined we shall do so (Lemma~\ref{lemHadamardEdgeWellDefinedZQ}).

\subsection{The rules of ZQ}
\label{secZQRules}

We present in Figure~\ref{figZQRules} a ruleset of ZQ.
The soundness of this ruleset is proved in subsection \ref{secZQSound}
and completeness is proved in section \ref{secZQComplete}.

\begin{figure}
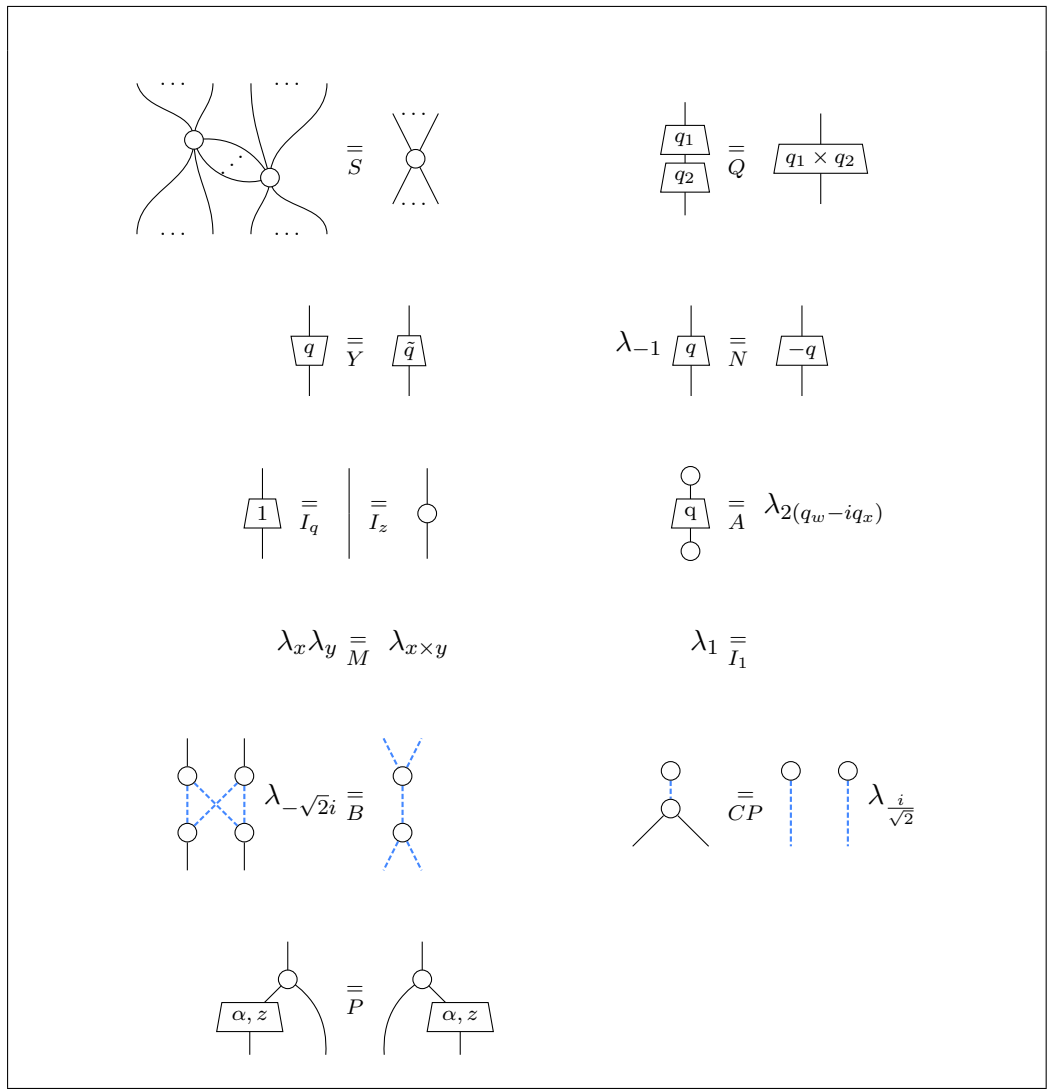

	\centering\begin{tabular}{|c|}
		\hline                     \\
		\begin{minipage}{0.9\textwidth}{
		\begin{align*}
			{\vc{\InputIfFileExists{./figures/ZQ/q_s_lhs.tikz}{}{Missing file!}}}                      & \by{S} \spider{smallZ}{}               &
			{\vc{\InputIfFileExists{./figures/ZQ/q_lhs.tikz}{}{Missing file!}}} & \by{Q} \node{qn}{q_1 \times q_2} \\[2em]
			\node{qnt}{q}                               & \by{Y} \node{qn}{\tilde q}             &
			\lambda_{-1} \node{qn}{q} & \by{N} \node{qn}{-q}\\[2em]
			\node{qn}{1} \by{I_q}                       & \node{none}{} \by{I_z} \node{smallZ}{} &
			{\vc{\InputIfFileExists{./figures/ZQ/q_l3_lhs.tikz}{}{Missing file!}}} & \by{A} \lambda_{2(q_w-iq_x)}\\[2em]
			\lambda_x \lambda_y                         & \by{M} \lambda_{x \times y}            &
			\lambda_1 & \by{I_1}  \\[2em]
			{\vc{\InputIfFileExists{./figures/ZQ/q_b_lhs.tikz}{}{Missing file!}}} \lambda_{-\sqrt{2}i} & \by{B} {\vc{\InputIfFileExists{./figures/ZQ/q_b_rhs.tikz}{}{Missing file!}}}          &
			{\vc{\InputIfFileExists{./figures/ZQ/q_cp_lhs.tikz}{}{Missing file!}}} &\by{CP} {\vc{\InputIfFileExists{./figures/ZQ/q_cp_rhs.tikz}{}{Missing file!}}}  \lambda_{\frac{i}{\sqrt{2}}} \\[2em]
			{\vc{\InputIfFileExists{./figures/ZQ/q_phase_lhs.tikz}{}{Missing file!}}}                  & \by{P} {\vc{\InputIfFileExists{./figures/ZQ/q_phase_rhs.tikz}{}{Missing file!}}}      &
		\end{align*}}
		\end{minipage} \\
		\hline
	\end{tabular}
	\caption{
		The rules of ZQ.\label{figZQRules}
		In rule S the diagonal dots indicate one or more wires, horizontal dots indicate zero or more wires.
		The right hand side of rule $I_1$ is the empty diagram,
		and $\tilde q$ is the quaternion $q$ reflected in the map $j \mapsto -j$.
	}
\end{figure}

\begin{lemma} \label{lemHadamardEdgeWellDefinedZQ}
	The Hadamard edge is well defined in ZQ,
	in that:
	\begin{align}
		ZQ \semantic \node{qn}{H} &= \node{qnt}{H} = \vc{}  & ZQ \entails \node{qn}{H} \by{Y} \node{qnt}{H} &= \vc{}
	\end{align}
\end{lemma}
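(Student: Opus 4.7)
The plan is to reduce both claims to the single arithmetic observation that the Hadamard quaternion $H = (\pi, \tfrac{1}{\sqrt{2}}(x+z))$ has no $j$-component. Rewriting $H$ via Definition~\ref{defUQUatSO3} gives $H = \cos(\pi/2) + \sin(\pi/2)\bigl(\tfrac{1}{\sqrt{2}}i + 0\cdot j + \tfrac{1}{\sqrt{2}}k\bigr) = \tfrac{i}{\sqrt{2}} + \tfrac{k}{\sqrt{2}}$. Hence the reflection $\tilde{\phantom{q}}\,$ sending $j\mapsto -j$ fixes $H$, i.e.\ $\tilde H = H$.

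For the syntactic claim, apply rule $(Y)$ from Figure~\ref{figZQRules} to $\node{qnt}{H}$: it produces $\node{qn}{\tilde H}$, which by the previous paragraph is literally $\node{qn}{H}$. Rewrites being equational, this immediately yields $\ZQ \entails \node{qn}{H} \by{Y} \node{qnt}{H}$. The final equality with $\tikzfig{wire/hedge}$ is then just the expansion of the notational shorthand introduced before the lemma.

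For the semantic claim I would compute $\interpret{\node{qn}{H}}$ directly using the angle-vector form in Figure~\ref{figZQInterpretAngleVectorPair}, which with $v_y=0$ collapses to $\tfrac{-i}{\sqrt{2}}\bigl(\begin{smallmatrix}1 & 1\\1 & -1\end{smallmatrix}\bigr)$. This matrix is manifestly symmetric, so by Figure~\ref{figQTranspose} (which defines $\node{qnt}{H}$ via cups and caps to implement matrix transposition in the standard compact closed way) we get $\interpret{\node{qnt}{H}} = \interpret{\node{qn}{H}}^T = \interpret{\node{qn}{H}}$. Alternatively, once Theorem~\ref{thmZQSound} is in hand this semantic step is automatic from the syntactic step above; I would mention both options so that the lemma is self-contained and does not create a forward dependency on the soundness proof.

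The only subtle point, and hence the only potential obstacle, is the direction of the rule application: rule $(Y)$ is stated as $\node{qnt}{q} \mapsto \node{qn}{\tilde q}$, so one must be explicit that as an equational rewrite it licenses the reverse direction as well, so that $\node{qn}{H}$ can be rewritten to $\node{qnt}{H}$ as the lemma is phrased. Apart from that, every step is an immediate computation, and no other rule from Figure~\ref{figZQRules} is needed.
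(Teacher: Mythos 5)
Your proposal is correct and follows essentially the same route as the paper: the semantic part computes $\interpret{\node{qn}{H}} = \tfrac{-i}{\sqrt{2}}\bigl(\begin{smallmatrix}1 & 1\\ 1 & -1\end{smallmatrix}\bigr)$ and uses its symmetry under transpose, and the syntactic part is a single application of rule $(Y)$, which acts trivially because $H$ has no $j$-component. Your explicit remarks on the vanishing $j$-component and the equational (two-way) reading of $(Y)$ just make precise what the paper leaves implicit.
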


\begin{proof} \label{prfLemHadamardEdgeWellDefinedZQ}
	For the semantics:
	\begin{align}
		\interpret{\node{qn}{H}} = \frac{-i}{\sqrt{2}}
		\begin{pmatrix}
			1 & 1  \\
			1 & -1
		\end{pmatrix} =\interpret{\node{qn}{H}}^T = \interpret{\left(\node{qn}{H}\right)^T}
		= \interpret{\node{qnt}{H}}
	\end{align}
	Syntactically
	\begin{align}
		\node{qnt}{\cos \half[\pi] + \sin\half[\pi](i + k)}
		\by{Y}
		\node{qn}{\cos \half[\pi] + \sin\half[\pi](i + k)}
	\end{align}
\end{proof}

The author feels that the rules of ZQ are easier to grasp
than the rules of Universal ZX given in Figure~\ref{figVilmartZXRules}.
(Other rulesets for Universal ZX exist but all of
them require trigonometric side conditions for their variation on the EU rule.)
By contrast the rules of ZQ involve a single, linear side condition,
it abstracts scalar diagrams into simply scalar numbers $\lambda_c$,
and it neatly divides graphically into entanglement (vertices)
and rotations (edges).
The `complexity' of the (EU') rule in ZX
is simply a shadow of the group multiplication of rotations
shown in the (Q) rule of ZQ.
This is best illustrated by noting that
to simplify a chain of single qubit unitaries
in ZX one must successively apply the (EU') rule,
at each point calculating $x^+$, $x^-$, $z$, and $z'$,
as well as various arguments, moduli, sines, and cosines,
eventually reaching three rotations (and a choice of ZXZ or XZX);
in ZQ one simply applies the group multiplication.

\begin{figure}
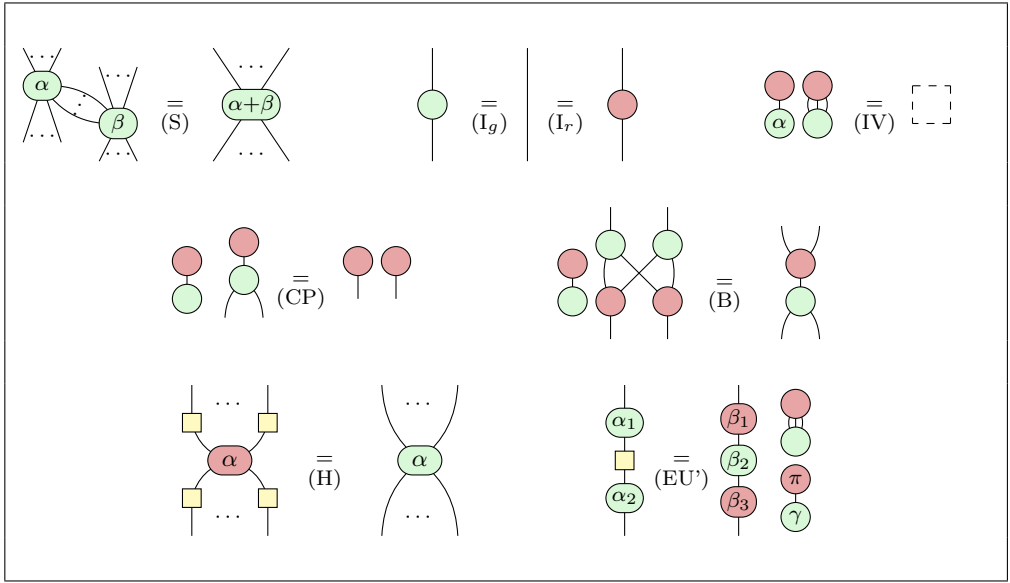

	\centering
	\scalebox{1}{\begin{tabular}{|c|}
			\hline                                                                                                                         \\
			\vc{\InputIfFileExists{./figures/ZX/Vilmart/spider-1.tikz}{}{Missing file!}}$\qquad\quad~$\vc{\InputIfFileExists{./figures/ZX/Vilmart/s2-green-red.tikz}{}{Missing file!}}$\qquad\quad~$\vc{\InputIfFileExists{./figures/ZX/Vilmart/inverse-param.tikz}{}{Missing file!}}~~ \\\\
			\vc{\InputIfFileExists{./figures/ZX/Vilmart/b1s.tikz}{}{Missing file!}}$\qquad\qquad$\vc{\InputIfFileExists{./figures/ZX/Vilmart/b2s.tikz}{}{Missing file!}}                                                                 \\\\
			\vc{\InputIfFileExists{./figures/ZX/Vilmart/h2.tikz}{}{Missing file!}}$\qquad\qquad$\vc{\InputIfFileExists{./figures/ZX/Vilmart/Euler-Had.tikz}{}{Missing file!}}                                                            \\\\
			\hline
		\end{tabular}}
	\caption[]{Set of rules ZX for the ZX-Calculus with scalars from Ref.~\cite{VilmartZX}. The right-hand side of (IV) is an empty diagram. (...) denote zero or more wires, while (\protect\rotatebox{45}{\raisebox{-0.4em}{$\cdots$}}) denote one or more wires. In rule (EU'), $\beta_1,\beta_2,\beta_3$ and $\gamma$ can be determined as follows:
		$x^+:=\frac{\alpha_1+\alpha_2}{2}$, $x^-:=x^+-\alpha_2$, $z := -\sin{x^+}+i\cos{x^-}$ and $
			z' := \cos{x^+}-i\sin{x^-}$, then
		$\beta_1 = \arg z + \arg z',
			\beta_2 = 2\arg\left(i+\left|\frac{z}{z'}\right|\right),
			\beta_3 = \arg z - \arg z',
			\gamma = x^+-\arg(z)+\frac{\pi-\beta_2}{2}$
		where by convention $\arg(0):=0$ and $z'=0\implies \beta_2=0$.
	}
	\label{figVilmartZXRules}
\end{figure}

\section{Soundness of ZQ} \label{secZQSound}

In this section we go through each of the rules given in Section~\ref{secZQRules},
showing that the interpretations of the left and right hand sides of the rules are equal.

\begin{proposition} \label{propZQSSound}
	The rule S is sound:
	\begin{align}
		\interpret{{\vc{\InputIfFileExists{./figures/ZQ/q_s_lhs_dots.tikz}{}{Missing file!}}}} & = \interpret{\spidermn{smallZ}{}{c+d}{a+b}}
	\end{align}
	Where there are $k \geq 1$ wires represented by \reflectbox{$\ddots$} in the middle of the left hand side.
\end{proposition}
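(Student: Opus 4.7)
The plan is to verify the soundness of rule (S) by expanding both sides of the equation in the computational basis using the explicit bra-ket form of the phase-free Z spider, and then showing that the $k$ contracted wires collapse to produce a single Z spider on the right-hand side.

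First I would rewrite the interpretation of Figure~\ref{figZQInterpretation} in the more convenient sum-over-basis form
\begin{align*}
\interpret{\spidermn{smallZ}{}{m}{n}} \;=\; \sum_{x\in\{0,1\}} \ket{x}^{\otimes n}\!\bra{x}^{\otimes m},
\end{align*}
which one checks directly from the matrix given (only the all-zero and all-one diagonal entries are nonzero, and both equal $1$). This exhibits each Z spider as a Kronecker delta that forces all its legs to carry the same computational-basis value.

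Next I would apply this formula to the two spiders appearing on the left-hand side. Call the lower spider $A$, with $a+b$ free inputs and $k$ outputs joining the upper spider $B$, and the upper spider $B$ itself has $k$ inputs joining $A$ and $c+d$ free outputs (after possibly re-ordering by swaps, which are interpreted as the standard symmetry of the tensor product and can be absorbed into the index labelling). Then
\begin{align*}
\interpret{A} &= \sum_{y\in\{0,1\}} \ket{y}^{\otimes k}\!\bra{y}^{\otimes a+b},\\
\interpret{B} &= \sum_{z\in\{0,1\}} \ket{z}^{\otimes c+d}\!\bra{z}^{\otimes k}.
\end{align*}
Composing along the $k$ connecting wires gives
\begin{align*}
\interpret{B}\circ\interpret{A} \;=\; \sum_{y,z\in\{0,1\}} \braket{z}{y}^{k}\,\ket{z}^{\otimes c+d}\!\bra{y}^{\otimes a+b}.
\end{align*}
Since $k\geq 1$ and $\braket{z}{y}=\delta_{yz}$, the sum collapses to
\begin{align*}
\sum_{x\in\{0,1\}} \ket{x}^{\otimes c+d}\!\bra{x}^{\otimes a+b} \;=\; \interpret{\spidermn{smallZ}{}{a+b}{c+d}},
\end{align*}
which is exactly the interpretation of the right-hand side. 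A small bookkeeping remark is needed to confirm that any swaps or cups/caps used to route the $k$ middle wires into the pictured position act as the identity permutation on indices once we sum over all basis labellings, so the picture on the LHS is interpreted as the composition written above.

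I do not expect any serious obstacle: the entire argument is a direct basis computation, and the only subtlety is purely bureaucratic, namely being careful about how the $a+b$ inputs and $c+d$ outputs are distributed between the two spiders and ensuring the middle wires are genuinely contracted (hence the hypothesis $k\geq 1$, without which the two spiders would remain disconnected and the identity would fail).
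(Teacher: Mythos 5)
Your proof is correct. The computation you give — writing each phase-free Z spider as $\sum_{x\in\{0,1\}}\ket{x}^{\otimes n}\bra{x}^{\otimes m}$, composing along the $k\geq 1$ connecting wires, and using $\braket{z}{y}=\delta_{yz}$ to collapse the double sum into a single spider — is exactly the content of the Z-spider fusion law, and your remark about why $k\geq 1$ is needed is apt. The paper itself does not carry out this calculation: its proof is a one-line appeal to the original Z spider law of Coecke--Duncan (Theorem~6.12 of the 2008 paper), since the ZQ interpretation of the phase-free Z spider coincides with the standard one. So you have supplied, correctly and self-containedly, the basis computation that the paper delegates to a citation; the only caveat is the purely notational one you already flag, namely matching your labelling of the $a+b$ and $c+d$ free legs to the orientation in the figure, which does not affect the validity of the argument.
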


\begin{proof} \label{prfPropZQSSound}
	This is simply a restating of the original Z spider law from \cite[Theorem~6.12]{Coecke08}.
\end{proof}

\begin{proposition} \label{propZQQSound}
	The rule $Q$ is sound:
	\begin{align}
		\interpret{{\vc{\begin{tikzpicture}
	\begin{pgfonlayer}{nodelayer}
		\node [style=qn] (0) at (0, -0.25) {$q_2$};
		\node [style=qn] (1) at (0, 0.25) {$q_1$};
		\node [style=none] (2) at (0, 0.75) {};
		\node [style=none] (3) at (0, -0.75) {};
	\end{pgfonlayer}
	\begin{pgfonlayer}{edgelayer}
		\draw (2.center) to (3.center);
	\end{pgfonlayer}
\end{tikzpicture}
}}} & = \interpret{\node{qn}{q_1 \times q_2}}
	\end{align}
\end{proposition}

\begin{proof}  \label{prfPropZQQSound}
	Follows from $\phi$ (see Remark~\ref{remQuaternionsSU2}) being a group isomorphism.
	The left hand side is multiplication in \SU2, the right hand side is multiplication in \UQuat.
\end{proof}

\begin{proposition} \label{propZQYSound}
	The rule $Y$ is sound:
	\begin{align}
		\interpret{\node{qn}{q_w + iq_x - jq_y + kq_z}} & = \interpret{\node{qnt}{q_w + iq_x + jq_y + kq_z}}
	\end{align}
\end{proposition}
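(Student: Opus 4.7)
The plan is to reduce the claim to a direct $2\times 2$ matrix computation, since both sides of the equation are morphisms $1 \to 1$ in $\bit{\bbC}$.

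First I would unfold the definition of the downward-pointing $Q$-node. By Figure~\ref{figQTranspose}, $\node{qnt}{q}$ is defined as $Q_q$ precomposed with a cap and postcomposed with a cup. Using the interpretations of $\dcup$ and $\dcap$ from Figure~\ref{figZQInterpretation} (which are the standard Bell cup and cap), together with the snake equations implicit in the compact closed structure of ZQ, a routine calculation gives
\begin{align}
\interpret{\node{qnt}{q}} \;=\; \interpret{\node{qn}{q}}^{T}.
\end{align}
Thus the soundness of rule $Y$ reduces to showing that
\begin{align}
\interpret{\node{qn}{\tilde q}} \;=\; \interpret{\node{qn}{q}}^{T},
\end{align}
where $\tilde q$ is $q$ with the coefficient of $j$ negated.

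Next I would simply write out both sides as explicit matrices using the formula in Figure~\ref{figZQInterpretation}, namely
\begin{align}
\interpret{\node{qn}{q}} \;=\; \begin{pmatrix} q_w - iq_z & -q_y + iq_x \\ -q_y - iq_x & q_w + iq_z \end{pmatrix},
\end{align}
transposing the right-hand side, and substituting $q_y \mapsto -q_y$ (with $q_w, q_x, q_z$ unchanged) on the left-hand side. The conceptual content is that transposition swaps the two off-diagonal entries while leaving the diagonal fixed, and the involution $j \mapsto -j$ acts precisely as the symmetry of the Q-formula which realises this swap; the diagonal entries $q_w \pm i q_z$ are manifestly unaffected by either operation, so only the off-diagonal entries need to be checked.

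I do not expect any significant obstacle here: the proof is a mechanical four-entry check once the transpose has been identified with the matrix transpose via the standard cup/cap interpretation. The only mildly non-trivial step is the initial identification $\interpret{\node{qnt}{q}} = \interpret{\node{qn}{q}}^{T}$, but this is the standard fact that, in a compact closed PROP with Bell cups and caps, wire-bending a $1\to 1$ generator produces its matrix transpose, and it follows immediately from the snake equations together with the explicit entries $\interpret{\dcup} = (1,0,0,1)^T$ and $\interpret{\dcap} = (1,0,0,1)$.
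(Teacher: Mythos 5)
Your overall route is exactly the paper's: identify the downward trapezium with the matrix transpose via the cup/cap interpretations and the snake equations, then compare entries and observe that the involution defining $\tilde q$ swaps the off-diagonal entries while fixing the diagonal. The paper's own proof does precisely this in one line.

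However, the concluding four-entry check does not actually close with the matrix you quoted. Writing
\begin{align}
\interpret{Q_q} \;=\; \begin{pmatrix} q_w - iq_z & -q_y + iq_x \\ -q_y - iq_x & q_w + iq_z \end{pmatrix}
\end{align}
as printed in Figure~\ref{figZQInterpretation}, the transpose swaps $-q_y + iq_x$ and $-q_y - iq_x$, and the substitution realising this swap is $q_x \mapsto -q_x$ (i.e.\ $i \mapsto -i$), \emph{not} $q_y \mapsto -q_y$: substituting $q_y \mapsto -q_y$ into this matrix gives off-diagonal entries $q_y + iq_x$ and $q_y - iq_x$, which differ from the transposed entries by an overall sign. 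The source of the mismatch is a sign inconsistency in the paper itself: comparison with the angle--vector form in Figure~\ref{figZQInterpretAngleVectorPair}, with Remark~\ref{remQuaternionPauli}, and with the matrix used in Proposition~\ref{propZQPhi} shows the intended $(1,2)$ entry is $q_y - iq_x$, and that corrected matrix is the one the paper's proof of this proposition actually writes down; with it, $j \mapsto -j$ does realise the swap and your argument works verbatim. Since you asserted the entrywise check rather than performing it, you would have hit this discrepancy; the proof is complete once you carry out the check against the corrected interpretation (or explicitly note the typo in the figure).
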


\begin{proof} \label{prfPropZQYSound}
	The action of the cups and caps in Figure~\ref{figQTranspose}
	(where we defined the diagrammatic transpose),
	is to enact the transpose in the interpretation:
	\begin{align}
		\interpret{{\vc{\InputIfFileExists{./figures/ZQ/q_y.tikz}{}{Missing file!}}}} =  \begin{pmatrix}
			q_w - iq_z & -q_y - iq_x \\
			q_y-iq_x   & q_w + iq_z
		\end{pmatrix}
		=                                 \interpret{\node{qn}{q_w + iq_x - jq_y + kq_z}}
	\end{align}
\end{proof}

\begin{proposition} \label{propZQNSound}
	The rule $N$ is sound:
	\begin{align}
		\interpret{\lambda_{-1} \node{qn}{q}}= \interpret{\node{qn}{-q}}
	\end{align}
\end{proposition}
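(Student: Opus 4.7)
The plan is a direct computation from the matrix formulas in Figure~\ref{figZQInterpretation}. First I would write out $\interpret{\node{qn}{q}}$ explicitly as the $2\times 2$ matrix with entries $q_w - iq_z$, $-q_y + iq_x$, $-q_y - iq_x$, $q_w + iq_z$, and note that $\interpret{\lambda_{-1}} = -1$ acts by scalar multiplication on the tensor product, so $\interpret{\lambda_{-1}\,\node{qn}{q}}$ is obtained by negating each of these four entries.

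Next I would compute $\interpret{\node{qn}{-q}}$ by substituting the coefficients of $-q = (-q_w) + i(-q_x) + j(-q_y) + k(-q_z)$ into the same matrix formula. Each of the four entries is a real-linear expression in $(q_w, q_x, q_y, q_z)$, so substituting $-q$ for $q$ simply negates every entry. Comparing with the previous step gives the required equality entry-wise.

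There is no real obstacle here beyond careful sign tracking on four entries; the soundness of $N$ is essentially the statement that scalar negation commutes with a real-linear parametrisation of matrices, which holds by inspection.
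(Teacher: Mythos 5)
Your proposal is correct and matches the paper's own proof, which likewise just multiplies the matrix $\interpret{\node{qn}{q}}$ by $-1$ and observes this equals the matrix obtained from the interpretation formula with $-q$ substituted. The linearity observation you make is exactly what the paper's entry-wise computation verifies.
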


\begin{proof} \label{prfPropZQNSound}
	\begin{align}
		LHS =  -1 \begin{pmatrix}
			q_w - iq_z & q_y - iq_x \\
			-q_y-iq_x  & q_w + iq_z
		\end{pmatrix}
		=      \begin{pmatrix}
			-q_w + iq_z & -q_y + iq_x \\
			q_y+iq_x    & -q_w - iq_z
		\end{pmatrix} = \, RHS
	\end{align}
\end{proof}

\begin{proposition} \label{propZQISound}
	The rules $I_q$ and $I_z$ are sound:
	\begin{align}
		\interpret{\node{qn}{1}} = \interpret{\node{none}{}} = \interpret{\node{smallZ}{}}
	\end{align}
\end{proposition}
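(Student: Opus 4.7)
The plan is to evaluate each of the three displayed interpretations directly against Figure~\ref{figZQInterpretation} and observe that all three collapse to the $2 \times 2$ identity matrix. Since this is a purely computational claim, the proposal is really just to organise the three substitutions side by side.

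For $\interpret{\node{qn}{1}}$ I would substitute $q_w = 1$ and $q_x = q_y = q_z = 0$ into the generic $Q$-node matrix
\[
\begin{pmatrix} q_w - iq_z & -q_y + iq_x \\ -q_y - iq_x & q_w + iq_z \end{pmatrix}
\]
from Figure~\ref{figZQInterpretation}; every off-diagonal entry vanishes and each diagonal entry equals $1$, giving the $2 \times 2$ identity. Equivalently, one can read this off the angle-vector form in Figure~\ref{figZQInterpretAngleVectorPair} with $\alpha = 0$. Next, $\interpret{\node{none}{}}$ is fixed by the compact closed PROP structure of ZQ interpreting into $\bit{\bbC}$, and is by convention the $2 \times 2$ identity matrix; this is already recorded in Definition~\ref{defWires}. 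Finally, for $\interpret{\node{smallZ}{}}$ I would specialise the general $Z$-spider interpretation at the top of Figure~\ref{figZQInterpretation} to $m = n = 1$: the displayed block pattern (a $1$ at the top-left, a $1$ at the bottom-right, zeroes elsewhere) is, in the $2 \times 2$ case, once again the identity.

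The three matrices thus coincide and both equalities follow. There is no genuine obstacle here; the only subtlety worth flagging is to confirm that the ``$\ddots$'' in the general spider matrix does not silently insert or remove rows when $m = n = 1$ — once the formula is read carefully this is immediate, and the whole proof reduces to matrix arithmetic.
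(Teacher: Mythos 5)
Your proposal is correct and is essentially the paper's own argument: the paper's proof simply states that all three diagrams have interpretation $\begin{pmatrix} 1 & 0 \\ 0 & 1 \end{pmatrix}$, and you have just spelled out the same direct substitutions (quaternion $q=1$, the bare wire, and the $1\to 1$ Z-spider) in more detail. Nothing further is needed.
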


\begin{proof} \label{prfPropZQISound}
	They all have the interpretation $\begin{pmatrix}
			1 & 0 \\ 0 & 1
		\end{pmatrix}$.
\end{proof}

\begin{proposition} \label{propZQASound}
	The rule $A$ is sound:
	\begin{align}
		\interpret{{\vc{\begin{tikzpicture}
	\begin{pgfonlayer}{nodelayer}
		\node [style=qn] (7) at (0, 0) {q};
		\node [style=smallZ] (8) at (0, 0.5) {};
		\node [style=smallZ] (9) at (0, -0.5) {};
	\end{pgfonlayer}
	\begin{pgfonlayer}{edgelayer}
		\draw (8.center) to (7.center);
		\draw (7.center) to (9.center);
	\end{pgfonlayer}
\end{tikzpicture}
}}} = \interpret{\lambda_{2(q_w-iq_x)}}
	\end{align}
\end{proposition}

\begin{proof} \label{prfPropZQASound}
	\begin{align}
		\interpret{{\vc{}}} = &
		\begin{pmatrix}
			1 & 1
		\end{pmatrix} \comp
		\begin{pmatrix}
			q_w - iq_z & q_y - iq_x \\
			-q_y-iq_x  & q_w + iq_z
		\end{pmatrix} \comp
		\begin{pmatrix}
			1 \\ 1
		\end{pmatrix} \\
		=                                     & q_w - iq_z + q_y - iq_x -q_y-iq_x + q_w + iq_z \\
		=                                     & 2(q_w - iq_x)                                  \\
		=                                     & \interpret{\lambda_{2(q_w - iq_x)}}
	\end{align}
\end{proof}

\begin{proposition} \label{propZQMSound}
	The rule $M$ is sound:
	\begin{align}
		\interpret{\lambda_x \lambda_y} = \interpret{\lambda_{x \times y}}
	\end{align}
\end{proposition}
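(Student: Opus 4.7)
The plan is straightforward: unfold the interpretation on both sides using Figure~\ref{figZQInterpretation} and observe that horizontal juxtaposition of diagrams is interpreted as the tensor product in $\bit{\bbC}$, which on scalars (i.e.\ morphisms $0 \to 0$) reduces to ordinary complex multiplication. First I would write $\interpret{\lambda_x \lambda_y} = \interpret{\lambda_x} \tensor \interpret{\lambda_y}$, using that $\idot{}$ is a strict symmetric monoidal functor. Then I would apply the generator interpretation $\interpret{\lambda_c} = c$ twice to obtain $x \tensor y$, and finally invoke the identification $\bbC \tensor \bbC \iso \bbC$ via $a \tensor b \mapsto ab$ to write this as $x \times y$. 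On the right-hand side a single application of the generator interpretation yields $\interpret{\lambda_{x \times y}} = x \times y$, matching the left.

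There is essentially no obstacle here; the rule is sound by construction, since the scalar labels $\lambda_c$ were introduced precisely to be syntactic stand-ins for the complex numbers, and the only content of the rule $M$ is that this labelling respects multiplication of scalars. The entire verification fits in a single chain of equalities and does not require appeal to any of the other rules of ZQ.
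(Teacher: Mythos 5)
Your proposal is correct and follows essentially the same route as the paper, which simply observes that both sides have interpretation $x \times y$; your unfolding via strict monoidality of $\idot{}$ and the scalar identification $\bbC \tensor \bbC \iso \bbC$ just spells out the one-line argument in more detail.
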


\begin{proof} \label{prfPropZQMSound}
	Both sides have interpretation $x \times y$.
\end{proof}

\begin{proposition} \label{propZQL'Sound}
	The rule $I_\lambda$ is sound:
	\begin{align}
		\interpret{\lambda_1} = \interpret{\epsilon}
	\end{align}
	Where $\epsilon$ is the empty diagram.
\end{proposition}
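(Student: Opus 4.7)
The plan is trivial: this is one of the unpacking-definitions soundness checks. I would simply appeal to the interpretation table in Figure~\ref{figZQInterpretation}, where a scalar generator $\lambda_c$ is interpreted as the complex number $c \in \bbC$, viewed as a morphism $0 \to 0$ in $\Qubit$ (equivalently, the $1\times 1$ matrix $(c)$).

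The first and only computational step is to read off $\interpret{\lambda_1} = 1$. The second step is to observe that the empty diagram $\epsilon$, being the identity morphism on the monoidal unit (the object $0$) in the PROP, is sent by any strict symmetric monoidal functor to the identity on the tensor unit of the target category; in $\Qubit$ that identity is $\mathrm{id}_\bbC = 1 \in \bbC$. Hence $\interpret{\epsilon} = 1 = \interpret{\lambda_1}$, and the rule is sound.

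There is no real obstacle; the only thing one has to be careful about is the implicit identification of scalars in a PROP with morphisms $0 \to 0$, but this is standard and was already used silently when writing $\interpret{\lambda_c} = c$. So the proof I would give reduces to the single display
\begin{align}
\interpret{\lambda_1} \;=\; 1 \;=\; \interpret{\epsilon}.
\end{align}
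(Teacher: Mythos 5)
Your proposal is correct and matches the paper's proof, which simply observes that both sides have interpretation $1$; your extra remark about the empty diagram being the identity on the monoidal unit just makes that one-line check explicit.
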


\begin{proof} \label{prfPropZQL'Sound}
	Both sides have interpretation $1$.
\end{proof}

\begin{proposition} \label{propZQBSound}
	The rule $B$ is sound:
	\begin{align}
		\interpret{{\vc{\InputIfFileExists{./figures/ZQ/q_b_lhs.tikz}{}{Missing file!}}} \lambda_{-\sqrt{2}i}} = \interpret{{\vc{\InputIfFileExists{./figures/ZQ/q_b_rhs.tikz}{}{Missing file!}}}}
	\end{align}
\end{proposition}

\begin{proof} \label{prfPropZQBSound}
	\begin{align}
		LHS = & -\sqrt{2} i \times \begin{pmatrix}
			1 & 0 & 0 & 0 \\
			0 & 0 & 0 & 1
		\end{pmatrix}^{\tensor 2} \comp
		\left(\frac{-i}{\sqrt{2}}\begin{pmatrix}
			1 & 1 \\ 1 & -1
		\end{pmatrix}\right)^{\tensor 4} \comp \\
		      & \left(id_2 \tensor \begin{pmatrix}
				1 & 0 & 0 & 0 \\
				0 & 0 & 1 & 0 \\
				0 & 1 & 0 & 0 \\
				0 & 0 & 0 & 1
			\end{pmatrix} \tensor \id_2\right) \comp
		\begin{pmatrix}
			1 & 0 \\
			0 & 0 \\
			0 & 0 \\
			0 & 1
		\end{pmatrix}^{\tensor 2} \\
		=     & \frac{-i}{\sqrt{2}^3} \begin{pmatrix}
			1 & 1  & 1  & 1 \\
			1 & -1 & -1 & 1 \\
			1 & -1 & -1 & 1 \\
			1 & 1  & 1  & 1
		\end{pmatrix}                                 \\
		RHS = & \left(\frac{-i}{\sqrt{2}}\right)^5 \times \begin{pmatrix}
			1 & 1 \\ 1 & -1
		\end{pmatrix}^{\tensor 2}
		\comp
		\begin{pmatrix}
			1 & 0 \\
			0 & 0 \\
			0 & 0 \\
			0 & 1
		\end{pmatrix} \comp \\ &  \begin{pmatrix}
			1 & 1 \\ 1 & -1
		\end{pmatrix} \comp
		\begin{pmatrix}
			1 & 0 & 0 & 0 \\
			0 & 0 & 0 & 1
		\end{pmatrix} \comp  \begin{pmatrix}
			1 & 1 \\ 1 & -1
		\end{pmatrix}^{\tensor 2} \\
		=     & \left(\frac{-i}{\sqrt{2}}\right)^5 \times \begin{pmatrix}
			2 & 2  & 2  & 2 \\
			2 & -2 & -2 & 2 \\
			2 & -2 & -2 & 2 \\
			2 & 2  & 2  & 2
		\end{pmatrix}
		= \frac{-i}{\sqrt{2}^3} \times \begin{pmatrix}
			1 & 1  & 1  & 1 \\
			1 & -1 & -1 & 1 \\
			1 & -1 & -1 & 1 \\
			1 & 1  & 1  & 1
		\end{pmatrix}
	\end{align}
\end{proof}

\begin{proposition} \label{propZQCPSound}
	The rule $CP$ is sound:
	\begin{align}
		\interpret{{\vc{\InputIfFileExists{./figures/ZQ/q_cp_lhs.tikz}{}{Missing file!}}}} = \interpret{{\vc{\InputIfFileExists{./figures/ZQ/q_cp_rhs.tikz}{}{Missing file!}}}  \lambda_{\frac{i}{\sqrt{2}}}}
	\end{align}
\end{proposition}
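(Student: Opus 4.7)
The plan is to proceed by direct matrix computation, just as in the soundness proofs for the other rules above (cf.\ Propositions~\ref{propZQBSound} and \ref{propZQASound}), since both sides of (CP) are simple diagrams built from the generators whose interpretations are fixed by Figure~\ref{figZQInterpretation}. First I would decompose each side of the equation into a composition of tensor products of the elementary building blocks: Z spiders of the appropriate arity (interpreted via the diagonal matrix of Figure~\ref{figZQInterpretation}), any Hadamard edges (each contributing the factor $\frac{-i}{\sqrt{2}}\begin{pmatrix} 1 & 1 \\ 1 & -1\end{pmatrix}$ by Lemma~\ref{lemHadamardEdgeWellDefinedZQ}), the Q-nodes (using the isomorphism $\phi$ of \eqref{eqnUQuatSU2}), and any cups or caps (using the matrices of Figure~\ref{figZQInterpretation}). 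This reduces the problem to a computation in $\bit{\bbC}$.

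Second, I would compute each composite matrix, taking care with the order of the tensor factors so that the graphical connectivity matches the algebraic composition. Once both sides are reduced to explicit matrices, it remains to check that the scalar $\frac{i}{\sqrt{2}}$ on the right exactly accounts for the difference between the two unnormalised matrices. In practice this means computing the LHS matrix, computing the RHS matrix without the scalar, and reading off the ratio; the required ratio should indeed be $\frac{i}{\sqrt{2}}$.

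The main obstacle is purely bookkeeping: gathering the correct power of $\tfrac{1}{\sqrt{2}}$ contributed by the Hadamard edges on both sides and tracking the sign and $i$-factors from the $-i$ in the interpretation of $Q_H$. In particular the rule (B), which was checked above with a very similar structure, contains a single $-\sqrt{2}i$ scalar arising from exactly this kind of counting; here I would expect a power-of-$\sqrt{2}$ mismatch of the form $\left(\tfrac{-i}{\sqrt{2}}\right)^{n}$ for some small integer $n$, giving $\tfrac{i}{\sqrt{2}}$ after simplification. Once the powers of $-i$ and $\sqrt{2}$ are reconciled, the matrix equality should drop out directly from the elementary entries of the basis matrices, and soundness of (CP) follows.
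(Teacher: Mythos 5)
Your plan is the same as the paper's proof in approach: soundness of (CP) is checked by direct matrix computation from the interpretations in Figure~\ref{figZQInterpretation}. The problem is that, as written, you never actually carry the computation out. Every step that constitutes the actual content of the proposition is deferred or hedged: ``the required ratio \emph{should} indeed be $\frac{i}{\sqrt{2}}$'', ``I would \emph{expect} a power-of-$\sqrt{2}$ mismatch of the form $\left(\tfrac{-i}{\sqrt{2}}\right)^{n}$ for some small integer $n$'', ``the matrix equality \emph{should} drop out''. For a soundness claim the entire proof \emph{is} the evaluation of the two sides and the check that the scalar matches exactly; describing the procedure and predicting its outcome does not establish the equality, and guessing the exponent $n$ is precisely the place where a sign or $\sqrt{2}$ error would hide. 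So the proposal has a genuine gap: the verification itself is missing.

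For comparison, the paper's proof is exactly the computation you outline, done explicitly: the left-hand side is
\begin{align}
	\begin{pmatrix} 1 & 1 \end{pmatrix} \comp \frac{-i}{\sqrt{2}}\begin{pmatrix} 1 & 1 \\ 1 & -1 \end{pmatrix} \comp \begin{pmatrix} 1 & 0 & 0 & 0 \\ 0 & 0 & 0 & 1 \end{pmatrix} = \frac{-i}{\sqrt{2}}\begin{pmatrix} 2 & 0 & 0 & 0 \end{pmatrix},
\end{align}
while the right-hand side is $\frac{i}{\sqrt{2}}$ times the two-fold tensor power of $\begin{pmatrix} 1 & 1 \end{pmatrix} \comp \frac{-i}{\sqrt{2}}\begin{pmatrix} 1 & 1 \\ 1 & -1 \end{pmatrix} = \frac{-i}{\sqrt{2}}\begin{pmatrix} 2 & 0 \end{pmatrix}$, which also equals $\frac{-i}{\sqrt{2}}\begin{pmatrix} 2 & 0 & 0 & 0 \end{pmatrix}$. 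If you complete your plan by writing out these two evaluations (there are no cups, caps or general Q-nodes to worry about here, only the two Z spiders and the $H$ quaternion edges), your argument becomes the paper's proof; without that, it is only a sketch.
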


\begin{proof} \label{prfPropZQCPSound}
	\begin{align}
		LHS = & \begin{pmatrix}
			1 & 1
		\end{pmatrix} \comp \frac{-i}{\sqrt{2}} \begin{pmatrix}
			1 & 1 \\ 1 & -1
		\end{pmatrix} \comp \begin{pmatrix}
			1 & 0 & 0 & 0 \\ 0 & 0 & 0 & 1
		\end{pmatrix}
		=  \frac{-i}{\sqrt{2}} \begin{pmatrix}
			2 & 0 & 0 & 0
		\end{pmatrix} \\
		RHS = & \frac{i}{\sqrt{2}} \left(\begin{pmatrix}
			1 & 1
		\end{pmatrix} \comp \frac{-i}{\sqrt{2}} \begin{pmatrix}
			1 & 1 \\ 1 & -1
		\end{pmatrix} \right)^{\tensor 2}
		=  \frac{-i}{\sqrt{2}} \begin{pmatrix}
			2 & 0 & 0 & 0
		\end{pmatrix}
	\end{align}
\end{proof}

\begin{proposition} \label{propZQPSound}
	The rule $P$ is sound:
	\begin{align}
		\interpret{{\vc{\InputIfFileExists{./figures/ZQ/q_phase_lhs.tikz}{}{Missing file!}}}} = \interpret{{\vc{\InputIfFileExists{./figures/ZQ/q_phase_rhs.tikz}{}{Missing file!}}}}
	\end{align}
\end{proposition}

\begin{proof} \label{prfPropZQPSound}
	\begin{align}
		LHS = & \begin{pmatrix}
			1 & 0 & 0 & 0 \\
			0 & 0 & 0 & 1
		\end{pmatrix} \comp \left(\begin{pmatrix}
				q_w - iq_z & 0           \\
				0          & q_w + i q_z
			\end{pmatrix} \tensor id_2 \right) \\
		=     & \begin{pmatrix}
			q_w - iq_z & 0 & 0 & 0           \\
			0          & 0 & 0 & q_w + i q_z
		\end{pmatrix}                                                             \\
		RHS = & \begin{pmatrix}
			1 & 0 & 0 & 0 \\
			0 & 0 & 0 & 1
		\end{pmatrix} \comp \left(id_2 \tensor \begin{pmatrix}
				q_w - iq_z & 0           \\
				0          & q_w + i q_z
			\end{pmatrix}\right)  \\
		=     & \begin{pmatrix}
			q_w - iq_z & 0 & 0 & 0           \\
			0          & 0 & 0 & q_w + i q_z
		\end{pmatrix}
	\end{align}
\end{proof}

\section{Completeness of ZQ} \label{secZQComplete}

The completion of ZQ is achieved by finding an equivalence between ZQ and ZX as PROPs.
We already know that ZX is complete \cite{UniversalComplete}
and this proof was by a similar equivalence with ZW, which was shown to be complete in Ref.~\cite{ZW}.
Equivalence is shown by finding a translation of the generators from ZX to ZQ and vice versa
(\S\ref{secZQZXTranslation}),
before then translating all of the rules from ZX into ZQ (\S\ref{secZXZQTranslatedRules}),
and keeping these as rules in ZQ.
Finally one has to ensure that any diagram translated from ZQ to ZX and back again
can be proven to be equivalent to the original ZQ diagram (\S\ref{secCompletenessZXZQBack}).
In symbols this is:
\begin{align}
	                                              & \interpret{D_1} = \interpret{D_2} \quad \text{Two diagrams in ZQ} \\
	ZX \entails                                   & F_X D_1 = F_X D_2                                                 \\
	\S\ref{secZXZQTranslatedRules} : ZQ \entails  & F_Q F_X D_1 = F_Q F_X D_2                                         \\
	\S\ref{secCompletenessZXZQBack} : ZQ \entails & D_1 = F_Q F_X D_1 \qquad \text{and} \qquad D_2 = F_Q F_X D_2      \\
	\therefore ZQ \entails                        & D_1 = F_Q F_X D_1 = F_Q F_X D_2 = D_2
\end{align}

\subsection{Translation to and from ZX} \label{secZQZXTranslation}

We define the strict monoidal functors $F_X$ and $F_Q$ on generators in Figure~\ref{figZQZXTranslation}.
In defining this translation we make use of two facts;
firstly that we can decompose any unit quaternion into
Z then X then Z rotations. This is tantamount to Euler Angle Decomposition
and is performed explicitly in Proposition~\ref{propZQQDecomp}.
Secondly we need to be able to express any complex number
in a rather particular form, which is shown in Lemma~\ref{lemRealNumberCanonicalForm}.

\begin{figure}[p]
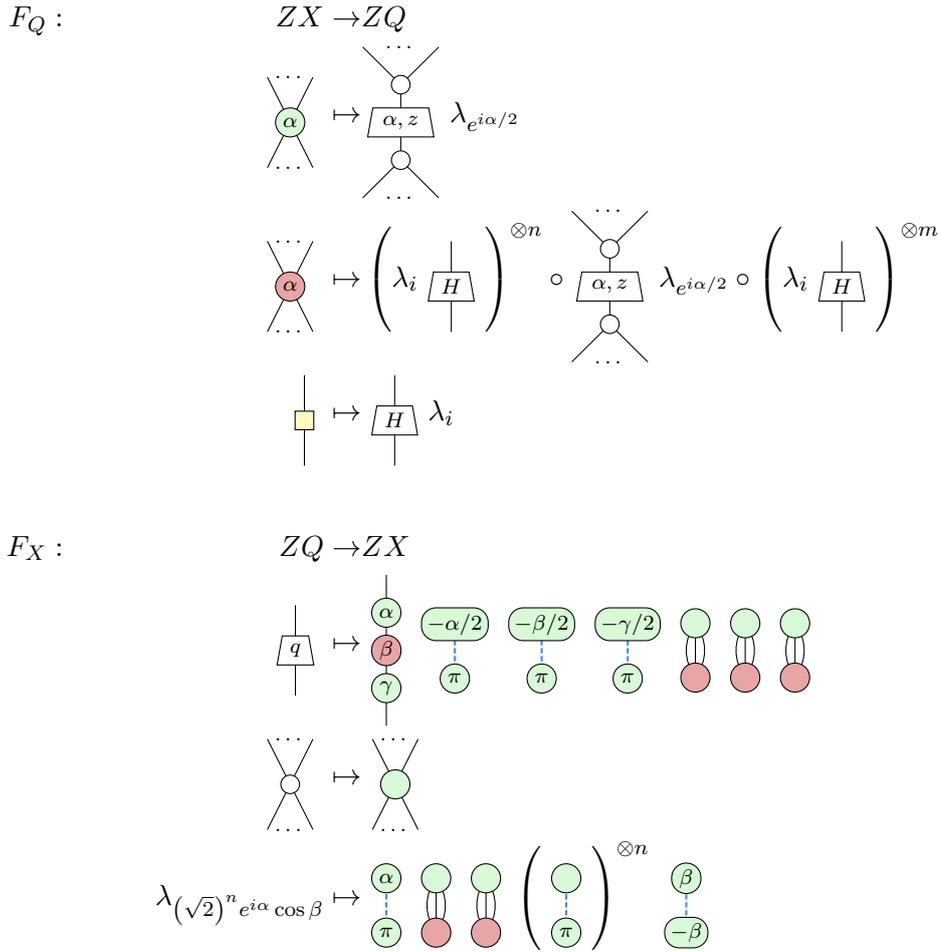

	\begin{align*}
		F_Q : &   & ZX \to                                                                                            & ZQ  \nonumber                                                                                                                                \\
		      &   & \spider{gn}{\alpha} \mapsto                                                                       & \vc{\InputIfFileExists{./figures/ZQ/q_spider_expand.tikz}{}{Missing file!}} \lambda_{e^{i\alpha/2}}                                                                                         \\
		      &   & \spider{rn}{\alpha} \mapsto                                                                       & \left(\lambda_i \node{qn}{H}\right)^{\tensor n}
		\comp {\vc{\InputIfFileExists{./figures/ZQ/q_spider_expand.tikz}{}{Missing file!}}} \lambda_{e^{i\alpha/2}}
		\comp \left(\lambda_i \node{qn}{H}\right)^{\tensor m} \\
		      &   & \node{h}{} \mapsto                                                                                & \node{qn}{H} \lambda_i                                                                                                                       \\[\rowgap]
		F_X : &   & ZQ \to                                                                                            & ZX \nonumber                                                                                                                                 \\
		      &   & \node{qn}{q} \mapsto & \ \vc{\InputIfFileExists{./figures/ZX/abc.tikz}{}{Missing file!}} \galpharpi{-\alpha/2}\nonumber\galpharpi{-\beta/2}\galpharpi{-\gamma/2} \tripleblobs\tripleblobs\tripleblobs    \nonumber \\
		      &   & \spider{smallZ}{} \mapsto                                                                         & \spider{gn}{}                                                                                                                                \\
		      &   & \lambda_{\left(\sqrt{2}\right)^n e^{i\alpha} \cos \beta} \mapsto                                  & \galpharpi{\alpha} \halfblobs \left(\rg{}{\pi}\right)^{\tensor n} \rg{\beta}{-\beta}
	\end{align*}
	\caption{Translation from ZX to ZQ and back again.\label{figZQZXTranslation}
		The existence of $\alpha$, $\beta$ and $\gamma$ when translating the Q node is shown in Proposition~\ref{propZQQDecomp},
		likewise the decomposition of any complex number as $\left(\sqrt{2}\right)^n e^{i\alpha} \cos \beta$ is shown in Lemma~\ref{lemRealNumberCanonicalForm}.
	}
\end{figure}

\begin{proposition} \label{propZQQDecomp}
	There exist $\alpha$ and $\gamma \in [0, 2\pi)$, and $\beta \in [0,\pi]$ such that:
	\begin{align}
		q_w + i q_x + j q_y + k q_z = \left(\cos \frac{\alpha}{2} + k \sin \frac{\alpha}{2}\right)\left(\cos \frac{\beta}{2} + i \sin \frac{\beta}{2}\right)\left(\cos \frac{\gamma}{2} + k \sin \frac{\gamma}{2}\right)
	\end{align}
\end{proposition}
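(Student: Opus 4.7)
The plan is to expand the right-hand side explicitly using the quaternion multiplication rules $ij=k$, $jk=i$, $ki=j$ (and their anti-commutative counterparts), then match coefficients of $\{1,i,j,k\}$ with $\{q_w,q_x,q_y,q_z\}$. Writing $a=\cos(\alpha/2)$, $b=\sin(\alpha/2)$, $c=\cos(\beta/2)$, $d=\sin(\beta/2)$, $e=\cos(\gamma/2)$, $f=\sin(\gamma/2)$, the product $(a+kb)(c+id)(e+kf)$ expands (collecting terms and applying the standard sum-to-product identities for $ae\pm bf$ and $af\pm be$) to
\begin{align*}
q_w &= \cos(\beta/2)\cos((\alpha+\gamma)/2), & q_z &= \cos(\beta/2)\sin((\alpha+\gamma)/2), \\
q_x &= \sin(\beta/2)\cos((\alpha-\gamma)/2), & q_y &= \sin(\beta/2)\sin((\alpha-\gamma)/2).
\end{align*}

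With this system in hand, the existence of $(\alpha,\beta,\gamma)$ in the stated ranges reduces to its inversion. Observe that $q_w^2+q_z^2=\cos^2(\beta/2)$ and $q_x^2+q_y^2=\sin^2(\beta/2)$, which sum to $1$ exactly because $|q|=1$, so the system is internally consistent. Restricting $\beta$ to $[0,\pi]$ confines $\beta/2$ to $[0,\pi/2]$, where both sine and cosine are non-negative, so $\beta$ is uniquely recovered as $\cos(\beta/2)=\sqrt{q_w^2+q_z^2}$ and $\sin(\beta/2)=\sqrt{q_x^2+q_y^2}$.

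In the generic case where both $\sin(\beta/2)$ and $\cos(\beta/2)$ are non-zero, the normalised pairs $(q_w,q_z)/\cos(\beta/2)$ and $(q_x,q_y)/\sin(\beta/2)$ lie on the unit circle, determining $(\alpha+\gamma)/2$ and $(\alpha-\gamma)/2$ modulo $2\pi$; summing, differencing, and reducing modulo $2\pi$ then yields $\alpha,\gamma\in[0,2\pi)$. The main obstacle, as with any Euler-angle decomposition, is the degenerate \emph{gimbal lock} case: when $\beta=0$ only $\alpha+\gamma$ is pinned down, and when $\beta=\pi$ only $\alpha-\gamma$ is. I will handle these by fixing the free parameter (for instance $\gamma=0$), leaving a single circular equation that uniquely determines $\alpha\in[0,2\pi)$ from the corresponding pair $(q_w,q_z)$ or $(q_x,q_y)$. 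Aside from this small case split, the proof is a direct computation built on the expansion above.
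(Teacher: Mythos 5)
Your proposal follows essentially the same route as the paper's proof: expand the triple product, collect the $1,i,j,k$ coefficients via product-to-sum identities, recover $\beta$ from the sums of squares $q$-components, and dispose of the gimbal-lock cases $\beta=0,\pi$ by fixing $\gamma=0$, otherwise reading off $(\alpha+\gamma)/2$ and $(\alpha-\gamma)/2$ and taking sum and difference. The only discrepancy is bookkeeping: with the convention $ij=k$ your pairing of $(q_w,q_z)$ with $(\alpha+\gamma)/2$ and $(q_x,q_y)$ with $(\alpha-\gamma)/2$ is what the direct expansion yields, whereas the paper's gathered system pairs $(q_w,q_x)$ and $(q_y,q_z)$; the inversion argument is otherwise identical.
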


\begin{proof} \label{prfPropZQQDecomp}
	\begin{align}
		RHS = & \left(\cos \frac{\alpha}{2} + k \sin \frac{\alpha}{2}\right)\left(\cos \frac{\beta}{2} + i \sin \frac{\beta}{2}\right)\left(\cos \frac{\gamma}{2} + k \sin \frac{\gamma}{2}\right) \\
		=     & \left(\cos \frac{\alpha}{2} \cos \frac{\beta}{2} \cos \frac{\gamma}{2} - \sin \frac{\alpha}{2} \cos \frac{\beta}{2} \sin \frac{\gamma}{2}\right) +                                 \\
		      & i \left(\cos \frac{\alpha}{2} \sin \frac{\beta}{2} \cos \frac{\gamma}{2} + \sin \frac{\alpha}{2} \sin \frac{\beta}{2} \sin \frac{\gamma}{2}\right) \nonumber                       \\
		      & j \left(\cos \frac{\alpha}{2} \sin \frac{\beta}{2} \sin \frac{\gamma}{2} - \sin \frac{\alpha}{2} \sin \frac{\beta}{2} \cos \frac{\gamma}{2}\right) + \nonumber                     \\
		      & k \left(\cos \frac{\alpha}{2} \cos \frac{\beta}{2} \sin \frac{\gamma}{2} + \sin \frac{\alpha}{2} \cos \frac{\beta}{2} \cos \frac{\gamma}{2}\right) + \nonumber                     \\
		=     & \cos \frac{\beta}{2} \left(\cos \frac{\alpha + \gamma}{2} + i \sin \frac{\alpha + \gamma}{2}\right) +
		j \sin \frac{\beta}{2} \left(\sin \frac{\gamma - \alpha}{2} -  i \cos \frac{\gamma - \alpha}{2}\right)\nonumber
	\end{align}
	From this we gather:
	\begin{align}
		q_w = & \cos \frac{\beta}{2} \cos \frac{\alpha + \gamma}{2} &
		q_x =& \cos \frac{\beta}{2} \sin \frac{\alpha + \gamma}{2} \\
		q_y = & \sin \frac{\beta}{2} \sin \frac{\gamma - \alpha}{2} &
		q_z =&\sin \frac{\beta}{2} \cos \frac{\gamma - \alpha}{2}
	\end{align}
	And finally use these to determine values of $\alpha$, $\beta$ and $\gamma:$
	\begin{itemize}
		\item $q_w^2 + q_x^2 = \cos^2 \frac{\beta}{2}$ determines up to two different possibilities of $\beta \in [0, 2\pi)$.
		      We will enforce $\beta \in [0, \pi]$ to make this unique and $\cos \frac{\beta}{2}$ non-negative.
		\item If $\beta = 0$ then set $\gamma = 0$, use $q_w$ and $q_x$ to determine $\alpha$
		\item Likewise if $\beta = \pi$ set $\gamma = 0$, use $q_y$ and $q_z$ to determine $\alpha$
		\item Otherwise determine $\alpha + \gamma / 2$ from $q_w$ and $q_x$,
		      and $\alpha - \gamma / 2$ from $q_y$ and $q_z$; their sum and difference give $\gamma$ and $\alpha$ respectively.
	\end{itemize}

	The choices we made in this proof we justify by noting that we can represent these choices
	by certain applications of the spider rule (in the case $\beta = 0$) and $\pi$-commutativity rules
	(relating $(\alpha, \beta,\gamma) \sim (\alpha + \pi, -\beta, \gamma + \pi)$) in ZX.
\end{proof}

\begin{lemma} \label{lemRealNumberCanonicalForm}
	Any non-zero complex number $c$ can be expressed uniquely as $\left(\sqrt{2}\right)^n e^{i\alpha} \cos \beta$ where
	$n \in \bbN$, $\alpha \in [0,2\pi)$, $\beta \in [0, \pi)$
	and where $n$ is chosen to be the least $n$ such that $\sqrt{2}^n \geq \abs{c}$.
\end{lemma}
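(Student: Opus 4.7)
The plan is to build the triple $(n, \alpha, \beta)$ from the polar form of $c$ and then verify uniqueness. Writing $c \neq 0$ in polar form, I would first set $\alpha := \arg c$ with the convention $\arg c \in [0, 2\pi)$; since $c \neq 0$ this determines $\alpha$ uniquely. Since the sequence $(\sqrt{2})^k$ is strictly increasing and unbounded in $k \in \bbN$, the set $\{k \in \bbN : (\sqrt{2})^k \geq |c|\}$ is non-empty and bounded below, and so admits a least element $n$; this fixes $n$.

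Next I would analyse the ratio $r := |c|/(\sqrt{2})^n$. Minimality of $n$ forces $(\sqrt{2})^{n-1} < |c|$ when $n \geq 1$, giving $r \in (1/\sqrt{2}, 1]$; for $n = 0$ we just have $r \in (0, 1]$. In either case $r \in (0, 1]$. Since $\cos$ restricts to a strictly decreasing continuous bijection $[0, \pi/2) \to (0, 1]$, setting $\beta := \arccos r$ uniquely picks out a $\beta \in [0, \pi/2) \subseteq [0, \pi)$. Multiplying everything out yields $(\sqrt{2})^n e^{i\alpha} \cos\beta = (\sqrt{2})^n \cdot e^{i\arg c} \cdot r = |c| e^{i \arg c} = c$, which settles existence.

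For uniqueness, I would take any triple $(n, \alpha, \beta)$ satisfying the stated constraints. The minimality clause pins down $n$ immediately as a function of $|c|$. Taking moduli on both sides of $c = (\sqrt{2})^n e^{i\alpha} \cos\beta$ gives $|c| = (\sqrt{2})^n |\cos\beta|$ and hence $|\cos\beta| = r$; taking arguments then pins down $e^{i\alpha}$ once the sign of $\cos\beta$ is fixed. With the natural convention $\cos\beta \geq 0$ one lands in $[0, \pi/2)$ where $\cos$ is injective, yielding a unique $\beta$ and then $\alpha = \arg c \in [0, 2\pi)$.

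The main subtlety is the uniqueness of $\beta$: the stated range $\beta \in [0, \pi)$ allows $\cos\beta < 0$ (when $\beta \in (\pi/2, \pi)$), so in principle one could build a second triple by flipping $\cos\beta \mapsto -\cos\beta$ and shifting $\alpha$ by $\pi \pmod{2\pi}$. The argument therefore hinges on observing that the ``canonical'' representation is the one with $\cos\beta \geq 0$, and that minimality of $n$ forces $r \in (1/\sqrt{2}, 1]$ for $n \geq 1$ (and $r \in (0, 1]$ for $n = 0$), both of which sit inside the image of $\cos$ on $[0, \pi/2)$; this is the only step that requires care, and it is precisely the point at which the choice of $n$ as the \emph{least} valid exponent is essential.
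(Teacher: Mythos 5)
Your construction is essentially the paper's: write $c = re^{i\alpha}$ with $\alpha = \arg c \in [0,2\pi)$, take the least $n$ with $\sqrt{2}^n \geq r$, and set $\beta = \arccos\bigl(r/\sqrt{2}^n\bigr)$; the existence half matches the paper's proof step for step. Where you go further is the uniqueness discussion, and your diagnosis there is accurate: with $\beta$ allowed to range over all of $[0,\pi)$, the flipped triple $(\alpha+\pi \bmod 2\pi,\ \pi-\beta)$ satisfies every stated constraint whenever $0 < r/\sqrt{2}^n < 1$, so the literal uniqueness claim only holds under the additional convention $\cos\beta \geq 0$ (equivalently $\beta \in [0,\pi/2]$). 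The paper's proof quietly builds in exactly that convention by asserting "there is a unique $\beta \in [0,\pi)$ such that $\cos\beta\,\sqrt{2}^n = r$" — i.e.\ it fixes the sign of $\cos\beta$ by equating it with the nonnegative modulus — and never addresses the flipped representation, so your observation is a genuine point about the statement rather than a defect of your argument. One small correction: minimality of $n$ is not what places the ratio in the image of $\cos$ on $[0,\pi/2)$ — any admissible $n$ gives a ratio in $(0,1]$, which already lies in that image; the minimality clause is only there to make $n$ itself well defined as a function of $\abs{c}$.
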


\begin{proof} \label{prfLemRealNumberCanonicalForm}
	Express the complex number $c$ as $re^{i\alpha}$, where $r \in \bbR_{\geq 0}$.
	This matches our choice of $\alpha \in [0, 2\pi)$.
	For all $r$ there is at least one $n$ where $\sqrt{2}^n \geq r$ and so we can find a least such $n$.
	Once we know $n$ there is a unique $\beta \in [0,\pi)$ such that $\cos{\beta} \sqrt{2}^n = r$.
\end{proof}

If $c$ is zero then we set $\alpha$ and $n$ to 0 and $\beta$ to $\pi/2$.

\subsection{Proving the translated ZX rules} \label{secZXZQTranslatedRules}

We aim to show that the rules translated from ZX are all derivable by the rules in \S\ref{secZQRules}, which we will refer to as $\ZQ$.
We will use the ZX ruleset from \cite[Figure~2]{VilmartZX},
and refer to individual ZX rules as $\ZX_{\text{rule name}}$.
To save space, we will assume applications of the $M$ rule (scalar multiplication) in the statements of the propositions.

\begin{lemma} \label{lemZQTranslationZ} Translation of the Z spider
	\begin{align}
		ZQ \entails F_Q\left(\spider{gn}{}\right) = \spider{smallZ}{}
	\end{align}
\end{lemma}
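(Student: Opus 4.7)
The plan is to simply evaluate the translation $F_Q$ at $\alpha = 0$ and then apply the trivial identity rules of ZQ to eliminate the resulting unit components. Concretely, by the definition of $F_Q$ on the green spider in Figure~\ref{figZQZXTranslation}, we have
\begin{align}
F_Q\!\left(\spider{gn}{}\right) \;=\; F_Q\!\left(\spider{gn}{0}\right) \;=\; \tikzfig{ZQ/q_spider_expand}\bigr|_{\alpha = 0}\;\lambda_{e^{i\cdot 0/2}}.
\end{align}

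First I would deal with the scalar: since $e^{i\cdot 0/2} = 1$, the dangling scalar is $\lambda_1$, which vanishes by rule $(I_1)$. Next I would handle the quaternion decoration(s) appearing in the expanded spider diagram. The expansion is designed so that the phase $\alpha$ on the green ZX spider is carried entirely by a single $Q$-node of the form $Q_{(\alpha, z)}$ sitting on one of the legs of a phase-free Z spider (together with the overall scalar $\lambda_{e^{i\alpha/2}}$ compensating for the global phase introduced by passing from $\mathrm{SU}(2)$ to the Bloch-sphere rotation). At $\alpha = 0$ this $Q$-node becomes $Q_1$, and rule $(I_q)$ removes it outright, replacing it with a plain wire, which then gets absorbed into the Z spider by rule $(S)$ (or equivalently by rule $(I_z)$, collapsing the wire into the spider).

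After these reductions, what is left is precisely the bare generator $\spider{smallZ}{}$, as required. Schematically:
\begin{align}
F_Q\!\left(\spider{gn}{}\right) \;\by{I_1}\; \tikzfig{ZQ/q_spider_expand}\bigr|_{\alpha=0} \;\by{I_q}\; \text{(Z spider with extra plain wire)}\;\by{S}\; \spider{smallZ}{}.
\end{align}

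There is no genuine obstacle here: both the scalar component and the quaternion decoration are explicitly built to be trivial when $\alpha = 0$, and ZQ has dedicated rules ($I_1$, $I_q$, $I_z$, $S$) for erasing each of these trivial pieces. The only thing to be careful about is bookkeeping — making sure that \emph{every} instance of $\alpha$ in the expanded diagram (both in the scalar and in any $Q$-nodes arising from the decomposition used to define $F_Q$) is set to zero before simplification, so that no phase data is silently discarded.
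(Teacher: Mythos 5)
Your proposal is correct and follows essentially the same route as the paper: instantiate the translation at phase $0$, erase the resulting $Q_1$ node with $(I_q)$ and the trivial scalar with $(I_1)$, and fuse the remaining phase-free Z spiders with $(S)$. The paper's own proof is exactly this two-step rewrite ($I_q$ then $S$), so the only difference is your explicit bookkeeping of the $\lambda_1$ scalar, which the paper leaves implicit.
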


\begin{proof} \label{prfLemZQTranslationZ}
	\begin{align}
		LHS & = {\vc{\InputIfFileExists{./figures/ZQ/q_spider_0.tikz}{}{Missing file!}}} \by{I_q} {\vc{\InputIfFileExists{./figures/ZQ/q_spider_1.tikz}{}{Missing file!}}} \by{S} \spider{smallZ}{}
	\end{align}
\end{proof}

\begin{proposition} \label{propZQTranslationS} Translation of the Z spider rule
	\begin{align}
		ZQ     & \entails F_Q\left(ZX_S\right)                        \\
		\ie ZQ & \entails {\vc{\InputIfFileExists{./figures/ZQ/q_ts1.tikz}{}{Missing file!}}} = {\vc{\InputIfFileExists{./figures/ZQ/q_ts5.tikz}{}{Missing file!}}}
	\end{align}
	(The diagonal dots represent at least one wire between the Z spiders)
\end{proposition}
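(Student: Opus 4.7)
The plan is to unfold both sides using the definition of $F_Q$ on the green spider, and then reduce the left-hand side to the right-hand side using the ZQ rules. Recall that $F_Q(\spider{gn}{\alpha})$ is the ``expanded spider'' $\tikzfig{ZQ/q_spider_expand}$ — a bare Z spider equipped with a Z-axis $Q$-node on one wire — multiplied by the scalar $\lambda_{e^{i\alpha/2}}$ (this is the design forced by the interpretation, and is consistent with Lemma~\ref{lemZQTranslationZ} where the $\alpha = 0$ case reduces to a bare Z spider via $I_q$ and $S$). Thus the translated LHS consists of two Z spiders linked by one or more wires, each decorated with one Z-axis $Q$-node on some external leg, together with the two scalars $\lambda_{e^{i\alpha_1/2}}$ and $\lambda_{e^{i\alpha_2/2}}$.

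First I would apply rule $(S)$ once to fuse the two Z spiders (together with their connecting wires) into a single Z spider. This leaves one Z spider with two Z-axis $Q$-nodes now attached to two of its external wires, plus the two scalars. Next I would use rule $(P)$ — possibly iterated, in order to propagate a Z-axis rotation through the spider from one leg to another — to relocate both $Q$-nodes onto the same wire. Then rule $(Q)$ multiplies the two consecutive $Q$-nodes into a single $Q$-node with angle $\alpha_1 + \alpha_2$ (here I use that unit quaternions along the $z$-axis commute and satisfy $(\alpha_1, z)\cdot(\alpha_2, z) = (\alpha_1 + \alpha_2, z)$), and rule $(M)$ combines the two scalars into $\lambda_{e^{i(\alpha_1 + \alpha_2)/2}}$. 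The resulting diagram is exactly $F_Q(\spider{gn}{\alpha_1 + \alpha_2})$, the right-hand side.

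The main obstacle will be the case analysis around where the $Q$-nodes sit in the expanded-spider template. If the designated ``phase leg'' in $\tikzfig{ZQ/q_spider_expand}$ happens to land on one of the internal wires connecting the two spiders, I would first slide the $Q$-node onto a different leg using $(P)$ before the fusion step $(S)$ can proceed cleanly. Relatedly, the claim that $(P)$ propagates a Z-axis rotation freely around an arbitrary Z spider's legs needs a short auxiliary argument: one decomposes a multi-legged Z spider into $2 \to 1$ fragments using $(S)$ and applies $(P)$ at each stage. Beyond this routing issue, scalar bookkeeping is a one-shot use of $(M)$, and all of the internal wires collapse together in the single application of $(S)$, so neither complicates the overall structure of the proof.
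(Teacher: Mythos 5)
Your proposal is correct and takes essentially the same route as the paper: the paper's proof is exactly a sequence of spider fusions $(S)$, slides of the $z$-axis $Q$-nodes across spiders via $(P)$ (together with $(Y)$ to reorient them, which is harmless here since $\tilde q = q$ for $z$-rotations), a quaternion multiplication $(Q)$, and implicit scalar bookkeeping with $(M)$. The only cosmetic difference is that the paper interleaves the fusion and sliding steps rather than performing one big fusion first, so there is no gap.
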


\begin{proof} \label{prfPropZQTranslationS}
	\begin{align}
		{\vc{\InputIfFileExists{./figures/ZQ/q_ts1.tikz}{}{Missing file!}}}
		  & \by{S} {\vc{\InputIfFileExists{./figures/ZQ/q_ts2.tikz}{}{Missing file!}}}    \\
		\by{P, Y} {\vc{\InputIfFileExists{./figures/ZQ/q_ts3.tikz}{}{Missing file!}}}
		  & \by{Q, S} {\vc{\InputIfFileExists{./figures/ZQ/q_ts4.tikz}{}{Missing file!}}} \\
		\by{P, Q, S} {\vc{\InputIfFileExists{./figures/ZQ/q_ts5.tikz}{}{Missing file!}}}
	\end{align}
\end{proof}

\begin{proposition} \label{propZQTranslationIg} Translation of the Z spider identity
	\begin{align}
		ZQ     & \entails  F_Q\left(ZX_{I_g}\right)                                                \\
		\ie ZQ & \entails \node{smallZ}{} \comp \node{qn}{1} \comp \node{smallZ}{} = \node{none}{}
	\end{align}
\end{proposition}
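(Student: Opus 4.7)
The plan is to chain together the three identity-style rules $I_q$, $I_z$, and the Z-spider fusion rule $S$, all of which are already in hand from Figure~\ref{figZQRules}. Reading the statement carefully, the left-hand side is a composition of two phase-free $1 \to 1$ Z-spiders with a $Q_1$ node between them, and we want to show this reduces to the bare wire $\node{none}{}$.

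First I would apply the rule $I_q$, which directly rewrites $\node{qn}{1}$ to $\node{none}{}$. This eliminates the quaternion node in the middle, leaving $\node{smallZ}{} \comp \node{smallZ}{}$, a vertical composition of two $1 \to 1$ Z-spiders (both phase-free, since $Q_q$ carries the rotation data and here both spiders were introduced bare). Next I would apply the spider fusion rule $S$ to merge these two adjacent Z-spiders into a single $1 \to 1$ Z-spider $\node{smallZ}{}$; this is a valid instance of $S$ since the two spiders are connected by one wire and have matching (empty) phase data, and the $k \geq 1$ hypothesis in Proposition~\ref{propZQSSound} is satisfied. Finally, applying $I_z$ rewrites the resulting $1 \to 1$ Z-spider to $\node{none}{}$, giving the desired right-hand side.

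In symbols, the derivation reads
\begin{align}
\node{smallZ}{} \comp \node{qn}{1} \comp \node{smallZ}{} \by{I_q} \node{smallZ}{} \comp \node{smallZ}{} \by{S} \node{smallZ}{} \by{I_z} \node{none}{}
\end{align}
which establishes $ZQ \entails F_Q(ZX_{I_g})$. There is no real obstacle here — the proof is essentially a warm-up showing that the basic identity data of ZX is faithfully captured by the ZQ identities $I_q$ and $I_z$ together with spider fusion. The only subtle point worth verifying is that each Z-spider on the left-hand side is indeed of arity $1 \to 1$ and carries no phase, so that both the $S$ application and the $I_z$ application are strictly instances of the stated rules; this is immediate from the form of $F_Q$ on a phase-free green spider of arity $1 \to 1$.
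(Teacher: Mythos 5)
Your derivation is correct and is essentially the paper's argument: the same identity rules do all the work. The paper is just one step shorter — it applies $I_z$ first to strip both Z-spiders off the wire, leaving $\node{qn}{1}$, and then $I_q$, so your intermediate spider-fusion step via $S$ is valid but not needed.
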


\begin{proof} \label{prfPropZQTranslationIg}
	\begin{align}
		\node{smallZ}{} \comp \node{qn}{1} \comp \node{smallZ}{} \by{I_z} \node{qn}{1}
		  & \by{I_q} \node{none}{}
	\end{align}
\end{proof}

\begin{proposition} \label{propZQTranslationIr} Translation of the X spider identity
	\begin{align}
		ZQ     & \entails F_Q\left(ZX_{I_r}\right)                                                                                                            \\
		\ie ZQ & \entails  \node{qn}{H} \comp \node{smallZ}{} \comp \node{qn}{1} \comp \node{smallZ}{} \comp \node{qn}{H} \lambda_i \lambda_i = \node{none}{}
	\end{align}
\end{proposition}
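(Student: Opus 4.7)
The plan is to reduce the translated diagram by first collapsing the inner $Z$-structure using the previous proposition, then fusing the two outer Hadamard quaternions into a single scalar via the quaternion rule, and finally cleaning up the resulting scalars.

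First I would apply Proposition~\ref{propZQTranslationIg} directly to the middle factor $\node{smallZ}{} \comp \node{qn}{1} \comp \node{smallZ}{}$, rewriting it as a bare wire. This leaves the diagram $\node{qn}{H} \comp \node{qn}{H} \lambda_i \lambda_i$. Next I would apply rule $Q$ to merge the two Hadamard quaternions into $\node{qn}{H \times H}$. The main substantive step is then the purely algebraic computation of $H \times H$ in the quaternions: writing $H = (\pi,\tfrac{1}{\sqrt{2}}(x+z)) = \tfrac{1}{\sqrt{2}}(i+k)$, one computes
\begin{align}
H \times H \;=\; \tfrac{1}{2}(i+k)(i+k) \;=\; \tfrac{1}{2}(i^2 + ik + ki + k^2) \;=\; \tfrac{1}{2}(-1 + 0 - 1) \;=\; -1,
\end{align}
using $ik = -j$ and $ki = j$. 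Thus the $Q$-rule application yields $\node{qn}{-1} \lambda_i \lambda_i$.

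To finish, I would apply rule $N$ in reverse to rewrite $\node{qn}{-1}$ as $\lambda_{-1} \node{qn}{1}$, then use $I_q$ to remove the $\node{qn}{1}$, leaving just the scalars $\lambda_{-1} \lambda_i \lambda_i$. Rule $M$ combines these as $\lambda_{(-1)(i)(i)} = \lambda_1$, which by $I_1$ is the empty diagram. The result is therefore $\node{none}{}$, as required.

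The only genuinely non-mechanical step is the quaternion computation $H \times H = -1$, and even that is routine; the rest is a straightforward chain of rule applications ($Q$, $N$, $I_q$, $M$, $I_1$) combined with the previously established translation of $ZX_{I_g}$. I do not anticipate any obstacle beyond keeping track of the scalar bookkeeping introduced by the two factors of $\lambda_i$ arising from the translation of the X spider.
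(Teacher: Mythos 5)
Your proposal is correct and follows essentially the same route as the paper: collapse the inner $Z$-structure (the paper re-applies $I_z$ and $I_q$ directly, which is exactly the content of Proposition~\ref{propZQTranslationIg} that you invoke), use rule $Q$ with the quaternion computation $H \times H = -1$, then $N$, $I_q$, $M$, and $I_1$ for the scalar bookkeeping. The quaternion arithmetic you spell out is the same implicit computation the paper relies on, so there is nothing substantive to distinguish the two proofs.
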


\begin{proof} \label{prfPropZQTranslationIr}
	\begin{align}
		LHS =    & \node{qn}{H} \comp \node{smallZ}{} \comp \node{qn}{1} \comp \node{smallZ}{} \comp \node{qn}{H} \lambda_i \lambda_i \\
		\by{I_z} & \node{qn}{H} \comp \node{qn}{1} \comp \node{qn}{H} \lambda_i \lambda_i                                             \\
		\by{I_q} & \node{qn}{H} \comp \node{qn}{H} \lambda_i \lambda_i
		\by{Q} \node{qn}{-1} \lambda_i \lambda_i \\
		\by{N}   & \node{qn}{1} \lambda_{-1} \lambda_i \lambda_i
		\by{I_q}  \node{none}{} \lambda_{-1} \lambda_i \lambda_i
		\by{M}  \node{none}{} \lambda_1
		\by{I_1}  \node{none}{}
	\end{align}
\end{proof}

We introduce our first three intermediate lemmas, corresponding to properties of the following three ZX diagrams:

\begin{align}
	\rg{\alpha}{\beta} \label{eqnSmallRG} , \qquad \vc{\InputIfFileExists{./figures/ZX/HH.tikz}{}{Missing file!}}, \qquad \vc{\InputIfFileExists{./figures/ZX/HHH.tikz}{}{Missing file!}}
\end{align}

\begin{lemma} \label{lemZQAlphaHBeta} Interaction of a Z state and Z effect joined by a Hadamard
	\begin{align}
		ZQ & \entails  {\vc{\InputIfFileExists{./figures/ZQ/q_alpha_H_beta.tikz}{}{Missing file!}}} & = \lambda_{-\sqrt{2}\left(\left(\sin\frac{\alpha+\beta}{2}\right) + i \cos\frac{\alpha-\beta}{2}\right)}
	\end{align}
\end{lemma}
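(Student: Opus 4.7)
My strategy is to reduce the diagram to the left-hand side of rule $(A)$, apply $(A)$ to extract a scalar $\lambda_{2(q_w - iq_x)}$, and then verify by quaternion arithmetic that the resulting $q_w - iq_x$ matches the claimed expression. The key observation is that the diagram $\tikzfig{ZQ/q_alpha_H_beta}$ is a closed $0\to 0$ scalar whose wire carries three $Q$-decorations in series: a Z-rotation $Q_{R_\alpha^z}$, the Hadamard edge $Q_H$, and another Z-rotation $Q_{R_\beta^z}$, where $R_\theta^z := \cos(\theta/2) + k\sin(\theta/2)$ and $H = \frac{1}{\sqrt{2}}(i+k)$ as in Definition~\ref{defQuaternionH}.

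First I will apply rule $(Q)$ twice to fuse $Q_{R_\alpha^z}$, $Q_H$, $Q_{R_\beta^z}$ into a single $Q_q$ node, where $q = R_\alpha^z \cdot H \cdot R_\beta^z$ in $\UQuat$. If needed I can invoke rule $(S)$ beforehand to rearrange the two Z spiders into the loop shape of rule $(A)$'s left-hand side. After these rewrites the diagram is exactly an instance of the LHS of $(A)$, and one more rewrite step produces the scalar $\lambda_{2(q_w - iq_x)}$. All of this is purely a matter of identifying the correct rule instances; no side conditions are involved.

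What remains is the quaternion computation, and this is where the real work lies. Expanding $R_\alpha^z \cdot H$ using the Hamilton relations ($ki = j$, $k^2 = -1$) gives
\begin{align*}
R_\alpha^z \cdot H \;=\; \tfrac{1}{\sqrt{2}}\bigl(-\sin\tfrac{\alpha}{2} + i\cos\tfrac{\alpha}{2} + j\sin\tfrac{\alpha}{2} + k\cos\tfrac{\alpha}{2}\bigr),
\end{align*}
and then right-multiplying by $R_\beta^z = \cos\tfrac{\beta}{2} + k\sin\tfrac{\beta}{2}$ yields, after collecting the $1$- and $i$-components and applying the sum-to-product identities,
\begin{align*}
q_w \;=\; -\tfrac{1}{\sqrt{2}}\sin\tfrac{\alpha+\beta}{2}, \qquad q_x \;=\; \tfrac{1}{\sqrt{2}}\cos\tfrac{\alpha-\beta}{2}.
\end{align*}
Substituting into $\lambda_{2(q_w - iq_x)}$ produces the claimed scalar $\lambda_{-\sqrt{2}\bigl(\sin\frac{\alpha+\beta}{2} + i\cos\frac{\alpha-\beta}{2}\bigr)}$. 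I expect the main obstacle to be just this sign- and index-tracking through the quaternion product; there is no conceptual difficulty, but the computation must be done carefully because $\UQuat$ is non-commutative and the answer depends sensitively on where each $k$ and $i$ sits. The reduction steps in ZQ are essentially forced once the target rule $(A)$ is identified, so the entire proof reduces to ``fuse, apply $(A)$, compute.''
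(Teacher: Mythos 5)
Your proposal is correct and follows essentially the same route as the paper: clean up the intermediate Z spiders (the paper uses $I_z$ where you suggest $S$, a cosmetic difference), fuse the three edge decorations with rule $(Q)$ into a single quaternion $q = (\alpha,z)\times H\times(\beta,z)$, apply rule $(A)$ to obtain $\lambda_{2(q_w - iq_x)}$, and verify by direct quaternion multiplication that $q_w = -\tfrac{1}{\sqrt{2}}\sin\tfrac{\alpha+\beta}{2}$ and $q_x = \tfrac{1}{\sqrt{2}}\cos\tfrac{\alpha-\beta}{2}$, exactly as in the paper's calculation. Your arithmetic (including the sign from $ki = j$ and the sum-to-product collection) checks out against the paper's stated product.
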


\begin{proof} \label{prfLemZQAlphaHBeta}
	\begin{align}
		{\vc{\InputIfFileExists{./figures/ZQ/q_alpha_H_beta2.tikz}{}{Missing file!}}}
		              & \by{I_z}  {\vc{\begin{tikzpicture}
	\begin{pgfonlayer}{nodelayer}
		\node [style=qn] (7) at (0, 0) {$H$};
		\node [style=smallZ] (9) at (0, 1.5) {};
		\node [style=smallZ] (11) at (0, -1.5) {};
		\node [style=qn] (12) at (0, 1) {$\alpha,z$};
		\node [style=qn] (13) at (0, -1) {$\beta,z$};
	\end{pgfonlayer}
	\begin{pgfonlayer}{edgelayer}
		\draw (9.center) to (11.center);
	\end{pgfonlayer}
\end{tikzpicture}
}}
		\by{Q} {\vc{\begin{tikzpicture}
	\begin{pgfonlayer}{nodelayer}
		\node [style=qn] (7) at (0, 0) {$(\alpha,z) \times H \times (\beta, z)$};
		\node [style=smallZ] (9) at (0, 0.75) {};
		\node [style=smallZ] (11) at (0, -0.75) {};
	\end{pgfonlayer}
	\begin{pgfonlayer}{edgelayer}
		\draw (9.center) to (11.center);
	\end{pgfonlayer}
\end{tikzpicture}
}} \by{A} \lambda_{-\sqrt{2}((\sin\frac{\alpha+\beta}{2}) + i \cos\frac{\alpha-\beta}{2})} \\[\rowgap]
		\text{Since } & (\cos \frac{\alpha}{2} + k \sin \frac{\alpha}{2}) \times H \times (\cos \frac{\beta}{2} + k \sin \frac{\beta}{2}) =\nonumber \\
		              & \frac{1}{\sqrt{2}}(
		-\sin\frac{\alpha+\beta}{2} +
		i(\cos\frac{\alpha-\beta}{2}) +
		j(\sin\frac{\alpha-\beta}{2}) +
		k(\cos\frac{\alpha+\beta}{2}))
	\end{align}

\end{proof}

\begin{lemma} \label{propZQHH} Interaction of two Hadamard rotations
	\begin{align}
		ZQ \entails & \node{qn}{H} \comp \node{qn}{H} = \lambda_{-1} \node{none}{}
	\end{align}
\end{lemma}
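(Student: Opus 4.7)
The plan is to apply rule $Q$ to fuse the two Hadamard quaternion nodes into a single Q-node labelled by the quaternion product $H \times H$, then to simplify that product to the scalar $-1$ and peel off the scalar using rules $N$ and $I_q$.

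First I would compute $H \times H$ in the angle-vector representation. Recalling from Definition~\ref{defQuaternionH} that $H = (\pi, \tfrac{1}{\sqrt{2}}(x+z))$, so in the four-real form $H = \tfrac{1}{\sqrt{2}}(i+k)$. Then
\begin{align}
H \times H = \tfrac{1}{2}(i+k)^2 = \tfrac{1}{2}(i^2 + ik + ki + k^2) = \tfrac{1}{2}(-1 + 0 - 1) = -1,
\end{align}
using $ik = -j$ and $ki = j$ so that the cross terms cancel. (Geometrically this is unsurprising: $H$ is a rotation by $\pi$ about an axis, and two such rotations yield the identity rotation, which in $\hat Q$ lifts to either $1$ or $-1$; the computation just picks out which representative.)

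Next I would string together three rule applications. Applying $Q$ gives $\node{qn}{H} \comp \node{qn}{H} \by{Q} \node{qn}{-1}$. Applying rule $N$ in reverse, with the quaternion parameter set to $1$, rewrites $\node{qn}{-1}$ as $\lambda_{-1} \node{qn}{1}$. Finally rule $I_q$ erases the trivial quaternion, yielding $\lambda_{-1} \node{none}{}$, which is the desired right-hand side.

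The only thing to be careful about is the quaternion arithmetic itself; there is no nontrivial obstacle, since the rewrite steps are all direct applications of axioms given in Figure~\ref{figZQRules}. If a more polished presentation is wanted, one could even absorb the intermediate computation into a remark that invokes the standard fact $(\alpha,v)^2 = (2\alpha, v)$ for unit axis-angle pairs, which immediately gives $H^2 = (2\pi,\cdot) = -1$ in $\hat Q$.
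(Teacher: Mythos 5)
Your proposal is correct and follows essentially the same route as the paper's proof: fuse the two $H$ nodes with rule $Q$, note that $H \times H = -1$ as a quaternion, then use rule $N$ to extract the scalar $\lambda_{-1}$ and rule $I_q$ to delete the trivial node. The explicit quaternion computation you include (and the axis-angle shortcut) is sound and merely spells out what the paper leaves implicit.
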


\begin{proof} \label{prfPropZQHH}
	\begin{align}
		\node{qn}{H} \comp \node{qn}{H} \by{Q} \node{qn}{H \times H} = \node{qn}{-1} \by{N} \node{qn}{1} \lambda_{-1} \by{I_q} \node{none}{} \lambda{-1}
	\end{align}
\end{proof}

\begin{lemma} \label{propZQHHH} The value of the scalar describing three Hadamard rotations in parallel
	\begin{align}
		ZQ \entails {\vc{\InputIfFileExists{./figures/ZQ/q_HHH.tikz}{}{Missing file!}}} & = \lambda_{\frac{i}{\sqrt{2}}}
	\end{align}
\end{lemma}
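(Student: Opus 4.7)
The plan is to rewrite the three-Hadamard scalar diagram $\tikzfig{ZQ/q_HHH}$ into a form where the $CP$ rule directly applies, extracting the scalar $\lambda_{\frac{i}{\sqrt{2}}}$. Since $CP$ is the only rule in the ruleset of Figure~\ref{figZQRules} that produces exactly this scalar, and it is linear in the scalar $\frac{i}{\sqrt{2}}$, it must be the source of the factor we need. The remaining work is to manipulate the diagram so that $CP$ matches, and to verify that whatever is left over after applying it reduces to the empty diagram.

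First I would use the spider rule $S$ to re-arrange the Z-spiders that mediate the three parallel Hadamards, splitting and refusing them so that two of the three Hadamard edges share a common Z-spider in precisely the shape of the left-hand side $\tikzfig{ZQ/q_cp_lhs}$ of rule $CP$. Applying $CP$ then replaces those two Hadamards with the right-hand side $\tikzfig{ZQ/q_cp_rhs}$ and pulls out the desired scalar $\lambda_{\frac{i}{\sqrt{2}}}$. What remains in the diagram is a scalar consisting of the third Hadamard edge together with the reduced side of $CP$, which after another application of $S$ to fuse the resulting Z-spiders will be two Hadamard rotations in series on a loop, i.e. a trace of $H \comp H$.

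At that stage I would invoke Lemma~\ref{propZQHH}, which gives $\node{qn}{H} \comp \node{qn}{H} = \lambda_{-1} \node{none}{}$, to collapse the two Hadamards to the scalar $\lambda_{-1}$ on a bare wire forming a loop. The remaining loop is then a Z-spider closed onto itself, which by $S$ and $I_z$ reduces to $\lambda_2$ (the trace of the identity interpreted under the standard Z-spider). Combining the accumulated scalars $\lambda_{\frac{i}{\sqrt{2}}} \cdot \lambda_{-1} \cdot \lambda_2$ via rule $M$ should, after careful bookkeeping of any $\lambda_{-\sqrt{2}i}$ or $\lambda_i$ factors that appear from intermediate uses of $B$ or translated Hadamard edges, simplify to $\lambda_{\frac{i}{\sqrt{2}}}$ as claimed. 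Rules $I_1$ and $M$ absorb and combine the trailing scalars.

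The main obstacle will be keeping track of the scalar factors. Each of the rules $B$, $CP$, and the translated-Hadamard convention $\node{h}{}\mapsto \node{qn}{H}\lambda_i$ introduces non-trivial scalars ($\lambda_{-\sqrt{2}i}$, $\lambda_{\frac{i}{\sqrt{2}}}$, $\lambda_i$ respectively), and several of them must multiply together to give exactly $\frac{i}{\sqrt{2}}$ with no residual factors. The diagrammatic rewriting is essentially bookkeeping; the arithmetic check at the end is where one must be careful, since a single misplaced $\lambda_{-1}$ or $\lambda_i$ will change the answer. If the direct approach via $CP$ is too awkward, an alternative would be to use rule $B$ (whose interpretation was verified in Proposition~\ref{propZQBSound} to involve precisely the Hadamard–bialgebra structure underlying this scalar) to reduce three parallel $H$s to a form where Lemma~\ref{propZQHH} finishes the job; soundness (Theorem~\ref{thmZQSound}) then gives confidence that the final scalar must be $\frac{i}{\sqrt{2}}$ because the interpretation of $\tikzfig{ZQ/q_HHH}$ was already verified to equal $\frac{i}{\sqrt{2}}$ in the soundness of $CP$.
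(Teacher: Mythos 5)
Your sketch is not a derivation yet, and the gaps are not just "bookkeeping". First, your claimed residual after extracting $\lambda_{\frac{i}{\sqrt{2}}}$ via (CP) — "two Hadamard rotations in series on a loop", i.e. a trace of $H \comp H$ — is semantically inconsistent with the target: that residual evaluates to $-2$ (by Lemma~\ref{propZQHH} and the value of a circle), so your accumulated scalar would be $\lambda_{\frac{i}{\sqrt{2}}}\lambda_{-1}\lambda_{2} = \lambda_{-\sqrt{2}i}$, off from $\lambda_{\frac{i}{\sqrt{2}}}$ by a factor of $-\tfrac{1}{2}$. You suggest this will be absorbed by stray $\lambda_{-\sqrt{2}i}$ or $\lambda_i$ factors "from intermediate uses of $B$ or translated Hadamard edges", but your derivation uses neither: the diagram is a native ZQ diagram with quaternion-labelled edges, so no $\lambda_i$ translation convention is in play, and you never invoke $(B)$. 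Second, your step "the remaining loop \dots by $S$ and $I_z$ reduces to $\lambda_2$" is not licensed by the ruleset of Figure~\ref{figZQRules}: there is no rule evaluating a bare circle, and $(S)$/$(I_z)$ alone do not produce a scalar. Closed scalar subdiagrams in ZQ are evaluated with rule $(A)$, which gives $\lambda_{2(q_w - iq_x)}$ for a $Q_q$ node capped by Z states/effects; this is exactly how the paper finishes (two applications yielding $\lambda_{-i\sqrt{2}}\lambda_{-i\sqrt{2}}$), and it is missing from your plan. Third, the fallback in your last sentence — using soundness (Theorem~\ref{thmZQSound}) and the already-verified interpretation to "conclude" the scalar — does not establish $ZQ \entails$ anything; the lemma asks for a syntactic derivation, and inferring it from semantic equality would presuppose completeness, which is what these lemmas are building towards.

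For comparison, the paper's derivation does pass through $(S)$, Lemma~\ref{propZQHH} and $(CP)$ as you anticipate, but it interleaves them with $(Y)$, $(M)$, the bialgebra rule $(B)$, $(I_z)$, and finally $(A)$, with the scalars closing as $\lambda_{-i/2\sqrt{2}}\lambda_{-i\sqrt{2}}\lambda_{-i\sqrt{2}} = \lambda_{\frac{i}{\sqrt{2}}}$. To repair your proposal you would need to bring in $(B)$ to restructure the three parallel $H$-edges and $(A)$ to evaluate the resulting closed pieces, and then redo the scalar arithmetic; as written the chain of rewrites does not exist and the final value does not come out.
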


\begin{proof} \label{prfPropZQHHH}
	\begin{align}
		{\vc{\InputIfFileExists{./figures/ZQ/q_HHH.tikz}{}{Missing file!}}}
		\by{S,\,\ref{propZQHH}} & \lambda_{-1} {\vc{\InputIfFileExists{./figures/ZQ/q_HHH2.tikz}{}{Missing file!}}}          & \by{Y,S} \lambda_{-1}\ {\vc{\InputIfFileExists{./figures/ZQ/q_HHH3.tikz}{}{Missing file!}}}                             \\
		\by{M, B}               & \lambda_{-i/\sqrt{2}}{\vc{\InputIfFileExists{./figures/ZQ/q_HHH4.tikz}{}{Missing file!}}}  & \by{Y,S,I_z} \lambda_{-i/\sqrt{2}}\ {\vc{\InputIfFileExists{./figures/ZQ/q_HHH5.tikz}{}{Missing file!}}}                \\
		\by{CP, Y}              & \lambda_{i/2\sqrt{2}}{\vc{\InputIfFileExists{./figures/ZQ/q_HHH6.tikz}{}{Missing file!}}}  & \by{I_z}  \lambda_{i/2\sqrt{2}}\ {\vc{\InputIfFileExists{./figures/ZQ/q_HHH7.tikz}{}{Missing file!}}}                   \\
		\by{M,\ref{propZQHH}}   & \lambda_{-i/2\sqrt{2}}{\vc{\InputIfFileExists{./figures/ZQ/q_HHH8.tikz}{}{Missing file!}}} & \by{A}  \lambda_{-i/2\sqrt{2}}\ \lambda_{-i\sqrt{2}}\lambda_{-i\sqrt{2}} \\
		\by{M}& \lambda_{\frac{i}{\sqrt{2}}}
	\end{align}
\end{proof}

\begin{proposition} \label{propZQTranslationIV} Translation of the IV rule
	\begin{align}
		ZQ \entails     & F_Q\left(ZX_{IV}\right)          \\
		\ie ZQ \entails & \lambda_{e^{i \frac{\alpha}{2}}}
		{\vc{\InputIfFileExists{./figures/ZQ/q_IV_1.tikz}{}{Missing file!}}}= \epsilon
	\end{align}
\end{proposition}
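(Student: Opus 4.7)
The plan is to reduce the left-hand side of the translated IV rule to the empty diagram $\epsilon$ by systematically applying the rules of $\ZQ$, working outward from the translated core of $F_Q(\spider{gn}{\alpha})$. First I would unpack the translation: by definition $F_Q\left(\spider{gn}{\alpha}\right)$ produces a Z-spider configuration decorated with a $Q$-node carrying the quaternion $\cos\tfrac{\alpha}{2} + k\sin\tfrac{\alpha}{2}$, together with the extracted scalar $\lambda_{e^{i\alpha/2}}$. In the scalar IV case (0 inputs and 0 outputs on the original green vertex) the translated diagram ${\tikzfig{ZQ/q_IV_1}}$ will consist of a small Z-spider closed into a loop around a single $Q$ node, so the first genuine step is to identify which of the basic ZQ manipulations (namely $S$, $P$, $Y$ and $I_z$) allows that loop to be peeled off into a trace.

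Next I would apply the rule $A$, which is the one rule of ZQ that explicitly turns a $Q$-node bracketed by the appropriate Z-spider geometry into a scalar $\lambda_{2(q_w - i q_x)}$. For the particular quaternion $\cos\tfrac{\alpha}{2} + k\sin\tfrac{\alpha}{2}$ coming out of the translation, we have $q_w = \cos\tfrac{\alpha}{2}$ and $q_x = 0$, so $A$ yields $\lambda_{2\cos(\alpha/2)}$. Combined with the scalar $\lambda_{e^{i\alpha/2}}$ already present, the $M$ rule gives $\lambda_{e^{i\alpha/2} \cdot 2\cos(\alpha/2)}$, and I expect the remaining scalar adjustments coming from the precise statement of (IV) in Figure~\ref{figVilmartZXRules} (which compensates exactly for this trace factor) to cancel this to $\lambda_1$, whereupon $I_1$ removes it and leaves $\epsilon$.

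The main obstacle is bookkeeping of scalars rather than any deep diagrammatic manoeuvre: the IV rule of Vilmart's ZX carries a non-trivial prefactor designed precisely so that the scalar $\sqrt{2}$ and trigonometric factors produced by the trace match, and I must track every invocation of $M$, $N$, and in particular any uses of Lemma~\ref{propZQHH} or Lemma~\ref{propZQHHH} if the translation introduces Hadamard edges on the closing wire. The intermediate lemma structure established in Lemma~\ref{lemZQAlphaHBeta}--Lemma~\ref{propZQHHH} is the right template: translate, contract the graph skeleton using $S$, $Y$, $I_z$, collapse the $Q$-node loop via $A$, then use $M$ and $I_1$ to absorb the result. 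If the diagram ${\tikzfig{ZQ/q_IV_1}}$ additionally involves Hadamard rotations (which is likely given that IV is usually stated symmetrically in colours), then I would first colour-neutralise using the definition of $F_Q$ on the red spider and the $\node{qn}{H}\node{qn}{H} = \lambda_{-1}\node{none}{}$ identity of Lemma~\ref{propZQHH}, reducing to the pure Z-spider case above. The calculation is essentially routine once the scalar from $A$ and the scalar $\lambda_{e^{i\alpha/2}}$ are written down side by side, so the proof will have roughly the same length and shape as that of Lemma~\ref{propZQHHH}.
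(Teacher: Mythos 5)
Your proposal misidentifies the diagram $F_Q\left(ZX_{IV}\right)$, and the concrete computation you build on that misreading cannot be repaired by scalar bookkeeping. The left-hand side of (IV) in Ref.~\cite{VilmartZX} is not a bare $0$-legged green $\alpha$ vertex: after translation and the initial tidying by $I_z, I_q, S, M$, the diagram consists of a closed component in which the $(\alpha,z)$ rotation is composed \emph{with a Hadamard rotation} between a Z state and a Z effect, together with a second, $\alpha$-independent component equal to the three-parallel-Hadamard scalar. The paper's proof therefore invokes Lemma~\ref{lemZQAlphaHBeta} (with $\beta=0$, yielding $\lambda_{-\sqrt{2}\,i\,e^{-i\alpha/2}}$, whose $\alpha$-dependent phase is exactly what cancels the prefactor $\lambda_{e^{i\alpha/2}}$) and Lemma~\ref{propZQHHH} (yielding $\lambda_{i/\sqrt{2}}$), and then finishes with $M$ and $I_1$, since $e^{i\alpha/2}\cdot\bigl(-\sqrt{2}\,i\,e^{-i\alpha/2}\bigr)\cdot\bigl(i/\sqrt{2}\bigr)=1$.

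Your plan instead applies rule $A$ to the bare quaternion $(\alpha,z)$, obtaining $\lambda_{2\cos(\alpha/2)}$, and hopes that a fixed scalar supplied by the statement of (IV) cancels $e^{i\alpha/2}\cdot 2\cos(\alpha/2)=1+e^{i\alpha}$. No sequence of rule applications can do this: that quantity depends on $\alpha$ and vanishes at $\alpha=\pi$, so the equation you would be deriving is not even sound. The $\alpha$-dependence disappears only because the Hadamard sits inside the closed wire, so that $A$ is applied to the composite quaternion $(\alpha,z)\times H$ (this is precisely the content of Lemma~\ref{lemZQAlphaHBeta}), and because the separate three-Hadamard component contributes the remaining $i/\sqrt{2}$. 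You do gesture at Lemmas~\ref{lemZQAlphaHBeta}--\ref{propZQHHH} as the right template and allow for Hadamard edges appearing, but the central step of your argument as written fails; the fix is to write out $F_Q$ of the actual (IV) diagram, at which point the proof collapses to those two lemmas plus $M$ and $I_1$, exactly as in the paper.
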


\begin{proof} \label{prfPropZQTranslationIV}
	\begin{align}
		\lambda_{e^{i \frac{\alpha}{2}}}
		{\vc{\InputIfFileExists{./figures/ZQ/q_IV_1.tikz}{}{Missing file!}}} \by{I_z, I_q, S, M}   & \lambda_{e^{i \frac{\alpha}{2}}}  {\vc{\InputIfFileExists{./figures/ZQ/q_IV_2.tikz}{}{Missing file!}}}                                             \\
		\by{\ref{lemZQAlphaHBeta}, \ref{propZQHHH}} & \lambda_{e^{i \frac{\alpha}{2}}} \lambda_{-\sqrt{2}ie^{-i \frac{\alpha}{2}}} \lambda_{i / \sqrt{2}} \\
		\by{M}                                      & \lambda_{1} \by{I_1} \epsilon
	\end{align}
\end{proof}

\begin{proposition} \label{propZQTranslationCP} Translation of the CP rule
	\begin{align}
		ZQ \entails     & F_Q\left(ZX_{CP}\right)                                               \\
		\ie ZQ \entails & \lambda_{-1} {\vc{\InputIfFileExists{./figures/ZQ/q_CP1.tikz}{}{Missing file!}}} = {\vc{\InputIfFileExists{./figures/ZQ/q_CP3.tikz}{}{Missing file!}}} \lambda_{-1}
	\end{align}
\end{proposition}

\begin{proof} \label{prfPropZQTranslationCP}
	\begin{align}
		LHS \by{1_z, 1_q} & \lambda_{-1} {\vc{\InputIfFileExists{./figures/ZQ/q_CP2.tikz}{}{Missing file!}}} \by{A} \lambda_{-1} \lambda_{-i\sqrt{2}}{\vc{\InputIfFileExists{./figures/ZQ/q_CP2a.tikz}{}{Missing file!}}}                              \\
		\by{CP}           & \lambda_{-1} \lambda_{-i\sqrt{2}} \lambda_{i / \sqrt{2}} {\vc{\InputIfFileExists{./figures/ZQ/q_CP3.tikz}{}{Missing file!}}} \by{M} \lambda_{-1} {\vc{\InputIfFileExists{./figures/ZQ/q_CP3.tikz}{}{Missing file!}}} = RHS
	\end{align}
\end{proof}

\begin{proposition} \label{propZQTranslationB} Translation of the B rule
	\begin{align}
		ZQ \entails     & F_Q\left(ZX_{B}\right)                                               \\
		\ie ZQ \entails & \lambda_{-i}\ {\vc{\InputIfFileExists{./figures/ZQ/q_b1.tikz}{}{Missing file!}}} = {\vc{\InputIfFileExists{./figures/ZQ/q_b4.tikz}{}{Missing file!}}} \lambda_{-i}
	\end{align}
\end{proposition}

\begin{proof} \label{prfPropZQTranslationB}
	\begin{align}
		LHS \by{I_z, I_q} & \lambda_{-i}{\vc{\InputIfFileExists{./figures/ZQ/q_b2.tikz}{}{Missing file!}}} \by{B} \ \lambda_{-i} \lambda_{i / \sqrt{2}} {\vc{\InputIfFileExists{./figures/ZQ/q_b3.tikz}{}{Missing file!}}} \\
		\by{A, M, \ref{propZQHH}}&\lambda_{-i} \vc{\InputIfFileExists{./figures/ZQ/q_b4.tikz}{}{Missing file!}}
	\end{align}
\end{proof}

\begin{proposition} \label{propZQTranslationH} Translation of the H rule
	\begin{align}
		ZQ \entails     & F_Q\left(H\right)                                                                                                      \\
		\ie ZQ \entails &
		\left(\lambda_i \node{qn}{H}\right)^{\tensor n} \comp
		\left(\lambda_i \node{qn}{H}\right)^{\tensor n} \comp
		{\vc{\InputIfFileExists{./figures/ZQ/q_spider_expand.tikz}{}{Missing file!}}}
		\lambda_{e^{i\alpha/2}} \comp\\
		                & \quad \left(\lambda_i \node{qn}{H}\right)^{\tensor m}  \comp \left(\lambda_i \node{qn}{H}\right)^{\tensor m} \nonumber \\
		                & = {\vc{\InputIfFileExists{./figures/ZQ/q_spider_expand.tikz}{}{Missing file!}}} \lambda_{e^{i\alpha/2}}
	\end{align}
\end{proposition}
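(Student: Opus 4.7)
The plan is to observe that each wire of the LHS carries two successive $\lambda_i \node{qn}{H}$ factors, which collapse to the identity in ZQ. Concretely, on a single wire we have:
\begin{align}
\lambda_i \node{qn}{H} \comp \lambda_i \node{qn}{H}
\by{M} \lambda_{-1}\ \node{qn}{H} \comp \node{qn}{H}
\by{\ref{propZQHH}} \lambda_{-1}\ \lambda_{-1}\ \node{none}{}
\by{M} \lambda_1\ \node{none}{}
\by{I_1} \node{none}{}
\end{align}
so one pair of $\lambda_i \node{qn}{H}$ on a wire rewrites to the bare wire (a phase-free single Z wire, which is just the identity by $I_z$ or simply unused).

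I would first apply this reduction to each of the $m$ input wires independently, eliminating the outer $(\lambda_i \node{qn}{H})^{\tensor m}$ followed by the next $(\lambda_i \node{qn}{H})^{\tensor m}$. Since tensor products commute with composition along distinct wires, the $m$ reductions can be carried out one wire at a time without interference. Exactly the same argument applied to the output side eliminates the two outer $(\lambda_i \node{qn}{H})^{\tensor n}$ layers. What remains is precisely the RHS, namely ${\tikzfig{ZQ/q_spider_expand}}\lambda_{e^{i\alpha/2}}$.

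Since each step is just a bookkeeping application of $M$, Proposition~\ref{propZQHH}, $M$ again, and $I_1$, there is no serious obstacle. The only mildly delicate point is ensuring that the scalars $\lambda_i$ from each wire can be collected and multiplied across the whole diagram using $M$ before the cancellations — this is immediate because scalars are global (they live in the $0 \to 0$ part of the PROP and commute with every diagram), so we may freely gather the $2(m+n)$ copies of $\lambda_i$ into a single scalar $\lambda_{i^{2(m+n)}} = \lambda_{(-1)^{m+n}}$ and, after the $m+n$ applications of Proposition~\ref{propZQHH} yielding a factor $\lambda_{(-1)^{m+n}}$, combine the two via $M$ to obtain $\lambda_1$, which is then erased by $I_1$.
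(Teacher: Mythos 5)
Your proof is correct and follows essentially the same route as the paper: both reduce each pair of adjacent Hadamard edges per wire using Lemma~\ref{propZQHH} and then clean up the resulting scalars with (M) (and implicitly ($I_1$)), with your version merely spelling out the wire-by-wire bookkeeping and the scalar count $i^{2(m+n)}(-1)^{m+n}=1$ that the paper compresses into a single step.
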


\begin{proof} \label{prfPropZQTranslationH}
	\begin{align}
		LHS & \by{\ref{propZQHH}} \left(\lambda_i \lambda_i \lambda_{-1}\right)^{\tensor n}  \left(\lambda_i \lambda_i \lambda_{-1}\right)^{\tensor m}
		{\vc{\InputIfFileExists{./figures/ZQ/q_spider_expand.tikz}{}{Missing file!}}} \lambda_{e^{i\alpha/2}}
		\by{M} {\vc{\InputIfFileExists{./figures/ZQ/q_spider_expand.tikz}{}{Missing file!}}} \lambda_{e^{i\alpha/2}}
	\end{align}
\end{proof}

Before proving the translation of the (EU') rule (Proposition~\ref{propZQTranslationEU}) we introduce some helpful lemmas.
For the more trigonometric lemmas we delay their proof until Appendix~\ref{secTrig},
because the details of those proofs are not very informative with respect to proving
ZQ $\entails F_Q\left(EU'\right)$.
We reproduce the side conditions for the rule $\ZX_{EU'}$ for reference here:

\begin{displayquote}[Figure~2, A Near-Minimal Axiomatisation of ZX-Calculus
		for Pure Qubit Quantum Mechanics \cite{VilmartZX}]
	In rule (EU'), $\beta_1$, $\beta_2$, $\beta_3$ and $\gamma$
	can be determined as follows: $x^+ := \frac{\alpha_1 + \alpha_2}{2}$,
	$x^- := x^-\alpha_2$, $z := - \sin (x^+) + i \cos(x^-)$
	and $z' := \cos(x^+) - i \sin (x^-)$,
	then $\beta_1 = \arg z + \arg z'$, $\beta_2 = 2 \arg(i + \frac{\abs{z}}{\abs{z'}})$,
	$\beta_3 = \arg z - \arg z'$, $\gamma = x^+ - \arg(z) + \frac{\pi - \beta_2}{2}$
	where by convention $\arg(0) := 0$ and $z' = 0 \implies \beta_2 = 0$.
\end{displayquote}

\begin{lemma} \label{lemZQQLambdas}
	With the conditions of $\ZX_{EU'}$
	\begin{align}
		\left(e^{i\left(\beta_1 + \beta_2 + \beta_3 + \gamma + 9\pi\right)/2}\right) \times
		\left(\frac{i}{\sqrt{2}}\right) \times
		\left(-\sqrt{2}\left(e^{i\gamma/2}\right)\right) = e^{i\left(\alpha_1 + \alpha_2 + \pi\right)/2}
	\end{align}
\end{lemma}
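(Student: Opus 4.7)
The plan is to reduce the claim to a short modular identity among the side-condition parameters, observing that the only trigonometry actually needed has been packaged into the definitions of $\beta_1+\beta_3$ and $\gamma$. First I would collapse the scalars outside the exponentials: $\frac{i}{\sqrt{2}}\cdot(-\sqrt{2})=-i = e^{-i\pi/2}$. Bringing everything under a single exponent of the form $e^{i\theta/2}$, the left-hand side becomes
\begin{align}
e^{i(\beta_1+\beta_2+\beta_3+\gamma+9\pi)/2}\cdot e^{-i\pi/2}\cdot e^{i\gamma/2} \;=\; e^{i(\beta_1+\beta_2+\beta_3+2\gamma+8\pi)/2},
\end{align}
so (since $e^{i\cdot 8\pi/2}=e^{4\pi i}=1$) the claim reduces to the identity
\begin{align}
\beta_1+\beta_2+\beta_3+2\gamma \;\equiv\; \alpha_1+\alpha_2+\pi \pmod{4\pi}.
\end{align}

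Next I would substitute the side conditions directly. From $\beta_1=\arg z+\arg z'$ and $\beta_3=\arg z-\arg z'$ we have $\beta_1+\beta_3=2\arg z$, and from $\gamma=x^+-\arg(z)+(\pi-\beta_2)/2$ we have $2\gamma=2x^+-2\arg(z)+\pi-\beta_2$. Adding these gives
\begin{align}
\beta_1+\beta_2+\beta_3+2\gamma \;=\; 2\arg(z)+\beta_2+2x^+-2\arg(z)+\pi-\beta_2 \;=\; 2x^++\pi,
\end{align}
and since $2x^+=\alpha_1+\alpha_2$ by definition, this is exactly $\alpha_1+\alpha_2+\pi$ on the nose, not merely modulo $4\pi$.

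The main obstacle, such as it is, is purely bookkeeping: because every exponent carries a factor of $1/2$ we need the unreduced identity modulo $4\pi$, so one must be careful that the $+9\pi$ offset in the statement really is absorbed by the $e^{-i\pi/2}$ scalar and a $4\pi$ shift. The convention $\arg(0):=0$ and the clause $z'=0\implies\beta_2=0$ do not enter because the cancellation $\beta_1+\beta_3-2\arg(z)=0$ and $\beta_2-\beta_2=0$ hold formally regardless of whether $z$ or $z'$ vanishes, so no degenerate case needs separate treatment. All told, the proof is a few lines of substitution with no genuine trigonometric computation, the harder work having already been pushed into the definitions of $\beta_1,\beta_2,\beta_3,\gamma$.
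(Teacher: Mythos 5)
Your proof is correct and follows essentially the same route as the paper's: collapse the scalars to $-i$, merge everything into a single exponential, and substitute $\beta_1+\beta_3=2\arg z$ and $2\gamma=2x^+-2\arg z+\pi-\beta_2$ so that everything cancels down to $2x^++\pi=\alpha_1+\alpha_2+\pi$. The only difference is cosmetic bookkeeping (you absorb $-i$ as $e^{-i\pi/2}$ and phrase the target as a mod-$4\pi$ congruence, while the paper keeps the $-i$ factor and drops $8\pi$ from the exponent), so nothing further is needed.
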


\begin{proof} \label{prfLemZQQLambdas}
	\begin{align}
		LHS = & \left(e^{i\left(\beta_1 + \beta_2 + \beta_3 + \gamma + 9\pi\right)/2}\right) \times
		\left(\frac{i}{\sqrt{2}}\right) \times
		\left(-\sqrt{2}\left(e^{i\gamma/2}\right)\right) \\
		=     & \left(-i\right)\left(e^{i\left(\beta_1 + \beta_2 + \beta_3 + 2\gamma + \pi\right)/2}\right)                                           \\
		=     & \left(-i\right)\left(e^{i\left(\arg z + \arg z' + \beta_2 + \arg z - \arg z' + 2x^+ - 2 \arg z + \pi - \beta_2 + \pi\right)/2}\right) \\
		=     & \left(-i\right)\left(e^{i\left(2x^++2\pi\right)/2}\right)                                                                             \\
		=     & e^{i\left(\alpha_1 + \alpha_2 + \pi\right)/2}
	\end{align}
\end{proof}

The quaternion $\left(\pi, \frac{x+z}{\sqrt{2}}\right)$, also known as $H$
is the the ZQ equivalent of the Hadamard gate in ZX.
This quaternion changes Z rotations to X rotations and vice versa,
but is not quite self-inverse, with $H \times H = -1$.

\begin{restatable}{lemma}{lemZQHComm}
	\label{lemZQHComm}
	The quaternion $\left(\pi, \frac{x+z}{\sqrt{2}}\right)$ and its interactions with $\left(\alpha, z\right)$ and $\left(\alpha, x\right)$:
	\begin{align}
		\left(\pi, \frac{x+z}{\sqrt{2}}\right) \times \left(\alpha,z\right) = \left(\alpha, x\right) \times \left(\pi, \frac{x+z}{\sqrt{2}}\right) \\
		\left(\alpha,z\right) \times \left(\pi, \frac{x+z}{\sqrt{2}}\right) = \left(\pi, \frac{x+z}{\sqrt{2}}\right) \times \left(\alpha,x\right)  \\
		\left(\pi, \frac{x+z}{\sqrt{2}}\right) \times \left(\pi, \frac{x+z}{\sqrt{2}}\right) = -1
	\end{align}
\end{restatable}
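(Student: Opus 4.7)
The plan is to verify these three quaternion identities by direct computation, unpacking the angle-vector notation into the algebraic form $q_w + i q_x + j q_y + k q_z$ and using the defining relations $i^2 = j^2 = k^2 = ijk = -1$ from Definition~\ref{defQuaternions}. This is appropriate because the claims are statements about specific elements of $\hat{Q}$, and no diagrammatic reasoning is needed --- we are working entirely within the quaternion algebra that underlies the edge labels of ZQ.

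First I would expand $H = (\pi, \frac{1}{\sqrt{2}}(x+z))$ using Definition~\ref{defUQUatSO3} to obtain $H = \frac{1}{\sqrt{2}}(i + k)$, and similarly write $(\alpha, z) = \cos(\alpha/2) + \sin(\alpha/2)\, k$ and $(\alpha, x) = \cos(\alpha/2) + \sin(\alpha/2)\, i$. The third identity $H \times H = -1$ then follows immediately from
\begin{align}
H \times H = \tfrac{1}{2}(i+k)(i+k) = \tfrac{1}{2}(i^2 + ik + ki + k^2) = \tfrac{1}{2}(-1 - j + j - 1) = -1,
\end{align}
using that $ik = -j$ and $ki = j$. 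I would actually handle this case first, since it is the cleanest and sets up the bilinearity arguments used for the other two.

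For the first identity, I would compute both sides and show they agree term by term. On the left,
\begin{align}
H \times (\alpha,z) = \tfrac{1}{\sqrt{2}}(i+k)\bigl(\cos(\alpha/2) + \sin(\alpha/2)\,k\bigr),
\end{align}
and on the right,
\begin{align}
(\alpha,x) \times H = \bigl(\cos(\alpha/2) + \sin(\alpha/2)\,i\bigr)\tfrac{1}{\sqrt{2}}(i+k).
\end{align}
Expanding both using $ik = -j$, $ki = j$, $i^2 = k^2 = -1$, each side reduces to $\tfrac{1}{\sqrt{2}}\bigl(-\sin(\alpha/2) + \cos(\alpha/2)\,i - \sin(\alpha/2)\,j + \cos(\alpha/2)\,k\bigr)$. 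The second identity is handled identically, with the order of multiplication reversed; the key sign-tracking lemma is that the cross terms $ik$ and $ki$ differ by a sign, which is precisely balanced by swapping the sides on which $H$ appears.

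The only thing that could plausibly go wrong is a sign error in the $ij$, $jk$, $ki$ products, so the main obstacle is really just bookkeeping. No geometric or structural insight beyond Definition~\ref{defQuaternions} is needed, and in particular the lemma is purely about $\hat{Q}$ and does not invoke any ZQ rules; it is stated as a lemma only because the subsequent completeness arguments (such as the translation of the (EU') rule) will repeatedly need to commute $H$-labelled edges past $z$- and $x$-axis rotations.
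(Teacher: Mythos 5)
Your proposal is correct and follows essentially the same route as the paper: both expand $H$ as $\frac{1}{\sqrt{2}}(i+k)$ and the rotations as $\cos\frac{\alpha}{2}+\sin\frac{\alpha}{2}\,k$ (resp.\ $i$), then verify all three identities by direct quaternion multiplication using $i^2=j^2=k^2=ijk=-1$. The only cosmetic difference is that you compute both sides and compare, whereas the paper expands one side and refactors it into the other; the content is identical.
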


The proof of lemma \ref{lemZQHComm} is in \S\ref{secTrig}.
We now show that the two sequences of rotations
in the rule $\ZX_{EU'}$ correspond to the same two sequences of rotations in ZQ.

\begin{restatable}{lemma}{lemZQEUQuaternion} \label{lemZQEUQuaternion}
	With the conditions of $\ZX_{EU'}$:
	\begin{align}
		\left(\alpha_1, z\right) \times H \times \left(\alpha_2, z\right) = H \times \left(\beta_1, z\right) \times H  \times \left(\beta_2,z\right) \times H \times \left(\beta_3, z\right) \times H
	\end{align}
\end{restatable}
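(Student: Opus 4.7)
The plan is to reduce both sides to expressions involving only pure $X$- and $Z$-axis rotations, and then to compare the resulting two quaternions componentwise, using the side conditions of $\ZX_{EU'}$ to identify the matching entries. For the right hand side, the key tool is Lemma \ref{lemZQHComm}: it lets us push every $H$ past a $Z$-rotation at the cost of turning the $Z$-rotation into an $X$-rotation, and it tells us that $H \times H = -1$. Pushing all four $H$'s to the right and collecting the two resulting $H \times H$ pairs yields a net quaternion scalar of $(-1)^2 = 1$, so the right hand side reduces to $(\beta_1, x) \times (\beta_2, z) \times (\beta_3, x)$.

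I would then compute both sides explicitly as quaternions using the angle-vector notation of Definition \ref{defUQUatSO3} together with $H = (i+k)/\sqrt{2}$. Writing $a = \alpha_1/2$ and $b = \alpha_2/2$, the left hand side expands to
\begin{equation*}
\tfrac{1}{\sqrt{2}}\bigl(-\sin(x^+) + i\cos(x^-) + j\sin(x^-) + k\cos(x^+)\bigr),
\end{equation*}
after using $a + b = x^+$, $a - b = x^-$ and standard sum-to-product identities; note that the real and $i$-components coincide with $\operatorname{Re}(z)/\sqrt{2}$ and $\operatorname{Im}(z)/\sqrt{2}$, and the $k$- and $j$-components with $\operatorname{Re}(z')/\sqrt{2}$ and $-\operatorname{Im}(z')/\sqrt{2}$. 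Writing $b_j = \beta_j/2$, a similar expansion of the reduced right hand side gives
\begin{equation*}
\cos b_2 \cos(b_1 + b_3) + i\cos b_2 \sin(b_1 + b_3) + j\sin b_2 \sin(b_3 - b_1) + k\sin b_2 \cos(b_1 - b_3).
\end{equation*}

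Matching components is then a short trigonometric check using the side conditions. From $\beta_1 = \arg z + \arg z'$ and $\beta_3 = \arg z - \arg z'$ we get $b_1 + b_3 = \arg z$ and $b_1 - b_3 = \arg z'$, so the angular factors on the right become $\operatorname{Re}(z)/\abs{z}$, $\operatorname{Im}(z)/\abs{z}$, $\operatorname{Re}(z')/\abs{z'}$ and $-\operatorname{Im}(z')/\abs{z'}$. From $\beta_2 = 2\arg(i + \abs{z}/\abs{z'})$ one finds $\cos b_2 : \sin b_2 = \abs{z} : \abs{z'}$, and the crucial simplification
\begin{equation*}
\abs{z}^2 + \abs{z'}^2 = \sin^2(x^+) + \cos^2(x^-) + \cos^2(x^+) + \sin^2(x^-) = 2
\end{equation*}
fixes the normalising constant at $\sqrt{2}$, giving $\cos b_2 = \abs{z}/\sqrt{2}$ and $\sin b_2 = \abs{z'}/\sqrt{2}$. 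All four quaternion components then match on the nose.

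The main obstacle will be the degenerate cases $z = 0$ or $z' = 0$, which the side conditions handle via the conventions $\arg(0) := 0$ and $z' = 0 \Rightarrow \beta_2 = 0$. There the identity $\cos(\arg z) = \operatorname{Re}(z)/\abs{z}$ is not directly available, so one axis of the XZX decomposition collapses and the equation must be verified by a short separate case analysis. This is slightly tedious but not conceptually difficult, and the same normalisation $\abs{z}^2 + \abs{z'}^2 = 2$ ensures the collapsed quaternion still matches the explicit expansion of the LHS.
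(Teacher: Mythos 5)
Your proposal is correct and follows essentially the same route as the paper's proof: use Lemma~\ref{lemZQHComm} to absorb the four $H$'s (with $(H\times H)^2 = 1$) and reduce the right hand side to $(\beta_1,x)\times(\beta_2,z)\times(\beta_3,x)$, then expand both sides in quaternion components and match them via $\cos\arg z = \operatorname{Re}(z)/\abs{z}$, $\beta_1\pm\beta_3 = 2\arg z,\ 2\arg z'$, and $\abs{z}^2+\abs{z'}^2 = 2$ giving $\cos(\beta_2/2)=\abs{z}/\sqrt{2}$, $\sin(\beta_2/2)=\abs{z'}/\sqrt{2}$. The only addition is your explicit handling of the degenerate cases $z=0$ or $z'=0$, which the paper leaves implicit in the stated conventions.
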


The proof of lemma \ref{lemZQEUQuaternion} is in \S\ref{secTrig}.
The final lemma before Proposition~\ref{propZQTranslationEU}
is the evaluation of the parameterised scalar in $\ZX_{EU'}$.

\begin{lemma} \label{lemZQGamma}
	\begin{align}
		ZQ \entails {\vc{\InputIfFileExists{./figures/ZQ/q_pigamma.tikz}{}{Missing file!}}} & = \lambda_{-\sqrt{2}\left(e^{i\gamma/2}\right)}
	\end{align}
\end{lemma}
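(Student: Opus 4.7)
The plan is to recognise $\tikzfig{ZQ/q_pigamma}$ as exactly the scalar configuration handled by Lemma~\ref{lemZQAlphaHBeta}, namely a Z state meeting a Z effect via a Hadamard Q-edge with phase decorations $\pi$ on one side and $\gamma$ on the other. Once the diagram is written in that form, a single application of Lemma~\ref{lemZQAlphaHBeta} replaces the whole scalar subdiagram by $\lambda_{-\sqrt{2}((\sin\frac{\alpha+\beta}{2}) + i\cos\frac{\alpha-\beta}{2})}$ for the specific choice of $\alpha,\beta$ read off from the labels.

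More concretely, I would first use rules $(I_q)$, $(I_z)$ and, if needed, $(Y)$ to bring the diagram into the exact shape that appears on the left-hand side of Lemma~\ref{lemZQAlphaHBeta}. This should be largely bookkeeping: the diagram $\tikzfig{ZQ/q_pigamma}$ already consists of two Z spiders each carrying one Q-node, joined by an $H$-edge, so the normalisation to the lemma's template should follow directly from spider fusion $(S)$ for any trivial $1$-ary Z spiders introduced, together with $(P)$ to move phases across the Hadamard if their position differs.

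Next I would instantiate Lemma~\ref{lemZQAlphaHBeta} with $\alpha=\gamma$ and $\beta=\pi$ (the assignment being dictated by which side of the Hadamard carries which phase). The trigonometric collapse is:
\begin{align*}
-\sqrt{2}\!\left(\sin\tfrac{\gamma+\pi}{2} + i\cos\tfrac{\gamma-\pi}{2}\right)
= -\sqrt{2}\!\left(\cos\tfrac{\gamma}{2} + i\sin\tfrac{\gamma}{2}\right)
= -\sqrt{2}\,e^{i\gamma/2},
\end{align*}
using the standard identities $\sin(x+\tfrac{\pi}{2}) = \cos x$ and $\cos(x-\tfrac{\pi}{2}) = \sin x$. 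A final application of $(M)$ absorbs any loose scalars produced during the normalisation into the single $\lambda$ on the right-hand side.

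The main obstacle is not the mathematical content — the trigonometric reduction is a one-liner — but rather ensuring that whatever extra scalar prefactors are emitted by $(I_q)$, $(I_z)$, or by passing through $(Y)$ to re-orient the Q-nodes all cancel cleanly. If they do not match up to $1$ on inspection, one would repair this with $(N)$ (for a sign) or $(\ref{propZQHH})$ (for a stray pair of Hadamards), following the same tactics used in the proofs of Propositions~\ref{propZQTranslationIV} and~\ref{propZQTranslationB}.
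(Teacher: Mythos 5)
Your proposal is correct and follows the same route as the paper: the paper's proof is simply the observation that this diagram is a special case of Lemma~\ref{lemZQAlphaHBeta}, and your instantiation (one phase $\pi$, the other $\gamma$) together with the reduction $-\sqrt{2}\left(\sin\tfrac{\gamma+\pi}{2} + i\cos\tfrac{\gamma-\pi}{2}\right) = -\sqrt{2}\,e^{i\gamma/2}$ is exactly the computation that makes that observation work. The extra bookkeeping you anticipate (via $(I_q)$, $(I_z)$, $(Y)$, $(P)$) is not needed, since the diagram is already literally an instance of the lemma's left-hand side.
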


\begin{proof} \label{prfLemZQGamma}
	This is a special case of Lemma~\ref{lemZQAlphaHBeta}.
\end{proof}

\begin{proposition} \label{propZQTranslationEU}
	\begin{align}
		ZQ \entails     & F_Q\left(EU'\right)                                                                                                                                                         \\
		\ie ZQ \entails & \lambda_{e^{\frac{i}{2}\left(\alpha_1 + \alpha_2 + \pi\right)}} {\vc{\InputIfFileExists{./figures/ZQ/q_EU1.tikz}{}{Missing file!}}} =\lambda_{e^{i(\beta_1 + \beta_2 + \beta_3 + \gamma + 9\pi)/2}} \ {\vc{\InputIfFileExists{./figures/ZQ/q_eur.tikz}{}{Missing file!}}}
	\end{align}
\end{proposition}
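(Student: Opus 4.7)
The plan is to reduce both sides of the translated equation to a common form by fusing all single-qubit rotations via rule $Q$, then invoking Lemma~\ref{lemZQEUQuaternion} to identify the resulting quaternions. Using the definition of $F_Q$ on the green spider in $\ZX_{EU'}$, the LHS of $F_Q(EU')$ is (up to the scalar $\lambda_{e^{i(\alpha_1+\alpha_2+\pi)/2}}$) a composition of a $Z$ spider with $\node{qn}{(\alpha_1,z)}$ on its output wire, a Hadamard edge ($\node{qn}{H}\lambda_i$), and $\node{qn}{(\alpha_2,z)}$ on an output wire. First I would apply rule $P$ repeatedly to commute the $(\alpha_i,z)$ rotations through the $Z$ spider, and then apply rule $Q$ to fuse the adjacent $Q$-nodes into a single $\node{qn}{(\alpha_1,z)\times H\times(\alpha_2,z)}$ attached to the spider.

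Next I would perform the analogous reduction on the RHS: the Hadamard-conjugated Z spider translation produces a chain $H,(\beta_1,z),H,(\beta_2,z),H,(\beta_3,z),H$ separated by a central $Z$ spider with the $(\gamma,z)$ rotation and associated $\lambda$ scalars. Using rule $P$ again to commute the $(\beta_i,z)$ past the $Z$ spider, and rule $Q$ to fuse all the pieces into a single $\node{qn}{H\times(\beta_1,z)\times H\times(\beta_2,z)\times H\times(\beta_3,z)\times H}$. At this point Lemma~\ref{lemZQEUQuaternion} tells us these two $Q$-labels are literally equal as quaternions, so the central diagrammatic content of LHS and RHS is the same after applying rule $Q$ in reverse (or by simple rewriting of the label).

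The remaining task is scalar bookkeeping. On the RHS one still has the residual $(\gamma,z)$ rotation sandwiched between two $Z$ spiders with a Hadamard edge; Lemma~\ref{lemZQGamma} collapses that configuration to the scalar $\lambda_{-\sqrt{2}e^{i\gamma/2}}$. We also pick up a factor of $\lambda_{i/\sqrt{2}}$ from the translation of the three Hadamard edges at the far end (exactly as in Proposition~\ref{propZQHHH} or by rule $CP$), and $\lambda_{-1}$ factors from the $\node{qn}{H}\comp\node{qn}{H}$ cancellations done via Lemma~\ref{propZQHH}. Bundling these together with the explicit prefactor $\lambda_{e^{i(\beta_1+\beta_2+\beta_3+\gamma+9\pi)/2}}$ and applying rule $M$, Lemma~\ref{lemZQQLambdas} is precisely the identity needed to reduce the aggregate scalar to $\lambda_{e^{i(\alpha_1+\alpha_2+\pi)/2}}$, matching the LHS prefactor.

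The main obstacle is neither the quaternion algebra (which is Lemma~\ref{lemZQEUQuaternion}) nor the scalar arithmetic (which is Lemma~\ref{lemZQQLambdas}), but rather the diagrammatic hygiene in between: carefully tracking which Hadamard edges have been expanded into $\lambda_i\node{qn}{H}$, keeping the $Z$-spider connectivity intact while commuting $(\alpha,z)$ and $(\beta,z)$ rotations across it by rule $P$, and making sure every $\node{qn}{H}\comp\node{qn}{H}$ cancellation via Lemma~\ref{propZQHH} produces exactly the $\lambda_{-1}$ factors accounted for by Lemma~\ref{lemZQQLambdas}. Once this choreography is set up, the proof is essentially an application of rules $P$, $Q$, $M$, $I_q$, $I_z$, $I_1$, together with the three preparatory lemmas.
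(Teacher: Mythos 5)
Your proposal is correct and takes essentially the same route as the paper's proof: fuse each side into a single $Q$-node by rule $Q$, identify the two quaternion labels with Lemma~\ref{lemZQEUQuaternion}, collapse the $\gamma$-scalar and triple-Hadamard subdiagrams with Lemma~\ref{lemZQGamma} and Proposition~\ref{propZQHHH}, and finish with the scalar identity of Lemma~\ref{lemZQQLambdas}. The only deviations are cosmetic: rule $P$ is not actually needed (the $Z$ spiders occurring here are $1 \to 1$ and disappear via $I_z$/$I_q$), and no explicit Hadamard--Hadamard cancellations or stray $\lambda_{-1}$ factors arise in the paper's derivation, since those cancellations are absorbed into the quaternion algebra of Lemma~\ref{lemZQEUQuaternion} and Lemma~\ref{lemZQQLambdas} combines exactly the three scalars you list.
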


\begin{proof} \label{prfPropZQTranslationEU}
	\begin{align}
		LHS \by{I_q, Q}              & \node{qn}{\left(\alpha_1, z\right) \times H \times \left(\alpha_2, z\right)} \lambda_{e^{i\left(\alpha_1 + \alpha_2 + \pi\right)/2}} \\
		RHS =                        &
		\overset{\lambda_{e^{i\left(\beta_1 + \beta_2 + \beta_3 + \gamma + 9\pi\right)/2}}}{{\vc{\InputIfFileExists{./figures/ZQ/q_HHH.tikz}{}{Missing file!}}}}
		{\vc{\InputIfFileExists{./figures/ZQ/q_pigamma.tikz}{}{Missing file!}}}
		\node{qn}{H \times \left(\beta_1, z\right) \times H  \times \left(\beta_2,z\right) \times H \times \left(\beta_3, z\right) \times H} \\
		\by{\ref{lemZQEUQuaternion}} &
		\lambda_{e^{i\left(\beta_1 + \beta_2 + \beta_3 + \gamma + 9\pi\right)/2}}
		{\vc{\InputIfFileExists{./figures/ZQ/q_HHH.tikz}{}{Missing file!}}}
		{\vc{\InputIfFileExists{./figures/ZQ/q_pigamma.tikz}{}{Missing file!}}}
		\node{qn}{\left(\alpha_1, z\right) \times H \times \left(\alpha_2, z\right)}  \\
		\by{\ref{lemZQGamma}}        &
		\lambda_{e^{i\left(\beta_1 + \beta_2 + \beta_3 + \gamma + 9\pi\right)/2}}
		{\vc{\InputIfFileExists{./figures/ZQ/q_HHH.tikz}{}{Missing file!}}}
		\lambda_{-\sqrt{2}\left(e^{i\gamma/2}\right)}
		\node{qn}{\left(\alpha_1, z\right) \times H \times \left(\alpha_2, z\right)} \\
		\by{\ref{propZQHHH}}         &
		\lambda_{e^{i\left(\beta_1 + \beta_2 + \beta_3 + \gamma + 9\pi\right)/2}}
		\lambda_{\frac{i}{\sqrt{2}}}
		\lambda_{-\sqrt{2}\left(e^{i\gamma/2}\right)}
		\node{qn}{\left(\alpha_1, z\right) \times H \times \left(\alpha_2, z\right)} \\
		\by{\ref{lemZQQLambdas}}     &
		\lambda_{e^{i\left(\alpha_1 + \alpha_2 + \pi\right)/2}}
		\node{qn}{\left(\alpha_1, z\right) \times H \times \left(\alpha_2, z\right)}
	\end{align}
\end{proof}

We have shown that for every rule $L=R$ in ZX, $\ZQ \entails F_Q\left(L\right) = F_Q\left(R\right)$.
We have therefore shown that if $\ZX \entails D_1 = D_2$ then $\ZQ \entails F_Q\left(D_1\right) = F_Q\left(D_2\right)$.

\subsection{From ZQ to ZX and back again} \label{secCompletenessZXZQBack}

It remains to be shown that $\ZQ \entails F_Q\left(F_X\left(D\right)\right) = D$

\begin{proposition} \label{propZQRetranslateZSpider} Re-translating the Z spider
	\begin{align}
		ZQ \entails F_Q\left(F_X\left(\spider{smallZ}{}\right)\right) = \spider{smallZ}{}
	\end{align}
\end{proposition}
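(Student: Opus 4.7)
The plan is to unfold the two translations in sequence and then invoke the earlier Lemma~\ref{lemZQTranslationZ}, which already handles essentially the same diagram. By the definition of $F_X$ on generators (Figure~\ref{figZQZXTranslation}), we have $F_X(\spider{smallZ}{}) = \spider{gn}{}$, where the green spider carries the empty (i.e.\ zero) phase. Composing with $F_Q$, the definition sends the phase-$\alpha$ green spider to the expanded form ${\tikzfig{ZQ/q_spider_expand}} \lambda_{e^{i\alpha/2}}$, and setting $\alpha = 0$ yields $F_Q(F_X(\spider{smallZ}{})) = {\tikzfig{ZQ/q_spider_expand}}\, \lambda_{e^{0}}$.

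Next I would reduce this expanded form back to a bare Z spider. First, $\lambda_{e^{0}} = \lambda_1$, which is eliminated by rule $I_1$. Each $Q_1$ node decorating an input or output wire is removed by rule $I_q$, leaving only the central Z spider. Notice that this is exactly the computation carried out in Lemma~\ref{lemZQTranslationZ}, where it was shown that $F_Q(\spider{gn}{}) \by{I_q} \cdot \by{S} \spider{smallZ}{}$ (the $S$ step simply fuses the central Z spider with itself if the expansion introduces an auxiliary internal Z spider). Thus the whole derivation reduces to a direct appeal:
\begin{align}
F_Q(F_X(\spider{smallZ}{})) = F_Q(\spider{gn}{}) \by{\ref{lemZQTranslationZ}} \spider{smallZ}{}.
\end{align}

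There is no genuine obstacle here; the only bookkeeping to be careful about is (i) that the empty phase label on $\spider{gn}{}$ really corresponds to $\alpha = 0$, so that the scalar collapses to $\lambda_1$ rather than an arbitrary phase, and (ii) that the $I_q$ rule can be applied uniformly to every leg of the expanded spider regardless of arity $m \to n$. Both are immediate from the definitions, so the proposition follows in a single line once Lemma~\ref{lemZQTranslationZ} is cited.
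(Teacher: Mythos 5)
Your proposal is correct and follows essentially the same route as the paper: the paper's proof is exactly the observation that $F_X(\spider{smallZ}{}) = \spider{gn}{}$ and then a direct appeal to Lemma~\ref{lemZQTranslationZ}, whose own proof uses the $I_q$ and $S$ steps you describe. The extra bookkeeping you spell out (the scalar collapsing to $\lambda_1$ and $I_q$ applying to every leg) is already contained in that lemma, so nothing further is needed.
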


\begin{proof} \label{prfPropZQRetranslateZSpider}
	\begin{align}
		LHS = F_Q\left(\spider{gn}{}\right) \by{\ref{lemZQTranslationZ}} \spider{smallZ}{}
	\end{align}
\end{proof}

The following lemmas are necessary for the re-translation of the $Q$ node in Proposition~\ref{propZQRetranslateQ}.

\begin{lemma} \label{lemZQRetranslateTripleBlobs}
	\begin{align}
		ZQ \entails F_Q\left(\tripleblobs\right) = \lambda_{1 / \sqrt{2}}
	\end{align}
\end{lemma}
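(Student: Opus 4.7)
The plan is to unfold $F_Q$ applied to each spider of $\tripleblobs$, reduce the resulting ZQ diagram to the configuration appearing in Lemma~\ref{propZQHHH}, and finish by pure scalar arithmetic. Writing $\tripleblobs$ as the composition of a phase-free green $0 \to 3$ spider with a phase-free red $3 \to 0$ spider, the translation rules for $F_Q$ send the green spider to ${\tikzfig{ZQ/q_spider_expand}}\, \lambda_{e^{i0/2}} = {\tikzfig{ZQ/q_spider_expand}}\, \lambda_1$, and send the red spider to ${\tikzfig{ZQ/q_spider_expand}}\, \lambda_1 \comp (\lambda_i \node{qn}{H})^{\tensor 3}$ (no Hadamard block on the empty output boundary).

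First I would compose these two translated pieces, collecting scalars by rule $M$: the accumulated factor is $\lambda_1 \cdot \lambda_1 \cdot \lambda_i^{3} = \lambda_{-i}$. What remains diagrammatically is two Z-spiders joined by three edges, each edge decorated with the quaternion node $\node{qn}{H}$. I would then apply rules $I_q$ and $I_z$ to absorb any identity $Q$- or $Z$-nodes produced by the spider-expansion macro, and use rule $S$ to fuse the two Z-spiders and their neighbouring Z-nodes into the exact shape of ${\tikzfig{ZQ/q_HHH}}$ from Lemma~\ref{propZQHHH}.

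Applying Lemma~\ref{propZQHHH} replaces this entire sub-diagram by $\lambda_{i/\sqrt{2}}$. Combining with the earlier scalar using rule $M$ yields
\begin{align*}
\lambda_{-i}\, \lambda_{i/\sqrt{2}} \by{M} \lambda_{(-i)(i/\sqrt{2})} = \lambda_{1/\sqrt{2}},
\end{align*}
which is the desired equality.

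The main obstacle is not a deep one: it is simply matching the diagram produced by the translation rule to the exact shape of ${\tikzfig{ZQ/q_HHH}}$ used in Lemma~\ref{propZQHHH}. Depending on how ${\tikzfig{ZQ/q_spider_expand}}$ is drawn (and on the directions of the $H$ nodes on the three connecting edges), a handful of applications of $S$, $I_z$, $I_q$, and possibly $Y$ may be required before the lemma can be invoked; no new ideas are needed beyond these bookkeeping steps and the scalar combination above.
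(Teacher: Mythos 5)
Your proposal is correct and follows essentially the same route as the paper: unfold $F_Q$, collect the three $\lambda_i$ factors into $\lambda_{-i}$, recognise the remaining diagram as ${\tikzfig{ZQ/q_HHH}}$, and conclude via Lemma~\ref{propZQHHH} together with the scalar rule $M$, giving $\lambda_{-i}\lambda_{i/\sqrt{2}} = \lambda_{1/\sqrt{2}}$. The paper's proof is just a terser version of exactly this argument.
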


\begin{proof} \label{prfLemZQRetranslateTripleBlobs}
	\begin{align}
		F_Q\left(\tripleblobs\right) \by{M} \lambda_{-i} \vc{\InputIfFileExists{./figures/ZQ/q_HHH.tikz}{}{Missing file!}} \by{M, \ref{propZQHHH}} \lambda_{1 / \sqrt{2}}
	\end{align}
\end{proof}

\begin{lemma} \label{propZQRetranslateGammaAlphaPi}
	\begin{align}
		ZQ \entails F_Q\left(\galpharpi{\gamma}\right) = \lambda_{-\sqrt{2}e^{i\gamma/2}}
	\end{align}
\end{lemma}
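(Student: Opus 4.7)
The plan is to compute $F_Q(\galpharpi{\gamma})$ directly from the definition of $F_Q$ on generators, then recognise the underlying ZQ diagram as an instance of Lemma~\ref{lemZQAlphaHBeta} with $\alpha = \gamma$ and $\beta = \pi$. First I would unfold the functor on the three ZX generators that make up $\galpharpi{\gamma}$: the green state with phase $\gamma$ contributes a copy of $\tikzfig{ZQ/q_spider_expand}$ carrying a $(\gamma,z)$ quaternion rotation together with a scalar $\lambda_{e^{i\gamma/2}}$; the Hadamard edge contributes $\node{qn}{H}$ and a scalar $\lambda_i$; and the green effect with phase $\pi$ contributes the matching $\tikzfig{ZQ/q_spider_expand}$ with a $(\pi,z)$ rotation and a scalar $\lambda_{e^{i\pi/2}} = \lambda_i$.

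Next I would clean up the resulting diagram. The arity-$(0,1)$ and arity-$(1,0)$ Z spiders from the two spider expansions simplify by rule $I_z$, leaving a single wire carrying three Q-rotations in succession: $(\pi, z)$, then $H$, then $(\gamma, z)$. A single application of rule $Q$ fuses these into one quaternion node, at which point the entire subdiagram matches the left-hand side of Lemma~\ref{lemZQAlphaHBeta} with $\alpha = \gamma$ and $\beta = \pi$. Invoking that lemma rewrites this subdiagram to $\lambda_{-\sqrt{2}\left(\sin\frac{\gamma+\pi}{2} + i\cos\frac{\gamma-\pi}{2}\right)}$, and the trigonometric identities $\sin(\gamma/2 + \pi/2) = \cos(\gamma/2)$ and $\cos(\gamma/2 - \pi/2) = \sin(\gamma/2)$ collapse this to $\lambda_{-\sqrt{2}(\cos(\gamma/2) + i\sin(\gamma/2))} = \lambda_{-\sqrt{2}e^{i\gamma/2}}$.

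Finally, repeated applications of rule $M$ absorb all the scalars picked up along the way (the $\lambda_{e^{i\gamma/2}}$, the two $\lambda_i$'s from the translation, and the $\lambda_{-\sqrt{2}e^{i\gamma/2}}$ produced by Lemma~\ref{lemZQAlphaHBeta}) into the single claimed scalar. The main obstacle is bookkeeping of these scalar factors together with the signs from the trigonometric rewrite: getting both the phase $e^{i\gamma/2}$ and the sign of $\sqrt{2}$ to match the target $\lambda_{-\sqrt{2}e^{i\gamma/2}}$ hinges on using exactly the half-angle-shift identities noted above, and on the fact that the Hadamard-edge and $\pi$-spider scalars combine as $i \cdot i = -1$, which will be cancelled in the simplification rather than appearing as an additional factor.
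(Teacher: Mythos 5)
Your overall route is the same as the paper's: unfold $F_Q$ on the two green spiders and the Hadamard edge (collecting $\lambda_{e^{i\gamma/2}}$, $\lambda_{e^{i\pi/2}}=\lambda_i$ and $\lambda_i$), tidy the closed diagram with $I_z$ (and $Y$ for the upside-down $Q$ node), and evaluate the remaining $\gamma$--$H$--$\pi$ scalar diagram via Lemma~\ref{lemZQAlphaHBeta} (the paper packages exactly this instance as Lemma~\ref{lemZQGamma}), getting $\lambda_{-\sqrt{2}e^{i\gamma/2}}$ for the inner part. Up to that point your bookkeeping agrees with the paper's proof.

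The gap is the last step. There is no mechanism in ZQ by which the leftover scalars are ``cancelled in the simplification'': rule $M$ multiplies scalars and only $\lambda_1$ may be deleted (by $I_1$). Multiplying honestly gives $\left(-\sqrt{2}e^{i\gamma/2}\right)\cdot e^{i\gamma/2}\cdot i\cdot i=\sqrt{2}e^{i\gamma}$, so the derivation terminates at $\lambda_{\sqrt{2}e^{i\gamma}}$ --- which is exactly where the paper's own proof ends, and is the value actually used downstream (in Proposition~\ref{propZQRetranslateQ}, where the instances $\gamma=-\alpha_j/2$ give $\lambda_{\sqrt{2}e^{-i\alpha_j/2}}$, and in Proposition~\ref{propZQRetranslateLambda}). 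Indeed the equation as printed in the lemma cannot be derived at all: $F_Q$ preserves the interpretation, $\interpret{\galpharpi{\gamma}}=\sqrt{2}e^{i\gamma}$, and $\sqrt{2}e^{i\gamma}\neq-\sqrt{2}e^{i\gamma/2}$ for every $\gamma\in[0,2\pi)$, so by soundness of ZQ (Theorem~\ref{thmZQSound}) the stated right-hand side is a typo. The statement your computation actually establishes, once the scalars are multiplied rather than discarded, is $ZQ\entails F_Q\left(\galpharpi{\gamma}\right)=\lambda_{\sqrt{2}e^{i\gamma}}$; forcing the printed target by dropping the factor $-e^{i\gamma/2}$ is not a valid move.
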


\begin{proof} \label{prfPropZQRetranslateGammaAlphaPi}
	\begin{align}
		F_Q\left(\galpharpi{\gamma}\right) \by{Y} \vc{\InputIfFileExists{./figures/ZQ/q_pigamma.tikz}{}{Missing file!}} \lambda_{e^{i\gamma/2}} \lambda_{i} \lambda_{i} \by{\ref{lemZQGamma}, M} \lambda_{\sqrt{2}e^{i\gamma}}
	\end{align}

\end{proof}

\begin{lemma} \label{lemZQRetranslateZNode}
	\begin{align}
		ZQ \entails F_Q\left(\node{gn}{\alpha}\right) = \node{qn}{\alpha,z} \lambda_{e^{i\frac{\alpha}{2}}}
	\end{align}
\end{lemma}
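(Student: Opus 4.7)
The plan is to unpack the definition of $F_Q$ applied to the $1 \to 1$ green spider $\node{gn}{\alpha}$, and then use spider fusion $(S)$ together with the identity rule $(I_z)$ to simplify the trivial Z-spiders that appear on either side of the $Q$-node.

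Recall from Figure~\ref{figZQZXTranslation} that $F_Q$ sends a green spider with phase $\alpha$ to a Z-spider on the inputs, followed by a $Q$-node decorated with $(\alpha,z)$ (viewed as the quaternion $\cos\tfrac{\alpha}{2} + k\sin\tfrac{\alpha}{2}$), followed by a Z-spider on the outputs, times the scalar $\lambda_{e^{i\alpha/2}}$. Specialising this to the $1\to 1$ case yields
\begin{align}
F_Q\left(\node{gn}{\alpha}\right) \;=\; \node{smallZ}{} \comp \node{qn}{(\alpha,z)} \comp \node{smallZ}{} \;\;\lambda_{e^{i\alpha/2}}.
\end{align}

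Next I would dispose of the two trivial Z-spiders. Each of them is a phase-free $1 \to 1$ Z-spider, which is the identity: this follows directly from $(I_z)$, or equivalently from one step of $(S)$ applied to a $1\to 1$ spider with no additional inputs or outputs. Applying this twice gives
\begin{align}
\node{smallZ}{} \comp \node{qn}{(\alpha,z)} \comp \node{smallZ}{} \;\;\lambda_{e^{i\alpha/2}} \;\by{I_z}\; \node{qn}{(\alpha,z)}\;\lambda_{e^{i\alpha/2}},
\end{align}
which is the right-hand side of the claimed equation.

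There is no real obstacle here: the lemma is essentially a bookkeeping calculation that reads off the specialisation of the general translation rule to the $1\to 1$ case. The only care required is in observing that the two Z-spiders appearing in $\tikzfig{ZQ/q_spider_expand}$ reduce to wires when both the arity on top and the arity on the bottom are $1$, which is immediate from $(I_z)$ (or $(S)$ with a single input and single output).
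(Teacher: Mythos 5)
Your proof is correct and follows essentially the same route as the paper: unpack the definition of $F_Q$ on a $1\to 1$ green spider to get $\node{smallZ}{} \comp \node{qn}{\alpha,z} \comp \node{smallZ}{}\,\lambda_{e^{i\alpha/2}}$, then remove the trivial phase-free $1\to 1$ Z-spiders (the paper does this silently; you cite $(I_z)$/$(S)$, which is exactly the justification needed).
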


\begin{proof} \label{prfLemZQRetranslateZNode}
	\begin{align}
		LHS = & \node{smallZ}{} \comp \node{qn}{\alpha,z} \comp \node{smallZ}{} \lambda_{e^{i\frac{\alpha}{2}}} =\node{qn}{\alpha,z}  \lambda_{e^{i\frac{\alpha}{2}}}
	\end{align}
\end{proof}

\begin{lemma} \label{lemZQRetranslateXNode}
	\begin{align}
		ZQ \entails F_Q\left(\node{rn}{\alpha}\right) = \node{qn}{\alpha,x} \lambda_{e^{i\frac{\alpha}{2}}}
	\end{align}
\end{lemma}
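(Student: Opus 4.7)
The plan is to unfold the definition of $F_Q$ on the red spider, use the already-proved translation of the green spider, and then reduce the resulting composition of three $Q$-nodes to a single one via the quaternion multiplication rule (Q), a dash of (N) for the sign, and the scalar clean-up rules (M) and $(I_1)$.

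First I would apply the definition of $F_Q$ on $\node{rn}{\alpha}$ at arity $1 \to 1$, which gives
\begin{align*}
F_Q\left(\node{rn}{\alpha}\right) \;=\; \lambda_i \node{qn}{H} \;\comp\; F_Q\left(\node{gn}{\alpha}\right) \;\comp\; \lambda_i \node{qn}{H}.
\end{align*}
By Lemma~\ref{lemZQRetranslateZNode}, the middle factor rewrites to $\node{qn}{(\alpha,z)} \lambda_{e^{i\alpha/2}}$. Applying rule (Q) twice, the three adjacent $Q$-nodes fuse into $\node{qn}{H \times (\alpha,z) \times H}$, and the two $\lambda_i$ scalars combine under (M) into $\lambda_{-1}$, so the diagram becomes
\begin{align*}
\lambda_{-1}\,\node{qn}{H \times (\alpha,z) \times H}\,\lambda_{e^{i\alpha/2}}.
\end{align*}

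The key algebraic step is then purely a quaternion identity. Using the first equality of Lemma~\ref{lemZQHComm}, namely $H \times (\alpha,z) = (\alpha,x) \times H$, together with the third equality $H \times H = -1$, we obtain $H \times (\alpha,z) \times H = -(\alpha,x)$. So the diagram reduces to $\lambda_{-1} \node{qn}{-(\alpha,x)} \lambda_{e^{i\alpha/2}}$, and applying rule (N) in reverse to absorb the sign on the quaternion into a scalar $\lambda_{-1}$ gives $\lambda_{-1}\lambda_{-1} \node{qn}{(\alpha,x)} \lambda_{e^{i\alpha/2}}$. A final application of (M) and $(I_1)$ collapses the scalars to leave $\node{qn}{(\alpha,x)} \lambda_{e^{i\alpha/2}}$, as required.

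The only mildly delicate point is the bookkeeping of the three scalars $\lambda_i, \lambda_i, \lambda_{-1}$ arising from (Q) and (N); none of this is conceptually hard, but it needs to be tracked carefully so that the final scalar is exactly $\lambda_{e^{i\alpha/2}}$ and not, say, $\lambda_{-e^{i\alpha/2}}$. Everything else is mechanical once Lemmas~\ref{lemZQRetranslateZNode} and~\ref{lemZQHComm} are in hand, so I do not anticipate a genuine obstacle: this lemma is really just the red-spider analogue of Lemma~\ref{lemZQRetranslateZNode}, with the extra cost of conjugating by two Hadamard edges.
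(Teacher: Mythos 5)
Your proposal is correct and follows essentially the same route as the paper: unfold $F_Q$ on the red spider, reduce the middle factor to $\node{qn}{\alpha,z}\,\lambda_{e^{i\alpha/2}}$ (the paper drops the flanking Z spiders directly rather than citing Lemma~\ref{lemZQRetranslateZNode}, which amounts to the same thing), fuse the three $Q$-nodes with (Q), apply Lemma~\ref{lemZQHComm} together with $H\times H=-1$, and clear the sign and scalars with (N) and (M). The only cosmetic difference is scalar bookkeeping — the paper absorbs $\lambda_i\lambda_i$ into $\lambda_{-e^{i\alpha/2}}$ at the start, while you cancel two $\lambda_{-1}$'s at the end — and both give the stated result.
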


\begin{proof} \label{prfLemZQRetranslateXNode}
	\begin{align}
		LHS \by{M} & \node{qn}{H} \comp\node{smallZ}{} \comp  \node{qn}{\alpha,z} \comp\node{smallZ}{} \comp  \node{qn}{H} \lambda_{-e^{i\frac{\alpha}{2}}} \\
		\by{Q}     & \node{qn}{H \times \left(\alpha,z\right) \times H}  \lambda_{-e^{i\frac{\alpha}{2}}}
		\by{\ref{lemZQHComm}}
		\node{qn}{H \times H \times \left(\alpha,x\right)}  \lambda_{-e^{i\frac{\alpha}{2}}} \\
		           & \by{\ref{propZQHH}, N} \node{qn}{\alpha,x}  \lambda_{e^{i\frac{\alpha}{2}}}
	\end{align}
\end{proof}

\begin{proposition} \label{propZQRetranslateQ}
	\begin{align}
		ZQ \entails F_Q\left(F_X\left(\node{qn}{q}\right)\right) = \node{qn}{q}
	\end{align}
	Where $q = \left(\alpha_1, z\right) \times \left(\alpha_2, x\right) \times \left(\alpha_3 , z\right)$, as in Proposition~\ref{propZQQDecomp}
\end{proposition}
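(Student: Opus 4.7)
The plan is to unfold $F_X(\node{qn}{q})$ into its Euler-angle sub-diagram together with its scalar factors, apply $F_Q$ component-by-component using the lemmas already proved in this subsection, and then recombine the resulting ZQ pieces.

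First I would unfold the definition of $F_X$ on the $Q$ generator: by Proposition~\ref{propZQQDecomp}, $F_X(\node{qn}{q})$ consists of (a ZX representative of) the Euler chain $\node{gn}{\alpha} \comp \node{rn}{\beta} \comp \node{gn}{\gamma}$, tensored with the scalar diagrams $\galpharpi{-\alpha/2}, \galpharpi{-\beta/2}, \galpharpi{-\gamma/2}$ and three copies of $\tripleblobs$.

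Next I would apply $F_Q$ to each component. The three coloured spiders translate to the Q-nodes $\node{qn}{\alpha,z}, \node{qn}{\beta,x}, \node{qn}{\gamma,z}$ equipped with half-angle scalars $\lambda_{e^{i\alpha/2}}, \lambda_{e^{i\beta/2}}, \lambda_{e^{i\gamma/2}}$ by Lemmas~\ref{lemZQRetranslateZNode} and \ref{lemZQRetranslateXNode}; each $\galpharpi{-\cdot/2}$ becomes the known scalar given by Lemma~\ref{propZQRetranslateGammaAlphaPi}; and each $\tripleblobs$ becomes $\lambda_{1/\sqrt{2}}$ by Lemma~\ref{lemZQRetranslateTripleBlobs}.

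I would then fuse the three consecutive Q-nodes using rule $Q$, absorbing the interstitial phase-free Z-spiders produced by the $F_Q$-expansion of each coloured spider via $I_z$ and $I_q$ (exactly as was done inside the proofs of Lemmas~\ref{lemZQRetranslateZNode} and \ref{lemZQRetranslateXNode}). By Proposition~\ref{propZQQDecomp} the fused node is $\node{qn}{q}$. All remaining scalars may then be collected with rule $M$: the $\sqrt{2}$ contributions from the three $\galpharpi{\cdot}$ diagrams cancel against the three $1/\sqrt{2}$ contributions from the $\tripleblobs$, and the half-angle phase factors coming from the spider expansions cancel against the half-angle phase factors coming from the $\galpharpi{-\cdot/2}$ terms, leaving $\lambda_1$, which is discarded by $I_1$.

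The main obstacle is precisely this scalar bookkeeping. Every individual rewrite is a direct application of a previously established lemma, but one must track the powers of $\sqrt{2}$ and the half-angle phases across all six scalar contributions to confirm that they collapse to $\lambda_1$. Soundness of $F_X$ and $F_Q$ (Theorem~\ref{thmZQSound} and the earlier translated rules) guarantees that such a cancellation must exist, so the argument is really just a careful symbolic execution of that inevitability.
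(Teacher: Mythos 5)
Your proposal is correct and follows essentially the same route as the paper's proof: unfold $F_X$ into the Euler chain plus its scalar diagrams, translate each piece via Lemmas~\ref{lemZQRetranslateZNode}, \ref{lemZQRetranslateXNode}, \ref{propZQRetranslateGammaAlphaPi}, and \ref{lemZQRetranslateTripleBlobs}, cancel the $\sqrt{2}$ and half-angle scalar factors with $M$, and fuse the three $Q$-nodes with rule $Q$ into $\node{qn}{q}$ via Proposition~\ref{propZQQDecomp}. The only cosmetic difference is the order of the final steps (the paper collects scalars before fusing the $Q$-nodes), which does not affect the argument.
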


\begin{proof} \label{prfPropZQRetranslateQ}
	\begin{align}
		LHS =                                                                       & F_Q\left(
		\node{gn}{\alpha_1} \comp  \node{rn}{\alpha_2} \comp  \node{gn}{\alpha_3} \right) \tensor
		\galpharpi{-\alpha_1/2}
		\galpharpi{-\alpha_2/2}
		\galpharpi{-\alpha_3/2}
		\tripleblobs \tripleblobs \tripleblobs \\
		\by{\ref{propZQRetranslateGammaAlphaPi}, \ref{lemZQRetranslateTripleBlobs}} &
		F_Q\left(
		\node{gn}{\alpha_1} \comp  \node{rn}{\alpha_2} \comp  \node{gn}{\alpha_3}\right) \tensor
		\lambda_{\sqrt{2}e^{-i\alpha_1/2}}
		\lambda_{\sqrt{2}e^{-i\alpha_2/2}}
		\lambda_{\sqrt{2}e^{-i\alpha_3/2}}
		\lambda_{\frac{1}{\sqrt{2}}} \lambda_{\frac{1}{\sqrt{2}}} \lambda_{\frac{1}{\sqrt{2}}} \\
		\by{M}                                                                      & F_Q\left(
		\node{gn}{\alpha_1} \comp  \node{rn}{\alpha_2} \comp  \node{gn}{\alpha_3}\right) \tensor
		\lambda_{e^{-\frac{i}{2}\left(\alpha_1 + \alpha_2 + \alpha_3\right)}} \\
		\by{\ref{lemZQRetranslateXNode}, \ref{lemZQRetranslateZNode}}               &
		\left(\node{qn}{\alpha_1,z} \comp  \node{qn}{\alpha_2,x} \comp  \node{qn}{\alpha_3,z}\right) \tensor
		\lambda_{e^{-\frac{i}{2}\left(\alpha_1 + \alpha_2 + \alpha_3\right)}} \lambda_{e^{i\alpha_1/2}} \lambda_{e^{i\alpha_2/2}} \lambda_{e^{i\alpha_3/2}} \\
		\by{M}                                                                      & \node{qn}{\alpha_1,z} \comp  \node{qn}{\alpha_2,x} \comp  \node{qn}{\alpha_3,z}                      \\
		\by{Q}                                                                      & \node{qn}{\left(\alpha_1, z\right) \times \left(\alpha_2, x\right) \times \left(\alpha_3 , z\right)}
	\end{align}
\end{proof}

Finally we need the following lemma for Proposition~\ref{propZQRetranslateLambda}.

\begin{lemma} \label{propZQRetranslateRG}
	\begin{align}
		ZQ \entails F_Q\left(\rg{\beta}{-\beta}\right) = \lambda_{\sqrt{2}\cos \beta}
	\end{align}
\end{lemma}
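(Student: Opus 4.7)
The plan is to unpack the translation $F_Q$ applied to $\rg{\beta}{-\beta}$ into its constituent pieces, collapse the diagrammatic part using Lemma~\ref{lemZQAlphaHBeta}, and then collect all the resulting scalars using the multiplication rule $M$. Concretely, since $\rg{\beta}{-\beta}$ is a ZX scalar built from a green spider of phase $\beta$ joined by a Hadamard edge to a green spider of phase $-\beta$, applying $F_Q$ (see Figure~\ref{figZQZXTranslation}) produces the ZQ scalar diagram consisting of a Z node expansion labelled $\beta$, a $\node{qn}{H}$ edge, and a Z node expansion labelled $-\beta$, together with the accumulated global scalars $\lambda_{e^{i\beta/2}}$, $\lambda_{e^{-i\beta/2}}$, and $\lambda_i$ coming from the two green spiders and from the Hadamard node respectively.

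Next, I would apply Lemma~\ref{lemZQAlphaHBeta}, which evaluates precisely such a configuration of ``Z phase $\alpha_1$ --- $H$ --- Z phase $\alpha_2$'' to the scalar $\lambda_{-\sqrt{2}\left(\sin\frac{\alpha_1+\alpha_2}{2} + i\cos\frac{\alpha_1-\alpha_2}{2}\right)}$. Substituting $\alpha_1 = \beta$ and $\alpha_2 = -\beta$ gives $\sin 0 + i\cos\beta = i\cos\beta$, so the diagrammatic part reduces to $\lambda_{-\sqrt{2}\,i\cos\beta}$.

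Finally, collecting all the scalars via rule $M$ (and $I_1$ where needed to absorb $\lambda_1$) yields
\begin{align*}
\lambda_{e^{i\beta/2}}\,\lambda_{e^{-i\beta/2}}\,\lambda_i\,\lambda_{-\sqrt{2}\,i\cos\beta}
\;\by{M}\; \lambda_{1}\,\lambda_i\,\lambda_{-\sqrt{2}\,i\cos\beta}
\;\by{M}\; \lambda_{\sqrt{2}\cos\beta},
\end{align*}
using $i\cdot(-\sqrt{2}\,i\cos\beta)=\sqrt{2}\cos\beta$, which is the desired right-hand side.

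There is no real obstacle here; the only thing to be careful about is bookkeeping of the scalar factors produced by $F_Q$ on the two green spiders and on the Hadamard node, and making sure the $i$ from the Hadamard cancels with the $-i$ coming out of Lemma~\ref{lemZQAlphaHBeta} to produce a positive real scalar. The rest is a direct application of previously established results, so the proof should be only two or three steps long.
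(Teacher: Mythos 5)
Your proposal is correct and takes essentially the same route as the paper: the paper cancels the two phase scalars with (M), fuses the quaternion chain with (Q), and evaluates the resulting scalar diagram with (A) — which is exactly the content of Lemma~\ref{lemZQAlphaHBeta} that you invoke instead — before collecting $\lambda_i$ and $\lambda_{-\sqrt{2}i\cos\beta}$ into $\lambda_{\sqrt{2}\cos\beta}$. The only difference is packaging (citing the lemma versus inlining its proof steps), plus the minor bookkeeping of removing identity Z spiders via $I_z$/$S$ before the lemma applies, which you gloss but which is harmless.
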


\begin{proof} \label{prfPropZQRetranslateRG}
	\begin{align}
		LHS =  & {\vc{\InputIfFileExists{./figures/ZQ/q_rgbb.tikz}{}{Missing file!}}} \lambda_{e^{i\beta_2}} \lambda_{e^{-i\beta_2}} \lambda_i
		\by{M} {\vc{\InputIfFileExists{./figures/ZQ/q_rgbb.tikz}{}{Missing file!}}} \lambda_i
		\by{Q} {\vc{\begin{tikzpicture}
	\begin{pgfonlayer}{nodelayer}
		\node [style=qn] (0) at (0, 0) {$(\beta,z) \times H \times (-\beta , z)$};
		\node [style=Z] (3) at (0, 0.5) {};
		\node [style=Z] (6) at (0, -0.5) {};
	\end{pgfonlayer}
	\begin{pgfonlayer}{edgelayer}
		\draw (3.center) to (6.center);
	\end{pgfonlayer}
\end{tikzpicture}
}} \lambda_i \\
		\by{A} & \lambda_{\sqrt{2}\cos \beta}
	\end{align}
\end{proof}

\begin{proposition} \label{propZQRetranslateLambda}
	\begin{align}
		ZQ \entails F_Q\left(F_X\left(\lambda_{c}\right)\right) = \lambda_{c}
	\end{align}
\end{proposition}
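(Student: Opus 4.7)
The plan is to reduce to the canonical form of Lemma~\ref{lemRealNumberCanonicalForm}, apply the generator-wise definitions of $F_X$ and $F_Q$ from Figure~\ref{figZQZXTranslation}, and then multiply together the scalars supplied by the re-translation lemmas already established. Write $c = (\sqrt{2})^n e^{i\alpha} \cos \beta$ with $n,\alpha,\beta$ determined uniquely by Lemma~\ref{lemRealNumberCanonicalForm} (treating the edge case $c=0$ by the convention $n=0,\alpha=0,\beta=\pi/2$, which already gives the desired $\lambda_0$). Then by definition
\begin{align}
F_X(\lambda_c) = \galpharpi{\alpha} \;\halfblobs\; (\rg{}{\pi})^{\tensor n} \;\rg{\beta}{-\beta},
\end{align}
which is a horizontal tensor of scalar subdiagrams, so $F_Q$ acts on each factor independently.

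I would then evaluate $F_Q$ on each factor using results already in hand. Lemma~\ref{propZQRetranslateGammaAlphaPi} deals with $\galpharpi{\alpha}$, and Lemma~\ref{lemZQRetranslateTripleBlobs} handles the two copies of $\tripleblobs$ making up $\halfblobs$, giving $\lambda_{1/2}$ after an application of rule $M$. Lemma~\ref{propZQRetranslateRG} disposes of $\rg{\beta}{-\beta}$, yielding $\lambda_{\sqrt{2}\cos\beta}$. The one ingredient not directly stated is the value of $F_Q(\rg{0}{\pi})$; I would prove this as a short sublemma along the lines of Lemma~\ref{lemZQAlphaHBeta}, expanding the green spiders via rule $S$ and $I_z$, fusing the two $H$-quaternion edges through rule $Q$, and then collapsing the resulting diagram via rule $A$ to obtain $\lambda_{\sqrt{2}}$ (matching the fact that $\interpret{\rg{0}{\pi}} = \sqrt{2}$). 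Iterating this $n$ times through the tensor, and applying rule $M$ repeatedly, produces a factor of $\lambda_{(\sqrt{2})^n}$.

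The final step is a bookkeeping calculation: combining the scalar contributions via rule $M$ and verifying that the sign and phase prefactors coming from $\galpharpi{\alpha}$ and from $\halfblobs$ cancel against each other to leave exactly $\lambda_{(\sqrt{2})^n e^{i\alpha}\cos\beta} = \lambda_c$. Concretely, the product is
\begin{align}
\lambda_{-\sqrt{2}\,e^{i\alpha/2}} \cdot \lambda_{1/2} \cdot \lambda_{(\sqrt{2})^n} \cdot \lambda_{\sqrt{2}\cos\beta},
\end{align}
which after rule $M$ collapses to a single $\lambda_{(\sqrt{2})^n e^{i\alpha}\cos\beta}$ up to the signs and square roots that must be reconciled with the precise statement of Lemma~\ref{propZQRetranslateGammaAlphaPi}.

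The main obstacle is the scalar bookkeeping: keeping the $\sqrt{2}$-factors, the sign from $\galpharpi{\alpha}$, and the half-angle exponentials aligned with the canonical form of Lemma~\ref{lemRealNumberCanonicalForm}. The graphical content is routine once the sublemma $F_Q(\rg{0}{\pi}) = \lambda_{\sqrt{2}}$ is proved; virtually all the work is scalar arithmetic justified by rule $M$.
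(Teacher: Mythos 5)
Your proposal is correct and follows essentially the same route as the paper: decompose $c$ via Lemma~\ref{lemRealNumberCanonicalForm}, apply $F_Q$ factor-by-factor using Lemmas~\ref{propZQRetranslateGammaAlphaPi}, \ref{lemZQRetranslateTripleBlobs}/\ref{propZQHHH} and \ref{propZQRetranslateRG}, and collapse the scalars with rule $M$. The only difference is that your extra sublemma for $F_Q$ of the $0$--$\pi$ Hadamard-edge pair is unnecessary, since that diagram is just the $\gamma=0$ instance of Lemma~\ref{propZQRetranslateGammaAlphaPi}, which is exactly what the paper invokes to obtain the factor $\left(\lambda_{\sqrt{2}}\right)^{\tensor n}$.
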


\begin{proof} \label{prfPropZQRetranslateLambda}
	\begin{align}
		c =                            & \left(\sqrt{2}\right)^n e^{i\alpha} \cos \beta \quad \text{ for some } \alpha,\,\beta                \\
		LHS =                          & F_Q\left(\galpharpi{\alpha} \halfblobs \left(\rg{}{\pi}\right)^{\tensor n} \rg{\beta}{-\beta}\right)
		\by{\ref{propZQRetranslateGammaAlphaPi}, \ref{propZQHHH}} \lambda_{\sqrt{2}e^{i\alpha}} \lambda_{\frac{1}{\sqrt{2}}} \lambda_{\frac{1}{\sqrt{2}}} \left(\lambda_{\sqrt{2}}\right)^{\tensor n} F_Q\left(\rg{\beta}{-\beta}\right) \\
		\by{\ref{propZQRetranslateRG}} &
		\lambda_{\sqrt{2}e^{i\alpha/2}}
		\lambda_{\frac{1}{\sqrt{2}}}
		\lambda_{\frac{1}{\sqrt{2}}}
		\left(\lambda_{\sqrt{2}}\right)^{\tensor n}
		\lambda_{\sqrt{2}\cos\beta}
		\by{M}  \lambda_{\left(\sqrt{2}\right)^n e^{i\alpha} \cos \beta} = \lambda_c
	\end{align}
\end{proof}

We have shown that for each of the generators of ZQ, $\ZQ \entails F_Q\left(F_X\left(G\right)\right) = G$,
and since $F_Q$ and $F_X$ are monoidal functors we know that $\ZQ \entails F_Q\left(F_X\left(D\right)\right) = D$ for any diagram $D$.
This concludes our proof of completeness for the rules of ZQ.

\section{Summary}

This chapter introduced ZQ,
a complete graphical calculus that extends the phase group concept of ZX.
ZQ cannot fit into the categorical framework
of $\Sigma$ Graphical Calculi given in the previous chapter,
because the group action of ZQ cannot be expressed as a monoid
(and therefore also cannot be expressed as a spider)
and is the first qubit graphical calculus with a non-commutative phase group.
ZQ allows for the expression of arbitrary single qubit rotations in the form of quaternions,
similar to the compiler TriQ.

This expression of arbitrary rotations in ZQ allows for a clearer presentation
of the Euler Angle structure than ZX exhibits in the rule (EU').
The author feels this clarity,
and the clarity of the ZQ rules in general,
lends ZQ to replacing ZX for pedagogy and research purposes.

In the following chapters we will return to the idea of conjecture synthesis,
exploring how conjecture synthesis interacts with our new calculi,
and how we can use these results to further
improve our inference and verification methods.
Future work based on this chapter would include
implementing ZQ in a graphical proof assistant,
or exploring the category of models of $\groupprop_G$.
The author also hopes that ZQ will act as a bridge between the optimisation
methods of both TriQ and of ZX (such as Refs~\cite{Backens2020circuit}, \cite{Kissinger2019Reducing}).  \thispagestyle{empty} 
\chapter{Conjecture Inference} \label{chapConjectureInference}
\thispagestyle{plain}
\noindent\emph{In this chapter:}
\begin{itemize}
	\item[\chapterbullet] We introduce conjecture inference for graphical calculi
	\item[\chapterbullet] We provide a link between phase algebras and algebraic geometry
	\item[\chapterbullet] We use this geometric link to infer rules from sample instances of those rules
	\item[\chapterbullet] We present a simple method for adding !-boxes and phase variables
	      to diagram boundaries
\end{itemize}

Graphical Conjecture Synthesis\footnote{Outside of the contributions of this thesis. See \S\ref{chapCoSy}.} produces simple equational theorems,
i.e. theorems of the form $L=R$
where $L$ and $R$ are diagrams containing neither !-boxes nor phase variables
(see \S\ref{secSimpleDiagrams}).
In all the calculi we have encountered
their complete rulesets can be expressed using either an infinite number
of simple rules, or a finite number (around 20) of decorated rules.
We therefore would like to know the answer to
`how can we find rules containing !-boxes and phase variables?'
We break this down into two situations:
\begin{itemize}
	\item Finding these rules with certainty, and
	\item Finding these results in a way we can easily verify (e.g. by checking a few more examples).
\end{itemize}
This chapter covers the inference aspect of this process,
with the next chapter (\S\ref{chapConjectureVerification}) devoted to the task of verification.
Our aim in conjecture inference is given in the definition below,
but we note that it is simply a small part of the greater philosophical
program called `The Problem of Induction'.
This Problem began with David Hume \cite[\S1.iii.6]{Hume}
and concerns the construction of new ideas from existing evidence.

\begin{definition}[Conjecture Inference] \label{defConjectureInference}
	Conjecture Inference is the algorithmic act of hypothesising
	families of theorems, given a few examples from that family.
\end{definition}

In this chapter we also demonstrate a link between phase algebras and Algebraic Geometry (\S\ref{secInferringPhaseVariables}).
This link is novel, useful, and neatly generalises existing concepts in the literature.
We will use this link to clarify the relationship between rules with and without side conditions,
explore concepts like rebasing of variables,
and then rely on geometric ideas as a core part of our inference process.
This geometric framework builds on
the earlier ideas of phase homomorphism pairs (\S\ref{chapPhaseRingCalculi}),
where these pairs will be rephrased as symmetries of the geometric space.
This idea will also be carried forward
when we come to discuss verification:
The geometric interpolation discussed in this chapter
is directly related to the interpolation discussed
in the next,
and we combine these ideas with the above symmetries
(and some Galois Theory)
for a pleasantly surprising result in \S\ref{secPhaseVariablesOverQubits}.

\section{Adding parameters at boundaries} \label{secInferenceDeductive}

There is a very simple way to hypothesise adding !-boxes and phase variables to certain boundary spiders in ZX, ZH, ZW, and ZQ rules,
and what's more the method comes with a proof of correctness,
meaning that we don't need to verify any more examples.
For each of our calculi we can use spider laws to infer more general results from simple equations.
These results are included in part for posterity
(any proof assistant that uses these calculi should ideally be capable of automatically inferring variations on Proposition~\ref{propZXAddBang})
but also as a contrast to the methods of \S\ref{secInferringPhaseVariables}.
The hypotheses inferred in \S\ref{secInferringPhaseVariables} do not come with proofs of correctness
and must be verified in some manner (the topic of \S\ref{chapConjectureVerification}).

In order to signify that this process can be applied at the boundary of a diagram
we use the following to signify an arbitrary diagram $D$:

\begin{align}
	\spider{box}{D}
\end{align}

\begin{proposition} \label{propZXAddBang}
	A ZX$_G$ equation of the form
	\begin{align}
		\vc{\InputIfFileExists{./figures/ZX/bbox_add_tl.tikz}{}{Missing file!}}
		=
		\vc{\InputIfFileExists{./figures/ZX/bbox_add_tr.tikz}{}{Missing file!}} &
		\begin{split}
			b, c \text{ fixed phases} \in G
			\\ \text{non-zero, fixed number of inputs}
			\\ D, D' \text{ arbitrary diagrams}
		\end{split}
	\end{align}
	implies and is implied by the family of ZX$_G$ equations parameterised over the phase variable $\alpha$ and a !-box:
	\begin{align}
		\family{
			\vc{\InputIfFileExists{./figures/ZX/bbox_add_bl.tikz}{}{Missing file!}}
			=
			\vc{\InputIfFileExists{./figures/ZX/bbox_add_br.tikz}{}{Missing file!}}
		}_{\alpha, !} &
		\begin{split}
			\alpha \text{ phase variable } \in G
			\\ \text{!-box on inputs}
		\end{split}
	\end{align}
	This holds for Z and for X spiders, providing the spider is the same colour on both sides of the rule.
\end{proposition}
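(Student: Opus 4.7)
The plan is to prove both directions of the equivalence by exploiting the spider law, which permits adjacent spiders of the same colour to be fused and unfused while their phases add. This is purely deductive: no semantic verification is needed, which is why this proposition is included as a contrast to the inferential results later in the chapter.

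For the backward direction (parameterised family implies simple), I would specialise the parameters. Setting $\alpha = 0$ and instantiating the !-box to the copy-count matching the simple equation's fixed input arity yields exactly the simple equation as a single instance of the parameterised family. No manipulation is required beyond the definitional semantics of evaluation and !-box instantiation from \S\ref{secSimpleDiagrams}.

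For the forward direction (simple implies parameterised), fix an arbitrary instantiation of the !-box and an arbitrary value of $\alpha \in G$. The key move is spider unfusion at the boundary: by the spider law, the boundary spider carrying phase $\alpha + b$ together with all its inputs can be rewritten as a phase-$\alpha$ spider on the !-box-derived wires connected by a single wire to a phase-$b$ spider of the original input arity attached to $D$. The simple equation now applies locally to the subdiagram consisting of the phase-$b$ spider with $D$, rewriting it to the phase-$c$ spider with $D'$. A second application of the spider law refuses the phase-$\alpha$ and phase-$c$ spiders into a single phase-$\alpha + c$ spider on all the wires, giving the right-hand side of the parameterised equation. The hypothesis that both sides use a boundary spider of the same colour is essential: only then can an ambient phase-$\alpha$ spider be attached via the spider law on both sides so that the unfuse-rewrite-refuse chain is sound.

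The main obstacle is handling the !-box rigorously. The informal argument above treats the bundle of !-box inputs as an ordinary collection of wires, but justifying unfusion and refusion in the presence of a !-box requires either (i) an induction on the number of !-box instantiations, with the inductive step just being a single application of the spider law, or (ii) an appeal to the pattern-graph framework of Ref.~\cite{Kissinger2012PatternGR}, where the spider law itself can be phrased as a rewrite with matching !-boxes on both sides, so that unfusion across a !-box is immediate. Either route is routine, and analogous statements for $\ZH$, $\ZW$, and $\ZQ$ follow by the same argument once the appropriate spider (or $Q$-node composition) rule is substituted for the ZX spider law.
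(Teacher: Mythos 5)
Your proposal is, in essence, the paper's own argument: the paper proves the forward direction by composing a spider $s$ of phase $(\alpha-b)$, with as many outputs as the simple rule has inputs and arbitrarily many inputs, onto both sides of the simple equation and fusing, while you run the same manipulation in the opposite direction (fix an instantiation, unfuse, rewrite by the simple equation, refuse); the converse is instantiation in both cases (the paper's figures generalise $b$ to $\alpha$ rather than adding $\alpha$ to both phases, so it instantiates $\alpha=b$ instead of your $\alpha=0$, a shift of variable with no mathematical content). One step of your forward direction is stated imprecisely, though: if the simple rule's boundary spider has $k>1$ inputs, the unfusion must leave the phase-$b$ spider with all $k$ of those wires, now running in parallel to the ambient phase-$\alpha$ spider, since the simple rule's left-hand side has $k$ open inputs and will not match a phase-$b$ spider attached by ``a single wire''; your own phrase ``of the original input arity'' indicates the right picture, but the single-wire wording would make the local application fail for $k>1$. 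Also note that no induction or pattern-graph machinery is needed for the !-box: having fixed its instantiation at the outset, the !-box is already gone, which is exactly how the paper dispenses with it (the ambient spider ``can have any number of inputs'').
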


\begin{proof}
	We can apply a spider $s$ with phase $(\alpha - b)$ to both sides of the top rule $r$.
	The spider $s$ has as many outputs as $r$ has inputs, and can have any number of inputs.
	For the converse we specify the !-box to have has many instances as $r$ has inputs,
	and we instantiate $\alpha$ to the value $b$.
\end{proof}

\begin{remark} \label{remAddingBBoxGeneralising}
	Proposition~\ref{propZXAddBang} and its proof both generalise easily to the spiders of ZW.
	For ZQ one can generalise either the phase of a $Q$ node, or add a !-box to a Z-spider,
	depending on which is joined to the boundary.
	In $\ring_R$ one can add a !-box to the inputs or outputs
	of the multiplication spider, or to the inputs of the addition spider.
	In both cases you can also generalise the phase.
	The Z-spider of ZH has a phase-free analogue of Proposition~\ref{propZXAddBang},
	however the H-box result requires a little more work and is presented after Lemma~\ref{lemZHNBox}.
\end{remark}

\begin{restatable}[n-joined ZH spider law]{lemma}{lemZHNBox}
	\label{lemZHNBox}
	In the ZH calculus:
	\begin{align}
		\vc{\InputIfFileExists{./figures/ZH/ZH_S1_gen_l.tikz}{}{Missing file!}}
		=
		\vc{\begin{tikzpicture}
	\begin{pgfonlayer}{nodelayer}
		\node [style=none] (o0) at (-0.5] (o1) at (0.5] (s) at (0] (scalar) at (2.75] (2) at (0] (3) at (0] (i0) at (-0.5] (i1) at (0.5, -0.5) {};
	\end{pgfonlayer}
	\begin{pgfonlayer}{edgelayer}
		\draw (s) to (o0.center);
		\draw (s) to (o1.center);
		\draw (s) to (i0.center);
		\draw (s) to (i1.center);
	\end{pgfonlayer}
\end{tikzpicture}
}
		  &
		\begin{split}a,b \in \mathbb{C} \less \set{1}
			\\ n > 0
		\end{split}
	\end{align}
\end{restatable}

Since the proof isn't relevant to the discussion of conjecture inference
we defer it to \S\ref{secnfoldZH}.

\begin{proposition} \label{propZHbbox}
	A simple ZH equation of the form:
	\begin{align} \label{eqnPropZHbbox}
		\vc{\InputIfFileExists{./figures/ZH/bbox_add_zh_tl.tikz}{}{Missing file!}}
		=
		\vc{\InputIfFileExists{./figures/ZH/bbox_add_zh_tr.tikz}{}{Missing file!}}
		  &
		\begin{split}
			b, c \text{ fixed phases } \in \bb{C} \less \set{1}
			\\n \text{, non-zero number of inputs}
			\\D, D' \text{ arbitrary diagrams}
		\end{split}
	\end{align}
	implies and is implied by the family of equations parameterised over $\alpha$ and a !-box:
	\begin{align}
		\family{
			\vc{\InputIfFileExists{./figures/ZH/bbox_add_zh_bl.tikz}{}{Missing file!}} =
			\vc{\InputIfFileExists{./figures/ZH/bbox_add_zh_br.tikz}{}{Missing file!}}
		}_{\alpha, !}
		  &
		\begin{split}
			\alpha \in \mathbb{C}
			\\\text{!-box on inputs}
		\end{split}
	\end{align}
\end{proposition}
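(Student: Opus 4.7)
The plan is to follow the same template as the proof of Proposition~\ref{propZXAddBang}, but using H-boxes instead of Z-spiders, and invoking the n-joined ZH spider law of Lemma~\ref{lemZHNBox} in place of ordinary spider fusion. The key insight is that Lemma~\ref{lemZHNBox} prescribes exactly how the phase of an H-box is transformed when another H-box is joined to it through $n$ parallel wires, allowing us to convert the fixed boundary phase into any desired phase by precomposing an auxiliary H-box of a suitably chosen phase.

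For the forward direction (simple $\Rightarrow$ parameterised), I would attach a gadget to each side of the simple equation \eqref{eqnPropZHbbox}: an auxiliary H-box whose inputs are wrapped in the shared !-box, whose $n$ outputs are joined to the $n$ wires feeding the boundary H-box of phase $b$ (resp.\ $c$), and whose phase is chosen as $\lambda_L(\alpha, b)$ (resp.\ $\lambda_R(\alpha, c)$) so that, after applying Lemma~\ref{lemZHNBox}, the fused pair collapses into a single H-box of phase $\alpha$ with !-box-many inputs. This construction transforms the simple equation \eqref{eqnPropZHbbox} into the required parameterised family member-by-member as the !-box is instantiated, since the surrounding diagrams $D$ and $D'$ are untouched.

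For the converse direction (parameterised $\Rightarrow$ simple), I would instantiate the !-box to exactly $n$ copies and specialise $\alpha$ to the value that undoes the gadget transformation on both sides simultaneously, recovering the original boundary phases $b$ and $c$ as in Proposition~\ref{propZXAddBang}. This is essentially a substitution argument, and is where the shared variable $\alpha$ on the two sides of the family is justified: the specific value of $\alpha$ chosen pins down the left- and right-hand sides to the original simple diagrams by inverting the fusion of Lemma~\ref{lemZHNBox}.

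The main obstacle I anticipate is verifying that the phase functions $\lambda_L(\alpha, b)$ and $\lambda_R(\alpha, c)$ are well-defined for every target phase $\alpha$ of interest, which amounts to showing that the phase-composition map coming from Lemma~\ref{lemZHNBox} is invertible in the gadget-phase slot. The hypothesis $b, c \in \bbC \setminus \set{1}$ in the statement of the proposition mirrors exactly the same restriction in Lemma~\ref{lemZHNBox}, which strongly suggests that invertibility fails precisely at $b = 1$ or $c = 1$. Once this invertibility is checked by a direct calculation from Lemma~\ref{lemZHNBox}, the rest of the argument reduces to routine diagrammatic manipulation in the ZH calculus, analogous to the Z-spider fusion step used in Proposition~\ref{propZXAddBang}.
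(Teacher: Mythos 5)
Your overall strategy is the paper's: compose a !-boxed H-box gadget onto the inputs of the boundary H-boxes, fuse using Lemma~\ref{lemZHNBox}, and obtain the converse by instantiating the !-box and specialising $\alpha$. However, there is a genuine flaw in your forward direction: you attach \emph{different} gadgets to the two sides of \eqref{eqnPropZHbbox} (phase $\lambda_L(\alpha,b)$ on the left, $\lambda_R(\alpha,c)$ on the right) so that both sides fuse to an H-box of phase $\alpha$. Composing the two sides of a sound equation with two different diagrams does not yield a sound equation: from $\interpret{L}=\interpret{R}$ you may only conclude $\interpret{L\circ g}=\interpret{R\circ g}$ for one and the same $g$, and H-boxes of equal arity with distinct labels have distinct interpretations, so $\lambda_L\neq\lambda_R$ breaks the step. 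Moreover the family you would produce, with the same phase $\alpha$ on both boundary H-boxes, cannot be the family in the statement: your own converse would then need a single value of $\alpha$ equal to $b$ on the left and to $c$ on the right, which is impossible whenever $b\neq c$.

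The paper's proof (like its ZX analogue, Proposition~\ref{propZXAddBang}) applies \emph{one} gadget to both sides: a single H-box with !-boxed inputs whose phase is chosen, using $b\neq 1$ and the fusion formula of Lemma~\ref{lemZHNBox}, so that fusion with the $b$-labelled H-box yields the left-hand phase of the family; the same gadget then fuses with the $c$-labelled H-box (this is where $c\neq 1$ is also needed) to whatever phase the lemma dictates, and that phase is exactly what appears on the right-hand side of the parameterised family --- it is determined by the construction, not prescribed to be $\alpha$. The converse is then the substitution $\alpha=b$ with the !-box instantiated $n$ times, recovering $b$ on the left and automatically $c$ on the right. Your observation about invertibility of the fusion in the gadget slot is correct and is precisely where the hypothesis $b,c\in\bbC\less\set{1}$ enters, but it only needs to be solved on one side; the fix is to use the same gadget on both sides and let the right-hand phase fall out of Lemma~\ref{lemZHNBox}.
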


\begin{proof}
	Apply the following diagram to the bottom of each side of Equation~\eqref{eqnPropZHbbox}:

	\begin{equation}
		\vc{\InputIfFileExists{./figures/ZH/bbox_add_proof.tikz}{}{Missing file!}}
	\end{equation}

	Then apply lemma \ref{lemZHNBox}.
	As with the ZX case we justify the !-box by saying that the above is true for any number of inputs and value of $\alpha$.
	For the converse we specify the !-box to have has many instances as needed, and we set $\alpha$ to $b$.
\end{proof}


\section{Linear rules, parameter spaces, and algebraic geometry} \label{secInferringPhaseVariables}

We move now to a method of conjecture inference that does not come with a proof of correctness.
The fundamental idea behind this method is to transform the problem of conjecture inference
into a problem in algebraic geometry.
Rather than seeing a parameterised family of diagrams as a \emph{set} of diagrams
we see them as points defining a surface in some affine space.
The question of conjecture inference then becomes `given a sample of these points, can I reconstruct the surface?'
In order to frame this question correctly we begin with some necessary definitions (\S\ref{secSkeletons}),
show how to view parameterised families as submodules (\S\ref{secEquationsAsSubmodules}),
show how to get back to parameterised equations from submodules (\S\ref{secSubmodulesAsEquations}),
and finally show how to find submodules from sample points (\S\ref{secFindingSubmodules}).

One way of thinking about the content of this section is that in the usual presentation
of our calculi and rules we have as much information as possible
in the algebra written as phases, and as little information as possible in the side conditions.
This section describes how to move from the algebraic paradigm
into a geometric one; describing everything as restrictions on a space (i.e. as side conditions).
In the case of linear rules over $\bbZ$-modules we show how to go in the other direction
as well (from geometry to algebra),
describing the linear geometric constraints using just the algebra in the phases.
This concept neatly generalises the idea of linear rules for ZX already present in the literature.

\subsection{Skeletons and parameter spaces}

We will assume for the rest of this chapter that the phase algebra we are considering is $R$-linear for some commutative ring $R$.
In the case of ZX we would consider the phase group as a $\bbZ$-module, for $\ring_R$, $\ZH_R$ and $\ZW_R$
we would consider the phase ring as an $R$-module.
$\ZQ$ can be considered as an $\bbR$-module,
where a phase is an element of $\bbR^4$,
but only some of the points in the space result in valid phases.

The key to this section is the novel separation of a diagram
into its skeleton 
and parameter space.
The skeleton holds all the graph information except for the phases,
and the parameter space holds the information from the phase algebra.

\label{secSkeletons}
\begin{definition}[Skeleton]
	The \emph{skeleton} of a diagram, written $|D|$, is the phase-free version of that diagram.
	The skeleton of a rule is the pair of skeletons from its two diagrams.
\end{definition}

\begin{example}[ZX diagram skeleton] \label{exaSkeletonOfADiagram}
	An example diagram and skeleton:
	\begin{align}
		D \quad& \spider{gn}{\alpha+\pi} & \abs{D} \quad& \spider{gn}{}
	\end{align}
\end{example}

\begin{definition}[Parameter space] \label{defParameterSpace}
	The \emph{parameter space} of a skeleton is the direct product (in $R$-mod)
	of all the phase algebras of the nodes in the skeleton.

	\begin{align}
		P := \prod_{\text{nodes in the skeleton}} \set{\text{Phase algebra of that node}}
	\end{align}

	The parameter space of a rule is the direct product of the parameter spaces of each diagram.
\end{definition}

\begin{remark} \label{remTrivialParameterSpace}
	Some nodes contribute trivially to the parameter space.
	For example the Z spider in ZH is always blank.
	We treat this as the trivial $R$-module, $\set{0}$,
	which contributes nothing to the overall product $P$.
	As such these cases will be ignored from here on.
\end{remark}

\begin{remark} \label{remZQParameter}
	In the case of calculi such as ZQ, where the phase algebra is a four-dimensional $\bbR$-module,
	the parameter space can have many more dimensions than the rule has vertices.
	In later subsections of this chapter this may require viewing the phase of a $Q$ node as
	four distinct, real numbers rather than as a single quaternion,
	but the alterations that need to be made to the methods are small and will not be mentioned again.
	In essence they are `treat this node labelled by an element from $R^n$ as $n$ nodes each labelled by an element of $R$'.
\end{remark}

\begin{example}[Clifford+T ZX parameter space]
	Consider, as an example, the left hand side of a simplified Clifford+T ZX spider law:
	\begin{center}
		\begin{tabular}{cc}
			$D$                 & $\abs{D}$                    \\
			$\vc{\InputIfFileExists{./figures/ZX/S1_l.tikz}{}{Missing file!}}$ & $\vc{\InputIfFileExists{./figures/ZX/S1_l_skeleton.tikz}{}{Missing file!}}$
		\end{tabular}
	\end{center}
	This skeleton has parameter space:
	\begin{align}
		<\pi / 4> \times <\pi / 4> \quad \simeq \quad \bbZ_8 \times \bbZ_8
	\end{align}
	Which is the product of the two cyclic phase groups, viewed as $\bbZ$-modules, one for each node.
	Assuming we have made (or retained) a choice of which phase group corresponds to which node
	then every point in the parameter space corresponds to a diagram.
	For example the point $p = (\pi/2, 3\pi/4)$ in the parameter space corresponds to the diagram:
	\begin{align}
		\vc{\InputIfFileExists{./figures/ZX/S1_l_pi2_3pi4.tikz}{}{Missing file!}}
	\end{align}
\end{example}

\begin{definition}[Sound points in a parameter space]  \label{defParameterSpaceSound}
	A point $p$ in a parameter space is \emph{sound} if the corresponding rule,
	with phases as described by $p$, is sound (see Definition~\ref{defSimpleSound}).
	Any point in $P$ that does not correspond to a sound rule is \emph{unsound}.
\end{definition}

\subsection{Equation families as submodules} \label{secEquationsAsSubmodules}

In this subsection we view a family of equations parameterised by phase variables
as a collection of simple rules that all have the same skeleton,
and a collection of sound points in the parameter space.
This will allow us to rephrase families parameterised by phase variables as submodules of a parameter space.

\begin{example}[Parameter space of the spider rule] \label{exaParameterSpace}
	Here is a (slightly unusual) way of presenting the simplified spider rule in $\ZX_G$:

	\begin{align} \label{eqnSpiderPedantic}
		\vc{\InputIfFileExists{./figures/ZX/S1_l_skeleton_vars.tikz}{}{Missing file!}} = \vc{\InputIfFileExists{./figures/ZX/S1_r_skeleton_vars.tikz}{}{Missing file!}} &
		\begin{split}(\alpha, \beta,\gamma) \in P \\ \gamma = \alpha + \beta \end{split}
	\end{align}

	We have given the parameterisation as a collection of points in the parameter space $P$,
	subject to the condition $\alpha + \beta = \gamma$.
\end{example}

\begin{remark} \label{remParameterSpacesCommon}
	This presentation of rules is not uncommon; see for example the $EU$ rule of \cite[\S{}2.3]{VilmartZX}.
\end{remark}

We will now introduce the bare minimum of algebraic geometry; the notion of a \emph{variety}
or \emph{zero set}. It is this tool that links the two notions of algebra (in the form of a system of equations)
and geometry (in the form of a topological shape).

\begin{definition}[Affine algebraic variety {\cite[\S1.1]{InvitationAlgGeom}}]  \label{defAlgebraicVariety} 
	A complex \emph{affine algebraic variety} is the common zero set of a collection $\set{F_i}_{i \in I}$
	of complex polynomials on complex $n$-space $\bbC^n$.
\end{definition}

\begin{definition}[Linear rule] \label{defLinearRule}
	A parameterised family of rules is \emph{linear} if
	the sound points in $P$ are the zero set of a collection of linear polynomials on $P$.
\end{definition}

\begin{remark}[Parameter space restrictions as side conditions] \label{remLinearButIllDefined}
For a calculus like ZQ, where only certain points in the parameter space
correspond to valid diagrams, we impose these same
restrictions when talking about rules. I.e. we
say that a parameterised family of rules is linear
if the sound and valid points in $P$
are the intersection of the zero set of a collection of linear polynomials on $P$
with the valid points in $P$.
We could even define ZQ in such a way that
these `valid points in $P$' were phrased as side conditions
on the generators and rules.
\end{remark}

This definition is, on the surface, different to the definition of a linear rule given in \cite{BeyondCliffordT},
which we reproduce (in language consistent with the rest of this thesis) as:

\begin{definition}[Previous definition of linear rule {\cite[Definition~1]{BeyondCliffordT}}] \label{defOldLinearRule} 
	A ZX-diagram is linear in $\alpha_1, \dots, \alpha_k$ with constants in $C \subset \bbR$,
	if it is generated by Z-spiders with phase $E$, X-spiders with phase $E$,
	the Hadamard node, and compact closed wire structure, where E is of the form
	$\sum_i n_i \alpha_i + c$ with $n_i \in \bbZ$ and $c \in C$.
\end{definition}

Thankfully we can show these definitions coincide,
starting with:

\begin{proposition} \label{propLinearRules}
	Every linear ZX rule in the sense of Definition~\ref{defOldLinearRule} is a linear rule in the sense of
	Definition~\ref{defLinearRule}.
\end{proposition}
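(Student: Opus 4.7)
The plan is to exhibit, given any rule linear in the sense of Definition~\ref{defOldLinearRule}, an explicit finite collection of linear polynomials on $P$ whose zero set equals the set of parameter assignments described by that rule. Fix then such a rule with free variables $\alpha_1, \dots, \alpha_k$, constants $c_1, \dots, c_m \in C$, and integer coefficients $n_{ij}$, so that the $j$-th phased node of the skeleton carries phase $\phi_j = \sum_i n_{ij} \alpha_i + c_j$. The parameter space of Definition~\ref{defParameterSpace} is $P = G^m$ (one coordinate per phased node across both sides of the rule), and the old-style parameterisation becomes the affine $\bbZ$-linear map $\Phi : G^k \to G^m$, $\alpha \mapsto N\alpha + c$, where $N = (n_{ij})$ and $c = (c_j)$.

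Next, I would argue that the image of $\Phi$ is cut out by linear polynomials. The submodule $\text{im}(N) \subseteq G^m$ equals the kernel of the quotient map $G^m \to G^m / \text{im}(N)$; representing that quotient concretely (e.g.\ via Smith normal form when $G = \bbZ$, or more generally by choosing generators of a cokernel presentation) produces an integer matrix $C$ with $\ker(C) = \text{im}(N)$. Translating by $c$, the image of $\Phi$ is
\begin{align}
\text{im}(\Phi) \;=\; c + \text{im}(N) \;=\; \{ \phi \in P : C\phi = Cc \},
\end{align}
which is visibly the common zero set of the finitely many linear polynomials $(C\phi)_\ell - (Cc)_\ell$ on $P$. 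Hence the parameterised family produced by the old-sense rule is exactly the set of sound points specified by a linear restriction in the new sense.

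Finally I would tie this back to Definition~\ref{defLinearRule} by checking that the points of $\text{im}(\Phi)$ are exactly the ``sound points'' the old rule declares: by hypothesis, every assignment of values to $\alpha_1, \dots, \alpha_k$ yields a sound equation, so every point in $\text{im}(\Phi)$ is sound; conversely, the rule is \emph{only} claimed to hold at the instances produced by the parameterisation, so the declared sound set is exactly $\text{im}(\Phi)$. A short remark would handle the trivial bookkeeping for the case $C = \bbR$ mentioned in Definition~\ref{defOldLinearRule}, where we simply view $G$ as a $\bbZ$-module and the constants $c_j$ as elements of $G$; nothing in the argument above depends on $C$ being discrete.

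The main obstacle is a conceptual one rather than a computational one, namely ensuring that ``sound points'' in Definition~\ref{defLinearRule} is read as the set of points the rule \emph{exhibits}, not the set of all semantically valid assignments --- otherwise the inclusion $\text{im}(\Phi) \subseteq \{\phi : \interpret{L(\phi)} = \interpret{R(\phi)}\}$ could in principle be strict, and the proposition would fail unless one also established the reverse. In that reading the argument reduces to the standard fact that the image of an affine map of finitely generated abelian groups is cut out by its syzygies, which is exactly what the matrix $C$ supplies. This also makes Remark~\ref{remConverseToLemLinearRules} natural, since the ``converse'' direction amounts to going from a linear presentation of $\text{im}(\Phi)$ back to an explicit parameterisation.
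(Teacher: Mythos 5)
Your proof is correct, and it in fact does a little more than the paper's own proof. The paper's argument is a one-liner: index the phased spiders by $j \in J$, let $\beta_j$ be the corresponding coordinate of the parameter space, and take the defining system to be $\{\beta_j = \sum_i n_{i,j}\alpha_i + c_j\}_{j \in J}$ --- i.e.\ it stops at a description that still contains the rule's phase variables $\alpha_i$, which are not coordinates on $P$, so read literally it exhibits the parameterised family only as (the projection of) a zero set in an extended space. Your version carries out exactly the step that is left implicit there: you eliminate the $\alpha_i$ by presenting $c + \mathrm{im}(N)$ as $\{\phi : C\phi = Cc\}$ for a syzygy matrix $C$, which is what Definition~\ref{defLinearRule} literally demands (linear polynomials on $P$ itself). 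Two remarks. First, your worry about how to read ``sound points'' is legitimate, and the reading you adopt is the paper's intended one: Example~\ref{exaLinearRule} is declared linear even though Example~\ref{exaMultipleLinearRules} shows that the full semantically sound set on that skeleton is a union of four affine subspaces, so ``sound points'' must refer to the points the parameterised family exhibits, exactly as you assume. Second, the parenthetical ``Smith normal form when $G=\bbZ$'' is slightly off target, since the ZX phase group is never $\bbZ$; what is actually needed (and what works) is the Smith normal form of the integer matrix $N$ acting on $G$, together with the fact that for the fragments in play --- finite cyclic groups and the full circle --- the quotient $G^m/\mathrm{im}(N)$ embeds into some $G^l$, so that a matrix $C$ with $\ker C = \mathrm{im}(N)$ genuinely exists; for an arbitrary subgroup $G \subset [0,2\pi)$ this step is not automatic, but the paper glosses over the same point. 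With that caveat your argument is a more rigorous rendering of the same basic construction, and it also makes clear why the converse direction (Proposition~\ref{propAlgorithmForSubmodule}, summarised in Corollary~\ref{corConcidingLinearDefinitions}) is where the real normal-form work happens.
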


\begin{proof} \label{prfPropLinearRules}
	Give every spider in the linear rule an index $j \in J$, which as per Definition~\ref{defOldLinearRule}
	has phase $\sum_{i} n_{i,j} \alpha_i + c_j$.
	Assign to each spider the parameter $\beta_j$,
	and construct the system of linear equations (as in Definition~\ref{defLinearRule})
	as the set $\set{\beta_j = \sum_{i} n_{i,j} \alpha_i + c_j}_{j \in J}$
\end{proof}

\begin{remark} \label{remConverseToLemLinearRules}
	The converse to Proposition~\ref{propLinearRules} is given as Corollary~\ref{corConcidingLinearDefinitions}
	later on in this chapter.
	While this shows that the two definitions are equivalent in the case of $\ZX_G$,
	we prefer the geometric definition because it can be generalised to other calculi;
	for example it is unclear how to generalise Definition~\ref{defOldLinearRule} to the case of ZQ.
\end{remark}

\begin{example}[Linear Clifford+T ZX rule] \label{exaLinearRule}
	The following Clifford+T $\ZX$ rule is linear in the sense of Definition~\ref{defLinearRule}:
	\begin{align}
		\vc{\begin{tikzpicture}
	\begin{pgfonlayer}{nodelayer}
		\node [style=none] (l) at (-1] (r) at (2] (0) at (0] (1) at (1] (l) to (r);
	\end{pgfonlayer}
\end{tikzpicture}} = \vc{\begin{tikzpicture}
	\begin{pgfonlayer}{nodelayer}
		\node [style=none] (l) at (-1] (r) at (2] (0) at (1] (1) at (0] (l) to (r);
	\end{pgfonlayer}
\end{tikzpicture}} &
		\begin{split}
			\set{\alpha = 0, \delta = 0, \beta - \gamma = 0}
		\end{split}
	\end{align}
	The affine, linear $\bbZ$-submodule described by the linear polynomials is $\bbZ(0,\pi/4,\pi/4,0)$.
\end{example}

\begin{example}[Rule with multiple linear subspaces] \label{exaMultipleLinearRules}
	The following Clifford+T $\ZX$ rule (although very similar to Example~\ref{exaLinearRule}) is not linear, but the union of four different sound (up to scalar), linear rules
	on the same skeleton:
	\begin{align}
		\vc{} = \vc{} &
		\begin{split}
			\set{\alpha = 0, \delta = 0, \beta - \gamma = 0}  \\
			\set{\beta = 0, \gamma = 0, \alpha - \delta = 0} \\
			\set{\alpha - \pi = 0, \delta - \pi = 0, \beta + \gamma = 0} \\
			\set{\beta - \pi = 0, \gamma - \pi = 0, \alpha + \delta = 0}
		\end{split}
	\end{align}

	This is because it is sound on the following linear varieties of $P$:

	\begin{align}
		\text{defining polynomials}                  & \qquad \text{affine submodule} \nonumber                            \\
		\alpha = 0, \delta = 0, \beta = \gamma       & \qquad (0,0,0,0) + \bbZ(0,\pi/4,\pi/4,0)  \label{eqnEarlierVariety} \\
		\beta = 0, \gamma = 0, \alpha = \delta       & \qquad (0,0,0,0) + \bbZ(\pi/4, 0, 0, \pi/4)                         \\
		\alpha = \pi, \delta = \pi, \beta  = -\gamma & \qquad (\pi,0,0,\pi) + \bbZ(0,\pi/4,-\pi/4,0)                       \\
		\beta = \pi, \gamma = \pi, \alpha = -\delta  & \qquad(0,\pi,\pi,0) + \bbZ(\pi/4,0,0,-\pi/4)
	\end{align}
	For example the rule given by this skeleton and just the variety given by \eqref{eqnEarlierVariety} is Example~\ref{exaLinearRule}.
\end{example}

\begin{remark} \label{remFourLinearSubspaces}
	The rule in Example~\ref{exaMultipleLinearRules} can be seen as applications of the spider and
	$\pi$-commutation rules,
	and their colour-swapped counterparts,
	(see (S2) and (K) of Ref.~\cite{BeyondCliffordT}) representing the four linear subspaces we have exhibited.
\end{remark}

Having shown how to view families of equations as submodules
we now work backwards; from submodules to families of equations.
The benefit of the submodule viewpoint is deferred until
\S\ref{secFindingSubmodules},
where we examine how to hypothesise larger submodules
containing a known collection of points.

\subsection{Submodules as families of equations} \label{secSubmodulesAsEquations}

Assume we have been given a sound, affine submodule $q+Q$ of $P$.
In the case where $Q$ is finitely generated,
and $R$ admits an Euclidean function,
we can find the Hermite Normal Form to simplify relationships between generators.
(In the case where $R$ is a field we can do Gaussian Elimination, and the following is even simpler.)

\begin{proposition} \label{propAlgorithmForSubmodule}
	It is possible to describe any affine, linear $R$-submodule of a parameter space
	as a single diagram with phases that are linear in some set of phase variables.
	Additionally we can ensure the set of phase variables
	is of minimal cardinality.
\end{proposition}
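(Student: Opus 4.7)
The plan is to reduce the problem to finding a minimal generating set of the submodule $Q$, and then to use such a generating set to write down phases that are linear combinations of fresh phase variables.

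First I would pick a generating set for $Q$ (which exists by hypothesis). Then, using the Hermite Normal Form algorithm (when $R$ admits a Euclidean function, as the excerpt mentions just before the statement), I would reduce this generating set to a minimal one $v_1, \ldots, v_k \in P$. Here $k$ is the minimal number of generators of $Q$ as an $R$-module; over a field this is simply $\dim_R Q$, and over $\bbZ$ this is the rank plus the number of torsion summands obtained from Smith Normal Form. Every element of $q + Q$ may then be written uniquely (up to relations among the $v_i$) as
\begin{align}
q + \sum_{i=1}^{k} \alpha_i\, v_i, \qquad \alpha_i \in R.
\end{align}

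Next I would introduce one fresh phase variable $\alpha_i$ for each $v_i$. For each node $j$ of the skeleton, the $j$-th coordinate of the above expression is
\begin{align}
q_j + \sum_{i=1}^{k} \alpha_i (v_i)_j,
\end{align}
which is linear in $\alpha_1, \ldots, \alpha_k$ with coefficients in $R$. Labelling node $j$ of the skeleton with this expression produces a single decorated diagram (in the sense of \S\ref{secSimpleDiagrams}), and evaluating the phase variables across $R^k$ recovers exactly the affine submodule $q + Q$.

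The main obstacle is the minimality clause. For this I would argue by contradiction: suppose the same submodule could be written using $k' < k$ phase variables, linearly. Then the induced map $R^{k'} \to P$ sending the phase variables to the node labels would have image equal to $q + Q$, so its linear part would surject onto $Q$, giving a generating set of $Q$ of size $k' < k$, contradicting the minimality obtained from Hermite/Smith normal form. The subtlety here is that one must check that the relevant normal form indeed produces a minimum-size generating set, not just a small one; for Euclidean $R$ this is standard, and for $R$ a field it is the usual statement that any two bases have the same cardinality. Rings that are neither fields nor Euclidean (e.g. awkward phase rings arising in $\ZH_R$) would need either Smith Normal Form over a PID or a case-by-case argument, and this is the only place where generality in $R$ could pinch.
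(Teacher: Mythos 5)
Your proposal is correct and follows essentially the same route as the paper's own proof: put the generators of $Q$ into Hermite Normal Form, attach one fresh phase variable to each non-zero row, and label vertex $j$ with $q_j + \sum_i \alpha_i (v_i)_j$. Your minimality argument by contradiction (a smaller linear parameterisation would yield a smaller generating set) is just a rephrasing of the paper's comparison of two linearly independent generating sets via a second Hermite Normal Form computation, so nothing essential differs.
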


\begin{proof} \label{prfPropAlgorithmForSubmodule} 
	We provide here an algorithm that assigns to each vertex such a phase.
	The input for the algorithm is the rule skeleton, and a description of the finitely generated, linear $R$-submodule $q+Q$.

	We treat $Q$ as a matrix where each row corresponds to a generator,
	and each column corresponds to a vertex in the rule.
	We then find a linearly independent set of generators for the space $Q$ by putting the matrix $Q$ into Hermite Normal Form.
	I.e. we find matrices $U$ and $H$ such that $H = UQ$,
	where $U$ is a product of elementary column and row operations, and $H$ has the following properties:
	\begin{itemize}
		\item The leading coefficient of each row is not in the same column as the leading coefficient of any other row,
		\item The leading coefficient of each row has only the entry $0$ in that column in any of the rows below it,
		\item If the leading coefficient of a row has value $r$ then the values in places above it have smaller Euclidean degree than $r$,
		\item All rows containing only $0$ come after any rows that are not all $0$,
		\item All non-zero rows are linearly independent of the other non-zero rows.
	\end{itemize}
	This is a standard procedure in mathematics software (see, for example, an implementation in Ref.~\cite{Magma}),
	and all of these properties can be shown to be obtainable via Euclid's Algorithm.
	The matrix $H$ corresponds to a linearly independent set of generators of $Q$,
	i.e. if $H$ is split into rows $\set{r_j}_{j=1}^n$ then $Q = R r_1 + R r_2 + \dots + R r_n$.

	To show that all linear independent sets of generators for $Q$ are the same size:
	Given two sets of generators $B := \set{b_j}$ and $B':= \set{b'_k}$, each linearly independent, and with $\abs{B} < \abs{B'}$,
	one can construct the matrix $C$ where the $k^{th}$ row is $\set{r_j}_{j=1}^{\abs{B}}$ and $\sum_j r_j b_j = b'_k$.
	If we find the Hermite Normal Form of $C$ the result is upper triangular of shape $\abs{B}\times\abs{B'}$,
	so the final row must be all $0$s,
	and so there must be a linear dependence inside $B'$, which is a contradiction.
	Therefore any linear independent sets of generators for $Q$ are of the same size.

	We translate the matrix $H$ to a diagram by assigning to each non-zero row of $H$ a phase variable $\alpha_k$,
	and for each vertex $v_j$ in the skeleton (with corresponding column $\set{c_l}_{l=1}^m$)
	that vertex is assigned the phase $q_j + \sum_l c_l \alpha_l$, recalling that $q$ was the affine shift of the original linear subspace.
\end{proof}

\begin{corollary} \label{corConcidingLinearDefinitions}
The two definitions of a linear rule
(Definition~\ref{defLinearRule} and Definition~\ref{defOldLinearRule})
coincide. One direction is Proposition~\ref{propLinearRules},
the other is Proposition~\ref{propAlgorithmForSubmodule}.
\end{corollary}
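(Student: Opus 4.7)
The plan is to observe that one direction is already handled by Proposition~\ref{propLinearRules}, which shows that any rule linear in the sense of Definition~\ref{defOldLinearRule} has sound points cut out by the linear polynomials $\beta_j - \sum_i n_{i,j}\alpha_i - c_j = 0$, making it linear in the sense of Definition~\ref{defLinearRule}. So the only work left is the converse direction.

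For the converse, I would take a rule that is linear in the sense of Definition~\ref{defLinearRule}, so that its sound points form an affine linear submodule $q + Q \subseteq P$. Since ZX phases form a $\bbZ$-module and $\bbZ$ admits a Euclidean function, the hypotheses of Proposition~\ref{propAlgorithmForSubmodule} are satisfied, and I would apply it (separately, or simultaneously, to the LHS and RHS portions of $P$). This yields a labelling of each vertex $v_j$ by a phase of the form $q_j + \sum_l c_{l,j}\alpha_l$, where the coefficients $c_{l,j}$ are the entries in the Hermite Normal Form matrix and hence lie in $\bbZ$, while the shift $q_j$ is a fixed phase. This matches the shape $\sum_i n_i \alpha_i + c$ demanded by Definition~\ref{defOldLinearRule} with $n_i = c_{l,j} \in \bbZ$ and $c = q_j \in [0,2\pi) \subset \bbR$.

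The final step is just to check that the diagrams produced by the algorithm are legitimately built from the generators allowed by Definition~\ref{defOldLinearRule}, which is immediate since the skeleton (which is not touched by Proposition~\ref{propAlgorithmForSubmodule}) was already a ZX diagram built from Z-spiders, X-spiders, Hadamard nodes and compact closed structure. The phases introduced by the algorithm go directly into existing spiders, so the result fits Definition~\ref{defOldLinearRule} verbatim.

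I do not expect a serious obstacle: the whole argument is a repackaging of Proposition~\ref{propAlgorithmForSubmodule}. The only mild concern is making sure the constants $q_j$ land in a suitable $C \subset \bbR$; since $q \in P$ and each phase component of $P$ is a subset of $[0,2\pi)$, one simply lets $C$ be the set of components of $q$ that appear in the output, which is a finite subset of $\bbR$ as required.
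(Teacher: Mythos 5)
Your proposal is correct and matches the paper's intended argument: the paper proves the corollary simply by citing Proposition~\ref{propLinearRules} for one direction and Proposition~\ref{propAlgorithmForSubmodule} for the other, and your elaboration of the converse (sound points form an affine submodule, Hermite Normal Form over $\bbZ$ yields phases $q_j + \sum_l c_{l,j}\alpha_l$ of the shape required by Definition~\ref{defOldLinearRule}) is exactly the content that proposition supplies. No gaps worth flagging.
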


\begin{example}[Two different subspace representations as rules]

	The skeleton of the rule S1 for $\ZX_{\pi/4}$ has two vertices on the left hand side, and one on the right,
	which we label $v_1$, $v_2$ and $v_3$

	\begin{align}
		\vc{\InputIfFileExists{./figures/ZX/S1_l_skeleton_verts.tikz}{}{Missing file!}} = \spider{gn}{v_3}
	\end{align}

	$P$ is therefore isomorphic to $ \bb{Z}_8 \times \bb{Z}_8 \times  \bb{Z}_8 $,
	and we will assume for this example that we already know that the rule is sound for $p \in q + Q$ where:

	\begin{align}
		q & := (0,0,0)                                 \\
		Q & := \mathbb{Z}(1, 0, 1) + \mathbb{Z}(0,1,1)
	\end{align}

	As in the proof of Proposition~\ref{propAlgorithmForSubmodule} we consider the $Q$ part as a matrix:

	\begin{align}
		\begin{bmatrix}
			1 & 0 & 1 \\
			0 & 1 & 1
		\end{bmatrix}
	\end{align}

	Which is already in Hermite Normal Form.
	We assign to each row a phase variable, and then add together every entry in each column:

	\begin{align}
		\begin{bmatrix}
			\alpha & 0     & \alpha \\
			0      & \beta & \beta
		\end{bmatrix} \qquad   (\alpha, \beta, \alpha + \beta)
	\end{align}

	This gives us the phases we assign to the skeleton of the rule:

	\begin{align}
		v_1                             & \mapsto \alpha                                               &
		v_2                             & \mapsto \beta                                                &
		v_3                             & \mapsto \alpha + \beta                                       &
		\vc{\InputIfFileExists{./figures/ZX/S1_l_skeleton_vars.tikz}{}{Missing file!}} & = \spider{gn}{\alpha+\beta} \label{eqnS1IndependentVariable}
	\end{align}

	Suppose instead we started with the following presentation of the same linear submodule:

	\begin{align}
		q  & := 0                                              \\
		Q' & :=\mathbb{Z}(-1, 1, 0) + \mathbb{Z}(0,1,1) \iso Q
	\end{align}

	Corresponding to the matrix:

	\begin{align}
		\begin{bmatrix}
			-1 & 1 & 0 \\
			0  & 1 & 1
		\end{bmatrix}
	\end{align}

	If we applied the same rewriting to get a family of equations (without finding Hermite Normal Form)
	then we would get:

	\begin{align}
		\begin{bmatrix}
			-\alpha & \alpha & 0     \\
			0       & \beta  & \beta
		\end{bmatrix}     \qquad
		(-\alpha, \alpha+\beta, \beta)
	\end{align}

	This gives us the `same' rule:

	\begin{align}
		v_1                              & \mapsto -\alpha                                     &
		v_2                              & \mapsto \alpha+\beta                                &
		v_3                              & \mapsto \beta                                       &
		\vc{\InputIfFileExists{./figures/ZX/S1_l_skeleton_prime.tikz}{}{Missing file!}} & = \spider{gn}{\beta} \label{eqnS1DependentVariable}
	\end{align}

	Equations \eqref{eqnS1IndependentVariable} and \eqref{eqnS1DependentVariable} are the same rule,
	in that they describe the same set of simple rules,
	but they are presented differently because they reflect the different ways of presenting the underlying affine, linear
	spaces $Q$ and $Q'$.

\end{example}

\begin{remark} \label{remMatcher}
	This use of the Hermite Normal Form is not unexpected, since this process bears a strong resemblance to the way
	the matcher works in the software Quantomatic \cite{Quantomatic}.
	The matcher is the subroutine that matches instances of rules onto diagrams,
	and in doing so needs to match variable names onto concrete values.
\end{remark}

\begin{definition}[Independent phase variables]  \label{defIndependentVariable}
	We call the phase variable $\alpha$ in some rule \emph{independent}
	if there is any vertex
	with the phase $\alpha + c$, where $c$ is a non-variable element of the phase algebra.
\end{definition}

\begin{remark} \label{remRebase}
	This `rebasing' of the affine linear subspace actually allows one to shift the independent phases
	around in a rule; in particular from one side to the other.
	Doing so in a proof assistant would allow for clearer representations of rule inverses,
	where the independent variables are presented on the left hand side of the rule.
	This can be achieved by choosing the order of the vertices in the method of Proposition~\ref{propAlgorithmForSubmodule}
	such that the vertices from the left hand side of the rule appear to the left
	(as columns in the matrix $Q$) of the vertices from the right hand side of the rule.
\end{remark}

\begin{lemma} \label{lemGaussianElim}
	If the phase algebra is a $K$-module for some field $K$ then all phase variables can be re-based as independent phase variables.
\end{lemma}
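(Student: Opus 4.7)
The plan is to revisit the algorithm of Proposition~\ref{propAlgorithmForSubmodule} and replace the Hermite Normal Form step with reduced row echelon form, which is available precisely because $K$ is a field. Given a linear rule presented as an affine subspace $q + Q$ of a parameter space $P$, where $Q$ is now a $K$-vector subspace, I would form the generator matrix $M$ whose rows span $Q$ and whose columns are indexed by the vertices of the skeleton, then apply Gaussian Elimination to obtain a matrix $H$ in reduced row echelon form describing the same subspace. The nonzero rows of $H$ are automatically linearly independent, so $H$ is a valid input to the translation scheme of Proposition~\ref{propAlgorithmForSubmodule}.

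Next I would invoke the defining property of reduced row echelon form: each nonzero row $r_k$ has a distinguished pivot column $j_k$ in which the entry of $r_k$ is $1$ and the entries of every other row are $0$. Following the translation of Proposition~\ref{propAlgorithmForSubmodule}, I assign a phase variable $\alpha_k$ to each nonzero row of $H$ and set the phase of vertex $v_j$ to be $q_j + \sum_l H_{l,j}\,\alpha_l$. At the pivot vertex $v_{j_k}$ all other rows contribute $0$, so this phase reduces to $q_{j_k} + \alpha_k$, which exhibits $\alpha_k$ in the form required by Definition~\ref{defIndependentVariable}. Hence every phase variable produced by the procedure is independent.

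There is no genuine obstacle here; the lemma is essentially a refinement of Proposition~\ref{propAlgorithmForSubmodule} that exploits multiplicative inverses in $K$ in order to clear entries above pivots --- something Hermite Normal Form cannot do over a general Euclidean ring. The one small point to verify is that the re-basing does not change the number of phase variables, which follows from the final part of Proposition~\ref{propAlgorithmForSubmodule}: any two linearly independent generating sets of $Q$ have the same cardinality, so the minimality established there is preserved.
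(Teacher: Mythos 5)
Your proposal is correct and follows essentially the same route as the paper: the paper's proof is exactly to apply Gaussian Elimination over the field $K$ to put the generator matrix into reduced row echelon form, so that each pivot column yields a phase of the form $\alpha + c$, which is the independence condition of Definition~\ref{defIndependentVariable}. Your additional remarks on feeding the result into the translation of Proposition~\ref{propAlgorithmForSubmodule} and on preserving the number of phase variables simply make explicit what the paper leaves implicit.
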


\begin{proof} \label{prfLemGaussianElim}
	Working over a field $K$ we can put the matrix $Q$ into reduced row echelon form
	via Gaussian Elimination,
	i.e. each leading coefficient is $1$ and is the only entry in that column.
	When translated into phases this results in a phase of $\alpha + c$,
	where $c$ is a non-variable element of the phase algebra.
\end{proof}

\begin{remark} \label{remLinearRulesOfZQ}
We mentioned in \S\ref{secGeneralisingTheorems}
that it seems unlikely
that it is possible to find a ruleset of Universal ZX that can be expressed using
linear rules and the group action.
This was the impetus for ZQ,
and the author notes that the rules of ZQ
presented in \S\ref{secZQRules}
use only linear rules (as an $\bbR$-algebra) and the group action.
\end{remark}

\subsection{Finding submodules} \label{secFindingSubmodules}

Now that we know how to represent linear rules as affine linear submodules,
we want to go about the business of inferring which submodules lead to sound rules.
Given a skeleton rule $|D| = |D'|$, we know it is sound for some (possibly empty) subset of the parameter space $P$.
The question is which linear affine submodules of $P$ to check for soundness.
We suggest here three ways of hypothesising larger linear submodules containing the points we already know to be sound in $P$:

\begin{itemize}
	\item \textbf{Full Linear Interpolation.} Given two points $p$ and $q$ sound in $P$:
	      Extend the points into a line $p + R(q-p)$.

	\item \textbf{Sparse Linear Interpolation.} Given two points $p$ and $q$ sound in $P$,
	      extend the points into a sparse line $p + \bbZ r$, where there is some  $a \in \bbZ$ such that $a \times r = q-p$.
	      Note that this line does not necessarily lie inside the parameter space,
	      and in such cases this technique cannot be applied.

	\item \textbf{Summation.} Given two, affine submodules $s + S$, $t + T$, sound in $P$, that intersect at $q$:
	      The affine submodules can be re-based to $q + S +(s-q) = q + S$ and $q + T + (t-q) = q + T$.
	      Consider $q + S + T$.
\end{itemize}

\begin{remark}[Higher Order Interpolation] \label{remLinearNotOnlyOption}
With the exception of the (EU) rule (and variants) all of the rules
for all of the calculi discussed in this thesis are linear,
and the above interpolation methods will only infer linear subspaces.
We can, however, still apply higher order methods,
see Ref.~\cite{MultivariateSurvey} for a survey of higher-order polynomial
interpolation techniques.
These higher-order techniques would still not be able to infer
the (EU) rule of ZX since it does not express a polynomial relationship.
\end{remark}

\begin{remark} \label{remExtrapolatingQuantifiers}
	This technique is similar to the method of
	altering the existential and universal quantifiers in term-based conjecture synthesis,
	mentioned in Example~\ref{exaAmmonQuantifiers}.
	Our method makes use of both the linear structure
	and the information from multiple theorems (sound points) at once.
\end{remark}

\begin{example}[Finding the Clifford+T ZX spider law's subspace]
	Suppose, using $\ZX_{\pi/4}$, that you know the spider law of \eqref{eqnSpiderPedantic} holds true for $p = (0,0,0)$, $q = (\pi/2,0,\pi/2)$.
	\begin{itemize}
		\item Sparse Linear Interpolation suggests the subspace $Q_1 := (0,0,0) + \bbZ(\pi/2,0,\pi/2)$
		\item Full Linear Interpolation suggests the subspace $Q_2 := (0,0,0) + \bbZ(\pi/4,0,\pi/4)$
		\item Suppose we also knew the subspaces $Q_2$ and $Q_3 := (0,0,0) + \bbZ(0, \pi/4, \pi/4)$
		      were sound,
		      then summation would suggest the subspace $Q_4:= (0,0,0) + \bbZ(\pi/4,0,\pi/4) + \bbZ(0,\pi/4,\pi/4)$,
		      which would give you every sound point in $P$.
	\end{itemize}
\end{example}

What these methods require is some non-exhaustive method
of verifying the inferred subspaces of $P$.
The chapter on verification (\S\ref{chapConjectureVerification}) answers this problem for phase variables,
and is inspired by this chapter's link to algebraic geometry,
in particular by the geometric notion of degree.

\begin{remark} \label{remNonlinearVarieties}
	We have been working with linear varieties to satisfy the constraints of ZX and ZQ.
	When working with phase rings we can consider non-linear varieties
	(i.e. higher degree polynomials in the phase variables)
	although we have not developed heuristics for which non-linear polynomials to conjecture.
\end{remark}

\subsection{Phase homomorphisms and parameter space symmetries} \label{secParameterSymmetries}

The concepts of this chapter can also be combined with that of \S\ref{secPhaseRingHomomorphisms},
where we constructed the phase homomorphism pair $(\phi, \tilde \phi)$,
where $\phi: R \to R$ (lifted to $\hat \phi$ acting on the diagrams of $\ring_R$)
and $\tilde \phi$ acting on matrices in $\bit{R}$.

\begin{proposition} \label{propRingRParameterSymmetries}
	Given a $\ring_R$ equation $D_1 = D_2$ with parameter space $P$,
	then if $(p_1, \dots, p_n) \in P$ is sound then $(\phi(p_1), \dots, \phi(p_n))$ is also sound
	for any ring endomorphism $\phi$.
\end{proposition}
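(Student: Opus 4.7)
The plan is to reduce this statement directly to Theorem~\ref{thmLiftRingHomSoundness}, which already tells us that phase ring homomorphisms preserve soundness of simple equations in $\ring_R$. The parameter space machinery essentially repackages this result geometrically: a point $(p_1,\dots,p_n) \in P$ is by definition just a choice of phases to decorate the skeleton, and a ring endomorphism $\phi : R \to R$ acts on $P$ componentwise.

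First I would fix notation. Write $D_1(\vec{p})$ and $D_2(\vec{p})$ for the simple diagrams obtained by labelling each of the $n$ phase-bearing vertices in the skeletons $|D_1|$ and $|D_2|$ according to the coordinates of $\vec{p} = (p_1,\dots,p_n)$. By the definition of soundness for a point in $P$ (Definition~\ref{defParameterSpaceSound}, tracing back to Definition~\ref{defSimpleSound}), the hypothesis that $\vec{p}$ is sound is exactly the statement $\interpret{D_1(\vec{p})} = \interpret{D_2(\vec{p})}$ in $\bit{R}$.

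Next I would invoke the phase ring homomorphism $\hat\phi : \ring_R \to \ring_R$ from Definition~\ref{defPhaseRingHomomorphism}, which acts as the identity on $\binary[poly]{}$ and $\binary[white]{}$ and relabels each state $\state{white}{a}$ by $\state{white}{\phi(a)}$. Because $\hat\phi$ only touches the phase labels and preserves the skeleton, we have $\hat\phi\left(D_i(\vec{p})\right) = D_i(\phi(p_1),\dots,\phi(p_n))$ for $i=1,2$. Applying Theorem~\ref{thmLiftRingHomSoundness} to the sound equation $D_1(\vec{p}) = D_2(\vec{p})$ yields
\begin{align}
\interpret{D_1(\phi(p_1),\dots,\phi(p_n))} = \interpret{D_2(\phi(p_1),\dots,\phi(p_n))},
\end{align}
which is precisely the statement that $(\phi(p_1),\dots,\phi(p_n))$ is a sound point of $P$.

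The only real content is the bookkeeping identification $\hat\phi(D_i(\vec{p})) = D_i(\phi\vec{p})$, and I expect this to be the main (mild) obstacle: one must check that the way $\hat\phi$ acts on the generators of $\ring_R$ agrees, vertex by vertex, with the componentwise action of $\phi$ on $P$. This is immediate for the addition and multiplication gates (fixed by $\hat\phi$) and tautological for the state generators; the diagonal structure of the parameter space $P = \prod_v (\text{phase algebra of } v)$ ensures there is no interaction between the phases at different vertices, so applying $\phi$ pointwise in $P$ matches applying $\hat\phi$ globally to the diagram.
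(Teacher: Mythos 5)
Your proposal is correct and follows essentially the same route as the paper: the paper's proof simply observes that the componentwise action of $\phi$ on $P$ is precisely the action of $\hat\phi$ from Proposition~\ref{propRingHomPreservesSoundness} (which underlies Theorem~\ref{thmLiftRingHomSoundness}), and your bookkeeping identity $\hat\phi(D_i(\vec{p})) = D_i(\phi(p_1),\dots,\phi(p_n))$ spells out exactly that observation.
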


\begin{proof} \label{prfPropRingRParameterSymmetries}
	The action of $(p_1, \dots, p_n) \mapsto (\phi(p_1), \dots, \phi(p_n))$ is precisely the action of $\hat \phi$ in
	Proposition~\ref{propRingHomPreservesSoundness}.
\end{proof}

This shows that any phase homomorphism pair
is a symmetry of the parameter space.
There is one final piece
to the conjecture inference puzzle,
and that is conjectures involving !-boxes.
In order to approach this problem
we first need to know about a property called the
\emph{join} (Definition~\ref{defJoin}).
The join will tell us how far into a series
of !-box expansions we need to check before
the pattern continues indefinitely.
This line of thought is continued
in Remark~\ref{remGeneratingPatternGraphs},
after the results on finite verification.

\subsection{Varying the skeleton} \label{secInferenceVaryingSkeleton}

\S\ref{secInferringPhaseVariables} has so far concerned itself with investigating the parameter space
of an equation with a fixed skeleton.
We will now explore what happens when we vary a skeleton,
initially by adding components via $\comp$ or $\tensor$.

\begin{proposition} \label{propSoundSubspaceEmbedding}
Given an equation $D = D'$  with parameter space $P$,
a sound subspace $Q$ of $P$ embeds as
the sound subspace $Q \oplus R \subset P \oplus R$
for the equation $D \tensor E = D' \tensor E$
and for the equation
$D \comp E = D' \comp E$.
\end{proposition}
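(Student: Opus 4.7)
The plan is to reduce this to strict symmetric monoidal functoriality of the interpretation $\idot{}$, since the statement essentially says ``instantiating the same $E$ on both sides changes nothing semantically''. First I would unpack the parameter spaces: writing $P = P_D \oplus P_{D'}$ and letting $R$ be the parameter space contributed by the two appearances of $E$ in the new equation, the parameter space of $D \tensor E = D' \tensor E$ (and of $D \comp E = D' \comp E$) decomposes naturally as $P \oplus R$, with the $P$-coordinates parameterising vertices inherited from $D$ and $D'$ and the $R$-coordinates parameterising those from the copies of $E$.

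Next, I would take an arbitrary $(q, r) \in Q \oplus R$ and denote by $D_q, D'_q$ the instantiations of $D, D'$ at $q$, and by $E_r, E'_r$ the instantiations of the two copies of $E$ at the corresponding components of $r$. Soundness of $q \in Q$ for the original equation gives $\interpret{D_q} = \interpret{D'_q}$. By strict symmetric monoidal functoriality:
\begin{align}
\interpret{D_q \tensor E_r} = \interpret{D_q} \tensor \interpret{E_r} = \interpret{D'_q} \tensor \interpret{E'_r} = \interpret{D'_q \tensor E'_r},
\end{align}
where the middle equality additionally uses that $\interpret{E_r} = \interpret{E'_r}$ whenever the two copies of $E$ are parameterised consistently (i.e.\ both receive the same $r$-component). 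The composition case is identical with $\comp$ replacing $\tensor$, modulo the assumption that the arities of $D, D'$ and $E$ are compatible, which is already built into the equation $D \comp E = D' \comp E$ being well-formed.

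The main obstacle is purely bookkeeping: being precise about whether the two copies of $E$ share parameter labels (so $R \iso P_E$) or are parameterised independently (so $R \iso P_E \oplus P_E$, cut out by the diagonal subspace of agreeing labels), and verifying that either convention yields a decomposition of the new parameter space as $P \oplus R$ under which $Q \oplus R$ is the intended embedding. Once that convention is pinned down, the substantive content is a one-line application of monoidality, and no calculus-specific reasoning (ZX, ZW, ZH, ZQ, or $\ring$) enters the argument, so the proposition holds uniformly across all the graphical calculi considered in this thesis.
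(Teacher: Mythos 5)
Your proposal is correct and follows essentially the same route as the paper: soundness at $q$ gives $\interpret{D_q} = \interpret{D'_q}$, and then tensoring or composing both sides with the interpretation of $E$ at any $r$ (the paper writes $N_r$ and applies it directly, which is exactly your appeal to monoidal functoriality) yields soundness of $Q \oplus R$. The bookkeeping worry you raise is resolved the way you expect: since the same diagram $E$ appears on both sides, $R$ is just the parameter space of $E$ with a single shared $r$-component, which is precisely the paper's convention.
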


\begin{proof} \label{prfPropSoundSubspaceEmbedding}
Write $M_q$ for the interpretation of the diagram $D$
at point $q \in Q$.
Likewise write $M'_q$ for the interpretation of $D'$
at point $q$,
and $N_r$ for the interpretation of $E$ at the
point $r$ in $R$ (the parameter space of the diagram $E$).
Since $q$ is a sound point in $Q$:
\begin{align}
M_q &= M'_q \\
\therefore \quad M_q \comp N_r &= M'_q \comp N_r  \quad \forall r \\
\therefore \quad M_q \tensor N_r &= M'_q \tensor N_r  \quad \forall r
\end{align}
Therefore the sound subspace $Q \subset P$ embeds as $Q \oplus R \subset P \oplus R$.
\end{proof}

What if, rather than compose diagrams,
we want to change the underlying graph of existing ones?
We saw earlier in this chapter
how to generalise from a single boundary wire
to a boundary wire with a !-box,
and here we shall go from no boundary wire to a single boundary wire.
In this next proposition we will assume the diagrams
are ZX$_G$ diagrams 
but the result is easy enough to generalise to diagrams
in other graphical calculi.

\begin{proposition} \label{propLinearSubspaceToBoundary}
Starting with an equation $D = D'$ of $ZX_G$ diagrams
with parameter space $P$:
Let $e_\alpha$ be the vector in $P$ that is 0 everywhere except for a 1 at position $\alpha$,
and let $\alpha$ and $\beta$ be labels for
two vertices of the same colour in $D$ and $D'$ respectively.
If, and only if, $Q + \bbZ ( e_\alpha + e_\beta)$ is sound in $P$ then
the diagram equation formed by adding an additional boundary
connected to $\alpha$ and $\beta$ on each side respectively
is sound at $Q + \bbZ ( e_\alpha + e_\beta)$.
\end{proposition}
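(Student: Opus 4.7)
The plan is to exploit the bilinear dependence of a ZX spider's interpretation on its own phase, so that the effect of adding the new boundary wire can be read off branchwise. Without loss of generality I assume both $\alpha$ and $\beta$ are green (Z) spiders; the red (X) case reduces to this by placing Hadamards on the new wire, using the colour-swap meta-rule. Because a Z-spider with phase $\phi$ and arity $(m,n)$ interprets as $\ket{0\dots 0}\bra{0\dots 0} + e^{i\phi}\ket{1\dots 1}\bra{1\dots 1}$, for each $q \in Q$ I can decompose
\begin{align*}
\interpret{D}_{q+c\,e_\alpha} &= A_q + e^{ic}\,B_q,  &  \interpret{D'}_{q+c\,e_\beta} &= A'_q + e^{ic}\,B'_q,
\end{align*}
where $A_q,B_q$ are the $0$-branch and $1$-branch of $\alpha$ (obtained by contracting every wire of $\alpha$ against $\ket{0}/\bra{0}$ or $\ket{1}/\bra{1}$ respectively and gluing the result into the rest of $D$), and analogously $A'_q,B'_q$ for $\beta$ in $D'$. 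Crucially none of these four matrices depend on $\phi_\alpha$ or $\phi_\beta$.

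For the forward direction, since $e_\alpha$ only shifts $\phi_\alpha$ and $e_\beta$ only shifts $\phi_\beta$, soundness of $D = D'$ on $Q + \bbZ(e_\alpha + e_\beta)$ becomes the family of identities $A_q + e^{ic}B_q = A'_q + e^{ic}B'_q$ for $q \in Q$ and $c \in \bbZ$. Picking two values of $c$ that give distinct $e^{ic}$ (the trivial phase group is vacuous) and using linear independence of $1$ and $e^{ic}$, I obtain $A_q = A'_q$ and $B_q = B'_q$ for every $q \in Q$. The key observation is then that extending $\alpha$ by a new boundary output tensors $\ket{0}$ onto the $0$-branch and $\ket{1}$ onto the $1$-branch at the new-wire position, so
\begin{align*}
\interpret{D^+}_{q+c\,e_\alpha} = A_q \otimes \ket{0} + e^{ic}\,B_q \otimes \ket{1},
\end{align*}
and analogously for $D'^+$. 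Substituting $A_q = A'_q$ and $B_q = B'_q$ then gives soundness of $D^+ = D'^+$ throughout $Q + \bbZ(e_\alpha + e_\beta)$.

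The converse runs the same linear-independence argument in reverse: soundness of $D^+ = D'^+$ at $q \in Q$ equates $A_q \otimes \ket{0} + B_q \otimes \ket{1}$ with $A'_q \otimes \ket{0} + B'_q \otimes \ket{1}$ on the new-wire factor, whence $A_q = A'_q$ and $B_q = B'_q$, recovering $D = D'$ on the whole translated subspace. The step I expect to need the most care is making the branch decomposition precise when $\alpha$ is deep inside $D$ or has a mix of inputs and outputs; I would handle this by invoking the symmetry of Z-spiders under wire-bending (cups, caps, and spider fusion from \S\ref{secWires}) so that the freshly added boundary wire can be treated as interchangeable with any existing leg of $\alpha$, rendering the branch analysis independent of both the position of the new wire and of whether it is formally an input or an output.
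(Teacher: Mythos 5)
Your proof is correct, and it rests on the same underlying two-point idea as the paper's, but it is executed semantically rather than diagrammatically. The paper exposes $\alpha$ and $\beta$ via the spider law, plugs the new leg with phase-$0$ and phase-$\pi$ green states (using soundness at $q$ and at $q+\pi(e_\alpha+e_\beta)$), concludes equality of the open-boundary maps because those two plugged states interpret to a basis of the new wire's space, and then passes through Proposition~\ref{propZXAddBang} to state the result with a phase variable; the converse is done by plugging a phase state back in and fusing. You instead decompose $\interpret{D}$ affinely in the exponential of $\alpha$'s phase into the two spider branches and use linear independence of the coefficients at two distinct shifts, which gives a self-contained matrix-level argument that needs neither Proposition~\ref{propZXAddBang} nor the specific shift by $\pi$ (any shift with $e^{ic}\neq 1$ does the job, which is exactly what the paper's choice of $\pi$ provides). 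The one blemish is a bookkeeping slip: with your formula $\interpret{D}_{q+c\,e_\alpha}=A_q+e^{ic}B_q$ the matrix $B_q$ must absorb the base phase $e^{iq_\alpha}$ that $q$ assigns to $\alpha$ (so it is not literally independent of $\phi_\alpha$, contrary to your parenthetical remark), or else the formula should read $A_q+e^{i(q_\alpha+c)}\tilde B_q$; correspondingly, in the converse one obtains $A_q=A'_q$ together with $e^{iq_\alpha}\tilde B_q=e^{iq_\beta}\tilde B'_q$ rather than equality of the bare branches. Either convention, applied consistently, leaves both directions of your argument intact, since the equality of the phase-weighted branches is precisely what is needed to conclude soundness of $D=D'$ along all of $Q+\bbZ(e_\alpha+e_\beta)$.
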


\begin{proof} \label{prfPropLinearSubspaceToBoundary}
We begin by picking a point $q \in Q$,
and then rewriting the diagrams $D$ and $D'$
as compositions of the vertices $\alpha$ and $\beta$ with the subdiagrams $\bar D$ and $\bar D'$.
We are assuming both $\alpha$ and $\beta$ label Z vertices,
the proof for X vertices is identical, just with colours swapped.
\begin{align}
\interpret{D_q} &= \interpret{D'_q} &
\boundaryAddNone{\bar D_q}{q_\alpha} &= \boundaryAddNone{\bar D'_q}{q_\beta}
\end{align}
We then use the spider law to extend $\alpha$ and $\beta$ downwards:
\begin{align}
\boundaryAddNone{\bar D_q}{q_\alpha} = \boundaryAddSome{\bar D_q}{q_\alpha}{gn}{0} &
= \boundaryAddSome{\bar D'_q}{q_\beta}{gn}{0} = \boundaryAddNone{\bar D'_q}{q_\beta}
\end{align}
The above equation is sound for all $q \in Q$.
Since $Q + \bbZ ( e_\alpha + e_\beta)$ is sound in $P$
we therefore know that $q + \pi (e_\alpha + e_\beta)$ is sound in $P$.
Accordingly the following equation is also sound:
\begin{align}
\boundaryAddNone{\bar D_q}{q_\alpha + \pi} = \boundaryAddSome{\bar D_q}{q_\alpha}{gn}{\pi} &
= \boundaryAddSome{\bar D'_q}{q_\beta}{gn}{\pi} = \boundaryAddNone{\bar D'_q}{q_\beta + \pi}
\end{align}
We therefore know that the following diagram is sound for all values of $q$,
since it is sound on a basis of inputs:
\begin{align}
\boundaryAddSome{\bar D_q}{q_\alpha}{none}{} &
= \boundaryAddSome{\bar D'_q}{q_\beta}{none}{}
\end{align}
By Proposition~\ref{propZXAddBang} the following diagram is sound for all values of $q$,
$\gamma$ being a phase variable:
\begin{align}
\boundaryAddSome{\bar D_q}{q_\alpha + \gamma}{none}{} &
= \boundaryAddSome{\bar D'_q}{q_\beta + \gamma}{none}{}
\end{align}
Therefore $q + \bbZ(e_\alpha + e_\beta)$ is sound for all values of $Q$.

For the converse, with $\gamma$ a phase variable not appearing in $D$ or $D'$:
\begin{align}
\boundaryAddSome{\bar D_q}{q_\alpha + \gamma}{none}{} &
= \boundaryAddSome{\bar D'_q}{q_\beta + \gamma}{none}{} \qquad \text{sound at $q$} \\
\implies \qquad \boundaryAddSome{\bar D_q}{q_\alpha + \gamma}{gn}{} &
= \boundaryAddSome{\bar D'_q}{q_\beta + \gamma}{gn}{} \qquad \text{sound at $q$} \\
\implies \qquad \boundaryAddNone{\bar D_q}{q_\alpha + \gamma} &
= \boundaryAddNone{\bar D'_q}{q_\beta + \gamma} \qquad \text{sound at $q$} \label{eqnBoundaryAddNone}
\end{align}
\end{proof}

We showed earlier on in this chapter
that parameter subspaces can be reflected in the phase algebra.
In the case of Proposition~\ref{propLinearSubspaceToBoundary}
we saw the existence of a specific sound subspace, $Q + \bbZ(e_\alpha + e_\beta)$,
and deduced the ability to add boundaries to the diagram.
In terms of phases this sound subspace would appear, given the right choice of presentation,
as a variable appearing exactly once in $D$ and once in $D'$, as in Equation~\eqref{eqnBoundaryAddNone}.

\begin{remark}[From arbitrary many to one to none] \label{remWhyConsiderBoundaries}
In Proposition~\ref{propZXAddBang}
we showed that if a !-boxed boundary is
connected to the same colour vertex on each side of the equation
then we only need to check the case where the !-box is expanded once.
In Proposition~\ref{propLinearSubspaceToBoundary}
we showed that if we are already calculating the sound subspaces
then we only need to check the case where the !-box is expanded 0 times;
i.e. without any boundaries at all.
From the soundness of the equation with 0 boundaries,
Proposition~\ref{propLinearSubspaceToBoundary} will let us deduce the soundness
of the equation with 1 boundary,
and then Proposition~\ref{propZXAddBang} will let us generalise immediately to a !-boxed version.
\end{remark}

\begin{remark}[From sparse to full] \label{remFromSparseToFull}
Readers may also have noticed that we did not use the full set
$Q + \bbZ(e_\alpha + e_\beta)$,
simply two points in it, $q$ and $q + \pi(e_\alpha + e_\beta)$.
We will see in the next chapter how the existence of two sound, distinct points
inside a subspace known to be linear implies the entire subspace is sound.
The proposition could have been phrased:
``If $q$ and $q+a(e_\alpha + e_\beta)$ are distinct and sound in $P$
then $q + \bbZ(e_\alpha + e_\beta)$ is sound for
the diagram equation formed by adding an additional boundary
connected to $\alpha$ and $\beta$ on each side respectively.''
The current phrasing emphasises the three-way link
between the linear subspace $\bbZ(e_\alpha + e_\beta)$,
the presentation of some variable $\gamma$ appearing only
on vertices $\alpha$ and $\beta$,
and the ability to add a boundary to the vertices.
\end{remark}

\section{Summary}

This chapter took the existing notion
of linear rules in ZX and generalised it to
apply to all our graphical calculi,
and also to beyond linear rules.
The purpose of this was to find ways
to generalise theorems from just a few examples,
but it also shows how to link the two different
presentations of rules (as phase algebras or as side conditions)
and unites the algebraic presentation
of diagrams with a geometric notion of spaces.
With this framework in place we gave heuristics
for inferring new polynomial hypotheses based on an existing sample of theorems.
We also examined generalisations at boundaries;
ways to deductively add phase variables and !-boxes to theorems,
immediately increasing their applicability.
Even without reference to conjecture synthesis
any proof assistant benefits from increasing the applicability
of lemmas, and in doing so reducing the number of proof steps for the user.

Our examination of the parameter space showed
that sound subspaces were preserved under $\comp$ and $\tensor$,
that examination of the spaces could lead to important generalisations
(such as adding boundaries to diagrams),
and that phase homomorphisms led to symmetries of sound subspaces.
We will combine this knowledge of symmetries with
geometric constraints on the sound subspaces in the next chapter,
but already we have an indication that the study of these subspaces
could be as rich a topic as the study of quantum graphical calculi.
Parameter spaces give a novel and important method of studying
parameterised diagram equations.

Future work includes further examination of parameter spaces;
in particular their symmetries,
other generalisations like Proposition~\ref{propLinearSubspaceToBoundary},
and the effects of common operations on the skeleton such as adding or removing an edge.
Further work should also be done on deductive generalisations,
as any such theorem immediately improves conjecture synthesis runs.
Our focus will now turn to the problem of verifying these inferred hypotheses.
  \thispagestyle{empty} 
\chapter{Conjecture Verification}
\noindent\emph{In this chapter:}
\begin{itemize}
	\item[\chapterbullet] We introduce conjecture verification for phase ring and phase group calculi
	\item[\chapterbullet] We demonstrate the ability to finitely verify certain classes of hypotheses
	\item[\chapterbullet] We discuss the relationship between conjecture verification and conjecture inference
	\item[\chapterbullet] We use phase homomorphisms to simplify the process even further over qubits
\end{itemize}
\label{chapConjectureVerification}
Conjecture Verification is the act of checking whether a hypothesis is true.
In this chapter we will only consider hypotheses of the form $\interpret{D_1} = \interpret{D_2}$,
i.e. semantic equality of two diagrams.
This is in contrast to the syntactic question of whether $\cal{R} \entails D_1 = D_2$ for some ruleset $\cal{R}$.
We only consider the question of semantic equality because our ultimate
aim is Conjecture Synthesis (\S\ref{chapCoSy}),
in which we try to construct a ruleset that is sound and useful, rather than starting with a ruleset and showing it is complete.

This is the final results chapter of this thesis,
providing verification methods for the inference procedures discussed
in \S\ref{chapConjectureInference} (among others).
In particular the verification of phase variables
in \S\ref{secparameter} relies on ideas of polynomial fitting
that mirror those of \S\ref{secInferringPhaseVariables}.
\S\ref{secPhaseVariablesOverQubits} then builds on this foundation
by including results from Galois Theory
and the phase homomorphism pairs of \S\ref{chapPhaseRingCalculi}.

These verification methods rely on the structure inherent in phase algebras,
and on a novel presentation of !-box expansions that allow us to view these
primarily as $\comp$ rather than $\tensor$ products.

\section{Motivation} \label{secVerificationMotivation}

In earlier chapters, with the exception of parts of \S\ref{chapPhaseRingCalculi},
we considered a phase algebra to be a founding property of a calculus;
without specifying a phase algebra we could not specify the generators and so could not construct diagrams.
We present here a different point of view:
Consider a `base' calculus with a phase algebra $\cal{A}$ and without !-boxes.
We call diagrams constructed in this calculus \emph{simple} (Definition~\ref{defSimpleDiagram}).
By contrast we then consider a `higher' calculus,
identical to the base calculus except that we now allow formal variables in $\cal{A}$ (Definition~\ref{defPhaseVariable}),
and also !-boxes (Definition \ref{defBBox}).

Our motivation for this perspective is that this is how many papers present their rulesets.
For example the calculus $\ZH_\bbC$ formally only allows elements of the complex numbers as labels on H-boxes,
but the rules presented in \cite{ZH} use !-boxes and formal variables, and the meaning is clear from the context.
Our ultimate aim is for computers to be able to reason with a similar degree of expressive power;
to be able to hypothesise and then verify equations between decorated diagrams,
initially assuming only that they are capable of reasoning with simple diagrams.

Using $\ZHC$ as an example we list in Figure~\ref{figVerificationCubeDefinitions}
some structures of interest: The sets of diagrams that fit certain criteria,
as well as their interpretations as matrices or sequences of matrices.
The $ev$ (evaluation) and $as$ (assignment) maps generalise in the obvious ways.
From these maps, and ones very similar to them, we can construct the commutative cube in Figure~\ref{figVerificationCube}.
Our reason for doing so is that the map of interest
($\interpret{\cdot} : \ZH_{\bbC[X]}^! \to ( \bbN \to \Mat_\bbC)$)
which interprets the `higher' structure
can then be linked to the map we assume we have some knowledge of ($\interpret{\cdot} : \ZHC \to \Mat_\bbC$)
which interprets the `base' structure.
Although we will use the language of category theory (such as commutative diagrams) in this section
the structures of interest are considered as discrete categories.

The property we are interested in for this chapter is that of finite sampling:
Determining the nature of a function by knowing its value at enough places.
This is directly linked to our ideas of conjecture inference
in \S\ref{chapConjectureInference}
where we hypothesise some surface as sound,
but need to be able to verify this by only checking a finite sample of points on the surface.
The quintessential example of this is interpolation of complex polynomials,
i.e. once you know the values $P(X)$ at more than $\deg P$ points, you can determine the coefficients of $P$.
We extend this notion to our `higher' structure by investigating
the question
`given $D_1$ and $D_2$ in $\ZH_{\bbC[X]}^!$,
and knowing that $as(ev(D_1)) = as(ev(D_2))$ for some number of values of $X$ and $!$,
does $D_1 = D_2$?'
The answer with one !-box and one phase variable is \emph{yes} (Proposition~\ref{propBoth}),
and, crucially, we can work how how many values is \emph{enough} just from the syntax of the diagram.
The phase variable part of the result (\S\ref{secparameter})
always extends to situations with multiple phase variables,
however the !-box part of the result (\S\ref{secBBoxes}) cannot always be extended to multiple !-boxes,
leading us to the notion of \emph{separability} (\S\ref{secSeparability}).
When generating hypotheses this allows us
to verify hypotheses containing !-boxes
where previously we could not (provided the !-boxes are separable).

For equations with phase variables but without !-boxes over the complex numbers
we can actually go a step further.
Theorem~\ref{thmQubitSingleVerifyingPhaseEquation}
allows us to verify all the phase variables in a $\ring_\bbC$, $\ZW_\bbC$, or $\ZH_\bbC$
equation (without !-boxes) using a \emph{single} equation without any phase variables.
(The ZX analogue is Theorem~\ref{thmZXPhasesOverQubits}.)
The converse to this is that whenever a phase appears in an equation
less frequently than its algebraic degree it can be replaced with a variable (Remark~\ref{remNonQImpliesPhaseVariable}).

This work complements the work of \cite{QuickThesis},
which introduced !-induction.
That work allows the verification of an infinite family of equations parameterised by !-boxes,
by showing an inductive step involving those !-boxes.
Our work instead requires more equations to be checked, but where none of the equations involve !-boxes.

\begin{figure}
	\centering
	\renewcommand{\arraystretch}{2}
	\tabcolsep=30pt
	\begin{tabular}{l  l}
		$ \ZH_\bbC  $                                   & $\ZHC$ diagrams, labelled with complex numbers \\
		$ \ZH_{\bbC[X]}  $                              & $\ZHC$ with a phase variable, $X$              \\
		$ \ZH_\bbC^!  $                                 & $\ZHC$ with a !-box, $!$                       \\
		$ \ZH_{\bbC[X]}^!  $                            & $\ZHC$ with $X$ and $!$                        \\
		$ \Mat_\bbC  $                                  & Complex matrices                               \\
		$ \bbN \to \Mat_\bbC  $                         & Sequences of complex matrices                  \\
		$ ev : \ZH_{\bbC[X]} \to \ZH_\bbC  $            & Evaluation of $X$                              \\
		$ as : \ZH_\bbC^! \to \ZH_\bbC  $               & Assignment of a number to $!$                 \\
		$ \interpret{\cdot} : \ZH_\bbC \to \Mat_\bbC  $ & Interpretation as a complex matrix
	\end{tabular}
	\caption{Example sets of diagrams,
		matrices and sequences
		as well as interpretations and some maps between them
		\label{figVerificationCubeDefinitions}}
\end{figure}

\begin{figure}
	\centering
	\begin{align*}
		\begin{tikzcd}[ampersand replacement=\&]
			\ZH^!_{\bbC[X]} \arrow[dd, "ev"] \arrow[rr, "as"]  \arrow[dr, "\interpret{\cdot}"] \& \& \ZH_{\bbC[X]} \arrow[dd]  \arrow[dr, "\interpret{\cdot}"]\& \\
			\& (\bb{N} \to \Mat_{\bb{C}[X]}) \arrow[rr,crossing over] \& \&  \Mat_{\bb{C}[X]} \arrow[dd,"ev"] \\
			\ZHC^!  \arrow[dr, "\interpret{\cdot}"] \arrow[rr] \& \& \ZHC  \arrow[dr, "\interpret{\cdot}"] \&\\
			\& (\bb{N} \to \Mat_{\bb{C}}) \arrow[from=uu,crossing over] \arrow[rr, "as"] \& \&  \Mat_{\bb{C}}
		\end{tikzcd}
	\end{align*}
	\caption{The motivational commutative cube for phase variable and !-box conjecture verification.
		\label{figVerificationCube}}
\end{figure}
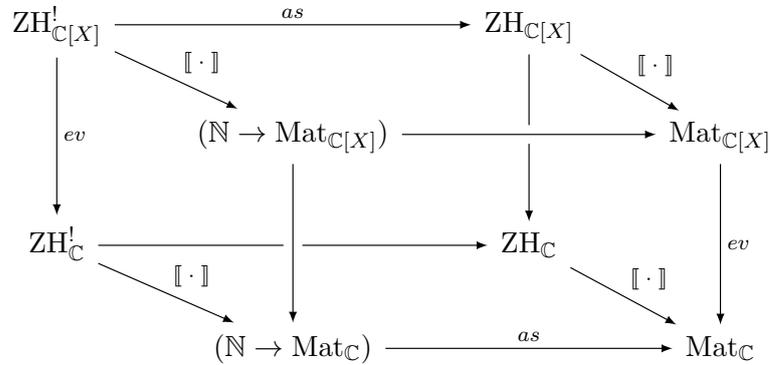

\section{Verifying phase variables}
\label{secparameter}

Our first result will concern diagrams that contain a finite number of phase variables and no !-boxes.
It relies on a certain property of polynomials: If you know the value of the polynomial $P(Y)$
for sufficiently many values of $Y$ then you can determine all the coefficients of $P$.
For example the polynomial $P(Y_1, Y_2) = a + b Y_1 + c Y_2 + d Y_1 Y_2$ can have all of its coefficients
determined by knowing the values $P(0,0)$, $P(0,1)$, $P(1,0)$, and $P(1,1)$.
We use this fact by extending the matrix interpretation of simple diagrams to one for decorated diagrams.
Continuing our ZH example from \S\ref{secVerificationMotivation} we are investigating this face of the commutative cube:

\begin{align}
	\begin{tikzcd}[ampersand replacement=\&]
		\ZH_{\bbC[X]}  \arrow[dr, "\interpret{\cdot}"] \arrow[dd, crossing over, "ev"]\& \\
		\&  \Mat_{\bb{C}[X]} \arrow[dd,"ev"] \\
		\ZHC \arrow[dr, "\interpret{\cdot}"] \&\\
		\&  \Mat_{\bb{C}}
	\end{tikzcd}
\end{align}

\subsection{Matrix interpretations}

The graphical calculi we have considered in this thesis all have matrix interpretations\footnote{The
	exception being SZX in Example~\ref{remSZXAsDiagramsForDiagrams}, and even then only to make a point.}.
Moreover for each of these calculi a diagram with $n$ inputs and $m$ outputs will be mapped to a matrix
with $\dim \Hilbert^{\tensor n}$ columns and $\dim \Hilbert^{\tensor m}$ rows,
where $\Hilbert$ represents $R^2$ (e.g. $\bbC^2$ in the case of quantum computing).
We will use the language of fields and vector spaces in this chapter but note that many of the results
only require the base ring to be an integral domain.
A family of diagrams parameterised over $\alpha$ is a set of simple diagrams,
and we could extend our interpretation
so that a set of simple diagrams is sent to a set of matrices.
This would, however, lose any structure from the phase algebra.
Instead we try to find a polynomial matrix interpretation;
for example one that sends a family of $\ZH_\bbC$ equations $\set{\bbE}_{\alpha}$ to a matrix in $\Mat_{\bb{C}[\alpha]}$.

While this works well for $\ZH_R$, $\ZW_R$ and $\ring_R$, there is a complication with ZX:
A phase variable $\alpha$ in a ZX diagram corresponds to an $e^{i \alpha}$ in
the matrix interpretation.
We can try performing the substitution $Y:= e^{i \alpha}$,
but run into trouble if there is a node containing, for example, $-\alpha$
(and accordingly $Y\inv$ in the matrix),
since polynomials do not normally allow negative powers.
Rather than stick with standard polynomials we instead move to Laurent polynomials;
polynomials that do allow positive and negative powers,
and define all the properties we will need of them below.
We do not consider the case of ZQ in this chapter,
because its phase group is not commutative and commutativity is necessary for our results.

ZX also introduces one more subtlety:
There is an extra relation from the phase group ($2\pi = 0$) that we should take care to reflect in our matrix interpretation.
This does not impact Universal ZX,
but does affect the fragments of ZX with a finite phase group
(Example \ref{excliffordt} demonstrates this for the Clifford+T fragment).
Before we get started we define Laurent polynomials
and show how to interpret diagrams with phases linear in phase variables into
matrices over Laurent polynomials.

\begin{definition}[Laurent polynomials {\cite[p356]{AlgebraArtin}}]
	A Laurent polynomial in the variables $Y_1, \dots, Y_n$ with coefficients in $R$ is an element
	of
	\begin{align}
		R[Y_1, Y_1', \dots, Y_n, Y_n'] / (Y_1Y_1' = 1, \dots, Y_nY_n'=1)
	\end{align}
	We simply write $Y_j\inv$ instead of $Y_j'$.
	Laurent polynomial degrees are given by
	\begin{itemize}
		\item The 0 polynomial has degree $- \infty$ by convention
		\item the non-zero Laurent polynomial $a_n Y^n + a_{n-1} Y^{n-1} + \dots + a_0 + a_{-1} Y^{-1} + \dots + a_{-m} Y^{-m}$
		      with $a_n \neq 0$ and $a_{-m} \neq 0$
		      has \emph{positive degree} $n \geq 0$ and \emph{negative degree} $m \geq 0$.
	\end{itemize}
	Note that we can factorise this Laurent polynomial as $Y^{-m}$ multiplied by a (non-Laurent) polynomial.
\end{definition}

\begin{definition}[Complex Laurent Polynomial Interpretation]
	A \emph{Laurent polynomial interpretation} for a graphical calculus $\bbL$
	is a matrix interpretation:
	\begin{align}
		\interpret{\cdot}
		  & : \bbL[\alpha_1, \dots, \alpha_n] \to \Mat_{R[Y_1, Y_1\inv, \dots, Y_n, Y_n\inv]}
	\end{align}
	The source category here is the PROP of diagrams where generator labels are now taken not from the phase algebra of $\bbL$
	but from this phase algebra adjoin the formal variables $\alpha_1, \dots, \alpha_n$.
\end{definition}

We give the Laurent polynomial interpretations of Universal $\ZX$, $\ZW_R$, $\ZH_R$ and $\ring_R$ here,
showing how to link a phase variable $\alpha$ with the polynomial indeterminate $Y$.
These definitions are, frankly, so close to the original that the difference is easy to miss
so we show the ZX example in more detail in Example~\ref{exaZXLaurent} and then the others in Example~\ref{exaLaurentZWRing}.

\begin{example}[Laurent polynomial interpretations of Universal ZX] \label{exaZXLaurent}
	The Z spider from Universal ZX is parameterised by an $\alpha \in [0, 2\pi)$,
	and the (simple) matrix interpretation of some Z spider (with $\alpha$ instantiated at $a$) is:

	\begin{align}
		\interpret{\family{\spider{gn}{\alpha}}_{\alpha | \alpha = a}} =
		\begin{bmatrix}
			1      & 0     & \dots  &   & 0      \\
			0      & 0     &        &   & \vdots \\
			\vdots &       & \ddots &   &        \\
			       &       &        & 0 & 0      \\
			0      & \dots &        & 0 & e^{ia}
		\end{bmatrix} \quad \in \Mat_{\bbC}
	\end{align}

	Rather than instantiate the value of $\alpha$ before we apply the map,
	we instead make the substitution $Y := e^{i\alpha}$ to find a Laurent polynomial matrix interpretation:

	\begin{align}
		\interpret{\spider{gn}{\alpha}} =
		\begin{bmatrix}
			1      & 0     & \dots  &   & 0      \\
			0      & 0     &        &   & \vdots \\
			\vdots &       & \ddots &   &        \\
			       &       &        & 0 & 0      \\
			0      & \dots &        & 0 & Y
		\end{bmatrix} \quad \in \Mat_{\bbC[Y, Y\inv]}
	\end{align}
\end{example}

\begin{example}[Laurent polynomial interpretations of $\ring$, $\ZW$ and $\ZH$] \label{exaLaurentZWRing}
	For a Laurent polynomial interpretation we simply set $\alpha = Y$ to get:
	\begin{align}
		  & \ring_R & \spider{white}{\alpha} & \interpretedas \ket{0\dots 0}\bra{0\dots 0 } + Y\ket{1\dots 1}\bra{1\dots 1}                                   \\
		  & \ZW_R   & \spider{white}{\alpha} & \interpretedas \ket{0\dots 0}\bra{0\dots 0 } + Y\ket{1\dots 1}\bra{1\dots 1}                                   \\
		  & \ZH_R   & \spider{ZH}{\alpha}    & \interpretedas \sum_{\text{bitstrings}} (Y)^{i_1\dots i_m j_1 \dots j_n} \ket{j_1\dots j_n}\bra{i_1 \dots i_m} \\
	\end{align}
	The rest of the interpretations remain as usual (See Definitions~\ref{defZW}, \ref{defZHR}, and \ref{defRingR})
\end{example}

\begin{remark} \label{remEvaluationCommuting}
	This is an unusually pedantic treatment of the interplay between interpretation and phase variables,
	but it illustrates the commutativity of interpretation and evaluation.
\end{remark}

\begin{remark} \label{remZXLaurent}
	The generators for $\ZH_\bbC$, $\ZW_\bbC$ and $\ring_\bbC$ have Laurent polynomial matrix interpretations
	that are, in fact, just polynomial matrix interpretations.
	Really it is only for $\ZX$ that we are considering the broader Laurent polynomial version.
\end{remark}

\subsection{Degree of a matrix}

Our hope is to find properties of the interpretations of diagrams
(which are matrices of polynomials) just from the diagrams themselves.
An obvious property to investigate is the degree of the polynomials in the matrix interpretation,
and while we cannot easily determine this value precisely,
we can find bounds for the degrees just by looking at the diagrams themselves.

\begin{definition}[Matrix and diagram degree]
	We define the degree of a matrix and diagram by:
	\begin{itemize}
		\item
		      The $Y_j^+$-degree of a matrix in $\Mat_{\bbC[Y_1, Y_1\inv, \dots, Y_n, Y_n\inv]}$
		      is the maximum of the positive $Y_j$-degrees of the entries in that matrix
		\item The $Y_j^-$-degree is the maximum of the negative $Y_j$-degrees of
		      the entries in that matrix
		\item
		      The positive degree of a diagram is the positive degree of the matrix interpretation of that diagram
		      (likewise for negative degrees).
		\item
		      When clear from context we will refer to the degree of a phase variable $\alpha_j$ in the diagram,
		      meaning the degree of $Y_j$ in the interpretation.
	\end{itemize}

\end{definition}

\begin{example}[Laurent polynomial and matrix degrees]
	Here are example positive and negative degrees for first a polynomial, and then a $2\times2$ matrix of polynomials.
	\begin{align}
		\maxdegree{+}{Y}{Y^8+1+Y^{-2}}             & = 8                     \\
		\maxdegree{-}{Y}{Y^8+1+Y^{-2}}             & = 2                     \\
		\nonumber\\
		\maxdegree{+}{Y}{\begin{pmatrix}2 & Y^{-3} \\ Y & Y^2 - 2\end{pmatrix}} & = \max\set{0,0,1,2} = 2 \\
		\maxdegree{-}{Y}{\begin{pmatrix}2 & Y^{-3} \\ Y & Y^2 - 2\end{pmatrix}} & = \max\set{0,3,0,0} = 3
	\end{align}
\end{example}

\begin{proposition} \label{propUpperBound}
	For two diagrams $\bbD$ and $\bbD'$ we can find an upper bound for the
	degrees of the horizontal or vertical compositions of $\bbD$ and $\bbD'$,
	i.e.:
	\begin{align}
		\maxdegree{+}{}{
			(\bbD \comp \bbD')
		}
		  & \leq \maxdegree{+}{Y}{\bbD} + \maxdegree{+}{Y}{\bbD'} \\
		\maxdegree{+}{}{
			(\bbD \tensor \bbD')
		}
		  & \leq \maxdegree{+}{Y}{\bbD} + \maxdegree{+}{Y}{\bbD'} \\
		\nonumber \\
		\maxdegree{-}{}{
			(\bbD \comp \bbD')
		}
		  & \leq \maxdegree{-}{Y}{\bbD} + \maxdegree{-}{Y}{\bbD'} \\
		\maxdegree{-}{}{
			(\bbD \tensor \bbD')
		}
		  & \leq \maxdegree{-}{Y}{\bbD} + \maxdegree{-}{Y}{\bbD'}
	\end{align}
\end{proposition}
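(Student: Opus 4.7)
The plan is to reduce the statement to elementary facts about Laurent polynomials, using that the interpretation is a strict symmetric monoidal functor. Concretely, $\interpret{\bbD \comp \bbD'} = \interpret{\bbD'} \cdot \interpret{\bbD}$ (ordinary matrix multiplication, with the convention that we read bottom-to-top) and $\interpret{\bbD \tensor \bbD'} = \interpret{\bbD} \otimes \interpret{\bbD'}$ (Kronecker product). Every entry of the result is therefore a finite $\bbC$-linear combination of products of entries of $\interpret{\bbD}$ and $\interpret{\bbD'}$.

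First I would establish the two scalar lemmas on Laurent polynomials in $R[Y, Y\inv]$ (for $R$ an integral domain, which covers all the cases of interest):
\begin{align}
\maxdegree{+}{Y}{p \cdot q} &\leq \maxdegree{+}{Y}{p} + \maxdegree{+}{Y}{q}, &
\maxdegree{-}{Y}{p \cdot q} &\leq \maxdegree{-}{Y}{p} + \maxdegree{-}{Y}{q}, \\
\maxdegree{+}{Y}{p + q} &\leq \max\!\left(\maxdegree{+}{Y}{p}, \maxdegree{+}{Y}{q}\right), &
\maxdegree{-}{Y}{p + q} &\leq \max\!\left(\maxdegree{-}{Y}{p}, \maxdegree{-}{Y}{q}\right).
\end{align}
Both are immediate from writing $p = Y^{-m} \tilde p$ and $q = Y^{-m'} \tilde q$ with $\tilde p, \tilde q$ ordinary polynomials, and then applying the usual degree inequalities.

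Next I would write out a generic entry of $\interpret{\bbD} \cdot \interpret{\bbD'}$ as $\sum_k \interpret{\bbD}_{ik} \interpret{\bbD'}_{kj}$, and a generic entry of the Kronecker product $\interpret{\bbD} \otimes \interpret{\bbD'}$ as $\interpret{\bbD}_{ij}\cdot \interpret{\bbD'}_{kl}$. In both cases, applying the scalar lemmas entry-by-entry and then taking the maximum over the indices $i,j$ (and $k,l$) yields the desired bound. The same argument applies separately to each variable $Y_j$ in the multivariate Laurent polynomial ring, since the lemmas above lift word-for-word to $R[Y_1, Y_1\inv, \ldots, Y_n, Y_n\inv]$.

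There is no real obstacle here; the whole proposition is essentially a bookkeeping exercise. The only mild subtlety is that positive and negative degrees must be tracked independently (for instance, $p \cdot q$ may have strictly lower positive degree than $\maxdegree{+}{Y}{p} + \maxdegree{+}{Y}{q}$ if cancellations occur, which is why the statement is phrased as an inequality rather than an equality), and that the bound for sums uses $\max$ rather than a sum. Both facts are already handled by the scalar lemmas. I would close by noting that the bound is generally not tight, which motivates the more refined degree tracking used in later sections.
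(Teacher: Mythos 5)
Your proposal is correct and follows essentially the same route as the paper's proof: establish the scalar degree bounds for products and sums of Laurent polynomials, then observe that entries of the composed or tensored matrices are linear combinations of products of entries, so the bounds transfer entry-by-entry. Your additional remarks (explicit Kronecker-product entries, the multivariate lifting, and non-tightness due to cancellation) are fine elaborations but not a different argument.
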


\begin{proof}
	We first note that for Laurent polynomials $P$ and $P'$ in $\bbC[Y, Y\inv]$, and for $\lambda \in \bbC$:
	\begin{align}
		\maxdegree{+}{Y}{(\lambda P)}              & \leq \maxdegree{+}{Y}{P}                         \\
		\maxdegree{-}{Y}{(\lambda P)}              & \leq \maxdegree{-}{Y}{P}                         \\
		\nonumber  \\
		\maxdegree{+}{Y}{(P \times P')}            & \leq \maxdegree{+}{Y}{P}  + \maxdegree{+}{Y}{P'} \\
		\maxdegree{-}{Y}{(P \times P')}            & \leq \maxdegree{-}{Y}{P}  + \maxdegree{-}{Y}{P'} \\
		\nonumber  \\
		\maxdegree{+}{Y}{\left(\sum_j{P_j}\right)} & \leq \max_j \maxdegree{+}{Y}{P_j}                \\
		\maxdegree{-}{Y}{\left(\sum_j{P_j}\right)} & \leq \max_j \maxdegree{-}{Y}{P_j}
	\end{align}

	The composition $A \comp B$ or tensor product $A \tensor B$ of matrices produces a new matrix with entries that
	are linear combinations of products of the entries of $A$ and $B$.
	Therefore:

	\begin{align}
		\maxdegree{+}{Y}{(M \comp M')}   & \leq \maxdegree{+}{Y}{M} + \maxdegree{+}{Y}{M'} \\
		\maxdegree{+}{Y}{(M \tensor M')} & \leq \maxdegree{+}{Y}{M} + \maxdegree{+}{Y}{M'} \\
		\nonumber \\
		\maxdegree{-}{Y}{(M \comp M')}   & \leq \maxdegree{-}{Y}{M} + \maxdegree{-}{Y}{M'} \\
		\maxdegree{-}{Y}{(M \tensor M')} & \leq \maxdegree{-}{Y}{M} + \maxdegree{-}{Y}{M'}
	\end{align}

	Recalling that the degree of a diagram is the degree of its polynomial matrix interpretation
	this gives us the result for $\bbD$ and $\bbD'$.
\end{proof}

Now that we have shown how to bound the degree of a diagram
by looking at the degrees of its subdiagrams,
all that is left is to calculate the degrees of the generators of the diagrams.
From there we can then find bounds for the degrees of any of the diagrams
we consider in this chapter.
For simplicity we will consider nodes parameterised by a single variable $\alpha$,
and express their degree with respect to a variable $Y$.
Degrees of wire components are all 0.

\begin{itemize}
	\item[ZX:] Using $Y^n:=e^{n i \alpha}$, the degrees in $Y$ of the generators (for $n \geq 0$) are:

	      \begin{align}
		      \maxdegree{+}{\alpha}{\spider{gn}{n \alpha}}   & = n & \maxdegree{-}{\alpha}{\spider{gn}{n \alpha}}   & = 0 \\
		      \maxdegree{+}{\alpha}{\spider{gn}{- n \alpha}} & = 0 & \maxdegree{-}{\alpha}{\spider{gn}{- n \alpha}} & = n \\
		      \maxdegree{+}{\alpha}{\spider{rn}{n \alpha}}   & = n & \maxdegree{-}{\alpha}{\spider{rn}{n \alpha}}   & = 0 \\
		      \maxdegree{+}{\alpha}{\spider{rn}{- n \alpha}} & = 0 & \maxdegree{-}{\alpha}{\spider{rn}{- n \alpha}} & = n \\
		      \maxdegree{+}{\alpha}{\unary[H]{}} &= 0  &
		      \maxdegree{-}{\alpha}{\unary[H]{}} &= 0
	      \end{align}
	\item[ZH:]
	      We equate $Y:= \alpha$, and for $P$ any Laurent polynomial:

	      \begin{align}
		      \maxdegree{+}{\alpha}{\spider{ZH}{P(\alpha)}} &= \maxdegree{+}{Y} P &
		      \maxdegree{-}{\alpha}{\spider{ZH}{P(\alpha)}} &= \maxdegree{-}{Y} P \\
		      \maxdegree{+}{\alpha}{\spider{white}{}} & = 0 & \maxdegree{-}{\alpha}{\spider{white}{}} & = 0
	      \end{align}
	\item[ZW:]
	      We equate $Y:= \alpha$, and for $P$ any Laurent polynomial:

	      \begin{align}
		      \maxdegree{+}{\alpha}{\spider{black}{}}      & = 0 & \maxdegree{-}{\alpha}{\spider{black}{}}      & = 0 \\
		      \maxdegree{+}{\alpha}{\vc{}} & = 0 & \maxdegree{-}{\alpha}{\vc{}} & = 0 \\
		      \maxdegree{+}{\alpha}{\spider{white}{P(\alpha)}} &= \maxdegree{+}{Y} P &
		      \maxdegree{-}{\alpha}{\spider{white}{P(\alpha)}} &= \maxdegree{-}{Y} P \\
	      \end{align}

	\item[$\ring$:]
	      We equate $Y:= \alpha$, and for $P$ any Laurent polynomial:

	      \begin{align}
		      \maxdegree{+}{\alpha}{\state{white}{P}} & = \maxdegree{+}{Y} P & \maxdegree{-}{\alpha}{\state{white}{P}} & = \maxdegree{-}{Y} P \\
		      \maxdegree{+}{\alpha}{\binary[white]{}} & = 0                  & \maxdegree{-}{\alpha}{\binary[white]{}} & = 0                  \\
		      \maxdegree{+}{\alpha}{\binary[poly]{}}  & = 0                  & \maxdegree{-}{\alpha}{\binary[poly]{}}  & = 0                  \\
	      \end{align}
\end{itemize}

\begin{remark} \label{remVerifyPhaseAlgebraDirectly}
Using the interpretations above one could
evaluate diagrams containing phase variables
directly into polynomial rings and use this to verify an equation.
This would sidestep much of the need for a verification
result like the one in Theorem~\ref{thmparameter},
but would also not reveal results like
the bound on algebraic complexity
in Remark~\ref{remNonQImpliesPhaseVariable}
nor indicate where one could replace phases with phase variables for conjecture inference.
If the direct evaluation could
deal with quotient rings then it may be able
to avoid pitfalls like those in Example~\ref{excliffordt}.
\end{remark}

\subsection{Finite verification of phase variables}

Phase variables are, as it turns out,
remarkably well behaved when it comes to verification.
As we will see in Theorem~\ref{thmparameter}
we simply need to know a bound for the degree
of the polynomial entries in the matrix the diagram represents.
For all of the graphical calculi we have been considering
this is tantamount to counting occurrences of the variable
in the diagram, taking powers into account.
The only other ingredient for the theorem
is that of interpolation on a grid of points,
a technique that is sadly not well documented as to
its origin, but has existed for well over a century \cite{Multivariate}.

\begin{restatable}[Finite verification of phase variables]{theorem}{thmparameter}
	For a diagrammatic equation with phase variables
	\label{thmparameter}
	\[ \family{\bb{\bbD}_1 = \bb{\bbD}_2}_{\alpha_1, \dots, \alpha_n}\]

	that has a Laurent polynomial matrix interpretation,
	and the equation is sound at all values of $(\alpha_1, \dots, \alpha_n) = (a_1, \dots, a_n) \in A_1 \times \dots \times A_n$,
	where each $\abs{A_j}$ is sufficiently large,
	then the equation is sound for all values of $(\alpha_1, \dots, \alpha_n)$. That is:

	\begin{align}
		         & \interpret{\family{\bbD_1}_{\alpha_1, \dots, \alpha_n|\alpha_1 = a_1, \dots, \alpha_n = a_n}} = \nonumber
		\interpret{\family{\bbD_2}_{\alpha_1, \dots, \alpha_n|\alpha_1 = a_1, \dots, \alpha_n = a_n}} \\
		         & \qquad \qquad \forall a_1 \in A_1, \dots, a_n \in A_n
		\\ \nonumber
		\implies & \interpret{\family{\bbD_1}_{\alpha_1, \dots, \alpha_n|\alpha_1 = a_1, \dots, \alpha_n = a_n}} =
		\interpret{\family{\bbD_2}_{\alpha_1, \dots, \alpha_n|\alpha_1 = a_1, \dots, \alpha_n = a_n}} \\ \nonumber
		         & \qquad \qquad \forall a_1, \dots, a_n
	\end{align}

	The necessary size of $\abs{A_j}$ is given by:
	\begin{align}
		\abs{A_j} = & \max ( \maxdegree{+}{Y_j}{\bbD_1}, \maxdegree{+}{Y_j}{\bbD_2})
		+ \max ( \maxdegree{-}{Y_j}{\bbD_1}, \maxdegree{-}{Y_j}{\bbD_2})
		+ 1
	\end{align}

	`The maximum of the positive degrees, plus the maximum of the negative degrees, plus one.'
\end{restatable}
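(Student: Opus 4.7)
The plan is to reduce the statement to a classical fact about multivariate polynomial interpolation, using the Laurent polynomial matrix interpretation as the bridge between the diagrammatic world and the polynomial world. First I would form the difference matrix $M(Y_1, \ldots, Y_n) := \interpret{\bbD_1} - \interpret{\bbD_2}$, whose entries lie in $\bbC[Y_1, Y_1^{-1}, \ldots, Y_n, Y_n^{-1}]$ (or the appropriate base ring). Soundness of the equation at $(a_1, \ldots, a_n)$ is equivalent to $M$ vanishing after the substitution $Y_j \mapsto f_j(a_j)$, where $f_j$ is the map from phase values to indeterminates used in the interpretation ($f_j(a_j) = e^{ia_j}$ for ZX, $f_j(a_j) = a_j$ for $\ring$, $\ZW$, $\ZH$). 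The goal then becomes: show that if $M$ vanishes on a sufficiently large grid of substitutions, then $M$ is the zero matrix, which in turn forces the equation to be sound for all values.

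Next I would reduce to standard polynomials. For each variable $Y_j$, Proposition~\ref{propUpperBound} together with the per-generator degree bounds gives explicit upper bounds $d_j^+ := \max(\maxdegree{+}{Y_j}{\bbD_1}, \maxdegree{+}{Y_j}{\bbD_2})$ and $d_j^- := \max(\maxdegree{-}{Y_j}{\bbD_1}, \maxdegree{-}{Y_j}{\bbD_2})$ on every entry of $M$. Multiplying $M$ through by the monomial $\prod_j Y_j^{d_j^-}$ yields a matrix $\tilde M$ whose entries are ordinary polynomials of degree at most $d_j := d_j^+ + d_j^-$ in each $Y_j$ separately. Since this multiplication is by a unit in the Laurent polynomial ring (equivalently, we are only multiplying by nonzero scalars at each sample point), $M$ vanishes at a point iff $\tilde M$ does.

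I would then invoke multivariate interpolation on a product grid: any polynomial $p(Y_1, \ldots, Y_n)$ with $\deg_{Y_j} p \leq d_j$ that vanishes on a product set $B_1 \times \cdots \times B_n$ with $|B_j| \geq d_j + 1$ must be identically zero. This is the standard argument by induction on $n$ (fix $Y_2, \ldots, Y_n$, treat $p$ as a single-variable polynomial in $Y_1$ of degree at most $d_1$ which has $d_1 + 1$ roots, so it is zero; then the coefficients, viewed as polynomials in the remaining variables, vanish on a smaller grid and the inductive hypothesis applies). Setting $B_j = f_j(A_j)$, the hypothesis $|A_j| \geq d_j + 1$ gives exactly this bound, provided $f_j$ is injective on $A_j$; this is obvious for the identity $f_j$ in $\ring, \ZW, \ZH$, and for ZX one takes the $A_j$ to consist of phases whose exponentials are distinct.

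The main obstacle I anticipate is the ZX subtlety flagged by Example~\ref{excliffordt} and Remark~\ref{remZXLaurent}: in a finite fragment the phase group satisfies an extra relation like $2\pi \equiv 0$, so the substitution $Y = e^{i\alpha}$ lives in a quotient ring and the notion of "degree" is only well defined up to this relation. I would handle this either by interpreting the theorem over Universal ZX (where no such relation exists) and then descending, or by choosing the $A_j$ to be representatives whose images under $f_j$ are genuinely distinct elements of the quotient, which keeps the interpolation argument intact. A second, minor, obstacle is care with the empty-degree convention ($\deg 0 = -\infty$), but this only means the trivial case $M \equiv 0$ is handled directly without needing any sample points.
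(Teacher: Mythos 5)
Your proposal is correct and follows essentially the same route as the paper: clear the negative Laurent powers by multiplying through by $\prod_j Y_j^{d_j^-}$, bound the resulting ordinary-polynomial degrees via Proposition~\ref{propUpperBound}, and conclude by multivariate interpolation on a product grid of distinct substitution values. The only difference is cosmetic — you prove the grid-interpolation step by induction on the number of variables, whereas the paper exhibits the evaluation matrix as a tensor product of Vandermonde matrices and invokes invertibility — and your extra care about injectivity of the phase-to-indeterminate map and the quotient-ring caveat matches what the paper handles implicitly and in Example~\ref{excliffordt}.
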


Before we embark on the proof we sketch it as follows:
\begin{itemize}
	\item Manipulate the equation $\interpret{\bbD} = \interpret{\bbD'}$ into an equation of the form $M = 0$,
	      where $M$ is a matrix of (non-Laurent) polynomials.
	\item Perform multivariate polynomial interpolation element-wise on $M$.
	\item In doing this interpolation we need to know an upper bound for the degrees of the polynomials in $M$,
	      which we calculate using Proposition~\ref{propUpperBound}
\end{itemize}

\begin{proof}

	We are seeking the multivariate complex polynomials that populate the matrices $\interpret{\bb{D}_1}$ and $\interpret{\bb{D}_2}$.
	We begin by combining the two matrices of Laurent polynomials into one matrix of (not-Laurent) polynomials and a scale factor of the form $Y_1^{m_1} \dots Y_n^{m_n}$.

	\begin{itemize}

		\item We define:
		      \begin{align}
			      M_1 :=  \interpret{\bb{D}_1} \quad \quad
			      M_2 :=  \interpret{\bb{D}_2}
		      \end{align}
		      and wish to show $M_1 = M_2$.

		\item First we pull enough copies of $Y_1\inv , \dots, Y_n\inv$ out of each side so that we have an equation of the form:
		      \begin{align}
			      M_1' \prod_j (Y_j\inv)^{\maxdegree{-}{Y_j}{M_1}} =  M_2'  \prod_j (Y_j\inv)^{\maxdegree{-}{Y_j}{M_2}}
		      \end{align}
		      Where $M_1'$ and $M_2'$ are matrices of (not-Laurent) polynomials.

		\item Let $m_j := \max(\maxdegree{-}{Y_j}{M_1}, \maxdegree{-}{Y_j}{M_2})$
		      and multiply both sides by $\prod_j Y_j^{m_j}$ to clear any negative powers of $Y_j$.
		      \begin{align}
			      M_1' \prod_j Y^{m_j - \maxdegree{-}{Y_j}{M_1}}  =  M_2' \prod_j Y^{m_j - \maxdegree{-}{Y_j}{M_2}}
		      \end{align}

		\item Then subtract the right hand side from the left:
		      \begin{align}
			                & M_1' \prod_j Y_j^{m_j - \maxdegree{-}{Y_j}{M_1}}  -  M_2' \prod_j Y^{m_j - \maxdegree{-}{Y_j}{M_2}} = 0 \\
			      \bb{M} := & M_1' \prod_j Y^{m_j - \maxdegree{-}{Y_j}{M_1}}  -  M_2' \prod_j Y^{m_j - \maxdegree{-}{Y_j}{M_2}}
		      \end{align}

		      Note that $\bb{M}$ is a matrix of (not-Laurent) polynomials.
		      The statement $\bb{M} = 0$ can be viewed as a stating that each entry of $\bb{M}$ is equal to the 0 polynomial.

		\item We will use the notation $\maxdegree{}{Y_j}{\cdot}$ for the degree of a (not-Laurent) polynomial,
		      or matrix of polynomials.
		      We could continue to use the term positive degree, the definitions coincide, but want to make it clear
		      whenever we are not in the Laurent polynomial setting.

		\item We wish to find a bound for the maximum degree of any polynomial in $\bb{M}$:
		      \begin{align}
			                             & \ \maxdegree{}{Y_j}{\bb{M}}                                                                                             \\
			      =                      & \ \maxdegree{}{Y_j}{ M_1' \prod_j Y^{m_j - \maxdegree{-}{Y_j}{M_1}}  -  M_2' \prod_j Y^{m_j - \maxdegree{-}{Y_j}{M_2}}} \\
			      \leq                   & \max ( \maxdegree{}{Y_j}{M_1'} + m_j - \maxdegree{-}{Y_j}{M_1},                                                         \\
			                             & \nonumber \qquad \maxdegree{}{Y_j}{M_2'} + m_j - \maxdegree{-}{Y_j}{M_2} )                                              \\
			      =                      & \max ( \maxdegree{+}{Y_j}{M_1} + \maxdegree{-}{Y_j}{M_1} + m_j - \maxdegree{-}{Y_j}{M_1},                               \\
			                             & \nonumber \qquad \maxdegree{+}{Y_j}{M_2} +  \maxdegree{-}{Y_j}{M_2} + m_j - \maxdegree{-}{Y_j}{M_2} )                   \\
			      =                      & \max ( \maxdegree{+}{Y_j}{M_1} + m_j  , \maxdegree{+}{Y_j}{M_2} + m_j )                                                 \\
			      =                      & \max ( \maxdegree{+}{Y_j}{M_1}, \maxdegree{+}{Y_j}{M_2}) + m_j                                                          \\
			      \label{eqnparam-deg} = & \max ( \maxdegree{+}{Y_j}{M_1}, \maxdegree{+}{Y_j}{M_2}) +
			      \max ( \maxdegree{-}{Y_j}{M_1}, \maxdegree{-}{Y_j}{M_2})
		      \end{align}

		\item Suppose we know that our diagram equation
		      is sound for parameter choices in a large enough \emph{regular grid} of values.
		      That is to say we know:
		      \begin{align}
			      &\interpret{\family{\bb{D}_1}_{\alpha_1, \dots, \alpha_n|\alpha_1 = a_1, \dots, \alpha_n=a_n}} =
			      \interpret{\family{\bb{D}_2}_{\alpha_1, \dots, \alpha_n|\alpha_1 = a_1, \dots, \alpha_n=a_n}}  \\
			      \nonumber\text{for}\quad   & (a_1, \dots, a_n) \in A_1 \times \dots \times A_n &   \\
			      \nonumber\text{where}\quad & \abs{A_j} = \deg_{Y_j}(\bb{M}) +1                 &   \\
			      \nonumber A_j \quad & := \set{a_{j,0}\;, \dots,\; a_{j,\deg{Y_j}}}
		      \end{align}

		      By picking a polynomial entry $P$ of $\bb{M}$, expressing $P$ using the multi-index $\beta$ as
		      $P = \sum_{\beta} c_\beta Y^\beta$,
		      and then evaluating $P$ at every point in $A_1 \times \dots \times A_n$
		      we construct the system of equations:
		      \begin{align}
			      \begin{bmatrix}
				      c_{0, \dots,0}a^{0, \dots,0}_{0, \dots, 0}                 & c_{1, \dots,0}a^{1, \dots,0}_{0, \dots, 0}                 & \dots  & c_{\deg{Y_1}, \dots, \deg{Y_n}}a^{\deg{Y_1}, \dots,\deg{Y_n}}_{0, \dots, 0}                 \\
				      \vdots                                                     & \vdots                                                     & \ddots & \vdots                                                                                      \\
				      c_{0, \dots,0}a^{0, \dots,0}_{\abs{A_1}, \dots, \abs{A_n}} & c_{1, \dots,0}a^{1, \dots,0}_{\abs{A_1}, \dots, \abs{A_n}} & \dots  & c_{\deg{Y_1}, \dots, \deg{Y_n}}a^{\deg{Y_1}, \dots,\deg{Y_n}}_{\abs{A_1}, \dots, \abs{A_n}} \\
			      \end{bmatrix}
			      =
			      \begin{bmatrix}
				      0 \\ \vdots \\ 0
			      \end{bmatrix}
		      \end{align}

		      Which we view as:
		      \begin{align}
			      \begin{bmatrix}
				      V
			      \end{bmatrix}
			      \begin{bmatrix}
				      c_{0, \dots, 0} \\ \vdots \\ c_{\deg{Y_1}, \dots, \deg{Y_n}}
			      \end{bmatrix}
			      =
			      \begin{bmatrix}
				      0 \\ \vdots \\ 0
			      \end{bmatrix}
		      \end{align}

		      Where $V$ contains all the products $a_1^{\beta_1} \times \dots \times a_n^{\beta_n}$,
		      $\beta$ ranging from $(0, \dots, 0)$ to $(\deg{Y_1}, \dots, \deg{Y_n})$.
		      Thankfully $V$ decomposes as:

		      \begin{align}
			      V =    & V_1 \tensor \dots \tensor V_n \\
			      V_j := &
			      \begin{bmatrix}
				      a_{j,0}^0         & a_{j,0}^1         & \dots & a_{j,0}^{\deg{Y_j}}         \\
				      a_{j,1}^0         & a_{j,1}^1         & \dots & a_{j,1}^{\deg{Y_j}}         \\
				      \vdots            & \vdots            &       & \vdots                      \\
				      a_{j,\deg{Y_j}}^0 & a_{j,\deg{Y_j}}^1 & \dots & a_{j,\deg{Y_j}}^{\deg{Y_j}} \\
			      \end{bmatrix}
		      \end{align}

		\item Since $\det(A \tensor B) \neq 0$ if and only if $\det(A) \neq 0$ and $\det(B) \neq 0$,
		      and since $\det(V_j) \neq 0$ because each $V_j$ is a Vandermonde matrix,
		      we know that $\det(V) \neq 0$.
		      Since $V$ is therefore invertible we know that all the coefficients $c_\beta$ must be 0,
		      and therefore $P$ is the 0 polynomial.

		\item In the presence of a regular grid on which $\bb{D}_1$ and $\bb{D}_2$ agree we know:

		      \begin{align}
			      &&\interpret{\family{\bb{D}_1}_{\alpha_1, \dots, \alpha_n|\alpha_k \in A_k \forall k}}
			      &= \interpret{\family{\bb{D}_2}_{\alpha_1, \dots, \alpha_n|\alpha_k \in A_k  \forall k}} \\
			        & \implies & \quad \text{ for any entry } P \text{ of } \bb{M} \qquad P & = 0                                                         \\
			        & \implies & \bb{M}                                                     & = 0                                                         \\
			        & \implies & M_1' \prod_j Y^{m_j - \maxdegree{-}{Y_j}{M_1}}             & = M_2' \prod_j Y^{m_j - \maxdegree{-}{Y_j}{M_2}}            \\
			        & \implies & M_1                                                        & = M_2                                                       \\
			        & \implies & \interpret{\family{\bb{D}_1}_{\alpha_1, \dots, \alpha_n}}  & = \interpret{\family{\bb{D}_2}_{\alpha_1, \dots, \alpha_n}}
		      \end{align}

		\item By setting $d_j$ to be strictly greater than $\maxdegree{}{Y_j}{\bb{M}}$
		      (which we can find a bound for using \eqref{eqnparam-deg}), and $\abs{A_j} = d_j$ we attain our result;
		      that if we know that the interpretations of the families of diagrams agree on the regular grid described by the sizes $d_j$
		      then the interpretations agree on all points in the phase algebra.

	\end{itemize}
\end{proof}

\begin{example}[Finding the sizes of the $A_j$]
	The following Universal ZX diagram contains no !-boxes and two phase variables.
	\label{exaPauliSpiderVerification}
	\begin{align}
		\vc{\InputIfFileExists{./figures/ZX/param_example_zx.tikz}{}{Missing file!}} & \qquad \label{eqnVerSpider}
		\begin{matrix}
			\deg^+_{\alpha_1}\bb{D}_1 = 1 &
			\deg^+_{\alpha_2}\bb{D}_1 = 1 \\
			\deg^-_{\alpha_1}\bb{D}_1 = 0 &
			\deg^-_{\alpha_2}\bb{D}_1 = 0 \\
		\end{matrix}
		\qquad \begin{matrix}
			\deg^+_{\alpha_1}\bb{D}_2 = 1 &
			\deg^+_{\alpha_2}\bb{D}_2 = 1 \\
			\deg^-_{\alpha_1}\bb{D}_2 = 0 &
			\deg^-_{\alpha_2}\bb{D}_2 = 0 \\
		\end{matrix}
	\end{align}

	In order to apply Theorem~\ref{thmparameter} we should therefore construct $A_1$ and $A_2$ such that
	$\abs{A_1} =\max\set{1,1} + \max\set{0,0}+1$ and  $\abs{A_2} = \max\set{1,1} + \max\set{0,0}+1$.
	By picking $A_1 = A_2 = \set{0 , \pi}$
	we therefore know that we can verify this parameterised family of diagram equations for all values
	of $\alpha_1$ and $\alpha_2$ by verifying this equation on the following grid of values:

	\begin{align}
		\begin{matrix}
			               & \alpha_1 = 0 & \alpha_1 = \pi \\
			\alpha_2 = 0   & (0,0)        & (0,\pi)        \\
			\alpha_2 = \pi & (\pi, 0)     & (\pi, \pi)
		\end{matrix}
	\end{align}

	I.e. by verifying the four equations:
	\begin{align}
		\zxtwospider{0}{0}   & \qquad
		\zxtwospider{0}{\pi}   \\
		\zxtwospider{\pi}{0} & \qquad
		\zxtwospider{\pi}{\pi}
	\end{align}
	we can assert that the diagram equation in \eqref{eqnVerSpider}
	is sound for all values of $\alpha_1$ and $\alpha_2$ in $[0, 2\pi)$.
\end{example}

\begin{remark} \label{remVerificationOutsidePauli}
	Example~\ref{exaPauliSpiderVerification} has verified a rule that applies
	to all phases in $[0, 2\pi)$, but the verification used only Clifford phases.
	Quantum circuits using just Clifford phases are efficiently simulable on classical computers,
	but at time of writing Universal ZX diagrams are not \cite{Gottesman98}.
\end{remark}

\begin{corollary}[{\cite[Theorem 3]{BeyondCliffordT}}] \label{corZXPhases}
	In the Universal ZX calculus it suffices to check $(\alpha_1, \dots, \alpha_n) \in A_1 \times \dots \times A_n$
	to prove an equation parameterised by $\alpha_j$, where the $A_j$ are sets of distinct angles
	with
	\begin{align}
		\abs{A_j} = & \max
		\set{ \maxdegree{+}{\alpha_j}{\bbD_1},\maxdegree{+}{\alpha_j}{\bbD_2} }
		+ \max \set{ \maxdegree{-}{\alpha_j}{\bbD_1}, \maxdegree{-}{\alpha_j}{\bbD_2} }
		+1
	\end{align}

\end{corollary}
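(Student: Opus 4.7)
The plan is to deduce this statement as a direct specialisation of Theorem~\ref{thmparameter} to the Universal ZX calculus, with the only real work being the verification that the hypotheses of that theorem are satisfied in this setting.

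First, I would confirm that Universal ZX admits a Laurent polynomial matrix interpretation. This was already established in Example~\ref{exaZXLaurent} via the substitution $Y_j := e^{i\alpha_j}$, under which a Z-spider $\spider{gn}{n\alpha_j}$ has $Y_j$-entry equal to $Y_j^n$ (and similarly for X-spiders and for negative coefficients), while the Hadamard generator and the cup, cap, swap, and wire all contribute degree zero in each $Y_j$. Combining these per-generator degrees with Proposition~\ref{propUpperBound}, every ZX diagram $\bbD$ has well-defined finite positive and negative $\alpha_j$-degrees, and these bound the $Y_j$-degrees of the entries of $\interpret{\bbD}$.

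With the Laurent polynomial interpretation in hand, Theorem~\ref{thmparameter} applies verbatim and yields precisely the sampling bound $\abs{A_j} = \max\set{\maxdegree{+}{\alpha_j}{\bbD_1}, \maxdegree{+}{\alpha_j}{\bbD_2}} + \max\set{\maxdegree{-}{\alpha_j}{\bbD_1}, \maxdegree{-}{\alpha_j}{\bbD_2}} + 1$ claimed in the corollary. The only point where care is needed is that Theorem~\ref{thmparameter} is phrased over a phase algebra (treating the $A_j$ as sets of elements on which the interpretation is evaluated), whereas the corollary talks about sets of distinct \emph{angles} in $[0,2\pi)$. The Vandermonde argument inside the proof of Theorem~\ref{thmparameter} requires the evaluation points $Y_j = e^{i a_j}$ to be pairwise distinct complex numbers; since $\alpha \mapsto e^{i\alpha}$ is injective on the fundamental domain $[0,2\pi)$, distinct angles translate into distinct $Y_j$-values, and the argument goes through unchanged.

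The main (and rather minor) obstacle is therefore purely translational: one must check that ``distinct angles'' in the phase group of Universal ZX is the correct hypothesis to make the Vandermonde step of Theorem~\ref{thmparameter} non-degenerate. Because there is no non-trivial quotient relation on $[0,2\pi)$ in the Universal fragment (unlike, say, $<\piby{4}>$, where such care would be needed, as hinted at in Example~\ref{excliffordt}), injectivity of the exponential map is all that is required, and the corollary follows immediately.
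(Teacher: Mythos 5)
Your proposal is correct and matches the paper's route exactly: the corollary is stated there as an immediate specialisation of Theorem~\ref{thmparameter} to Universal ZX, using the Laurent polynomial interpretation of Example~\ref{exaZXLaurent} and the generator degree counts, with Remark~\ref{remVilmart} noting only that the original reference proved it by a different (rank-based) method. Your extra observation that distinct angles give distinct values $e^{ia_j}$, so the Vandermonde step is non-degenerate in the Universal (non-quotiented) fragment, is a correct and welcome piece of bookkeeping that the paper leaves implicit.
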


\begin{remark} \label{remVilmart}
	Corollary \ref{corZXPhases} was first proved in \cite{BeyondCliffordT}.
	The authors of that paper use the symbol $\mu$ to count appearances of $\alpha_j$ (with coefficient),
	and $T_j$ to denote a large enough set of values.
	Their method does not use Laurent polynomials, instead examining ranks of certain matrices,
	but this also means their method does not extend neatly to other graphical calculi.

	The result of \cite{BeyondCliffordT} is in fact stronger than that of Theorem \ref{thmparameter};
	as they show that under the conditions given here there is a \emph{universal} proof of the parameterised equation,
	something that the method given in Theorem \ref{thmparameter} does not show.
	For the purposes of verification, however, it is not important how an equation is derived,
	only whether the equation is sound.
\end{remark}

\begin{remark}[Distances between points] \label{remAnyPointsWillDo}
It is worth noting that there is no restriction
on the points in each $A_j$ used in Theorem~\ref{thmparameter},
beyond being distinct.
If one were to work over arbitrary rings rather than fields we would
require that the differences between any pair of points were not zero-divisors.
\end{remark}

\begin{example}[ZH phase variable verification]
	Note that the ZX result required the variables to be linear,
	but for ZH and ZW this result applies to diagrams whose phases are polynomial
	(or even Laurent polynomial) in the $\alpha_j$.
	For example we can verify the following ZH equation by checking 3 distinct values of $\alpha$:
	\begin{align}
		\vc{\InputIfFileExists{./figures/ZH/ZH4.tikz}{}{Missing file!}}
	\end{align}

\end{example}

\begin{example}[Phase variables in a quotient ring]
	\label{excliffordt}
	Consider a Clifford+T ZX diagram that contains at least 8 nodes labelled by a positive $\alpha$.
	Our theorem says that for any equation containing this diagram it suffices to try at least 9 distinct values of $\alpha$,
	but this is impossible since there are only 8 distinct values of $\alpha$ available in Clifford+T.

	This is because our Laurent polynomial matrix interpretation
	needs to be viewed not in $\bb{C}[Y, Y\inv]$ but in $\bb{C}[Y] / (Y^8-1)$,
	reflecting the property $8 \times \alpha = 0$ in our phase group.
	All polynomials in $\bb{C}[Y] / (Y^8-1)$ have degree at most 7,
	and so it is never necessary to check more than 8 points.
\end{example}

\section{Phase variables over qubits} \label{secPhaseVariablesOverQubits}

For $\ring_\bbC$, $\ZW_\bbC$, $\ZH_\bbC$ and Universal ZX we can combine Theorem~\ref{thmparameter}
with Theorem~\ref{thmLiftRingHomSoundness} to achieve phase variable verification with a single equation.
The implication of Theorem~\ref{thmparameter} is that if we find `enough' sound instantiations
we can then infer that every instantiation is sound.
The result of Theorem~\ref{thmLiftRingHomSoundness} is that we can use phase homomorphism pairs
to discover more sound equations from an existing sound equation.
All we therefore need to do is find the right phase homomorphisms,
and the right starting equation, and we will have sufficient data to conclude soundness
for a family of equations parameterised by phase variables.

In order to show this we shall require some Galois theory,
which we include below, leading up to Theorem~\ref{thmQubitSingleVerifyingPhaseEquation}.
Galois Theory concerns itself with the study of two fields,
$K \subset L$,
and the field automorphisms of the larger that do not affect the smaller.
A classic example is the pair of fields $\bbR \subset \bbC$,
where the only automorphisms of $\bbC$ that preserves $\bbR$ are
the identity and complex conjugation.
A second example is the pair of fields (usually just called a `field extension')
$\bbQ \subset \bbQ(\sqrt{2})$,
where the only non-trivial automorphism of $\bbQ(\sqrt{2})$ that preserves $\bbQ$ is
the map that sends $a + b \sqrt{2} \mapsto a - b \sqrt{2}$.

Our reason for using Galois Theory is that we will want to fix almost
all of the phases in a diagram, while affecting some others.
For example in a ZX diagram with phases that are multiples of $\pi/8$
we can define the fields $K := \bbQ(e^{i\pi/4})$
and $L := \bbQ(e^{i\pi/8})$.
The automorphisms of $L$ that fix $K$
will lift to phase automorphisms that affect only those phases
that are odd multiples of $\pi/8$.
Much of the work below is in proving
the existence of suitable field extensions,
and then showing that suitable automorphisms exist.
Definitions, lemmas, and theorems
leading up to Theorem~\ref{thmQubitSingleVerifyingPhaseEquation}
are either directly from the literature
(and cited)
or are expected to be well know (and a reference could not be found).
Theorems~\ref{thmQubitSingleVerifyingPhaseEquation}
and \ref{thmZXPhasesOverQubits} are the crux of this section.
This section concerns itself with qubit graphical calculi,
and as such all the fields considered will be subfields of $\bbC$.

\begin{definition}[Degree of a field extension {\cite[Definition~6.2]{GaloisTheory}}]  \label{defGaloisTowerDegree} 
	Where $K$ is a subfield of $L$, written $L / K$,
	we define the degree of the extension $[L : K]$ as
	the dimension of $L$ considered as a vector space over $K$.
\end{definition}

\begin{lemma}[{\cite[Proposition~6.7]{GaloisTheory}}] \label{lemDegreeOfMinimalPolynomial} 
	For $K(\alpha) / K$ a field extension
	where $\alpha$ satisfies some minimal polynomial $m \in K[X]$,
	\begin{align}
		[K(\alpha) : K] & = \deg m
	\end{align}
\end{lemma}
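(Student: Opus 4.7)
The plan is to exhibit an explicit $K$-basis of $K(\alpha)$, namely $\{1, \alpha, \alpha^2, \ldots, \alpha^{n-1}\}$, where $n = \deg m$. Establishing that this set is a basis will immediately give $[K(\alpha) : K] = n$.

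First I would show linear independence. Suppose some $K$-linear combination $c_0 + c_1 \alpha + \cdots + c_{n-1}\alpha^{n-1} = 0$ with the $c_i \in K$. Then the polynomial $p(X) := c_0 + c_1 X + \cdots + c_{n-1}X^{n-1} \in K[X]$ has $\alpha$ as a root and satisfies $\deg p < n = \deg m$. Since $m$ is the minimal polynomial of $\alpha$ over $K$, any polynomial in $K[X]$ of smaller degree that vanishes at $\alpha$ must be the zero polynomial, so each $c_i = 0$.

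Next I would show spanning, which splits into two sub-steps. First, every element of $K[\alpha]$ (polynomial expressions in $\alpha$ with $K$-coefficients) lies in the $K$-span of $\{1, \alpha, \ldots, \alpha^{n-1}\}$: given any $p(X) \in K[X]$, the division algorithm yields $p(X) = q(X)m(X) + r(X)$ with $\deg r < n$, and then $p(\alpha) = r(\alpha)$ since $m(\alpha) = 0$. Second, I would identify $K[\alpha]$ with $K(\alpha)$. Here I would invoke irreducibility of $m$ (which is forced by minimality, since a non-trivial factorisation would produce a lower-degree polynomial vanishing at $\alpha$). The evaluation map $K[X] \to K[\alpha]$, $f \mapsto f(\alpha)$, has kernel $(m)$, so $K[\alpha] \cong K[X]/(m)$; since $m$ is irreducible, $(m)$ is maximal in the PID $K[X]$, hence $K[X]/(m)$ is a field. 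Thus $K[\alpha]$ is already a field containing both $K$ and $\alpha$, and so by minimality of $K(\alpha)$ we conclude $K(\alpha) = K[\alpha]$.

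The main obstacle, or at least the only non-trivial conceptual step, is the identification $K(\alpha) = K[\alpha]$, which relies on irreducibility of the minimal polynomial and the fact that $K[X]$ is a principal ideal domain. The rest is a direct application of the division algorithm and the defining property of minimal polynomials. Combining independence and spanning gives that $\{1, \alpha, \ldots, \alpha^{n-1}\}$ is a $K$-basis of $K(\alpha)$, so $[K(\alpha) : K] = n = \deg m$ as claimed.
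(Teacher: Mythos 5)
Your proof is correct and complete: linear independence from minimality, spanning via the division algorithm, and the identification $K(\alpha)=K[\alpha]$ through irreducibility of $m$ and maximality of the ideal $(m)$ in the principal ideal domain $K[X]$. Note, however, that the paper does not prove this lemma at all — it is quoted as a standard background result from the cited Galois theory text — and the argument you give is precisely the standard textbook proof of that result, so there is nothing to compare beyond saying your write-up would serve as a faithful proof of the cited fact.
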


\begin{definition}[Normal extension {\cite[Definition 9.8]{GaloisTheory}}]  \label{defGaloisNormal}
	The field extension $L / K$ is called \emph{normal}
	if every irreducible polynomial over $K$ with a root in $L$ splits in $L$
	as a product of linear factors.
\end{definition}

\begin{theorem}[Normal closure {\cite[Theorem 11.6]{GaloisTheory}}] \label{thmGaloisNormalClosure}
	
	If $L / K$ is a finite extension of subfields of $\bbC$, then there exists
	a unique smallest normal extension $N / L / K$ which is a finite extension of $K$.
\end{theorem}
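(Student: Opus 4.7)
The plan is to construct $N$ explicitly as a splitting field inside $\bbC$, then verify it has the required universal property. Since $L/K$ is finite, I would first choose a finite generating set, writing $L = K(\alpha_1, \dots, \alpha_n)$ for some $\alpha_i \in L$; this is possible because a finite extension has a finite basis as a $K$-vector space, and any basis suffices as a generating set (in fact, by the primitive element theorem we could take $n=1$, but this is not needed).

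Each $\alpha_i$ is algebraic over $K$ (being an element of a finite extension), so each has a minimal polynomial $m_i \in K[X]$. The key construction is to let $N \subset \bbC$ be the subfield generated over $K$ by all the complex roots of the polynomial $m := \prod_i m_i$. Then I would verify three properties. First, $N \supseteq L$, since every $\alpha_i$ is a root of $m_i$ and hence lies in $N$. Second, $N / K$ is a finite extension, because $N$ is generated over $K$ by the finitely many roots of $m$, each algebraic over $K$, and by Lemma~\ref{lemDegreeOfMinimalPolynomial} together with the tower law the degree is bounded by $\prod_i (\deg m_i)!$. Third, $N/K$ is normal: any irreducible polynomial $p \in K[X]$ with a root $\beta \in N$ has $\beta$ algebraic over $K$, and one checks using the fact that $\bbC$ is algebraically closed together with the construction of $N$ as a splitting field that all conjugates of $\beta$ over $K$ also lie in $N$.

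For minimality and uniqueness, suppose $N'/L/K$ is any normal finite extension with $N' \subset \bbC$. Each $\alpha_i \in L \subset N'$ is a root of its irreducible minimal polynomial $m_i \in K[X]$, so by normality of $N'/K$ all complex roots of $m_i$ lie in $N'$. Therefore $N' \supseteq N$, which gives both minimality and uniqueness of the smallest such extension inside $\bbC$.

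The main obstacle, if any, is verifying normality of $N$ cleanly; one has to argue that an arbitrary irreducible $p \in K[X]$ with one root in $N$ splits over $N$, not merely that the specific polynomials $m_i$ do. The standard way around this is to use that $N$ is the splitting field of $m$ over $K$, and splitting fields of a single polynomial are normal; so the argument reduces to checking that any $K$-embedding $\sigma: N \hookrightarrow \bbC$ must map $N$ to itself, which follows because $\sigma$ permutes the roots of $m$ and $N$ is generated by those roots. Everything else is essentially bookkeeping once the construction inside $\bbC$ is fixed.
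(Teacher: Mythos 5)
Your construction is correct and is essentially the standard argument: the paper states this result as an imported textbook theorem (citing \cite{GaloisTheory}) and gives no proof of its own, and the cited proof is exactly your route — take $N$ to be the splitting field inside $\bbC$ of the product of the minimal polynomials of a finite set of generators of $L/K$, check it is finite and normal over $K$, and observe that any normal extension of $K$ containing $L$ must contain all roots of those minimal polynomials, giving minimality and uniqueness. No gaps worth flagging; the only point needing care, normality of the splitting field via extension of $K$-embeddings, is the standard lemma you correctly identify.
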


\begin{theorem}[Tower Law {\cite[Corollary~6.6]{GaloisTheory}}] \label{thmTowerLaw}
	If $K_0 \subseteq K_1 \subseteq \dots \subseteq K_n$ are subfields of $\bbC$, then
	\begin{align}
		[K_n : K_0] & = [K_n : K_{n-1}] [K_{n-1} : K_{n-2}] \dots [K_1 : K_0]
	\end{align}
\end{theorem}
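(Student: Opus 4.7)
The plan is to reduce the statement to the two-step case $K_0 \subseteq K_1 \subseteq K_2$ by induction on $n$, and then to establish the two-step case by exhibiting an explicit basis.

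First I would dispatch the inductive step: for $n = 1$ the claim is vacuous, and for $n \geq 3$ I would split the tower as $K_0 \subseteq K_{n-1}$ together with $K_{n-1} \subseteq K_n$. If the two-step result is known, this gives $[K_n : K_0] = [K_n : K_{n-1}][K_{n-1} : K_0]$, and the inductive hypothesis applied to $K_0 \subseteq K_1 \subseteq \dots \subseteq K_{n-1}$ factors the second degree as desired. So the whole theorem hinges on the base case.

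For the base case, let $\set{a_i}_{i \in I}$ be a $K_0$-basis of $K_1$ and $\set{b_j}_{j \in J}$ a $K_1$-basis of $K_2$. I claim $\set{a_i b_j}_{(i,j) \in I \times J}$ is a $K_0$-basis of $K_2$, which immediately yields $[K_2 : K_0] = \abs{I}\,\abs{J} = [K_1 : K_0]\,[K_2 : K_1]$. Spanning is routine: any $x \in K_2$ expands as $x = \sum_j c_j b_j$ with $c_j \in K_1$, and each $c_j$ as $\sum_i d_{ij} a_i$ with $d_{ij} \in K_0$, so $x = \sum_{i,j} d_{ij}\, a_i b_j$.

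The main obstacle is linear independence, and this is where one has to be careful about which field the coefficients live in. Given a relation $\sum_{i,j} d_{ij}\, a_i b_j = 0$ with $d_{ij} \in K_0$, I would regroup it as $\sum_j \bigl(\sum_i d_{ij}\, a_i\bigr) b_j = 0$. The inner sums $\sum_i d_{ij}\, a_i$ lie in $K_1$ because each $a_i \in K_1$ and $K_0 \subseteq K_1$, so $K_1$-linear independence of the $b_j$ forces $\sum_i d_{ij}\, a_i = 0$ for every $j$, and then $K_0$-linear independence of the $a_i$ forces every $d_{ij} = 0$. The argument is cleanest when both extensions are finite-dimensional; if either degree is infinite I would interpret the product of cardinals in the usual way, so that both sides of the equation are $\infty$, which handles the infinite case uniformly. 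Since Theorem~\ref{thmGaloisNormalClosure} and the applications envisaged for this section only require finite towers, the finite case is what actually matters downstream.
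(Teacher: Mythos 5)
The paper does not prove this statement at all—it is imported directly from the literature (\cite[Corollary~6.6]{GaloisTheory})—so there is no in-paper argument to compare against. Your proof is correct and is exactly the standard one found in such references: induction on the length of the tower reduces to the two-step case, which is settled by showing the products $a_i b_j$ of bases form a basis, with your regrouping of coefficients correctly handling linear independence; the remark on infinite degrees is fine (the basis argument in fact gives the cardinal identity outright), and as you note only finite towers are needed for the applications in \S\ref{secPhaseVariablesOverQubits}.
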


\begin{theorem}[Part of the fundamental theorem of Galois theory {\cite[Theorem~12.2]{GaloisTheory}}]
	\label{thmFundamentalTheoremGaloisTheory} 
	If $L / K$ is a finite, normal field extension inside $\bbC$
	then $\Gal(L / K)$, the group of field automorphisms of $L$ that fix $K$,
	has order $[L : K]$
\end{theorem}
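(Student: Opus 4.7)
The plan is to reduce the statement to counting roots of a single minimal polynomial, using the primitive element theorem together with the hypotheses that $L/K$ is finite and normal inside $\bbC$. First, because $L$ is a subfield of $\bbC$, every finite extension $L/K$ is automatically separable (we are in characteristic zero), so the primitive element theorem applies and gives some $\alpha \in L$ with $L = K(\alpha)$. Let $m \in K[X]$ be the minimal polynomial of $\alpha$ over $K$; by Lemma~\ref{lemDegreeOfMinimalPolynomial} we have $\deg m = [K(\alpha):K] = [L:K]$, which is the target number.

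Next I would exhibit a bijection between $\Gal(L/K)$ and the set of roots of $m$ in $L$. Any $\sigma \in \Gal(L/K)$ fixes $K$ pointwise, so applying $\sigma$ to the equation $m(\alpha) = 0$ gives $m(\sigma(\alpha)) = 0$; thus $\sigma(\alpha)$ is a root of $m$, and since $L = K(\alpha)$ the value $\sigma(\alpha)$ determines $\sigma$ completely. This gives an injection $\Gal(L/K) \hookrightarrow \{\text{roots of } m \text{ in } L\}$. Conversely, for each root $\beta \in L$ of $m$ there is a unique $K$-algebra isomorphism $K(\alpha) \to K(\beta) \subseteq L$ sending $\alpha \mapsto \beta$, and this is a $K$-automorphism of $L$ because $K(\beta) = L$ (both have the same dimension $\deg m$ over $K$). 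Hence the map is also surjective.

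It remains to count the roots of $m$ in $L$. Normality of $L/K$ (Definition~\ref{defGaloisNormal}) says that any irreducible polynomial in $K[X]$ with at least one root in $L$ splits completely into linear factors over $L$; since $m$ is irreducible over $K$ and has the root $\alpha \in L$, $m$ splits in $L$. Separability (again from characteristic zero) ensures these $\deg m$ roots are pairwise distinct. Combining, $|\Gal(L/K)| = \#\{\text{roots of } m \text{ in } L\} = \deg m = [L:K]$.

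The main obstacle is not any single step but rather the need to invoke the primitive element theorem cleanly; one could instead avoid it by doing induction on $[L:K]$ and counting extensions of embeddings into an algebraic closure, using normality to promote embeddings to automorphisms of $L$. Either route hinges on characteristic zero to supply separability, so in the paper's setting of subfields of $\bbC$ the only genuine content is the interplay of normality (roots of $m$ stay in $L$) with separability (the roots are distinct); the rest is bookkeeping.
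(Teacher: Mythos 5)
Your argument is correct: in characteristic zero separability is automatic, the primitive element theorem gives $L = K(\alpha)$, and the map $\sigma \mapsto \sigma(\alpha)$ sets up a bijection between $\Gal(L/K)$ and the roots of the minimal polynomial $m$ of $\alpha$ in $L$ — injective because $\sigma$ is determined by its value on $\alpha$, surjective because each root $\beta$ of the irreducible $m$ satisfies $[K(\beta):K]=\deg m=[L:K]$, forcing $K(\beta)=L$ and promoting the isomorphism $K(\alpha)\to K(\beta)$ to an automorphism of $L$ over $K$. Normality supplies all $\deg m$ roots inside $L$ and separability makes them distinct, so the count is $\deg m = [L:K]$ by Lemma~\ref{lemDegreeOfMinimalPolynomial}. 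Note, however, that the paper does not prove this statement at all: it is quoted directly from the cited Galois theory textbook (as part of the fundamental theorem) and used as a black box in the run-up to Theorem~\ref{thmQubitSingleVerifyingPhaseEquation}, so there is no in-paper proof to compare against — your write-up simply supplies the standard argument that the citation stands in for.
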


We will need to be able to place an upper bound on the degree of an algebraic extension,
in a similar way to how we needed to place an upper bound on the degree of a matrix.
Assuming we know for each $\alpha$ a polynomial $p$ where $p(\alpha) = 0$,
then we know that the degree of the minimal polynomial $m_\alpha$
has lower degree than $p$.
We can use this, in conjunction with the following corollary,
to get an upper bound for the degree of an algebraic extension.

\begin{corollary} \label{corGaloisBigEnough}
	For $\beta_1, \dots, \beta_r$ algebraic over $K$,
	and $m_{\beta_j}$ the minimal polynomial of $\beta_j$ over $K$
	\begin{align}
		[K(\beta_1, \dots, \beta_r) : K] & \leq \Product_j \deg m_{\beta_j}
	\end{align}
\end{corollary}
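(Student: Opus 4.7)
The plan is to proceed by induction on $r$ using the Tower Law (Theorem~\ref{thmTowerLaw}), reducing the statement to repeated applications of the single-element case already codified in Lemma~\ref{lemDegreeOfMinimalPolynomial}. Set $K_0 := K$ and $K_j := K(\beta_1, \dots, \beta_j)$ for $1 \leq j \leq r$, so that we have a tower
\begin{align}
K_0 \subseteq K_1 \subseteq \dots \subseteq K_r = K(\beta_1, \dots, \beta_r).
\end{align}
The Tower Law tells us that
\begin{align}
[K_r : K_0] = \prod_{j=1}^{r} [K_j : K_{j-1}],
\end{align}
so it suffices to bound each factor $[K_j : K_{j-1}]$ by $\deg m_{\beta_j}$.

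First I would observe that $K_j = K_{j-1}(\beta_j)$, so by Lemma~\ref{lemDegreeOfMinimalPolynomial} the degree $[K_j : K_{j-1}]$ equals the degree of the minimal polynomial of $\beta_j$ over $K_{j-1}$; call this $m_{\beta_j}^{(j-1)}$. Now $m_{\beta_j} \in K[X] \subseteq K_{j-1}[X]$ is a polynomial over $K_{j-1}$ which vanishes at $\beta_j$, so by the defining property of the minimal polynomial $m_{\beta_j}^{(j-1)}$ divides $m_{\beta_j}$ in $K_{j-1}[X]$. Hence
\begin{align}
[K_j : K_{j-1}] = \deg m_{\beta_j}^{(j-1)} \leq \deg m_{\beta_j}.
\end{align}

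Multiplying these bounds together and combining with the Tower Law yields
\begin{align}
[K(\beta_1, \dots, \beta_r) : K] = \prod_{j=1}^{r} [K_j : K_{j-1}] \leq \prod_{j=1}^{r} \deg m_{\beta_j},
\end{align}
which is the desired inequality.

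There is really no main obstacle here: the argument is a direct application of two results already quoted, together with the elementary fact that passing to a larger base field can only shrink (or preserve) the minimal polynomial. The only subtlety worth flagging is that the inequality can be strict, since $m_{\beta_j}$ may factor non-trivially over $K_{j-1}$ once earlier $\beta_i$ have been adjoined; this is precisely why the statement is an upper bound rather than an equality, and justifies why the bound is useful in practice (we need only know \emph{any} annihilating polynomial for each $\beta_j$ over $K$ to control the total extension degree).
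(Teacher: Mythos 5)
Your proof is correct and is exactly the intended derivation: the paper states this as an immediate corollary of the Tower Law (Theorem~\ref{thmTowerLaw}) and Lemma~\ref{lemDegreeOfMinimalPolynomial}, giving no separate argument, and your tower $K_{j-1} \subseteq K_{j-1}(\beta_j)$ together with the observation that the minimal polynomial over $K_{j-1}$ divides $m_{\beta_j}$ is precisely that route. Nothing to add.
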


Now that we know how to find certain bounds
we shall also need to find extensions that are guaranteed to exceed those bounds.
For this we shall use primitive roots of unity
and cyclotomic polynomials to introduce into a field $K$
elements that guarantee $[K(\alpha) : K] \geq d$ for a given $d$.

\begin{definition}[Primitive Roots of Unity {\cite[Definition 1.5]{GaloisTheory}}]  \label{defPrimitiveRoots} 
	For $n \in \bbN$ the primitive $n$-th roots of unity are
	\begin{align}
		\set{e^{2 i j \pi / n} | (j, n) = 1} &
	\end{align}
	and we shall use $\w_n$ to indicate a generic primitive $n$-th root of unity.
\end{definition}

\begin{definition}[Cyclotomic polynomials {\cite[Definition~21.5]{GaloisTheory}}]  \label{defCyclotimicPolynomial} 
	The $n$-th cyclotomic polynomial $\Phi_n$ is defined as:
	\begin{align}
		\Phi_n(X) := \Product_{1 \leq a \leq n,\ (a,n)=1} (X - \w_n^a)
	\end{align}
	for $\w_n$ a primitive $n$-th root of unity
\end{definition}

\begin{lemma}[Galois group of a cyclotomic extension {\cite[Theorem 21.9]{GaloisTheory}}] \label{lemDegreeOfACyclotomic}
	The Galois group of $\bbQ(\w_n) / \bbQ$ is isomorphic to $(\bbZ / n \bbZ)^*$,
	and the degree of $\Phi_n$ is $\abs{(\bbZ / n \bbZ)^*}$
\end{lemma}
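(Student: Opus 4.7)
The plan is to establish the isomorphism by first pinning down the degree $[\bbQ(\w_n) : \bbQ]$ via the minimal polynomial, and then constructing an explicit injective homomorphism from the Galois group into $(\bbZ/n\bbZ)^*$ which must be an isomorphism by a cardinality count using Theorem~\ref{thmFundamentalTheoremGaloisTheory}.

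First I would show that $\Phi_n \in \bbZ[X]$. From the factorisation $X^n - 1 = \prod_{d \mid n} \Phi_d(X)$, together with an induction on $n$ (the base case $\Phi_1(X) = X-1$ being immediate), Gauss's lemma on polynomial division in $\bbZ[X]$ gives $\Phi_n \in \bbZ[X]$. From Definition~\ref{defCyclotimicPolynomial} the degree of $\Phi_n$ is $\abs{\{a : 1 \leq a \leq n,\ (a,n)=1\}} = \abs{(\bbZ/n\bbZ)^*}$, giving the second claim of the lemma unconditionally.

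Next I would prove that $\Phi_n$ is irreducible over $\bbQ$, which is the main obstacle. The standard route is: suppose $\Phi_n = fg$ in $\bbZ[X]$ with $f$ the minimal polynomial of some primitive root $\w_n$, $f$ monic and irreducible. It suffices to show that for every prime $p$ with $p \nmid n$, if $\zeta$ is a root of $f$ then so is $\zeta^p$; since every primitive $n$-th root of unity is obtainable from $\w_n$ by successive such exponentiations with primes coprime to $n$, this forces $f = \Phi_n$. The implication itself is the classical argument: if $\zeta^p$ is not a root of $f$ then it is a root of $g$, so $\zeta$ is a root of $g(X^p)$, hence $f(X) \mid g(X^p)$ in $\bbZ[X]$. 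Reducing mod $p$ and using $g(X^p) \equiv g(X)^p \pmod{p}$ produces a common factor of $\bar f$ and $\bar g$ in $\mathbb{F}_p[X]$, so $X^n - 1$ has a repeated root mod $p$, contradicting $p \nmid n$ (since the derivative $nX^{n-1}$ is coprime to $X^n-1$ mod $p$). Hence $\Phi_n$ is irreducible, and by Lemma~\ref{lemDegreeOfMinimalPolynomial}, $[\bbQ(\w_n) : \bbQ] = \deg \Phi_n = \abs{(\bbZ/n\bbZ)^*}$.

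Finally I would build the isomorphism. The field $\bbQ(\w_n)$ is the splitting field of $X^n - 1$ over $\bbQ$ (all roots are powers of $\w_n$), so the extension is normal, and being in characteristic $0$ it is separable, hence Galois. By Theorem~\ref{thmFundamentalTheoremGaloisTheory}, $\abs{\Gal(\bbQ(\w_n)/\bbQ)} = \abs{(\bbZ/n\bbZ)^*}$. Any $\sigma \in \Gal(\bbQ(\w_n)/\bbQ)$ must send $\w_n$ to another root of $\Phi_n$, i.e.\ to $\w_n^{a(\sigma)}$ for a unique $a(\sigma) \in (\bbZ/n\bbZ)^*$. The map $\sigma \mapsto a(\sigma)$ is a group homomorphism since $\sigma\tau(\w_n) = \sigma(\w_n^{a(\tau)}) = \w_n^{a(\sigma) a(\tau)}$, and it is injective because $\sigma$ is determined by its action on the generator $\w_n$. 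Equality of cardinalities then upgrades the injection to the required isomorphism $\Gal(\bbQ(\w_n)/\bbQ) \iso (\bbZ/n\bbZ)^*$.
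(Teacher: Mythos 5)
Your proof is correct. Note, however, that the paper does not prove this lemma at all: it is imported as a black-box citation of Theorem~21.9 of the cited Galois theory text, so there is no in-paper argument to compare against. What you have written is essentially the standard textbook proof of that cited result, and all the steps are sound: $\Phi_n \in \bbZ[X]$ via $X^n-1 = \prod_{d \mid n}\Phi_d$ and Gauss's lemma; the degree claim straight from Definition~\ref{defCyclotimicPolynomial}; irreducibility by the classical Dedekind-style argument (if $\zeta^p$ were a root of the complementary factor $g$ then $f \mid g(X^p)$, reduction mod $p$ and Frobenius give a repeated factor of $X^n-1$ in $\mathbb{F}_p[X]$, contradicting $p \nmid n$); and finally the injective homomorphism $\sigma \mapsto a(\sigma)$ with $\sigma(\w_n)=\w_n^{a(\sigma)}$, upgraded to an isomorphism by the cardinality count $\abs{\Gal(\bbQ(\w_n)/\bbQ)} = [\bbQ(\w_n):\bbQ] = \deg\Phi_n$ from Theorem~\ref{thmFundamentalTheoremGaloisTheory} and Lemma~\ref{lemDegreeOfMinimalPolynomial}. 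The only stylistic remark is that within the thesis this level of detail is unnecessary, since the author deliberately defers it to the literature; but as a self-contained verification of the cited statement your argument is complete and buys independence from the external reference.
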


\begin{lemma}[$\Phi_m$ irreducible over transcendental extension] \label{lemCyclotomicIrreducible}
	$\Phi_m$ is irreducible over $\bbQ(\pi_1, \dots, \pi_n)$,
	where $n \geq 0$
	and the $\pi_j$ are transcendental and algebraically independent over $\bbQ$.
\end{lemma}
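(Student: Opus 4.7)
The plan is to reduce to the classical irreducibility of $\Phi_m$ over $\bbQ$ (implicit in Lemma~\ref{lemDegreeOfACyclotomic}) by combining Gauss's Lemma for multivariate polynomial rings with a specialisation argument. Set $R := \bbQ[\pi_1, \dots, \pi_n]$; since $\bbQ$ is a field, $R$ is a UFD whose field of fractions is precisely $\bbQ(\pi_1, \dots, \pi_n)$. The case $n = 0$ is exactly the classical statement, so I assume $n \geq 1$ from here.

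I would then suppose, for contradiction, that $\Phi_m$ admitted a non-trivial factorisation $\Phi_m = f \cdot g$ in $\bbQ(\pi_1, \dots, \pi_n)[X]$ with both $f$ and $g$ of positive $X$-degree. Because the leading coefficient of $\Phi_m$ in $X$ is $1$, which is a unit of $R$, the polynomial $\Phi_m$ is primitive when viewed in $R[X]$. Gauss's Lemma for UFDs then lets me clear denominators and assume, without loss of generality, that $f, g \in R[X]$, still with positive $X$-degrees $d$ and $e$ summing to $\deg \Phi_m$.

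The next step is to specialise via the ring homomorphism $R[X] \to \bbQ[X]$ sending every $\pi_j$ to $0$. Writing $f = a_d(\pi) X^d + \cdots$ and $g = b_e(\pi) X^e + \cdots$, comparing leading $X$-coefficients in $f \cdot g = \Phi_m$ yields $a_d \cdot b_e = 1$, so both $a_d$ and $b_e$ are units in $R$. The units of $R$ are precisely $\bbQ^*$, so $a_d$ and $b_e$ are nonzero rational constants that in particular do not vanish at $\pi = 0$. Specialisation therefore preserves the $X$-degrees of both factors, producing a factorisation of $\Phi_m$ in $\bbQ[X]$ into two polynomials of positive degree --- contradicting the irreducibility of $\Phi_m$ over $\bbQ$.

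There is no real obstacle here; the argument is a standard application of (i) $R$ being a UFD because polynomial rings over UFDs are UFDs, (ii) Gauss's Lemma in $R$, and (iii) units of $R$ being $\bbQ^*$. The only mild delicacy is invoking the multivariate form of Gauss's Lemma rather than the usual $\bbZ[X]$ one, but this is obtained by iterating the univariate statement $n$ times via the chain $R = \bbQ[\pi_1, \dots, \pi_{n-1}][\pi_n]$.
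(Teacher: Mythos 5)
Your proof is correct, but it takes a genuinely different route from the paper's. The paper argues via algebraic independence: assuming $\Phi_m$ factored over $\bbQ(\pi_1,\dots,\pi_n)$, a proper factor $f$ with $f(\w_m)=0$ must involve some $\pi_j$ nontrivially (without loss of generality $\pi_n$, since otherwise $f$ would contradict irreducibility over $\bbQ$); clearing denominators turns $f(\w_m)=0$ into a polynomial relation satisfied by $\pi_n$ over $\bbQ(\pi_1,\dots,\pi_{n-1},\w_m)$, so $[\bbQ(\pi_1,\dots,\pi_n,\w_m):\bbQ(\pi_1,\dots,\pi_{n-1})]$ would be finite, making $\pi_n$ algebraic over $\bbQ(\pi_1,\dots,\pi_{n-1})$ and contradicting algebraic independence. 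You instead work in the UFD $R=\bbQ[\pi_1,\dots,\pi_n]$: Gauss's Lemma moves a putative positive-degree factorisation into $R[X]$, monicity of $\Phi_m$ forces the leading coefficients of the factors to be units of $R$ (nonzero rationals), and the specialisation $\pi_j\mapsto 0$ then yields a positive-degree factorisation of $\Phi_m$ in $\bbQ[X]$, contradicting classical irreducibility. Your route is somewhat more general and more robust: it shows that any monic polynomial irreducible over a field remains irreducible over a purely transcendental extension, it needs no case analysis on which $\pi_j$ appears in a factor, and it sidesteps the one delicate point in the paper's argument, namely checking that the cleared-denominator relation really is a nonzero, nonconstant polynomial in $\pi_n$. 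What the paper's approach buys is economy of tools within its own context: it stays entirely inside the field-extension and degree machinery (Tower Law, algebraic independence) already assembled in that section, with no appeal to UFDs, primitivity, or Gauss's Lemma.
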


\begin{proof} \label{prfLemCyclotomicIrreducible}
	If $n=0$ then $\Phi_m$ is irreducible (\cite[Corollary~21.6]{GaloisTheory}).
	If $n > 0$ then let us assume that
	we can factorise $\Phi_m$, i.e. for $\w_m$ a primitive $m$-th root of unity
	there is a polynomial $f$ such that
	\begin{align}
		f(\w_m) & = 0 & \deg f & < \deg \Phi_m & f(X) & \in \bbQ(\pi_1, \dots, \pi_n)
	\end{align}
	and without loss of generality we can assume that $\pi_n$ is used non-trivially
	as a coefficient of $f$.
	We can modify $f$ by clearing denominators and negative powers of $\pi_n$
	to get
	\begin{align}
		g(\pi_n) & = 0         &
		\deg g   & < \infinity &
		g(X) & \in \bbQ(\pi_1, \dots, \pi_{n-1}, \w_m)
	\end{align}
	We therefore can place upper bounds on the degrees of the relevant field extensions:
	\begin{align}
		[\bbQ(\pi_1, \dots, \pi_{n}, \w_m) : \bbQ(\pi_1, \dots, \pi_{n-1}, \w_m)] & \leq \deg g      \\
		[\bbQ(\pi_1, \dots, \pi_{n-1}, \w_m) : \bbQ(\pi_1, \dots, \pi_{n-1})]     & \leq \deg \Phi_m
	\end{align}
	Therefore $[\bbQ(\pi_1, \dots, \pi_{n}, \w_m) : \bbQ(\pi_1, \dots, \pi_{n-1})]$
	is finite,
	and therefore $[\bbQ(\pi_1, \dots, \pi_{n}) : \bbQ(\pi_1, \dots, \pi_{n-1})]$
	is finite, contradicting the algebraic independence of $\set{\pi_j}$ over $\bbQ$.
\end{proof}

\begin{lemma} \label{lemCyclotomicLargeEnough}
	For $K$ a finitely generated field extension of $\bbQ$,
	and any choice of natural number $d$,
	there exist infinitely many primes $p$
	such that $[K(\w_p) : K] \geq d$
\end{lemma}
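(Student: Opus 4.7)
The plan is to reduce everything to the Tower Law (Theorem~\ref{thmTowerLaw}) applied to a purely transcendental base, and then use Lemma~\ref{lemCyclotomicIrreducible} to pin down the degree of the cyclotomic piece.

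First I would invoke the standard structure theorem for finitely generated field extensions of $\bbQ$: since $K$ is finitely generated, we may choose a transcendence basis $\pi_1, \dots, \pi_m$ (with $m \geq 0$), and then $K$ is a finite algebraic extension of $F := \bbQ(\pi_1, \dots, \pi_m)$. Set $N := [K : F]$, a finite natural number depending only on $K$.

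Next, fix a prime $p$ and a primitive $p$-th root of unity $\w_p$. By Lemma~\ref{lemCyclotomicIrreducible}, the cyclotomic polynomial $\Phi_p$ is irreducible over $F$, and by Lemma~\ref{lemDegreeOfMinimalPolynomial} together with Lemma~\ref{lemDegreeOfACyclotomic} we have $[F(\w_p) : F] = \deg \Phi_p = p - 1$. On the other hand, since $K(\w_p)$ is obtained from $F(\w_p)$ by adjoining the same finite list of algebraic generators that produced $K$ from $F$, we have the crude upper bound $[K(\w_p) : F(\w_p)] \leq N$.

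Now I would apply the Tower Law (Theorem~\ref{thmTowerLaw}) along the chain $F \subseteq F(\w_p) \subseteq K(\w_p)$ and also along $F \subseteq K \subseteq K(\w_p)$ to equate $[K(\w_p) : F]$ in two ways, yielding
\begin{align}
[K(\w_p) : K] \cdot N \;=\; [K(\w_p) : F(\w_p)] \cdot (p-1) \;\geq\; \text{(something)} \cdot (p-1).
\end{align}
Rearranging, and using $[K(\w_p) : F(\w_p)] \geq 1$, gives $[K(\w_p) : K] \geq (p-1)/N$. Thus for any target $d$ it suffices to take any prime $p \geq dN + 1$, and since there are infinitely many such primes by Euclid we obtain infinitely many $p$ with $[K(\w_p) : K] \geq d$.

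The only real obstacle is the inequality $[K(\w_p) : F(\w_p)] \leq N$, which needs the observation that an algebraic generating set for $K/F$ remains an algebraic generating set for $K(\w_p)/F(\w_p)$ with the same (or smaller) minimal polynomial degrees; this is a routine consequence of the fact that minimal polynomials can only factor further in a larger field. Everything else is bookkeeping with the Tower Law and the cited cyclotomic facts.
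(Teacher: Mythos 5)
Your argument is correct and is essentially the paper's own proof: both decompose $K$ as a finite extension of a purely transcendental field $F$ (the paper's $L$), use Lemma~\ref{lemCyclotomicIrreducible} to get $[F(\w_p):F]=p-1$, and compare the two Tower Law factorisations of $[K(\w_p):F]$ to obtain $[K(\w_p):K]\geq (p-1)/[K:F]$, then take $p$ large. The only cosmetic difference is that your "obstacle" $[K(\w_p):F(\w_p)]\leq N$ is never actually needed — your displayed inequality only uses $[K(\w_p):F(\w_p)]\geq 1$, exactly as in the paper.
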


\begin{proof} \label{prfLemCyclotomicLargeEnough}
	We decompose $K / \bbQ$ as $K / L / \bbQ$
	where $L / \bbQ$ is transcendental
	(and finitely generated by algebraically independent additions to $\bbQ$)
	and $K / L$ is algebraic
	(and finitely generated by algebraically dependent additions to $L$).

	By Lemma~\ref{lemCyclotomicIrreducible}
	$\Phi_m$ is irreducible in $L$,
	and therefore $[L(\w_m) : L] = \deg \Phi_m$.
	Let $r$ be $[K : L]$,
	and $s$ be $[K(\w_m) : K]$.
	Note that $r$ is fixed and we are trying to choose $m$
	such that $s$ is larger than $d$.
	We illustrate this with the diagram of extensions:

	\begin{align}
		\vc{\InputIfFileExists{./figures/wire/bigEnoughFieldExtension.tikz}{}{Missing file!}}
	\end{align}

	We know that
	\begin{align}
		[K(\w_m) : L] & = [K(\w_m) : L(\w_m)][L(\w_m) : L] \geq \deg \Phi_m \\
		[K(\w_m) : L] & = [K(\w_m) :K][K : L] = rs
	\end{align}

	The degree of $\Phi_p$ for $p$ a prime is $p-1$,
	therefore by choosing $m = p$ for $p > rd$
	we know that $[K(\w_m) : K] \geq \deg \Phi_p / r \geq d$
\end{proof}

\begin{lemma} \label{lemCyclotomicsDistinct}
	For the finite set of distinct primes $\set{p_j}$
	there is an isomorphism of groups
	\begin{align}
		\Gal(K(\w_{p_1}, \dots, \w_{p_n}) / K) \iso \Product_j \Gal(K(\w_{p_j}) / K)
	\end{align}
\end{lemma}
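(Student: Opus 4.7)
The plan is to proceed by induction on $n$. The case $n=1$ is trivial. For the inductive step, set $M := K(\w_{p_1}, \dots, \w_{p_n})$, $N := K(\w_{p_1}, \dots, \w_{p_{n-1}})$, and $L := K(\w_{p_n})$, so $M = NL$. Because the $p_j$ are distinct, the multiplicativity of primitive roots gives $M = K(\w_{p_1 \cdots p_n})$. Each of $M, N, L$ is Galois over $K$, being a splitting field of a cyclotomic polynomial over a field of characteristic zero.

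Next, I would introduce the restriction map
\begin{align}
\rho : \Gal(M/K) \to \Gal(N/K) \times \Gal(L/K), \qquad \sigma \mapsto (\sigma|_N, \sigma|_L).
\end{align}
This is a group homomorphism, and injectivity is immediate: if $\sigma$ fixes both $\w_{p_1}, \dots, \w_{p_{n-1}}$ and $\w_{p_n}$, then it fixes a generating set for $M$ over $K$, hence $\sigma$ is trivial. Composing this with the inductive hypothesis applied to $N/K$ will yield the desired map into $\prod_j \Gal(K(\w_{p_j})/K)$.

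For surjectivity, I would appeal to the standard fact that $\Gal(NL/K) \iso \Gal(N/K) \times \Gal(L/K)$ whenever $N \cap L = K$. The residual task is therefore to show $N \cap L = K$. The cleanest route uses the Chinese Remainder Theorem isomorphism $(\bbZ/(p_1 \cdots p_n)\bbZ)^* \iso \prod_j (\bbZ/p_j\bbZ)^*$. Because any automorphism of a cyclotomic extension acts on $\w_m$ via $\w_m \mapsto \w_m^a$ for some $a$ coprime to $m$, one obtains compatible injections $\Gal(M/K) \hookrightarrow (\bbZ/(p_1 \cdots p_n)\bbZ)^*$ and $\Gal(K(\w_{p_j})/K) \hookrightarrow (\bbZ/p_j\bbZ)^*$ under which $\rho$ corresponds to the CRT projections. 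Injectivity of $\rho$ and the CRT bijection together then pin down the image.

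The main obstacle is verifying that the image of $\Gal(M/K)$ inside $\prod_j (\bbZ/p_j\bbZ)^*$ fills out the full product $\prod_j \mathrm{image}(\Gal(K(\w_{p_j})/K))$, or equivalently that $N \cap L = K$ at each step of the induction. I would attack this via the degree identity $[NL : K] = [N:K][L:K]/[N \cap L : K]$, combined with the CRT description above: the cyclic quotient structure of each $\Gal(K(\w_{p_j})/K) \leq (\bbZ/p_j\bbZ)^*$ and the compatibility of restrictions through the CRT bijection force any nontrivial intersection $N \cap L$ to produce a corresponding collapse in the projection, contradicting the embedding into the product. Applied inductively, this yields $\rho$ surjective and completes the isomorphism.
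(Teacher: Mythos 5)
Your setup is fine as far as it goes: the induction, the restriction homomorphism $\rho : \Gal(M/K) \to \Gal(N/K) \times \Gal(L/K)$, and its injectivity are all correct, but they only repackage the easy half of the statement. The entire content of the lemma is surjectivity, which you correctly reduce to $N \cap L = K$, and at that point the proposal stops being a proof. The paragraph meant to establish the intersection claim is circular: a nontrivial intersection $N \cap L$ would merely force the image of $\rho$ to be a proper subgroup of the product, and that contradicts nothing except the surjectivity you are in the middle of proving. Injectivity of $\rho$, the degree identity $[NL:K] = [N:K][L:K]/[N\cap L:K]$, and the embeddings of the Galois groups into $(\bbZ/p_j\bbZ)^*$ via $\w \mapsto \w^a$ are all perfectly consistent with $N \cap L \supsetneq K$, so there is no "collapse in the projection" to derive a contradiction from.

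Moreover, no amount of formal CRT or degree bookkeeping can close this gap, because for an arbitrary subfield $K \subseteq \bbC$ the intersection claim is false. Take $K = \bbQ(\sqrt{-15})$ with $p_1 = 3$, $p_2 = 5$: since $\sqrt{5} \in \bbQ(\w_5)$ one gets $\sqrt{-3} = \sqrt{-15}/\sqrt{5} \in K(\w_5)$, hence $\w_3 \in K(\w_5)$ and $K(\w_3) \subseteq K(\w_5)$; consequently $\Gal(K(\w_3,\w_5)/K)$ has order $4$ while $\Gal(K(\w_3)/K) \times \Gal(K(\w_5)/K)$ has order $8$ --- and this happens even though $\Phi_3$ and $\Phi_5$ are each irreducible over this $K$. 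So any correct argument must inject genuine arithmetic information about how $K$ meets the cyclotomic fields; it cannot follow from the abstract compositum formalism alone. This is also where your route diverges from the paper's: the paper does not argue via $N \cap L = K$ at all, but first pins down each factor $\Gal(K(\w_{p_j})/K)$ (trivial, or all of $(\bbZ/p_j\bbZ)^*$, discarding the trivial ones) and then identifies automorphisms of $K(\w_{p_1}, \dots, \w_{p_n}) = K(\Omega)$ explicitly with exponents $k$ modulo $\prod_j p_j$, matched to the factors by the Chinese Remainder Theorem. That structural input on $K$ --- or the standing hypotheses in force where the lemma is applied --- is exactly the ingredient your sketch never supplies, so as written the proposal proves injectivity and defers the lemma's actual content to an unproved, and in this generality unprovable, claim.
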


\begin{proof} \label{prfLemCyclotomicsDistinct}
	Because the roots of $\Phi_p$ are all powers of any other root,
	either all the roots of $\Phi_p$ are in K or none of them are.
	So either $\w_p$ is in $K$ and $\Gal(K(\w_p) / K) \iso \set{e}$ or
	$\w_p$ is not in $K$ and $\Gal(K(\w_p) / K) \iso (\bbZ / p\bbZ)^*$.
	We can therefore ignore any $\w_p \in K$
	as it contributes trivially to both $\Gal(K(\w_{p_1}, \dots, \w_{p_n}) / K)$
	and $\Product_j \Gal(K(\w_{p_j}) / K)$.

	Only considering those $\w_p \notin K$, Lemma~\ref{lemDegreeOfACyclotomic}
	gives us the isomorphism
	\begin{align}
		\Gal(K(\w_p) / K)              & \iso (\bbZ / p \bbZ)^*     \\
		(\sigma : \w_p \mapsto \w_p^k) & \leftrightarrow (k \mod p)
	\end{align}

	By writing $N := \Product_j p_j$ and $\Omega: = \Product_j \w_{p_j}$, and noting that $K(\w_{p_1}, \dots, \w_{p_n}) = K(\Omega)$,
	we construct the isomorphism we need via the Chinese Remainder Theorem:

	\begin{align}
		\Gal(K(\Omega) / K)                    & \iso \Product_j \Gal(K(\w_{p_j}) / K)                                                                                             \\
		(\sigma : w_{N} \mapsto w_{N}^k ) & \leftrightarrow (\sigma_1 : \w_{p_1} \mapsto \w_{p_1}^{(k \mod p_1)}, \dots ,\sigma_1 : \w_{p_n} \mapsto \w_{p_n}^{(k \mod p_n)}) \\
		(k \mod N)                        & \leftrightarrow (k \mod p_1, \dots, k \mod p_n)
	\end{align}
\end{proof}

\begin{lemma} \label{lemComplexArbitrarilyLargeDegree}
	For any list of natural numbers $d_1, \dots, d_n$ and
	finitely generated field extension $K$ of $\bbQ$
	we can construct elements $a_1, \dots, a_n$ of $\bbC$
	(which happen to be roots of unity)
	such that
	\begin{align}
		\Gal(K(a_1, \dots, a_n) / K) & \iso \Gal(K(a_1) / K) \times \dots \times \Gal(K(a_n) / K) \\
		\abs{\Gal(K(a_j) / K)}       & \geq d_j \qquad \forall j
	\end{align}
\end{lemma}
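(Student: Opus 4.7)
The plan is to choose each $a_j$ to be a primitive $p_j$-th root of unity, where $p_1, \dots, p_n$ are distinct primes selected one at a time so that (i) each $p_j$ gives a large enough cyclotomic extension of $K$, and (ii) the chosen primes are all different from one another. The direct product decomposition will then follow immediately from Lemma~\ref{lemCyclotomicsDistinct}.

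First I would invoke Lemma~\ref{lemCyclotomicLargeEnough} to obtain, for each $j$, an infinite family $\mathcal{P}_j$ of primes $p$ with $[K(\w_p):K] \geq d_j$. Proceeding inductively, I would choose $p_1 \in \mathcal{P}_1$ arbitrarily, and then at step $j$ pick $p_j \in \mathcal{P}_j \setminus \{p_1, \dots, p_{j-1}\}$; such a choice is always possible because $\mathcal{P}_j$ is infinite while only finitely many primes need to be excluded. Setting $a_j := \w_{p_j}$ then gives the desired collection with all $p_j$ distinct.

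For the size bound on each factor, note that $K(\w_{p_j})/K$ is the compositum $K \cdot \bbQ(\w_{p_j})$ of $K$ with the normal extension $\bbQ(\w_{p_j})/\bbQ$, hence is itself normal over $K$. By Theorem~\ref{thmFundamentalTheoremGaloisTheory} we therefore have $|\Gal(K(\w_{p_j})/K)| = [K(\w_{p_j}):K] \geq d_j$, as required. The direct product decomposition
\begin{align}
\Gal(K(a_1, \dots, a_n)/K) \iso \prod_j \Gal(K(a_j)/K)
\end{align}
is then immediate from Lemma~\ref{lemCyclotomicsDistinct}, applied to the distinct primes $p_1, \dots, p_n$.

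The hard part, insofar as there is one, is the bookkeeping to guarantee that distinct primes can simultaneously satisfy the size bounds, but this is handled cleanly by the infiniteness in Lemma~\ref{lemCyclotomicLargeEnough}: at each stage only finitely many primes have been used, so the previously established infinite supply of acceptable primes leaves plenty of room to continue. All the genuine algebraic content, classifying the Galois group of a compositum of cyclotomic fields over $K$, is already packaged in Lemma~\ref{lemCyclotomicsDistinct}, so the present lemma reduces to an inductive selection argument.
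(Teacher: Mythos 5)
Your proposal is correct and follows essentially the same route as the paper: invoke Lemma~\ref{lemCyclotomicLargeEnough} to pick primes $p_j$ with $[K(\w_{p_j}):K]\geq d_j$, set $a_j := \w_{p_j}$, and obtain the product decomposition from Lemma~\ref{lemCyclotomicsDistinct}. You are in fact slightly more explicit than the paper on two small points it leaves implicit — using the infinitude of admissible primes to ensure the $p_j$ are distinct, and noting normality of $K(\w_{p_j})/K$ so that Theorem~\ref{thmFundamentalTheoremGaloisTheory} converts the degree bound into the bound on $\abs{\Gal(K(a_j)/K)}$.
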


\begin{proof} \label{proofLemComplexArbitrarilyLargeDegree}
	For any natural number $d$ there exists a prime $p$ such that
	$[K(\w_p) : K] \geq d$ by Lemma~\ref{lemCyclotomicLargeEnough}.
	Therefore by Lemma~\ref{lemCyclotomicsDistinct}
	\begin{align}
		\Gal(K(\w_{p_1}, \dots, w_{p_n}) / K) \iso \Gal(K(\w_{p_1}) / K)
		\times \dots \times \Gal(K(\w_{p_n}) / K)
	\end{align}
	and for each subgroup $\abs{\Gal(K(\w_{p_j}) / K)} \geq d$.
	Set $a_j$ to a choice of $\w_{p_j}$, such as $e^{2i\pi / {p_j}}$.
\end{proof}

\begin{theorem}[Single equation verification of qubit phase variables, ring version] \label{thmQubitSingleVerifyingPhaseEquation}
	For a $\ring_\bbC$, $\ZW_\bbC$, or $\ZH_\bbC$ equation
	with phases that are polynomial in some phase variables,
	there exists a single verifying equation without any phase variables.
	If we can put an upper bound on the degrees of the minimal polynomials
	of the phase constants then we can construct such an equation.
\end{theorem}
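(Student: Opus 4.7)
The plan is to combine the finite verification result (Theorem~\ref{thmparameter}) with the soundness-preservation of phase homomorphisms (Theorem~\ref{thmLiftRingHomSoundness}) and the Galois-theoretic machinery of Lemmas~\ref{lemCyclotomicLargeEnough}--\ref{lemComplexArbitrarilyLargeDegree}. The single verifying equation will be obtained by instantiating each phase variable at a carefully chosen root of unity; its soundness will then propagate, via applying a product of field automorphisms, to a full grid of sound points, to which Theorem~\ref{thmparameter} is then applied.

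First, given the parameterised equation $\bbD_1 = \bbD_2$ with phase variables $\alpha_1, \dots, \alpha_n$, I would compute the grid sizes required by Theorem~\ref{thmparameter}, namely
\begin{align}
d_j := \max(\maxdegree{+}{Y_j}{\bbD_1}, \maxdegree{+}{Y_j}{\bbD_2}) + \max(\maxdegree{-}{Y_j}{\bbD_1}, \maxdegree{-}{Y_j}{\bbD_2}) + 1.
\end{align}
Next, let $K$ be the subfield of $\bbC$ generated over $\bbQ$ by all phase constants appearing in $\bbD_1$ and $\bbD_2$. The hypothesis that the minimal polynomials of these constants have bounded degree, combined with Corollary~\ref{corGaloisBigEnough}, makes $K$ a finitely generated extension of $\bbQ$ of effectively bounded size.

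I would then invoke Lemma~\ref{lemComplexArbitrarilyLargeDegree} with the bounds $d_1, \dots, d_n$ to produce roots of unity $a_1, \dots, a_n \in \bbC$ such that
\begin{align}
\Gal(K(a_1, \dots, a_n)/K) \iso \prod_j \Gal(K(a_j)/K), \qquad |\Gal(K(a_j)/K)| \geq d_j.
\end{align}
Set $L := K(a_1, \dots, a_n)$ and define the single equation $\bbD_1' = \bbD_2'$ in $\ring_L$ (resp.\ $\ZW_L$, $\ZH_L$) by substituting $\alpha_j \mapsto a_j$ throughout. This equation contains no phase variables. Assuming it is sound, for every element $\sigma = (\sigma_1, \dots, \sigma_n) \in \Gal(L/K)$ the associated field automorphism $\sigma : L \to L$ lifts, by Theorem~\ref{thmLiftRingHomSoundness}, to a phase homomorphism pair $(\hat\sigma, \tilde\sigma)$ that preserves soundness. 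Because $\sigma$ fixes $K$, the image $\hat\sigma\bbD_1' = \hat\sigma\bbD_2'$ is exactly the original equation with $\alpha_j$ set to $\sigma_j(a_j)$, and so these equations are sound for every choice of $\sigma$.

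The orbit $A_j := \{\sigma_j(a_j) : \sigma_j \in \Gal(K(a_j)/K)\}$ contains at least $d_j$ distinct conjugates of $a_j$, and the product decomposition of the Galois group ensures that the sound points assemble into a genuine grid $A_1 \times \cdots \times A_n$ of size at least $d_1 \times \cdots \times d_n$. Theorem~\ref{thmparameter} then promotes this to soundness of the original parameterised family. The main obstacle lies in engineering $K$ and the $a_j$ simultaneously: $K$ must contain every phase constant so that the Galois automorphisms fix the ``fixed parts'' of the diagrams, yet $K$ must not already contain the $a_j$, and the cyclotomic extensions must remain mutually independent over $K$. This is precisely what Lemma~\ref{lemComplexArbitrarilyLargeDegree} delivers by selecting primes $p_j$ whose cyclotomic polynomials remain irreducible over $K$ (using Lemma~\ref{lemCyclotomicIrreducible} to handle any transcendental part of $K$), but verifying that these choices produce distinct orbits --- and hence a grid of the required shape --- is the delicate bookkeeping step the proof must execute.
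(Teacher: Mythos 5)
Your proposal is correct and follows essentially the same route as the paper's proof: instantiate the phase variables at roots of unity supplied by Lemma~\ref{lemComplexArbitrarilyLargeDegree}, propagate soundness of that single instantiation through the Galois orbit via Theorem~\ref{thmLiftRingHomSoundness}, and feed the resulting grid $A_1 \times \dots \times A_n$ into Theorem~\ref{thmparameter}. The only cosmetic difference is that the paper takes $K$ to be the normal closure of $\bbQ(c_1,\dots,c_m)$ while you take $\bbQ(c_1,\dots,c_m)$ itself; since the cyclotomic extensions $K(\w_{p_j})/K$ are Galois over any finitely generated $K$ and distinct automorphisms of $K(a_j)/K$ move the generator $a_j$ to distinct conjugates, your ``delicate bookkeeping'' step is exactly what Lemma~\ref{lemComplexArbitrarilyLargeDegree} already delivers, so no gap remains.
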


\begin{proof} \label{prfThmQubitSingleVerifyingPhaseEquation}
	Let $D_1 = D_2$ be the equation of interest,
	with phase variables $\alpha_1, \dots, \alpha_n$,
	and phase constants $\set{c_1, \dots, c_m}$.
	Let $K$ be the normal closure of $\bbQ(c_1, \dots, c_m)$ in $\bbC$;
	this exists by Theorem~\ref{thmGaloisNormalClosure},
	is finite over $\bbQ(c_1, \dots, c_m)$,
	and is therefore finitely generated over $\bbQ$
	by the $\set{c_1, \dots, c_m}$ and finitely many more elements (any missing Galois conjugates).
	Note that $K$ satisfies the conditions of Lemma~\ref{lemCyclotomicLargeEnough}.
	We will construct the assignments
	\begin{align}
		\alpha_1 = a_1, \alpha_2 = a_2, \dots, \alpha_n = a_n
	\end{align}
	and then show that this instantiation leads to the desired outcome.

	\begin{align}
		\vc{\InputIfFileExists{./figures/wire/galoisTheorem.tikz}{}{Missing file!}}
	\end{align}

	By Lemma~\ref{lemComplexArbitrarilyLargeDegree}
	there exist elements $a_1 , \dots, a_n \in \bbC$ such that
	\begin{align}
		\Gal(K(a_1, \dots, a_n) / K) \iso \Gal(K(a_1) / K) \times \dots \times \Gal(K(a_n) / K)
	\end{align}
	and $\abs{\Gal(K(a_j) / K)} \geq d_j$ for all $j$;
	note that $d_j$ is the required size of the sets $A_j$, as in Theorem~\ref{thmparameter}.

	In order to construct the phases we shall need to know some
	algebraic information about all the phases $c_j$ already present in the diagram.
	That is for a given $c_j$ we will need to know either
	that $c_j$ is transcendental over $\bbQ$, or some polynomial that $c_j$ satisfies.
	We can then apply Corollary~\ref{corGaloisBigEnough} to get an upper bound on the degree
	of the algebraic extension.
	Then using Lemma~\ref{lemComplexArbitrarilyLargeDegree} we can construct suitable values $a_1, \dots, a_n$.

	Suppose the instantiation $\alpha_1 = a_1, \dots, \alpha_n = a_n$
	of $D_1 = D_2$ is sound.
	By Theorem~\ref{thmLiftRingHomSoundness}
	we know that applying $\hat \phi$ to this instantiation is also sound,
	for $\phi$ an automorphism that fixes $K$.

	Therefore for any element $\phi$ of $\Gal(K(a_1, \dots, a_n) / K)$
	we know that the instantiation $\alpha_1 = \phi(a_1), \dots, \alpha_n = \phi(a_n)$
	is sound.
	Since the Galois group splits as a product of groups each affecting a single $a_j$
	(Lemma~\ref{lemComplexArbitrarilyLargeDegree})
	we can apply Theorem~\ref{thmparameter}
	with the sets:
	\begin{align}
		A_j := \set{\phi(a_j) | \phi \in \Gal(K(a_j) / K)}
	\end{align}
	knowing that each of these choices of instantiations yields a sound equation.
	Therefore by verifying the phase-variable-parameterised equation
	$D_1 = D_2$ with the instantiation $\alpha_1 = a_1, \dots, \alpha_n = a_n$
	we have verified $D_1 = D_2$ for all values of the variables.
\end{proof}

\begin{example}[Verifying a $\ring_\bbC$ family of equations with a single equation] \label{exaGalois}
	Consider the $\ring_{\bbC[x,y]}$ equation
	\begin{align}
		\bap{white}{}{x}{y} & = \state{white}{xy}
	\end{align}
	By Theorem~\ref{thmparameter} we require sets $A_x$ and $A_y$,
	with sizes $d_x \geq 2$ and $d_y \geq 2$.
	We can therefore instantiate $x$ to $e^{2i\pi/3}$ and $y$ to $e^{2i\pi/5}$,
	knowing that
	\begin{align}
		\bapwide{white}{}{e^{2i\pi/3}}{e^{2i\pi/5}}                       & = \state{white}{e^{16i\pi/15}} \quad \text{sound}                                                       \\
		\implies \bapwwide{white}{}{\phi(e^{2i\pi/3})}{\phi(e^{2i\pi/5})} & = \state{white}{\phi(e^{16i\pi/15})} \quad \text{sound } \forall \phi \in \Gal(\bbQ(\w_3, \w_5) / \bbQ) \\
		\implies \bap{white}{}{x}{y}                                      & = \state{white}{xy} \quad \text{sound $\forall x, y \in \bbC$ by Theorem~\ref{thmparameter}}
	\end{align}
	In the language of \S\ref{chapConjectureInference}: The equation
	\begin{align}
		\bapwide{white}{}{e^{2i\pi/3}}{e^{2i\pi/5}} = \state{white}{e^{16i\pi/15}} &
	\end{align}
	implies and is implied by the equation
	\begin{align}
		\set{\bap{white}{}{x}{y} = \state{white}{xy}}_{x,y} & x,y \text{ phase variables}
	\end{align}

\end{example}

\begin{theorem}[Single equation verification of qubit phase variables, ZX version] \label{thmZXPhasesOverQubits}
	For a Universal ZX equation
	with phases that are linear in some phase variables,
	there exists a single verifying equation without any phase variables.
	If all the phase constants are rational multiples of $\pi$
	then we can construct such an equation.
\end{theorem}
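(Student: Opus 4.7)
The plan is to mirror the argument of Theorem~\ref{thmQubitSingleVerifyingPhaseEquation}, but to work with the Laurent polynomial interpretation for ZX (where a phase variable $\alpha$ corresponds to the indeterminate $Y = e^{i\alpha}$) and to be careful that the Galois automorphisms we use actually lift to ZX phase homomorphism pairs.

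First I would set up the ground field. Let $D_1 = D_2$ be the ZX equation with phase variables $\alpha_1,\dots,\alpha_n$ appearing linearly, and with phase constants $c_1,\dots,c_m$. By the hypothesis the $c_j$ are rational multiples of $\pi$, so each $e^{ic_j}$ is a root of unity. Let $K$ be the normal closure in $\bbC$ of $\bbQ(e^{ic_1},\dots,e^{ic_m},\sqrt{2})$. Then $K$ is a finite, normal extension of $\bbQ$ (it is contained in a suitable cyclotomic field adjoin $\sqrt{2}$), and in particular satisfies the hypotheses of Lemma~\ref{lemCyclotomicLargeEnough}. The crucial point in including $\sqrt{2}$ is that the interpretation of the Hadamard generator contains a $1/\sqrt{2}$, so by Proposition~\ref{propEndomorphismsOfFiniteSubgroupOfZeroTwoPi} (and the ensuing classification in \S\ref{secPhaseGroupHomomorphismsEtc}) any field automorphism that fixes $\sqrt{2}$ and sends roots of unity to roots of unity will descend to a valid ZX phase homomorphism pair.

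Next I would apply Theorem~\ref{thmparameter} (and Corollary~\ref{corZXPhases}) to compute the required sizes $d_j$, where $d_j = \max(\deg^+_{\alpha_j} D_1, \deg^+_{\alpha_j} D_2) + \max(\deg^-_{\alpha_j} D_1, \deg^-_{\alpha_j} D_2) + 1$. By Lemma~\ref{lemComplexArbitrarilyLargeDegree}, I can choose distinct primes $p_1,\dots,p_n$, each coprime to the orders of the roots of unity already in $K$, such that $[K(\w_{p_j}):K] \geq d_j$ and such that the Galois group of $K(\w_{p_1},\dots,\w_{p_n})/K$ splits as the direct product $\prod_j \Gal(K(\w_{p_j})/K)$. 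I then set the instantiation $\alpha_j = 2\pi / p_j$, so that $e^{i\alpha_j} = \w_{p_j}$ is a primitive $p_j$-th root of unity; this is the single simple equation whose soundness we wish to check.

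Suppose that instantiation is sound. Any $\sigma \in \Gal(K(\w_{p_1},\dots,\w_{p_n})/K)$ fixes $K$, and so fixes $\sqrt{2}$ and each $e^{ic_l}$, while sending $\w_{p_j}$ to another primitive $p_j$-th root of unity. Reading $\sigma(\w_{p_j}) = e^{i\beta_j}$ for some $\beta_j \in [0,2\pi)$, the map $\alpha_j \mapsto \beta_j$ extends to a group endomorphism of the phase group that, by the remark above on fixing $\sqrt{2}$, lifts to a ZX phase homomorphism pair in the sense of Definition~\ref{defPhaseGroupHomomorphismPair}. The analogue of Theorem~\ref{thmLiftRingHomSoundness} for these pairs (which follows immediately from the commutative square~\eqref{eqnPhaseGroupHomomorphismPair}) then implies that the instantiation $\alpha_j = \beta_j$ is also sound. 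As $\sigma$ ranges over $\Gal(K(\w_{p_j})/K)$, its action on $\w_{p_j}$ produces at least $d_j$ distinct values, so the set $A_j := \{\sigma(\w_{p_j}) : \sigma \in \Gal(K(\w_{p_j})/K)\}$ has size $\geq d_j$, and because of the product decomposition of the Galois group, soundness holds on the entire grid $A_1 \times \cdots \times A_n$. Theorem~\ref{thmparameter} (applied in its Laurent polynomial form to $Y_j = e^{i\alpha_j}$) then delivers soundness for all values of the variables.

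The main obstacle I anticipate is the extra scalar condition that distinguishes this proof from the ring case: a Galois automorphism of a number field does not automatically lift to a ZX phase homomorphism pair, since Proposition~\ref{propZXPhaseGroupHomomorphismsMod8} shows that only residues $\equiv 1, 7 \pmod 8$ survive once $\pi/4$ is in play. By forcing $\sqrt{2} \in K$ at the outset (and hence all Galois automorphisms of extensions of $K$ fix $\sqrt{2}$), this obstruction disappears uniformly, independent of the particular fragment of ZX we happen to land in; but it is what makes the ``rational multiples of $\pi$'' hypothesis indispensable for the constructive half of the statement, since without explicit minimal polynomials for the $e^{ic_j}$ we would not know how large to make the primes $p_j$.
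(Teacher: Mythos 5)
Your proposal is correct and is essentially the paper's own argument: form the normal closure $K$ of the field generated by the concrete phases together with $\sqrt{2}$ (the paper adjoins $e^{i\pi/4}$, which contains $\sqrt{2}$, for exactly the reason you give), choose roots of unity via Lemma~\ref{lemComplexArbitrarilyLargeDegree} whose split Galois group is large enough for the degree bounds of Theorem~\ref{thmparameter}, instantiate the variables there, propagate soundness of that single equation around the Galois orbit to fill the whole grid, and conclude by Theorem~\ref{thmparameter}. The only cosmetic difference is that you route the soundness-preservation through the phase homomorphism pair classification of \S\ref{secPhaseGroupHomomorphismsEtc}, whereas the paper notes those results do not apply off the shelf (they concern endomorphisms of an entire finite phase group) and instead verifies the commuting condition $\phi(\interpret{\cdot}) = \interpret{\hat\phi(\cdot)}$ directly on the Z-spider and Hadamard generators --- the check being performed is the same in both cases.
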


\begin{proof} \label{prfThmZXPhasesOverQubits}
	The idea of the proof of Theorem~\ref{thmZXPhasesOverQubits}
	is nearly identical to that of Theorem~\ref{thmQubitSingleVerifyingPhaseEquation}.
	The difference is that we cannot rely directly on the results
	about phase homomorphisms and so must reconstruct the necessary properties directly.

	We first regard the ZX equation as generated by Z spiders and Hadamard gates.
	Each concrete phase $c$ on a green node
	corresponds to $e^{ic}$ in the matrix representation.
	We therefore construct
	$K$ as the normal closure of $\bbQ(e^{ic_1}, \dots, e^{ic_m}, e^{i\pi/4})$.
	The $e^{i\pi/4}$ ensures that $\sqrt{2}$ is in $K$ and so the
	automorphisms created will not unintentionally interact
	with the interpretation of the Hadamard gate.
	By Lemma~\ref{lemComplexArbitrarilyLargeDegree}
	there exist roots of unity $a_1 , \dots, a_n \in \bbC$ such that
	\begin{align}
		\Gal(K(a_1, \dots, a_n) / K) \iso \Gal(K(a_1) / K) \times \dots \times \Gal(K(a_n) / K)
	\end{align}
	and $\abs{\Gal(K(a_j) / K)} \geq d_j$ for all $j$;
	note that $d_j$ is the required size of the sets $A_j$, as in Theorem~\ref{thmparameter}.

	Any given $\phi \in \Gal(K(a_1, \dots, a_n) / K)$
	sends $\set{a_1, \dots, a_n}$ to $\set{b_1, \dots, b_n}$
	where $b_j$ is a Galois conjugate of $a_j$
	and can be viewed as conjugating the phases that each root of unity represents
	(via $a \leftrightarrow e^{i\alpha}$).
	We lift this permutation to a map $\hat \phi$ for the ZX diagrams,
	noting that this is a very similar construction to that of phase group homomorphism pairs.
	Since $K$ is fixed by $\phi$ we know that $\hat \phi$ does not change any of the concrete
	phases from the initial diagram.
	Since $e^{i\pi/4}$ is in $K$ we can
	simply check that for Z spiders and for Hadamard gates that $\phi$ and $\hat\phi$ respect $\interpret{\cdot}$
	in the sense that:
	\begin{align}
		\phi(\interpret{\cdot}) = \interpret{\hat\phi({\cdot})}
	\end{align}
	This gives us the property that if our original equation was sound then
	so is the image under $\hat \phi$.
	We can therefore again apply Theorem~\ref{thmparameter}
	to show that the constructed equation (with phases $\set{q_j}$ instead of variables)
	and all the images of this equation under $\hat \phi$
	verify the original family of parameterised equations.

	The argument about construction is also similar to that
	in the proof of Theorem~\ref{thmQubitSingleVerifyingPhaseEquation},
	noting that the Universal ZX phase $2a\pi/b$ with $a/b$ rational
	represents the complex number $e^{i2a\pi/b}$,
	which is a solution to the equation $X^b-1 = 0$,
	and so the degree of the corresponding minimal polynomial is bounded above by $b$.
\end{proof}

\begin{remark} \label{remNonQImpliesPhaseVariable}
	One does not have to work from parameterised equation to verifying, instantiated equation.
	The presence of a phase $\tau$ in $\bb{A}\setminus\bbQ$
	for a $\ring_\bbC$ equation
	already suggests at least two sound equations
	(since there is a non-identity Galois automorphism $\phi$ for $\tau$ over $\bbQ$,
	and we know that $\hat \phi$ preserves soundness).
	Such a phase $\tau$ suggests, in the conjecture inference sense, the family of equations
	where $\tau$ has been replaced by a phase variable.
	Again this is similar to the replacement of existential quantifiers with universal quantifiers
	as mentioned in \S\ref{secGeneralisingTheorems}.

	One could also view this result as a bound on the `algebraic complexity'
	of phases in an equation of diagrams:
	If any field element appears less frequently
	than the degree of its minimal polynomial then we should seek to replace it with a phase variable.
	One does, however, need to be careful when trying to calculate how often an element
	appears in a diagram.
	For example $\sqrt{6}$ and $\sqrt{2}$ are not algebraically independent,
	so naively counting $\sqrt{2}$ as not occurring in $\sqrt{6}$ will lead to errors.
\end{remark}

Having brought the number of verifying equations needed down to one
(in the qubit case) we now turn our attention to !-boxes,
which are only marginally less well behaved.

\section{Verifying !-boxes}	 \label{secBBoxes}

!-boxes allow the succinct representation
of arbitrary numbers of copies of subdiagrams.
In this section we will show that
unless two !-boxes are connected (and not nested)
then this family of diagrams will behave in
a very constrained manner.
The main result is that
if an equation includes !-boxes that are \emph{separable}
(i.e. not connected as described above)
then we need only check the first $N$ instances
of the !-box expansions in order to verify the entire family.
As with the result on phase variable verification
we can read this value $N$ off the diagram directly.
Part of this result is the definition of \emph{series !-box form},
a manipulation of the diagram (requiring spider laws)
that presents the !-box expansions
in a novel manner.
This manipulation allows us to view
each further !-box expansion in terms of the $\comp$ product
rather than the more traditional $\tensor$ presentation.

In terms of our running ZH example from \S\ref{secVerificationMotivation}
we are now investigating this face of the commutative cube:
\begin{align}
	\begin{tikzcd}[ampersand replacement=\&]
		\ZHC^! \arrow[dr, "\interpret{\cdot}"] \arrow[rr, "as"] \& \& \ZHC  \arrow[dr, "\interpret{\cdot}"] \&\\
		\& (\bb{N} \to \Mat_{\bb{C}}) \arrow[rr, "as"] \& \&  \Mat_{\bb{C}}
	\end{tikzcd}
\end{align}
With a third face given in \eqref{eqnThirdFace}.

\subsection{Copies, nesting, and separability}
\label{secnesting}
There is a choice to be made when expanding a !-box that contains a phase variable.
The approach taken in \cite{MerryThesis}, which we adopt here,
is to copy the variable name exactly.
This is in contrast to creating a `fresh' name for each new instance of the variable
(Quantomatic, Ref.~\cite{Quantomatic}, implements a version of this for users that want it).
The original paper that this work appeared in, \cite{millerbakewell2019finite},
contains versions of these proofs that encompass both options,
but since only minor alterations are required at any point we now present just the proof for the following definition
of copying:

\begin{definition}[Copying variables inside !-boxes] \label{defCopy}
	When a !-box expansion creates new instances of a phase variable
	we \emph{copy} that variable name,
	so that all instances are linked by the same name (the approach taken in \cite[\S 4.4.2]{MerryThesis})
\end{definition}

This definition results in the commutativity of $ev$ and $as$,
giving us another face in the commutative cube from our running example (see \S\ref{secVerificationMotivation}).

\begin{align} \label{eqnThirdFace}
	\begin{tikzcd}[ampersand replacement=\&]
		\ZH_{\bbC[X]}^! \arrow[d, "ev"] \arrow[r, "as"] \& \ZH_{\bbC[X]} \arrow[d, "ev"] \\
		\ZHC^! \arrow[r, "as"] \& \ZHC
	\end{tikzcd}
\end{align}

\begin{definition}[nesting order]

	We define a partial order, called the nesting order, on !-boxes in a diagram:
	\begin{align}
		\delta_1 < \delta_2 & \quad \text{ if $\delta_1$ is inside $\delta_2$}
	\end{align}

	And use this partial order to draw a nesting diagram.
	For example this Universal ZX diagram:

	\begin{align}
		\vc{\InputIfFileExists{./figures/wire/net-example-l.tikz}{}{Missing file!}} \quad \text{ has nesting diagram } \quad \vc{\InputIfFileExists{./figures/wire/net-example-r.tikz}{}{Missing file!}}
	\end{align}

\end{definition}

\begin{remark} \label{remPatternNesting}
	The nesting diagram is the same as just taking the !-nodes in a pattern graph.
\end{remark}

\begin{definition}[Well nested]
	We say an equation is \emph{well nested} if the nesting diagrams corresponding to
	the left and right hand sides of the equation are identical.
\end{definition}

\begin{definition}[Join]  \label{defJoin}
	The \emph{join} of a !-box is the collection of wires that leave that !-box,
	i.e. the edges linking a vertex inside $\delta$ to one outside $\delta$.
	The \emph{size} of a join is the number of wires, and
	the \emph{dimension} of a join is the dimension of the diagram formed by just the wires of that join,
	i.e. $\dim \interpret{ \;\vc{\begin{tikzpicture}
	\begin{pgfonlayer}{nodelayer}
		\node [style=none] (6) at (0, 0.25) {};
		\node [style=none] (7) at (0, -0.25) {};
	\end{pgfonlayer}
	\begin{pgfonlayer}{edgelayer}
		\draw (6.center) to (7.center);
	\end{pgfonlayer}
\end{tikzpicture}
}\;^{\tensor n} }$ where $n$ is the size of the join.
	This is equivalently $\left(\dim \interpret{ \;\vc{}\; }\right)^n$.
\end{definition}

\begin{definition}  \label{defSeparated}

	We describe a pair of !-boxes as \textbf{separated} if either:
	\begin{itemize}
		\item They are nested, or
		\item There is no edge joining a vertex in one to a vertex in the other
	\end{itemize}
\end{definition}

\label{secSeparability}

Theorem~\ref{thmbbox2} will require that the !-boxes all be separated,
but before we look at that theorem we will investigate how stringent a requirement that is;
or rather the question `if two !-boxes are not separated, then how hard is it to separate them?'
This question is directly relevant for conjecture synthesis,
as it allows us to convert an equation that is not amenable to
finite verification into an equation that is amenable to finite verification.
This determines which sorts of conjectures we should generate in our conjecture synthesis.

\begin{definition}[Separable]  \label{defSeparable}
	We describe a non-separated pair of !-boxes as separable if we can perform the following operation:
	\begin{align}
		\vc{\InputIfFileExists{./figures/wire/sep-l.tikz}{}{Missing file!}} \quad = \vc{\InputIfFileExists{./figures/wire/sep-r.tikz}{}{Missing file!}}
	\end{align}
	We define pairs of nodes as separable if we can always separate !-boxes that are joined by edges between these pairs of nodes.
	Note that we only need to consider nodes that have arbitrary arity,
	since only they can be connected to !-boxes.
\end{definition}

\begin{lemma} \label{lemSeparability}
	The following pairs of nodes are separable, by calculus:
	\begin{align}
		\text{ZX:} \quad &
		\left(\spider{gn}{}, \spider{rn}{}\right) \quad
		\left(\spider{gn}{}, \spider{gn}{}\right) \quad
		\left(\spider{rn}{}, \spider{rn}{}\right) \\
		\text{ZH:} \quad &
		\left(\spider{smallZ}{}, \spider{smallZ}{}\right) \quad
		\left(\spider{smallZ}{}, \spider{smallgrey}{}\right) \quad
		\left(\spider{smallgrey}{}, \spider{ZH}{}\right) \\
		\text{ZW:} \quad &
		\left(\spider{white}{}, \spider{white}{}\right) \quad
		\left(\spider{white}{}, \spider{smallblack}{}\right) \\
		\ring: \quad     &
		\left(\spider{white}{}, \spider{white}{}\right)
	\end{align}
\end{lemma}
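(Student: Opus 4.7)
The plan is to prove each listed pair separable by exhibiting a local rewrite that promotes an edge incident to a boundary of one !-box into an internal spider-style split, after which the wire can be pushed outside the !-box cleanly. The underlying slogan is: a spider (or spider-like generator) with an input that enters a !-box can be \emph{un-fused} into many copies, one per instance of the !-box, so that the inter-!-box edge becomes a bundle of external wires rather than a single connection. I will handle the same-colour cases via spider fusion laws, and the mixed-colour cases via bialgebra/Hopf-type laws already catalogued in \S\ref{secZoo} and \S\ref{secRingRRules}.

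First, I would dispatch the monochromatic cases. For $(\spider{gn}{}, \spider{gn}{})$ in ZX, $(\spider{smallZ}{}, \spider{smallZ}{})$ in ZH, $(\spider{white}{}, \spider{white}{})$ in ZW, and $(\spider{white}{}, \spider{white}{})$ in $\ring$, the relevant step is a single application of the Z-spider law from the corresponding calculus (Definitions~\ref{defZX}, \ref{defZH}, \ref{defZW}, \ref{defRingR}): un-fuse the spider sitting on the boundary of one !-box into two spiders of the same colour, one of which remains inside the !-box and the other of which carries the external edge outward; the latter spider, being now outside the !-box, can then be absorbed into the target spider in the other !-box by spider fusion. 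The analogous monochromatic $(\spider{rn}{}, \spider{rn}{})$ case in ZX follows by the ZX meta-rule `all rules hold with red and green swapped'.

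Next, for the heterochromatic ZX case $(\spider{gn}{}, \spider{rn}{})$, the obstacle is that spider fusion is not directly available. Here the plan is to invoke the Bialgebra rule, which is precisely the statement that a green spider feeding a red spider through a single wire factors as a complete bipartite graph of green-red connections; this rewrite duplicates the green node once per red target (and vice versa), which is exactly the shape needed to push the edge out through the !-box boundary. For the ZW pair $(\spider{white}{}, \spider{smallblack}{})$ we use the analogous natural-transformation-style rule (the $ba_{zw}$ rule, see Lemma~\ref{lemRingEntailsZWbazw}) that splits the white spider across the W node. In ZH the pair $(\spider{smallZ}{}, \spider{smallgrey}{})$ is handled by expanding the grey spider via its definition as a Z spider flanked by NOT nodes and then applying the ZH bialgebra (BA1); the pair $(\spider{smallgrey}{}, \spider{ZH}{})$ is handled by the ZH bialgebra (BA2).

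In each case the verification schema is identical: prove the equality on a single instance of the edge and a single instance of the !-box expansion, and then appeal to the fact that any !-box expansion is simply a finite $\tensor$-product of identical copies, so that the local rewrite applies pointwise to every copy. The main obstacle I anticipate is ensuring that the rewrites for the heterochromatic pairs are genuinely local enough to survive arbitrary instantiation counts of the surrounding !-boxes; in particular, in the ZH (grey, H-box) case one must track how the NOT-nodes introduced in the definition of the grey spider interact with the H-boxes on either side, which will likely require a supporting lemma stating that NOT-nodes commute past H-boxes up to the rule (HS1). Modulo this bookkeeping, each pair reduces to one explicit diagrammatic derivation, and the lemma follows by enumeration.
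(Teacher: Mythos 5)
Your identification of which rules do the work---spider fusion for the same-colour pairs and the bialgebra-type laws for the mixed pairs---is exactly what the paper's (very short) proof appeals to. The gap is in the mechanism you wrap around those rules. The monochromatic step as you describe it is not a legal operation on !-boxed diagrams: un-fusing a spider that lives inside a !-box produces two spiders that are both still inside that !-box, and you cannot simply declare that one of them ``carries the external edge outward''; moving a node across a !-box boundary changes which parts of the diagram are copied on expansion, i.e.\ it changes the family being denoted. Likewise ``absorbing'' the would-be mediator into the spider in the other !-box is spider fusion matched across a !-box boundary, which is not a valid rule application---and even if it were, it would reconnect the two boxes and undo the separation you are trying to achieve. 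What has to be proved is an equality between two !-boxed families, the joined form and the separated form of Definition~\ref{defSeparable}, holding for every pair of instantiation numbers.

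Second, your verification schema---check a single instance of the edge and a single expansion, then extend ``pointwise'' because an expansion is ``a finite $\tensor$-product of identical copies''---does not work. Copies produced by a !-box share all their connections to nodes outside the box; in particular, when both !-boxes are expanded to $n$ and $m$ copies, the single joining edge becomes a complete bipartite pattern between the $n$ copies of one node and the $m$ copies of the other, not a disjoint bundle of parallel edges, so a single-edge rewrite does not ``apply pointwise'', and checking one instantiation is in general far from sufficient to verify a !-boxed equation (that is precisely the point of Theorem~\ref{thmbbox2}). The repair is to argue uniformly in $n$ and $m$: the spider and bialgebra laws of these calculi are themselves stated at arbitrary arity (with !-boxes), so for every instantiation simultaneously the complete bipartite joining pattern rewrites to the separated form by one application of the appropriate arbitrary-arity law---this is what the paper means by the proofs following ``immediately''. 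With that replacement (and a brief word, as in the paper, on why phases can first be moved off the critical nodes), your case-by-case assignment of rules goes through.
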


\begin{proof} \label{prfLemSeparability}
	These proofs follow immediately from the spider and bialgebra laws in the respective calculi.
	Note that it is enough to specify the phase-free versions of these interactions,
	because phases can always be moved away from the critical nodes.
\end{proof}

Having shown how to separate !-boxes from each other, we will now talk about
how to spread apart the instances of a !-box once it has been instantiated.
The form we are aiming for is quite particular;
one that allows us to use properties of the vertical composition $\comp$.

\subsection{Series !-box form} \label{secSeriesBBoxForm}

Instantiation of !-boxes is an operation of pattern graphs
and it is only because the calculi we are considering are built from spiders
that the instantiated outcomes still form valid diagrams.
After instantiating !-boxes, and in the presence of suitable spider laws,
we can change the presentation of the repeated elements into more amenable forms.
In particular we can represent the $d$ copies created by a !-box
as a $d$-fold composition of diagrammatic elements,
within the context of a larger diagram.
We call this `series !-box form', providing a definition, example and method below.

\begin{definition}[Series !-box form] \label{defSeriesBBoxForm}
	\emph{Series !-box form} for a given non-nested, separated !-box $\delta_1$
	is a presentation $C \comp (B^{\comp d}) \comp G$ of each the $(\delta_1 = d)$-instantiated diagrams, where
	\begin{itemize}
		\item $C  := $ An unparameterised end diagram                                                                                    \\
		\item $B  := $ Repeated element, which may contain $\alpha_j \forall j,\,\delta_k \forall k \geq 2$ and some boundary nodes
		\item $G  :=$ The rest of the diagram outside of $\delta_1$, which may contain
		      $\alpha_j \forall j,\,\delta_k \forall k \geq 2$ and some boundary nodes
	\end{itemize}
\end{definition}

\begin{example}[Series !-box form] \label{exaSeriesBBoxForm}
	To give a simple ZX example of series !-box form, with just one !-box and just one wire between $G$ and $B'$:

	\begin{align} \label{eqnSingleSeriesBBox}
		\family{\vc{\InputIfFileExists{./figures/ZX/internal_spider_join.tikz}{}{Missing file!}}}_{\delta|\delta=d}\quad = \quad \vc{\InputIfFileExists{./figures/ZX/bbox_sideways.tikz}{}{Missing file!}}
	\end{align}
	Note that $B$ is not just the subdiagram $B'$
	but also everything directly below it as depicted in \eqref{eqnSingleSeriesBBox}.
	Although we have only used one example node and joining wire we can perform this action
	on all nodes and joining wires.
	Where two wires travel from the !-box to the same spider inside $G$
	we first spread out that spider so each wire from the !-box connects to a different spider in $G$.
	Here is an example for $n$ wires between $B'$ and $G$, and $p$ boundaries inside $\delta$.

	\begin{align}\label{eqnseries-bbox}
		\vc{\InputIfFileExists{./figures/ZX/bbox_sideways_compact_external.tikz}{}{Missing file!}}
	\end{align}

	From here it is easy to see that we have the diagram $G : m \to n$,
	beside $d$ copies of a diagram we call $B : p + n \to n$ (containing the $p$ boundary nodes and the new connecting spiders),
	and finally an ending diagram $C: n \to 0$.

\end{example}

\begin{proposition} \label{propSeriesBBoxForm}
	In the calculi ZX, ZW and ZH,
	for any diagram $\bbD$ and any value of $d$ we can put $\bbD_{\alphadelta | \delta_1=d}$ into series !-box form.
\end{proposition}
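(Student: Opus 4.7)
The plan is to prove this by direct construction, generalising the template illustrated in Example~\ref{exaSeriesBBoxForm}. First I would fix the !-box $\delta_1$ and look at its join, i.e.\ the set of edges leaving $\delta_1$ and terminating at vertices of $G$ (or at the diagram boundary). Because $\delta_1$ is non-nested and separated from every other !-box, by Lemma~\ref{lemSeparability} we may assume, after a preliminary rewrite, that each joining edge terminates at a spider-type vertex in $G$ (a Z/X spider in ZX, a white spider in ZW, or a Z-spider in ZH).

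Next I would preprocess $G$ using the spider law so that each joining edge from $\delta_1$ lands at its own dedicated binary spider inside $G$. This is a standard spider-unfusion step and is always available. Instantiating $\delta_1$ to $d$ copies then produces $d$ parallel copies of its interior, with each of the $n$ dedicated spiders in $G$ now receiving $d$ incoming wires from those copies. Applying the spider law a second time, in reverse, I convert each of those fat joins into a vertical chain of $d$ binary spiders. Reading the resulting diagram from bottom to top, the $k$-th element of each chain, together with the $k$-th instantiated copy of the interior of $\delta_1$, can be grouped as a single block $B : p + n \to n$ (where $p$ is the number of boundary vertices inside $\delta_1$ and $n$ is the size of the join). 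The $d$ blocks sit in series, the diagram $G$ sits below them, and a cap $C : n \to 0$ closes off the topmost spiders. This is precisely the form $C \comp B^{\comp d} \comp G$ required by Definition~\ref{defSeriesBBoxForm}.

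The main obstacle will be purely bookkeeping: when $\delta_1$ contains boundary wires of the whole diagram, or when a joining edge of $\delta_1$ is itself a Hadamard edge (relevant in ZX and ZH), extra care is needed to push these features outside the repeated block $B$ so they sit uniformly in $G$ or $C$. In ZH one must also be careful that the connecting junctions are chosen to be Z-spiders rather than H-boxes, since only the former enjoy the full spider law needed for unfusion; the separability pairings listed in Lemma~\ref{lemSeparability} already account for this, so the construction still goes through uniformly across all three calculi.
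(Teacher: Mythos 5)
Your construction is essentially the paper's own proof: instantiate $\delta_1$, use the spider law to spread each connecting spider of $G$ into a chain of small spiders, and group each chain link with one copy of the !-box interior (plus its $p$ boundary nodes) into a block $B$, yielding $C \comp B^{\comp d} \comp G$ exactly as in Example~\ref{exaSeriesBBoxForm}. The only quibble is your appeal to Lemma~\ref{lemSeparability} for why the joining edges land on spiders — the paper instead (and more directly) argues that the node in $G$ receiving the join must admit arbitrary arity under !-box instantiation and hence is a spider — but this does not change the substance of the argument.
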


\begin{proof} \label{prfPropSeriesBBoxForm}
	We will only indicate the !-box we are interested in ($\delta$) and not explicitly write out the other parameters.
	First we manipulate the diagram (by Choi-Jamiolkowski we can transform inputs into outputs,
	provided we turn them back again later) until it is in the following form:

	\begin{align} \bb{D} \quad = \quad  \vc{\InputIfFileExists{./figures/wire/d_and_b.tikz}{}{Missing file!}} \end{align}

	We note that the nodes inside $G$ that join with $B'$ must be spiders;
	since there can be arbitrary many instances of $B'$ the connecting node in $G$ must be able to have arbitrary arity.
	There may also be $p$ boundaries that are contained in $\delta$.
	We will now rely on the existence of a spider law such that we may do the following:
	\begin{align}
		\vc{\InputIfFileExists{./figures/wire/gen-spider-l.tikz}{}{Missing file!}} \quad = \quad \vc{\InputIfFileExists{./figures/wire/gen-spider-r.tikz}{}{Missing file!}}
	\end{align}
	Which is the ability to `spread' a spider with $q$ outputs into $q$
	repeated copies of a spider with 1 output, with suitable initial, terminal, and joining subdiagrams.
	Such spider laws exists for $\ZW$, $\ZX$, $\ZH$ and $\ring$.
	We apply this to all the spiders in $G$ that are connected to the instantiated subdiagrams
	formed by $\delta$, so that the spiders form a chain as in \eqref{eqnseries-bbox}.
\end{proof}

\subsection{The verification process for !-boxes}

Using this notion of series !-box form (Definition~\ref{defSeriesBBoxForm})
we will now show that checking just an initial segment of a !-box instantiation
suffices for showing soundness for all possible instantiations.
Before stating and proving our theorem we will first need two simple lemmas
concerning tensor products of bases,
and increasing chains of vector subspaces.

\begin{lemma} \label{lemSeriesBasis}
	The set of all vectors of the form
	$\set{v_d \tensor \dots \tensor v_1 \tensor x}$
	where
	$v_j \in \Hilbert^{\tensor p}$
	and
	$x \in \Hilbert^{\tensor m}$ contains a basis for
	$\Hilbert^{\tensor(m + dp)}$
\end{lemma}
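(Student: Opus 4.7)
The plan is to exhibit an explicit basis of the target space whose elements all have the required tensor form. First I would pick a basis $\{e_i\}_{i \in I}$ of $\Hilbert^{\tensor p}$, where $|I| = (\dim \Hilbert)^p$, and independently a basis $\{f_k\}_{k \in K}$ of $\Hilbert^{\tensor m}$, where $|K| = (\dim \Hilbert)^m$. Since the tensor product $\tensor$ over $R$ distributes bases (this is standard for free modules, and in particular for vector spaces when $R$ is a field), the family
\begin{align}
\set{e_{i_d} \tensor \dots \tensor e_{i_1} \tensor f_k \mid i_1, \dots, i_d \in I,\ k \in K}
\end{align}
is a basis for the iterated tensor product $\Hilbert^{\tensor p} \tensor \dots \tensor \Hilbert^{\tensor p} \tensor \Hilbert^{\tensor m} = \Hilbert^{\tensor(m + dp)}$.

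Next I would observe that each element of this basis is literally of the form $v_d \tensor \dots \tensor v_1 \tensor x$, with the specific choices $v_j := e_{i_j} \in \Hilbert^{\tensor p}$ and $x := f_k \in \Hilbert^{\tensor m}$. Therefore this basis is a subset of the set $\set{v_d \tensor \dots \tensor v_1 \tensor x \mid v_j \in \Hilbert^{\tensor p},\ x \in \Hilbert^{\tensor m}}$, which is what the lemma asserts.

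As a brief sanity check I would verify the dimension count: $|I|^d \cdot |K| = (\dim \Hilbert)^{dp} \cdot (\dim \Hilbert)^m = (\dim \Hilbert)^{m + dp} = \dim \Hilbert^{\tensor(m+dp)}$. There is no real obstacle here; the lemma is essentially a restatement that tensor products of bases give bases of tensor products, and its role is purely preparatory for the later argument in Theorem~\ref{thmbbox2}, where it will be used to reduce checking an equality of morphisms on all of $\Hilbert^{\tensor(m + dp)}$ to checking it on inputs of pure tensor form coming out of a series !-box presentation.
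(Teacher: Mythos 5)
Your proof is correct and follows essentially the same route as the paper: both take tensor products of bases of $\Hilbert^{\tensor p}$ and $\Hilbert^{\tensor m}$ to obtain a basis of $\Hilbert^{\tensor(m+dp)}$ consisting of vectors of the required pure-tensor form. Your version simply spells out the indexing and dimension count more explicitly, which is fine.
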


\begin{proof} \label{prfLemSeriesBasis}
	Note that we may form a basis for $V \tensor V'$ by taking the tensor products of the bases of $V$ and $V'$,
	and therefore the above set contains all the basis elements of $\Hilbert^{\tensor p} \tensor \dots \Hilbert^{\tensor p} \tensor
		\Hilbert^{\tensor m} \iso \Hilbert^{\tensor (m +dp)}$.
\end{proof}

\begin{lemma} \label{lemIncreasingSubspace}
	A chain of subspaces $V_j$ of an $N$-dimensional vector space $V'$
	has the following property:
	\begin{align}
		\text{If}   & \nonumber                                               \\
		            & V_j \leq V_{j+1}\ \forall j                             \\
		            & (V_j = V_{j+1}) \implies (V_{j+1} = V_{j+2})            \\
		\text{Then} & \nonumber                                               \\
		            & \exists n \leq N \st V_{n+k} = V_n\ \forall  k \in \bbN
	\end{align}
\end{lemma}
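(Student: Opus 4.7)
The plan is to track the dimensions of the subspaces in the chain and use the bound $N$ to force stabilisation. First I would observe that the inclusion $V_j \leq V_{j+1}$ gives a non-decreasing sequence of integers $d_j := \dim V_j$ lying in $\{0, 1, \ldots, N\}$. Since $V_j \leq V_{j+1}$ is an inclusion of finite-dimensional spaces, the equality $d_j = d_{j+1}$ is equivalent to $V_j = V_{j+1}$, so the dimension jumps exactly capture where the subspaces strictly grow.

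Next I would combine this with the second hypothesis: once we hit $V_j = V_{j+1}$ the equality propagates indefinitely, so the sequence of dimensions strictly increases up to some index and is constant thereafter. The key quantitative step is then to bound when stabilisation can first occur. If no stabilisation had happened at any index $j < N$, then $d_0 < d_1 < \cdots < d_N$ would be a strictly increasing chain of $N+1$ integers inside $\{0, 1, \ldots, N\}$, forcing $d_j = j$ and in particular $d_N = N$. This gives $V_N = V'$, and since $V_{N+1} \leq V'$ together with $V_N \leq V_{N+1}$ yields $V_{N+1} = V_N$; the propagation hypothesis then gives $V_{N+k} = V_N$ for all $k \in \bbN$.

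In the remaining case, stabilisation occurs at some first index $n < N$ where $V_n = V_{n+1}$, and iterated application of the propagation hypothesis gives $V_{n+k} = V_n$ for all $k$. In either case we obtain an $n \leq N$ with the required property.

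I do not expect a genuine obstacle here: the argument is essentially a pigeonhole on dimensions combined with the propagation assumption. The only care needed is in the edge case where the chain stabilises exactly at index $N$, which is handled uniformly by using $V_N \leq V' $ to force $V_N = V'$ whenever strict increase persists all the way to $N$.
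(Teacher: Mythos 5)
Your argument is correct and is essentially the paper's proof: both track the non-decreasing sequence of dimensions bounded by $N$ and use the propagation hypothesis to split the chain into a strictly increasing initial segment (of length at most $N$) followed by a stable tail. Your treatment of the edge case at index $N$ is a harmless elaboration of the same idea.
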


\begin{proof} \label{prfLemIncreasingSubspace}
	By the assumptions in the statement $V_j$ is non-decreasing, $\dim V_j \leq N$,
	and if $V_j = V_{j+1}$ then $V_{j} = V_{k} \forall k \geq j$;
	therefore the sequence $\set{V_j}$ has an initial increasing section, followed by a stable section.
	By looking at the sequence $\set{\dim V_j}$ we know that this initial increasing section has length at most $N$.
\end{proof}

\begin{definition}[Verifying subset]
	Given a parameterised equation $\bbE$ we say that $\set{\bbE_1, \dots, \bbE_n}$ {verifies} $\bbE$ if:
	\[
		(\forall j \;\bbE_j \text{ is sound}) \implies \bbE \issound
	\]
\end{definition}

\begin{restatable}[Finite verification of !-boxes]{theorem}{thmbboxx}\label{thmbbox2}
	Given a family 
	\begin{align*}
	\bbE = \family{\bb{D}_1 = \bb{D}_2}_{\alphadelta}
	\end{align*} of diagrammatic equations,
	in ZX, ZH, or ZW,
	parameterised by a !-box $\delta_1$
	where $\delta_1$ is separated from all other !-boxes
	and is nested in no other !-box;
	then $\bbE$ is verified by the finite family
	\begin{align*}
	\set{\bbE|_{\delta_1 = 0}, \dots, \bbE|_{\delta_1 = N}}
	\end{align*}
	where $N$ is the dimension of the join of $\delta_1$ in $\bb{D}_1$
	plus the dimension of the join of $\delta_1$ in $\bb{D}_2$.
	That is:
	\begin{align}
		n_1 := & \text{ join of $\delta_1$ in $\bbD_1$} \nonumber                                   \\
		n_2 := & \text{ join of $\delta_1$ in $\bbD_2$} \nonumber                                   \\
		N :=   & \dim \left( \Hilbert^{\tensor n_1} \oplus \Hilbert^{\tensor n_2} \right) \nonumber \\[2em]
		       & \interpret{\family{\bb{D}_1}_{\alphadelta | \delta_1 = d_1}} \nonumber             \\
		=      &
		\interpret{\family{\bb{D}_2}_{\alphadelta | \delta_1 = d_1}}
		       & \forall d_1 \leq N                                                                 \\[2em]
		\implies  \label{eqnbbox2}
		       & \interpret{\family{\bb{D}_1}_{\alphadelta}}  \nonumber                             \\
		=      &
		\interpret{\family{\bb{D}_2}_{\alphadelta}}
		       &
	\end{align}
\end{restatable}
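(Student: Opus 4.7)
The plan is to put both sides of $\bbE$ into series !-box form (Proposition~\ref{propSeriesBBoxForm}) and then run a Cayley--Hamilton-style stabilisation argument on a chain of subspaces inside a single $N$-dimensional vector space. First I would fix the remaining parameters $\alpha_1,\dots,\alpha_n$ and $\delta_2,\dots,\delta_m$; the conclusion is pointwise in them, so proving the statement for each fixed choice suffices (the case of $\delta_k$ nested inside $\delta_1$ just contributes to the $B_i$ below as subdiagrams once instantiated). After bending free outputs down with cups and caps, Proposition~\ref{propSeriesBBoxForm} writes
\begin{align*}
\bbD_i\big|_{\delta_1 = d} \;=\; C_i \comp B_i^{\comp d} \comp G_i,
\end{align*}
with $B_i : p + n_i \to n_i$, where $n_i$ is the size of the $\delta_1$-join in $\bbD_i$ and $p$ is the common number of boundary wires inside $\delta_1$ (the same on both sides by well-nestedness). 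Writing $T_i := \interpret{B_i}$ and, for each $v \in \Hilbert^{\tensor p}$, the operator $T_i^v(y) := T_i(v \otimes y)$ on $\Hilbert^{\tensor n_i}$, Lemma~\ref{lemSeriesBasis} reduces the desired equality of interpretations to the scalar identities
\begin{align*}
\interpret{C_1}\big(T_1^{v_d} \cdots T_1^{v_1}\, \interpret{G_1}(x)\big) \;=\; \interpret{C_2}\big(T_2^{v_d} \cdots T_2^{v_1}\, \interpret{G_2}(x)\big)
\end{align*}
for all $v_1,\dots,v_d \in \Hilbert^{\tensor p}$ and $x \in \Hilbert^{\tensor m}$.

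Next I would work inside $V := \Hilbert^{\tensor n_1} \oplus \Hilbert^{\tensor n_2}$, which has dimension $N$, equipped with the linear functional $\pi(w_1, w_2) := \interpret{C_1}(w_1) - \interpret{C_2}(w_2)$ and the operators $T^v := T_1^v \oplus T_2^v$. Set
\begin{align*}
V'_d \;:=\; \mathrm{span}\,\big\{\,\big(T_1^{v_d} \cdots T_1^{v_1} \interpret{G_1}(x),\; T_2^{v_d} \cdots T_2^{v_1} \interpret{G_2}(x)\big)\, :\, v_j \in \Hilbert^{\tensor p},\; x \in \Hilbert^{\tensor m}\big\}
\end{align*}
and $V_{\leq d} := V'_0 + V'_1 + \cdots + V'_d$. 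The reduction above then shows that $\bbE|_{\delta_1 = d}$ is sound if and only if $V'_d \subseteq \ker \pi$, so the hypothesis of the theorem translates exactly to $V_{\leq N} \subseteq \ker \pi$.

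Finally I would apply Lemma~\ref{lemIncreasingSubspace} to the chain $V_{\leq 0} \leq V_{\leq 1} \leq \cdots \leq V$. By construction $T^v(V'_k) \subseteq V'_{k+1}$ for every $v$ and $k$, so if $V_{\leq d+1} = V_{\leq d}$ (equivalently $V'_{d+1} \subseteq V_{\leq d}$) then
\begin{align*}
V'_{d+2} \;=\; \mathrm{span}_v\, T^v(V'_{d+1}) \;\subseteq\; V'_1 + \cdots + V'_{d+1} \;\subseteq\; V_{\leq d+1},
\end{align*}
and the stabilisation hypothesis of Lemma~\ref{lemIncreasingSubspace} is satisfied. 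Hence the chain stabilises at some $d^{\star} \leq N$, giving $V'_d \subseteq V_{\leq N} \subseteq \ker \pi$ for all $d$, and therefore equality of the two interpretations at every $d$. The main obstacle I anticipate is not the algebra but the bookkeeping around the series !-box form: verifying that the same $p$ appears on both sides (forced by well-nestedness), that Proposition~\ref{propSeriesBBoxForm} really produces join sizes $n_i$ matching those used in the definition of $N$, and that the argument passes uniformly over the fixed other parameters so that the statement really is about the original parameterised family.
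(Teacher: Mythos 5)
Your proposal is correct and is essentially the paper's own proof: put both sides in series !-box form, compare them on $\Hilbert^{\tensor n_1}\oplus\Hilbert^{\tensor n_2}$ via a difference functional, and let an ascending chain of subspaces stabilise within $N=\dim(\Hilbert^{\tensor n_1}\oplus\Hilbert^{\tensor n_2})$ steps by Lemma~\ref{lemIncreasingSubspace} --- your cumulative $V_{\leq d}$ is the paper's $V_d$, and your $(\pi, T^v)$ is the paper's $\begin{bmatrix}1 & -1\end{bmatrix}\operatorname{diag}(C_1,C_2)$ together with the block maps $\hat B_q$. The only minor divergence is that the paper bundles all the non-$\delta_1$ data (phase-variable values, instantiations of the remaining !-boxes, and the plugging vector $v$) into a single choice $q$ attached to each copy of $B$, rather than fixing them pointwise as you do; this is what lets different copies of $B$ carry different instantiations of !-boxes nested inside $\delta_1$, and it leaves the bound $N$ unchanged.
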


The sketch of the proof is:
\begin{enumerate}
	\item Manipulate the diagrams into series !-box form (Definition~\ref{defSeriesBBoxForm})
	\item Move to the matrix interpretation
	\item Manipulate the equation between two matrices into an expression on a single vector space of dimension $N$
	\item Demonstrate the required property as a condition on subspaces
\end{enumerate}

\begin{proof}
	Considering either $\bbD_1$ or $\bbD_2$ for now,
	manipulate ${\bbD}_{\delta_1 = d}$ into series !-box form;
	$C \comp (B)^{\comp d} \comp G$.
	The $p$ boundary vertices in every copy of $B$ pose an issue for our intended method,
	and so we instead consider $B$ as being parameterised not just by $\alpha_j \; \forall j$
	and $\delta_k \; \forall k > 1$, but also by
	input vectors $v \in \Hilbert^{\tensor p}$ that `plug' the inputs inside $B$.

	Since we can show equivalence of complex matrices by showing that they perform the same operation on any input,
	it will be enough to show that for any choice of $\alpha_j$, $\delta_{k>1}$
	and $v$ that \eqref{eqnbbox2} holds.
	Assuming we have instantiated the $\alpha_j$ and $\delta_{k>1}$,
	Lemma~\ref{lemSeriesBasis} justifies that we can choose the input vector for the $p$ inputs of
	each of the $d$ copies of $B$ independently.

	Given a choice of values for the $\alpha_j$, $\delta_{k > 1}$ and $v$
	we denote this choice by $q$ and use $B_q$ to mean `the sub-diagram $B$ from the series !-box form with this choice of variables'.
	We also define $G_q$ as the subdiagram $G$ instantiated with the values
	of $\alpha_j$ and $\delta_{k > 1}$
	described by $q$, same as for $B_q$
	(we do not plug any values into the inputs of $G$).
	Once we have chosen values for $q$ we may consider the matrix interpretation of the diagram:
	\begin{align}
		    & \interpret{C} \interpret{B_{q_d}} \dots \interpret{B_{q_1}} \interpret{G_{q_o}} \\
		G_q & : \Hilbert^{\tensor m} \to  \Hilbert^{\tensor n}                                \\
		B_q & :  \Hilbert^{\tensor n} \to  \Hilbert^{\tensor n}                               \\
		C   & :  \Hilbert^{\tensor n} \to \bbC
	\end{align}

	Going back to our original question:
	Given an equation $\bbD_1 = \bbD_2$ of two families of diagrams, both parameterised by a (non-nested, separated) !-box $\delta_1$
	(among other parameters)
	we wish to remove our dependence on $\delta_1$ by
	instead verifying a finite set of equations, each of which has a different value for $\delta_1$.
	Note that for this to be the case we require the number of inputs to be equal;
	i.e. $m := m_1 = m_2$ and $p := p_1 = p_2$, but we do not require $n_1 = n_2$ in \eqref{eqnseries-bbox}.
	We instantiate $\delta_1 = d$ and consider $D_1$ and $D_2$ to be in series !-box form,
	with matrix interpretations:
	\begin{align}
		  & \interpret{C_1} \interpret{B_{1,q_d}} \dots \interpret{B_{1,q_1}} \interpret{G_{1,q_0}} \\
		  & \interpret{C_2} \interpret{B_{2,q_d}} \dots \interpret{B_{2,q_1}} \interpret{G_{2,q_0}}
	\end{align}
	And we wish to know when these two interpretations are equal.
	Rather than consider the matrices acting on two independent spaces
	we view them as acting on the direct sum of those two spaces
	and represent these maps as block matrices.
	(We drop the $\interpret{\cdot}$ notation when it would appear inside a matrix.)
	\begin{align}
		     & \interpret{C_1} \interpret{B_{1,q_d}} \dots \interpret{B_{1,q_1}} \interpret{G_{1,q_0}}
		= \interpret{C_2} \interpret{B_{2,q_d}} \dots \interpret{B_{2,q_1}} \interpret{G_{2,q_0}}\\
		\iff &
		\begin{bmatrix}	1 & -1\end{bmatrix}
		\begin{bmatrix}C_1 & 0 \\ 0 & C_2\end{bmatrix}
		\begin{bmatrix}B_{1, q_d} & 0 \\ 0 & B_{2, q_d}\end{bmatrix}
		\dots
		\begin{bmatrix}B_{1, q_1} & 0 \\ 0 & B_{2, q_1}\end{bmatrix}
		\begin{bmatrix}G_{1,q_0} & 0 \\ 0 & G_{2,q_0}\end{bmatrix}
		\begin{bmatrix}	\id_m \\ \id_m\end{bmatrix}
		= 0
	\end{align}
	One can think of the above as copying an input vector $x \in \Hilbert^m$ as $x :: x$ in $\Hilbert^m \oplus \Hilbert^m$,
	then applying
	\begin{align}
		  & \interpret{C_1} \interpret{B_{1,q_d}} \dots \interpret{B_{1,q_1}} \interpret{G_{1,q_0}} \\
		\text{and} \nonumber \\
		  & \interpret{C_2} \interpret{B_{2,q_d}} \dots \interpret{B_{2,q_1}} \interpret{G_{2,q_0}}
	\end{align}
	to the left and right copies respectively.
	After that we apply a minus sign to the right hand result and add that to the left hand result,
	effectively comparing them and demanding the difference to be 0.
	We seek to prove:
	\begin{align}
		  & \nonumber
		\begin{bmatrix}	1 & -1\end{bmatrix}
		\begin{bmatrix}C_1 & 0 \\ 0 & C_2\end{bmatrix}
		\begin{bmatrix}B_{1, q_d} & 0 \\ 0 & B_{2, q_d}\end{bmatrix}
		\dots
		\begin{bmatrix}B_{1, q_1} & 0 \\ 0 & B_{2, q_1}\end{bmatrix}
		\begin{bmatrix}G_{1,q_0} & 0 \\ 0 & G_{2,q_0}\end{bmatrix}
		\begin{bmatrix}	\id_m \\ \id_m\end{bmatrix}
		= 0 \nonumber \\
		  & \qquad \forall d \leq N, q  \nonumber \\
		\implies
		  &
		\begin{bmatrix}	1 & -1\end{bmatrix}
		\begin{bmatrix}C_1 & 0 \\ 0 & C_2\end{bmatrix}
		\begin{bmatrix}B_{1, q_d} & 0 \\ 0 & B_{2, q_d}\end{bmatrix}
		\dots
		\begin{bmatrix}B_{1, q_1} & 0 \\ 0 & B_{2, q_1}\end{bmatrix}
		\begin{bmatrix}G_{1,q_0} & 0 \\ 0 & G_{2,q_0}\end{bmatrix}
		\begin{bmatrix}	\id_m \\ \id_m\end{bmatrix}  = 0   \nonumber \\
		  & \qquad \forall d, q
	\end{align}
	Recalling that $q$ is the choice of values for $\alpha_j$, $\delta_{k>1}$ and $v \in \Hilbert^{\tensor p}$,
	we use $Q$ to denote the set of all possible choices.
	We use $\hat B_q$ for the matrix that acts as the direct sum of $B_{1,q}$ and $B_{2,q}$:
	\begin{align}
		\hat B_q := &
		\begin{bmatrix}B_{1, q} & 0 \\ 0 & B_{2, q}\end{bmatrix}
	\end{align}
	and inductively define the spaces:
	\begin{align}
		V_0 & := \text{span} \set{ \bigcup_{q \in Q} \text{Im} \left(\begin{bmatrix}G_{1,q} & 0 \\ 0 & G_{2,q}\end{bmatrix}\begin{bmatrix}\id_m \\ \id_m\end{bmatrix}\right)} \\
		V_j & := \text{span} \set{ V_{j-1} \cup \; \bigcup_{q \in Q} \hat B_q V_{j-1}}
	\end{align}
	The $V_j$ form an ascending sequence of subspaces, each containing the potential images of up to
	$j$ applications of $\hat B_q$:
	\begin{align}
		V_j \geq \text{Im}(\;\hat B_{q_k} \dots \hat B_{q_1}\, V_0\;)  \quad \forall k \leq j\; \forall q_k, \dots, q_1 \in Q
	\end{align}
	By Lemma~\ref{lemIncreasingSubspace} there is a $V_b$ with the following properties:
	\begin{itemize}
		\item if $j <b$ then $V_j > V_{j-1}$
		\item if $j \geq b$ then $V_j = V_{j-1}$
		\item $b \leq \dim ( \; \Hilbert^{n_1} \oplus \Hilbert^{n_2} \; )$
	\end{itemize}
	Consider the subspace $W$ defined as
	\begin{align}
		W & := \begin{bmatrix}	1 & -1\end{bmatrix}
		\begin{bmatrix}C_1 & 0 \\ 0 & C_2\end{bmatrix}
	\end{align}
	We wish to show that:
	\begin{align}
		         & V_j \leq W & \forall j \leq N \\ \nonumber
		\implies & V_j \leq W & \forall j
	\end{align}
	Since $V_c = V_N $ when $ c \geq N \geq b$ it is enough to show that this is the case for all $V_j$ when $j \leq N$.
	This is, however, the assumption in our theorem;
	that for $d \leq N$ our diagrammatic equation holds,
	and so for any choice of $d$ and $q_1, \dots, q_d$ this matrix equation holds:
	\begin{align}
		\begin{bmatrix}	1 & -1\end{bmatrix}
		\begin{bmatrix}C_1 & 0 \\ 0 & C_2\end{bmatrix}
		\begin{bmatrix}B_{1, q_d} & 0 \\ 0 & B_{2, q_d}\end{bmatrix}
		\dots
		\begin{bmatrix}B_{1, q_1} & 0 \\ 0 & B_{2, q_1}\end{bmatrix}
		\begin{bmatrix}G_{1,q_0} & 0 \\ 0 & G_{2,q_0}\end{bmatrix}
		\begin{bmatrix}	\id_m \\ \id_m\end{bmatrix}
		= 0
	\end{align}

	We have shown that for any choice of inputs $v \in \Hilbert^p \tensor \dots \tensor \Hilbert^p \tensor \Hilbert^m$,
	and parameters $\alpha_j$, $\delta_{k>1}$ our matrix equations hold,
	and by extension they hold on all elements of the space $\Hilbert^{\tensor (m + dp)}$,
	i.e. that:

	\begin{align}
		  & \forall d \leq N \; \forall q_d \dots q_1 \nonumber &   \\
		&	\begin{bmatrix}	1 & -1\end{bmatrix}
		\begin{bmatrix}C_1 & 0 \\ 0 & C_2\end{bmatrix}
		\begin{bmatrix}B_{1, q_d} & 0 \\ 0 & B_{2, q_d}\end{bmatrix}
		\dots
		\begin{bmatrix}B_{1, q_1} & 0 \\ 0 & B_{2, q_1}\end{bmatrix}
		\begin{bmatrix}G_{1,q_0} & 0 \\ 0 & G_{2,q_0}\end{bmatrix}
		\begin{bmatrix}	\id_m \\ \id_m\end{bmatrix}
		&= 0 \\
		\implies
		  & \forall d\  \forall q_d \dots q_1 \nonumber         &   \\
		&	\begin{bmatrix}	1 & -1\end{bmatrix}
		\begin{bmatrix}C_1 & 0 \\ 0 & C_2\end{bmatrix}
		\begin{bmatrix}B_{1, q_d} & 0 \\ 0 & B_{2, q_d}\end{bmatrix}
		\dots
		\begin{bmatrix}B_{1, q_1} & 0 \\ 0 & B_{2, q_1}\end{bmatrix}
		\begin{bmatrix}G_{1,q_0} & 0 \\ 0 & G_{2,q_0}\end{bmatrix}
		\begin{bmatrix}	\id_m \\ \id_m\end{bmatrix}
		&= 0
	\end{align}
	And therefore:
	\begin{align}
		         & \interpret{C_1} \interpret{B_{1,q_d}} \dots \interpret{B_{1,q_1}} \interpret{G_{1,q_0}} =
		\interpret{C_2} \interpret{B_{2,q_d}} \dots \interpret{B_{2,q_1}} \interpret{G_{2,q_0}} \nonumber \\
		         & \qquad \forall d \leq N \; \forall q_d \dots q_1                \nonumber                 \\
		\implies &
		\interpret{C_1} \interpret{B_{1,q_d}} \dots \interpret{B_{1,q_1}} \interpret{G_{1,q_0}} =
		\interpret{C_2} \interpret{B_{2,q_d}} \dots \interpret{B_{2,q_1}} \interpret{G_{2,q_0}} \\
		         & \qquad \forall d \; \forall q_d \dots q_1 \nonumber
	\end{align}

	And therefore for any choice of value for $\alpha_j$ and $\delta_{k>1}$:

	\begin{align}
		         & \interpret{\family{\bb{D}_1 }_{\alphadelta|\delta_1=d}} = \interpret{\family{ \bb{D}_2 }_{\alphadelta|\delta_1=d}} & \forall d \leq N \\
		\implies & \interpret{\family{\bb{D}_1 }_{\alphadelta|\delta_1=d}} = \interpret{\family{ \bb{D}_2 }_{\alphadelta|\delta_1=d}} & \forall d
	\end{align}
\end{proof}

\begin{example}[!-box verification]
	Consider the following Universal ZX family of equations, parameterised by a single \mbox{!-box}:
	\begin{align}
		\bbE : =\family{\vc{\InputIfFileExists{./figures/ZX/internal_rule.tikz}{}{Missing file!}}}_{\delta}
	\end{align}
	The join between the !-box and the rest of the diagram on the left is two wires (dimension $2^2$),
	and on the right is one wire (dimension $2^1$).
	These sum to have dimension 6, and we therefore need only to check the !-box instances
	($0, \dots, 6$)
	to be sure that the equation $\bbE$ is sound for any value of $\delta$.
\end{example}

A longer example is provided in \S\ref{secVerificationExample}.

\section{Verifying !-boxes and phase variables together}
\label{sectogether}

Theorem \ref{thmparameter} and theorem \ref{thmbbox2} deal with equations containing
multiple phase variables and nested !-boxes respectively.
This section will put together the necessary results such that we can combine these approaches to deal with
equations containing multiple !-boxes and phase variables, any of which could potentially be nested inside other !-boxes.
For our running ZH example
the previous sections covered the faces of the commutative cube,
and this is the section where we put them all together.

\begin{proposition} \label{propBoth}
	Given a parameterised equation\ $\bbE$\ and ways of finding
	\begin{itemize}
		\item finite verifying sets $A_j$ for phase variables in diagrams without !-boxes (e.g. Theorem~\ref{thmparameter})
		\item finite verifying sets $D_k$ for the !-boxes (e.g. Theorem~\ref{thmbbox2})
	\end{itemize}
	we may verify the entire family $\bbE$ by checking the (finite)
	set given by the cartesian product of all the $A_j$ and $D_k$.
\end{proposition}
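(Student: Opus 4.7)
The plan is to apply the two existing verification procedures iteratively, peeling off one kind of parameter at a time. First I would fix the !-boxes and vary only the phase variables (invoking Theorem~\ref{thmparameter}-style verification), then fix the phase variables and vary only the !-boxes (invoking Theorem~\ref{thmbbox2}-style verification). The cartesian product $\prod_j A_j \times \prod_k D_k$ contains exactly the sample points needed to feed both procedures in sequence.

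More precisely, I would first observe that for each fixed instantiation $(d_1,\dots,d_m) \in \prod_k D_k$, the restricted family $\bbE|_{\delta_1=d_1,\dots,\delta_m=d_m}$ is a parameterised family with only phase variables and no !-boxes. If $\bbE$ is sound on every point of $\prod_j A_j \times \prod_k D_k$, then in particular $\bbE|_{\delta_k=d_k}$ is sound on all of $\prod_j A_j$, so by the phase-variable method it is sound for arbitrary values of the $\alpha_j$. This gives soundness of $\bbE$ on the set $(\text{all phase values}) \times \prod_k D_k$. Dually, I would then fix arbitrary phase values $(a_1,\dots,a_n)$ and consider $\bbE|_{\alpha_j=a_j}$, which is now a family parameterised only by !-boxes; the previous step tells us it is sound on $\prod_k D_k$, and the !-box method lifts this to soundness on all !-box instantiations. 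Composing the two implications yields soundness of $\bbE$ on the full parameter space.

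The main obstacle will be ensuring that a single choice of verifying sets is compatible across both steps. The sizes of the sets $A_j$ required by Theorem~\ref{thmparameter} are controlled by the Laurent degree of the phase variables in the diagram, and expanding a !-box copies variables (Definition~\ref{defCopy}), so the required $|A_j|$ grows with the !-box expansion size. The fix is that we only ever need to apply the phase-variable verification after instantiating !-boxes to values in the finite set $\prod_k D_k$; hence we may take $|A_j|$ to be the maximum of the finitely many required sizes over this product. Conversely, the sets $D_k$ produced by Theorem~\ref{thmbbox2} depend only on the dimension of the joins of the !-boxes, a purely combinatorial property of the skeleton that is unaffected by phase variable instantiation, so the same $D_k$ works uniformly for every point of $\prod_j A_j$. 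With this compatible choice fixed, the iterated argument above goes through verbatim, and the order in which the two peeling steps are applied does not matter.
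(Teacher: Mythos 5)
Your proposal is correct and follows essentially the same route as the paper: verify phase variables for each fixed !-box instantiation in $\prod_k D_k$, then lift to all !-box values via the !-box result, with the size of each $A_j$ taken as the maximum of the finitely many required sizes over that product (exactly the paper's $\max_{\bar\delta}\abs{A_j^{\bar\delta}}$ construction). Your explicit remark that the $D_k$ depend only on the join dimensions and so are unaffected by phase instantiation is a point the paper leaves implicit, but the argument is the same.
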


\begin{proof}
	We define:
	\begin{align}
		\bar D           & := D_1 \times D_2  \times \dots \times D_m \\
		\bar \delta      & := (\delta_1, \dots, \delta_m)             \\
		A_j^{\bar \delta} & := \text{The verifying set for } \alpha_j
		\text { once } E \text{ has had !-boxes instantiated at } \bar \delta
	\end{align}
	Construct $A_j$ by choosing as many points as there are in $\max_{\bar \delta}\set{ \abs{A_j^{\bar \delta}}}$.
	This is finite because the $D_k$ and the $A_j^{\bar \delta}$ are finite.
	$A_j$ therefore contains enough points to be a verifying set for $A_j^{\bar \delta} \quad \forall \bar \delta \in \bar D$.
	We now show that $A_1 \times \dots \times A_n \times D_1 \times \dots \times D_m$
	describes a verifying set for the parameterised equation $\bbE$:
	\begin{align}
		&\bbE_{\alphadelta | \forall j, k \; \alpha_j \in A_j ,\, \delta_k \in D_k} & \issound\\
		= \quad & \bbE_{\alphadelta | \forall i \; \alpha_j \in A_j ,\, \bar \delta \in \bar D} & \issound
		& \quad \text{(rewrite using }\bar \delta \text{ notation})\\
		\implies & \bbE_{\alphadelta | \forall i \; \alpha_j \in A_j^{\bar \delta}, \, \bar \delta \in \bar D} & \issound & \quad  \text{(construction of $A_j$)}     \\
		\implies & \bbE_{\alphadelta | \bar \delta \in \bar D}                                                & \issound & \quad \text{(theorem \ref{thmparameter})} \\
		\implies & \bbE_{\alphadelta}                                                                         & \issound & \quad \text{(theorem \ref{thmbbox2})}
	\end{align}

\end{proof}

We now proceed towards the main theorem of this chapter,
giving the definition that we will need in the proof.
We shall need algorithms to determine the sets $A_j$ and $D_k$,
as required by Proposition~\ref{propBoth},
and for that we shall need a way of ordering our !-boxes,
and a notion of `the largest diagram'.
Proof of existence, or correctness, of these definitions are given in the proof
of Theorem~\ref{thmfinite}.

\begin{definition}[!-box ordering]
	Given a parameterised equation $\bbE$,
	we construct the ordered list $\delta_{k_1} \succ \delta_{k_2} \succ \dots \succ \delta_{k_n}$
	of !-boxes
	by recursively picking a !-box that is nested in no other !-boxes,
	then removing that !-box from the nesting diagram of $\bbE$. Repeat on the new nesting diagram.
\end{definition}

\begin{definition}[The algorithm !-Removal] \label{defBRemoval}
	Given a verifying set of equations $\set{\bbE_\kappa}_{\kappa \in K}$,
	and a !-box $\delta_k$ nested in no other !-boxes present in the ${\bbE_\kappa}$:

	We define !-Remove($\delta_k$) as the process described in Theorem~\ref{thmbbox2}.
	It acts on the set $\set{\bbE_\kappa}_{\kappa \in K}$
	by acting on each of the $\bbE_\kappa$ in turn,
	finding the value $N_\kappa$, and
	creating the new verification set:

	\begin{align}\set{\bbE_\kappa}_{\kappa \in K'} :=
		\bigcup_{\kappa \in K} \set{ {\bbE_{\kappa|\delta_k = 1}}, \dots, {\bbE_{\kappa|\delta_k = N_\kappa} }}
	\end{align}
\end{definition}

\begin{definition}[Largest diagram] \label{defLargestDiagram}
	Given a verifying set of equations $\set{\bbE_\kappa}_{\kappa \in K}$,
	we use $\bar \bbE$ to denote the !-box free equation that is the result of instantiating
	each !-box $\delta_k$ at its largest required amount $N_k$ given by Definition~\ref{defBRemoval}
\end{definition}

\begin{definition}[The algorithm $\alpha$-Removal]
	Given a verifying set $\set{\bbE_\kappa}_{\kappa \in K}$ we construct the set $A_j$ for the variable $\alpha_j$
	by considering the degree of $\alpha_j$ in $\bar \bbE$,
	and then choosing enough valid values of $\alpha_j$ to reach the amount dictated by Theorem~\ref{thmparameter}.
	We then form:

	\begin{align}
		\set{\bbE_\kappa}_{\kappa \in K'} :=
		\bigcup_{\kappa \in K, a \in A_j} \set{ {\bbE_\kappa} \at{\alpha_j = a}}
	\end{align}

	If we are working with $\ring_\bbC$, $\ZW_\bbC$, or $\ZH_\bbC$ we may  use
	Theorem~\ref{thmQubitSingleVerifyingPhaseEquation} to generate the sets
	$A_1 := \set{a_1}, \dots A_n := \set{a_n}$ instead.
\end{definition}

\begin{restatable}[Finite verification of !-boxes and phase variables]{theorem}{thmFinite}\label{thmfinite}
	Given a parameterised family of equations $\bbE$,
	where the !-boxes are separated and well nested,
	we can construct a finite set of simple equations $\set{\bbE_\kappa}_{\kappa \in K}$
	that verifies $\bbE$.
\end{restatable}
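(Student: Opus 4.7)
The plan is to iterate the two algorithms !-Removal and $\alpha$-Removal already defined in the chapter, applying them in an order dictated by the !-box ordering $\delta_{k_1} \succ \delta_{k_2} \succ \dots \succ \delta_{k_n}$. The argument proceeds by a straightforward well-founded induction on the total nesting depth of !-boxes in the equations of the verifying set, starting from the singleton $\{\bbE\}$ and terminating once all !-boxes and phase variables have been eliminated.

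First I would check the base case: if $\bbE$ contains no !-boxes, then the claim follows directly from iterating Theorem~\ref{thmparameter} over the phase variables $\alpha_1, \dots, \alpha_n$ (the parameter space is a finite product of phase algebras, so the cartesian product $A_1 \times \dots \times A_n$ of the verifying sets is itself finite and verifies $\bbE$ by an easy induction on $n$, matching the content of Proposition~\ref{propBoth}). For the inductive step, pick an outermost !-box $\delta_{k_1}$ (one nested in no other); this exists because the nesting order on a finite set of well-nested !-boxes has maximal elements. Applying !-Removal($\delta_{k_1}$) produces, by Theorem~\ref{thmbbox2}, a finite family of equations in which $\delta_{k_1}$ has been instantiated at each value $0, 1, \dots, N$ where $N$ is the sum of the join dimensions on each side. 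This new family verifies $\bbE$ for the parameter $\delta_{k_1}$ by Theorem~\ref{thmbbox2}, and each equation in it has strictly smaller total nesting depth, so the induction hypothesis delivers a finite simple verifying set for each, and their union verifies $\bbE$.

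The main obstacle I anticipate is checking that the hypotheses of Theorem~\ref{thmbbox2} are preserved at every stage of this recursion. Specifically, I must show that after instantiating an outermost $\delta_{k_1}$ to some value $d$, (i) the resulting equations are still well nested, i.e.\ the nesting diagrams on the two sides remain identical, and (ii) the $d$ fresh copies of each formerly-inner !-box remain pairwise separated from each other and from the other !-boxes that lay outside $\delta_{k_1}$. Point (i) follows because instantiation is symmetric between the two sides of the equation and the original family was well nested. Point (ii) is more delicate: the $d$ copies sit inside $d$ disjoint copies of the same subdiagram produced by the series !-box form of Proposition~\ref{propSeriesBBoxForm}, so they are pairwise unconnected by edges, hence separated by Definition~\ref{defSeparated}; and any pre-existing !-box outside $\delta_{k_1}$ was originally separated from $\delta_{k_1}$, a property that transports to each new copy.

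Finally, once the inductive process has eliminated every !-box, we are left with a finite family of equations containing only phase variables. Applying $\alpha$-Removal (i.e.\ Theorem~\ref{thmparameter}) to each remaining variable $\alpha_j$ in turn, using the largest diagram $\bar{\bbE}$ to bound the required size of $A_j$, yields a finite family of simple equations. The composition of the two phases matches the statement and structure of Proposition~\ref{propBoth}, so correctness of the overall construction reduces to correctness of Proposition~\ref{propBoth} together with the preservation arguments in the inductive step; I would close the proof by remarking that, over $\bit{\bbC}$ for $\ring$, $\ZW$, or $\ZH$, Theorem~\ref{thmQubitSingleVerifyingPhaseEquation} may be used in place of Theorem~\ref{thmparameter} in the $\alpha$-Removal step, yielding a single simple equation per phase-variable reduction rather than a cartesian product of them.
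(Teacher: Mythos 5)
Your proposal is correct and takes essentially the same route as the paper: eliminate !-boxes outermost-first via !-Removal (Theorem~\ref{thmbbox2}) and then eliminate phase variables via $\alpha$-Removal (Theorem~\ref{thmparameter}, assembled exactly as in Proposition~\ref{propBoth}), with the same qubit refinement via Theorem~\ref{thmQubitSingleVerifyingPhaseEquation}. The only point to tidy is your termination measure: expanding an outer !-box duplicates any nested !-boxes, so neither the number of !-boxes nor the sum of nesting depths need decrease, and the induction should instead be phrased on, say, the multiset of nesting depths (or one simply follows the paper's fixed !-box ordering).
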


The idea of the proof is to iteratively remove dependencies on !-boxes via Theorem~\ref{thmbbox2},
each time generating a larger set of verifying equations.
Once we have removed all !-box dependence we then use the method of Proposition~\ref{propBoth}
to remove phase variable dependence;
using the `largest' equation in $\set{\bbE_\kappa}$ to determine the sizes of the $A_j$.
Since every step removes either a !-box or a phase variable (and introduces neither !-boxes nor phase variables) this process terminates.

\begin{proof}

	Throughout this proof we iterate on the set $\set{ \bbE_\kappa }$.
	While at all times $\set{ \bbE_\kappa }$ is a finite verifying set for $\bbE$,
	it is only at the end that $\set{ \bbE_\kappa }$ is a finite verifying set of simple equations.
	We first show the existence of an ordered list of the !-boxes present in $E$,
	compatible with the nesting order on both of the nesting diagrams of $E$.

	\textbf{Claims:} \;
	\begin{enumerate}
		\item !-Remove($\delta_k$) removes any dependency on $\delta_k$ in the verifying set $\set{\bbE_\kappa}_{\kappa \in K'}$
		\item $\set{\bbE_\kappa}_{\kappa \in K'}$ verifies $\set{\bbE_\kappa}_{\kappa \in K}$
		\item !-Remove($\delta_k$) does not alter the nesting ordering of any remaining !-boxes and phase variables
		      in the verification pair
		\item The ordered list $\delta_{k_1} \succ \delta_{k_2} \succ \dots$ provides us with a sequence of !-boxes
		      such that we can apply !-Remove($\delta_{k_{n+1}}$) to the output of !-Remove($\delta_{k_n}$).
		\item Applying !-Remove to this ordered sequence results in a finite verifying set that has no dependence on any !-box.
	\end{enumerate}
	Proof of claims: The first, second and fifth claims follow from Theorem~\ref{thmbbox2}.
	The third and fourth claims are clear from the definitions.
	\vskip 2em
	Starting with $\set{\bbE_\kappa}_{\kappa \in K} = \bbE$
	we iteratively apply !-Remove according to the nesting order to obtain $\set{\bbE_\kappa}_{\kappa \in K'}$
	which is a set of equations with phase variables but no !-boxes,
	and also gives us the value of $\bar \bbE$ (see Definition ~\ref{defLargestDiagram}).
	We then iteratively apply $\alpha$-Remove to construct
	a finite set $\set{\bbE_\kappa}_{\kappa \in K''}$ of simple equations that verifies $\bbE$.

\end{proof}

\begin{remark} \label{remGeneratingPatternGraphs}
When we introduced conjecture synthesis we said
in Remark~\ref{remGeneratingPatterns}
that because we can now verify conjectures that contain !-boxes
it is now worth generating conjectures that contain !-boxes.
This can be done in one of two ways:
We can infer !-boxes onto an existing theorem,
by perhaps spotting repeating subdiagrams on both sides of the equation.
Alternatively we could synthesise
separated pattern graphs directly
(i.e. generate diagrams with !-boxes in)
and then store the first $0, \dots, N$
instances of each !-box expansion where $N$ is the join.
This would not give us all the details needed for verifying
whether two such diagrams were equal,
as additional instances will need to be calculated,
but would allow some initial checks to be performed quickly.
\end{remark}

\section{Summary}

This chapter combines two very different verification results;
one for phase variables
(reliant on the structure of the phase algebra),
and the other for !-boxes
(reliant on the new series !-box form).
We showed how to unify these two results
into one that works in all cases where the !-boxes
are separated,
and also how to separate out !-boxes in most, but not all, cases.
When considering qubits we used some Galois theory
to reduce the number of verifying equations needed
from `finite' to `one'.
Further work includes finding similar
results for the calculus ZQ,
and potentially reducing the number
of equations needed to verify !-boxes over qubits.
For conjecture synthesis this chapter
shows how to verify the sorts of hypotheses created
in \S\ref{chapConjectureInference},
but also shows that we can compare
diagrams containing !-boxes directly if we were to
generate them as part of conjecture synthesis.
This chapter also gave the final piece of our conjecture inference;
hypothesising theorems involving !-boxes.
It also brings the results chapters of this thesis to a close.  \thispagestyle{empty}

\chapter{Conclusion} 
\thispagestyle{plain}
\label{chapConclusion}

This thesis demonstrates that
phase algebras conceal a wealth of algebraic complexity
that can and should be used in both quantum computing and conjecture synthesis.
This wealth comes from the interplay between
the phase algebra of the diagrams
and the underlying category of the interpretation.
In our exploration of this theme
we have introduced new graphical calculi,
demonstrated a deep relationship between diagrammatic rules and algebraic geometry,
and introduced phase homomorphisms to link
the algebra of labels with the algebra of interpretations.
While doing this we improved on the design of conjecture synthesis,
exhibited new methods for conjecture verification,
and combined all these, along with some Galois Theory,
to show how a single equation can imply an abundance of generalisations.

The impetus for this work was the task of Conjecture Synthesis
for Quantum Graphical Calculi.
We have extended the core process found in the literature
by including a generalisation step at the point of successful synthesis.
This generalisation step is performed (although not exclusively) via the discovery,
presented in this thesis, of a link
between our rule presentations and algebraic geometry.
Through the related work on conjecture verification, also presented in this thesis,
it was shown that the evidence needed for such a generalisation could be as low
as a single equation;
this is equivalent to inferring
all the points of a surface based on just a single point and some symmetries.

These symmetries of the parameter space,
also discovered in this thesis,
are the phase homomorphism pairs
exhibited in \S\ref{chapPhaseRingCalculi}.
We introduced them,
demonstrated when they preserved soundness and derivations,
and then classified them for ZW, ZH, $\ring$, and certain fragments of ZX.
The author hopes that this examination,
and the contents of \S\ref{chapPhaseRingCalculi}
overall, prompt further research into graphical calculi with a
greater variety of phase algebras.
These parameter spaces do not arise by chance:
They are a direct consequence of our choices of phase algebras,
and have the capacity to reveal profound insights into our graphical calculi.

The conjecture verification results of this thesis
rely on the (relatively simple) properties
of polynomial interpolation
and chains of vector subspaces.
The hard work in these results was in
extracting the necessary information
from the diagrams.
Again we emphasise that this algebraic machinery is present,
but because we are dealing with an unusual presentation of the information,
and in a context where these results are less commonly applied,
these properties are far easier to miss.

Graphical calculus design is therefore clearly important
but we have also shown ways in which it is (usefully) limited.
Any future work to design a quantum graphical calculus
with a phase ring will
necessarily construct something similar to $\ring_\bbC$ (Remark~\ref{remRingKGeneric}).
This is not to say that $\ring_\bbC$ is the only option
(ZH and ZW have good reasons for their choices of generators)
but that $\ring_\bbC$ will act as common ground between these calculi,
and will make the construction of equivalences easier.
The language ZQ,
on the other hand,
will be useful for those creating optimisers,
or reasoning about certain hardware implementations.
In fact the author would argue that the rules of ZQ
are more intuitive than those for Universal ZX, and so there
is a pedagogical argument for explaining quantum circuitry using ZQ.
The unwieldy nature of the Euler Decomposition rule of ZX,
and its comparative elegance in ZQ, was a major influence on the direction of this research.
Both ZQ and $\ring$ were introduced
and then showed to be sound and complete in this thesis.

The central point of this thesis was that phase algebras are important,
and that they enable algebraic machinery that was previously underused.
This is a continuation of the program of Categorical Quantum Mechanics,
a program that aims to make reasoning about these systems easier
through selective abstraction:

\begin{displayquote}[Categorical Quantum Mechanics \cite{CQM}]
The arguments for the benefits of a high-level,
conceptual approach to designing and reasoning about quantum computational
systems are just as compelling as for classical computation.
In particular, we have in mind the hard-learned lessons from Computer Science of
the importance of compositionality, types, abstraction, and the use of tools from
algebra and logic in the design and analysis of complex informatic processes.
\end{displayquote}

As a final demonstration of this point let us examine the generalisation step
of an updated simple $\ring_\bbC$ conjecture synthesis run:
\begin{itemize}
    \item First an unparameterised, sound equation in $\ring_\bbC$ is generated.
    \item Let us assume that this equation shares a skeleton with other equations,
    already shown to be sound.
    We may then use the link between the algebra of equations
    and the geometry of parameter spaces
    to infer a potential generalisation.
    \item We can then use the results of \S\ref{secPhaseVariablesOverQubits}
    to pick a single, unparameterised, verifying equation for this generalisation.
    This equation will already be in our search space,
    we simply move it forward in the queue.
    \item With our generalisation verified
    we can then reduce the search space by a far greater amount
    than if we had not performed the verification.
\end{itemize}
This generalisation process uses Algebraic Geometry, Galois Theory,
Laurent Polynomials,
Polynomial Interpolation, and Category Theory.
Each of their particular uses are novel to this thesis.
All of these uses rely on the interplay between the phase algebra
and the underlying matrices.

The author hopes that this research prompts further work
in the program of Categorical Quantum Mechanics,
in particular further collaboration between
quantum computing and traditionally algebraic fields.
The link to algebraic geometry given here
deserves further exploration,
both for quantum computing and for the inference step in conjecture synthesis.
Even outside the geometric inference framework of this thesis our novel generalisation step
should increase efficacy in future conjecture synthesis projects.
The calculus ZQ would be ideal for formal verification of optimisers such as TriQ. \thispagestyle{empty} 

\chapter*{Acknowledgements}
\thispagestyle{plain}
This thesis was supervised by Bob Coecke,
and the author would like to thank Aleks Kissinger and Miriam Backens
for their academic support.
Tessa Darbyshire, Maaike Zwart and Maryam Ahmed
were invaluable.
Thank you.
The research was funded by the EPSRC. \thispagestyle{empty}

\backmatter \pagestyle{plain}

\cleardoublepage 
\addcontentsline{toc}{chapter}{Bibliography}
\small 
\printbibliography
\clearpage \thispagestyle{empty}
\cleardoublepage
\appendix
\renewcommand\thesection{\Alph{section}}
\chapter{Appendices}

\section{Extending the calculus ZH}
\label{chapZHR}
\thispagestyle{plain}
\noindent\emph{In this appendix:}
\begin{itemize}
	\item[\chapterbullet] We extend the calculus $\ZH_\bbC$ to the calculus $\ZH_R$,
	      where $R$ is any commutative ring with a half.
	\item[\chapterbullet] The proof of this result is very similar to the original proof
	      for $\ZH_\bbC$ in Ref.~\cite{ZH}, and we imitate, quote, and make use of that paper where possible.
\end{itemize}

\newcommand\greyphase[1]{\smallnode{smallGrey}{#1}}
\newcommand\whitemult{\smallbinary[smallZ]{}}
\newcommand\whiteunit{\smallstate{smallZ}{}}
\newcommand\whitecounit{\smalleffect{smallZ}{}}
\newcommand\whitedot{\scalar{smallZ}{}}
\newcommand\intf[1]{\interpret{#1}}

For the definition of the calculus $\ZHR$ see Section~\ref{secZHR}.
The majority of the text below is from Ref.~\cite{ZH},
it has simply been altered to reflect the more general case of $\ZHR$ over $\ZH_\bbC$.
It can be seen from the proofs in Ref.~\cite{ZH}
that the only property used of the ring $\bbC$ is the existence of the element $\half{}$.

We will show that $\ZHR$ is complete by demonstrating the existence of a unique normal form for $\ZHR$ diagrams.
It is first worth noting that, because we can turn inputs into outputs arbitrarily (since all the generators are spiders),
it suffices to consider diagrams which have only outputs.

For states $\psi,\phi$,
let $\psi * \phi$ be the \textit{Schur product} of $\psi$ and $\phi$
obtained by plugging the $i$-th output of $\psi$ and $\phi$ into \whitemult, for each $i$:
\begin{align}
	\vc{\InputIfFileExists{./figures/ZH/schur.tikz}{}{Missing file!}}
\end{align}
It follows from (ZS1) that $*$ is associative and commutative,
so we can write $k$-fold Schur products $\psi_1 * \psi_2 * \ldots * \psi_k$ without ambiguity.
For any finite set $J$ with $|J| = k$, let $\prod_{j\in J} \psi_j$ be the $k$-fold Schur product.

Let $\mathbb B^n$ be the set of all $n$-bitstrings.
For any $\vec{b} := b_1\ldots b_n \in \mathbb B^n$,
define the \textit{indexing map} $\iota_{\vec{b}}$ as follows:
\begin{equation}\label{eq:iota-dfn}
	\iota_{\vec{b}} \; = \;
	\vc{\InputIfFileExists{./figures/ZH/indexing_box.tikz}{}{Missing file!}} \; = \; \left(\greyphase{\neg}\right)^{1 - b_1} \ldots \left(\greyphase{\neg}\right)^{1 - b_n}.
\end{equation}
Then normal forms are given by the following $2^n$-fold Schur products:
\begin{equation}\label{eq:nf-formula}
	\prod_{\vec{b} \in \mathbb B^n} \big( \iota_{\vec{b}} \circ H_n(a_{\vec{b}}) \big)
\end{equation}
where $H_n(a_{\vec{b}})$ is the arity-$n$ H-box (considered as a state) labelled by an arbitrary elements $a_{\vec{b}}$ of $R$.

A normal form diagram can be seen as a collection of $n$ spiders,
fanning out to $2^n$ H-boxes,
each with a distinct configuration of NOT's corresponding to the $2^n$ bitstrings in $\mathbb B^n$.
Diagrammatically, normal forms are:

\begin{align}
	\vc{\InputIfFileExists{./figures/ZH/nf_bbox.tikz}{}{Missing file!}} \quad := \quad
	\vc{\InputIfFileExists{./figures/ZH/nf_picture.tikz}{}{Missing file!}}
\end{align}

\begin{theorem}\label{thm:nf-unique} \cite[Theorem~3.1]{ZH}
	Normal forms are unique. In particular:
	\begin{equation}\label{eq:nf-concrete}
		\intf{ \, \prod_{\vec{b} \in \mathbb B^n} \big( \iota_{\vec{b}} \circ H_n(a_{\vec{b}}) \big) } =
		\sum_{\vec{b} \in \mathbb B^n} a_{\vec{b}} \ket{\vec{b}}.
	\end{equation}
\end{theorem}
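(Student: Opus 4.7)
The plan is to prove the formula \eqref{eq:nf-concrete} by direct computation of the interpretation, after which uniqueness follows immediately: if two normal forms were distinct as expressions of the shape given, they would differ in some coefficient $a_{\vec{b}}$, and the resulting state vectors would then differ in that coordinate. So the entire burden of the proof rests on verifying the interpretation formula.

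First I would unpack the Schur product at the level of semantics. The Z-spider $\whitemult$ has interpretation sending $\ket{ii}\mapsto\ket{i}$ and all other basis pairs to zero, so plugging two states $\psi,\phi$ into $\whitemult$ yields a state whose coefficient at $\ket{i}$ is the product of the coefficients of $\psi$ and $\phi$ at $\ket{i}$. Applied coordinatewise, the $n$-wire Schur product $\psi * \phi$ has interpretation equal to the Hadamard (elementwise) product of $\intf{\psi}$ and $\intf{\phi}$ in the computational basis. Associativity and commutativity of $*$ (from rule ZS1) matches the associativity and commutativity of elementwise multiplication, so the $2^n$-fold Schur product in~\eqref{eq:nf-formula} corresponds to taking, at each basis vector $\ket{\vec{d}}$, the product over all $\vec{b}\in\mathbb{B}^n$ of the corresponding coefficients.

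Next I would compute the coefficients of the individual factors $\iota_{\vec{b}}\circ H_n(a_{\vec{b}})$. From the interpretation of an H-box, $H_n(a)$ viewed as a state of arity $n$ is $\sum_{\vec{c}\in\mathbb{B}^n}a^{c_1\cdots c_n}\ket{\vec{c}}$, which is $1$ everywhere except at $\ket{\vec{1}}$, where it is $a$. The operator $\iota_{\vec{b}}$ of~\eqref{eq:iota-dfn} is a tensor product of NOT gates, negating coordinate $i$ precisely when $b_i=0$; in particular it is a permutation of the computational basis sending $\ket{\vec{1}}\mapsto\ket{\vec{b}}$. Therefore $\intf{\iota_{\vec{b}}\circ H_n(a_{\vec{b}})}$ has coefficient $a_{\vec{b}}$ at $\ket{\vec{b}}$ and coefficient $1$ at every other basis vector.

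Combining these two steps, the coefficient of $\ket{\vec{d}}$ in the full normal form equals
\begin{align*}
\prod_{\vec{b}\in\mathbb{B}^n}\bigl[\text{coeff of }\ket{\vec{d}}\text{ in }\iota_{\vec{b}}\circ H_n(a_{\vec{b}})\bigr]
=a_{\vec{d}}\cdot\!\!\!\prod_{\vec{b}\neq\vec{d}}\!\!1=a_{\vec{d}},
\end{align*}
since exactly one factor (the one indexed by $\vec{b}=\vec{d}$) contributes $a_{\vec{d}}$ and the remaining $2^n-1$ factors all contribute $1$. Summing over $\vec{d}$ gives $\sum_{\vec{b}\in\mathbb{B}^n}a_{\vec{b}}\ket{\vec{b}}$, establishing~\eqref{eq:nf-concrete}. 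Since the $2^n$ coefficients $\{a_{\vec{b}}\}$ can be recovered as the coordinates of the resulting vector, two normal forms with distinct coefficient data yield distinct interpretations, proving uniqueness. I do not expect any real obstacle here: the only subtlety is bookkeeping the permutation action of $\iota_{\vec{b}}$ correctly so that the ``$a$ at position $\vec{1}$'' of the bare H-box is moved to position $\vec{b}$ and not somewhere else, but this is a direct check from the definition~\eqref{eq:iota-dfn}.
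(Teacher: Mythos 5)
Your proposal is correct and follows essentially the same route as the paper's own proof: identify $\iota_{\vec{b}}$ as a basis permutation sending $\ket{\vec{1}}\mapsto\ket{\vec{b}}$, so that each factor $\iota_{\vec{b}}\circ H_n(a_{\vec{b}})$ is the vector with $a_{\vec{b}}$ in the $\vec{b}$-th slot and $1$ elsewhere, and then take the elementwise (Schur) product. Your write-up merely spells out the coordinatewise bookkeeping and the uniqueness remark in more detail than the paper does.
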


\begin{proof}
	The map $\iota_{\vec b}$ is a permutation that acts on computational basis elements as
	$\ket{\vec c} \mapsto \ket{\vec c \oplus \vec b \oplus \vec 1}$.
	In particular, it sends the basis element $\ket{\vec 1}$ to $\ket{\vec b}$.
	Hence $\iota_{\vec b} \circ H_n(a_{\vec b})$ is a vector with
	$a_{\vec b}$ in the $\vec b$-th component and $1$ everywhere else.
	The Schur product of all such vectors gives the RHS of \eqref{eq:nf-concrete}.
\end{proof}

Since equation~\eqref{eq:nf-concrete} gives us a means of constructing any vector in
$R^{2^n}$,
Theorem~\ref{thm:nf-unique} can also be seen as a proof of universality of $\ZHR$.

\begin{lemma}[{\cite[Lemma~3.2]{ZH}}]\label{lem:X-copy}
	The NOT operator copies through white spiders:
	\begin{align}
		\vc{\InputIfFileExists{./figures/ZH/X_copy.tikz}{}{Missing file!}}
	\end{align}
\end{lemma}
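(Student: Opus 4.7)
The plan is to unfold the definition of the NOT derived generator $\node{smallgrey}{\neg}$ in terms of its primitive constituents (Z-spiders and H-boxes) and then use the bialgebra rule (BA1), which governs exactly the interaction between a Z-spider and a $(-1)$-labelled H-box, in order to duplicate the NOT onto every leg of the white spider.

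First I would expand the NOT on the chosen input wire of the white spider using its definition in Figure~\ref{figZHRGenerators}'s derived generators, so that on that wire we see a $(-1)$-labelled H-box sitting against the Z-spider. Next I would apply (BA1): this rule says that an H-box attached to a Z-spider is equivalent to a complete bipartite configuration of smaller H-boxes, one for each of the spider's remaining legs. Because the spider law (ZS1) lets the original Z-spider be split into one copy per output, each remaining leg inherits exactly one $(-1)$-H-box. At that point a second application of (ZS1) (fusing each pair of adjacent Z-spiders back together) and the H-box identity (HS2) to remove any arity-1 H-boxes of value $1$ will leave us with a single white spider whose every \emph{other} leg carries a $(-1)$-H-box. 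Finally I would re-fold these $(-1)$-H-boxes back into $\node{smallgrey}{\neg}$ using the definition of NOT in reverse, yielding the right-hand side of Lemma~\ref{lem:X-copy}.

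The main obstacle, as I see it, is bookkeeping rather than conceptual: one has to make sure that the bialgebra rule is being applied at the correct arity and orientation (input vs.\ output of the spider), and that the H-boxes produced by (BA1) have the right label so that they actually reassemble into NOT gates rather than into Hadamards or some other derived gate. The characteristic-not-2 hypothesis on $R$ enters nowhere explicitly here, since this lemma is about the $\{0,1\}$-combinatorics of the H-box and Z-spider rather than the averaging rule (A); so the proof transfers verbatim from $\ZH_\bbC$ of Ref.~\cite{ZH} to the general $\ZHR$ setting, as required.
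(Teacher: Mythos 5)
You are right that, in this thesis, nothing new needs to be proved: the lemma is imported verbatim as Lemma~3.2 of Ref.~\cite{ZH}, and the only remark the paper adds (in the preamble to this appendix) is that the original proofs use no property of $\bbC$ beyond the existence of $\frac{1}{2}$, which is part of the standing assumption on $R$ in $\ZHR$. So your closing transfer argument matches what the paper actually does.

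However, the diagrammatic derivation you sketch has a genuine flaw at its first step. You unfold $\neg$ into ``a $(-1)$-labelled H-box sitting against the Z-spider''. But the arity-$(1,1)$ H-box with label $-1$ \emph{is} the (unnormalised) Hadamard -- the paper says explicitly that the ZX Hadamard node is rendered in ZH as a white box with phase $-1$ -- whereas $\neg$ is a different derived generator: its definition goes through the grey-spider construction and involves, besides Hadamard-type H-boxes, a pendant unary H-box (a $\ket{-}$-like state turning the diagonal $\mathrm{diag}(1,-1)$ into a bit flip) and a normalising factor of $\frac{1}{2}$. If your outline worked as stated -- push a single $(-1)$-H-box through the spider with (BA1), deposit one copy on each leg, and ``refold'' each copy into a NOT -- then the identical steps would prove that the Hadamard copies through Z-spiders, which is false (already for the $1\to 2$ spider the two sides have different matrices). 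The real content of the proof is precisely the part your sketch elides: tracking how the extra structure in the definition of $\neg$ (the pendant state and the $\frac{1}{2}$ scalars) distributes over the $n$ outputs after (BA1) is applied, with exact scalar bookkeeping, since ZH equations hold on the nose and not up to scalar. Relatedly, the claim that the ``half'' hypothesis ``enters nowhere'' is not quite accurate -- it sits inside the definitions of $\neg$ and the grey spider -- although it is harmless here because $\ZHR$ assumes $\frac{1}{2}\in R$ anyway, which is exactly the paper's point.
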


\begin{lemma}[{\cite[Lemma~3.3]{ZH}}]\label{lem:iota-copy}
	The $\iota_{\vec{b}}$ operator copies through white spiders, i.e.\ for any $\vec{b}\in\mathbb B^n$:
	\begin{align}
		\vc{\InputIfFileExists{./figures/ZH/iota_copy.tikz}{}{Missing file!}}
	\end{align}
\end{lemma}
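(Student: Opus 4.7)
The plan is to reduce the statement to Lemma~\ref{lem:X-copy} by unfolding the definition of $\iota_{\vec{b}}$ as a tensor product of elementary pieces and then handling each factor independently.

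First, I would recall from equation~\eqref{eq:iota-dfn} that
\[ \iota_{\vec{b}} \;=\; \left(\greyphase{\neg}\right)^{1-b_1} \otimes \cdots \otimes \left(\greyphase{\neg}\right)^{1-b_n}, \]
so $\iota_{\vec{b}}$ is literally a tensor product whose $i$-th factor is either a NOT gate (when $b_i = 0$) or a bare identity wire (when $b_i = 1$). Since the white spider structure on the $n$ wires of the diagram in the statement is itself a tensor product of $n$ independent white spiders (one per wire), the claimed equation decomposes wirewise: it suffices to prove, for each $i \in \{1,\ldots,n\}$ separately, that the $i$-th factor of $\iota_{\vec{b}}$ copies through the $i$-th white spider.

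Then I would dispatch the two cases. If $b_i = 1$, the factor is the identity and there is nothing to do: the equation collapses to an isomorphism of diagrams. If $b_i = 0$, the factor is $\greyphase{\neg}$ applied to a single input of a white spider, and Lemma~\ref{lem:X-copy} states exactly that such a NOT copies through the spider, producing one $\greyphase{\neg}$ on each output leg. Reassembling the $n$ wires via the tensor product then yields precisely one copy of $\iota_{\vec{b}}$ on each of the output bundles of the white spiders, as required.

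The main obstacle, such as it is, is purely bookkeeping: keeping track of which wire is which as one passes between the bundled notation (one big $\iota_{\vec{b}}$ acting on $n$ wires, connected to $n$ spiders that fan out) and the unbundled notation (a separate NOT-or-identity on each wire, each interacting with its own white spider). Once the wires are indexed and the tensor product structure is made explicit, Lemma~\ref{lem:X-copy} does all of the real work, and only the spider meta-rules (only topology matters) are needed to recombine the resulting pieces on each side of the equation.
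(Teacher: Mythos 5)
Your proof is correct and takes essentially the same route as the source: the thesis states this lemma without reproducing a proof (citing \cite[Lemma~3.3]{ZH}), and the cited proof is exactly the wire-wise argument you give — unfold $\iota_{\vec{b}}$ into its per-wire NOT-or-identity factors, apply Lemma~\ref{lem:X-copy} to each NOT factor, and note the identity factors contribute nothing. No gaps; the bookkeeping you describe is all that remains.
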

\begin{lemma}[{\cite[Lemma~3.4]{ZH}}]\label{lem:convolution-iota}
	$\ZHR$ enables the computation of the Schur product of two maps of the form
	$\iota_{\vec{b}}\circ H_n(x)$ and $\iota_{\vec{b}}\circ H_n(y)$ for any $\vec{b}\in\mathbb B^n$ and $x,y\in R$:
	\begin{align}
		\vc{\InputIfFileExists{./figures/ZH/convolution_iota.tikz}{}{Missing file!}}
	\end{align}
\end{lemma}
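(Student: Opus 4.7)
The plan is to prove that $(\iota_{\vec b}\circ H_n(x)) * (\iota_{\vec b}\circ H_n(y)) = \iota_{\vec b}\circ H_n(xy)$, i.e.\ that the Schur product multiplies the active coordinate $\vec b$ while leaving every other coordinate at $1$. The strategy has two ingredients: first push the two parallel $\iota_{\vec b}$ decorations through the layer of white spiders that implement the Schur product, and then fuse the two remaining H-boxes via rule (M).

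First, I would unfold the definition of the Schur product so that on each of the $n$ wires there sits a $\whitemult$ whose two inputs carry (either both a NOT, or both nothing, according to the corresponding bit of $\vec b$) and connect downwards to the two H-boxes $H_n(x)$ and $H_n(y)$. At this point the diagram is a parallel composition of $n$ copies of the local pattern ``two inputs into a single white spider,'' sandwiched between the two H-boxes above and the two copies of $\iota_{\vec b}$ on the sides.

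Second, I would apply Lemma~\ref{lem:iota-copy}, supported wire-by-wire by Lemma~\ref{lem:X-copy}, to push both parallel copies of $\iota_{\vec b}$ through the multiplication layer. Because NOT copies symmetrically through $\whitemult$, matched NOTs on the two inputs of each white spider collapse to a single NOT on its output, and matched plain wires stay plain. Assembled over all $n$ wires, this factors the diagram as
\begin{align*}
\iota_{\vec b}\circ\big(H_n(x) * H_n(y)\big),
\end{align*}
moving the entire indexing map past the Schur product. Finally, I would invoke rule (M), which is precisely the statement that two arity-$n$ H-boxes whose corresponding outputs are multiplied together pointwise by white spiders collapse into the single H-box $H_n(xy)$; composing with the $\iota_{\vec b}$ we pulled out yields the claim.

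The main obstacle is purely bookkeeping: the copy step has to be performed coordinate-wise, with wires where $\vec b$ reads $1$ (no NOT on either input) and wires where $\vec b$ reads $0$ (NOTs on both inputs) treated in parallel, relying on the fact that either configuration commutes past $\whitemult$ in matched pairs. The unfolding of the Schur product and the final (M) step are each a single rewrite, so once the parallel copy step is made precise nothing substantive remains.
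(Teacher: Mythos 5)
Your first step is the intended one: unfolding the Schur product into a layer of white multiplication spiders and commuting the two parallel copies of $\iota_{\vec b}$ past that layer is exactly what Lemma~\ref{lem:iota-copy} (supported wire-by-wire by Lemma~\ref{lem:X-copy}) is there for, and it correctly reduces the problem to showing $H_n(x) * H_n(y) = H_n(x y)$ inside the $\iota_{\vec b}$ conjugation. (Note that the thesis itself gives no proof here; it defers to Ref.~\cite{ZH}, and Lemma~\ref{lem:iota-copy} is stated immediately beforehand precisely to enable this commutation step.)

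The gap is in your final step. Rule (M) of Figure~\ref{figZHRRules} is the \emph{unary} multiply rule: it multiplies the labels of two single-wire H-boxes joined by one white spider, and this is how it is used elsewhere in the thesis (e.g.\ in the proof of Lemma~\ref{lemPRGCZWZH}, where two H-box states fed into \whitemult\ become a single state labelled $a\times b$). It is not ``precisely the statement'' that two arity-$n$ H-boxes Schur-multiplied on all $n$ legs collapse into $H_n(xy)$; that arity-$n$ statement is sound, but it is not an axiom --- once the $\iota_{\vec b}$ decoration is stripped off it is essentially the lemma you are trying to prove, so invoking it as a single rewrite begs the question, and for $n \neq 1$ the rule as stated does not even match the diagram. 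What remains to be supplied is a derivation of $H_n(x) * H_n(y) = H_n(xy)$ from the rules of $\ZHR$, for instance by using (HS1) (together with (ZS1) as needed) to extract the labels $x$ and $y$ onto single-wire H-boxes hanging off a common white spider, applying (M) there to form the label $xy$, and then fusing the result back into a single arity-$n$ H-box. That extraction-and-reassembly is the substantive content of the proof, and it is exactly what your proposal omits.
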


We will now show that normal form diagrams, when combined in various ways,
can also be put into normal form.
Let
\begin{align}
	\vc{\InputIfFileExists{./figures/ZH/nf.tikz}{}{Missing file!}}
\end{align} denote an arbitrary normal-form diagram.
It is straightforward to see that permuting the outputs of a normal-form diagram merely interchanges the bits
in the coefficients $a_{\vec b}$.
Hence, normal forms are preserved under permutations of outputs.
Furthermore:

\begin{proposition}[{\cite[Proposition~3.5]{ZH}}]\label{prop:extension} 
	A diagram consisting of a normal form diagram juxtaposed
	with \whiteunit\ can be brought into normal form using the rules of $\ZHR$:
	\begin{align}
		\vc{\InputIfFileExists{./figures/ZH/extension.tikz}{}{Missing file!}}
	\end{align}
\end{proposition}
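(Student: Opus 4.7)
The plan is to transform the diagram on the left---a normal form on $n$ outputs juxtaposed with a free $\whiteunit$ on a new wire---into the normal form on $n+1$ outputs whose coefficients are $a'_{\vec{b}c} = a_{\vec{b}}$ for all $\vec{b} \in \mathbb{B}^n$ and $c \in \{0,1\}$. Theorem~\ref{thm:nf-unique} combined with $\intf{\whiteunit} = \ket{0}+\ket{1}$ pins this down as the correct target, since the interpretation factors as $\bigl(\sum_{\vec{b}} a_{\vec{b}}\ket{\vec{b}}\bigr)\otimes(\ket{0}+\ket{1}) = \sum_{\vec{b},c} a_{\vec{b}}\ket{\vec{b}c}$. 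Hence the proposition reduces to exhibiting a syntactic derivation in $\ZHR$ realising this equality.

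My strategy would be to incorporate the new $\whiteunit$ into the bipartite normal-form template by duplicating every existing H-box along the new wire. First, I would use the unit rule (U) to rewrite $\whiteunit$ as an H-box with label $1$ applied to a wire, exposing an H-box head on the new output. Second, I would extend each of the $2^n$ existing H-boxes $H_n(a_{\vec{b}})$ to a pair of H-boxes $H_{n+1}(a_{\vec{b}})$ by attaching each to the new wire, one directly (realising the $c=0$ branch, i.e.\ $\iota_{\vec{b}0}$) and one through a NOT gate (realising the $c=1$ branch, i.e.\ $\iota_{\vec{b}1}$). The doubling across $c \in \{0,1\}$ is obtained by splitting the new wire via the spider law (ZS1) and invoking Lemma~\ref{lem:iota-copy} to commute NOT gates through the white spiders, while the intro rule (I) supplies any auxiliary H-boxes needed to pad arities.

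The final step is to check that all connectivity and NOT-patterns line up, which is where Lemma~\ref{lem:convolution-iota} earns its keep: pairs of H-boxes with identical $\iota$-indices fuse correctly (with label $a_{\vec{b}} \cdot 1 = a_{\vec{b}}$), leaving the $2^{n+1}$ H-boxes each carrying the expected coefficient. I expect the main obstacle not to be any single algebraic manoeuvre but the combinatorial bookkeeping: coordinating the doubling of $2^n$ H-boxes, extending each of the existing $n$ white spiders to reach every new H-box, and introducing the new $(n{+}1)$-th white spider with the correct NOT-pattern against every branch. A clean way to organise this is by induction on $n$, with base case $n=0$ essentially the unit rule (U) applied to $\whiteunit$ viewed as the arity-$0$ normal form with coefficient $a_\emptyset = 1$.
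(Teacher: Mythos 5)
You identify the correct target (coefficients $a'_{\vec{b}c} = a_{\vec{b}}$, pinned down by Theorem~\ref{thm:nf-unique}) and you rightly insist that a syntactic derivation is required, but the step that carries the entire weight of the proposition is the one you delegate to ``bookkeeping'': turning each $\iota_{\vec{b}}\circ H_n(a_{\vec{b}})$ into the \emph{two} boxes $\iota_{\vec{b}1}\circ H_{n+1}(a_{\vec{b}})$ and $\iota_{\vec{b}0}\circ H_{n+1}(a_{\vec{b}})$ hanging off the new wire. None of the devices you cite can do this. (ZS1) only fuses and splits white spiders, so it can multiply legs of the new wire but never duplicates an H-box; Lemmas~\ref{lem:X-copy} and \ref{lem:iota-copy} copy NOTs through white spiders, not H-boxes with arbitrary labels (arbitrarily-labelled H-boxes are exactly the things that do \emph{not} copy through spiders); and ``the intro rule (I) supplies any auxiliary H-boxes needed to pad arities'' is an assertion rather than a derivation. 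What is actually needed is a derivable local identity of the shape
\begin{align}
	\iota_{\vec{b}}\circ H_n(a) \;\otimes\; \whiteunit \;=\; \bigl(\iota_{\vec{b}0}\circ H_{n+1}(a)\bigr) * \bigl(\iota_{\vec{b}1}\circ H_{n+1}(a)\bigr),
\end{align}
applied once per H-box and then tidied up with (ZS1); exhibiting a $\ZHR$ derivation of this (or an equivalent) is the substance of the proposition, and it is precisely what the proof in Ref.~\cite{ZH} (to which this thesis defers, noting only $\half{} \in R$ is used) supplies. Your invocation of Lemma~\ref{lem:convolution-iota} has the same problem in disguise: fusing a box labelled $a_{\vec{b}}$ with one labelled $1$ at each index $\vec{b}c$ presupposes that both the duplication of the old boxes and the expansion of \whiteunit\ into an all-ones arity-$(n{+}1)$ layer have already been derived, which is exactly the missing content.

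Two smaller points. First, your NOT bookkeeping is reversed: since $\iota_{\vec{b}} = \prod_i (\neg)^{1-b_i}$, the box attached \emph{directly} to the new wire is the $c=1$ branch and the one attached through a NOT is the $c=0$ branch; in a statement about hitting an exact normal form this matters. Second, the proposed induction on $n$ does not obviously help: $\mathrm{NF}_n \otimes \whiteunit$ does not contain $\mathrm{NF}_{n-1} \otimes \whiteunit$ as a subdiagram, so the inductive hypothesis has nothing to attach to, and your base case conflates the arity-$0$ normal form (an arbitrary scalar coefficient $a_{\emptyset}$) with the case $a_{\emptyset}=1$ handled by (U). Once the displayed local identity is available, the argument is uniform over the $2^n$ H-boxes and no induction on $n$ is needed.
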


\begin{proposition}[{\cite[Proposition~3.6]{ZH}}]\label{prop:convolution}
	The Schur product of two normal form diagrams can be brought into normal form using the rules of $\ZHR$.
	\begin{align}
		\vc{\InputIfFileExists{./figures/ZH/convolution_nf.tikz}{}{Missing file!}}
	\end{align}
\end{proposition}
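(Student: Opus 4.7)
The plan is to show that the Schur product of two normal forms is itself (provably) a normal form whose coefficients are the pointwise products of the original coefficients, mirroring the semantic fact that Schur product of states corresponds to componentwise multiplication. Concretely, if the two normal forms have coefficients $\{a_{\vec{b}}\}$ and $\{a'_{\vec{b}}\}$ indexed by $\vec{b} \in \mathbb{B}^n$, then I expect to derive a normal form with coefficients $\{a_{\vec{b}} \cdot a'_{\vec{b}}\}$. Soundness of this target equation is immediate from Theorem~\ref{thm:nf-unique} together with the fact that $\interpret{\whitemult}$ implements componentwise multiplication on the computational basis.

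My first step is to look at the Schur product at each of the $n$ outputs: the $i$-th output wire of each of the two normal forms is produced by a white spider fanning out to $2^n$ copies of $\iota_{\vec{b}} \circ H_n(\cdot)$, and the Schur product places a $\whitemult$ between these two output wires. I would apply rule (ZS1) to fuse, at each output, the two white spiders with the connecting $\whitemult$ into a single white spider of arity $2 \cdot 2^n$. This produces a single output-spider at each $i$ now connected to all $2^{n+1}$ of the relevant $\iota$-decorated H-box branches.

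Next I would regroup: for each $\vec{b} \in \mathbb{B}^n$, the resulting diagram contains exactly two $\iota_{\vec{b}}$-decorated branches, namely $\iota_{\vec{b}} \circ H_n(a_{\vec{b}})$ coming from the first normal form and $\iota_{\vec{b}} \circ H_n(a'_{\vec{b}})$ from the second. Using only topology (and, where needed, Lemma~\ref{lem:iota-copy} to move $\iota_{\vec{b}}$ past the output spiders), I would collect these two branches so that they share their white spiders, arriving at the exact left-hand-side shape of Lemma~\ref{lem:convolution-iota} for each $\vec{b}$ independently. Then a single application of Lemma~\ref{lem:convolution-iota} per $\vec{b}$ collapses the two $\iota_{\vec{b}}$-branches into one $\iota_{\vec{b}} \circ H_n(a_{\vec{b}} \cdot a'_{\vec{b}})$ branch, leaving a diagram that is (up to trivial re-application of (ZS1) at each output) in normal form with coefficient family $\{a_{\vec{b}} \cdot a'_{\vec{b}}\}$.

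The main obstacle I anticipate is not any deep algebraic content but the combinatorial bookkeeping: one has to argue cleanly that the $2 \cdot 2^n$ branches of the fused output spiders can be paired up by bitstring and that each pair can be locally put in the shape required by Lemma~\ref{lem:convolution-iota} \emph{without} interfering with the rearrangements needed at other bitstrings. Because all relevant rewrites here are spider fusion (ZS1) and the already-proven $\iota$-copy law, this is a topological argument that proceeds one $\vec{b}$ at a time; the only care required is to keep track of which of the $n$ output spiders each $\iota_{\vec{b}}$-branch hits, and to verify that the pairing preserves the ``one branch per bitstring at each output'' structure demanded by the normal form.
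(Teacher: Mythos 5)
Your proposal is correct and is essentially the paper's own argument: the Schur product is regrouped by bitstring (which is legitimate since (ZS1) makes $*$ associative and commutative), and then Lemma~\ref{lem:convolution-iota} is applied once per $\vec{b}$ to merge each pair $\iota_{\vec{b}} \circ H_n(a_{\vec{b}})$, $\iota_{\vec{b}} \circ H_n(a'_{\vec{b}})$ into $\iota_{\vec{b}} \circ H_n(a_{\vec{b}} a'_{\vec{b}})$, yielding the normal form with pointwise-product coefficients. The bookkeeping you worry about dissolves once you phrase the regrouping via associativity/commutativity of $*$ rather than by hand-fusing the $2\cdot 2^n$-ary spiders, and the appeal to Lemma~\ref{lem:iota-copy} is not even needed at this step.
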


\begin{corollary}[{\cite[Corollary~3.7]{ZH}}]\label{cor:tensor-product} 
	The tensor product of two normal form diagrams can be brought into normal form using the rules of $\ZHR$.
\end{corollary}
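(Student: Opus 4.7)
The plan is to reduce Corollary~\ref{cor:tensor-product} to the already-established Propositions~\ref{prop:extension} (extension by \whiteunit) and~\ref{prop:convolution} (Schur product of normal forms). The key observation is that the tensor product of two states can be realised as the Schur product of two suitably extended states, because extending a normal form by an additional output via \whiteunit\ yields a state whose coefficients are constant in the new bit, and so the Schur product of such an extension with another appropriately-extended normal form computes the tensor product of the originals component-wise.

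Concretely, given normal forms $N_1$ with $n$ outputs (coefficients $\{a_{\vec b}\}_{\vec b \in \mathbb B^n}$) and $N_2$ with $m$ outputs (coefficients $\{c_{\vec d}\}_{\vec d \in \mathbb B^m}$), I would first apply Proposition~\ref{prop:extension} repeatedly $m$ times to $N_1$, obtaining a normal form $\tilde N_1$ with $n+m$ outputs whose coefficient on $\ket{\vec b\,\vec d}$ is $a_{\vec b}$ for every $\vec d$. Similarly I would extend $N_2$ by $n$ applications of Proposition~\ref{prop:extension}, adding the new outputs on the left, to obtain $\tilde N_2$ with coefficient $c_{\vec d}$ on $\ket{\vec b\,\vec d}$ for every $\vec b$. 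After a permutation of outputs (which preserves normal form, as noted just before Proposition~\ref{prop:extension}), $\tilde N_1$ and $\tilde N_2$ are normal forms on the same $n+m$ outputs. I would then invoke Proposition~\ref{prop:convolution} to bring $\tilde N_1 * \tilde N_2$ into normal form, and by Theorem~\ref{thm:nf-unique} this normal form has coefficient $a_{\vec b}\,c_{\vec d}$ on $\ket{\vec b\,\vec d}$, which is exactly the coefficient of $N_1 \otimes N_2$.

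To close the argument I need a diagrammatic identity of the form
\begin{align}
N_1 \otimes N_2 \;=\; (N_1 \otimes \whiteunit^{\otimes m}) * (\whiteunit^{\otimes n} \otimes N_2),
\end{align}
which I would derive purely from the rules of $\ZHR$ by using (ZS1) to pull each white multiplication \whitemult\ of the Schur product through the unit \whiteunit\ on the opposite side, reducing every such ``mixed'' leg to a bare wire. This is where the main care is needed: one must verify that the rewrites used to justify the identity match the conventions under which Proposition~\ref{prop:convolution} was applied, so that Schur product and tensor product truly agree on each output wire after the extensions.

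The obstacle I expect is entirely bookkeeping rather than conceptual; the combinatorics of the indexing maps $\iota_{\vec b}$ under the extension construction must be tracked carefully so that the final $2^{n+m}$ H-boxes appear with the correct labels $a_{\vec b}\,c_{\vec d}$ in the right positions. Since Theorem~\ref{thm:nf-unique} guarantees uniqueness of normal forms, it suffices to check that the interpretations agree, which reduces the bookkeeping to the identity $(\sum_{\vec b} a_{\vec b}\ket{\vec b}) \otimes (\sum_{\vec d} c_{\vec d}\ket{\vec d}) = \sum_{\vec b,\vec d} a_{\vec b} c_{\vec d}\ket{\vec b\,\vec d}$; but because we are working syntactically, the verification must be done via the explicit rewrite sequence outlined above, all of whose steps are legal in $\ZHR$ precisely because $R$ is only required to admit $\tfrac{1}{2}$ and the rules already suffice for Propositions~\ref{prop:extension} and~\ref{prop:convolution}.
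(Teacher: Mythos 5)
Your proposal is correct and follows essentially the same route as the paper (and the original ZH paper it cites): write $N_1 \otimes N_2$ as the Schur product of the two normal forms each extended by \whiteunit{}s on the other's outputs, then apply Proposition~\ref{prop:extension} and Proposition~\ref{prop:convolution}, with uniqueness (Theorem~\ref{thm:nf-unique}) confirming the coefficients $a_{\vec b}c_{\vec d}$. The only nitpick is that collapsing the ``mixed'' legs to bare wires uses the unit law of the Z-spider monoid (fusion plus the fact that the arity-2 phase-free Z-spider is a plain wire), not (ZS1) alone, but this is standard in $\ZHR$ and does not affect the argument.
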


\begin{remark}\label{rem:scalar-juxtaposition}
	Note that a single scalar H-box is a normal form diagram.
	Corollary~\ref{cor:tensor-product} thus implies that a diagram consisting of a normal form diagram juxtaposed with a scalar H-box can be brought into normal form.
	In the following proofs, we will therefore ignore scalars for simplicity: they can be added back in and then incorporated to the normal form without problems.
\end{remark}

\begin{proposition}[{\cite[Proposition~3.9]{ZH}}]\label{prop:contraction}
	The diagram resulting from applying \whitecounit\ to an output of a normal form diagram can be brought into normal form:
	\begin{align}
		\vc{\InputIfFileExists{./figures/ZH/whitecounit_nf.tikz}{}{Missing file!}}
	\end{align}
\end{proposition}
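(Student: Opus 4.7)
The plan is to mirror the proof of the analogous Proposition~3.9 in Ref.~\cite{ZH}, observing that every rule invocation transfers verbatim provided $\half \in R$, which is exactly the standing hypothesis on $R$.

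First I would pin down the semantic target. The counit $\whitecounit$ interprets as $\bra{0} + \bra{1}$, so by Theorem~\ref{thm:nf-unique} applying it to output $i$ of the normal form turns the interpretation $\sum_{\vec b \in \mathbb B^n} a_{\vec b} \ket{\vec b}$ into $\sum_{\vec b' \in \mathbb B^{n-1}} c_{\vec b'} \ket{\vec b'}$ where $c_{\vec b'} := a_{\vec b' \cdot 0} + a_{\vec b' \cdot 1}$ and $\vec b' \cdot x$ denotes $\vec b'$ with the bit $x$ inserted at position $i$. This coincides with the interpretation of the claimed $(n-1)$-output normal form, so the remaining task is purely syntactic: derive the equality using the rules of $\ZHR$.

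For the derivation I would first push $\whitecounit$ through the white spider on output $i$ using (ZS1), distributing a copy of $\whitecounit$ onto each of the $2^n$ wires linking that spider to an H-box. Each wire either carries a NOT (inherited from $\iota_{\vec b}$ with $b_i = 0$) or does not, but $\whitecounit \circ \greyphase{\neg} = \whitecounit$ is a short consequence of (U), (HS1), and (ZS1); absorbing these NOTs symmetrises the diagram so that each H-box $H_n(a_{\vec b})$ has its $i$-th leg directly capped by $\whitecounit$, and the $\iota$-prefix remaining on each branch becomes independent of $b_i$. The main step is then to collapse each of the $2^{n-1}$ resulting pairs --- indexed by $\vec b' \in \mathbb B^{n-1}$ and consisting of $\iota_{\vec b'} \circ H_n(a_{\vec b' \cdot 0})$ and $\iota_{\vec b'} \circ H_n(a_{\vec b' \cdot 1})$ with their $i$-th legs counited --- down to the single factor $\iota_{\vec b'} \circ H_{n-1}(a_{\vec b' \cdot 0} + a_{\vec b' \cdot 1})$ using the averaging rule (A), the unit rule (U), the H-spider rule (HS1), and Lemma~\ref{lem:convolution-iota}: (A) is what introduces a diagrammatic sum of H-box labels, (U) and (HS1) absorb the spent $i$-th leg, and Lemma~\ref{lem:convolution-iota} packages the resulting Schur product with matched $\iota_{\vec b'}$. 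A final appeal to Proposition~\ref{prop:convolution} repackages the $2^{n-1}$ factors as a single arity-$(n-1)$ normal form with coefficients $c_{\vec b'}$, and any scalar H-boxes accruing along the way are absorbed via Remark~\ref{rem:scalar-juxtaposition}.

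The main obstacle is verifying that the pair-collapsing derivation from the $\bbC$-proof uses no ring operation beyond addition, multiplication, negation, and division by $2$. Inspection of the rules in Figure~\ref{figZHRRules} shows that the only non-integer constants appearing at all are $\half$, $-1$, and $2$, so the derivation transfers verbatim to any commutative ring $R$ containing $\half$. The one place where real care is required is the asymmetry between the two members of each pair --- one branch carries an absorbed NOT and the other does not --- but this is precisely what (A) is designed to handle, and the $\half$ appearing in (A) is exactly the element whose existence in $R$ we have assumed.
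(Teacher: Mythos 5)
Your overall strategy---transfer the proof of \cite[Proposition~3.9]{ZH} and check that it never uses more of the ring than the element $\half$---is exactly what the thesis does: the appendix supplies no independent derivation for this proposition, only the citation together with the blanket remark that the proofs of Ref.~\cite{ZH} use nothing about $\bbC$ beyond the existence of $\half$. Your statement of the semantic target and your closing paragraph about the constants $\half$, $-1$, $2$ are fine.

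The problem is that the derivation you sketch is not the one in Ref.~\cite{ZH}, and its first move is unsound. Rule (ZS1) lets the counit \emph{fuse} into the Z-spider attached to output $i$, leaving a single Z-spider with $2^n$ legs running to the H-boxes; it does not let you distribute a separate copy of $\whitecounit$ onto each of those legs. The two are semantically different: the fused spider acts as $\bra{0\dots 0}+\bra{1\dots 1}$ on those legs, whereas distributed counits act as $(\bra{0}+\bra{1})^{\otimes 2^n}$, destroying the correlation the normal form depends on. Already for $n=1$ the correct result of plugging the counit is $a_0+a_1$, while your ``each H-box gets its $i$-th leg directly capped'' picture computes $(1+a_0)(1+a_1)$. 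As a consequence, your pair-collapsing step treats the $2^{n-1}$ pairs as independent, which they are not until the shared spider has been dealt with---and decoupling the pairs from that common spider is precisely the hard part of the original proof, which keeps the spider, pairs up the two H-boxes differing only in bit $i$ (one carrying a NOT on that leg, the other not), and removes the shared wire pair by pair using the remaining rules, in particular the ortho rule (O), which your sketch never invokes, alongside (A); note also that (A) only yields $\frac{a+b}{2}$, so a factor of $2$ has to be reinstated via (M) or scalar bookkeeping. If you replace your sketch by the actual derivation of \cite[Proposition~3.9]{ZH}, your transfer argument then goes through exactly as the thesis claims.
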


Our strategy will now be to show that any diagram can be decomposed into H-boxes, combined via the operations of extension, convolution, and contraction. This will give us a completeness proof, thanks to the following proposition.

\begin{lemma}{[\cite[Lemma~3.10]{ZH}}]\label{lem:H-box-nf}
	Any H-box can be brought into normal form using the rules of $\ZHR$.
\end{lemma}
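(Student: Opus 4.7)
The plan is to prove that an arbitrary H-box $H_n(a)$, viewed as a state with $n$ outputs, is provably equal in $\ZHR$ to the $2^n$-fold Schur product
\[
\prod_{\vec b \in \mathbb B^n}\bigl(\iota_{\vec b} \circ H_n(a_{\vec b})\bigr), \qquad a_{\vec 1} = a,\ a_{\vec b} = 1 \text{ for } \vec b \neq \vec 1.
\]
Because $\iota_{\vec 1}$ is the empty composite of NOT gates, the factor indexed by $\vec 1$ is just $H_n(a)$ itself; the task therefore reduces to inserting, without altering the diagram, the $2^n - 1$ trivial factors $\iota_{\vec b}\circ H_n(1)$ at the remaining bit-strings, and then recollecting the whole thing as a single Schur product.

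First I would establish, by induction on $n$ using rule (U), rule (HS2), and the Z-spider law (ZS1), that $H_n(1) = \whiteunit^{\otimes n}$ holds as a provable equation in $\ZHR$. Combined with Lemma~\ref{lem:X-copy} (a NOT copies through any white spider, and in particular fixes $\whiteunit$), this yields $\iota_{\vec b}\circ H_n(1) = \whiteunit^{\otimes n}$ for every bit-string $\vec b$. The rule (ZS2), giving $\whitemult \circ (\whiteunit \otimes \mathrm{id}) = \mathrm{id}$, then implies that Schur-multiplying any $n$-output state by $\whiteunit^{\otimes n}$ is the identity operation, so in particular $H_n(a) * (\iota_{\vec b}\circ H_n(1)) = H_n(a)$.

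The second half of the argument is to read this chain of equalities backwards: starting from $H_n(a)$, insert one trivial factor $\iota_{\vec b}\circ H_n(1)$ at a time for each $\vec b \neq \vec 1$, then reassemble the resulting iterated binary Schur product as a single $2^n$-fold Schur product using the associativity and commutativity of $*$ (which follow directly from (ZS1)). This produces the desired normal-form diagram on the nose. Along the way, Lemma~\ref{lem:convolution-iota} will be useful for normalising the shape of each inserted factor, and Lemma~\ref{lem:iota-copy} lets the $\iota_{\vec b}$ operators commute past the shared spiders to end up in the positions required by the normal form.

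The main obstacle will be the syntactic derivation of $H_n(1) = \whiteunit^{\otimes n}$. Although this equation is semantically transparent (both sides interpret to the all-ones state $\sum_{\vec b}\ket{\vec b}$), its proof inside $\ZHR$ requires the inductive step to split a single $H_n(1)$ into $n$ disjoint copies of $\whiteunit$, which relies on the precise interplay of (HS2) with (U) and (ZS1). Once this identity is in hand, the rest of the lemma is mechanical book-keeping with the already-established Lemmas \ref{lem:X-copy}, \ref{lem:iota-copy}, and \ref{lem:convolution-iota}.
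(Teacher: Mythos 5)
Your overall strategy is sound and matches the shape of the normal form exactly: since $\iota_{\vec 1}$ is the empty composite of NOTs, $H_n(a)$ is itself the unique nontrivial factor of its own normal form, so the lemma reduces to introducing the $2^n-1$ padding factors $\iota_{\vec b}\circ H_n(1)$ and refusing the multiplication spiders by (ZS1). The bookkeeping you describe is fine: $\neg\circ\whiteunit=\whiteunit$ is the arity-$(0{+}1)$ instance of Lemma~\ref{lem:X-copy}, the unit law needed to absorb $\whiteunit^{\otimes n}$ follows from (ZS1) together with (ZS2), and associativity/commutativity of $*$ is already noted to follow from (ZS1); Lemmas~\ref{lem:iota-copy} and~\ref{lem:convolution-iota} are not actually needed for this route.

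The genuine gap is exactly where you locate "the main obstacle": the identity $H_n(1)=\whiteunit^{\otimes n}$ is asserted, not proved, and the rules you nominate for it are unlikely to suffice. (U) is only the arity-one case, (ZS1) rewrites Z-spiders only, and (HS2) relates H-boxes along a single wire; none of these can detach the legs of an $n$-ary H-box from one another, which is precisely what your inductive step must do for $n\geq 2$. Disconnecting a $1$-labelled H-box into units is the kind of move that in $\ZHR$ comes from the generator-mixing rules — (BA2), (M), (A), (I), (O) — and carrying out that derivation (or some other provable way of introducing the trivial factors) is the entire diagrammatic content of the lemma; everything else in your argument is spider bookkeeping. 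As written, the proof therefore does not close. Bear in mind also that this thesis does not prove the lemma itself but inherits it from Ref.~\cite[Lemma~3.10]{ZH} with $\bbC$ replaced by a commutative ring with a half, so a self-contained argument must supply the missing derivation in $\ZHR$ at that level of generality rather than appeal to its semantic obviousness.
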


\begin{corollary}[{\cite[Corollary~3.11]{ZH}}]\label{cor:cup-nf}
	The diagram of a single cup can be brought into normal form:
	\begin{align}
		\vc{\InputIfFileExists{./figures/ZH/cup_nf.tikz}{}{Missing file!}}
	\end{align}
\end{corollary}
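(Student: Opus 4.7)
The plan is to exhibit the cup as something whose interpretation we already know how to write as a Schur product of H-box-based pieces, then invoke the earlier machinery. Since the cup as a state has interpretation $\ket{00}+\ket{11}$, which is precisely the value of a phase-free Z-spider with two outputs, my first move would be to rewrite the cup as a $2$-output white spider using the spider fusion rule (ZS1) (or, equivalently, the snake/cup-cap axioms implicit in the compact-closed structure). This reduces the problem to showing that the $2$-output white spider can be brought into normal form.

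Next I would decompose this state as a Schur product of pieces each of which is already known to reduce to normal form. The natural candidates are the two indexed H-box states $\iota_{00}\circ H_2(1)$ and $\iota_{11}\circ H_2(1)$, together with the two pieces $\iota_{01}\circ H_2(0)$ and $\iota_{10}\circ H_2(0)$ which contribute the zero entries. By Theorem~\ref{thm:nf-unique}, the Schur product of these four states is exactly the vector $\ket{00}+\ket{11}$, so this is precisely the desired normal form with coefficients $a_{00}=a_{11}=1$ and $a_{01}=a_{10}=0$. The task therefore reduces to deriving the equation
\begin{align*}
\spidermn{smallZ}{}{}{2}
\;=\;
\prod_{\vec{b}\in\mathbb{B}^2}\bigl(\iota_{\vec{b}}\circ H_2(a_{\vec{b}})\bigr)
\end{align*}
syntactically in $\ZHR$ for these particular coefficients. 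For this I would first use Lemma~\ref{lem:H-box-nf} to put each individual H-box state $H_2(a_{\vec b})$ into normal form, then repeatedly apply Proposition~\ref{prop:convolution} (convolution) to combine them, using Lemma~\ref{lem:iota-copy} to push the NOT-patterns $\iota_{\vec b}$ through white spiders as needed so the convolution lemma applies.

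The one step that actually needs work, rather than just invoking the earlier results, is justifying that the assembled Schur product is syntactically equal to the $2$-output white spider. I would do this by computing both sides' normal-form coefficients: the right-hand side has coefficients $(1,0,0,1)$ by construction, and the $2$-output white spider is already in a form manifestly equivalent to the normal form with the same coefficients (use (I) to introduce units where necessary, (ZS1) to split the spider into its branches, and the averaging rule (A) or multiplication rule (M) to match up the H-box labels). Alternatively, one can bypass this direct matching by noting that both diagrams, once reduced to $\ZHR$ normal form, must have equal coefficients by Theorem~\ref{thm:nf-unique}, and every step in the reduction is derivable in $\ZHR$.

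The main obstacle I expect is the handling of the zero entries $a_{01}=a_{10}=0$: one needs an $H_2(0)$ whose Schur product with the other pieces really does annihilate the unwanted basis vectors. This is where rule (M) (which governs how labels multiply through a white spider connecting H-boxes) and the counit-like behaviour of \smallstate{smallZ}{} come in. If the obvious decomposition is recalcitrant, the fallback is the same strategy used in the proof of Proposition~\ref{prop:contraction}: obtain the cup from a simpler normal-form state by applying \smallstate{smallZ}{} or \smalleffect{smallZ}{} to one output of a normal form of higher arity and invoking the contraction proposition directly.
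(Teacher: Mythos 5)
Your identification of the target is correct: semantically the cup is the two-output normal form with coefficients $a_{00}=a_{11}=1$, $a_{01}=a_{10}=0$, as Theorem~\ref{thm:nf-unique} confirms. But the entire content of the corollary is the \emph{syntactic} derivation, inside $\ZHR$, of the equation ``cup $=$ that normal-form diagram'', and this is exactly the step your proposal leaves open. The rules you name for it ((ZS1), (I), (A), (M)) are only gestured at: the real difficulty is producing the two $0$-labelled, NOT-decorated H-boxes out of a bare cup, and nothing in your sketch shows where those zero labels come from (this is where the orthogonality content of the calculus, e.g.\ the (O) rule and derived facts about $\neg$, has to enter — in any case a concrete rewrite chain is needed, and none is given). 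Worse, your proposed bypass — ``both diagrams, once reduced to $\ZHR$ normal form, must have equal coefficients by Theorem~\ref{thm:nf-unique}, and every step in the reduction is derivable'' — is circular: uniqueness of normal forms is a semantic statement and says nothing about the \emph{existence} of a derivation bringing the cup into normal form; that existence is precisely what the corollary asserts. The fallback via Proposition~\ref{prop:contraction} is likewise unsubstantiated: you never exhibit a higher-arity diagram, already known to be reducible to normal form from the results available at this stage, whose contraction is the cup.

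Two further wrinkles. First, the machinery you invoke to ``assemble'' the right-hand side is misapplied: the four-fold Schur product $\prod_{\vec{b}\in\mathbb{B}^2}\iota_{\vec{b}}\circ H_2(a_{\vec{b}})$ is \emph{by definition} a normal-form diagram, so no appeal to Lemma~\ref{lem:H-box-nf} or Proposition~\ref{prop:convolution} is needed there; conversely, Proposition~\ref{prop:convolution} takes normal forms as inputs, and the NOT-decorated states $\iota_{\vec{b}}\circ H_2(a_{\vec{b}})$ are not normal forms by any result cited up to this point (one would first have to argue, e.g.\ via Lemmas~\ref{lem:X-copy} and \ref{lem:iota-copy}, that applying $\neg$ to outputs of a normal form again yields one). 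Second, reducing the cup to the two-output Z-spider buys nothing by itself, and you must be careful not to lean on Lemma~\ref{lem:Z-spider-nf}: that lemma comes later in the development and depends on this corollary (through Corollaries~\ref{cor:whitemult-nf} and \ref{cor:cap-nf}), so invoking it here would be circular. The paper defers the proof to the original ZH reference, but what is required there is a short explicit derivation of the cup's normal form directly from the axioms; your proposal stops exactly where that work begins.
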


\begin{corollary}[{\cite[Corollary~3.12]{ZH}}]\label{cor:whitemult-nf}
	The diagram resulting from applying \whitemult\ to a pair of outputs of a normal form diagram can be brought into normal form.
	\begin{align}
		\vc{\InputIfFileExists{./figures/ZH/whitemult_nf.tikz}{}{Missing file!}}\label{eq:whitemult-nf}
	\end{align}
\end{corollary}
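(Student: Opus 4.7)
The plan is to reduce the corollary to the preservation results already established in this section. I would express the application of $\whitemult$ to two outputs of a normal-form diagram as a composition of three operations, each known to preserve normal form: an extension of the cup by $\whiteunit^{\tensor n}$ (Proposition~\ref{prop:extension} applied to Corollary~\ref{cor:cup-nf}), a Schur product with the resulting $(n+2)$-output normal form (Proposition~\ref{prop:convolution}), and a contraction by $\whitecounit$ on one of the two merged outputs (Proposition~\ref{prop:contraction}).

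First I would construct a normal form for the state $\whiteunit^{\tensor n} \tensor \dcup$: starting from the normal form of $\dcup$ given by Corollary~\ref{cor:cup-nf}, I would apply Proposition~\ref{prop:extension} a total of $n$ times. This produces an $(n+2)$-output normal form whose coefficient at $\ket{\vec b, i, j}$ is $\delta_{ij}$. Next I would take the Schur product of this with the given $(n+2)$-output normal form $\psi$ via Proposition~\ref{prop:convolution}, producing a normal-form diagram with coefficients $c_{\vec b, i, j}\,\delta_{ij}$, where $c_{\vec b, i, j}$ are the coefficients of $\psi$. Finally, applying $\whitecounit$ to the last output and invoking Proposition~\ref{prop:contraction} yields a normal form whose coefficients are $c_{\vec b, i, i}$, which by Theorem~\ref{thm:nf-unique} represents precisely $\sum_{\vec b, i} c_{\vec b, i, i}\ket{\vec b, i}$.

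The main obstacle is to justify syntactically that this three-step construction produces the same diagram, modulo the derivable equations of $\ZHR$, as the direct application of $\whitemult$ to the two chosen outputs of $\psi$. I would prove this identity using only the Z-spider rule (ZS1) together with the snake equations from the underlying compact-closed structure. The key computation is that $\whitecounit$ composed after $\whitemult$ fuses via (ZS1) into $\dcap$, which then pairs the second merged output with one leg of $\dcup$; applying a snake equation yanks this pair and identifies the remaining leg of $\dcup$ with the output that was being contracted, so that the surviving output equals $\whitemult$ acting on the two chosen outputs of $\psi$. Once this diagrammatic identity is established, the three preservation results chain together to conclude that the result lies in normal form.
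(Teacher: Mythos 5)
Your decomposition---put the cup into normal form via Corollary~\ref{cor:cup-nf}, pad it with \whiteunit\ outputs by Proposition~\ref{prop:extension}, Schur-multiply with the given normal form via Proposition~\ref{prop:convolution}, and contract one of the merged outputs with \whitecounit\ via Proposition~\ref{prop:contraction}---is correct and is essentially the argument of the original proof in Ref.~\cite{ZH}, to which the thesis defers this corollary. One small point: besides (ZS1) and the snake equations you also need (ZS2) (equivalently, the derived unit and two-legged-spider laws), both to identify the fused \whitemult--\whitecounit\ pair with a cap before yanking it against the cup and to cancel the \whiteunit\ Schur-factors on the untouched outputs of $\psi$.
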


\begin{corollary}[{\cite[Corollary~3.13]{ZH}}]\label{cor:cap-nf}
	Applying a cap to a normal form diagram results in another normal form diagram:
	\begin{align}
		\vc{\InputIfFileExists{./figures/ZH/cap_nf.tikz}{}{Missing file!}}
	\end{align}
\end{corollary}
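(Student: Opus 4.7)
My plan is to decompose the cap into operations that have already been shown to preserve normal form. Specifically, in a spider-based calculus a cap on two outputs is the same, up to (ZS1), as a phase-free Z spider with two inputs and zero outputs, which in turn factors as a $\whitemult$ followed by a $\whitecounit$. So the first step is to establish this small rewrite:
\begin{align*}
\tikzfig{wire/cupt} \;=\; \tikzfig{ZH/cap_as_mult_counit}
\end{align*}
which is immediate from (ZS1) and the interpretation of the unlabelled Z spider as $\bra{00}+\bra{11}$.

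With that in hand, the proof proceeds in two steps. First, use the decomposition above to rewrite the cap across two outputs of the normal form as a $\whitemult$ applied to those two outputs, followed by a $\whitecounit$ on the resulting single wire. Second, invoke Corollary~\ref{cor:whitemult-nf} to bring the intermediate diagram (normal form with $\whitemult$ applied to two outputs) back into normal form, and then apply Proposition~\ref{prop:contraction} to absorb the remaining $\whitecounit$ into a normal form on $n-2$ outputs. Chaining these rewrites, possibly after an output permutation so that the capped wires are adjacent (which preserves normal form as noted in the paragraph before Proposition~\ref{prop:extension}), gives the result.

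The semantic sanity check is that the cap projects onto the subset of bitstrings with $b_i=b_j$ and then drops those two coordinates, which is exactly the combined effect of Corollary~\ref{cor:whitemult-nf} (collapsing the $i$-th and $j$-th coefficients onto diagonal terms) and Proposition~\ref{prop:contraction} (summing out a single output). I do not expect any genuine obstacle here: the hardest ingredient, namely that $\whitemult$ on outputs can be brought back to normal form, is the content of Corollary~\ref{cor:whitemult-nf}, whose proof relies on Corollary~\ref{cor:cup-nf} and Lemma~\ref{lem:H-box-nf}. The only minor bookkeeping issue will be keeping track of the auxiliary scalar that may appear when applying these earlier results; by Remark~\ref{rem:scalar-juxtaposition}, any such scalar H-box can be re-absorbed into the normal form without difficulty.
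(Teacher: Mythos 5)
Your proposal is correct and follows essentially the same route as the paper's proof: the cap is rewritten as $\whitemult$ followed by $\whitecounit$, and the result then follows by chaining Corollary~\ref{cor:whitemult-nf} with Proposition~\ref{prop:contraction} (with output permutations and stray scalars handled as you describe). The only small point to tidy is that the decomposition of the cap into a two-legged Z spider should be justified syntactically via (ZS1)/(ZS2) and the only-topology-matters meta-rule rather than by appeal to the interpretation, but this is immediate.
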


Thanks to Corollaries~\ref{cor:tensor-product} and \ref{cor:cap-nf},
we are able to turn any diagram of normal forms into a normal form.
It only remains to show that the generators of $\ZHR$ can themselves be made into normal forms.
We have already shown the result for H-boxes, so we only need the following.

\begin{lemma}[{\cite[Lemma~3.14]{ZH}}]\label{lem:Z-spider-nf}
	Any Z-spider can be brought into normal form using the rules of $\ZHR$.
\end{lemma}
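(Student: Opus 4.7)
The strategy is induction on the arity $n$ of the Z-spider. Since every generator is a spider, we may freely bend all inputs into outputs using compact closure without loss of generality, so it suffices to treat the case of a Z-spider $Z_n$ of type $0 \to n$, whose interpretation is $\ket{0\ldots 0} + \ket{1\ldots 1}$. In normal form language this is the diagram whose coefficients satisfy $a_{\vec 0} = a_{\vec 1} = 1$ and $a_{\vec b} = 0$ otherwise.

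For the base case $n = 0$, $Z_0$ is the scalar whose interpretation is $2 = 1 + 1 = \intf{H_0(1)}$, and I would produce this scalar normal form using the unit rule (U) together with rule (I); this is a short scalar manipulation that involves no elements of $R$ other than $1$, so no appeal to the assumption that $R$ contains $\tfrac{1}{2}$ is needed at this step. For $n = 1$, rule (U) directly gives $\whiteunit = H_1(1)$, which is a single H-box, and Lemma~\ref{lem:H-box-nf} places it into normal form.

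For the inductive step, I would use rule (ZS1) to rewrite $Z_{n+1}$ as $Z_n$ with one of its output wires passed through a Z-spider of type $1 \to 2$, and then peel the copy node off as a separate operation applied to the normal form of $Z_n$ (available by the inductive hypothesis). The remaining task is to show that attaching a $1 \to 2$ Z-spider to a single output of a normal form yields another normal form; this is the dual operation to Proposition~\ref{prop:contraction} (which handled attaching a counit), and I would derive it by commuting the copy past each $\iota_{\vec b}$ using Lemma~\ref{lem:iota-copy}, then pushing it through each scalar H-box using (ZS1) and rearranging with Corollary~\ref{cor:tensor-product} and Proposition~\ref{prop:convolution}. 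Concretely, copying the $i$-th output of $\sum_{\vec b} a_{\vec b} \ket{\vec b}$ produces the state $\sum_{\vec c} a'_{\vec c} \ket{\vec c}$ with $a'_{c_1\ldots c_i c_i \ldots c_n} = a_{c_1\ldots c_i \ldots c_n}$ and $a'_{\vec c} = 0$ otherwise, and the diagrammatic realisation of this coefficient redistribution is what the reassembly step must establish.

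The main obstacle is precisely this "copy-on-an-output preserves normal form" lemma: keeping the bookkeeping of NOT gates and H-box labels correct while commuting the copy past $2^n$ indexing maps, and discharging the newly-created zero coefficients by invoking the all-ones scalar absorption available through (HS2) and (U). Once that lemma is in place, the induction closes immediately, and the entire argument goes through verbatim from the proof for $\ZH_\bbC$ because no ring element beyond $\tfrac{1}{2}$ (indeed, only $0$ and $1$) is introduced by any of the rewrites.
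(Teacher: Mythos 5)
Your inductive skeleton (bend inputs to outputs, peel off one leg at a time with (ZS1)) is reasonable, but the whole weight of the argument rests on the step you yourself flag as the obstacle and do not prove: that applying a $1\to 2$ Z-spider to one output of a normal form yields a normal form. As sketched, that step does not go through. After you commute the copy past the $\iota_{\vec b}$ maps with Lemma~\ref{lem:iota-copy}, it lands on an output of each building block $\iota_{\vec b}\circ H_n(a_{\vec b})$, and copying an output of an H-box does \emph{not} produce another indexed H-box: the resulting state $\sum_{\vec b} a^{b_1\cdots b_n}\ket{b_1\ldots b_i b_i\ldots b_n}$ has coefficient $0$ (not $1$) on every bitstring where the duplicated bits disagree, so it is not of the form $\iota_{\vec c}\circ H_{n+1}(x)$ and cannot simply be fed back into Proposition~\ref{prop:convolution} and Corollary~\ref{cor:tensor-product}. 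Repairing this requires manufacturing H-boxes labelled $0$ in the correct positions, and the appeal to ``all-ones scalar absorption via (HS2) and (U)'' does not supply that; so as written there is a genuine gap, not just deferred bookkeeping.

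The gap is also unnecessary, because the lemma you want is already an immediate consequence of results you have in hand: by (ZS1) and compact closure the $1\to 2$ Z-spider factors as a cup juxtaposed with the diagram followed by \whitemult{} applied to the relevant output and one leg of the cup, and each of these operations preserves normalisability by Corollary~\ref{cor:cup-nf}, Corollary~\ref{cor:tensor-product} and Corollary~\ref{cor:whitemult-nf}. This is essentially the route of the cited proof: an arbitrary Z-spider decomposes, using only (ZS1), into \whiteunit{} (which equals an H-box by (U) and is normalisable by Lemma~\ref{lem:H-box-nf}), \whitemult, cups and caps, and one then invokes Corollaries~\ref{cor:tensor-product}, \ref{cor:cup-nf}, \ref{cor:whitemult-nf} and \ref{cor:cap-nf}; no separate ``copy preserves normal form'' lemma, and indeed no induction, is needed. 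Your closing remark is correct in spirit: none of this uses any ring element beyond $0$ and $1$, so the argument transfers verbatim from $\ZH_\bbC$ to $\ZHR$.
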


\begin{theorem}[\ZHR\ is complete]
	For any $\ZHR$ diagrams $D_1$ and $D_2$,
	if $\interpret{D_1} = \interpret{D_2}$ then $D_1$ is convertible into $D_2$ using the rules of $\ZHR$.
\end{theorem}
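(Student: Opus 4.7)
The plan is to reduce the completeness theorem to the uniqueness of normal forms given by Theorem~\ref{thm:nf-unique}. Specifically, I would show that every $\ZHR$ diagram can be rewritten, using only the rules of $\ZHR$, into a normal form; then the hypothesis $\interpret{D_1} = \interpret{D_2}$ together with the explicit coefficient formula~\eqref{eq:nf-concrete} will force the two resulting normal forms to be literally the same diagram, yielding a derivation $D_1 \sim N_1 = N_2 \sim D_2$.

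First I would use the compact-closed structure to reduce to the case where $D_1$ and $D_2$ are states, bending any input wires to outputs via cups, as already noted in the preamble to Theorem~\ref{thm:nf-unique}. Then I would induct on the structure of such a state: every state in $\ZHR$ is built from the generators (Z-spiders and H-boxes) and the compact-closed wire structure (cups, caps, swaps, identities) via $\comp$ and $\tensor$. For the base cases Lemma~\ref{lem:H-box-nf} handles arbitrary H-boxes, Lemma~\ref{lem:Z-spider-nf} handles arbitrary Z-spiders, and Corollary~\ref{cor:cup-nf} handles cups; identities are the trivial diagram and permutations of outputs merely relabel the indexing bitstrings, as flagged before Proposition~\ref{prop:extension}. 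Scalars in isolation are absorbed using Remark~\ref{rem:scalar-juxtaposition}, and the empty diagram is the trivial $n=0$ normal form.

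For the inductive step I would lean on the closure lemmas already established. Corollary~\ref{cor:tensor-product} states that the tensor product of two normal forms can be put back into normal form, and Corollary~\ref{cor:cap-nf} shows the same for applying a cap to outputs. Because any composition in a compact-closed PROP factors as a tensor of the components followed by contraction by caps, these two properties propagate the normal-form property through arbitrary $\comp$ as well. Propositions~\ref{prop:convolution} and~\ref{prop:contraction}, together with Corollary~\ref{cor:whitemult-nf}, cover the remaining ways a normal form can be plugged against another via white spiders or white counits that appear when composing internally. Walking the induction bottom-up, I therefore obtain for each $D_i$ a normal form $N_i$ such that $\ZHR \entails D_i = N_i$. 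Now equation~\eqref{eq:nf-concrete} reads off the coefficients $a_{\vec b}$ of a normal form directly from its interpretation on computational-basis vectors, so $\interpret{N_1} = \interpret{D_1} = \interpret{D_2} = \interpret{N_2}$ forces $N_1$ and $N_2$ to have identical coefficient families, making them the same normal-form diagram up to the meta-rule that only topology matters. Chaining the derivations yields $\ZHR \entails D_1 = D_2$.

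The main obstacle is not any single step but ensuring the structural induction really covers every diagram in the compact-closed PROP, and in particular that composition via cups and caps does not create cases outside the scope of the closure lemmas — this is a careful bookkeeping exercise rather than a creative one, and the only place where one has to be slightly attentive is in dealing with Z-spider loops and internal white-spider plugs, which are handled by Corollary~\ref{cor:whitemult-nf} and Proposition~\ref{prop:contraction} respectively. Since the completeness argument here is structurally identical to the one in Ref.~\cite{ZH} for $\ZH_\bbC$, and each intermediate lemma has already been reproved in the $\ZHR$ setting (relying only on the existence of $\half$ in $R$), the final assembly is essentially immediate once the scaffolding is in place.
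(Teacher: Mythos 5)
Your proposal is correct and follows the paper's own argument: by uniqueness of normal forms (Theorem~\ref{thm:nf-unique}) it suffices to normalise any diagram, which is done by normalising the generators (Lemmas~\ref{lem:H-box-nf} and~\ref{lem:Z-spider-nf}) and then closing under tensor products and arbitrary wiring (Corollaries~\ref{cor:tensor-product} and~\ref{cor:cap-nf}). The extra bookkeeping you flag (cups, Schur products, contraction) is exactly the scaffolding the paper already established, so your assembly matches the intended proof.
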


\begin{proof}
	By Theorem~\ref{thm:nf-unique},
	it suffices to show that any $\ZHR$ diagram can be brought into normal form.
	Lemmas~\ref{lem:H-box-nf} and \ref{lem:Z-spider-nf} suffice to turn any generator into normal form.
	Corollary~\ref{cor:tensor-product} lets us turn any tensor product of generators into a normal form and Corollary~\ref{cor:cap-nf} lets us normalise any arbitrary wiring.
\end{proof}

\section{\texorpdfstring{\ZHR}{ZHR}'s relationship to RING} \label{appZHRing}

In this appendix we prove completeness for the set of rules for $\ring$
presented in \S\ref{secRingRZHRules},
using a translation of the complete set of rules for ZH$_R$ given in Figure~\ref{figZHRRules}.
The translation was provided in Figure~\ref{figRingRZHTranslation}.

\subsection{Translations of generators of RING to ZH and back} 
\label{secTranslationRulesZHQR}

\begin{itemize}
	\item Translation of $\spider{white}{}$ from $\ring_R$ to ZH and back:
	      \begin{align} \label{eqnZH-QR-times}
		      \interpret{\interpret{\spider{white}{}}_\ZH}_{\ring} = \interpret{\spider{white}{}}_{\ring} = \spider{white}{}
	      \end{align}
	\item Translation of $\binary[poly]{}$ from $\ring_R$ to ZH and back:
	      \begin{align} \label{eqnZH-QR-plus}
		      \interpret{\interpret{\binary[poly]{}}_\ZH}_{\ring} = \interpret{
			      \vc{\InputIfFileExists{./figures/ZH/ZH_plus.tikz}{}{Missing file!}}
		      }_{\ring}  =
		      \vc{\InputIfFileExists{./figures/RingR/QR_of_ZH_of_plus.tikz}{}{Missing file!}}
	      \end{align}
	\item Translation of $\state{white}{a}$ from $\ring_R$ to ZH and back:
	      \begin{align} \label{eqnZH-QR-state}
		      \interpret{\interpret{\state{white}{a}}_\ZH}_{\ring} =
		      \interpret{\state{ZH}{a}}_{\ring} =
		      \vc{\InputIfFileExists{./figures/QR/QR_of_state.tikz}{}{Missing file!}}
	      \end{align}

\end{itemize}

\subsection{Translation of the derived generators of ZH}

	      \begin{align} \label{eqnZH-QR-grey}
		      \vc{\InputIfFileExists{./figures/ZH/ZH_greyspider.tikz}{}{Missing file!}}
		      \mapsto
		      \vc{\InputIfFileExists{./figures/RingR/QR_of_greyspider.tikz}{}{Missing file!}}&\qquad \qquad
		      \unary[grey]{\neg}
		      \mapsto
		      \vc{\InputIfFileExists{./figures/RingR/QR_of_negate.tikz}{}{Missing file!}}
	      \end{align}

\subsection{Proof of completeness}

In this section we prove the following:

\thmRingRCompleteZH*

We will show that the rules of \S\ref{secRingRZHRules}, here called $\ring^2_R$,
can derive all of the equations $\interpret{L = R}_{\ring}$ where $L=R$ is a rule of ZH,
and all the equations $\interpret{\interpret{g}_\ZH}_{\ring} = g$ where $g$ is a generator of $\ring$.
First we will show some useful intermediate results.

\begin{proposition} \label{propQRHCycle}
	In $\ring^2_R$ we can `cycle' the gates of two adjacent Hadamard maps,
	and call this rule $H'$
	\begin{align}
		\vc{\InputIfFileExists{./figures/QR/h2_1.tikz}{}{Missing file!}} \by{H'} \vc{\InputIfFileExists{./figures/QR/h2_2.tikz}{}{Missing file!}} \by{H'} \vc{\InputIfFileExists{./figures/QR/h2_3.tikz}{}{Missing file!}} \tag{H'}
	\end{align}
\end{proposition}

\begin{proof} \label{prfPropQRHCycle}
	We simultaneously pre- and post-compose both sides of rule $H$ with the elements $\smallunary[polyT]{1}$ and $\smallunary[polyT]{-1}$
	for the first equality,
	then  $\smallunary[white]{-2}$ and $\smallunary[white]{-\half}$
	for the second.
\end{proof}

\begin{proposition} \label{propQRHInv}
	We can also find another useful representation of the Hadamard gate, and call this rule $H''$
	\begin{align}
		\unary[h2]{} = \vc{\InputIfFileExists{./figures/QR/h3_1.tikz}{}{Missing file!}} \by{H''} \vc{\InputIfFileExists{./figures/QR/h3_2.tikz}{}{Missing file!}} \tag{H''}
	\end{align}
\end{proposition}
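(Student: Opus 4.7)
The plan is to derive the alternative form of the Hadamard gate by working from its definition \QRHadamard\ given in Figure~\ref{figDerivedGeneratorsRingR}, and then using the (H) rule together with the newly-established (H') cycle to shuffle which gate sits between the two addition/transpose-addition nodes.

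First I would unfold the left-hand side using the definition of the Hadamard derived generator, so that the claim becomes an equation about the explicit diagram consisting of a $\binary[poly]{}$ on top, a $\smallunary[white]{-2}$ in the middle, and a $\binary[polyT]{}$ on the bottom. Then I would apply rule (H) in reverse (reading the equation $\tikzfig{QR/QR_Hl} \by{H} \tikzfig{QR/QR_Hr}$ from right to left) on a suitable subdiagram to introduce the second representation of the Hadamard coming from its ZH origin. The idea is analogous to the proof of (H'): pre- and post-compose with the right scalars (such as $\smallunary[polyT]{1}$, $\smallunary[polyT]{-1}$, $\smallunary[white]{-2}$ and $\smallunary[white]{-\half}$) and then use the distributive rule (D) together with the spider rule ($\times$) to redistribute them so that the gates reassemble into the target configuration of $\tikzfig{QR/h3_2}$.

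The main obstacle will be managing the scalar bookkeeping: both (H) and the rules in Figure~\ref{figRingRRules1} introduce or require particular scalar constants (in particular factors involving $\half$, which is why this derivation specifically needs a ring with a half). I expect that a careful, rule-annotated calculation showing the cancellation of these scalars via the rules ($\times$), (D), and (I) will take up most of the proof, with the key insight being that both purported forms of the Hadamard are related to the normal form of Figure~\ref{figDerivedGeneratorsRingR} by a single application of (H) together with a rebalancing of scalar multiplicities. Once this bookkeeping is in place, the remaining steps are routine applications of ($+$), ($\times$), and the unit laws.
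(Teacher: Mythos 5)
Your strategy lives in the right neighbourhood---like the paper, you want to squeeze (H'') out of rule (H) by composing with phased addition and multiplication gates and cancelling, in the spirit of the (H') proof---but as written there is a genuine gap. The paper's derivation is a one-liner: compose both sides of rule (H) with the composite appearing on the right-hand side of (H'') \emph{stripped of its scalar}, and then cancel the resulting adjacent gates, fusing neighbouring $1\to 1$ addition gates by $+_a$ and neighbouring multiplication spiders by (S). No reverse application of (H), no distributivity (D), no rule (I), and essentially no scalar bookkeeping are needed; the only scalar in sight is the one already written on the right of (H''). Your plan, by contrast, leaves exactly the decisive step unargued: the assertion that unfolding the definition of the Hadamard, applying (H) backwards ``on a suitable subdiagram'' and redistributing with (D) and ($\times$) reassembles everything into the target composite is never substantiated, and the rules you earmark for the cancellation (($\times$), (D), (I) and the unit laws) are not the ones that can do the job---what has to cancel are adjacent addition gates with opposite phases and adjacent multiplication spiders with reciprocal phases, which is precisely what $+_a$ and (S) handle. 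So the ``careful, rule-annotated calculation'' you defer is not routine bookkeeping; it is the whole proof, and it only goes through once the argument is rearranged into the composition-and-cancellation form above.

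Two smaller points. The elements $\smallunary[polyT]{1}$, $\smallunary[polyT]{-1}$, $\smallunary[white]{-2}$, $\smallunary[white]{-\half}$ are one-input, one-output gates, not scalars, so framing the main obstacle as scalar cancellation misplaces the content of the proof (the half is needed only so that phases such as $-\half$, and the scalar on the right of (H''), exist in $R$ at all). Also, (H') and (H'') are derived inside the ruleset $\ring^2_R$ of \S\ref{secRingRZHRules}, not the ZW-derived ruleset of Figure~\ref{figRingRRules1}; in particular the label (I) names different rules in the two sets, so any step you justify by appeal to Figure~\ref{figRingRRules1} would have to be re-derived in $\ring^2_R$ before it can be used here.
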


\begin{proof} \label{prfPropQRHInv}
	We apply the right hand side of this rule (minus the scalar) to both sides of rule $H$,
	then cancel terms using $+_a$ and $S$.
\end{proof}
\begin{proposition} \label{propQRHNeg}
	The following derivation is helpful for the rules translated from ZH that contain $\smallunary[grey]{\neg}$:
	\begin{align}
		\vc{\InputIfFileExists{./figures/QR/h4_5.tikz}{}{Missing file!}} \by{\neg} \vc{\InputIfFileExists{./figures/QR/h4_6.tikz}{}{Missing file!}} \tag{$\neg$}
	\end{align}
\end{proposition}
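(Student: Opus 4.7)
The plan is to derive the rule $(\neg)$ by unfolding the definition of the grey negation node from Equation~\eqref{eqnZH-QR-grey} and then using the ring identities of $\ring_R$ to reach the canonical form on the right-hand side. The first step is to rewrite $\smallunary[grey]{\neg}$ explicitly as an addition (poly) gate with the constant state $1$ plugged into one of its inputs, so that the diagram $h4\_5$ becomes a configuration built only from ring-style generators and the white multiplication spider. At that point, the underlying structure will consist of an addition gate adjacent to a multiplication spider and possibly part of a Hadamard subdiagram.

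Next, I would apply the distributive rule $(D)$ to commute the addition past the neighbouring multiplication, turning a single addition of a constant into a sum of two branches. The scalar-bearing state $\state{white}{1}$ will act as a unit in one branch under rule $(\times_1)$ (or its ruleset equivalent $\times$), while the other branch will carry the surviving $\neg$-contribution. Rules $(+)$ and $(\times)$ should then merge the resulting constants into single states, and rule $(A)$ or $(N)$ from the primary $\ring$ ruleset may be needed to reintroduce a NOT-shaped subdiagram in the target form.

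Finally, because the Hadamard gate in $\ring_R$ is built from polynomial gates and a $-2$ scalar (see Figure~\ref{figDerivedGeneratorsRingR}), the intermediate diagram after distributivity will already be part-way to resembling the Hadamard-like subdiagram present in $h4\_6$. At that stage I would invoke rule $(H)$, or one of the derived variants $(H')$ from Proposition~\ref{propQRHCycle} or $(H'')$ from Proposition~\ref{propQRHInv}, to perform the final rearrangement and obtain $h4\_6$ verbatim.

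The main obstacle will be sign and constant bookkeeping. Because negation in $\ring_R$ is implemented \emph{additively} (via $\smallstate{white}{-1}$ multiplied in, rather than a true involutive gate), applying $(D)$ to a NOT adjacent to a Hadamard-like structure expands it into a sum of several branches, each carrying different scalar coefficients. Getting these branches to cancel or recombine precisely into $h4\_6$ is a delicate but routine calculation, relying essentially on the identity $1 + (-1) = 0$ together with $(+_0)$ to erase vanishing terms. No fundamentally new techniques are needed beyond the ring rules of Figure~\ref{figRingRRules1} and the two Hadamard variants just established.
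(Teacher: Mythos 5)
Your overall strategy (unfold the translated negation, distribute, and finish with the Hadamard identities) is in the same spirit as the paper's derivation, which runs from the right-hand side to the left via $\by{H, sc}$, $\by{+, S}$, $\by{H''}$, $\by{D}$, $\by{S, +_a}$, $\by{D}$, $\by{S}$. But there is one genuine problem: you propose to reach for the rules $(A)$ or $(N)$ ``from the primary $\ring$ ruleset'' and you cite Figure~\ref{figRingRRules1} as your ambient ruleset. Those rules belong to the ZW-derived presentation of $\ring_R$ (Figures~\ref{figRingRRules1} and \ref{figRingRRules2}), whereas this proposition lives inside the appendix whose entire purpose is to show that the \emph{ZH-derived} ruleset $\ring^2_R$ of \S\ref{secRingRZHRules} is complete on its own. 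The rule $(\neg)$ is later used (e.g.\ in the derivation of the translated intro rule) precisely as a $\ring^2_R$-derivable lemma; if its proof invoked $(A)$ or $(N)$, you would only have shown derivability in the union of the two rulesets, and Theorem~\ref{thmRingRCompleteZH} would no longer follow. The paper's proof stays strictly inside $\ring^2_R$: besides the ring axioms and the spider rule $(S)$, the essential ingredients are the scalar-cancellation rule $(sc)$, which absorbs the $-\tfrac12$ factors produced when a Hadamard is unfolded, and the alternative Hadamard form $(H'')$ of Proposition~\ref{propQRHInv}, which is not just one interchangeable option among $(H)$, $(H')$, $(H'')$ but the specific identity that makes the two applications of $(D)$ recombine correctly.

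A smaller point: your description of negation as being ``implemented additively (via $\smallstate{white}{-1}$ multiplied in)'' is off. Multiplying a state by $-1$ negates its phase, which is not the basis-swapping NOT; in $\ring_R$ the NOT is built from the addition gate with a constant input, and the object appearing in this proposition is the image of the ZH grey $\neg$ under the translation of Equation~\eqref{eqnZH-QR-grey}, which carries the $-2$/$-\tfrac12$ scalar structure of the translated H-boxes. That is exactly why the scalar bookkeeping in the paper's chain goes through $(sc)$ rather than through $1+(-1)=0$ and $(+_0)$ as you suggest. So the plan needs to be recast entirely within the rules of \S\ref{secRingRZHRules} (ring rules, $S$, $sc$, $H$, $H'$, $H''$, $D$, $+_a$, \dots) before it constitutes a proof of the stated derivation.
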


\begin{proof} \label{prfPropQRHNeg}
	\begin{align}
		\vc{\InputIfFileExists{./figures/QR/h4_6.tikz}{}{Missing file!}}
		\by{H, sc} \vc{\InputIfFileExists{./figures/QR/h4_7.tikz}{}{Missing file!}}
		\by{+, S} \vc{\InputIfFileExists{./figures/QR/h4_8.tikz}{}{Missing file!}}   \\
		\by{H''} \vc{\InputIfFileExists{./figures/QR/h4_1.tikz}{}{Missing file!}}
		\by{D} \vc{\InputIfFileExists{./figures/QR/h4_2.tikz}{}{Missing file!}}
		\by{S, +_a} \vc{\InputIfFileExists{./figures/QR/h4_3.tikz}{}{Missing file!}} \\
		\by{D} \vc{\InputIfFileExists{./figures/QR/h4_4.tikz}{}{Missing file!}}
		\by{S} \vc{\InputIfFileExists{./figures/QR/h4_5.tikz}{}{Missing file!}}
	\end{align}
\end{proof}

\begin{corollary} \label{corDerivedPlus}
	The derived generator $\smallunary[poly]{a}$ obeys the following rules that follow immediately
	from $+$, $+_a$ and $D$, so we will use these labels when referring to these rules:
	\begin{align}
		\vc{\begin{tikzpicture}
	\begin{pgfonlayer}{nodelayer}
		\node [style=polynomial] (0) at (0, 0) {a};
		\node [style=white] (1) at (0, -0.5) {b};
		\node [style=none] (2) at (0, 0.5) {};
	\end{pgfonlayer}
	\begin{pgfonlayer}{edgelayer}
		\draw (0.center) to (1.center);
		\draw (2.center) to (0.center);
	\end{pgfonlayer}
\end{tikzpicture}
} &\by{+} \state{white}{a+b}        &
		\vc{\InputIfFileExists{./figures/QR/der_assoc_l.tikz}{}{Missing file!}} &\by{+_a} \unary[poly]{a+b}     &
		\vc{\InputIfFileExists{./figures/QR/der_distr_l.tikz}{}{Missing file!}} &\by{D} \vc{\InputIfFileExists{./figures/QR/der_distr_r.tikz}{}{Missing file!}}
	\end{align}
\end{corollary}

\begin{lemma} \label{lemQRImpliesZH-QR-ZS1} $\ring^2_R \entails F(\ZH_{ZS1})$, i.e.
	\begin{align}
		\ring^2_R \entails
		      \vc{\InputIfFileExists{./figures/ZH/Z_spider_rule.tikz}{}{Missing file!}}
	\end{align}
\end{lemma}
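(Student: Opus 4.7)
The plan is to observe that the translation sends the ZH Z-spider to the RING white (multiplication) spider, so the translated (ZS1) equation is nothing more than the ordinary white-spider fusion law. Hence the derivation should reduce to an application of the spider rule $S$, together with commutativity $\times_c$ and associativity $\times_a$ of multiplication, plus the unit rule $\times_1$ to absorb any degree-one spiders that appear during the rewrite.

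Concretely, I would proceed as follows. First, using compact closure together with the wire-permutation freedom of a symmetric monoidal PROP, arrange the two white spiders on the left-hand side so that they are composed along a single edge; any redundant connecting wires can be eliminated at this stage using the $S$ rule in its ``two wires fuse into one'' guise (which is itself derivable from $S$, $\times_c$, $\times_a$, and $\times_1$). Second, apply the $S$ rule to merge the two white spiders into a single white spider. Third, re-distribute the inputs and outputs using $\times_c$ and $\times_a$ so that the resulting single spider matches the arity configuration of the right-hand side of $\interpret{ZS1}_{\ring}$.

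The step that most needs care is the bookkeeping of arities. The ZH rule (ZS1) is a schema over non-negative integers $m,n$ (the inputs and outputs of the two fusing spiders), so the derivation really describes a family of rewrites parameterised by these integers rather than a single finite proof. I would present the argument for the generic case where both spiders carry at least one input and at least one output, and note that the edge cases (spiders with zero legs on one side) reduce to specialisations of the same rewrite, possibly via an additional application of $\times_1$ to introduce or remove a phase-$1$ unit. Since all of $S$, $\times_c$, $\times_a$, and $\times_1$ appear explicitly in the ruleset of Section~\ref{secRingRZHRules}, no further lemmas beyond compact closure are required, and the proof should be a short calculation rather than a substantive argument.
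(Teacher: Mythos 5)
Your proposal is correct and follows essentially the same route as the paper: since the translation sends the ZH Z-spider to the white multiplication spider, $F(\ZH_{ZS1})$ is literally white-spider fusion, and the paper's proof consists of the single observation that this is immediate from the spider rule $S$. The extra machinery you invoke ($\times_c$, $\times_a$, $\times_1$, wire bookkeeping and arity edge cases) is harmless but unnecessary, because $S$ is already stated as a schema over arbitrary arities with one or more connecting wires (and blank phases are just the instance $a=b=1$).
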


\begin{proof} \label{prfLemQRImpliesZH-QR-ZS1}
	This follows immediately from the rule S1
\end{proof}

\begin{lemma} \label{lemQRImpliesZH-QR-HS1}  $\ring^2_R \entails F(\ZH_{HS1})$, i.e.
	\begin{align}
		\ring^2_R \entails
		      \vc{\InputIfFileExists{./figures/QR/QR_of_HS1.tikz}{}{Missing file!}}
	\end{align}
\end{lemma}

\begin{proof} \label{prfPropQRImpliesZH-QR-HS1}
	\begin{align}
		       \vc{\InputIfFileExists{./figures/QR/pf_hs1_1.tikz}{}{Missing file!}}
		\by{S}
		\vc{\InputIfFileExists{./figures/QR/pf_hs1_2.tikz}{}{Missing file!}}
		\by{H'}
		\vc{\InputIfFileExists{./figures/QR/pf_hs1_2a.tikz}{}{Missing file!}} 
		\by{H} 
		\vc{\InputIfFileExists{./figures/QR/pf_hs1_3.tikz}{}{Missing file!}}
		\by{S}
		\vc{\InputIfFileExists{./figures/QR/pf_hs1_4.tikz}{}{Missing file!}}
	\end{align}
\end{proof}

\begin{proposition} \label{propQRImplesZH-QRZS2}   $\ring^2_R \entails F(\ZH_{ZS2})$, i.e.
	\begin{align}
		\ring^2_R \entails
		\vc{\InputIfFileExists{./figures/QR/QR_of_ZS2.tikz}{}{Missing file!}}
	\end{align}
\end{proposition}

\begin{proof} \label{prfPropQRImplesZH-QRZS2}
	\begin{align}
		\vc{\InputIfFileExists{./figures/QR/pf_zs2_1.tikz}{}{Missing file!}}
		\by{S}
		\vc{\begin{tikzpicture}
	\begin{pgfonlayer}{nodelayer}
		\node [style=none] (6) at (-3.75, 0.75) {};
		\node [style=white] (7) at (-3.75, 0) {};
		\node [style=none] (9) at (-3.75, -0.75) {};
	\end{pgfonlayer}
	\begin{pgfonlayer}{edgelayer}
		\draw (9.center) to (7.center);
		\draw (6.center) to (7.center);
	\end{pgfonlayer}
\end{tikzpicture}
}
		\by{\times_1}
		\vc{\begin{tikzpicture}
	\begin{pgfonlayer}{nodelayer}
		\node [style=none] (6) at (-3.75, 0.75) {};
		\node [style=none] (9) at (-3.75, -0.75) {};
	\end{pgfonlayer}
	\begin{pgfonlayer}{edgelayer}
		\draw (6.center) to (9.center);
	\end{pgfonlayer}
\end{tikzpicture}
}
	\end{align}
\end{proof}

\begin{proposition} \label{propQRImpliesZH-QR-HS2}  $\ring^2_R \entails F(\ZH_{HS2})$, i.e.
	\begin{align}
		\ring^2_R  \entails
		\vc{\InputIfFileExists{./figures/QR/QR_of_HS2.tikz}{}{Missing file!}}
	\end{align}
\end{proposition}

\begin{proof} \label{prfPropQRImpliesZH-QR-HS2}
	This is the same as rule H
\end{proof}

\begin{proposition} \label{propQRImpliesZH-QR-BA1}   $\ring^2_R \entails F(\ZH_{BA1})$, i.e.
	\begin{align}
		\ring^2_R  \entails
		\vc{\InputIfFileExists{./figures/QR/QR_of_BA1.tikz}{}{Missing file!}}
	\end{align}
\end{proposition}

\begin{proof} \label{prfPropQRImpliesZH-QR-BA1}
	This is the same as rule BA1
\end{proof}

\begin{proposition} \label{propQRImpliesZH-QR-BA2}   $\ring^2_R \entails F(\ZH_{BA2})$, i.e.
	\begin{align}
		\ring^2_R  \entails
		\vc{\InputIfFileExists{./figures/QR/QR_of_BA2.tikz}{}{Missing file!}}
	\end{align}
\end{proposition}

\begin{proof} \label{prfPropQRImpliesZH-QR-BA2}
	\begin{align}
		  &
		\vc{\InputIfFileExists{./figures/QR/pf_ba2_1.tikz}{}{Missing file!}}
		\by{S}
		\vc{\InputIfFileExists{./figures/QR/pf_ba2_2.tikz}{}{Missing file!}}
		\by{+_a}
		\vc{\InputIfFileExists{./figures/QR/pf_ba2_2a.tikz}{}{Missing file!}}
		\\
		\by{H}
		  &
		\vc{\InputIfFileExists{./figures/QR/pf_ba2_3.tikz}{}{Missing file!}}
		\by{sc, BA2}
		\vc{\InputIfFileExists{./figures/QR/pf_ba2_4.tikz}{}{Missing file!}}
		\by{+_a, S}
		\vc{\InputIfFileExists{./figures/QR/pf_ba2_5.tikz}{}{Missing file!}}
	\end{align}
\end{proof}

\begin{lemma} \label{lemLemmaQRImpliesZH-QR-M}   $\ring^2_R \entails F(\ZH_{M})$, i.e.
	\begin{align}
		\ring^2_R  \entails
		\vc{\InputIfFileExists{./figures/QR/QR_of_M.tikz}{}{Missing file!}}
	\end{align}
\end{lemma}

\begin{proof} \label{prfLemLemmaQRImpliesZH-QR-M}
	\begin{align}
		  &
		\vc{\InputIfFileExists{./figures/QR/pf_m_1.tikz}{}{Missing file!}}
		\by{\times}
		\state{white}{a \times b}
		\by{+}
		\vc{\begin{tikzpicture}
	\begin{pgfonlayer}{nodelayer}
		\node [style=white] (3) at (0, -0.75) {$(a \times b) - 1$};
		\node [style=none] (4) at (0, 0.5) {};
		\node [style=polynomial] (5) at (0, 0) {1};
	\end{pgfonlayer}
	\begin{pgfonlayer}{edgelayer}
		\draw (5.center) to (3.center);
		\draw (4.center) to (5.center);
	\end{pgfonlayer}
\end{tikzpicture}
}
	\end{align}
\end{proof}

\begin{lemma} \label{lemQREntailsQR-ZH-U}   $\ring^2_R \entails F(\ZH_{U})$, i.e.
	\begin{align}
		\ring^2_R  \entails
		\vc{\InputIfFileExists{./figures/QR/QR_of_U.tikz}{}{Missing file!}}
	\end{align}
\end{lemma}

\begin{proof} \label{prfLemQREntailsQR-ZH-U}
	This is an instance of rule $+$
\end{proof}

\begin{lemma} \label{lemQREntailsQR-ZH-QR-plus}  $\ring^2_R \entails$ equation \eqref{eqnZH-QR-plus}
(the re-translation of the $+$ generator), i.e.
	\begin{align}
		\ring^2_R  \entails
		\binary[poly]{} =
		\vc{\InputIfFileExists{./figures/RingR/QR_of_ZH_of_plus.tikz}{}{Missing file!}}
	\end{align}
\end{lemma}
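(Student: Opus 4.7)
The plan is to observe that this lemma is essentially a direct application of the rule $+'$ that was included in the ruleset of \S\ref{secRingRZHRules} for exactly this purpose. That rule was added precisely so that the round-trip $\ring_R \to \ZHR \to \ring_R$ on the plus generator is derivable in one step, rather than requiring non-trivial syntactic manipulation.

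First I would unfold both sides explicitly. Applying the $\ring_R \to \ZHR$ translation of Figure~\ref{figRingRZHTranslation} to $\binary[poly]{}$ yields $\tikzfig{ZH/ZH_plus}$. Then applying the $\ZHR \to \ring_R$ translation to that diagram amounts to replacing each phase-free Z-spider by the white multiplication spider and each H-box $\spider{ZH}{a}$ by $\tikzfig{QR/QR_of_HBox}$, which produces precisely the diagram $\tikzfig{RingR/QR_of_ZH_of_plus}$. This should be (up to the topological isomorphism of wire bending in a compact closed setting) identical to the right-hand side $\tikzfig{RingR/QR_plus_prime_r}$ of the rule $+'$.

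Given that identification, the lemma follows in a single step: simply invoke $+'$. The only potential obstacle is confirming that the two pictures $\tikzfig{RingR/QR_of_ZH_of_plus}$ and $\tikzfig{RingR/QR_plus_prime_r}$ coincide as diagrams in the compact closed PROP. This is purely a bookkeeping check on the definitions of the two translation functors composed, and does not require any rule applications beyond recognising that the two are the same diagram.

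In summary, the proof is a one-line application of $+'$ after unpacking the translations. This is unlike the earlier re-translation lemmas for $\spider{white}{}$ (which is essentially trivial) and for $\state{white}{a}$ (which required some computation); here, the rule $+'$ was explicitly engineered into the axioms so that no derivation is needed. The lemma is structurally important for completeness because the proof of Theorem~\ref{thmRingRCompleteZH} requires $\ring^2_R \entails \interpret{\interpret{g}_{\ZH}}_{\ring} = g$ for each generator $g$ of $\ring_R$, and this lemma dispatches the case $g = \binary[poly]{}$.
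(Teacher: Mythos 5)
There is a genuine gap, and it sits exactly where you wave it away as ``purely a bookkeeping check''. Your argument rests on the claim that the round-trip image $\interpret{\interpret{\binary[poly]{}}_{\ZH}}_{\ring}$ is, up to wire-bending, \emph{literally} the right-hand side of the rule $+'$, so that the lemma follows from a single (backwards) application of $+'$. That identification is false, or at least unsubstantiated: the re-translation of $\tikzfig{ZH/ZH_plus}$ goes through ZH's derived generators (the grey spider and NOT, which themselves unfold into H-boxes, Z-spiders and scalars) and then through the H-box translation into $\ring$, so \eqref{eqnZH-QR-plus} is a sizeable composite diagram, whereas the right-hand side of $+'$ is a compact normal form of it. The rule $+'$ was indeed added so that the lemma becomes \emph{derivable}, not so that it becomes a syntactic tautology; the description ``links the plus gate with its image after being translated into \ZHR\ and back again'' refers to the rule's purpose, not to its right-hand side being the verbatim round-trip image.

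Consequently the paper's proof is not one line: it first simplifies the re-translated diagram using spider fusion $(S)$, the additive unit and associativity rules $(+_0, +_a)$, the Hadamard rule $(H)$ together with its derived forms (in particular $H''$ from Proposition~\ref{propQRHInv}), scalar cancellation $(sc)$ and the addition rule $(+)$, and only once the diagram has been brought into the shape of the right-hand side of $+'$ does it invoke $+'$ to conclude $\binary[poly]{} = \interpret{\interpret{\binary[poly]{}}_{\ZH}}_{\ring}$. Note also that elsewhere in the paper genuinely one-step lemmas are proved by exactly the sentence ``This is rule $X$''; the fact that this lemma instead receives a multi-step derivation is itself evidence that the two diagrams do not coincide on the nose. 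To repair your proposal you would need to either exhibit the intermediate rewrites bridging the re-translation to the right-hand side of $+'$ (essentially reproducing the paper's derivation), or redefine $+'$ to have the full round-trip image as its right-hand side, which is not how the ruleset of \S\ref{secRingRZHRules} is stated. Your closing remark about the role of the lemma in Theorem~\ref{thmRingRCompleteZH} is correct, but the derivation itself is missing.
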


\begin{proof} \label{prfLem}
	\begin{align}
		             & \vc{\InputIfFileExists{./figures/QR/pf_plus_1.tikz}{}{Missing file!}}
		\by{S, +_a, +_0}
		\vc{\InputIfFileExists{./figures/QR/pf_plus_2.tikz}{}{Missing file!}} \\
		\by{H, sc} &
		\vc{\InputIfFileExists{./figures/QR/pf_plus_3.tikz}{}{Missing file!}} 
		\by{+, S} 
		\vc{\InputIfFileExists{./figures/QR/pf_plus_4.tikz}{}{Missing file!}} 
		\by{S} 
		\vc{\InputIfFileExists{./figures/QR/pf_plus_6.tikz}{}{Missing file!}} \\
		\by{H'', sc} &
		\vc{\InputIfFileExists{./figures/QR/pf_plus_7.tikz}{}{Missing file!}} 
		\by{a}
		\vc{\InputIfFileExists{./figures/QR/pf_plus_8.tikz}{}{Missing file!}}
		\by{S}
		\vc{\InputIfFileExists{./figures/QR/pf_plus_9.tikz}{}{Missing file!}}
		\by{+'}
		\binary[poly]{}
	\end{align}
\end{proof}

\begin{lemma} \label{lemQREntailsQR-ZH-A}  $\ring^2_R \entails F(\ZH_{A})$, i.e.
	\begin{align}
		\ring^2_R  \entails
		\vc{\InputIfFileExists{./figures/QR/QR_of_A.tikz}{}{Missing file!}}
	\end{align}
\end{lemma}

\begin{proof} \label{prfLemQREntailsQR-ZH-A}
	\begin{align}
		         & \vc{\InputIfFileExists{./figures/QR/pf_a_1.tikz}{}{Missing file!}}
		\by{H, sc}
		\vc{\InputIfFileExists{./figures/QR/pf_a_2.tikz}{}{Missing file!}}
		\by{+_a, S}
		\vc{\InputIfFileExists{./figures/QR/pf_a_3.tikz}{}{Missing file!}} \\
		\by{S}   &
		\vc{\InputIfFileExists{./figures/QR/pf_a_4.tikz}{}{Missing file!}}
		\by{+}
		\vc{\InputIfFileExists{./figures/QR/pf_a_5.tikz}{}{Missing file!}} \\
		\by{H''} &
		\vc{\InputIfFileExists{./figures/QR/pf_a_7.tikz}{}{Missing file!}}
		\by{+_a, +_0}
		\vc{\InputIfFileExists{./figures/QR/pf_a_8.tikz}{}{Missing file!}}
		\by{S}
		\vc{\InputIfFileExists{./figures/QR/pf_a_9.tikz}{}{Missing file!}} \\
		\by{+'}  &
		\vc{\InputIfFileExists{./figures/QR/pf_a_10.tikz}{}{Missing file!}}
		\by{+, S}
		\vc{\begin{tikzpicture}
	\begin{pgfonlayer}{nodelayer}
		\node [style=none] (1) at (-2, 2.75) {};
		\node [style=white] (30) at (-1.25, 2.25) {};
		\node [style=white] (32) at (-2, 1.5) {$\frac{a+b}{2}$};
	\end{pgfonlayer}
	\begin{pgfonlayer}{edgelayer}
		\draw (1.center) to (32.center);
	\end{pgfonlayer}
\end{tikzpicture}
} \by{+_a} RHS
	\end{align}
\end{proof}

\begin{lemma} \label{lemQREntailsQR-ZH-I}  $\ring^2_R \entails F(\ZH_{I})$, i.e.
	\begin{align}
		\ring^2_R  \entails
		\vc{\InputIfFileExists{./figures/QR/QR_of_I.tikz}{}{Missing file!}}
	\end{align}
\end{lemma}

\begin{proof} \label{prfLemQREntailsQR-ZH-I}
	\begin{align}
		       & \vc{\InputIfFileExists{./figures/QR/pf_i_1.tikz}{}{Missing file!}}
		\by{\neg}
		\vc{\InputIfFileExists{./figures/QR/pf_i_2.tikz}{}{Missing file!}}
		\by{+_a, +_0}
		\vc{\InputIfFileExists{./figures/QR/pf_i_3.tikz}{}{Missing file!}} \\
		\by{D} &
		\vc{\InputIfFileExists{./figures/QR/pf_i_4.tikz}{}{Missing file!}}
		\by{I}
		\vc{}
	\end{align}
\end{proof}

\begin{lemma} \label{lemQREntailsQR-ZH-O}  $\ring^2_R \entails F(\ZH_{O})$, i.e.
	\begin{align}
		\ring^2_R  \entails
		\vc{\InputIfFileExists{./figures/QR/QR_of_O.tikz}{}{Missing file!}}
	\end{align}
\end{lemma}

\begin{proof} \label{prfLemQREntailsQR-ZH-O}
	\begin{align}
		\vc{\InputIfFileExists{./figures/QR/pf_o_1.tikz}{}{Missing file!}}
		\by{H, sc}
		\vc{\InputIfFileExists{./figures/QR/pf_o_2.tikz}{}{Missing file!}}
		\by{+, S}
		\vc{\InputIfFileExists{./figures/QR/pf_o_3.tikz}{}{Missing file!}} \\
		\by{H''}
		\vc{\InputIfFileExists{./figures/QR/pf_o_4.tikz}{}{Missing file!}}
		\by{D}
		\vc{\InputIfFileExists{./figures/QR/pf_o_5.tikz}{}{Missing file!}}
		\by{D}
		\vc{\InputIfFileExists{./figures/QR/pf_o_6.tikz}{}{Missing file!}} \\
		\by{H''}
		\vc{\InputIfFileExists{./figures/QR/pf_o_7.tikz}{}{Missing file!}}
		\by{D}
		\vc{\InputIfFileExists{./figures/QR/pf_o_8.tikz}{}{Missing file!}}
		\by{D}
		\vc{\InputIfFileExists{./figures/QR/pf_o_9.tikz}{}{Missing file!}}
	\end{align}
\end{proof}

\begin{lemma} \label{lemQREntailsQR-ZH-times}  $\ring^2_R \entails$ equation \eqref{eqnZH-QR-times}
(the re-translation of the white spider) i.e.
	\begin{align}
		\ring^2_R  \entails
		\spider{white}{} = \spider{white}{}
	\end{align}
\end{lemma}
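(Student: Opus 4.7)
The statement to prove says that the round-trip translation of the phase-free white spider $\spider{white}{}$ through $\ZHR$ returns the same white spider in $\ring_R$, and this equality is derivable in $\ring^2_R$. My plan is to compute the round-trip explicitly and show that essentially no work is needed, since the translations match the white spider in $\ring$ with the Z spider in $\ZH$ directly.

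First, I would note that the (phase-free, $n$-ary) white spider $\spider{white}{}$ is a derived generator of $\ring_R$, built from the binary multiplication gate $\binary[white]{}$ and the unit $\unary[white]{1}$ via repeated spider fusion (rule $S$, together with $\times_1$ for the unit). Under the translation $\ring_R \to \ZHR$ of Figure~\ref{figRingRZHTranslation}, the binary multiplication gate $\binary[white]{}$ becomes the ZH Z spider of arity $2 \to 1$, and the unit becomes the corresponding ZH unit. The translation from $\ZHR$ back to $\ring_R$ then sends the ZH Z spider back to its $\ring$ counterpart $\spider{smallZ}{}$, which is graphically and semantically identical to $\spider{white}{}$ in $\ring$.

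Therefore the round-trip $\interpret{\interpret{\spider{white}{}}_\ZH}_{\ring}$ is, up to the derived-generator expansion, identical to $\spider{white}{}$. The formal derivation consists of unfolding $\spider{white}{}$ into its primitive components, applying the translations componentwise, and then refolding via rule $S$ (to re-fuse the spiders) and $\times_1$ (to absorb any copies of the unit). None of the more involved rules such as $H$, $H''$, $D$, or the bialgebra rules are required.

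I do not expect any substantial obstacle here; indeed, in contrast with the $+$ case of Lemma~\ref{lemQREntailsQR-ZH-QR-plus} where the translation of $\binary[poly]{}$ passes through H-boxes and requires a nontrivial derivation, the white spider is translated essentially trivially, since both $\ring_R$ and $\ZHR$ have a phase-free Z/white spider as their common multiplicative backbone. The only minor bookkeeping is to ensure that the arity matches on both sides, which is handled by the spider-fusion rule $S$ exactly as in Lemma~\ref{lemQRImpliesZH-QR-ZS1}.
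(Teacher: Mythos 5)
Your proposal is correct and takes essentially the same route as the paper, which records this lemma simply as ``nothing to prove'': the round-trip translation acts as the syntactic identity on the blank white spider, so the equation $\spider{white}{} = \spider{white}{}$ is an instance of reflexivity. The detour through unfolding $\spider{white}{}$ into binary multiplication gates and the unit and refolding via $S$ and $\times_1$ is harmless but unnecessary, since the translation tables of Figure~\ref{figRingRZHTranslation} act directly on the (derived) spider and send the phase-free white spider to the ZH Z spider and back unchanged.
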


\begin{proof} Nothing to prove
\end{proof}
\begin{lemma} \label{lemQREntailsQR-ZH-state}  $\ring^2_R \entails$ equation \eqref{eqnZH-QR-state}
(the re-translation of the state $a$)  i.e.
	\begin{align}
		\ring^2_R  \entails
		\state{white}{a} = \vc{\InputIfFileExists{./figures/QR/QR_of_state.tikz}{}{Missing file!}}
	\end{align}
\end{lemma}

\begin{proof} Instance of the $+$ rule
\end{proof}

We have shown that the ruleset $\ring^2_R$ can prove all the translated rules of $\ZH_R$,
and can prove that generators equal their images under the translation into ZH and back.
The ruleset $\ring^2_R$ is therefore complete for $R$ any commutative ring with a half,
because $\ZH_R$ is complete for any commutative ring with a half.

\section{Trigonometry proofs} \label{secTrig}

In this appendix we prove the trigonometric proofs required in \S\ref{chapZQ}.
We reproduce the side conditions for the rule $\ZX_{EU'}$ for reference here:

\begin{displayquote}[Figure~2, A Near-Minimal Axiomatisation of ZX-Calculus
		for Pure Qubit Quantum Mechanics \cite{VilmartZX}]
	In rule (EU'), $\beta_1$, $\beta_2$, $\beta_3$ and $\gamma$
	can be determined as follows: $x^+ := \frac{\alpha_1 + \alpha_2}{2}$,
	$x^- := x^-\alpha_2$, $z := - \sin (x^+) + i \cos(x^-)$
	and $z' := \cos(x^+) - i \sin (x^-)$,
	then $\beta_1 = \arg z + \arg z'$, $\beta_2 = 2 \arg(i + \frac{\abs{z}}{\abs{z'}})$,
	$\beta_3 = \arg z - \arg z'$, $\gamma = x^+ - \arg(z) + \frac{\pi - \beta_2}{2}$
	where by convention $\arg(0) := 0$ and $z' = 0 \implies \beta_2 = 0$.
\end{displayquote}

\lemZQHComm*

\begin{proof} \label{prfLemZQHComm}
	\begin{align}
		(\pi, \frac{x+z}{\sqrt{2}}) \times (\alpha, z) =                 & \frac{1}{\sqrt{2}}(i+k)(\cos \frac{\alpha}{2} + k \sin \frac{\alpha}{2})                                              \\
		=                                                                & \frac{1}{\sqrt{2}}(-\sin\frac{\alpha}{2} + i \cos \frac{\alpha}{2} - j \sin \frac{\alpha}{2} +k \cos \frac{\alpha}{2} \\
		=                                                                & \frac{1}{\sqrt{2}}(\cos\frac{\alpha}{2}+i\sin\frac{\alpha}{2})(i+k)                                                   \\
		=                                                                & (\alpha, x) \times (\pi, \frac{x+z}{\sqrt{2}})                                                                        \\
		\nonumber \\
		(\pi, \frac{x+z}{\sqrt{2}})\times(\alpha, x) =                   & \frac{1}{\sqrt{2}}(i+k)(\cos \frac{\alpha}{2} + i \sin \frac{\alpha}{2})                                              \\
		=                                                                & \frac{1}{\sqrt{2}}(-\sin\frac{\alpha}{2} + i \cos \frac{\alpha}{2} + j \sin \frac{\alpha}{2} +k \cos \frac{\alpha}{2} \\
		=                                                                & \frac{1}{\sqrt{2}}(\cos\frac{\alpha}{2}+k\sin\frac{\alpha}{2})(i+k)                                                   \\
		=                                                                & (\alpha, z) \times (\pi, \frac{x+z}{\sqrt{2}})                                                                        \\
		\nonumber \\
		(\pi, \frac{x+z}{\sqrt{2}}) \times (\pi, \frac{x+z}{\sqrt{2}}) = & \frac{1}{\sqrt{2}}(i+k)\frac{1}{\sqrt{2}}(i+k)                                                                        \\
		=                                                                & \frac{1}{2}(-1-j-1+j) = -1
	\end{align}
\end{proof}

\lemZQEUQuaternion*

In the hope of easing legibility we separate out the real, $i$, $j$, and $k$ components
of quaternions onto separate lines where suitable.

\begin{proof} \label{prfLemZQEUQuaternion}
	\begin{align}
		RHS =& H \times (\beta_1, z) \times H  \times (\beta_2,z) \times H \times (\beta_3, z) \times H \\
		=&H \times H \times H \times H \times (\beta_1, x) \times  (\beta_2,z) \times (\beta_3, x) & (\ref{lemZQHComm}) \\
		=& (\beta_1, x) \times  (\beta_2,z) \times (\beta_3, x) & (\ref{lemZQHComm}) \\
		=& (\cos \beta_1 / 2 + i \sin \beta_1 / 2) \times \\\nonumber& (\cos \beta_2 / 2 + k \sin \beta_2 / 2) \times \\\nonumber& (\cos \beta_3 / 2 + i \sin \beta_3 / 2) \\
		=& 1(\cos \beta_1 /2 \cos \beta_2/2 \cos \beta_3/2 - \sin \beta_1 / 2 \cos \beta_2 /2 \sin \beta_3 / 2) +\\
		\nonumber & i (\cos \beta_1 /2 \cos \beta_2/2 \sin \beta_3/2 + \sin \beta_1 / 2 \cos \beta_2 /2 \cos \beta_3 / 2) + \\
		\nonumber&j (\cos \beta_1 /2 \sin \beta_2/2 \sin \beta_3/2 - \sin \beta_1 / 2 \sin \beta_2 /2 \cos \beta_3 / 2) + \\
		\nonumber&k (\cos \beta_1 /2 \sin \beta_2/2 \cos \beta_3/2 + \sin \beta_1 / 2 \sin \beta_2 /2 \sin \beta_3 / 2) \\
		= & 1 (\cos \beta_2/2)(\cos \beta_1 /2 \cos \beta_3/2 - \sin \beta_1 / 2 \sin \beta_3 / 2) + \\
		\nonumber & i (\cos \beta_2 /2)(\cos \beta_1 /2 \sin \beta_3/2 + \sin \beta_1 / 2  \cos \beta_3 / 2) + \\
		\nonumber & j (\sin \beta_2/2)(\cos \beta_1 /2  \sin \beta_3/2 - \sin \beta_1 / 2  \cos \beta_3 / 2) + \\
		\nonumber & k (\sin \beta_2/2)(\cos \beta_1 /2 \cos \beta_3/2 + \sin \beta_1 / 2\sin \beta_3 / 2) \\
		= & 1 (\cos \beta_2/2)(\cos \frac{\beta_1 + \beta_3}{2}) + \\
		\nonumber & i (\cos \beta_2 /2)(\sin \frac{\beta_1 + \beta_3}{2}) + \\
		\nonumber & j (\sin \beta_2/2)(\sin \frac{\beta_3 - \beta_1}{2}) + \\
		\nonumber & k (\sin \beta_2/2)(\cos \frac{\beta_1 - \beta_3}{2}) \\
		= & 1 (\cos \beta_2/2)(\cos \arg z) + \\
		\nonumber & i (\cos \beta_2 /2)(\sin \arg z) + \\
		\nonumber & j (\sin \beta_2/2)(- \sin \arg z') + \\
		\nonumber & k (\sin \beta_2/2)(\cos \arg z') \\
	\end{align}
	Using properties of arguments and moduli we then show the following:
	\begin{align}
		\cos \arg (a + ib) =     & a / \abs{a+ib}                                                                            \\
		\sin \arg (a + ib) =     & b / \abs{a+ib}                                                                            \\
		\abs{z}^2 =              & \sin(x^+)^2 + \cos(x^-)^2                                                                 \\
		\abs{z'}^2 =             & \sin(x^-)^2 + \cos(x^+)^2                                                                 \\
		\abs{z}^2 + \abs{z'}^2 = & \cos^2 x^+ + \sin^2 x^+ + \cos^2 x^- + \sin^2 x^- = 2                                     \\                                            \\
		\cos \arg z =            & \Re(z) / \abs{z} =  \frac{-\sin(\alpha_1 + \alpha_2)/2}{\abs{z}}                          \\
		\sin \arg z =            & \Im(z) / \abs{z} = \frac{\cos(\alpha_1 - \alpha_2)/2}{\abs{z}}                            \\
		\cos \arg z' =           & \Re(z') / \abs{z'}  = \frac{\cos(\alpha_1 + \alpha_2)/2}{\abs{z'}}                        \\
		\sin \arg z' =           & \Im(z') / \abs{z'} =\frac{-\sin(\alpha_1 - \alpha_2)/2}{\abs{z'}}                         \\
		\nonumber \\
		\cos(\beta_2 / 2) =      & \cos \arg (i + \abs{z}/\abs{z'})  = \cos \arg (\abs{z'}i + \abs{z})  = \abs{z} / \sqrt{2} \\
		\sin(\beta_2 / 2) =      & \sin \arg (i + \abs{z}/\abs{z'}) = \sin \arg (\abs{z'}i + \abs{z}) =\abs{z'} / \sqrt{2}
	\end{align}
	And now substitute these values into our expression for the right hand side:
	\begin{align}
		RHS =     & 1 (\cos \beta_2/2)(\cos \arg z) +                                           \\
		\nonumber & i (\cos \beta_2 /2)(\sin \arg z) +                                          \\
		\nonumber & j (\sin \beta_2/2)(- \sin \arg z') +                                        \\
		\nonumber & k (\sin \beta_2/2)(\cos \arg z')                                            \\
		=         & 1 (\abs{z} / \sqrt{2})(\frac{-\sin(\alpha_1 + \alpha_2)/2}{\abs{z}}) +      \\
		\nonumber & i (\abs{z} / \sqrt{2})(\frac{\cos(\alpha_1 - \alpha_2)/2}{\abs{z}}) +       \\
		\nonumber & j (\abs{z'} / \sqrt{2} )(- \frac{-\sin(\alpha_1 - \alpha_2)/2}{\abs{z'}}) + \\
		\nonumber & k (\abs{z'} / \sqrt{2} )(\frac{\cos(\alpha_1 + \alpha_2)/2}{\abs{z'}})      \\
		=         & (\frac{1}{\sqrt{2}}) \times                                                 \\
		\nonumber & (-1 (\sin(\alpha_1 + \alpha_2)/2) +                                         \\
		\nonumber & i (\cos(\alpha_1 - \alpha_2)/2) +                                           \\
		\nonumber & j (\sin(\alpha_1 - \alpha_2)/2) +                                           \\
		\nonumber & k (\cos(\alpha_1 + \alpha_2)/2))                                            \\
	\end{align}
	And now for the left hand side:
	\begin{align}
		LHS =     & (\alpha_1, z) \times H \times (\alpha_2, z)                                                                                             \\
		=         & \frac{1}{\sqrt{2}} (\cos\alpha_1 + k\sin\alpha_1) (i+k) (\cos\alpha_2 + k\sin\alpha_2)                                                  \\
		=         & \frac{1}{\sqrt{2}} (i \cos\alpha_1 \cos\alpha_2 - j \cos\alpha_1\sin\alpha_2 + k \cos\alpha_1 \cos\alpha_2 - \cos\alpha_1\sin\alpha_2 + \\
		\nonumber & j \sin\alpha_1\cos\alpha_2 + i \sin\alpha_1\sin\alpha_2  - \sin\alpha_1\cos\alpha_2 - k \sin\alpha_1\sin\alpha_2)                       \\
		=         & \frac{1}{\sqrt{2}} (-(\cos\alpha_1\sin\alpha_2 + \sin\alpha_1\cos\alpha_2)                                                              \\
		\nonumber & i(\sin\alpha_1\sin\alpha_2 + \cos\alpha_1\alpha_2) +                                                                                    \\
		\nonumber & j(\sin\alpha_1\cos\alpha_2 - \cos\alpha_1\sin\alpha_2) +                                                                                \\
		\nonumber & k(\cos\alpha_1\cos\alpha_2 - \sin\alpha_1\sin\alpha_2))                                                                                 \\
		=         & (\frac{1}{\sqrt{2}}) \times                                                                                                             \\
		\nonumber & (-1 (\sin(\alpha_1 + \alpha_2)/2) +                                                                                                     \\
		\nonumber & i (\cos(\alpha_1 - \alpha_2)/2) +                                                                                                       \\
		\nonumber & j (\sin(\alpha_1 - \alpha_2)/2) +                                                                                                       \\
		\nonumber & k (\cos(\alpha_1 + \alpha_2)/2))
	\end{align}
\end{proof}

\begin{proposition} \label{propZQPhi}
	The map $\phi$, given by
	\begin{align}
		  & \phi : & (\text{Unit Quaternions}, \times) & \to (2\times 2\text{ complex matrices}, \comp) \\
		  & \phi:  & q_w + iq_x + jq_y + kq_z          & \mapsto \begin{pmatrix}
			q_w - iq_z & q_y - iq_x \\
			-q_y-iq_x  & q_w + iq_z
		\end{pmatrix}
	\end{align}
	is a group homomorphism with trivial kernel.
\end{proposition}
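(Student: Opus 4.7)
The plan is to verify the two claims separately: first that $\phi$ preserves multiplication, and second that its kernel contains only the identity quaternion $1$.

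For the homomorphism property, I would proceed by direct computation. Take two unit quaternions $q = q_w + iq_x + jq_y + kq_z$ and $q' = q'_w + iq'_x + jq'_y + kq'_z$. First expand the product $q \times q'$ using the defining relations $i^2 = j^2 = k^2 = ijk = -1$ (which imply $ij = k$, $jk = i$, $ki = j$, etc.), collecting terms by component to obtain explicit expressions for the real, $i$-, $j$- and $k$-components of $qq'$. Substituting these into the definition of $\phi$ gives an explicit $2 \times 2$ matrix. Separately, compute the matrix product $\phi(q) \comp \phi(q')$ entry by entry. Comparing the two matrices component by component should yield equality. The fact that $\phi(1) = \id_2$ is immediate from the formula, so once multiplicativity is established, $\phi$ is automatically a group homomorphism.

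For the kernel, suppose $\phi(q) = \id_2$ for some unit quaternion $q$. Reading off the four entries yields the linear system $q_w - iq_z = 1$, $q_w + iq_z = 1$, $q_y - iq_x = 0$, $-q_y - iq_x = 0$. Adding and subtracting the first pair gives $q_w = 1$ and $q_z = 0$; adding and subtracting the second pair gives $q_x = 0$ and $q_y = 0$. Hence $q = 1$, so $\ker \phi = \{1\}$.

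The main obstacle will be bookkeeping during the multiplicativity check: the quaternion product expands into sixteen terms with signs determined by the non-commutative multiplication table of $\{i,j,k\}$, and the matrix product $\phi(q)\phi(q')$ produces four entries each a sum of two products of complex binomials. The computation is routine but error-prone, so I would organise it by writing out each of the four entries of $\phi(qq')$ and $\phi(q)\phi(q')$ as a polynomial in the eight real variables $q_w, q_x, q_y, q_z, q'_w, q'_x, q'_y, q'_z$ separated into real and imaginary parts, and match them term-by-term. Once this calculation is done the rest of the proposition follows immediately.
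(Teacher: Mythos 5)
Your proposal is correct and takes essentially the same route as the paper: a direct entry-by-entry verification that $\phi(q_1)\phi(q_2) = \phi(q_1 \times q_2)$, the observation that $\phi(1)$ is the identity matrix, and reading off the four matrix entries of $\phi(q) = \mathrm{id}$ to conclude $q_w = 1$, $q_x = q_y = q_z = 0$, hence trivial kernel. The paper's proof is exactly this bookkeeping computation, so nothing is missing from your plan.
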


\begin{proof} \label{prfPropZQPhi}

	Write $q_1$ as $w + x + y + z$ and $q_2$ as $w' + x' + y' + z'$:

	\begin{itemize}
		\item Show that $\phi(1) = \begin{pmatrix}
				      1 & 0 \\ 0 & 1
			      \end{pmatrix}$:
		      \begin{align}
			      \phi(1) = & \begin{pmatrix}
				      1 - i0 & 0 - i0 \\
				      -0 -i0 & 1 + i0
			      \end{pmatrix}  \\
			      =         & \begin{pmatrix}
				      1 & 0 \\ 0 & 1
			      \end{pmatrix}
		      \end{align}
		\item Show that $\phi(q_1)\phi(q_2) = \phi(q_1 \times q_2)$:
		      \begin{align}
			      LHS =         &
			      \begin{pmatrix}
				      w - iz & y - ix \\
				      -y-ix  & w + iz
			      \end{pmatrix} \comp
			      \begin{pmatrix}
				      w' - iz' & y' - ix' \\
				      -y'-ix'  & w' + iz'
			      \end{pmatrix} \\
			      LHS_{(1,1)} = & ( (ww' - zz'-yy' - xx') - i(wz' - xy' + yx' + w'z ) ) \\
			      LHS_{(1,2)} = & ( ((w -iz)(y' - i x') + (y-ix)(w'+iz'))               \\
			      =             & ( wy'- zx' + yw'+xz') - i  (wx' + xw' - yz' + zy')    \\
			      LHS_{(2,1)} = & ( (-y - ix)(w'-iz') + (w+iz)(-y'-ix') )               \\
			      =             & - (yw' + xz' + wy' - zx') - i (wx' + xw' - yz' + zy') \\
			      LHS_{(2,2)} = & ( (-y -ix)(y'-ix') + (w + iz)(w' + i z') )            \\
			      =             & ( ww'-xx'-yy'-zz') +i( wz' -xy' + yx' +zw')           \\
			      \nonumber\\
			      RHS_{(1,1)} = & (  ww' - xx'-yy'-zz') - i( wz' + xy'- yx' + zw')      \\
			      RHS_{(1,2)} = & ( wy' - xz' + yw') + zx' - i( wx' + xw' +yz' - zy')   \\
			      RHS_{(2,1)} = & -( wy' - xz' + yw') + zx')-i( wx' + xw' +yz' - zy')   \\
			      RHS_{(2,2)} = & ( ww' - xx'-yy'-zz') + i( wz' + xy'- yx' + zw')
		      \end{align}
		\item Show that $\phi(q) = \begin{pmatrix}
				      1 & 0 \\ 0 & 1
			      \end{pmatrix} \implies q = 1$:
		      Looking at the matrix entries individually:
		      \begin{align}
			      1 =            & \ q_w - iq_z     \\
			                     & \implies q_w = 1 \\
			                     & \implies q_z = 0 \\
			      0 =            & -q_y - iq_x      \\
			                     & \implies q_y = 0 \\
			                     & \implies q_x = 0 \\
			      \therefore q = & 1
		      \end{align}
	\end{itemize}

\end{proof}

\section{Proof of the $n$-joined ZH spider law}
\label{secnfoldZH}

\lemZHNBox*

\begin{proof} \label{prfZHNBox}
	Aiming to proceed by induction we first note the base case where $n=1$:
	
	\begin{align}
		\vc{\InputIfFileExists{./figures/ZH/nspider_base.tikz}{}{Missing file!}}
	\end{align}

	This is shown by checking soundness by hand.
	For the induction step we will need the following lemma:

	\begin{align}
		\vc{\InputIfFileExists{./figures/ZH/nspider_l_1.tikz}{}{Missing file!}}
	\end{align}

	Which again can be shown to be sound by direct computation.
	For the induction step we assume the original statement is true for $n-1$, $n > 1$, we then add another joining Hadamard.

	\begin{align}
		\vc{\InputIfFileExists{./figures/ZH/nspider_l_2.tikz}{}{Missing file!}} & = \vc{\InputIfFileExists{./figures/ZH/nspider_l_3.tikz}{}{Missing file!}}  \\
		\vc{\InputIfFileExists{./figures/ZH/nspider_l_4.tikz}{}{Missing file!}} & = \vc{\InputIfFileExists{./figures/ZH/nspider_l_5.tikz}{}{Missing file!}}
	\end{align}
\end{proof}

\section{Worked example of !-box removal}
\label{secVerificationExample}

In this appendix we give a longer example of !-box removal.
Consider the following $\ZH_\bbC$ equation,
which is not sound, but is suitable for demonstrating the principle of !-box removal.
\begin{align}
	\ZHExample{X}
\end{align}

We first need to work out the \emph{join},
which we do by counting the wires entering the !-boxes.
\begin{align}
	2^1 + 2^0 = 3
\end{align}

By Theorem~\ref{thmbbox2} we need to check $d = \set{0, 1, 2, 3}$ instances of the !-box.
So in order to verify:
\begin{align}
	\ZHExample{X}
\end{align}

We just need to verify these equations:

\begin{align}
\ZHExampleZero{X}\quad &,\quad \ZHExampleOne{X} \\
\ZHExampleTwo{X}\quad &,\quad \ZHExampleThree{X}
\end{align}

Note that these equations contain no !-boxes.
 \thispagestyle{empty} 
\cleardoublepage \addcontentsline{toc}{section}{Index}
\normalsize \printindex[main]

\end{document}